\definecolor{rouge}{HTML}{a02030}
\begin{document}

\interfootnotelinepenalty=10000

\title{\larger{
Message-Passing Algorithms 
and Homology\\ }
\vspace{4em}
\Large{
\it From Thermodynamics 
to Statistical Learning 
\vspace{3em}
}
}
\author{Olivier Peltre}

\date{2020}

\maketitle

\phantomsection
\addcontentsline{toc}{chapter}{Introduction}
\chapter*{Introduction}

The problem of describing the statistics of a 
large number $x_i, x_j$,... of interacting random variables 
emerged in physics with Boltzmann's efforts to lay
principles of thermodynamics on statistical grounds, 
and high dimensional statistics are now expected to provide 
with reasonable and tractable models in artificial 
intelligence and biology. 
Aimed at modelling the emergence of collective behaviours in
large assemblies of constituents, 
the prism of statistics hence shows deep analogies between atoms in a crystal, 
and neurons in a network.

A probability distribution $p(x)$ 
on the joint variable $x = (x_j)_{j \in \Om}$ 
is usually assumed to capture all collective phenomena,
although a dimensional curse prohibits the computation 
of expectation values. 
Local effects on a small
subset $\aa \incl \Om$ of variables 
may nonetheless be estimated,
as the statistics of the local variable $x_\aa = (x_i)_{i \in \aa}$
only involve the marginal distribution\footnote{
    Throughout the following  
    $f_\aa(x_\aa)$ stands for $f_{i_1\dots i_n}(x_{i_1}, \dots, x_{i_n})$ 
    with $\aa = \{ i_1, \dots, i_n \} \incl \Om$.
} 
$p_\aa(x_\aa)$. 
Spontaneous magnetisation, for instance, is given by 
the expectation value of a single atomic dipole $x_i = \pm 1$,
subject to interactions within an arbitrary large crystalline network.
Accessing marginals is also a crucial step 
of statistical learning:
usually appearing in the gradient of a loss function, 
they are necessary to guide the update of model parameters.
The design of efficient algorithms for marginal estimation is 
hence of great practical importance. 
{\it Message-passing algorithms} 
estimate marginals through a parallelised and asynchronous computing scheme, 
in which a collection of 
local units communicate until they eventually reach a consensual state.
Understanding their connections with 
algebraic topology was the first motivation of this thesis. 

{\it Gibbs random fields}\footnotemark{} 
are probabilistic models with a local structure described by a 
collection $X$ of subsets 
$\aa, \bb,\cc,$... of $\Om$, over which the global distribution $p(x)$ 
factorises as a product of local functions. 
We write  $p \in {\cal G}(X)$ when
there exists a collection of positive factors $(f_\aa)$ such that: 
\footnotetext{
    Or {\it Gibbs distributions}, which are also {\it Markov random fields} 
    according to the Hammersley-Clifford correspondence. 
    Factorisability however yields a finer characterisation 
    of ${\cal G}(X)$ than Markov properties, hence the preferred terminology. 
}
\begin{equation} \label{GRF}
p(x) = \frac 1 Z \prod_{\aa \in X} f_\aa (x_\aa) 
\end{equation}
Distributions of this form are more often called {\it graphical models} 
in the computer science literature. 
The hypergraph $X \incl \Part(\Om)$ is then represented by
the so-called factor graph, depicted in figure 1, 
formed by joining variable nodes $(x_i)$ 
with their associated factor nodes $(f_\aa)$. 
This factorisation  is more conveniently 
viewed at the level of energies, where the {\it hamiltonian} 
$H$ is defined as a sum of local {\it interaction potentials} $(u_\aa)$, 
related to the factors by $u_\aa = - \ln f_\aa$:
\begin{equation}
p = \frac {\e^{-H}} Z \quad\txt{where}\quad 
H(x) = \sum_{\aa \in X} u_\aa (x_\aa) 
\end{equation}
The fundamental Legendre duality between the energy function $H(x)$ and 
its Gibbs distribution $p(x)$ is related to variational principles on 
entropy and free energy, 
of which message-passing algorithms yield approximate solutions.

\begin{figure}[h]
\sbox0{\includegraphics[width=0.6\textwidth]{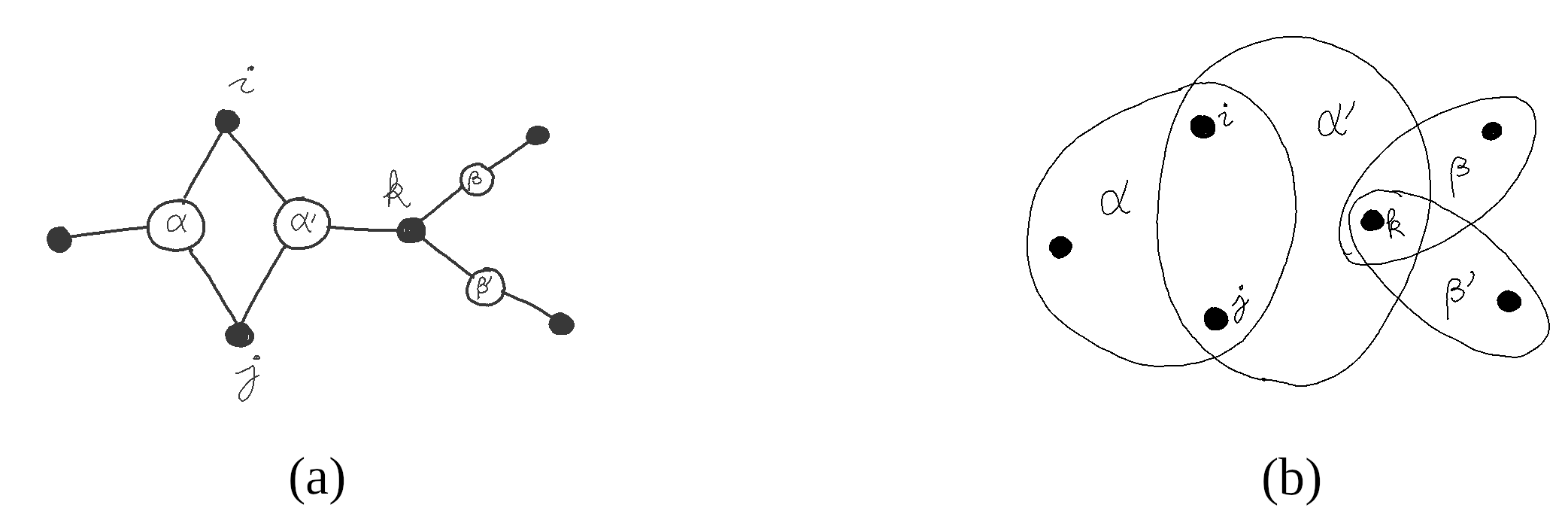}}
\begin{center}
\begin{minipage}{\wd0+0.15\textwidth}
\centering
\usebox0
\caption{
(a) factor graph and (b) hypergraph representations of $X \incl \Part(\Om)$.
}
\end{minipage}
\end{center}
\end{figure}

One of our contributions is to view 
the Gibbs random field $p \in {\cal G}(X)$ 
as a {\it homology class} of factors. 
Introducing mutual dependence of variables, 
overlapping subsets in $X$ 
also make the parametrisation of $p(x)$ given by equation (\ref{GRF})
ambiguous,
two collections of factors $(f_\aa)$ and $(f'_\aa)$ 
defining the same Gibbs random field 
whenever 
$\prod_\aa f_\aa \simeq \prod_\aa f'_\aa$ up to a scaling factor. 
This ambiguity is resolved by introducing {\it messages} as collections
$(m_{\aa\bb})$ of local functions $m_{\aa\bb}(x_\bb)$
for every ordered pair $\aa \cont \bb$ in $X$, 
and a {\it boundary operator} $\bord$ 
defining factors from messages through:
\begin{equation} \label{bord-m}
(\bord m)_\bb =
\prod_{\aa \cont \bb} m_{\aa\bb} \bigg/ \prod_{\cc \incl \bb} m_{\bb\cc} 
\end{equation}
Assuming $X$ is closed under intersection, we show 
that $(f_\aa)$ and $(f'_\aa)$ define the same Gibbs random field 
if and only if 
there exists $(m_{\aa\bb})$ such that $f' \simeq f \cdot \bord m$ 
up to scaling.
In this view, message-passing algorithms explore a homology class 
of factors by iterating over messages, 
the homological constraint expressing conservation of
the global distribution $p \in {\cal G}(X)$.

{\it Beliefs} $(q_\aa)$ are intended to estimate the local marginals $(p_\aa)$
of the global probability distribution. 
These local probabilities should in particular satisfy {\it consistency} conditions 
which require that  $q_\bb$ is the marginal of $q_\aa$ 
for every $\bb \incl \aa$.
Of cohomological nature, this constraint shall take the form 
$dq = 0$ and is expressed by the following set of equations:
\begin{equation} \label{consistency} 
q_\bb(x_\bb) = \sum_{y \in E_{\aa \setminus \bb}} q_\aa(x_\bb, y) 
\end{equation}
Defining local probabilities through the local analog of equation (\ref{GRF}):
\begin{equation} 
    q_\aa(x_\aa) = \frac 1 {Z_\aa} 
    \prod_{\bb \incl \aa} f'_\bb(x_\bb) 
\end{equation}
the specificity of message-passing algorithms is hence to search for consistent 
beliefs that derive from homologous factors $f' = f \cdot \bord m$.
The most general message-passing scheme, 
{\it belief propagation}, assumes the following update rule:
\begin{equation} \label{m-update}
m_{\aa\bb}(x_\bb) \wa  m_{\aa\bb}(x_\bb) 
\cdot \frac {\sum_y q_\aa(x_\bb, y)} {q_\bb(x_\bb)}
\end{equation}

Our main contribution is to introduce
diffusion equations of the form $\dot u = \div \Phi(u)$ 
on interaction potentials, 
which allow to view existing message-passing algorithms 
as coarse numerical integrators of continuous-time 
differential equations. 
The operator $\div$ is the first-degree boundary 
of a natural homology theory, 
acting on a collection of 
energy fluxes $\ph_{\aa\bb}(x_\bb)$ by: 
\begin{equation}
    \div_\bb \ph (x_\bb) = \sum_{\aa \cont \bb} \ph_{\aa\bb}(x_\bb) 
- \sum_{\bb \cont \cc} \ph_{\bb\cc}(x_\cc) 
\end{equation}
In addition to revealing their deeply homological character, 
this approach should dramatically improve the stability\footnote{
    Belief propagation has for instance been reported 
    to start converging poorly after several epochs of training 
    restricted Boltzmann 
    machines, 
    a brutal phenomenon that has been compared 
    to phase transitions of the Hopfield model. 
}
of message-passing algorithms.  
Showing belief propagation equivalent to a time-step-one 
explicit Euler scheme of
$\dot u = \div \Phi(u)$, 
a first and highly advisable improvement is to 
use a smaller time step $\lambda < 1$,
which would act as an exponent 
on the geometric increment of $m_{\aa\bb}$ in 
equation (\ref{m-update}). 
As another direction of improvement, 
we propose a combinatorial correction of messages 
eliminating their redundancies
by extending Möbius inversion formulas to higher degrees.
A practical question which shall remain open is whether there exists
a notion of optimal transport on $\Phi$ bringing interaction potentials 
to equilibrium. 

Our approach reveals that stationary states of message-passing algorithms 
lie at the intersection of two constraint surfaces, of 
homological and cohomological nature respectively. 
The homological constraint is linear at the level of interaction potentials, 
and expresses conservation of the {\it total energy}: 
\begin{equation} H(x) = \sum_{\aa \in X} u_\aa(x_\aa) \end{equation}
The cohomological constraint $dq = 0$, however, is linear at the level of 
the effective {\it Gibbs states}: 
\begin{equation} \label{q-u} 
q_\aa = \frac {\e^{-U_\aa}} {Z_\aa} 
\txt{with} U_\aa(x_\aa) = \sum_{\bb \incl \aa} u_\bb(x_\bb) 
\end{equation}
The problem of describing this intersection is hence highly non-linear,
and trying to understand how the geometry of the underlying hypergraph $X$ 
affects the geometry of message-passing equilibria 
will lead to difficult questions meeting both 
algebraic topology and singularity theory.

\vfill
\[
*\quad*\quad*
\]
\vfill

Out of the six chapters contained in this thesis, 
chapters 1 to 3 review and develop the algebraic theory 
we shall rely upon. 
Energy and information functionals are covered in chapter 4, 
providing background for the central theorem \ref{cvm} characterising
solutions of Kikuchi's cluster variational method \cite{Kikuchi-51} 
i.e. consistent collections of local probabilities 
which are critical for a generalised Bethe free energy.

Message-passing algorithms are then addressed in chapter 5, 
as discrete integrators of continuous-time\footnote{
    The time step of the integrator may be tuned 
    $<1$ to improve stability, analogously to a learning rate.
} 
diffusion equations $\dot u = \delta \Phi(u)$. 
The homological picture allows us to give a rigorous proof 
of the correspondence theorem \ref{correspondence} between stationary states 
of belief propagation and solutions of the cluster variational method,
as suggested by Yedidia {\it et al.} \cite{Yedidia-2005}.
Our approach more generally characterises all the flux functionals $\Phi$ 
for which such a correspondence holds,
while the combinatorics developed in chapter 3 lead us to propose
another regularisation of the generalised belief propagation
algorithm by a degree-one Möbius inversion\footnote{
    Möbius inversion eliminates redundancies 
    otherwise counted in the heat flux $\Phi$.
} 
on the flux functional $\Phi$. 

The geometry of message-passing equilibria is finally studied in chapter 6. 
We describe a class of {\it retractable} hypergraphs for which 
message-passing always converges to the exact marginals of the global 
probabilistic model to estimate. 
In general, multiple equilibria may coexist, whose bifurcations 
are related to singularities of the projection of a smooth manifold 
of consistent potentials onto their homology classes, 
and may be tracked in the spectrum of a linearised diffusion operator. 

Our first efforts consisted in looking for a formalism in which 
the elementary operations of message-passing algorithms would fit. 
The reader is therefore expected to run into some unusual notations and 
properties, which the following few pages attempt to summarise efficiently.  
With these in mind, we hope that an informed reader mostly curious of applications 
might jump directly to chapter 5.  

\newpage

\phantomsection
\addcontentsline{toc}{section}{Systems, Fields and Diffusion}
{\bf \hfill Statistical Systems \hfill}

{\smaller

A system will be defined by a collection of random variables $x_i, x_j, x_k, \dots$ 
indexed by labels $i \in \Om$. 

The set of labels $\Om$ in general has an additional
geometric structure describing interactions (e.g. graph or hypergraph).

Subsets $\aa, \bb, \cc, \dots$ in $\Part(\Om)$ form a 
partial order for inclusion, usually denoted in descending alphabetical order: 
$$\aa \cont \bb \cont \cc$$ 

\textbf{Local Spaces} 

Given a finite set of microstates $E_i$ for every atom/neuron/bit $i \in \Om$ 
and a subset of atoms $\aa \incl \Om$: 
\bi 
\iii $E_\aa = \prod_{i \in \aa} E_i$ set of local microstates:
$ x_\aa = (x_i)_{i \in \aa} \in E_\aa $ 

\iii $A_\aa = \R^{E_\aa}$ algebra of local observables:
$ f_\aa(x_\aa) \in A_\aa$

\iii $A^*_\aa \simeq \R^{E_\aa}$ vector space of local measures: 
$q_\aa : f_\aa \mapsto \croc{q_\aa}{f_\aa} \in A^*_\aa$ 

\iii $\Delta_\aa \incl A^*_\aa$ convex set of positive local probabilities: 
$p_\aa > 0$ and $\sum_{x_\aa} 
p_\aa(x_\aa) = 1 $
\ei

\textbf{Local Operators}

For every $\aa \incl \Om$ and $\bb \incl \aa$:
\bi
\iii $\pi^{\bb\aa} : E_\aa \law E_\bb$ natural projection 
= restriction $x_\aa \mapsto x_\bb$  

    $$\pi^{\bb\aa} : (x_i)_{i \in \aa} \mapsto (x_i)_{i \in \bb}$$

\iii $j_{\aa\bb} : A_\bb \law A_\aa$ natural extension\footnote{
    $\:j_{\aa\bb}$ coincides with the identity map w.r.t. the inclusion
    $A_\bb \incl A_\aa$, we shall therefore simply write 
    $f_\bb$ for $j_{\aa\bb}(f_\bb)$.
} = inclusion $A_\bb \incl A_\aa$

    $$j_{\aa\bb}(f_\bb)(x_\aa) = f_\bb(x_\bb)$$ 

\iii $\Sigma^{\bb\aa} : A^*_\aa \law A^*_\bb$ partial integration 
     $=$ marginal projection

    $$ \Sigma^{\bb\aa}(q_\aa)(x_\bb)  
    = \sum_{x' \in E_{\aa \setminus \bb}} q_\aa(x_\bb, x') $$

\iii $\E^{\bb\aa}_{p_\aa} : A_\aa \law A_\bb$ conditional expectation 
w.r.t. $p_\aa \in \Delta_\aa$ given $\bb$

\[ 
\E^{\bb\aa}_{p_\aa}(f_\aa)(x_\bb) 
    = \E_{p_\aa}[f_\aa \st x_\bb] 
    = \sum_{x' \in E_{\aa \setminus \bb}} 
\frac {p_\aa(x_\bb, x') \: f_\aa(x_\bb, x')} {p_\bb(x_\bb)} 
\]

\iii $\Fh^{\bb\aa} : A_\aa \overset{C^\infty}{\law} A_\bb$ conditional free energy of $\aa$ given $\bb$
= effective energy

\[ 
\Fh^{\bb\aa}(f_\aa)(x_\bb) 
= - \ln \sum_{x' \in E_{\aa \setminus \bb}} \e^{ - f_\aa(x_\bb, x')} 
\] 
\ei

\textbf{Local Duality}

\bi
\iii natural duality bracket $\croc{-}{-} : A^*_\aa \otimes A_\aa \law \R$
$$ \croc{q_\aa}{f_\aa} =  \sum_{x_\aa \in E_\aa} q_\aa(x_\aa) \, f_\aa(x_\aa)$$ 
\iii covariance metric $\croc{-}{-}_{p_\aa} : A_\aa \otimes A_\aa \law \R$  
induced by a local probability $p_\aa \in \Delta_\aa$
$$ \Croc{f_\aa}{g_\aa}_{p_\aa} = \E_{p_\aa}\big[ f_\aa \, g_\aa \big] 
= \sum_{x_\aa \in E_\aa} p_\aa(x_\aa)\, f_\aa(x_\aa)\, g_\aa(x_\aa) 
$$

\ei

\textbf{Properties} 

\bi
\iii Adjunction of $\Sigma^{\bb\aa}$ and $j_{\aa\bb}$ for the natural 
duality brackets: 
\[ \croc{\Sigma^{\bb\aa}(q_\aa)}{f_\bb} 
= \croc{q_\aa}{j_{\aa\bb}(f_\aa)}
\]
\iii Adjunction\footnote{
    $\;\E^{\bb\aa}$ is the orthogonal projection of $A_\aa$ 
    onto $A_\bb \incl A_\aa$ for the covariance metric $\croc{-}{-}_{p_\aa}$.
} 
of $\E^{\bb\aa}_{p_\aa}$ and $j_{\aa\bb}$ for 
the metric induced by $p_\aa$ on $A_\aa$ 
\[  \Croc{\E^{\bb\aa}(f_\aa)}{g_\bb}_{\Sigma^{\bb\aa}(p_\aa)} = 
\Croc{f_\aa}{j_{\aa\bb}(g_\bb)}_{p_\aa} \]  
\iii Gibbs state conditional expectations\footnote{
    Gibbs state expectation $\E^\aa_{p_\aa} = \E^{\vide\aa}_{p_\aa}$ 
    is the differential of the free energy functional 
    $d\Fh^\aa_{H_\aa} = d \Fh^{\vide\aa}_{H_\aa}$, 
} from effective energies:
$\E^{\bb\aa}_{p_\aa} = d\Fh^{\bb\aa}_{H_\aa}$ 
\[ 
\E_{p_\aa}[f_\aa | x_\bb] = \Fh^{\bb\aa}(H_\aa + f_\aa) - \Fh^{\bb\aa}(H_\aa) 
+ o(f_\aa) \txt{for} p_\aa = \frac 1 {Z_\aa} \e^{-H_\aa} \]
\ei 

}


\newpage
{\bf \hfill Fields \hfill}

{\smaller 
Suppose given a covering $X = \{ \aa, \bb, \cc, \dots \}$ of $\Om$ by subsets 
s.t. $\aa \cap \bb \in X$ for every $\aa, \bb \in X$.

$n$-Fields\footnote{
    The graded vector space $A_\bullet(X)$ of fields 
    is an analog of the space $\Om^\bullet(\R^3)$ of 
    scalar, vector, ... fields on $\R^3$, 
    or of the space $C_\bullet(K)$ of chains in a simplicial complex $K$, 
    except here fields have functional coefficients $f_\aa(x_\aa)$ instead 
    of scalar coefficients. 
} 
are collections 
$\{ f_{\aa_0 \dots \aa_n} \in \R^{E_{\aa_n}} \,|\,
\aa_0 \contst \dots \contst \aa_n \in X \}$ 
of local observables indexed by $n$-chains 
in $X$.

\textbf{Field Spaces} 

\bi 
\iii $A_0(X) = \prod_\aa \R^{E_\aa}$ space of potentials 

\iii $A_1(X) = \prod_{\aa \contst \bb }  \R^{E_\bb}$ space of currents 

\iii $A_n(X) = \prod_{\aa_0 \contst \dots \contst \aa_n} \R^{E_{\aa_n}}$ 
space of local observable $n$-fields

\iii $\Delta_0(X) = \prod_{\aa} \Delta_\aa \incl A^*_0(X)$ convex space of positive 
beliefs

\iii $\Gamma(X) \incl \Delta_0(X)$ convex subset of consistent beliefs: 
$p \in \Gamma(X)$ iff $p_\bb = \Sigma^{\bb\aa}(p_\aa)$ for all $\aa \cont \bb$.
\ei

\textbf{Differential Operators\footnote{
    Differential here means that $\delta : A_{n+1}(X) \aw A_n(X)$ 
    and its adjoint $d : A^*_n(X) \aw A^*_{n+1}(X)$ extend to  
    square-null operators ($\delta^2 = \delta \circ \delta = 0$ and $d^2 = 0$) 
    on the whole complexes $A_\bullet(X)$ and $A^*_\bullet(X)$.
    They play the role of discrete spatial differentiation operators,
    and the terminology chosen to reflect their analogs on 
    the space of smooth fields $\Om^\bullet(\R^3)$. 
    See equations (\ref{boundary}) and (\ref{differential}) 
    for the actions of $\delta$ and $d$ on higher degrees. 
}} 

\bi
\iii $\delta : A_1(X) \law A_0(X)$ divergence 

\[ \delta(\ph)_\bb(x_\bb) = \sum_{\aa' \cont \bb} \ph_{\aa'\bb}(x_\bb)
- \sum_{\bb \cont \cc'} \ph_{\bb\cc'}(x_{\cc'}) 
\] 

\iii $d : A^*_0(X) \law A^*_1(X)$ differential 

\[ d(q)_{\aa\bb}(x_\bb) = 
q_\bb(x_\bb) - \Sigma^{\bb\aa}(q_\aa)(x_\bb) \]

\iii $\nabla_p : A_0(X) \law A_1(X)$ gradient 
w.r.t. to a consistent belief $p \in \Gamma(X)$ 

\[ \nabla_p(f)_{\aa  \bb}(x_\bb) = 
f_\bb(x_\bb) - \E^{\bb\aa}_{p_\aa}(f_\aa)(x_\bb) \]

\iii $\DF : A_0(X) \law A_1(X)$ effective energy gradient  

\[ \DF(f)_{\aa\bb}(x_\bb) = f_\bb(x_\bb) - \Fh^{\bb\aa}(f_\aa)(x_\bb) \]

\ei

\textbf{Field Duality} 

\bi
\iii natural duality bracket 
$\croc{-}{-} : A^*_n(X) \otimes A_n(X) \law  \R$ 

\[ \croc{q}{f} = \sum_{\aa_0 \contst \dots \contst \aa_n} 
\croc{q_{\aa_0 \dots \aa_n}}{f_{\aa_0 \dots \aa_n}} 
\]

\iii covariance metric 
$\croc{-}{-}_p : A_n(X) \otimes A_n(X) \law \R$ 
induced by a consistent $p \in \Gamma(X)$ 

\[ \Croc{f}{g}_{p} = \sum_{\aa_0 \contst \dots \contst \aa_n} 
\Croc{f_{\aa_0 \dots \aa_n}}{g_{\aa_0 \dots \aa_n}}_{p_{\aa_n}} 
\]
\ei 

\textbf{Properties} 

\bi
\iii Adjunction of $d$ and $\delta$ for the natural duality bracket 
\[ \croc{dq}{\ph} = \croc{q}{\delta \ph} \]

\iii Adjunction\footnote{
    These two properties are discrete analogs  
    of the integration by parts formula 
    $\int_{\R^3} \vec{\rm grad}(f) \cdot \vec \ph = 
    - \int_{\R^3} f \: {\rm div}(\vec \ph)$. 
} of $\nabla_p$ and $\delta$ for the metric induced by 
$p \in \Gamma(X)$ 
\[ \Croc{\nabla_p(f)}{\ph}_p = \Croc{f}{\delta \ph}_p \]

\iii Gibbs State gradient operator $\nabla_p = d\DF_H$
\[ \nabla_p(f)_{\aa\bb} = \DF(H + f)_{\aa\bb} - \DF(H)_{\aa\bb}  + o(f) 
\txt{for} p_\aa = \frac{1}{Z_\aa} \e^{-H_\aa} \]
\ei 

\newpage

\textbf{Combinatorial Operators}

\bi 

\iii $\tilde \zeta_\Om : A_0(X) \law A_\Om$ total energy 

\[ \tilde \zeta(h)_\Om(x_\Om)  = \sum_\aa h_\aa(x_\aa) = H_\Om(x_\Om) \]

\iii $\zeta : A_0(X) \law A_0(X)$ zeta transform\footnote{
    << $\zeta(h)_\aa = \int_{\LL^\aa} h$ >> is analogous to 
    a discrete integral of the potential $h$ 
    on the cone $\LL^\aa = \{ \bb \incl \aa\}$ below $\aa$.
}
\[ \zeta(h)_\aa(x_\aa) = \sum_{\aa \cont \bb'} h_{\bb'}(x_{\bb'}) = H_\aa(x_\aa) \]

\iii $\mu : A_0(X) \law A_0(X)$ Möbius transform\footnote{
    The coefficients $\mu_{\aa\bb} \in \Z$ are computed inductively by 
    $\mu_{\aa\aa} = 1$ and $\mu_{\aa\cc} = 1 - 
    \sum_{\aa \contst \bb' \contst \cc} \mu_{\aa\bb'}$ 
    for every $\Om \cont \aa \cont \cc$. 
}
: $\mu = \zeta^{-1}$ 

\[ \mu(H)_\aa(x_\aa) = 
 \sum_{\aa \cont \bb'} \mu_{\aa\bb'} \: H_{\bb'}(x_{\bb'}) = h_\aa(x_\aa) \]

\iii $\zeta : A_n(X) \law A_n(X)$ extended zeta transform 

\[ \zeta(\ph)_{\aa_0 \dots \aa_n} = 
\sum_{\aa_0 \cont \bb_0 \not \incl \aa_1} \dots 
\sum_{\aa_n \cont \bb_n} \ph_{\bb_0 \dots \bb_n} 
= \Phi_{\aa_0 \dots \aa_n} 
\] 

\iii $\mu : A_n(X) \law A_n(X)$ extended Möbius transform: $\mu = \zeta^{-1}$

\[ \mu(\Phi)_{\aa_0 \dots \aa_n} 
= \sum_{\aa_n \cont \bb_n} \dots 
\sum_{\aa_0 \cont \bb_0 \not \incl \bb_1} 
\mu_{\aa_0 \bb_0} \dots \mu_{\aa_n \bb_n} 
\: \Phi_{\bb_0 (\bb_0 \cap \bb_1) \dots (\bb_0 \cap \dots \cap \bb_n)}  
= \ph_{\aa_0 \dots \aa_n} 
\] 
\ei

\textbf{Properties} 

\bi

\iii Möbius numbers\footnote{
    The coefficients $c_\aa \in \Z$ are computed inductively 
    $c_\bb = 1 - \sum_{\aa' \contst \bb} c_{\aa'} 
    = \sum_{\aa' \contst \bb} \mu_{\aa'\bb}$.
} $c_\aa \in \Z$ and total energy $H_\Om$ of a potential 
$h = \mu \cdot H \in A_0(X)$ 

\[ H_\Om = \sum_\aa h_\aa = \sum_\aa c_\aa \: H_\aa 
= \tilde \zeta(h)_\Om \] 

\iii Gauss formula\footnote{
    << $\zeta(\delta \ph)_\aa = \int_{\LL^\aa} \delta \ph = 
    \int_{d \LL^\aa} \ph$ >>
    is analogous to a discrete flux integral 
    of the current $\ph$ bound into $\LL^\aa$. 
} for a current $\ph \in A_1(X)$

\[ \zeta(\delta \ph)_\aa = \sum_{\aa \cont \bb'} \delta(\ph)_{\bb'} 
= \sum_{\aa' \not \incl \aa}\: \sum_{\aa \cont \bb'} \ph_{\aa'\bb'} 
= \tilde\zeta(\ph)_{\Om\aa} 
\] 

\iii Gauss formula\footnote{
    Möbius inversion on the effective energy gradient 
    $\ph = \mu \cdot \DF(U)$ before updating effective energies 
    by $\dot U = \zeta(\delta \ph)$ is one of our proposed regularisations 
    of message-passing schemes (chapter 5).
}
for $\ph = \mu \cdot \Phi \in A_1(X)$

\[ \zeta\big(\delta(\mu \cdot \Phi) \big)_\aa 
= \sum_{\aa' \not \incl \aa} 
c_{\aa'}\: \Phi_{\aa'(\aa \cap \aa')} 
= \tilde \zeta(\mu \cdot \Phi)_{\Om\aa} 
\] 
\ei 

\textbf{Main Theorem\footnote{
    This result will rely on the fundamental 
    yet not so widely known
    {\it interaction decomposition theorem} 
    \ref{int-decomposition} \cite{Kellerer-64, Matus-88}
    which consistently decomposes each 
    $A_\aa$ as a direct sum of interaction subspaces 
    $\oplus_\bb Z_\bb$ for $\bb \incl \aa$. 
}} 

\bi
\iii Homology and total energy: for every $u, h \in A_0(X)$
\[ \exists \ph \in A_1(X) \txt{s.t.} u = h + \delta \ph 
\quad \eqvl \quad 
\sum_\aa h_\aa = \sum_\aa u_\aa 
\] 
\ei 
}

\newpage

{\bf \hfill Energy and Information Functionals \hfill}
\vspace{0.5cm}

{\smaller

\textbf{Local Functionals} 

\bi 
\iii $\Fh^\aa : A_\aa \overset{C^\infty}{\law} \R$ local free energy 
\[ \Fh^\aa(H_\aa) = - \ln \sum_{x_\aa \in E_\aa} \e^{-H_\aa(x_\aa)} \] 

\iii $S_\aa : \Delta_\aa \overset{C^\infty}{\law} \R$ local entropy 
\[ S_\aa(p_\aa) = - \ln \sum_{x_\aa \in E_\aa} p_\aa(x_\aa) \: \ln p_\aa(x_\aa) 
\] 
\ei

\textbf{Legendre Duality} 

\bi
\iii $S_\aa$ is the Legendre transform of $\Fh^\aa$, and reciprocally
\[ \Fh^\aa(H_\aa) = \min_{p_\aa \in \Delta_\aa} \Big[ 
\croc{p_\aa}{H_\aa} - S_\aa(p_\aa) \Big] 
\] 

\iii $d\Fh^\aa : A_\aa \overset{C^\infty}{\law} \Delta_\aa$ 
maps hamiltonians to their Gibbs probability densities 
\[ d \Fh^\aa(H_\aa) = \croc{p_\aa}{{-}}  = \E_{p_\aa}[{-}] 
\txt{where} p_\aa = \frac 1 {Z_\aa} \e^{-H_\aa} \] 

\iii $dS_\aa : \Delta_\aa \overset{C^\infty}{\law} A_\aa / \R$ 
maps probability densities to their hamiltonians, defined up to additive constants
\[ d S_\aa(p_\aa) = \croc{{-}}{H_\aa} 
\txt{where} H_\aa \simeq - \ln p_\aa \mod \R 
\] 
\ei

\vspace{1cm}

\textbf{Global Functionals} 

\bi 
\iii ${\cal U}_\Om : \Delta_\Om \times A_\Om \overset{C^\infty}{\law} \R$ 
internal energy
\[ {\cal U}_\Om(p_\Om, H_\Om) = \croc{p_\Om}{H_\Om} = \E_{p_\Om}[H_\Om] \]

\iii $\Fg_\Om : \Delta_\Om \times A_\Om \overset{C^\infty}{\law} \R$ 
variational free energy 
\[ \Fg_\Om(p_\Om, H_\Om) = 
\croc{p_\Om}{H_\Om} - S_\Om(p_\Om) \]
\ei

\textbf{Bethe-Kikuchi Approximation} 

\bi
\iii $\Fb : \Delta_0(X) \times A_0(X) 
\overset{C^\infty}{\law} \R$ Bethe free energy: constrained 
to consistent beliefs $p \in \Gamma(X)$ 
\[ \Fb(p, H) = 
    \sum_{\aa \in X} c_\aa 
\Big[ \croc{p_\aa}{H_\aa} - S_\aa(p_\aa) \Big] 
\] 
\ei

\textbf{Main Theorems} 

\bi 
\iii Homological invariance of $\Fb(p, \,-\,)$: for every consistent belief $p \in \Gamma(X)$ 
and effective hamiltonian $H \in A_0(X)$ 
\[ \Fb\big(p, H + \zeta(\delta \ph)\big) = \Fb(p, H) \]

\iii $p \in \Gamma(X)$ 
is critical for $\Fb({-}, H)_{|\Gamma(X)}$ iff there 
exists a current $\ph \in A_1(X)$ s.t. for all $\aa \in X$ 
\[ 
- \ln p_\aa \simeq H_\aa + \zeta(\delta \ph)_\aa \mod \R 
\]
\ei
}

\newpage
{\bf \hfill Message-Passing as Diffusion \hfill} 

{\smaller 

\textbf{Heat Analogy} 

Energy density = time-dependent scalar field 
$u : \R \aw \Om^0(\R^3) = C^\infty(\R^3)$ 

Heat exchange = time-dependent vector field
$\vec \ph : \R \aw \Om^1(\R^3) = C^\infty(\R^3, \R^3)$ 
\bi
\iii energy conservation: $\dot u = {\rm div}(\vec \ph)$ 
\iii heat flux: $\vec \ph = - \lambda \: \vec{\rm grad} (T)$ 
\ei 

Characteristic relation $u = c \: T$ (condensed matter) or non-relationship
between temperature and energy.

\textbf{Message-Passing on Graphs\footnote{ 
    On acyclic graphs (trees) the algorithm 
    converges in finite time, 
    as already stated in Pearl's seminal paper \cite{Pearl-82}.
    Substitute  
    $u^{(t+1)} - u^{(t)}$ for $\frac {du}{dt}$ and translate to beliefs 
    to recover the usual belief propagation algorithm.
}}

The algorithm takes the form 
$\dot u = - \big( \div \circ \DF \circ \zeta \big)(u)$ on potentials, i.e. 
\[
    \ba{ll|l} 
    \dot u = \div \ph &\quad\txt{where}\;\;\quad 
    & \ph = - \DF(U) \\[0.4em]
    & & U = \zeta \cdot u 
\ea
\]

\bi 
\iii Energy conservation $\dot u = \delta \ph$ 
dictates the update of effective potentials $u \in A_0(X)$
\[ 
\frac{d}{dt} u_{ij}(x_i, x_j) = - \ph_{ij \to j}(x_j) - \ph_{ij \to i}(x_i) 
\txt{and} 
\frac{d}{dt} u_i(x_i) = \sum_{j' \sim i} \ph_{ij' \to i}(x_i) 
\] 

\iii Heat flux $\ph = - \DF(U) \in A_1(X)$ measures the lack 
of consistency of effective hamiltonians
\[
\ph_{ij \to j}(x_j) =  - U_j(x_j)
- \ln \sum_{x_i \in E_i} \e^{-U_i(x_i, x_j)}
\]

\iii Effective hamiltonians $U = \zeta(u) \in A_0(X)$ are given by
\[ U_{ij}(x_i, x_j) = u_{ij}(x_i, x_j) +  u_i(x_i) + u_j(x_j) 
 \txt{and} U_i(x_i) = u_i(x_i) 
\] 

\iii Beliefs
$q_{ij} = \frac 1 {Z_{ij}} \e^{-U_{ij}}$ and $q_i = \frac 1 {Z_i} \e^{-U_i}$
should be normalised at each iteration on graphs with loops\footnote{
    With loops, the dynamic on potentials is best understood 
    up to additive constants. 
}     
\ei

\textbf{Message-Passing on Hypergraphs\footnote{
    On {\it retractable} hypergraphs $X \incl \Part(\Om)$, 
    we show the algorithm to converge in finite time (chapter 6). 
    Note that Möbius inversion of the heat flux only affects 
    additive constants when $X$ is a graph, 
    hence the proposed regularisation only modifies the 
    generalised belief propagation (GBP) algorithm 
    of Yedidia et al. \cite{Yedidia-2005}
}} 

Möbius inversion on the heat flux 
reads $\dot u = - \big(\delta \circ \mu \circ \DF \circ \zeta\big)(u)$ 
on potentials, i.e.
\[
    \ba{ll|l} 
    \dot u = \div \ph &\quad\txt{where}\;\;\quad 
    & \ph = \mu \cdot \Phi\\[0.4em]
    & & \Phi = - \DF(U) \\[0.4em]
    & & U = \zeta \cdot u 
\ea
\]
\bi 
\iii Effective hamiltonians 
$U = \zeta \cdot u \in A_0(X)$ 
follow the energy conservation principle:
$\dot U = \delta^\zeta (\Phi)$ where $\delta^\zeta = 
\zeta \, \delta \, \zeta^{-1}$
\[ 
\frac d {dt} U_\aa(x_\aa) = 
\sum_{\aa' \not \incl \aa} c_{\aa'} \: 
\Phi_{\aa'(\aa \cap \aa')}(x_{\aa \cap \aa'}) 
\] 

\iii Extensive heat flux $\Phi = \zeta \cdot \ph \in A_1(X)$ 
flows against the effective energy gradient: $\Phi = - \DF(U)$ 
\[ 
\Phi_{\aa\bb}(x_\bb) = - U_\bb(x_\bb) 
- \ln \sum_{x' \in E_{\aa \setminus \bb}} \e^{-U_\aa(x_\bb, x')} 
\] 

\iii Beliefs $q_\aa = \frac 1 {Z_\aa}$ do not need to be normalised\footnote{ 
    When $\varnothing \in X$,
    Möbius inversion of fluxes $\Phi_{\aa \varnothing} \in \R$ 
    already takes care of regularising normalisation factors. 
}
\ei 

\textbf{Correspondence Theorem} 

Effective hamiltonians $U = H + \zeta \cdot \delta \ph \in A_0(X)$ are stationary 
under diffusion\\[0.4em]
$\eqvl \:\:$ 
Beliefs $q = \frac 1 Z \e^{-U} \in \Gamma(X)$ are consistent and
critical for the Bethe free energy $\Fb({-}, H)_{|\Gamma(X)}$

}

\newpage

\phantomsection
\addcontentsline{toc}{section}{Crystals, Codes and Networks}
{\bf \hfill Crystals, Codes and Networks \hfill} 

{\smaller 

\vfill 

The following 3 pages give a brief overview of some applications 
that motivated this thesis. They involve a rather wide spectrum 
of communities, including statistical physics, artificial intelligence 
and information theory. 
Although an extensive coverage of applications is far out of our scope, 
the book {\it Information, Physics and Computation} 
by Mézard and Montanari \cite{Mezard-Montanari} is an excellent reference 
for all three subjects. 

\vfill

\setlength{\parindent}{0pt}

\textbf{Crystals and Spin Glasses} 
    \vspace{-0.4cm}
\begin{figure}[H]
    \sbox0{\includegraphics[width=0.45\textwidth]{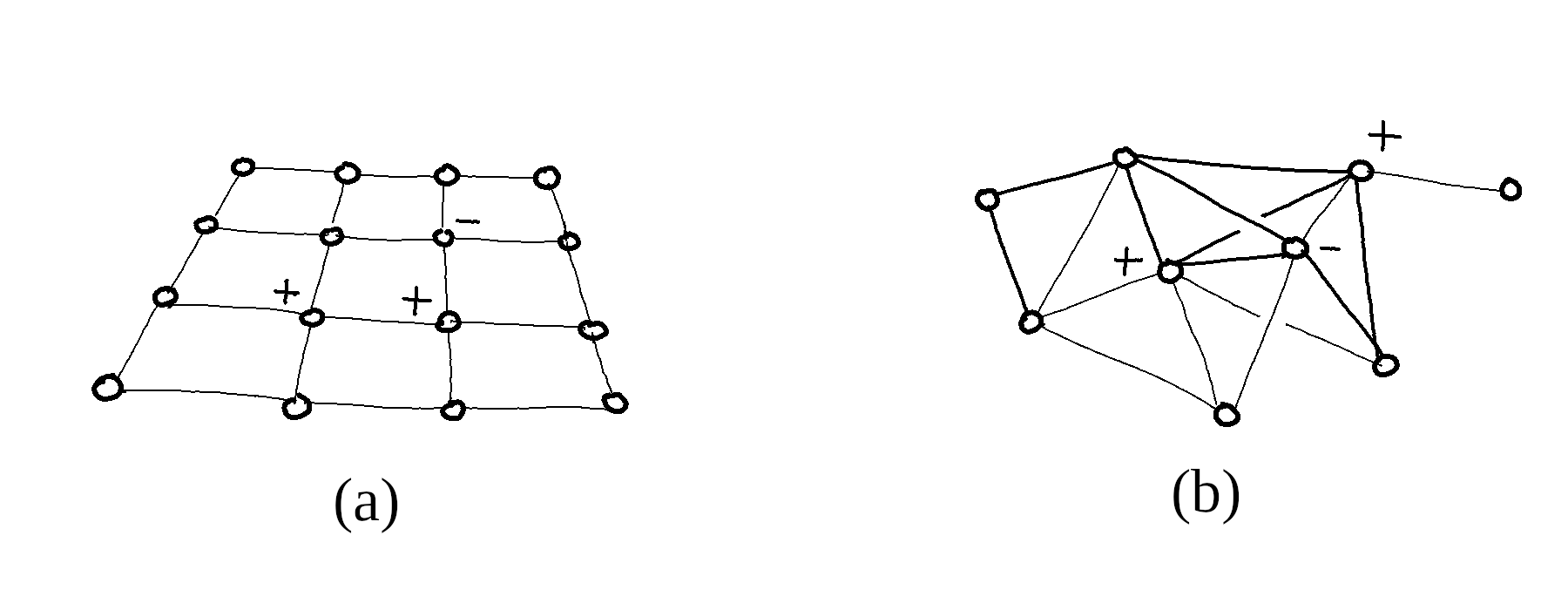}}
\begin{center}
\begin{minipage}{0.8\textwidth}
\centering
\usebox0
    \vspace{-0.5cm}
    \caption{{\smaller
    (a) Ising model 
    on the square lattice $\Z^n$
    (b) Spin glasses as in the Sherrington-Kirkpatrick 
    model have random magnetic couplings and 
    release the $\Z^n$ symmetry assumption. 
    }}
\end{minipage}
\end{center}
    \vspace{-0.3cm}
\end{figure}

A set of atoms $\Om$ carries a magnetic dipole $x_i \in \{ \pm 1 \}$ 
for all $i \in \Om$. 

A set of pairs in $\Om \times \Om$ relates neighbours $i \sim j$
whose interactions contribute to the total energy of the system.

\bi 
\iii Local magnetic field\footnote{
    The effect of $h_i(x_i) = \pm b_i$ is to  
    bring the dipole $x_i$ aligned with the field $b_i \in \R$, 
    with ground energy $- b_i$.  
} = bias $b_i \in \R$ 
\[ h_i(x_i) = - b_i \: x_i \]

\iii Local magnetic coupling\footnote{
    The effect of $h_{ij}(x_i, x_j) = \pm w_{ij}$ is to 
    bring dipole $x_j$ aligned with $x_i$ 
    when $w_{ij} > 0$, and opposed with $x_i$ otherwise.
} = weight $w_{ij} \in \R$ 
\[ h_{ij}(x_i, x_j) = - w_{ij} \: x_i x_j \] 

\iii Total hamiltonian $H : \{ \pm 1 \}^\Om \law \R$  
\[ H(x) = \sum_{i \in \Om} h_i(x_i) 
+ \sum_{i \sim j} h_{ij}(x_i, x_j) \] 

\iii Gibbs state $p^\theta : \{\pm 1\}^\Om \law \R$ describes 
statistical equilibrium at inverse temperature\footnote{
    We denote inverse temperature by $\theta$ instead of the 
    usual $\bb = 1 / k_B T$ to avoid future confusion with our 
    notation for subsets $\aa, \bb, \cc, \dots \incl \Om$. 
    Note that $\theta$ will generally be considered set to $\theta = 1$,
    the effect of temperature being viewed through dilations 
    on the hamiltonian $H \in \R^{\{\pm 1 \}^\Om}$.   
} $\theta = \frac 1 {k_B T}$
\[ p^\theta(x) = \frac { \e^{- \theta H(x)} } 
{ \sum_{x'} \e^{- \theta H(x')} } \]
\ei

{\it High temperature limit:} if $\theta \to 0$ then $p^\theta$ tends to the uniform 
measure on $\{\pm 1 \}^\Om$. 

{\it Low temperature limit:} if $\theta \to + \infty$ then $p^\theta$ tends to a 
barycenter of Dirac measures on the minima\footnote{
    In the ferromagnetic case i.e. $w_{ij} = 1$ for all $i \sim j$, 
    there are two ground-energy states $x = (+1, \dots,+1)$ and 
    $x = (-1, \dots, -1)$ to which the system crystallises 
    below the critical Curie temperature (spontaneous magnetisation). 
}
of $H$.

\vfill 

\newpage

\textbf{Boltzmann Machines and Neural Networks}
\vspace{0.4cm}

\begin{figure}[h]
    \sbox0{\includegraphics[width=0.7\textwidth]{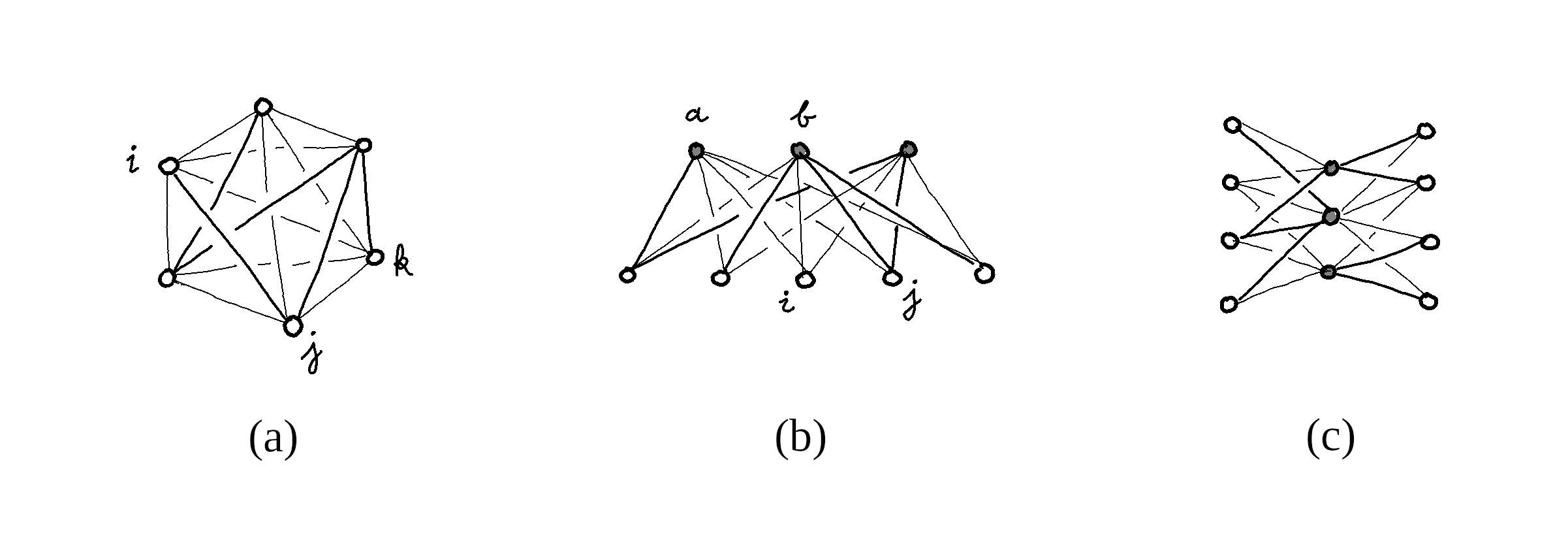}}
\begin{center}
\begin{minipage}{0.8\textwidth}
\centering
\usebox0
    \vspace{-0.3cm}
\caption{{\smaller 
    Fully-connected (a), restricted (b) and deep (c) Boltzmann machines. 
}}
\end{minipage}
\end{center}
\end{figure}

A restricted Boltzmann machine (RBM) consists of two neuron\footnote{
    $\: -1$ describes a steady neuron, 
    $+1$ a neuron firing at maximal rate, 
    and convex combinations i.e. probability densities on $\{\pm 1 \}$ 
    then correspond to intermediate firing rates.
} layers 
$x_1, \dots, x_N \in \{ \pm 1 \}$ and $y_1, \dots, y_P \in \{ \pm 1 \}$.

The network is trained to generate configurations 
on its {\it visible} layer $x$, interacting
with the {\it hidden} layer $y$.

Biases $a_i, b_j \in \R$ and couplings $w_{ij} \in \R$ 
are learned from observed samples
$\bar x^1, \dots, \bar x^n \in \{\pm 1\}^N$ on the visible layer.

\bi 
\iii Probability of a configuration $(x, y) \in \{\pm 1\}^{N + P}$ modeled by 
\[ p(x, y) = \frac 1 Z \e^{-H(x, y)} 
\txt{with} H(x, y) = - \sum_{i = 1}^N a_i x_i - \sum_{j = 1}^P b_j y_j 
- \sum_{i=1}^N \sum_{j=1}^P w_{ij}\; x_i y_j \]   

\iii Maximise the expected log-likelihood\footnote{
    The negative log-likelihood of $x$ 
    will be seen as the difference of an effective energy term 
    $\Fh[H|x] = - \ln \sum_y \e^{-H(x,y)}$ and 
    a free energy term $\Fh[H] = - \ln Z = - \ln \sum_{x,y} \e^{-H(x,y)}$. 
    See next pages and chapter 4, proposition \ref{Eba}.
} of a training sample 
$\bar x^1, \dots, \bar x^n$ 
\[
\ba{lllll} 
\E_{\bar x}[ - \ln p(\bar x) ] 
&=& \disp 
- \frac {1} {n} \sum_{s = 1}^n  
\ln  \sum_{y \in \{ \pm 1\}^P} \e^{-H(\bar x^s, y)}    
&+&  \disp 
\ln  \sum_{x \in \{\pm 1\}^N} \sum_{y \in \{ \pm 1 \}^P} 
\e^{- H(x, y)} 
\ea
\] 

\iii Estimate\footnote{
    The first term is exactly computable using the Markov properties 
    of the network, however the second term requires to estimate marginals 
    either by Hinton's contrastive divergence algorithm (CD) or by 
    belief propagation (BP).
} the loss-function gradient along 
a local variation\footnote{
    The potential $h$ varies with model parameters, e.g. 
    $h(x, y) = - \Delta w_{ij}\: x_i y_j$ 
} $h(x, y) = h(x_i, y_j)$ of the total energy $H(x, y)$ 
\[ 
\ba{lllll}
\disp \frac{\dr \E_{\bar x}[ - \ln p(\bar x)]} {\dr h} 
&=& \E_{\bar x}\big[ \, \E_p \big[\, h(\bar x_i, y_j) \,\big|\, \bar x \,\big] \big] 
&-& \E_p\big[ \, h(x_i, y_j) \, \big]  \\ 

&=& \disp \frac 1 n 
\sum_{s=1}^n \sum_{y_j \in \{\pm 1 \}} p(y_j | \bar x^s)\: h(\bar x_i, y_j) 
&+& \disp \sum_{x \in \{\pm 1 \}} \sum_{y \in \{ \pm 1 \} } 
p(x_i, y_j) \: h(x_i, y_j) 
\ea
\]  
\ei

{\it Markov properties:} conditional independence of 
$x_1, \dots, x_n$ given $y$ and reciprocally\footnote{
    $\: y_1, \dots, y_n$ are conditionally independent given $x$. 
    Here $\sigma$ denotes the logistic function 
    $\sigma(u) = \frac{1}{1 + \e^{- 2 u}} = \frac 1 Z \e^{-u}$. 
}
\[ p(x|y) = \prod_{i = 1}^N p(x_i | y) 
= \prod_{i = 1}^N \sigma \Big( a_i x_i - \sum_j w_{ij} x_i y_j \Big) 
\] 

{\it Gibbs sampling\footnote{
    Gibbs sampling is a type of Markov-chain Monte Carlo, 
    on which the contrastive divergence algorithm relies. 
    Belief propagation does not resort to Monte Carlo methods. 
}:} 
\begin{minipage}[t]{0.7\textwidth}
    start from a random configuration $y^{(0)} \in \{\pm 1\}$ \\
    ... draw $x^{(t + 1)} \in \{ \pm 1\}^N$ from $p(x \,|\, y^{(t)})$ \\
    ... draw $y^{(t + 1)} \in \{ \pm 1\}^P$ from $p(y \,|\, x^{(t+1)})$ \\ 
    average over time $t$ to get an estimate of $p(x, y)$ and its marginals
\end{minipage}

\newpage

\textbf{Low Density Parity-Check Codes\footnote{
    LDPCs were introduced in Gallager's 
    1960 thesis \cite{Gallager-63} along with the electronic decoding  
    apparatus, equivalent to the later called belief propagation << algorithm >>. 
    The sparse parity-check matrix and the parallelised decoding scheme 
    reach performance close to the Shannon capacity of the channel, 
    hence the revived interest shown by 5G networks. 
}} 
\begin{figure}[H]
    \sbox0{\includegraphics[width=0.3\textwidth]{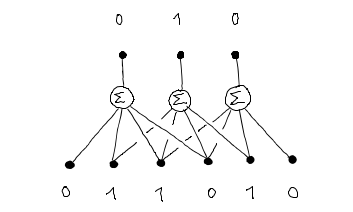}}
\begin{center}
\begin{minipage}{0.8\textwidth}
\centering
\usebox0
\caption{{\smaller
    Parity-check codes consist of two sequences
    $(x_i) \in \{0, 1\}^N$ of {\it signal} bits (bottom)
    and $(y_j) \in \{0, 1\}^P$ of {\it validation} bits (top)
    computing the sum mod $2$ of a subset of signal bits. 
}}
\end{minipage}
\end{center}
    \vspace{-0.5cm}
\end{figure} 

A compressed message is encoded as a binary sequence $(x_i) \in \{0,1\}^N$ 
followed by a parity-check sequence $(y_j) \in \{0,1\}^P$. 

Decoding consists in restoring parity-check consistency 
to rectify the errors induced by a noisy transmission channel. 

\bi
\iii Beliefs $q_i(x_i)$ and $q_j(y_j)$ initially depend on 
the received values\footnote{
    The parameter $\theta$ may be tuned to the transmission channel 
    noise/signal ratio, so that 
    $\frac 1 {1 + \e^{\theta}} = \P(x_i = 0 \st \bar x_i = 1)$. 
}
$\bar x_i$ and $\bar y_j$ e.g.
\[ 
q_i(x_i) = \frac 1 {Z_i} \e^{- h_i(x_i)}  \txt{with} 
h_i(x_i) = \theta \cdot {|x_i - \bar x_i|} 
\] 
\iii Local potentials $h_{\aa_j}$ for every validation bit $j \in \{1, \dots, P\}$
connected to a subset $\aa_j \incl \{ 1, \dots, N \}$ of signal bits
\[ h_{\aa_j}(x_{i_1}, \dots, x_{i_n}, y_j) 
= \lambda \cdot \bigg[ \Big( y_j - \sum_{i \in \aa_j} x_i \Big) \: \% \: 2 \bigg]  
\] 

\iii Local beliefs $q_{\aa_j}$ favorise parity-check 
consistency\footnote{
    Parity-check consistency is usually enforced as a {\it hard} constraint 
    i.e. in the limit $\lambda \to + \infty$, where inconsistent configurations 
    have zero-probability, which amounts to assume a fail-proof encoding. 
    In any case, consider $\lambda >> \theta$. 
} and agreement with received values 
\[ 
q_{\aa_j}(x_{i_1}, \dots, x_{i_n}, y_j) 
= \frac 1 {Z_{\aa_j}} \e^{-H_{\aa_j}(x_{i_1}, \dots, x_{i_n}, y_j)} 
\txt{with} 
H_{\aa_j} = h_{\aa_j} + h_j + \sum_{i \in \aa_j} h_i 
\] 
\ei

\vspace{0.4cm}

{\it Message-Passing:} 
\begin{minipage}[t]{0.8\textwidth} 
    initialise beliefs according to signal received\\[0.3em]
    ... compute a message $m_{\aa_j \to j}$ for all $\aa_j$ 
    and messages $m_{\aa_j \to i_p}$ for 
        all $i_p \in \aa_j$ 
    \[
    \ba{lll} 
        m_{\aa_j \to j}(y_j) 
        &= \disp \frac 1 {q_j(y_j)} 
        & \disp \sum_{x' \in \{0, 1\}^{\aa_j}} 
            q_{\aa_j}(x'_{i_1}, \dots, x'_{i_n}, y_j) 
    \\
        m_{\aa_j \to i_p}(x_{i_p}) 
            &= \disp \frac 1 {q_{i_p}(x_{i_p})} 
            &  \disp 
            \sum_{x' \in \{0, 1\}^{\aa_j \setminus i_p}}
            \: \sum_{y'_j \in \{0, 1\}} 
            q_{\aa_j}(x'_{i_1}, \dots, x_{i_p}, \dots, x'_{i_n}, y_j) 
    \ea 
    \]
    ... update beliefs\footnotemark{} according to incoming messages and normalise
    \[ \ba{ll} 
        q_j(y_j) 
            &\wa\; q_j(y_j) \: m_{\aa_j \to j}(y_j)  \\[0.5em]
        q_i(x_i) 
            &\wa\; \disp  q_i(x_i) \prod_{\aa_j \ni i} 
            m_{\aa_j \to i}(x_i) \\ 
        q_{\aa_j}(x_{i_1}, \dots, x_{i_n}, y_p) 
            &\wa\; \disp  q_{\aa_j}(x_{i_1}, \dots, x_{i_n}) 
            \prod_{i_p \in \aa_j} \prod_{\aa_k \ni i_p}
            m_{\aa_k \to i_p}(x_{i_p}) 
    \ea
    \]
    loop until computed messages are close to 1 
    $\eqvl$ belief consistency is 
    reached
\end{minipage}
\footnotetext{
    The dynamic variable is usually considered to be the set 
    of messages: we show that stationarity of beliefs implies 
    stationarity of messages (theorem \ref{faithful}) 
    so that considering the dynamic on beliefs instead 
    is equivalent from the point of view of message-passing equilibria. 
}

}

\section*{Index of Notations} 

{\smaller 
{\bf Functors:} 
\bi
\iii $X \incl \Part(\Om)$ base hypergraph  
\iii $(E, \pi)$ microstates, \S 2.1.1
\iii $(A, j)$ observables, \S 2.2.2
\iii $(A^*, \Sigma)$ measures, \S 2.2.3 
\iii $(\Delta, \Sigma)$ probability densities, \S 2.2.4
\ei  

{\bf Spaces:}
\bi 
\iii $A_\bullet(X)$ complex of local observables, def. \ref{complexes}
\iii $A^*_\bullet(X)$ complex of local measures, --
\iii $\Delta_\bullet(X) $ convex subspace of local probabilities, --
\iii $\Gamma(X)$ convex subspace of consistent local probabilities, 
def. \ref{Gamma}
\iii $\CU(X)$ manifold of consistent local hamiltonians, def. \ref{CU}
\iii ${\cal Z}(X)$ manifold consistent local potentials, def. \ref{Z}
\ei

{\bf Differential Operators:}
\bi 
\iii $\div$ boundary of $A_\bullet(X)$, \S 2.2.2
\iii $d$ differential of $A^*_\bullet(X)$, --
\iii $\DF$ effective energy gradient, \S 5.2.1
\iii $\nabla = \DF_*$ linearised effective energy gradient -- 
\ei

{\bf Combinatorial Operators:}
\bi 
\iii $\zeta$ zeta transform, sections 3.2 and 3.3
\iii $\mu = \zeta^{-1}$ Möbius transform, \S 3.2.1 and 3.3.3
\ei 

{\bf Diffusion Operators:}
\bi 
\iii $\Phi = - \DF \circ \zeta$ standard diffusion flux, \S 5.2.2 
\iii $\Tspt = \div \Phi$ standard diffusion vector field, --
\iii $\phi = - \mu \circ \DF \circ \zeta$ canonical diffusion flux, \S 5.3.2
\iii $\tau = \div \phi$ canonical diffusion vector field, --
\ei

{\bf Fields:}
\bi 
\iii $h, u \in A_0(X)$ interaction potentials,
 $h$ for reference and $u$ for evolution
\iii $H, U \in A_0(X)$ local hamiltonians, 
$H = \zeta \cdot h$ and $U = \zeta \cdot u$ 
\iii $\ph \in A_1(X)$ energy flux
\iii $q \in \Delta_0(X)$ local beliefs, $q_\aa = [\e^{-U_\aa}]$ 
\iii $p \in \Gamma(X)$ Gibbs state marginals $p_\aa = \Sigma^{\aa\Om}
\big(p_\Om \big)$
\ei

{\bf Information Functionals:} 
\bi 
\iii $\Fh^\aa$ free energy, \S 4.1.1
\iii $\Fh^{\bb\aa}$ effective energy, \S 4.1.2
\iii $S$ Shannon entropy, \S 4.2.1
\iii $\Fg$ variational free energy, \S 4.3.2
\iii $\Fb$ Bethe free energy, \S 4.3.3
\ei
}

\chapter*{Acknowledgements} 
I would first like to thank my director, 
Daniel Bennequin, whose curiosity and breadth of knowledge 
is only met by his patience, kindness and humility. 
Working by his side was an invaluable chance and an every day inspiration 
to look beyond scientific borders. 
I hope I can one day reach anywhere close to him in 
the span of these qualities. 

This work could never have been possible without the initial year-long 
turmoil of working groups 
and discussions I had with Grégoire Sergeant-Perthuis 
and Juan-Pablo Vigneaux. 
I feel very lucky that I had them by my side to teach and challenge me. 
I would like to thank Juan for having shared so much of his knowledge,
I hope to count him as a colleague and friend for long,
in spite of distances.  
I wish many luck to Grégoire in finishing his PhD that 
I am so curious to view complete.   
He knows how precious his support was to me, 
in so many different dimensions,  
and I hope I can make it up to him in the months to come. 

I have many precious friends that I could not thank individually. 
I would only like them to know that I care for them and want to 
spend more time cultivating the valuable relations too long neglected. 
They did not need to understand the motives behind this work 
to be helpful, and they all deserve credit for its completion.
I thank each of them warmly. 
I would however add a particular mention for my saturday 
library companions, who I was so glad to find and to whom 
I wish the strongest courage for the difficult work 
they carry on. 

I am obviously very grateful for the wonderful family I always have behind me. 
Much of the courage I used pursuing this long and demanding work comes from 
my parents, sisters and brothers in law.
I wish to be there for them as much as they have been for me. 
I would like to dedicate this work to the memory of my grandfather James, 
who was such an inspiring presence of calm and wisdom in my childhood 
that I picture him as patient and dedicated a teacher as he 
was with his grandchildren.

\tableofcontents 

\chapter{Homological Algebra}

This chapter consists of a short yet self-contained 
introduction to some of the most remarkable concepts 
of ${\rm XX^{th}}$ mathematics, which happened to take a definite form 
simultaneously and for the needs of one another: homology and categories. 
It still seems quite rare that they belong to a common language
with the physicist or the computer scientist, 
and we hope this chapter provides with more than the necessary 
material\footnotemark{}. 
For the more informed reader, 
this chapter's purpose is to relate our construction
to the general theory of simplicial groups. 
\footnotetext{
Apart from a few proofs, this work should demand little more than 
a good understanding of the notion of functor, 
and formulas defining the boundary operator and the differential could talk 
for themselves. 
}

Categories may be thought of as collections of points and arrows,
which describe mathematical objects and their relations,
while functors consistently transform categories into other categories. 
Section 1 reviews these elementary definitions and focuses on providing 
concrete examples 
such as the categories of groups, vector spaces, 
topological spaces, programmable types, {\it etc.}

The practical use of category theory language is specially remarkable 
in the characterisation of particular objects by universal properties. 
Section 2 focuses on the categorical concept of limit, 
which unifies many constructions such as union and product of sets, 
sums of vector spaces, inductive and projective limits, {\it etc.} 
It should familiarise the reader with commutative diagrams 
and will help describe homology groups in chapter 2. 

Homology provides a general procedure to extract algebraic invariants
from topological spaces, while 
cohomology may be thought of as an abstraction of differential calculus.
Section 3 provides with the basic definition of homology groups, 
which from a purely algebraic point of view, occur in the study 
of a square-null operator $d$ such that $d^2 = 0$. 
Geometry however best illustrates the purpose 
of homology, which unifies various integration by parts formulas 
under the Stokes theorem\footnotemark{}. The Gauss formula is a particular case 
of the latter, and its discrete analog \ref{gauss} encountered 
in chapter 2 will play a fundamental role in understanding the 
geometric structure of message-passing algorithms. 

\footnotetext{
    Homological thinking already emerged with Maxwell and Faraday, in formulating 
    physical principles for electromagnetism. They involve
    (i) the geometric operator $\bord$ mapping a subspace to its boundary,
    which has empty boundary, and (ii)  
    the differential $d$ acting on fields
    as gradient, curl, divergence, while $d^2$ vanishes as
    ${\rm div} \circ {\rm curl} = {\rm curl} \circ {\rm grad} = 0$.
}

\newpage 

\section{Categories and Functors}

\subsection{Categories}

\newcommand{\Ob}{{\rm Ob}}
\newcommand{\Cat}{{\bf C}}

Categories provide with a convenient abstraction of most mathematical constructions
and theories. 
They were introduced by Eilenberg and MacLane \cite{Eilenberg-MacLane}
to build homological algebra on a rigorous and flexible ground,
they have since proven useful in many diverse applications in mathematics, 
informatics and physics.

\begin{defn}
    A {\rm category} ${\bf C}$ is a class of objects $A, B, C, \dots$ denoted 
by $\Ob(\Cat)$ together with:
\bi 
\iii a set of arrows $\Hom(A,B)$ for every 
$A, B$ in $\Ob(\Cat)$,
\iii an identity arrow $1_A \in \Hom(A,A)$ for every $A$ in $\Ob(\Cat)$,
\iii a composite arrow $gf \in \Hom(A,C)$ for every $f \in \Hom(A,B)$ 
and $g \in \Hom(B,C)$
\ei 
satisfying the following axioms:
\bei
    \item {\rm Identity:} for every $f : A \aw B$ 
        \begin{equation} f = f \cdot 1_A = 1_B \cdot f \end{equation}
    \item {\rm Associativity:} for every $f : A \aw B$, $g: B\aw C$
        and $h: C \aw D$ 
        \begin{equation} h (g f) = (h g) f \end{equation}
\ee
\end{defn}

The first example of a category is $\Set$, the category
whose objects are sets and arrows are functions. 
When each object of $\Cat$ may be viewed as a set 
and each arrow $f \in \Hom(A, B)$ induces a function $f \in B^A$ 
of the underlying sets, the category $\Cat$ is called concrete, 
Equivalently, a concrete category $\Cat$ is a subcategory of $\Set$.
Although the following definitions make sense in any category, they
respectively correspond to bijections, injections, and surjections
in $\Set$ as in most examples of concrete categories.

\begin{defn} 
Let $\Cat$ be a category and $f : A \aw B$ a morphism.
\bi
\iii f is an {\rm isomorphism} if 
there exists $g: B \aw A$ such that $gf = 1_A$ and $fg = 1_B$.
    \iii f is a {\rm monomorphism} if for all $X$ and 
    $u, u' : X \aw A$, $fu = fu'$ implies $u = u'$.
\iii f is an {\rm epimorphism} if for all $Y$ and 
    $v, v' : B \aw Y$, $vf = v'f$ implies $v = v'$.
\ei
\end{defn}

A category may have {\it terminal} objects, 
satisfying one of the conditions of the following definition. 
These very special objects are also called {\it universal} as 
a terminal object of a given kind, when it exists, 
is always defined up to isomorphism. 
Universal objects are related to the existence of certain {\it limits}, 
and describe many fundamental constructions 
in algebra and geometry\footnote{
    Such constructions are called universal,
    a few of which being the object of section 1.2.
    The construction of the tensor algebra $( T(V), \otimes)$ 
    from a vector space $V$ is a classical example that is not exposed here. 
}.

\begin{defn} 
Let $\Cat$ be a category. 
\bi 
    \iii an object $I$ is {\rm initial} in $\Cat$ 
    if there is a unique arrow $I \aw A$ for every object $A$ in $\Cat$,
    \iii an object $F$ is {\rm final} in $\Cat$
    if there is a unique arrow $A \aw F$ for every object $A$ in $\Cat$,
    \iii an object $O$ is {\rm null} in $\Cat$
    if it is both final and initial.
\ei
\end{defn}

\begin{prop} 
    If $T$ and $T'$ are terminal objects of the same kind, 
    then $T$ is isomorphic to $T'$. 
\end{prop}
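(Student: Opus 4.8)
The plan is to handle the three cases — final, initial, and null — by reducing them to one essential argument together with its formal dual. I would first treat the case where $T$ and $T'$ are both final. Since $T'$ is final, there is a unique arrow $f : T \aw T'$; since $T$ is final, there is a unique arrow $g : T' \aw T$. The candidate isomorphism is $f$, with proposed inverse $g$.

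The key step is to deduce $gf = 1_T$ and $fg = 1_{T'}$ purely from the uniqueness clause. Consider the composite $gf : T \aw T$, which exists by the associativity/composition data of the category. Because $T$ is final, the set $\Hom(T,T)$ contains \emph{exactly} one arrow; since $1_T$ belongs to $\Hom(T,T)$ by the identity axiom, uniqueness forces $gf = 1_T$. The symmetric computation applied to $fg : T' \aw T'$, now invoking finality of $T'$, yields $fg = 1_{T'}$. By the definition of isomorphism this exhibits $f$ as an isomorphism with inverse $g$, so $T \simeq T'$.

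For the initial case I would simply dualise: reversing the direction of every arrow interchanges the two universal properties, so the identical three lines produce a unique $f : T \aw T'$ and $g : T' \aw T$ with $gf = 1_T$ and $fg = 1_{T'}$, the two identities being read off from uniqueness in $\Hom(T,T)$ and $\Hom(T',T')$ via initiality rather than finality. The null case then requires no further work: a null object is in particular final (equivalently initial), so either of the preceding arguments applies verbatim to a pair of null objects.

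I do not expect a genuine obstacle, since the entire content of the statement is the uniqueness half of the definition of terminal objects. The one point demanding care is to apply uniqueness at precisely the right juncture — observing that $gf$ and $1_T$ are \emph{both} arrows $T \aw T$, and that finality asserts there is only one such arrow, which therefore forces them to coincide. The argument would fail if one merely noted that $gf$ is \emph{an} arrow without exploiting that it must be \emph{the} arrow; conflating existence with uniqueness is the only place where the proof could slip.
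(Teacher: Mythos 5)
Your proof is correct and follows exactly the same argument as the paper: both obtain the mutual arrows from the universal property and then force $gf = 1_T$ and $fg = 1_{T'}$ by uniqueness of the single element of $\Hom(T,T)$ (resp. $\Hom(T',T')$). The explicit dualisation for the initial case and the reduction of the null case are spelled out more fully than in the paper, but the content is identical.
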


\begin{proof}
When $T$ is a terminal object, the axioms imply that $1_T$ 
is the unique arrow of $\Hom(T,T)$.
If $T'$ is terminal of the same kind,
the arrows $T \aw T'$ and $T' \aw T$ must then compose as $1_T$ and $1_{T'}$. 
\end{proof}

\begin{defn}
For any category $\Cat$, its {\rm dual} 
or {\rm opposite} category $\Cat^{op}$ 
has the same objects as $\Cat$ 
and, for every arrow 
$f : A \aw B$ in $\Cat$, 
a reversed arrow $f^{op} : B \aw A$ in $\Cat^{op}$. 
\end{defn}

In the following fundamental examples, 
we give an initial and a terminal object when they exist. 
In many interesting examples, 
the set of morphisms between two objects is  
also an object of the category.
This is not true in general 
and we precise when it is the case\footnote{
    The existence of a {\it hom-object} is a defining property of cartesian 
    categories.
}.

{\bf Examples of Categories:} 
\begin{enumerate}[itemsep=0.6em]
\item A partial order $(X, \geq)$ is a category with a unique arrow 
$x \aw y$ whenever $x \geq y$. 

The identity axiom $x \geq x$ expresses the reflexivity of the order relation,
while the transitivity asking that if $x \geq y$ and $y \geq z$, then $x \geq z$, 
is given by the existence of compositions.

Initial and final elements correspond 
to maximal and minimal elements respectively. 

\item The category $\Set$ whose objects are sets and arrows are functions. 

The set of arrows $\Hom(A, B)$ is itself a set, denoted $B^A$.

The empty set $\vide$ is initial and the point $\bullet = \{ \vide \}$ 
is final in $\Set$.

\item The category $\Alg_\K$ of unital algebras over a field $\K$ whose
    arrows are algebra morphisms. 

    The field $\K$ is both initial and final in $\Alg_\K$, it is a null object.

\item The category $\Top$ whose objects are topological spaces
    and arrows are continuous functions. 
    
   The point $\bullet$ is final in $\Top$.

\item The category ${\bf Types}$ 
whose objects are variable types
and arrows are programs\footnotemark{}. 

The set of arrows $\Hom({\tt a}, {\tt b})$ represents the
programs with input of type ${\tt a}$ and output of type ${\tt b}$.   
It is itself a type, denoted by $({\tt a \aw b})$.

The empty or {\it bottom} type $\perp$ is initial,
while the unit or {\it top} type $\top$ is final. 
An arrow of type ${\tt a} \aw \perp$ represents a program 
which does not terminate. 

\footnotetext{
This example is motivated by functional programming, 
although types also aim to provide
with a constructivist and rigorous ground for mathematical logic.
See for instance the Curry-Howard "proofs as programs" correspondence 
and Martin-Löf's theory of types.
}

\item For every object $X$ of a category $\Cat$, the category 
    $\Cat_X$ above $X$ 
    has arrows $f: A \aw X$ as objects. 
    A morphism $\ph : f \aw g$ in $\Cat_X$ between $f : A \aw X$ 
        and $g : B \aw X$ is an arrow $\ph : A \aw B$ in $\Cat$ 
    such that the following diagram commutes:
    \begin{equation} 
        \begin{tikzcd}
            A \rar{\ph} \drar[swap]{f} & B \dar{g} \\
            & X 
        \end{tikzcd}
    \end{equation}
    There is a similar category $\Cat^X$ below $X$ 
    defined by reversing arrows, which amounts to 
    reading the above definition in $\Cat^{op}$.  

\end{enumerate}

\subsection{Functors}

Just as morphisms describe relations between objects in a category, 
functors describe relations between categories by bringing
every object to an object and every arrow to an arrow. 

\begin{defn}
A covariant functor $T$ from two categories $\Cat$ and $\Cat'$ is defined by:
\bi 
\iii An object $T(A)$ of $\Cat'$ for every object $A$ of $\Cat$
\iii An arrow $T(f) : T(A) \aw T(B)$ in $\Cat'$ 
for every arrow $f : A \aw B$ in $\Cat$.
\ei
satisfying the following axioms:
\bei
    \item $T(1_A) = 1_{T(A)}$,
    \item $T(fg) = T(f) \cdot T(g)$ 
\ee
\end{defn}

{\bf Examples of Functors:} 
\begin{enumerate}[itemsep=0.6em]
    \item When $(X, \geq)$ and $(Y, \geq)$ are partially ordered sets,
        a functor from $X$ to $Y$ is an order-preserving map from $X$ to $Y$.

        This defines the category $\Ord$ of ordered sets with 
        order-preserving map as morphisms.

    \item For every object $X$ of a category $\Cat$, 
        there are canonical functors $\Hom(-,X)$ and $\Hom(X,-)$
        from $\Cat$ to $\Set$. 

        The pull-back of $f : A \aw B$ is the map $f^* : \Hom(B,X) \aw \Hom(A,X)$
        defined by $f^*(u) = u \circ f$ for every $u : B \aw X$.
        $\Hom(-,X)$ is a contravariant functor. 

        The push-out of $g : C \aw D$ is the map $g_* : \Hom(X,C) \aw \Hom(X, D)$
        defined by $g_*(v) = g \circ v$ for every $u : X \aw C$.
        $\Hom(X,-)$ is a covariant functor.

    \item A contravariant functor from $\Top$ to $\Alg$ is defined by
        associating to each topological space $\Om$ the 
        algebra $C(\Om)$ of real continuous functions on $\Om$.

        For every arrow $\ph : \Om \aw \Om'$ in $\Top$, 
        its pullback $\ph^* : C(\Om') \aw C(\Om)$ 
        defined by $\ph^*(f) = f \circ \ph$ is an algebra morphism.

        This is a particular case of the previous example,
        as $C(\Om) = \Hom(\Om, \R)$ in $\Top$.

    \item The endofunctor ${\tt list}$ of ${\bf Types}$ 
        associates to each type ${\tt a}$ the type ${\tt[a]}$ 
        of lists whose elements are of type ${\tt a}$. 

        Any function ${\tt f: a \aw b}$ induces a function 
        ${\tt map(f) : [a] \aw [b]}$
        returning the list of images under ${\tt f}$ 
        of the input list's elements. 
        \begin{equation} \begin{split}
        \tt map(f) &: \tt[x, \; \dots xs] \mapsto [f(x), \;\dots map(f)(xs)]  \\
            &: \tt [\;] \mapsto [\; ] 
        \end{split} \end{equation}

    \item When $\Cat$ is a concrete category, 
        there is a canonical forgetful functor from $\Cat$ to $\Set$.

\end{enumerate}

\subsection{Natural Transformations}

Relations between functors are described by 
natural transformations, also called functor morphisms,
as they allow to view functors as objects of a category.

\begin{defn}
Let $T, T'$ be two functors from $\Cat$ to $\Cat'$. 
    A natural transformation $\Phi$ from $T$ to $T'$ is a 
    collection of morphisms $\Phi(A) : T(A) \aw T(A')$ in 
    $\Cat'$ for all $A$ in $\Cat$, such that the diagram:
\begin{equation} \begin{tikzcd}
    T(A) \rar{\Phi(A)} \dar{T(f)} 
    & T'(A) \dar{T'(f)} \\
    T(B) \rar{\Phi(B)} & T'(B) 
\end{tikzcd}
    \end{equation}
is commutative for all $f : A \aw B$ in $\Cat$.
\end{defn}

Given two categories $\Cat$ and $\Cat'$, 
the functor category $[\Cat, \Cat']$ has functors
$T: \Cat \aw \Cat'$ as objects and natural transformations 
$\Phi : T \aw T'$ as morphisms, where:  
\bi
\iii The identity $1_T: T \aw T$ is defined by 
$1_T(A) = 1_{T(A)}$ for all object $A$ of $\Cat$,
\iii The composition of $\Phi : T \aw T'$ and $\Psi : T' \aw T''$ is 
defined by $(\Psi \circ \Phi)(A) = \Psi(A) \circ \Phi(A)$.
\ei

{\bf Examples:}
[[ $\Hom(-,X)$ and Yoneda ]]
[[ adjunction example ]]

\section{Limits and Colimits}

Universal properties allow for an abstract definition of limits,
unifying some simple constructions such as sums and products of sets
with more elaborate ones, such as inductive and projective limits. 
Lecture notes from H. Cartan \cite{Cartan-64} were a great resource
on the subject. The book by Dwyer and Spalinski 
\cite{Dwyer-Spalinski} should however prove more useful 
for the reader interested in modern categorical constructions. 

\subsection{Definition}

A {\it diagram} of shape ${\bf D}$ in a category $\Cat$ consists
of a functor $C: {\bf D} \aw {\Cat}$ where ${\bf D}$ 
is a small\footnote{
    A category is said {\it small} when the class of its objects 
    actually forms a set.
}
category describing the diagram shape.
It is a collection $C(f) : C_\aa \aw C_\bb$ of arrows in $\Cat$ for 
all $f : \aa \aw \bb$ in ${\bf D}$.

A {\it cone} over $C$ in $\Cat$ is an object $S$ of $\Cat$
and a collection of morphisms $\ph_\aa : S \aw C_\aa$ for all
$\aa \in {\bf D}$ such that the following diagram in $\Cat$ commutes
for every $f : \aa \aw \bb$ in ${\bf D}$:
\begin{equation} 
\begin{tikzcd}
    & S \dlar[swap]{\ph_\aa} \drar{\ph_\bb} & \\
    C_\aa \ar{rr}{C(f)} & & C_\bb  
\end{tikzcd}
\end{equation}
In other words, $(S, \ph)$ extends the functor $C$ to 
the category ${\bf D}_0$ preceding ${\bf D}$ with an initial element.
A morphism between two cones $(S, \ph)$ and $(S', \ph')$ over $C$ is 
a morphism $\psi : S \aw S'$ in $\Cat$ such that 
the following diagram commutes for all $\aa \in {\bf D}$:
\begin{equation} 
\begin{tikzcd}
    S \rar{\psi} \drar[swap]{\ph_\aa} & S' \dar{\ph'_\aa} \\
    & C_\aa
\end{tikzcd}
\end{equation}
A {\it limit} of a diagram $C : {\bf D} \aw \Cat$ 
is a final element $(L, \lambda)$ in the category of cones over $C$. 
When a limit $L$ exists, it is defined up to isomorphisms in $\Cat$ 
by the universal property requiring that
for every cone $(S, \ph)$ over $C$ there be a unique morphism 
$\psi : S \aw L$ factorising $S$ through $L$. 
\begin{equation} 
\begin{tikzcd}
    & S \dar[dashed]{\psi} 
    \ar[ddl, swap, bend right=20, "\ph_\aa"] \ar[ddr, bend left=20, "\ph_\bb"] & \\
    & L \dlar[swap]{\lambda_\aa} \drar{\lambda_\bb} & \\
    C_\aa \ar{rr}{C(f)} & & C_\bb  
\end{tikzcd}
\end{equation}

\begin{defn} 
    A category $\Cat$ is called {\rm complete}
    when every small diagram $C : {\bf D} \aw \Cat$ has a limit.
When it exists, we denote by $\lim_{\bf D} C$ the limit of $C$
defined up to isomorphism.
\end{defn}

A {\it colimit} of a diagram 
$C : {\bf D} \aw \Cat$, 
is reciprocally defined by reversing arrows. 
It is an initial element in 
the category of cones under $C$, made of 
extensions of $C$ to the category ${\bf D}_1$ 
appending a final element to ${\bf D}$. 
When it exists, the universal property satisfied by a colimit $L'$ of $C$
is represented by the diagram: 
\begin{equation} 
\begin{tikzcd}
    C_\aa \ar[rr, "C(f)"] 
    \ar[dr, swap, "\lambda'_\aa"] \ar[ddr, swap, bend right=20, "\ph'_\aa"] 
&   & 
    C_\bb \ar[dl, "\lambda'_\bb"] \ar[ddl, bend left=20, "\ph'_\bb"]
\\
    & L' \ar[d, dashed, "\psi'"]  & \\
    &   S' &
\end{tikzcd}
\end{equation}

\begin{defn}
A category $\Cat$ is called {\rm cocomplete} when 
    every small diagram $C : {\bf D} \aw \Cat$ has a colimit.
    When it exists, we denote by ${\rm colim}_{\bf D}\, C$ the colimit
    of $C$ defined up to isomorphism.
\end{defn}

We give a few examples of limits below, although the following paragraphs
will illustrate much better the universality of limits. 

{\bf Examples:}
\begin{enumerate}[itemsep=0.6em]
    \item When ${\bf D}$ is the empty category, 
        limits and colimits of the empty diagram in $\Cat$
        are initial and final objects 
        of $\Cat$ respectively. 

    \item Any object $A$ of a category $\Cat$ defines a diagram, whose shape 
        ${\bf D}$ is the category with only one object and its identity map.
        The limit and colimit are both represented by $1_A : A \aw A$.

    \item Let $u \in \R^\N$ denote a sequence of real numbers. 
        The induced set map defines 
        a functor between the partial orders 
        $(\Part(\N), \incl)$ and $(\Part(\R), \incl)$ 
        associating to a subset $S \incl \N$ its direct image $u(S) \incl \R$.

        Consider now its restriction $\tilde u$ 
        to subsets of the form $S_n = \{ n, n+1, \dots \}$ for $n \in \N$. 
        The limit of $\tilde u$ is the largest subset $L \incl \R$ such that 
        $L \incl u(S_n)$ for all $n$. It consists of all
        the accumulation points of $u$.

    \item in some abelian category e.g. $\Vect$ [[Kernel and Image]].
\end{enumerate}

\subsection{Sums and Products}

Any two objects $A, A'$ in a category $\Cat$ define a 
diagram of shape a category ${\bf D}$
with two objects and identities as morphisms.
When it exists,
a final cone over $A$ and $A'$ defines 
their 
{\it product} $A \times A'$, satisfying the universal property depicted by:  
\begin{equation} 
\begin{tikzcd} 
    & & A \\
X \ar[urr, bend left=20] \ar[drr, bend right=20] \rar[dashed] & 
A \times A' \urar{\pi} \drar[swap]{\pi'} & \\
& & A' 
\end{tikzcd}
\end{equation}
Their {\it sum} or {\it coproduct} $ A \sqcup A'$ 
is reciprocally defined as an initial cone under $A$ and $A'$.
\begin{equation} 
\begin{tikzcd}
A \drar{j} \ar[drr, bend left=20] & & \\
& A \sqcup A' \rar[dashed] & Y  \\
A' \urar[swap]{j'} \ar[urr, bend right=20] & & 
\end{tikzcd}
\end{equation}

{\bf Examples:}
\begin{enumerate}[itemsep=6px]
\item 
In $\Set$ and $\Top$, the product of $A$ and $A'$ is their cartesian product 
$A \times A'$, while their sum is the disjoint union $A \sqcup A'$. 

\item  In $\Grp$ the product and coproduct of $G$ and $G'$ coincide as $G \times G'$.
In $\Vect$, the product and the sum of $V$ and $V'$ also coincide as $V \oplus V'$. 
This is a general property of abelian categories. 

\item 
In the category ${\bf Com}$ of unital commutative 
algebras, the coproduct of $A$ and $A'$
is their tensor product $A \otimes A'$, with canonical injections 
$1 \otimes -$ and $- \otimes 1$.
\end{enumerate}

\subsection{Pushouts and Pullbacks}

Consider the diagram shape given by ${\bf D} : \aa \aw \bb \wa \aa'$.
The limit of this kind of diagrams, when it exists, 
defines the pullback or fibered product $A \times_B A'$
of $A$ and $A'$ over $B$: 
\begin{equation}
\begin{tikzcd} 
    & & A \drar{v} & \\
X \ar[urr, bend left=20] \ar[drr, bend right=20] \rar[dashed] & 
A \times_B A' \urar{\pi} \drar[swap]{\pi'} & & B\\
    & & A' \urar[swap]{v'} & 
\end{tikzcd}
\end{equation}
The pushout or amalgated sum $A \sqcup_B A'$ of $A$ and $A'$ over $B$, 
when it exists, 
is defined by the dual universal property:
\begin{equation} 
\begin{tikzcd}
    & A \drar{j} \ar[drr, bend left=20] & & \\
    B \urar{u} \drar[swap]{u'} & & A \sqcup_B A' \rar[dashed] & Y  \\
    & A' \urar[swap]{j'} \ar[urr, bend right=20] & & 
\end{tikzcd}
\end{equation}
Note that the morphisms are implicit in the notations 
$A \times_B A'$ and $A \sqcup_B A'$ although the resulting objects depends on them.

{\bf Examples:}
\begin{enumerate}[itemsep=6px]
\item In $\Set$ as in $\Top$, the fibered product of $v: A \aw B$ and $v': A' \aw B$ 
    is defined by: 
    \begin{equation} A \times_B A' = \{ (x,x') \in A \times A' \st v(x) = v'(x') \} \end{equation}
    while the pushout of $u: B \aw A$ and $u': B \aw A'$ is the quotient:
    \begin{equation} A \sqcup_B A' = A \sqcup A' \; / \; \big( u(y) \sim u'(y) \big)_{y \in B} \end{equation}

\item In the category of commutative algebras ${\bf Com}$, 
    the pushout $A \otimes_B A'$ of $u : B \aw A$ and $u': B \aw A'$ 
    is the quotient of $A \otimes A'$ by the equivalence relation 
        generated by the action of all $b \in B$: 
        \begin{equation} 
        \big( a \cdot u(b)\otimes a' \big) \sim 
        \big( a \otimes u'(b) \cdot a' \big) \end{equation}
\end{enumerate}

\pagebreak

\subsection{Sheaves and Cosheaves}

\begin{defn}
A {\rm presheaf} over a topological space $(\Om, {\cal T}_\Om)$ is a functor
$F : {\cal T}_\Om \aw \Set$, associating:
\bi
\iii to each open subset $U \incl \Om$ a set $F(U)$ of sections over $U$,
\iii to each ordered pair $V \incl U$ a restriction map $\rho_{VU} : F(U) \aw F(V)$. 
\ei
A presheaf is thus contravariant\footnote{
    As a subcategory of $\Set$, this is the right choice of arrows 
    on ${\cal T}_\Om$.
}
from $({\cal T}_\Om, \incl)$ to $\Set$,
and covariant 
from ${\cal T}_\Om^{op} = ({\cal T}_\Om, \cont)$ to $\Set$. 
\end{defn}

The fiber of $F$ at $x \in \Om$ is defined as the colimit of $F$ over the 
partial order $({\cal V}_x, \cont)$ of neighborhoods containing $x$ 
and denoted by $F_x$: 
\begin{equation} F_x = \colim_{{\cal V}_x} F \end{equation}
For all $U$ containing $x$, the image $s(x)$ of a section
 $s \in F(U)$ under the canonical map $F(U) \aw F_x$ 
is called the germ of $s$ at $x$.

When ${\cal U}_{\Om'}$ is an open covering of $\Om' \incl \Om$
closed under intersection, the limit of $F$ over $({\cal U}_{\Om'}, \cont)$
is the set of compatible sections on ${\cal U}_{\Om'}$: 
\begin{equation} \lim_{{\cal U}_{\Om'}} F \simeq 
\Big\{\; (s_U) \in \prod_{U \in {\cal U}_{\Om'}} F(U) 
\; \Big| \; 
\forall U, V \in {\cal U}_{\Om'} \;\; (s_U)_{|U\cap V} = (s_V)_{|U \cap V}  
\; \Big\}
\end{equation} 
A presheaf $F$ over $\Om$ is a sheaf when for all such 
covering ${\cal U}_{\Om'}$ of $\Om'$,
the sections of $F$ over $\Om'$ are in one-to-one correspondence
with the compatible sections on ${\cal U}_{\Om'}$. 

\begin{defn}
A {\rm sheaf} $F$ over $\Om$ is a presheaf such that: 
    \begin{equation} F(\Om') \simeq \lim_{{\cal U}_{\Om'}} F \end{equation} 
for every open covering ${\cal U}_{\Om'}$ of $\Om' \incl \Om$ 
closed under intersection.
\end{defn}

Morphisms of sheaves are defined as natural transformations of functors,
and the category of sheaves over $\Om$ is naturally defined 
by inclusion in the functor category $[{\cal T}^{op}_{\Om}, \Set]$. 
When $F$ is a sheaf, it is customary to denote it by $F(\Om)$ 
with a slight abuse\footnote{
$F(U)$ may not be the image of $F(\Om)$ under $\rho_{U\Om}$.
}
of notations.
Note that the sheaf axiom implies that the diagram: 
\begin{equation} 
\begin{tikzcd} 
    F(U \cup V) \rar \dar & F(V) \dar \\
    F(U) \rar & F(U \cap V) 
\end{tikzcd}
\end{equation}
is a pullback square in $\Set$, for all open $U, V \incl \Om$.
When limits exist, sheaves may be defined in any category, 
and one is often mostly interested with sheaves of abelian groups, 
rings, modules, {\it etc.} fibered products and limits coinciding with
those coming from sets.

There is a dual notion of cosheaf, although seemingly less common. 

\begin{defn} 
A {\rm pre-cosheaf} over a topological space $(\Om, {\cal T}_\Om)$ in 
a category with colimits $\Cat$ is a covariant 
    functor $G: {\cal T}_\Om \aw \Cat$ associating: 
\bi
    \iii to each open subset $U \incl \Om$ an object $G(U)$ in $\Cat$
    \iii to each ordered pair $V \incl U$ a morphism $j_{UV} : G(V) \aw G(U)$ in $\Cat$
\ei 
\end{defn}

\begin{defn} 
    A {\rm cosheaf} over $\Om$ is a pre-cosheaf such that: 
    \begin{equation} G(\Om') \simeq \colim_{{\cal U}_{\Om'}} G \end{equation} 
    for every open covering ${\cal U}_{\Om'}$ of 
    $\Om' \incl \Om$ closed under intersection.
\end{defn}

The category of $\Cat$-valued cosheaves on $\Om$ is similarly defined 
as a subcategory of $[{\cal T}_{\Om}, \Cat]$. 
The cosheaf axiom implies that $G(\vide)$ is initial in $\Cat$, 
and that the diagram:
\begin{equation} 
\begin{tikzcd}
    G(U \cap V) \dar \rar & G(V) \dar \\
    G(U) \rar & G(U \cup V) 
\end{tikzcd}
\end{equation}
is a pushout square in $\Cat$ 
for all open $U, V \incl \Om$.

{\bf Examples:}
\begin{enumerate}[itemsep=6px]
\item The space $C(\Om)$ of real continuous functions over 
a topological space $\Om$ 
defines the fundamental example of a sheaf of algebras, with obvious 
restrictions. 

\item Suppose given a numerable set $\Om$ 
with finite sets $E_i$ for all $i \in \Om$,
and let $E_\aa = \prod_{i \in \aa} E_i$ for all $\aa \incl \Om$ 
Then $E_\Om$ defines a sheaf of sets over 
the discrete topological space $\Om$.

\item When $f \in C(\Om, \Om')$ is a continuous map of topological spaces, 
the map $U' \mapsto f^{-1}(U')$ is a cosheaf of sets over $\Om'$. 

\item Given $E_\Om$ as above, let $A_\aa = \R^{E_\aa}$ 
    denote the algebra of real functions on $E_\aa$, for all $\aa \incl \Om$. 
        Then for all $\bb, \bb' \incl \Om$ we have $A_{\bb \cup \bb'} = A_\bb \otimes_{A_{\bb \cap \bb'}} A_{\bb'}$ 
    and $A$ is a cosheaf of algebras over $\Om$.
\end{enumerate}

\section{Differential Structures}

Simplices generalise the usual figures of point, segment, triangle, 
tetrahedron, {\it etc.} They
correspond to elementary objects in topology and geometry, 
as any $n$-dimensional manifold may be triangulated by simplices 
of dimension $n$. 
They also carry the fundamental combinatorial properties 
of differential calculus, which will motivate the much more algebraic definition
of simplicial objects. 

\subsection{Simplicial Complexes}

\newcommand{\vect}{\overrightarrow}

Given an affine space $E$ and $n + 1$ affinely independent points 
$P_0,\dots, P_n$, 
the convex polyhedron generated by those points
is called the $n$-{\rm simplex} of vertices $P_i$:
\begin{equation} S = \Big\{ M \in E \;\Big|\; 
\vect{OM} = \sum_{i=0}^n \lambda_i \, \vect{OP}_i 
\quad{\rm for }\; \lambda_j \geq 0 \;{\rm and}\; 
\sum_i \lambda_i = 1 \Big\}
\end{equation}
Note that for any choice of origin $O \in E$, the barycentric 
coordinates $\lambda_i$ of $M$ are uniquely determined 
independently of $O$. 
Barycentric coordinates identify points of a simplex
with probability measures on its set of vertices. 

\begin{defn} 
    The {\rm topological $n$-simplex} over the set of $n+1$ vertices $\Om$
    is defined by: 
\begin{equation} 
| S_\Om | = \Big\{\; \lambda : \Om \aw \R^+ \;\Big| \;
\sum_{i\in \Om} \lambda_i = 1 \;\Big\} 
\end{equation}
    It is a convex subset of $\R^{\Om}$, and a topological space for 
    the topology induced by $\R^{\Om}$.
\end{defn}

Let $S$ denote the simplex of vertices $\Om$.
A $q$-face $S'$ of $S$ is defined by a set of $q+1$ vertices 
$\Om' \incl \Om$, such that $S' \incl S$ 
consists of the barycentric coordinates 
$\lambda$ that vanish on $\Om - \Om'$.
The interior of $S$ 
for the topology of $\R^\Om$ 
consists of all the non-vanishing barycentric coordinates
$\lambda > 0$, and coincides with the complement of all the proper faces of $S$ 
within $S$.

There is an equivalence $\Om \mapsto |S_\Om|$ 
between the categories of finite sets and topological 
simplices ordered by inclusion,
where $\Part(\Om)$ is identified with the set of faces of $|S_\Om|$.
It is therefore natural to identify a simplex with its set of vertices.
Every $f : \Om \aw \Om'$ induces a continuous map $f_* : |S_\Om| \aw |S_\Om'|$
defined by: 
\begin{equation} (f_* \lambda)_{j} = \sum_{i \in \Omega \st f(i) = j} \lambda_i \end{equation}
and sending every face of 
$|S_\Om|$ to a face of $|S_{\Om'}|$.
Simplices hence define a covariant functor $\Set_{\rm f} \aw \Top$,
which could be extended to the larger category of measurable spaces. 

\begin{defn}
An {\rm abstract simplicial complex} $(\Om, K)$ is 
a finite set of vertices $\Om$ together with a collection of faces
$K \incl \Part(\Om)$ 
made of finite subsets of $\Om$, such that
for all $\aa \in K$, every $\bb \incl \aa$ is also in $K$.
\end{defn}

The $n$-skeleton $K_n$ of a simplicial complex $K$
consists of all its $n$-faces, {\it i.e.} 
faces having exactly $n + 1$ vertices.
The abstract simplex $S_{\Om}$ is the trivial simplicial complex $(\Om, \Part(\Om))$
having all possible faces.
A simplicial complex $(\Om, K)$ is essentially a reunion 
of abstract simplices $S_\aa$, for $\aa$ in $K$.

The topological space $|K|$ associated to a simplicial complex $(\Om, K)$ 
is obtained by gluing the simplices of $K$ along their 
intersecting faces:
\begin{equation} |K| = \colim_{\aa \in K} |S_{\aa}| \end{equation}
The inductive limit, taken over the functor $\aa \mapsto |S_{\aa}|$,
is essentially a reunion in the ambient topological simplex $|S_\Om|$.

\begin{defn}
A {\rm simplicial morphism} $f : (\Om, K) \aw (\Om', K')$ 
is a map of sets $f: \Om \aw \Om'$ 
such that for all face $\aa$ of $K$, its image $f(\aa) \incl \Om'$ is a face of $K'$. 
\end{defn}

Simplical complexes form a category $\KS$.
A simplicial map $f : K \aw K'$
induces for all $\aa \in K$ a continuous function 
$f^\aa_* : |S_{\aa}| \aw |S_{f(\aa)}|$.
These maps extend
to a map of topological spaces 
$f_* : |K| \aw |K'|$ and topological realisation
defines a covariant functor $\KS \aw \Top$.

The definition of a simplicial complex $K$ with vertices in $\Om$ 
could be naturally extended when $\Om$ is numerable 
and more generally, when $\Om$ is a measurable space.

\subsection{Simplicial Objects}

For every $n \in \N$, denote by $[n] = \{ 0, \dots, n\}$
the total order with $n+1$ elements, and by 
$\big( [n], S_n \big)$ the abstract $n$-simplex with ordered vertices.

\begin{defn} 
The {\rm simplicial category} ${\bf \Delta}$ is defined by:
\bi
\iii objects: $[n]$ for any $n$ in $\N$,
\iii morphisms: $[m] \aw [n]$ order-preserving map.
\ei
Equivalently, ${\bf \Delta}$ is the subcategory of 
$\Ord$ with objects $[n]$ for $n \in \N$.
\end{defn}

An ordered $n$-simplex $\sigma$ in 
a simplicial complex $(\Om, K)$ is a simplicial map 
$\sigma: \big([n], S_n \big) \aw (\Om, K)$.
The ordered $n$-simplex 
$\sigma = (\sigma_0, \dots, \sigma_n)$ is said non-degenerate when 
$\sigma_i \neq \sigma_j$ for $i \neq j$ and the 
underlying set map is injective. 

A simplicial complex $(\Om, K)$ then defines
a contravariant functor $\vec{K} : \Simp^{op} \aw \Set$ where: 
\bi
\iii $\vec{K}_n = \vec{K}\big([n] \big)$ is the set of ordered $n$-simplices
in $K$, 
\iii $t^* : \vec{K}_n \aw \vec{K}_m$ is defined for all
$t : [m] \aw [n]$ by the pullback $\sigma \mapsto \sigma \circ t$.
\ei
Denoting by $|\sigma| = \Img(\sigma)$ the image of an 
ordered $n$-simplex, we have 
$|t^* \sigma| \incl |\sigma|$ for all $t: [m] \aw [n]$.

A simplicial map $f: (\Om, K) \aw (\Om', K')$ induces 
a natural transformation $f_* : \vec{K} \aw \vec{K}'$,
defined by the pushforward $\sigma \mapsto f \circ \sigma$. 
The natural transformation $f_*$ is a morphism 
in the category of functors $[\Simp^{op}, \Set]$ 
and the assignment $(\Om, K) \mapsto \vec K$ defines 
a covariant functor $\KS \aw [\Simp^{op}, \Set]$. 
The set of ordered simplices $\vec{K}$ of a simplicial complex 
$(\Om, K)$ is the fundamental example of a simplicial set, and motivates 
the following more general definition of simplicial objects in an arbitrary category.

\begin{defn}
A {\rm simplicial object} in a category $\Cat$ is a 
    functor $X : \Simp^{op} \aw \Cat$.
\end{defn}

Simplicial objects in $\Cat$ form a category $[\Simp^{op}, \Cat]$ 
with natural transformations as morphisms.
Given a simplicial object $X$, we denote $X\big([n] \big)$ by $X_n$ 
and for $t : [m] \aw [n]$ 
we denote $X(t)$ by $t^* : X_n \aw X_m$. 

A category $\Cat$ defines a simplicial set $N(\Cat)$ called its nerve, defined by:
\begin{equation} N_n(\Cat) = \Hom( [n], \Cat  ) \end{equation}
An ordered $n$-simplex $\sigma \in N_n(\Cat)$ 
is a covariant functor $\sigma : \big( [n], \leq \big) \aw \Cat$. 
Equivalently, it is a commutative diagram
of $n+1$ objects $\sigma_0, \dots, \sigma_n$ 
with arrows $\sigma_{ij} : \sigma_i \aw \sigma_j$ for all $i < j$.

Given a simplicial set $X: \Simp^{op} \aw \Set$, 
the group of chains $\Z[X] : \Simp^{op} \aw \Ab$ is the 
simplicial abelian group freely
generated by $X$, with:
\begin{equation} \Z_{n}[X] = \bigoplus_{\sigma \in X_n} \Z \cdot e_{\sigma} \end{equation}
and every map $t: [m] \aw [n]$ inducing 
a group morphism $t^* : \Z_n[X] \aw \Z_m[X]$ 
defined by $t^*(e_\sigma) = e_{t^* \sigma}$.
Chain groups thus define a functor $\Z[\,\cdot\,]$ from
simplicial sets to simplicial abelian groups.

For all $n \in \N$ and $0 \leq i \leq n$,
consider the $i$-th face map $\bord^i : [n-1] \aw [n]$ 
defined as the injection of $[n-1]$ whose image misses the $i$-th vertex in $[n]$.
Note that face maps generate all injective maps of $\Simp$ and
satisfy the following fundamental
commutation relations: 
\begin{equation} \bord^i \circ \bord^j = \bord^{j-1} \circ \bord^i \txt{for} i < j \end{equation} 
where the sums of indices $i + j$ and $(j-1) + i$ have reversed parity.

Every simplicial abelian group $G : \Simp^{op} \aw \Ab$ has a 
canonical boundary operator $\bord : G \aw G$
where for all degree $n$, the map $\bord : G_n \aw G_{n-1}$ is defined by:
\begin{equation} \bord = \sum_{i=0}^n (-1)^i \; (\bord^i)^* \end{equation}
The commutation relations of face maps imply that 
$\bord^2 = \bord \circ \bord = 0$. 

Let $G = \Z \big[ \vec{K}\big]$ be the group of chains 
in a simplicial complex $K$. 
If $\sigma$ is an oriented simplex in $K$, 
then $\bord \sigma$ is the oriented boundary of $\sigma$. 
The fundamental equation $\bord^2 = 0$ reflects the geometric fact 
that the boundary of a boundary is empty.

\begin{defn} A {\rm cosimplicial object} in a category $\Cat$ 
    is a functor $Y : \Simp \aw \Cat$.
\end{defn}

Every cosimplicial abelian group $F : \Simp \aw \Ab$ has a 
canonical coboundary operator $d : F \aw F$, defined by the family 
of maps $d^n : F^n \aw F^{n+1}$ with:
\begin{equation} d^n = \sum_{i = 0}^{n + 1} (-1)^i \; (\bord^i)_* \end{equation}
The operator $d$ is called the differential of $F$
and satisfies $d^2 = d \circ d = 0$.

Given a covering $\cal{U}$ of a topological space $\Om$,
its {\v C}ech nerve is the simplicial set: 
\begin{equation} {\check {\cal U}}_n =
\{ \sigma : [n] \aw {\cal U} \st U_{\sigma} = 
\sigma_0 \cap \dots \cap \sigma_n \neq \vide \}
\end{equation}
For every $\sigma \in {\check {\cal U}}_n$ and $t : [m] \aw [n]$, the associated
simplex $t^* \sigma$ in ${\check {\cal U}}_m$
satisfies $\Img(t^* \sigma) \incl \Img(\sigma)$ in $\cal U$ 
so that the intersection $U_{t^* \sigma}$ contains $U_{\sigma}$ in $\Om$.
When $F$ is a sheaf of abelian groups over $\Om$, 
it defines a cosimplicial abelian group $\check F(\cal{U}) : \Simp \aw \Ab$ 
where:  
\begin{equation} \check F^n({\cal U}) = \bigoplus_{\sigma \in {\check {\cal U}}_n }
F(U_\sigma) \end{equation}
For $t : [m] \aw [n]$ the map $t_* : \check F^m({\cal U}) \aw \check F^n({\cal U})$ 
is defined for every $f \in \check F^m({\cal U})$ by:
\begin{equation} (t_*f)_{\sigma} = (f_{t^* \sigma})_{|U_{\sigma}} \end{equation}
and $\check{F}({\cal U})$ 
is called the group of {\v C}ech cochains of $F$ in ${\cal U}$.

\subsection{Homology}

\begin{defn}
A {\rm differential group} $(G, \bord)$ 
is an abelian group $G$ together with an endomorphism 
$\bord : G \aw G$ satisfying $\bord^2 = 0$. 
The morphism $\bord$ is called the {\rm boundary operator}
of $G$. 
\end{defn}

Given a simplicial complex $(\Om, K)$, a fundamental example is given 
by the group of chains $\big( \Z\big[ \vec{K} \big], \bord \big)$.
When $(G, \bord)$ is any differential group,
its boundary operator defines the two following subgroups:
\bi 
\iii a {\it cycle} is an element of $Z(G) = \Ker(\bord)$,
\iii a {\it boundary} is an element of $B(G) = \Img(\bord)$,
\ei
The rule $\bord^2 = 0$ implies that every boundary is a cycle 
and $B(G) \incl Z(G)$. 

\begin{defn} 
The {\rm homology group} of $(G, \bord)$ is 
    the quotient group $H(G) = \Ker(\bord) / \Img(\bord)$.
\end{defn}

More generally, $x, x' \in G$ are 
said homologous when there exists $y \in G$ such that $x' = x + \bord y$,
we then write $x \sim x'$ and 
denote by $[x]$ the class of $x \in G$ for this equivalence relation.
Homology groups consist of the equivalence classes of cycles.

When $(\Om, K)$ is a simplicial complex, the homology of its 
group of chains is denoted by $H(K;\Z)$.

\begin{defn} 
A {\rm morphism of differential groups} $f : (G, \bord) \aw (G', \bord')$ 
is a map of abelian groups
$f : G \aw G'$ such that the following diagram is commutative:
\begin{equation} 
\begin{tikzcd}  
G \dar{\bord} \rar{f} & G' \dar{\bord'} \\
G \rar{f} & G'
\end{tikzcd}
\end{equation}
In particular, $f$ sends $Z(G)$ in $Z(G')$ and $B(G)$ in $B(G')$.
\end{defn}

Differential groups form a category which we denote by $\Ab_\bord$. 
Given a map of simplicial sets $f : X \aw X'$,
the group morphism $f_* : \Z[X] \aw \Z[X']$ commutes with boundary operators 
and chain groups define a covariant functor from 
simplicial sets to $\Ab_\bord$.

The following proposition expresses that 
homology defines a functor $H : \Ab_{\bord} \aw \Ab$,
and we get in particular by left composition with the group of chains, 
a functor $\KS \aw \Ab$.
Functoriality is a fundamental property of homology, as 
it was introduced to yield algebraic invariants of
topological spaces, the homology groups of two 
homeomorphic spaces being isomorphic. 

\begin{prop} 
    A map of differential groups $f : (G, \bord) \aw (G', \bord')$ 
    induces a morphism in homology 
    denoted by $[f] : H(G) \aw H(G')$. 
\end{prop}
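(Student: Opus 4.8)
The plan is to define $[f]$ on homology classes by the rule $[f]([x]) = [f(x)]$ for a cycle $x \in Z(G)$, and then to verify in turn that (i) this assignment makes sense at the level of cycles, (ii) it does not depend on the representative chosen within a homology class, and (iii) it is a morphism of abelian groups. The single algebraic fact powering all three verifications is the commutation relation $f \circ \bord = \bord' \circ f$ recorded in the definition of a morphism of differential groups, so the argument is essentially a matter of chasing that identity through the two quotient constructions $H(G) = \Ker(\bord)/\Img(\bord)$ and $H(G') = \Ker(\bord')/\Img(\bord')$.

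First I would confirm that $f$ respects the cycle and boundary subgroups, as already observed after the definition: if $\bord x = 0$ then $\bord'(f(x)) = f(\bord x) = 0$, so $f(Z(G)) \incl Z(G')$; and if $x = \bord y$ then $f(x) = f(\bord y) = \bord'(f(y))$, so $f(B(G)) \incl B(G')$. In particular $f(x)$ is a cycle in $G'$ whenever $x$ is a cycle in $G$, which is what is needed for the class $[f(x)] \in H(G')$ to be defined in the first place.

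The only substantive step is well-definedness on homology classes. Suppose $x \sim x'$ in $Z(G)$, meaning there is some $y \in G$ with $x' = x + \bord y$. Applying $f$ and using the commutation relation gives $f(x') = f(x) + f(\bord y) = f(x) + \bord'(f(y))$, hence $f(x') \sim f(x)$ and $[f(x')] = [f(x)]$ in $H(G')$. This shows that $[x] \mapsto [f(x)]$ descends to a well-defined map $[f] : H(G) \aw H(G')$. That $[f]$ is a group morphism then follows from additivity of $f$ together with the fact that the group law on homology is induced from that of $G$: one checks $[f]([x] + [x']) = [f(x + x')] = [f(x) + f(x')] = [f(x)] + [f(x')]$.

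I do not anticipate any genuine obstacle, since the proposition is precisely the formal expression that homology is functorial and each step collapses to the hypothesis $f\bord = \bord' f$; the only point requiring a little care is to keep the two distinct equivalence relations (on $G$ and on $G'$) cleanly separated so that the passage $x' = x + \bord y \leadsto f(x') = f(x) + \bord' f(y)$ is read in the correct group at each stage.
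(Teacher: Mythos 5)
Your proof is correct and follows essentially the same route as the paper's: the paper's one-line argument that $f$ sends $Z(G)$ to $Z(G')$ and $B(G)$ to $B(G')$, so that $Z(G) \aw Z(G')/B(G')$ factors through the quotient by $B(G)$, is exactly the well-definedness check you carry out explicitly. Your version simply unpacks the factorization into the element-level computation $f(x + \bord y) = f(x) + \bord' f(y)$ and adds the (routine) verification of additivity.
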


\begin{proof}
    A morphism $f$ sends $Z(G)$ to $Z(G')$ and $B(G)$ to $B(G')$, 
    hence $f : Z(G) \aw Z(G')/B(G')$ factors through $Z(G) \aw Z(G)/B(G)$.
\end{proof}

\begin{defn} 
    Two maps $f,f' : (G, \bord) \aw (G', \bord')$ are {\rm homotopic}
    when there exists a map $h : G \aw G'$ of abelian groups such 
    that $f - f' = \bord' h + h \bord$.
\end{defn}

We write $f \sim f'$ when $f$ and $f'$ are homotopic. 
When $h$ is a homotopy from $f$ to $f'$, 
the sum of the two outer paths coincides 
with the inner arrow of the following diagram:
\begin{equation} 
\begin{tikzcd}[column sep=large]
    \, & G' \dar{\bord'} \\
    G \rar["{f-f'}" description] \urar{h} \dar{\bord} & G' \\
    G \urar[swap]{h} & \, 
\end{tikzcd}
\end{equation}
The homotopy relationship vanishes in homology. 

\begin{prop} 
When $f, f' : (G, \bord) \aw (G', \bord')$ are homotopic, 
their induced maps in homology $[f], [f'] : H(G) \aw H(G')$ coincide. 
\end{prop}

\begin{proof} 
    Let $h : G \aw G'$ denote a homotopy between $f$ and $f'$, 
    so that $f - f'  = \bord' h + h \bord$. \\
    For all $z \in Z(G)$ we have $f(z) - f'(z) = \bord' h(z) \in B(G)$,
    so that $[f(z)] = [f'(z)]$ in $H(G')$.
\end{proof}

When $(G', \bord)$ is a subgroup of $(G, \bord)$ 
with $\bord (G') \incl G'$, the boundary operator of $G$
factors to the quotient onto $G/G'$ 
where it induces a boundary $\bord'$.
The relative homology 
of the pair $(G,G')$ is defined as 
$H(G, G') = H(G / G')$
More precisely, let us denote by:
\bi
\iii $Z(G,G') = \bord^{-1}(G')$ the set of relative cycles,
\iii $B(G,G') = G' + \bord G$ the set of relative boundaries.
\ei 
Then $H(G,G')$ is the quotient of $Z(G,G')$ by $B(G,G')$. 
Noting that $\bord$ sends $Z(G,G')$ to $Z(G')$ and
$B(G,G')$ to $B(G')$, there is a 
canonical morphism in homology $[\bord] : H(G,G') \aw H(G')$.

The injection $f : G' \aw G$ and the projection $g : G \aw G/G'$ 
induce maps in homology:
\begin{equation} \begin{tikzcd}
    H(G') \rar{[f]} & H(G) \rar{[g]} & H(G,G') 
\end{tikzcd} \end{equation} 

\begin{prop}
   Given a subgroup $(G', \bord)$ of $(G, \bord)$, 
    the homology sequence of the pair $(G,G')$ is exact:
    \begin{equation} \begin{tikzcd} 
         & H(G) \drar{[g]} & \\
        H(G') \urar{[f]} & & H(G,G') \ar[ll, swap, "{[\bord]}"] 
    \end{tikzcd}
    \end{equation}
\end{prop}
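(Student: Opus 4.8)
The plan is to unwind the definitions of the three homology groups together with their induced maps, and then verify exactness by elementary diagram chases at each of the three vertices of the triangle. Because $(G, \bord)$ carries no grading, the usual long exact sequence of a pair collapses into the exact triangle displayed above, and $[\bord] : H(G,G') \aw H(G')$ plays the role of the degree-lowering connecting homomorphism. Throughout I write $Z(G,G') = \bord^{-1}(G')$ for the relative cycles and $B(G,G') = G' + \bord G$ for the relative boundaries, as in the text, and recall $Z(G) = \Ker(\bord)$, $B(G) = \Img(\bord)$, with the analogous subgroups for $G'$.

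First I would confirm that $[\bord]$ is well defined; this is asserted in the text, but I would re-derive it in the chase since it is reused below. For $x \in Z(G,G')$ one has $\bord x \in G'$ and $\bord(\bord x) = 0$, so $\bord x \in Z(G')$; and if $x \in B(G,G')$, writing $x = g' + \bord y$ with $g' \in G'$, then $\bord x = \bord g' \in B(G')$, so $[x] \mapsto [\bord x]$ descends to homology. The three compositions around the triangle vanish for the same kind of reason: $[g]\circ[f]$ carries a $G'$-cycle into $B(G,G')$ by definition of the relative boundaries; $[\bord]\circ[g]$ carries a genuine cycle $z$ to $[\bord z] = 0$; and $[f]\circ[\bord]$ carries $[x]$ to $[\bord x]$, a genuine boundary in $G$, hence zero in $H(G)$. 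This gives $\Img \incl \Ker$ at each vertex.

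Next I would establish the reverse inclusions one vertex at a time. \textbf{At $H(G)$:} if $z \in Z(G)$ lies in $\Ker[g]$, then $z \in B(G,G')$, so $z = g' + \bord y$; applying $\bord$ and using $\bord z = 0$ forces $\bord g' = 0$, whence $g' \in Z(G')$ and $[z] = [g']$ exhibits $[z] \in \Img[f]$. \textbf{At $H(G,G')$:} if $x \in Z(G,G')$ lies in $\Ker[\bord]$, then $\bord x = \bord g'$ for some $g' \in G'$, so $z := x - g'$ satisfies $\bord z = 0$ and $[x] = [z] = [g][z]$, giving $\Ker[\bord] \incl \Img[g]$.

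\textbf{At $H(G')$} is the step I expect to demand the most care, since it is where the connecting map must be inverted. Suppose $z' \in Z(G')$ lies in $\Ker[f]$, i.e. $z' \in B(G) = \bord G$, say $z' = \bord y$ with $y \in G$. The key observation is that $y$ is automatically a relative cycle: $\bord y = z' \in G'$ means $y \in \bord^{-1}(G') = Z(G,G')$, so $[y]$ makes sense in $H(G,G')$ and $[\bord][y] = [\bord y] = [z']$, placing $[z']$ in $\Img[\bord]$. This preimage argument — recognising that any $G$-bounding of a $G'$-cycle is itself a relative cycle — is the one genuinely non-formal point, whereas the remaining inclusions are bookkeeping. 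Assembling the three verifications yields exactness of the triangle.
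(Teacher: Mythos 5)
Your proof is correct and follows exactly the route the paper intends: the printed proof merely lists the three equalities $\Img([\bord])=\Ker([f])$, $\Img([f])=\Ker([g])$, $\Img([g])=\Ker([\bord])$ as an exercise, and your diagram chase supplies precisely the verifications left to the reader, including the one genuinely non-formal point that a $G$-bounding of a $G'$-cycle lies in $\bord^{-1}(G')=Z(G,G')$. Nothing is missing.
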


\begin{proof} Prove that:
    \bi 
    \iii $\Img([\bord]) = \Ker([f])$,
    \iii $\Img([f]) = \Ker([g])$,
    \iii $\Img([g]) = \Ker([\bord])$.
    \ei
\end{proof}

When $(G, \bord)$ and $(G', \bord')$ are graded differential groups, 
the endomorphisms from $G$ to $G'$ form a graded
differential complex $\big( L(G,G'), \div \big)$ with:
\begin{equation} L_n(G,G') = \prod_k L(G_k, G_{k+n}) \end{equation}
where the boundary of $f \in L_n(G,G')$ is defined by:
\begin{equation} \div(f) = \bord' f - (-1)^n f \bord \end{equation}
In particuar, a $0$-cycle is a morphism of differential groups:
\begin{equation} \div(f) = 0 \quad \eqvl \quad \bord' f = f \bord \end{equation}
While a $0$-boundary is homotopic to zero: 
\begin{equation} \div(h) = \bord' h + h \bord \end{equation}

\chapter{Statistical Systems}

    This chapter introduces the theoretical framework for a local approach to statistics, 
where the term {\it statistical system} 
may be understood in a double sense. 
In physics, this commonly refers to a pair $(E_\Om, H_\Om)$ 
where $E_\Om = \prod_{i \in \Om} E_i$
is the configuration space of the system 
and $H_\Om: E_\Om \aw \R$ is the hamiltonian or energy function, inducing
the statistics as a function of temperature\footnote{
    The probability of observing a configuration $x_\Om \in E_\Om$ 
    is proportional to the Gibbs density $\exp(-H_\Om(x_\Om) / k_B T)$, 
    where $k_B$ denotes the Boltzmann constant. 
    and $T$ is the equilibrium temperature.
    Computing the normalisation factor $Z_\Om$ requires to integrate 
    the Gibbs density over the whole configuration space $E_\Om$, 
    which is typically intractable. 
}. 
Our localisation procedure on a hypergraph\footnote{
    A {\it hypergraph} $X \incl \Part(\Om)$ is just a collection of subsets of $\Om$.
    A graph is a particular case of hypergraph 
    whose elements consist of points (vertices) and pairs (edges). 
    The choice of $X$ reflects a splitting of $\Om$ into 
    intersecting chunks: 
    the coarser the covering, the more precise the local approximations. 
} $X \incl \Part(\Om)$
also leads to an 
{\it inductive system} $(A_\aa)_{\aa \in X}$ of local algebras, 
where $X$ should only be chosen coarse enough 
for the hamiltonian to decompose
as a sum of local interactions potentials $(h_\aa)_{\aa \in X}$.

With an emphasis on functoriality, 
section 1 presents natural constructions 
for global statistics. 
Our approach will focus on the algebra
of observables $A_\Om$,
formed by functions on the configuration space $E_\Om$.  
Probability measures shall then be recovered by duality, 
as positive and normalised linear forms on $A_\Om$. 
This line of thought, common in the field of operator algebras, 
somehow differs from the usual probabilistic definitions. 
It has the considerable advantage of unifying classical statistics 
with quantum states. 

The complex $A_\bullet(X)$ of local observable fields, defined in section 2, 
will serve to parameterise\footnote{
    It is actually the differential sequence $A_0(X) \wa A_1(X) \wa \dots $ which 
    will yield a projective resolution of the target subspace of $A_\Om$, 
    isomorphic to the first homology group $A_0(X) / \delta A_1(X)$ 
    of the whole complex $A_\bullet(X)$. 
    The global hamiltonian $H_\Om \in  A_\Om$ 
    hence actually defines a homology class 
    $[h]  = h + \delta A_1(X) \incl A_0(X)$ of
    interaction potentials (see theorem \ref{H0}).
} the low-dimensional subspace of $A_\Om$ in which the global 
hamiltonian lies. 
Its construction essentially lifts the functor 
$A:X^{op} \aw \Alg$ of local observables 
to a functor on the nerve $\check A : N(X)^{op} \aw \Alg$,
associating to any ordered chain $\aa \cont \dots \cont \cc$ a copy of $A_\cc$.
The simplicial structure of the nerve will provide $A_\bullet(X)$ with a 
boundary operator $\div$ and whose action 
$A_1(X) \aw A_0(X)$, discrete analog of a divergence, 
will describe the dynamic of message-passing algorithms as diffusion equations.
Section 2 describes such constructions in their generality\footnote{    
    Thanks to investigations by D. Bennequin, 
    we became aware that the construction of 
    a complex by the same lifting to the nerve 
    (in an abstract setting) was already considered by Grothendieck and Verdier 
    in SGA-4-V \cite{SGA-4-V, Moerdijk}. 
}. 

Specialising to the 
setting where each $A_\aa = \R^{E_\aa}$ is the algebra of functions
on the cartesian product $E_\aa = \prod_{i \in \aa} E_i$, 
we compute the homology of $A_\bullet(X)$ in section 3. 
Its acyclicity shall come as a consequence of the
{\it interaction decomposition} theorem \ref{int-decomposition}. 
This fundamental result for statistics is recalled and proved
through harmonic analysis, following an original proof by Mat{\v u}s, 
before relating homology classes of potentials 
$[h] \incl A_0(X)$ with global hamiltonians $H_\Om \in A_\Om$ 
in theorem \ref{H0}.

\newpage

\section{Global Statistics}

The main purpose of this short and informal section is to 
introduce the fundamental structures 
involved with statistics, along with their notations:
\begin{center}
{\renewcommand{\arraystretch}{1.6} 
    \begin{tabular}{| C{2.2cm} | C{2.2cm} | C{2.2cm} | C{2.2cm} |}

        \hline
        Microstates & Observables & Measures  & States \\
        \hline
        $E$         & $A = C(E)$  & $A^*$     & $\Delta \incl A^*$ \\
        \hline
        $\Top$      & $\Alg$      & $\Vect$   & ${\bf Conv}$  \\
        \hline
    \end{tabular}
}
\end{center}
In this picture, most columns are related by functors. 
In particular, the space $\Delta$ of statistical states
can be functorially defined from a set of microstates $E$ 
or from a C*-algebra of observables $A$. 

A fundamental component of statistical physics is the Gibbs state 
map $\rho : A \aw \Delta$ defined by:
\begin{equation} \rho(H) = \big[ \e^{-H} \big] \end{equation}
where $H$ computes the energy of the system and the bracket denotes normalisation. 
Both theoretical and computational problems with the normalisation factor 
$ Z(H) = \int_E \e^{-H}$ arise 
when $E$ gets large. It involves a computation 
of exponential complexity in the dimension of $E$, while 
the study of phase transitions requires to let the number of atoms go to infinity.

One may thus be lead to give up global observations, 
and decide that only small enough 
regions of the global system may be simultaneously observed. 
This approach underlies the present work and will rely heavily on functoriality.
What follows could then be thought of as a description of 
local models for statistics, which one may join consistently 
to cover larger systems. 
This localisation procedure will still efficiently describe
collective phenomena, when performed on a covering which is 
coarse enough compared to the range of interactions. 

We also hope that the following general discussion 
may give perspective on  possible extensions of the present work 
to the continuous and quantum settings. See Meyer \cite{Meyer-86} 
for a good reference on quantum statistics, although our approach was 
much more inspired by Souriau \cite{Souriau-QG}.

\subsection{Microscopic States}

 In classical probability theory, one starts with a 
measurable set $E$ describing all possible outcomes of an experiment. 
Consider for instance a physical system of 
$N$ atoms, labelled by $i, j, \dots$, each of which 
having degrees of freedom in $E_i$. 
A configuration of the full system is 
given by an element of the cartesian product:
\begin{equation} E = \prod_{i=1}^N E_i \end{equation} 
A configuration is also called a microscopic state of the system. 

In what follows, we shall keep the notation $E$ for configuration spaces. 
In most applications covered by this thesis, it is enough 
to view $E$ as an object of the category $\Set_{\rm f}$ of finite sets.
However, some of our constructions may gain generality by considering 
topological
spaces in $\Top$. 

Starting with a set of microscopic states is a classical point of view, 
although somehow artificial and arbitrary. 
It is only valid at high enough temperatures 
as quantum mechanics leads to give up the 
fiction of microscopic states. 
Classical probabilities and quantum states will both be naturally 
described by the states of an algebra of observables.

\subsection{Observables}

In quantum mechanics, one starts with a C*-algebra\footnote{
A C*-algebra is an algebra over $\C$ with 
$(i)$ a continuous and complete norm $|\cdot|$ and
$(ii)$ an antilinear involution $*$ such that 
$|a^* a| = |a|^2$. 
}
 $A$ of observables, 
describing all possible linear combinations of 
measurements that may be performed on a system. 
Classical statistics also fit very nicely in this framework,
by restricting oneself to commutative algebras of observables. 

Given a topological space $E$ describing classical microscopic states,
we let: 
\begin{equation} A = C(E) \end{equation} 
denote the commutative algebra of continuous and bounded real functions over $E$, 
equipped with the infinite norm $|| u ||_\infty = \sup_{x \in E}|u(x)|$.  
A classical observable is just a function of the microscopic states. 
This assignment defines a contravariant functor $C : \Top^{op} \aw \Alg$, 
as any continuous map $\ph : E \aw E'$ has a
pull-back $\ph^* : A' \aw A$ defined by:
\begin{equation} \big( \ph^*u \big)(x') = u\big(\ph(x)\big) \end{equation}
In most of this work, the algebra of observables $A=C(E)$ will be commutative
and given by such a procedure. We however emphasize
that once given the algebra, one may very well forget about the underlying set.

We will be mostly interested in the finite setting where: 
\begin{equation} A = \R^E \end{equation} 
is a finite dimensional vector space, isomorphic to the multiplicative 
Lie group $G = (\R_+^*)^E$ of strictly positive observables 
under the expontential map, i.e. 
could be viewed as the abelian Lie algebra of $G$. 
Restricting to finite configuration spaces will leave aside
most technical difficulties, greater generality is only 
mentioned here for the sake of perspective.

At the quantum level, the prototype of a C*-algebra is given by a
 Von Neumann algebra: 
\begin{equation} A \incl B(\cal{H}) \end{equation}
of bounded operators over a complex Hilbert space $\cal H$, 
with complex adjunction $a^* = \bar{a}^t$ as involution,
although most constructions can be carried on the algebra 
most naturally, without any reference to a particular Hilbert space. 

Every C*-algebra $A$ has a positive cone $A^+$ defined by:
\begin{equation} a \geq 0 \quad\eqvl\quad \exists b \in A \txt{with} b^* b = a \end{equation} 
Any positive element $a \geq 0$ is self-adjoint and satisfies $a^* = a$;
its spectrum is contained in $\R^+$. 
The above would also describe positive functions of $C(E)$.
Positivity will be a fundamental concept when defining the states of 
the algebra.

\subsection{Linear Forms and Measures}

Given an algebra with a continuous norm $A$, 
its topological dual $A^*$ is the vector space of continuous linear forms on $A$.
The topological dual defines a contravariant functor $\Alg^{op} \aw \Vect$ 
as any linear map $T : A \aw A'$ has an 
adjoint map $T^*: A'^* \aw A^*$  defined for all  
$\lambda \in A'^*$ and $a \in A$ by:
\begin{equation} \croc{T^*\lambda}{a} = \croc{\lambda}{T a} \end{equation}
The duality comes from the underlying vector space
and is common enough not to be discussed. 
We only briefly review some classical constructions and notations. 

When $A = C(E)$ is the real algebra 
of continuous and bounded functions over $E$, its 
dual $A^*$ is the space of Borel measures of finite mass on $E$,
equipped with the $L^1$-norm: 
\begin{equation} \croc{\lambda}{f} = \int_{x \in E} f(x) \cdot \lambda(dx) \end{equation}
for all $\lambda \in A^*$ and $f \in A$,  with
$\big| \croc{\lambda}{f} \big| \leq ||f||_\infty \cdot ||\lambda||_1$.

A continuous map $\ph: E \aw E'$ induces a map of algebras
 $\ph^*: A' \aw A$ by pull-back. 
Its adjoint map is the push-forward of measures $\ph_* : A^* \aw A'^*$, 
defined for every $\lambda \in A^*$ and every measurable
subset $S' \incl E'$ by: 
\begin{equation} 
(\ph_* \lambda)(S') = \int_{x \in \ph^{-1}(S')} \lambda(dx) 
\end{equation}
When $E$ is finite, 
the push-forward of a measure $\lambda \in A^*$
is given by its weight on each point $x' \in E'$:
\begin{equation} (\ph_* \lambda)(x') = \sum_{\ph(x) = x'} \lambda(x) \end{equation}
In applications, the set maps we will consider are 
projections of the form $\ph : E_1 \times E_2 \aw E_1$. 
The pushforward of $\ph$ is then called the marginal projection 
on $E_1$, or partial integration along $E_2$.

When $A = B(\cal H)$ is the algebra of bounded operators on a Hilbert space, 
equipped with a continuous trace operator 
$\Tr: A \aw \C$, one may define the hermitian 
scalar product of $a$ and $b$ in $A$ by $\Tr(a^*b)$. 
A linear form $\lambda \in A^*$, that is also continuous for the hermitian 
norm induced, may be represented by an element of $B(\cal H)$ such that:
\begin{equation} \croc{\lambda}{a} = \Tr(\lambda^* a) \end{equation}
This point of view is the most commonly used in quantum statistics.

\subsection{Statistical States}

A state of a unital involutive algebra $A$ is a linear form $\om \in A^*$
satisfying the two following axioms:
\begin{enumerate}[label=(\roman*)]
\item $\om(a^* a) \geq 0$ for all $a \in A$ (positivity)
\item $\om(1) = 1$ (normalisation)
\ee
The states of $A$ form a convex subset of linear forms $\Delta \incl A^*$.

When $A = C(E)$ is commutative and $A^*$ is the space of finite mass measures
on $E$, the above axioms define positive measures of mass 1, {\it i.e.} 
probability densities on $E$ and we have $\Delta = \Prob(E)$.
According to Gelfand's theorem, any commutative C*-algebra is 
isomorphic to a complex algebra $C(E, \C)$ 
of continuous bounded functions over a compact space $E$,
called the spectrum of $A$.

When $A$ is a generic C*-algebra, the Gelfand-Naimark-Segal construction
associates to each state $\om \in \Delta$ a Hilbert space representation 
${\cal H}_\om$ and a unit cyclic vector $\psi_\om \in {\cal H}_\om$ 
such that for all $a \in A$:
\begin{equation} \om(a) = \croc{\psi_\om}{a \cdot \psi_\om} \end{equation}
In quantum mechanics, this expression 
traditionally defines the mean value of a self-adjoint observable $a$
when the system is in the state $\psi_\om \in {\cal H}_\om$. 
For every self-adjoint $a$, the spectral projections of $\psi_\om$ define 
a probability distribution on the spectrum of $a$. 
This may be viewed as a consequence of Gelfand's theorem,
as the commutative C*-algebra generated by $a$ and $a^*$ is
isomorphic to $C({\rm Sp}\;a)$.

When $A \incl B({\cal H})$ is already represented on a Hilbert space 
and is equipped with a trace operator, 
operators of $B({\cal H})$ are mapped to linear forms. 
Any positive operator $\rho \in B({\cal H})_+$ such that 
$\Tr(\rho) = 1$ then
defines a state of $A$ by letting for all $a \in A$: 
\begin{equation} \rho(a) = \Tr(\rho a) \end{equation}
This picture can lead to confusion with the previous one, 
as $\cal H$ is not the GNS representation of $\rho$. 
The operator $\rho$ may however be viewed as a vector of 
the Hilbert space ${\cal H} \otimes {\cal H}^*$, 
of which the GNS representation ${\cal H}_\rho$ is a subspace.
In statistical quantum mechanics, $\rho$ is called the 
{\it density matrix}.

\section{Systems} 

This section introduces the differential and module structures on which relies 
the present work. 
The theory will be treated abstractly to keep as much generality as possible,
and deals with what one may call systems of algebraic structures, 
{\it i.e.} a particular type of functors. 

One should still read the theory with the contents of
the previous section in mind, to which it aims to be applied. 
The main idea is to localise the previous structures from a global 
set of variables $\Om = \{ i, j,k, \dots \}$ to a covering 
of $\Om$ by smaller regions $X = \{ \aa, \bb,\cc, \dots \} \incl \Part(\Om)$.
This leads to the definition of local configuration spaces, 
local algebras of observables, {\it etc.} 
related by morphisms every time a region is contained in another. 
Giving $(X, \cont)$ a category structure by agreeing 
that a unique arrow $\aa \aw \bb$ exists whenever $\aa$ contains $\bb$,
we will get functors\footnote{
    Functors with a partial order as source category are often called
    systems in the literature, as they were considered long before the categorical
    language became common use.
}
from $X$ to $\Set$, $\Alg$, {\it etc.}

\begin{center}
    \begin{tabular}{ccc}

\raisebox{-.5\height}{
    \includegraphics[width=0.25\textwidth]{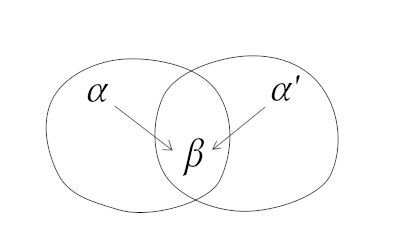}
}
        & 
        \hspace{1cm}
        &
$\bcd 
A_\aa & & A_{\aa'}  \\
& A_\bb \ular{j_{\aa\bb}} \urar[swap]{j_{\aa'\bb}} &
\ecd$

        \end{tabular}
\end{center}

The main result of this section is that we may define a chain complex
$\big( A_\bullet(X) , \div \big)$ of observables. 
Its boundary operator $\div$ will play a crucial role in describing 
the Lagrange multipliers of the cluster variation method and in 
defining transport equations that generalise belief propagation.
This construction was already considered 
by Grothendieck and Verdier 
under the name of {\it canonical projective resolution for presheaves} 
\cite{SGA-4-V, Moerdijk}. 
Their motivations having been more abstract, 
aiming at the unification of all known homology and cohomology theories, 
we believe the present work has the benefit 
of providing a simple and concrete application of this complex. 
The interaction decomposition theorem \ref{int-decomposition}
will also allow us to clarify 
the structure of homology groups in our setting. 

\subsection{Systems of Sets}

Given a numerable set of atoms $\Om$ and a configuration space $E_i$ for all 
$i \in \Om$, let:
\begin{equation} E_\aa = \prod_{i \in \aa} E_i \end{equation}
denote the configuration space of $\aa \incl \Om$. 
There is a projection $\pi^{\bb\aa} : E_\aa \aw E_\bb$ 
for every $\aa \cont \bb$,
forgetting the
state of atoms outside $\bb$. 
This will consist of our fundamental example of a projective system of sets,
for any collection of regions $X \incl \Part(\Om)$. 
\begin{equation} \bcd
E_\aa \drar[swap]{\pi^{\bb\aa}} & & E_{\aa'} \dlar{\pi^{\bb\aa'}} \\
    & E_\bb & 
\ecd \end{equation}
When $X$ covers $\Om$, the system efficiently keeps all the available 
information, as the global configuration space can  
be recovered as the projective limit $E_\Om = \lim_{\aa \in X} E_\aa$.

\begin{defn} 
    A {\rm system of sets} $E$ over a partial order $X$ is 
    a covariant functor $E : X \aw \Set$.
    Denoting by $\ph_* : E_\aa \aw E_\bb$ the map induced
    by $\ph : \aa \aw \bb$ in $X$, a system is said:
    \bi
    \iii {\rm injective} when $\ph_*$ is an injection for all $\ph$.
    \iii {\rm projective} when $\ph_*$ is a surjection for all $\ph$,
    \ei
The functor category $[X, \Set]$ of systems over $X$
has natural transformations $\eta : E \aw E'$ as morphisms. 
\end{defn}

A functor $t : X \aw X'$ induces 
a pull-back functor $t^* : [X', \Set] \aw [X, \Set]$
defined by $t^* E' = E' \circ t$. 
This allows to compare the categories of systems over different partial orders 
$X$ and $X'$ and to define a global category of systems 
of sets over an arbitrary partial order.

\begin{defn}
    We denote by $\{ \Set \}$ the category of systems of sets, with: 
    \bi
    \iii objects $(X,E)$ where $X$ is a partial order
    and $E$ is a system of sets over $X$,
    \iii morphisms $(t, \eta) : (X,E) \aw (X', E')$ 
    where $t: X \aw X'$ is a functor and
    $\eta : E \aw t^* E'$ is a natural transformation.
    \ei
\end{defn}

We introduce the notation $\{ \Set \}$ 
to avoid confusion with the larger category of functors $[-, \Set]$
as the source category $X$ is restricted to partial orders.
As a subcategory of the latter,
it should be thought of in the same way,  
and the partial order hypothesis
will have no influence until the next section.

{\bf Examples.}
\begin{enumerate}[itemsep=3px]
\item 
A single set $E$ is a system over the point 
category $\{ \bullet \}$. 

\item 
The restriction of a system over $X$
to a subcategory $Y \incl X$ is naturally mapped into the original system. 
This provides with a trivial example of morphism in $\{\Set\}$.

\item Given an equivalence relation $\sim$ in $X$, any 
system $E$ over $X$ induces a system $\bar E$ over the quotient 
space $\bar X = (X / \sim)$ defined by: 
\begin{equation} \bar E_{[\aa]} = \bigsqcup_{\aa' \sim \aa} E_{\aa'} \end{equation}
Denoting by $p: X \aw \bar X$ the quotient map, 
the natural transformation from $E$ to $p^*\bar E$ is canonically
defined by inclusion of $E_\aa$ in the disjoint union $\bar E_{[\aa]}$.
\end{enumerate}

\subsection{Systems of Abelian Groups}

The category $\{ \Ab \}$ of abelian group systems is 
defined by restricting to functors $G : X \aw \Ab$.
To such a system $G$, we shall associate
chain and cochain complexes 
denoted by 
$\big(G_\bullet(X), \div \big)$ and $\big(G^\bullet(X), d\big)$
respectively. 
Their construction extends $G$ to a functor on the nerve\footnote{
The nerve of a category is defined in paragraph 1.3.2.
}
of $X$ 
and in the following we denote by $\bar \aa \in N_p(X)$ a $p$-chain 
$\aa_0 \aw \dots \aw \aa_p$ in $X$.

Consider the contravariant functor $\hat G : N(X)^{op} \aw \Ab$ defined by 
$\hat G_{\bar \aa} = G_{\aa_0}$ for all $\bar \aa \in N_p(X)$. 
For every subchain $\bar \bb$ of $\bar \aa$, the map 
$\hat G_{\bar \aa} \aw \hat G_{\bar \bb}$ is induced by 
$G_{\aa_0} \aw G_{\bb_0}$ as $\aa_0 \aw \bb_0$.
The simplicial set structure of $N(X)$ thus makes 
$\hat G$ a simplicial abelian group, and $\hat G$ defines 
a chain complex $G_\bullet(X)$ equipped with 
a boundary operator $\div : G_{n+1}(X) \aw G_n(X)$, where:
\begin{equation} G_n(X) = \prod_{\bar \aa \in N_n(X)} \hat G_{\bar \aa} \end{equation}

Reciprocally, a covariant functor
$\check G : N(X) \aw \Ab$ is defined by letting 
$\check G_{\bar \aa} = G_{\aa_p}$ 
for $\bar \aa \in N_p(X)$.
When $\bar \bb$ is a subchain of degree $k \leq p$ of $\bar \aa$, 
we have a map $\check G_{\bar \bb} \aw \check G_{\bar \aa}$ 
as $\bb_k \aw \aa_p$.
Hence $\check G$ is a cosimplicial abelian group
and defines a cochain complex $G^\bullet(X)$ 
with a differential $d: G^n(X) \aw G^{n+1}(X)$, 
where: 
\begin{equation} G^n(X) = \prod_{\bar \aa \in N_n(X)} \check G_{\bar \aa} \end{equation}

Dual constructions of course arise when $G: X^{op} \aw \Ab$ is a cosystem
over $X$. In this case we still let $\hat G_{\bar \aa} = G_{\aa_0}$ 
and $\check G_{\bar \aa} = G_{\aa_p}$ to define functors 
$\hat G : N(X) \aw \Ab$ and $\check G : N(X)^{op} \aw \Ab$. 
Applications will involve both covariant and contravariant functors of abelian 
groups but their extension to the nerve 
will mostly be done through $\check G$. 
The following table might be useful:
\begin{center}
{\renewcommand{\arraystretch}{1.6} 
    \begin{tabular}{| C{3cm} | C{2cm} | C{2cm} |}
        \hline
        *   & $\big( G_\bullet(X), \div \big)$ 
            & $\big( G^\bullet(X), d \big)$ \\
        \hline
        $G: X \aw \Ab$  & $\hat G$  & $\check G$ \\
        \hline
        $G : X^{op} \aw \Ab$ & $\check G$ & $\hat G$ \\
        \hline
    \end{tabular}
}
\end{center}

{\bf Fundamental Example.}

\begin{enumerate}[itemsep=3px]
    \item When $E : X \aw \Set$ is a system of sets over $(X, \cont)$, 
    it defines a cosystem of algebras $A : X^{op} \aw \Alg$ 
    by letting $A_\aa = \R^{E_\aa}$ for all $\aa \in X$. 
    In the chain complex $(A_{\bullet}(X), \div)$, 
    a 1-chain $\ph$ is defined by a collection of local observables 
    $\ph_{\aa\bb} \in \R^{E_\bb}$ and its boundary $\div \ph$ given by: 
    \begin{equation} \label{div} 
        \div \ph_\bb(x_\bb) = \sum_{\aa' \contst \bb} \ph_{\aa' \bb}(x_\bb) 
        - \sum_{\bb \contst \cc'} j_{\bb\cc'}(\ph_{\bb \cc'})(x_{\bb}) 
    \end{equation} 
        where $j_{\bb\cc}(\ph_{\bb\cc}) \in \R^{E_\bb}$  
        denotes the pullback of $\ph_{\bb\cc} \in \R^{E_\cc}$ 
        under the map $x_\bb \in E_\bb \mapsto x_\cc \in  E_\cc$. 
        For $\ph \in A_{n+1}(X)$, its boundary $\delta \ph \in A_n(X)$ 
        is similarly given by: 
\begin{equation}
    \ba{lllcl} \tag{2.20.$n$} \label{boundary} 
        \delta \ph_{\bb_0 \dots \bb_n}(x_{\bb_n})
        & = &  &\disp \sum_{\aa_0 \contst \bb_0} 
            &  \disp \ph_{\aa_0 \bb_0 \dots \bb_n}(x_{\bb_n}) \\
        & \disp + \sum_{k = 1}^{n}  
            & (-1)^k 
            & \disp \sum_{\bb_{k - 1} \contst \aa_{k} \contst \bb_{k}} 
            & \ph_{\bb_0 \dots \bb_{k - 1} \aa_k \bb_k \dots \bb_n}(x_{\bb_n}) \\
        & + & (-1)^{n+1} 
            & \disp \sum_{\bb_n \contst \aa_{n+1}} 
            & \ph_{\bb_0 \dots \bb_n \aa_{n+1}}(x_{\aa_{n+1}})   
    \ea 
\end{equation} 
    where $x_{\aa_{n+1}}$ denotes\footnote{
        When $E_\aa = \prod_{i \in \aa} E_i$, 
        the forgetful map $x_\aa \mapsto x_\bb$ will simply 
        be suggested by lowering the subscript $\aa$ to $\bb \incl \aa$. 
        Functions on $E_\aa$ that do not depend on 
        $x_{\aa \setminus \bb}$ belong to the subalgebra 
        $\R^{E_\bb} \incl \R^{E_\aa}$. 
        The inclusion map $j_{\bb\aa} : \R^{E_\bb} \aw \R^{E_\aa}$ 
        is a restriction of the identity and 
        will therefore be kept implicit in our notation.
    } the image of $x_{\bb_n}$ 
    for every $\bb_n \cont \aa_{n+1}$.

\item When $E : X \aw \Set$ and $A = \R^E$, duality defines 
    a system of vector spaces $A^*: X \aw \Vect$. 
    In the cochain\footnote{
        Although it is a cochain complex, 
        we write degrees as indices for $A_\bullet^*(X)$ 
        as it is the dual vector space of $A_\bullet(X)$. 
    }
    complex $(A^*_\bullet(X), d)$, 
    a 0-cochain $q$ is defined by a collection of linear forms
    $q_\aa \in L(\R^{E_\aa}, \R)$ and 
    its differential $dq$ given by: 
    \begin{equation}  \label{d} 
        dq_{\aa\bb} = q_\bb - \Sigma^{\bb\aa}(q_\aa)
    \end{equation}
        where $\Sigma^{\bb\aa}(q_\aa) \in L(\R^{E_\bb}, \R)$ 
    denotes the pushforward of $q_\aa \in L(\R^{E_\aa}, \R)$ 
    by the map $E_\aa \aw E_\bb$.
    For $\psi \in A^*_n(X)$, its differential $d\psi \in A_{n+1}(X)$ 
    is similarly given by, in coordinates:
    \begin{equation} \tag{2.21.$n$} \label{differential}
    \ba{llll}
    d \psi_{\aa_0 \dots \aa_{n+1}}(x_{\aa_{n+1}})
    & = & 
        & \psi_{\aa_1 \dots \aa_{n+1}}(x_{\aa_{n+1}})  \\
    & \disp + \sum_{k = 1}^{n}
        & (-1)^k 
        & \psi_{\aa_0 \dots \aa_{k - 1} \aa_{k+1} \dots  \aa_{n+1}}(x_{\aa_{n+1}}) \\
    & + & (-1)^{n+1} 
            & \disp \sum_{x'_{\aa_n} \mapsto x_{\aa_{n+1}}}
            \psi_{\aa_0 \dots \aa_{n}}(x'_{\aa_n})   
    \ea 
    \end{equation}
    the last sum running over preimages $x'_{\aa_n} \in E_{\aa_n}$ 
    of $x_{\aa_{n+1}} \in E_{\aa_{n+1}}$, both sets assumed finite\footnote{
        Otherwise replace with an integral over the preimage of $\{x_{\aa_{n+1}}\}$ 
        in $E_{\aa_n}$ under the right integrability assumptions. 
    }. 

\end{enumerate}

The difference of incoming and departing fluxes 
defining $\div \ph$ in (\ref{div}) 
recalls the classical {\it divergence} 
operator of differential geometry.  
Acting on a smooth vector field  
$\vec \ph : \R^3 \aw \R^3$, 
the divergence is intrinsically related to the
fundamental {\it Gauss formula}: 
\begin{equation} \label{gauss-geometry}
\int_V {\rm div}(\vec \ph) \, dv = \int_{\dr V} \vec \ph \cdot \vec n \,ds 
\end{equation}
Integrating ${\rm div}(\vec \ph)$ on 
a volume $V \incl \R^3$ yields the outbound\footnotemark{} flux of $\vec \ph$ 
through the boundary of $V$. 
\footnotetext{
    With the sign convention from physics. 
    In Hodge theory where $d^* = - {\rm div}$, an 
    {\it inbound} flux is measured instead.
} 

Proposition \ref{gauss} plays a 
crucial role in understanding the topological structure of
message-passing algorithms. 
To serve as counterparts for the integration\footnotemark{} 
supports of (\ref{gauss-geometry}), let us denote by:
\bi
\iii $\LL^\aa = \{ \bb \in X \st \aa \aw \bb \}$ the cone below $\aa$ in $X$,
\iii $d \LL^{\aa} = \{ \aa'\bb' \in N_1(X) \st 
    \aa' \not\in \LL^\aa, \:\bb' \in \LL^\aa \}$ 
    the coboundary of $\LL^\aa$.
\ei 
\footnotetext{
The sum over $\LL^\aa$ is the {\it zeta transform} 
of chapter 3. Proposition \ref{gauss} hence provides with a beautiful 
illustration of the analogy between 
Möbius inversion formulas and the fundamental theorem of calculus
already noted by Rota in \cite{Rota-64}.
}

\begin{prop}[Gauss Formula]  
    \label{gauss}
    Given a cosystem $G : X^{op} \aw \Ab$ of abelian groups, 
    let $(G_\bullet(X), \div)$ denote the associated chain complex. 
    For every $\ph \in G_1(X)$, we have: 
    \begin{equation} \sum_{\bb' \in \LL^\aa} (\div\ph)_{\bb'} = 
    \sum_{\aa'\bb' \in d \LL^\aa} \ph_{\aa'\bb'} \end{equation}
\end{prop}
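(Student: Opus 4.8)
The plan is to expand the left-hand side with the definition (\ref{div}) of $\div$ and then to recognise the identity as a discrete telescoping: the fluxes internal to the cone $\LL^\aa$ cancel in pairs, leaving only the flux that enters $\LL^\aa$ from outside. Since the terms $(\div\ph)_{\bb'}$ for the various $\bb' \in \LL^\aa$ live in different groups $G_{\bb'}$, I would first push everything into the terminal group $G_\aa$ of the cone through the extension maps $j_{\aa\bb'} : G_{\bb'} \aw G_\aa$ (induced by $\aa \cont \bb'$), which are kept implicit in the statement.

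Substituting (\ref{div}), the left-hand side becomes a sum of incoming terms $\ph_{\aa'\bb'}$ over chains $\aa' \contst \bb'$ with $\bb' \in \LL^\aa$, minus a sum of outgoing terms $j_{\bb'\cc'}(\ph_{\bb'\cc'})$ over chains $\bb' \contst \cc'$ with $\bb' \in \LL^\aa$. The key structural fact is that $\LL^\aa$ is a lower set: if $\bb' \in \LL^\aa$ and $\bb' \cont \cc'$, then $\aa \cont \cc'$, so $\cc' \in \LL^\aa$ as well. Consequently every outgoing chain $\bb' \contst \cc'$ issued from a cone element has its smaller end $\cc'$ again in $\LL^\aa$, where the very same chain $\bb'\cc'$ reappears as an incoming term $\ph_{\bb'\cc'}$ of $(\div\ph)_{\cc'}$.

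These paired contributions cancel once pushed into $G_\aa$: functoriality of the extension maps gives $j_{\aa\cc'} = j_{\aa\bb'} \circ j_{\bb'\cc'}$, so the outgoing term $-j_{\aa\bb'}(j_{\bb'\cc'}(\ph_{\bb'\cc'}))$ at $\bb'$ and the incoming term $+j_{\aa\cc'}(\ph_{\bb'\cc'})$ at $\cc'$ are exactly opposite. Hence every chain with both ends in $\LL^\aa$ drops out, and no outgoing term survives at all. What remains are the incoming terms $\ph_{\aa'\bb'}$ for chains with $\bb' \in \LL^\aa$ but $\aa' \not\in \LL^\aa$ --- precisely the chains of $d\LL^\aa$ --- which is the right-hand side.

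I expect the only real obstacle to be the bookkeeping of the implicit maps $j$: one must verify that the internal incoming/outgoing pair cancels on the nose in $G_\aa$ rather than merely up to these maps, and the composition law $j_{\aa\cc'} = j_{\aa\bb'}\circ j_{\bb'\cc'}$ is exactly what secures this. Conceptually the computation is the same telescoping that underlies $\bord^2 = 0$, and it realises the discrete analogue of the Gauss formula (\ref{gauss-geometry}), where the flux through a face shared by two cells of $\LL^\aa$ is counted with opposite orientations and vanishes in the net.
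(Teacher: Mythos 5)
Your proof is correct and is exactly the argument the paper gives: its one-line proof ("each term $\ph_{\bb'\cc'}$ is counted twice with opposite signs if $\bb' \in \LL^\aa$") is precisely your telescoping of internal fluxes, relying on $\LL^\aa$ being downward closed. Your explicit tracking of the extension maps $j$ and the composition law $j_{\aa\cc'} = j_{\aa\bb'}\circ j_{\bb'\cc'}$ just makes rigorous what the paper keeps implicit.
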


\begin{proof}
In the sum of $\div \ph$ over $\LL^\aa$, 
each term $\ph_{\bb'\cc'}$ is counted twice with opposite signs
if $\bb' \in \LL^\aa$. 
\end{proof}

    \vspace{-0.1cm}
\begin{figure}[H]
    \sbox0{\includegraphics[width=0.65\textwidth]{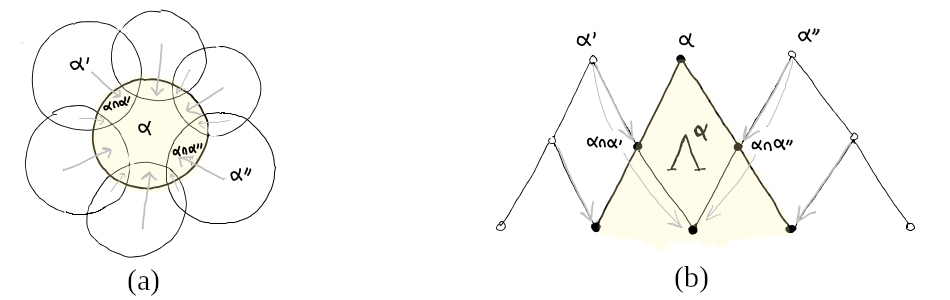}
    }
\begin{center}
    \begin{minipage}{0.8\textwidth}
\centering
\usebox0
    \vspace{-0.2cm}
        \caption{
            $X \incl \Part(\Om)$ illustrated as 
            (a) a Venn diagram and (b) a semi-lattice. 
            The colored region represents $\LL^\aa = \{ \bb' \incl \aa\}$ 
            while arrows represent some of the flux contributions to 
            the Gauss formula 
            $\sum_{\bb' \in \LL^\aa} \delta \ph_{\bb'} 
            = \sum_{\aa'\bb' \in d\LL^\aa} \ph_{\aa'\bb'}$.
        }
\end{minipage}
\end{center}
    \vspace{-0.5cm}
\end{figure}


\subsection{Systems of Rings and Modules}

The category $\{\Ring \}$ of ring systems is similarly defined 
by restricting to functors $R : X \aw \Ring$.
To avoid confusion in applications to come, 
it will be more convenient to consider cosystems over $X$, 
{\it i.e.} functors $R : X^{op} \aw \Ring$. 
This subsection, mostly inspired by Kodaira \cite{Kodaira-49},
explores the different products and module structures
one may generalise from the usual theory with scalar coefficients. 

Given a cosystem of rings $R$ over $X$, we give a natural 
ring structure $\big( R^\bullet(X), + , \wedge \big)$ 
to its associated cochain complex, by defining 
the exterior\footnote{
    With scalar coefficients, $\wedge$ is called the Alexander product
    in Kodaira \cite{Kodaira-49}. 
}
product of a $p$-field $\ph$ with a $k$-field $\psi$ 
as the $(p+k)$-field:
\begin{equation} (\ph \wedge \psi)_{\aa \dots \bb \dots \cc}
= \ph_{\aa \dots \bb} \cdot \big[ \psi_{\bb \dots \cc}\big]_\aa \end{equation}
where $[\psi_{\bb \dots \cc}]_\aa$ denotes the image of 
$\psi_{\bb \dots \cc} \in R_\bb$ in $R_\aa$. 
The exterior product is compatible with the differential and
we have the graded Leibniz rule: 
\begin{equation} d(\ph \wedge \psi) = d \ph \wedge \psi 
+ (-1)^{|\ph|} \; \ph \wedge d \psi \end{equation}
denoting by $|\ph|$ the degree of $\ph$,
and $\big( R^\bullet(X), +, \wedge, d \big)$ is a differential ring.

There is a natural notion of module over a ring system
which makes any ring system a module over itself. 
The dual notion of comodule will also be of interest to us,
they both give a differential module structure to one of the
associated complexes.

\begin{defn}
We call module over $R: X^{op} \aw \Ring$ 
any cosystem of abelian groups $M : X^{op} \aw \Ab$ such that:
\bi
\iii $M_\aa$ is an $R_\aa$-module for all $\aa \in X$,
\iii $[r_\bb]_\aa \cdot [m_\bb]_\aa = [r_\bb \cdot m_\bb]_\aa$ 
for all $\aa \aw \bb$ in $X$.
\ei
where $[m_\bb]_\aa$ denotes the image of $m_\bb$ in $M_\aa$. 
\end{defn}

When $M$ is a module over $R$, the cochain complex $M^\bullet(X)$ 
inherits a module structure over $R^\bullet(X)$
for the action of $\ph \in R^p(X)$ on $m \in M^k(X)$ extending the 
exterior product:
\begin{equation} (\ph \times m)_{\aa \dots \bb \dots \cc}
= \ph_{\aa \dots \bb} \cdot \big[ m_{\bb \dots \cc}\big]_\aa \end{equation}
and the differential $\nabla$ on $M^\bullet(X)$ then 
satisfies the graded module Leibniz rule:
\begin{equation} \nabla(\ph \times m) = d \ph \times m 
+ (-1)^{|\ph|} \; \ph \times \nabla m \end{equation}
so that $\big(M^\bullet(X), \nabla \big)$ is a differential 
module over $\big( R^\bullet(X), d \big)$. 

\begin{defn}
We call comodule over $R: X^{op} \aw \Ring$ 
any system of abelian groups $M : X \aw \Ab$ such that:
\bi
\iii $M_\aa$ is an $R_\aa$-module for all $\aa \in X$,
\iii $r_\bb \cdot [m_\aa]_\bb = [r_\bb \cdot m_\aa]_\bb$ 
for all $\aa \aw \bb$ in $X$.
\ei
where $[m_\aa]_\bb$ denotes the image of $m_\aa$ in $M_\bb$ 
and $\cdot$ the action 
of any $R_\bb$ such that $\aa \aw \bb$ on $M_\aa$.
\end{defn}

When $M$ is a comodule over $R$, the action
of $\ph \in R^k(X)$ on $m \in M_n(X)$ is the chain of $M_{n-k}(X)$
given by the interior product:
\begin{equation} (\ph \idot m)_{\bb \dots \cc} = \sum_{\aa' \dots \bb} 
\ph_{\aa' \dots \bb} \cdot \big[ m_{\aa' \dots \bb \dots \cc} \big]_\bb \end{equation}
and by $(\ph \idot m) = 0$ when $n < k$. 
Note that $\idot$ actually defines a right action as 
$(\ph \wedge \psi) \idot m = \psi \idot (\ph \idot m)$.
The boundary of $M_\bullet(X)$ satisfies the graded module Leibniz rule:
\begin{equation} \div(\ph \idot \psi) = d\ph \idot \psi + (-1)^{|\ph|}\; \ph \idot \div \psi \end{equation}
and $\big(M_\bullet(X), \div \big)$ is a differential module 
over $\big( R^\bullet(X), d \big)$.

{\bf Examples.}
\begin{enumerate}[itemsep=3px]

    \item Consider the constant ring system $\Z$ over $X$,  
        and for every $\aa \in X$, the $1$-cochain 
        $\lambda^\aa \in \Z^1(X)$ defined by: 
        \begin{equation} \lambda^\aa_{\aa'\bb'} = 
        \left\{ \begin{array}{c}
            1  \txt{if} \aa' = \aa  \\
            0  \txt{if} \aa' \neq \aa
        \end{array} \right. 
        \end{equation}
        An easy computation shows that $d\lambda^\bb \in \Z^2(X)$ may be
        written for all $\bb \in X$ as: 
        \begin{equation} d \lambda^\bb = \sum_{\aa \aw \bb} \lambda^\aa \wedge \lambda^\bb \end{equation}

    \item Any cosystem of algebras $A : X^{op} \aw \Alg$ is a 
        comodule over $\Z$. For every $\bb \in X$ 
        we then denote by $i_\bb$ the degree $-1$ 
        endomorphism of $A_\bullet(X)$ defined by 
        $i_{\bb}(\ph) = \lambda^\bb \idot \ph$.
        For every $\bar \cc \in N(X)$, 
        if $\bb \not\aw \cc_0$ we have $i_{\bb}(\ph)_{\bar \cc} = 0$ and  
        otherwise:
        \begin{equation} i_{\bb}(\ph)_{\bar \cc} = \ph_{\bb \bar \cc} \end{equation} 
        The previous example and  the graded module Leibniz rule give
        the following formula:
        \begin{equation} \div(i_\bb \ph) = i_\bb \Big( \sum_{\aa \aw \bb} i_\aa \ph \Big) 
        - i_\bb(\div \ph) \end{equation}
        which will be very useful in the proof of proposition 
        \ref{higher-zeta-cocycle}.
\end{enumerate}

\section{Local Statistics}

Localising the statistical structures of section 1, 
this section restricts the previous constructions 
to a projective system of sets given by
cartesian products $E_\aa = \prod_{i \in \aa} E_i$ of atomic configuration spaces, 
and focuses on the inductive system $A_\aa = \R^{E_\aa}$ 
of algebras induced above it.

A fundamental consequence of this hypothesis is the 
so-called {\it interaction decomposition theorem}. 
It asserts that each algebra of observables splits 
as a direct sum $A_\aa = \bigoplus_\bb Z_\bb$, where 
the interaction vector spaces $Z_\bb$ may be chosen consistently over $X$ 
and play the role of independent generators\footnote{
    This fact motivates the {\it free sheaves} terminology of \cite{ExtrafineSheaves},
    where a slightly more general setting was considered with Bennequin and Sergeant.
    The present hypotheses are however enough to cover all the applications to 
    statistics we shall be interested in.
}.
This well-known yet subtle result may be given a significant number of proofs 
and calls for greater generality,
but the simple setting we consider allows for a beautiful proof 
via harmonic analysis\footnote{
    The theorem name, {\it interaction decomposition}, 
    was suggested by the regretted Franti{\v s}ek Mat{\v u}s,
    whom we had the chance to meet in Marseille in 2017. 
    We follow his proof given \cite{Matus-88} for its elegance, 
    see also \cite{Kellerer-64, Speed-79, ExtrafineSheaves}.
}.

The main result of this section is the acyclicity\footnote{
    Acyclicity means that higher homology groups vanish, i.e. 
    $\Ker(\delta_n) = \Img(\delta_{n+1}) \incl A_n(X)$ for every $n > 1$. 
    The sequence $A_0(X) \wa A_1(X) \wa \dots$ hence provides 
    with a {\it projective resolution} of the first 
    homology group $\Ker(\delta_0) / \Img(\delta_1) = A_0(X) / \delta A_1(X)$, 
    in which the global hamiltonian lies, thanks to the isomorphism 
    of theorem \ref{H0}.
}
of the 
complex of local observables $A_\bullet(X)$. 
We show that the homology class
of $h \in A_0(X)$ is completely determined by its global sum
$H_\Om = \sum_\aa h_\aa$ in $A_\Om$. 
In a physical terminology, one would say
that two potentials $h$ and $h'$ are homologous if and only 
if they define the same global hamiltonian $H_\Om$.

\subsection{Statistical System}

From now on, we consider a finite set of atoms $\Om$ with finite sets
of microstates $E_i$ for all $i \in \Om$.
The configuration space $E_\aa$ of any region $\aa \incl \Om$ is 
defined as: 
\begin{equation} E_\aa = \prod_{i \in \aa} E_i \end{equation} 
For every $\aa \cont \bb$, we denote the canonical projection 
by $\pi^{\bb\aa} : E_\aa \aw E_\bb$.  
Alternatively, one could say that 
 $E : \big( \Part(\Om),  \cont \big) \aw \Set$  
defines a sheaf of finite sets over the finite topological space $\Om$. 

Given these sets of microstates, local algebras of observables are defined by 
$A_\aa = \R^{E_\aa}$ and for every $\aa \cont\bb$, 
the canonical injection $j_{\aa\bb} : A_\bb \aw A_\aa$ is the pull-back 
of $\pi^{\bb\aa}$ sending a function on $E_\bb$ to its cylindrical 
extension on $E_\aa$.
The multiplicative structure of $E$ is carried
to $A$ by:
\begin{equation} A_\aa = \bigotimes_{i \in \aa} A_i \end{equation}
as $A_\aa$ is linearly generated by the Dirac masses $(\delta_{x_\aa})$ 
on $E_\aa$, which may be written as pure tensors 
of the form $\delta_{x_i} \otimes \dots \otimes \delta_{x_j}$, 
and the extension map $j_{\aa\bb}$ may then be viewed as the
tensor multiplication $1_{\aa \setminus \bb} \otimes -$ 
with the unit of $A_{\aa \setminus \bb}$.
Better suited generators $(\chi_{k_\aa})$ will be introduced in the next section.

In the following, we suppose chosen a covering $X \incl \Part(\Om)$ closed under
intersection, {\it i.e.} such that: 
\begin{equation} \aa \in X \txt{and} \bb \in X \quad\impq\quad
\aa \cap \bb \in X \end{equation}
We then restrict $E$ to a system $E : X \aw \Set$ 
and $A$ to a cosystem $A : X^{op} \aw \Alg$ over $X$.
The closure hypothesis is fundamental for 
the interaction decomposition theorem to hold, it will 
also be useful for the generalised combinatorics we propose in chapter 3.

\begin{defn} \label{complexes}
    We denote by:
\bi 
\iii $\big( A_\bullet(X), \div \big)$ 
    the chain complex of {\rm local observables}, 
\iii $\big( A^*_\bullet(X), d \big)$ 
    the cochain complex of {\rm local measures}, 
\iii $\Delta_\bullet(X) \incl A^*_\bullet(X)$ the convex subspace 
    of {\rm local probabilities}.
\ei 
\end{defn}

This localisation procedure will lead us to represent the global hamiltonian 
$H_\Om \in A_\Om$ by a homology class 
of interaction potentials $(h_\aa) \in A_0(X)$ satisfying:
\begin{equation}  H_\Om = \sum_{\aa \in X} h_\aa \end{equation} 
The global Gibbs state $p_\Om \in \Delta_\Om$ would also be ideally  
represented by its local marginals $(p_\aa) \in \Delta_0(X)$ 
or an approximation of the latter. 
Marginals of $p_\Om$ are said consistent  
as $p_\bb$ is the marginal of $p_\aa$ for every $\bb \incl \aa$,
and the following more general notion of 
cohomology class in $\Delta_0(X)$ 
will substitute for the global probabilities of $\Delta_\Om$. 

\begin{defn} \label{Gamma}
    The convex subspace $\Gamma(X)$ of {\rm consistent local probabilities} 
    is defined by: 
    \begin{equation} \Gamma(X) = \Delta_0(X) \cap \Ker(d) \end{equation} 
    We also denote by $\Gammint(X)$ the space of consistent positive 
    local probabilities. 
\end{defn}

The image of $\Delta_\Om$ in $\Delta_0(X)$ is in general 
a strict convex polytope of $\Gamma(X)$, 
as although any consistent $q \in \Gamma(X)$ may 
always be extended to a global measure $q_\Om \in A^*_\Om$, 
the non-negativity of $q_\Om$ is not insured. 
This was already noticed by Vorob'ev who first characterised the
simplicial complexes $X$ having the property that any 
consistent family of probabilities in $\Gamma(X)$ may be extended\footnotemark{} 
to $\Delta_\Om$. 
They essentially coincide with the {\it \retractable} hypergraphs  
on which we show message-passing algorithms to be exact in chapter 6. 

\footnotetext{
    See also \cite{Abramsky-2011} for a sheaf theoretic approach of this problem, 
    and relations to the notion of contextuality. 
    The complex we consider is related to a barycentric subdivision to the 
    complex of \cite{Abramsky-2011}, as the categorical nerve 
    subdivides the {\v C}ech nerve of a covering, 
    and the isomorphism of homology groups is proved in \cite{ExtrafineSheaves}. 
}

\subsection{Interaction Decomposition}

For every $\aa \cont \bb$ in $X$, the algebra $A_\bb$ 
is naturally embedded in $A_\aa$. Therefore $A_\aa$ contains 
observables that may be split as a sum of observables 
on strict subregions of $\aa$.
We call them {\it boundary} observables and write:
\begin{equation} B_\aa = \sum_{\aa \contst \bb} A_\bb \end{equation} 
We say that $Z_\aa$ is an {\it interaction subspace} of $A_\aa$ 
if it is a supplement of $B_\aa$. One may then write: 
\begin{equation} A_\aa = Z_\aa \oplus 
\Big( \: \sum_{\aa \contst \bb} A_\bb  \: \Big) \end{equation}
and continue this procedure inductively, which is the content
of the interaction decomposition theorem.

\begin{thm} \label{int-decomposition}
Given an interaction subspace $Z_\aa \incl A_\aa$ for every $\aa \in X$,
one has for all $\aa$: 
\begin{equation} A_\aa = \bigoplus_{\aa \cont \bb} Z_\bb \end{equation}
\end{thm}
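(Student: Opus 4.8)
The plan is to reduce everything to a single ``absolute'' Fourier decomposition that simultaneously diagonalises all the subalgebras $A_\bb$, and then to run an induction over the poset $(X, \incl)$. First I would fix, for each atom $i \in \Om$, a complement $\tilde A_i$ of the constants $\R \cdot 1_i$ inside $A_i = \R^{E_i}$, and set $\hat Z_\cc = \bigotimes_{i \in \cc} \tilde A_i$ for every subset $\cc \incl \Om$, viewing it inside any $A_\aa$ with $\aa \cont \cc$ by tensoring with the units $1_i$ for $i \in \aa \setminus \cc$. Distributing the product $A_\aa = \bigotimes_{i \in \aa}(\R \, 1_i \oplus \tilde A_i)$ yields $A_\aa = \bigoplus_{\cc \incl \aa} \hat Z_\cc$ for every $\aa$, and this splitting is compatible with inclusions: the copy of $\hat Z_\cc$ is independent of the ambient region, so $A_\bb = \bigoplus_{\cc \incl \bb} \hat Z_\cc$ as a subspace of $A_\aa$ whenever $\bb \incl \aa$. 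In particular $A_\bb \cap A_{\bb'} = A_{\bb \cap \bb'}$, which is exactly the point where the intersection-closure of $X$ will be used.

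Second, I would establish the spanning statement $A_\aa = \sum_{\aa \cont \bb} Z_\bb$ by induction on $|\aa|$. The base case is a minimal $\aa \in X$, where the empty sum gives $B_\aa = 0$, hence $Z_\aa = A_\aa$ and $\LL^\aa = \{\aa\}$. For the inductive step, the definition of an interaction subspace gives $A_\aa = Z_\aa \oplus B_\aa$ with $B_\aa = \sum_{\aa \contst \bb} A_\bb$; substituting the induction hypothesis $A_\bb = \sum_{\bb \cont \cc} Z_\cc$ for each strict subregion and collecting terms shows $B_\aa = \sum_{\aa \contst \cc} Z_\cc$ (a region $\cc \in X$ occurs iff $\aa \contst \cc$: it lies below some $\bb$ for the forward direction, and one takes $\bb = \cc$ for the converse). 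Therefore $A_\aa = Z_\aa + B_\aa = \sum_{\aa \cont \cc} Z_\cc$.

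Finally, to upgrade this spanning sum to a direct sum it suffices, over finite-dimensional spaces, to verify $\sum_{\aa \cont \bb} \dim Z_\bb = \dim A_\aa$, and the Fourier decomposition does the bookkeeping. Since each $A_\bb = \bigoplus_{\cc \incl \bb} \hat Z_\cc$ and these are sub-sums of the common decomposition, one gets $B_\bb = \bigoplus_{\cc \in {\cal D}_\bb} \hat Z_\cc$, where ${\cal D}_\bb$ is the set of $\cc$ contained in some strict $X$-subregion of $\bb$, so $\dim Z_\bb = \sum \{ \dim \hat Z_\cc : \cc \incl \bb, \; \cc \notin {\cal D}_\bb \}$. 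Using that $X$ is closed under intersection, every $\cc \incl \aa$ admits a smallest $X$-superset inside $\aa$, namely $\bar\cc = \bigcap \{ \bb \in X : \cc \incl \bb \incl \aa \}$, and one checks that for $\cc \incl \bb \incl \aa$ the condition $\cc \notin {\cal D}_\bb$ (the smallest $X$-superset of $\cc$ inside $\bb$ being $\bb$ itself) is equivalent to $\bar\cc = \bb$. Thus the assignment $\cc \mapsto \bar\cc$ sorts each subset $\cc \incl \aa$ into exactly one index set, and summing gives $\sum_{\aa \cont \bb} \dim Z_\bb = \sum_{\cc \incl \aa} \dim \hat Z_\cc = \dim A_\aa$. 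Combined with spanning, this proves $A_\aa = \bigoplus_{\aa \cont \bb} Z_\bb$. The one delicate point, and the reason the intersection-closure hypothesis cannot be dropped, is precisely the existence and uniqueness of this smallest $X$-superset $\bar\cc$: it is what makes the sorting map well-defined and turns the overlapping family $\{ A_\bb \}$ into a clean partition of the Fourier modes.
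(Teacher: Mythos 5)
Your proof is correct, and it is in fact more complete than the argument the paper actually writes down. The paper's proof, following Mat{\v u}s, works over $\C$: it identifies each $E_\aa$ with a finite torus, takes the characters $\hat E_\aa$ as an orthonormal basis of $\tilde A_\aa = \C^{E_\aa}$, observes that the boundary subspace $\tilde B_\aa$ is spanned by the characters pulled back from strict $X$-subregions, and reads off the decomposition from the partition $\hat E_\aa = \bigsqcup_{\bb \incl \aa} \hat G_\bb$ --- which is the same sorting of Fourier modes by smallest $X$-superset that you implement with the map $\cc \mapsto \bar\cc$. Your $\hat Z_\cc = \bigotimes_{i \in \cc} \tilde A_i$ is precisely the real analogue of the span of the characters with support exactly $\cc$. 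The genuine difference is that the paper's harmonic argument only establishes the statement for the \emph{canonical} (orthogonal) interaction subspaces and explicitly defers the general case to an appendix left as a placeholder; your induction for the spanning identity $A_\aa = \sum_{\aa \cont \bb} Z_\bb$ together with the dimension count $\dim Z_\bb = \dim A_\bb - \dim B_\bb$ (which is independent of the choice of complement) proves the theorem for an \emph{arbitrary} choice of interaction subspaces, as stated. You also locate the role of the $\cap$-closure hypothesis correctly --- it is what makes $\bar\cc$ well defined and lands it in $X$ --- though note that the identity $A_\bb \cap A_{\bb'} = A_{\bb \cap \bb'}$ itself holds for any subsets of $\Om$ and is not where the hypothesis actually enters.
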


It will be useful to consider the following rewording of the theorem. 
In this point of view, interaction subspaces define 
a cosystem of vector spaces $Z: X^{op} \aw \Vect$ with trivial maps, 
embedded in $A$, and we denote by $Z_0(X)$ their direct sum viewed
as a subspace of $A_0(X)$.

\begin{cor} \label{conservation}
Given a choice of interaction subspaces, we have 
a projection  $P : A_0(X) \aw Z_0(X)$ given by:
\begin{equation} P(u)_\bb = \sum_{\aa \cont \bb} P^{\bb\aa}(u_\aa) \end{equation}
where $P^{\bb\aa} : A_\aa \aw Z_\bb$ is a projection 
of $A_\aa$ onto $Z_\bb$, vanishing on $Z_\cc$ for every $\cc \neq \bb$. 
\end{cor}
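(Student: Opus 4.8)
The plan is to read the projections $P^{\bb\aa}$ directly off the interaction decomposition theorem and then assemble them. For a fixed $\aa \in X$, Theorem~\ref{int-decomposition} gives the direct sum $A_\aa = \bigoplus_{\aa \cont \cc} Z_\cc$, and any internal direct sum carries canonical coordinate projections: for each $\bb \incl \aa$ I let $P^{\bb\aa} : A_\aa \aw Z_\bb$ be the projection onto the summand $Z_\bb$ along the remaining summands. By construction $P^{\bb\aa}$ restricts to the identity on $Z_\bb$ and vanishes on $Z_\cc$ for every $\cc \neq \bb$ with $\cc \incl \aa$, and the summands exhaust $A_\aa$, so one has the completeness relation $\sum_{\bb \incl \aa} P^{\bb\aa} = {\rm id}_{A_\aa}$. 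This already produces the maps $P^{\bb\aa}$ with the properties asserted in the statement; no computation is needed beyond invoking the theorem.

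Next I would define $P$ by the stated formula $P(u)_\bb = \sum_{\aa \cont \bb} P^{\bb\aa}(u_\aa)$, the sum ranging over the finitely many $\aa \in X$ with $\aa \cont \bb$. Each term lies in $Z_\bb$, so $P(u)_\bb \in Z_\bb$ and $P$ indeed maps $A_0(X) = \prod_\aa A_\aa$ into $Z_0(X) = \prod_\bb Z_\bb \incl A_0(X)$. To see that $P$ is a projection it suffices to check that it fixes $Z_0(X)$ pointwise. Take $z \in Z_0(X)$, so that $z_\aa \in Z_\aa$ for every $\aa$. In the decomposition of $A_\aa$ the element $z_\aa$ sits in the top summand $Z_\aa$, which $P^{\bb\aa}$ annihilates unless $\aa = \bb$; hence in $P(z)_\bb = \sum_{\aa \cont \bb} P^{\bb\aa}(z_\aa)$ only the term $\aa = \bb$ survives, giving $P(z)_\bb = P^{\bb\bb}(z_\bb) = z_\bb$. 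Thus $P$ restricts to the identity on $Z_0(X)$, and since $P$ takes values in $Z_0(X)$ we get $P^2 = P$, i.e. $P$ is a projection onto $Z_0(X)$.

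The only real input is Theorem~\ref{int-decomposition}; everything else is bookkeeping. The one point to handle with care is consistency: the projection $P^{\bb\aa}$ is built inside $A_\aa$ and a priori depends on the ambient region $\aa$, so it matters that the interaction subspaces $Z_\cc$ are chosen once and for all and reappear as genuine summands of every $A_\aa$ with $\aa \cont \cc$ --- this is exactly the consistent decomposition the theorem supplies, and it is what makes the restriction of $P^{\bb\aa}$ to $Z_\bb$ the identity independently of $\aa$. Finally, I would record the conservation property that motivates the name of the corollary: swapping the order of summation and using the completeness relation gives $\sum_\bb P(u)_\bb = \sum_\aa \big( \sum_{\bb \incl \aa} P^{\bb\aa}(u_\aa) \big) = \sum_\aa u_\aa$, so $P$ leaves the total energy $H_\Om = \sum_\aa u_\aa$ unchanged.
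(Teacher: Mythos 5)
Your proposal is correct and matches the paper's intent: the corollary is presented there as a direct rewording of Theorem \ref{int-decomposition}, with $P^{\bb\aa}$ being exactly the coordinate projections of the direct sum $A_\aa = \bigoplus_{\aa \cont \cc} Z_\cc$, which is how the paper later uses them (e.g. in the proof of Theorem \ref{H0-P}). Your checks that $P$ lands in $Z_0(X)$, fixes it pointwise, and preserves the total sum $\sum_\aa u_\aa$ supply the bookkeeping the paper leaves implicit.
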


The theorem asserts that a direct sum decomposition holds for {\it any} choice 
of interaction subspaces. As supplements of boundary observables, 
they are not defined intrinsically, 
although unless explicit mention of the opposite, 
$Z_\aa$ will from now on be assumed orthogonal to $B_\aa$ for the 
canonical scalar product of $\R^{E_\aa}$. 
This choice will play a particular
role in describing the high temperature limit, see for instance 
\ref{T_0 Z} and \ref{transversality-0}.  

\begin{defn}
The {\rm canonical interaction subspaces} $Z_\aa \incl A_\aa$ 
are defined as orthogonal supplements of boundary observables 
for the canonical scalar product of $A_\aa \simeq \R^{E_\aa}$: 
\begin{equation}
Z_\aa = \Big( \sum_{\bb \inclst \aa} \Img \big( j_{\aa\bb} \big) \Big)^\perp 
\quad\eqvl\quad
Z_\aa = \bigcap_{\bb \inclst \aa} \Ker \big( \Sigma^{\bb\aa} \big) 
\end{equation}
\end{defn}

Instead of proving \ref{int-decomposition} in its generality\footnote{
    [[appendix: proof by action of permutations]] 
},
inspired by Mat{\v u}s, we show how 
harmonic analysis gives an enlightening perspective on 
the canonical interaction decomposition. 
This seems to be the most natural construction of the 
projections $P^{\bb\aa}$ and is the one we implemented in javascript. 

Chose an ordering of $E_j$ for all $j \in \Om$, 
so that each $E_\aa$ may be identified with the finite torus:
\begin{equation} E_\aa \simeq \prod_{j \in \aa} \frac{\Z} { N_j \Z} \end{equation} 
and consider for every $\aa \in X$ the complexified algebra of observables
$\tilde A_\aa = \C^{E_\aa}$.
It is a fundamental result of abelian group theory that the spectrum 
$\hat E_\aa = \Hom(E_\aa, \C^*)$ of $E_\aa$ 
defines an orthormal basis of $\tilde A_\aa$ for its canonical 
hermitian product, 
where the so-called characters of $\hat E_\aa$ are plane waves generating 
a discrete Fourier transform on $\tilde A_\aa$.

By duality, there is a natural injection $\hat E_\bb \aw \hat E_\aa$ 
by pull-back of the projection $\pi^{\bb\aa} : E_\aa \aw E_\bb$.
Consider the subset $\hat F_\aa \incl \hat E_\aa$ of characters defined by: 
\begin{equation} \hat F_\aa = \bigcup_{\bb \inclst \aa} \hat E_\bb \end{equation}
It is easily seen that $\hat F_\aa$ is an orthonormal basis 
of the subspace $\tilde B_\aa$ of complex boundary observables.
Its complement $\hat G_\aa$ then provides with an orthonormal basis of 
the interaction subspace $\tilde Z_\aa = \tilde B_\aa^{\perp}$.
The interaction decomposition theorem here amounts to the 
observation that $\hat E_\aa$ is recovered as the disjoint union:
\begin{equation} \hat E_\aa = \bigsqcup_{\bb \incl \aa} \hat G_\bb \end{equation}
A boundary observable in $\tilde A_\aa$ is spanned 
by plane waves with wave vectors tangent to some $E_\bb$ with $\bb \inclst \aa$, 
and the spectral support of an interaction 
observable is in the complement of such wave vectors. 

\begin{figure}[H]
    \sbox0{\includegraphics[width=0.45\textwidth]{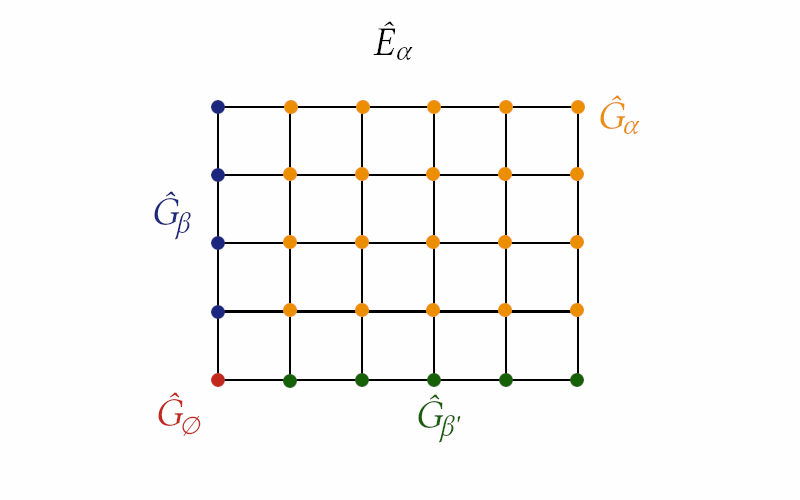}}
\begin{center}
\begin{minipage}{\wd0+0.1\textwidth}
\centering
\usebox0
\caption{
    Spectral decomposition $\hat E_\aa = \bigsqcup_{\bb \incl \aa} \hat G_\bb$. 
}
\end{minipage}
\end{center}
\end{figure}

More concretely, define for every $k_\aa \in E_\aa$
the character $\chi_{k_\aa} \in \hat E_\aa$ by:
\begin{equation} \chi_{k_\aa} (x_\aa) = \e^{i \croc{k_\aa}{x_\aa}} 
\txt{where} \croc{k_\aa}{x_\aa} = \sum_{j \in \aa} \frac{k_j x_j}{2 \pi N_j}
\end{equation}
The character $\chi_{k_\bb} \in \hat E_\bb$ is extended
to $\hat E_\aa$ by letting $k_j = 0$ for $j \notin \bb$. 
Identifying $\hat E_\aa$ with $E_\aa$, 
the discrete Fourier transform defines a unitary
transformation of $\tilde A_\aa$.
For all $u_\aa \in \tilde A_\aa$, write:
\begin{equation} u_\aa = \sum_{k_\aa \in E_\aa} \hat u_\aa(k_\aa) \cdot \chi_{k_\aa} \end{equation} 
The orthogonal projection $P^{\bb\aa} : \tilde A_\aa \aw \tilde Z_\bb$ is then 
simply given by selecting only the modes of $G_\bb \incl E_\bb$, 
complement of $\bigcup_{\bb \inclst \cc} E_\cc$: 
\begin{equation} P^{\bb\aa}(u_\aa) = \sum_{k_\bb \in G_\bb} 
\hat u_\aa(k_\bb) \cdot \chi_{k_\bb} \end{equation}

The real Fourier transform on $A_\aa$ would obviously give
a similar explicit construction of real interaction subspaces.
This construction appealed to us for both its great
conceptual simplicity and ease of implementation. 
It however strongly relies on the multiplicative
structure of the underlying system of sets.

The following proposition gives a more abstract characterisation 
of interaction subspaces, as other approaches\footnote{
    See \cite{ExtrafineSheaves} where the decomposition is carried in 
    a much more general setting. 
} 
would be necessary for general inductive systems of vector spaces.
Note that it is not true in general that the inductive limit
splits as a direct sum. 

\begin{prop}
    The inductive limit of $A$ is isomorphic to 
    the direct sum of interaction subspaces: 
    \begin{equation} Z_0(X) \simeq \colim_{\aa \in X} A_\aa \end{equation} 
\end{prop}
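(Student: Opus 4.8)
The plan is to use the interaction decomposition theorem \ref{int-decomposition} to turn the structure maps of the inductive system into inclusions of summands, after which the colimit can be read off directly from its universal property. Throughout I treat the system at the level of the underlying vector spaces, so that $\colim_{\aa \in X} A_\aa$ is the vector space inductive limit of the diagram with transition maps $j_{\aa\bb} : A_\bb \aw A_\aa$ for $\aa \cont \bb$.

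First I would fix, once and for all, a choice of interaction subspaces $Z_\bb \incl A_\bb$ for every $\bb \in X$. Theorem \ref{int-decomposition} then gives, for every $\aa \in X$, the decomposition
\[ A_\aa = \bigoplus_{\aa \cont \bb} Z_\bb, \]
where each summand $Z_\bb$ sits inside $A_\aa$ as the image $j_{\aa\bb}(Z_\bb)$ of the fixed subspace $Z_\bb \incl A_\bb$. The crucial observation is that the structure map $j_{\aa\bb} : A_\bb \aw A_\aa$ is, in these coordinates, nothing but the inclusion of the sub-family of summands indexed by $\{ \cc \in X \st \bb \cont \cc \}$ into the family indexed by $\{ \cc \in X \st \aa \cont \cc \}$. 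Indeed, for $\bb \cont \cc$ the composite $j_{\aa\bb} \circ j_{\bb\cc} = j_{\aa\cc}$ carries the summand $Z_\cc$ of $A_\bb$ identically onto the summand $Z_\cc$ of $A_\aa$. Thus, up to these identifications, the whole inductive system is the system of partial direct sums of the single fixed family $(Z_\cc)_{\cc \in X}$, with transition maps given by summand inclusions.

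Next I would exhibit $Z_0(X) = \bigoplus_{\cc \in X} Z_\cc$ as the colimit by checking its universal property. The candidate cocone consists of the maps $\iota_\aa : A_\aa \aw Z_0(X)$ sending $u_\aa$ to its family of interaction components, that is $\iota_\aa(u_\aa)_\cc = P^{\cc\aa}(u_\aa)$ for $\aa \cont \cc$ and $0$ otherwise, with $P^{\cc\aa}$ the projections of corollary \ref{conservation}. Compatibility $\iota_\aa \circ j_{\aa\bb} = \iota_\bb$ is immediate from the previous paragraph, since both sides are the inclusion into $Z_0(X)$ of the summands $Z_\cc$ with $\bb \cont \cc$. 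Given any other cocone $(\psi_\aa : A_\aa \aw W)$, I would define $\psi : Z_0(X) \aw W$ by $\psi|_{Z_\cc} = \psi_\cc|_{Z_\cc}$ on each summand. Uniqueness is forced by $\psi \circ \iota_\cc = \psi_\cc$; for the factorisation $\psi \circ \iota_\aa = \psi_\aa$ it suffices to evaluate on a summand $Z_\cc \incl A_\aa$ with $\aa \cont \cc$, where $\iota_\aa$ is the identity onto the $Z_\cc$-slot and $j_{\aa\cc}$ restricts to the same inclusion, so that the cocone relation $\psi_\aa \circ j_{\aa\cc} = \psi_\cc$ yields $\psi(\iota_\aa(z)) = \psi_\cc(z) = \psi_\aa(z)$ for all $z \in Z_\cc$. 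This establishes $Z_0(X) \simeq \colim_{\aa \in X} A_\aa$.

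The main obstacle is conceptual rather than computational, and lies entirely in the first step: one must be sure that the single choice of subspace $Z_\cc \incl A_\cc$ is embedded coherently into every $A_\aa$ containing it, so that the $Z_\cc$-summand of $A_\aa$ is unambiguously $j_{\aa\cc}(Z_\cc)$ and the transition maps are literally summand inclusions. This coherence — a common family $(Z_\cc)$ decomposing all the $A_\aa$ simultaneously — is exactly what theorem \ref{int-decomposition} provides, and it is what fails for a general inductive system of vector spaces, whence the cautionary remark that the inductive limit need not split as a direct sum. Once this is granted, the verification is formal and, notably, does not require the index category $(X, \cont)$ to be filtered, which it is not in general since $X$ is closed under intersection but not under union.
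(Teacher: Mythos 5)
Your proposal is correct and takes essentially the same route as the paper: both arguments rest on the interaction decomposition theorem to present $Z_0(X) = \bigoplus_\cc Z_\cc$ and then verify the universal property of the colimit by defining the factorising map summand by summand. The only difference is one of detail — the paper compresses the cocone into a single map $f : A_0(X) \aw V$ factorising through the projection $P$, whereas you spell out the compatibility $\iota_\aa \circ j_{\aa\bb} = \iota_\bb$ and the factorisation check that the paper leaves implicit.
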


\begin{proof}
    Let $(V, f)$ denote a cone\footnote{
        See section 1.2 for the categorical definition of limits.
    }
    below $A$ defined by a collection of consistent maps $f_\bb : A_\bb \aw V$.
    Identify $(V,f)$ with a map $f : A_0(X) \aw V$ and suppose given
    $\tilde f : Z_0(X) \aw V$ such that $\tilde f \circ P = f$. 
    Then for every $v_\aa \in Z_\aa \incl A_0(X)$, we have 
    $\tilde f(v_\aa) = f(v_\aa)$. This defines 
    a unique map $\tilde f : Z_0(X) \aw V$ factorising $f$ through $P$,
    and $Z_0(X)$ satisfies the universal property of the colimit.
\end{proof}

The inclusions of each $A_\aa$ into the global algebra $A_\Om$ induce
a map $\zeta_\Om : A_0(X) \aw A_\Om$ on the direct sum, given by:
\begin{equation} \zeta_\Om : h \longmapsto \sum_{\aa \in X} h_\aa \end{equation}
We denote by $B_\Om \incl A_\Om$ its image. 
It follows from the previous proposition that we have a universal map
from $Z_0(X)$ to $B_\Om$, factorising $\zeta_\Om$ through $P$: 
\begin{equation} \bcd 
A_0(X) \rar{P} \drar[swap]{\zeta_\Om}
    &    Z_0(X) \dar[dashed]        \\
    &   B_\Om \incl A_\Om
\ecd \end{equation}
Chosing a global interaction subspace to write $A_\Om = Z_\Om \oplus B_\Om$,
the interaction decomposition theorem asserts that 
$A_\Om = Z_\Om \oplus Z_0(X)$ so that $Z_0(X)$ and $B_\Om$ are isomorphic.
This allows for two different representations of the inductive limit, 
a local one, and a global one. 

\begin{prop}
    The two following maps are equivalent representations of $\colim A$: 
    \begin{enumerate}[label=(\roman*)]
        \item $P : A_0(X) \law Z_0(X)$
        \item $\zeta_\Om: A_0(X) \law B_\Om$ 
    \end{enumerate}
\end{prop}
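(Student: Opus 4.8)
The plan is to use that a colimit is determined up to a unique isomorphism, so that, the previous proposition having already exhibited $P : A_0(X) \law Z_0(X)$ as a representation of $\colim A$, it suffices to produce an isomorphism $\theta : Z_0(X) \aw B_\Om$ compatible with the two canonical projections, in the sense that $\theta \circ P = \zeta_\Om$. Both $(Z_0(X), P)$ and $(B_\Om, \zeta_\Om)$ are cones under $A$ --- the legs of the first being the maps $u_\aa \mapsto \big(P^{\bb\aa}(u_\aa)\big)_{\aa \cont \bb}$ read off from corollary \ref{conservation}, those of the second being the inclusions $A_\aa \incl B_\Om \incl A_\Om$ --- so such a $\theta$ will identify representation $(ii)$ with representation $(i)$. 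The natural candidate is simply the restriction of $\zeta_\Om$ to $Z_0(X) \incl A_0(X)$, namely $\theta : (z_\bb)_{\bb \in X} \mapsto \sum_{\bb \in X} z_\bb$ for $z_\bb \in Z_\bb$.

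First I would check the factorisation $\theta \circ P = \zeta_\Om$. Expanding $P$ through corollary \ref{conservation} gives
\begin{equation}
\theta\big(P(u)\big) = \sum_{\bb \in X} \sum_{\aa \cont \bb} P^{\bb\aa}(u_\aa)
= \sum_{\aa \in X} \Big( \sum_{\aa \cont \bb} P^{\bb\aa} \Big)(u_\aa),
\end{equation}
and the interaction decomposition $A_\aa = \bigoplus_{\aa \cont \bb} Z_\bb$ of theorem \ref{int-decomposition} says exactly that the projections $P^{\bb\aa}$ sum to $\mathrm{id}_{A_\aa}$, so the right-hand side collapses to $\sum_\aa u_\aa = \zeta_\Om(u)$.

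Next I would show $\theta$ is bijective. Surjectivity onto $B_\Om$ is automatic, since $P$ is onto and $B_\Om$ is by definition the image of $\zeta_\Om = \theta \circ P$. For injectivity I would invoke the interaction decomposition theorem at the top region: choosing a global interaction subspace with $A_\Om = Z_\Om \oplus B_\Om$ yields $A_\Om = Z_\Om \oplus \bigoplus_{\aa \in X} Z_\aa$, so the family $(Z_\aa)_{\aa \in X}$ sits in direct sum inside $A_\Om$; hence $\sum_{\aa} z_\aa = 0$ with $z_\aa \in Z_\aa$ forces each $z_\aa = 0$. Being a cone-compatible bijection, $\theta$ is then the canonical comparison isomorphism, and $\zeta_\Om : A_0(X) \law B_\Om$ represents $\colim A$ just as $P$ does.

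The main obstacle is precisely this injectivity step: it is not a formal consequence of the universal property but asserts that the local interaction subspaces $Z_\aa$ remain linearly independent once summed into the global algebra $A_\Om$. This is exactly what the interaction decomposition theorem secures --- concretely, through the disjointness of the spectral supports $\hat G_\aa$ used in its harmonic-analytic proof --- and it is where the hypothesis that $X$ is closed under intersection is genuinely used.
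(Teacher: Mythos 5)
Your proposal is correct and takes essentially the same route as the paper: the text preceding the proposition already factorises $\zeta_\Om$ through $P$ via the universal map $Z_0(X) \aw B_\Om$ and then invokes the interaction decomposition at $\Om$ (choosing $A_\Om = Z_\Om \oplus B_\Om$ and comparing with $A_\Om = Z_\Om \oplus Z_0(X)$) to conclude that $Z_0(X) \simeq B_\Om$. You merely make explicit what the paper leaves implicit — the identity $\sum_{\aa \cont \bb} P^{\bb\aa} = \mathrm{id}_{A_\aa}$ behind the factorisation, and the direct-sum position of the $Z_\aa$ inside $A_\Om$ behind injectivity — which is a faithful unpacking rather than a different argument.
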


\subsection{Homology and Cohomology}

The global hamiltonian $H_\Om$ of a physical system is typically given as a
sum of local interactions:
\begin{equation} H_\Om = \sum_{\aa \in X} h_\aa \end{equation}
and we will see that $H_\Om = \zeta_\Om(h)$ 
represents a unique homology
class in $A_0(X)$. 
The interaction decomposition theorem
is essential to compute the homology of $A_\bullet(X)$. 
We shall first characterise the first homology of $A_0(X)$ and 
the first cohomology of $A^*_0(X)$, both isomorphic to $Z_0(X)$, 
before proving the acyclicity of the whole complex $A_\bullet(X)$. 

\begin{thm} \label{H0-P}
    The interaction projection $P : A_0(X) \aw Z_0(X)$ 
    induces an isomorphism in the first homology group of observable 
    fields: 
    \begin{equation} \frac{A_0(X) } {\div A_1(X)} \simeq Z_0(X) \end{equation}
\end{thm}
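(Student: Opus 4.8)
The plan is to identify the first homology group $A_0(X)/\div A_1(X)$ with the inductive limit $\colim_{\aa \in X} A_\aa$, and then to invoke the preceding proposition, which exhibits $P : A_0(X) \aw Z_0(X)$ as a concrete realisation of that colimit. Since $\Om$ and every $E_i$ are finite, $X$ is finite and $A_0(X) = \bigoplus_{\aa \in X} A_\aa$; I write $\iota_\aa : A_\aa \aw A_0(X)$ for the canonical inclusion. The first step is to compute the boundary of the natural generators of $A_1(X) = \prod_{\aa \contst \bb} A_\bb$.

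Concretely, I would fix a strict pair $\aa \contst \bb$ and take the current $\ph$ supported on this single pair, with value $f \in A_\bb$ and vanishing elsewhere. Reading off the two sums defining $\div$, the only surviving components are $\div \ph_\bb = f$ (the incoming sum, with $\aa' = \aa$) and $\div \ph_\aa = - j_{\aa\bb}(f)$ (the departing sum, with $\cc' = \bb$), whence
\begin{equation}
\div \ph = \iota_\bb(f) - \iota_\aa\big(j_{\aa\bb}(f)\big).
\end{equation}
By linearity of $\div$, the image $\div A_1(X)$ is then exactly the subgroup of $A_0(X)$ generated by all $\iota_\bb(f) - \iota_\aa(j_{\aa\bb}f)$ for $\aa \contst \bb$ in $X$ and $f \in A_\bb$.

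These are precisely the relations defining the colimit of the cosystem $A$: relations attached to non-adjacent inclusions $\aa \cont \bb$ are generated by composing inclusions $j_{\aa\bb} = j_{\aa\cc}\,j_{\cc\bb}$, and equality $\aa = \bb$ contributes only the trivial relation. Hence
\begin{equation}
\frac{A_0(X)}{\div A_1(X)} \; \simeq \;
\frac{\bigoplus_{\aa} A_\aa}{\big\langle\, \iota_\bb(f) - \iota_\aa(j_{\aa\bb}f) \;:\; \aa \contst \bb,\; f \in A_\bb \,\big\rangle}
\; \simeq \; \colim_{\aa \in X} A_\aa ,
\end{equation}
the quotient map playing the role of the universal colimit cone. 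By the proposition identifying $\colim_{\aa \in X} A_\aa$ with $Z_0(X)$ through $P$, the projection $P$ vanishes on $\div A_1(X)$ and the induced map $\bar P : A_0(X)/\div A_1(X) \aw Z_0(X)$ is an isomorphism; equivalently $\Ker(P) = \div A_1(X)$, while surjectivity onto $Z_0(X)$ is immediate since $P$ restricts to the identity on $Z_0(X)$.

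The one delicate point is the claim that $\div A_1(X)$ equals the span of the colimit relations, in both directions: the inclusion of $\div A_1(X)$ into the span of the relations is mere linearity, whereas the reverse rests on the single-pair computation above, so the only real work is the sign and index bookkeeping in the two sums defining $\div$. I would emphasise that the orthogonality convention chosen for the $Z_\aa$ plays no role here; the statement is fundamentally that the nerve complex of $A$ computes its colimit in degree zero, the interaction decomposition entering only through the already-established isomorphism $\colim_{\aa \in X} A_\aa \simeq Z_0(X)$.
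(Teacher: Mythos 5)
Your proof is correct, but it takes a genuinely different route from the paper's. The paper argues by direct computation: a Gauss formula on the cone $\LL^\bb$ combined with the support property of the projections ($P^{\bb\aa'}(A_{\bb'}) \neq 0$ only if $\bb' \cont \bb$) shows that $P$ kills boundaries, and then the explicit current $\ph_{\aa\bb} = P^{\bb\aa}(u_\aa)$ satisfies $\div \ph = P(u) - u$, which gives $\Ker(P) = \div A_1(X)$ constructively. You instead observe that $A_0(X)/\div A_1(X)$ is, by your single-pair computation $\div \ph = \iota_\bb(f) - \iota_\aa(j_{\aa\bb}f)$, exactly the standard presentation of $\colim_{\aa} A_\aa$, and then cite the proposition identifying this colimit with $Z_0(X)$ through $P$. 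Both invocations are legitimate: that proposition does precede the theorem in the text, and your sign and index bookkeeping matches the paper's convention for $\div$. What your route buys is conceptual clarity — it isolates the statement ``the nerve complex computes the colimit in degree zero,'' which is independent of the choice of interaction subspaces, and confines the interaction decomposition to the already-proved identification $\colim A \simeq Z_0(X)$. What it gives up is the explicit current witnessing $u - P(u) \in \div A_1(X)$, which the paper's proof produces and which foreshadows the contracting homotopy $\eta$ used later for acyclicity. One caveat worth stating if you write this up: to deduce that $\bar P$ is \emph{the} comparison isomorphism of two colimit cones, you need $(Z_0(X), P \circ \iota_\aa)$ to actually be a cone under $A$, i.e.\ $P \circ \iota_\aa \circ j_{\aa\bb} = P \circ \iota_\bb$; this coherence is a consequence of the interaction decomposition and is where the real content of the cited proposition lives, so it deserves a line rather than being absorbed silently into the citation.
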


\begin{proof}
    A Gauss formula on the cone over $\bb \in X$ first ensures 
    that for all $\ph \in A_1(X)$, we have: 
    \begin{equation} P(\div \ph)_\bb = \sum_{\aa' \aw \bb} P^{\bb\aa'}(\div_{\aa'} \ph) 
    = \sum_{\aa' \aw \bb} \sum_{\bb' \not\aw \bb} 
    P^{\bb\aa'}( \ph_{\aa'\bb'} ) = 0 \end{equation}
    as $P^{\bb\aa'}(A_{\bb'})$ is non-zero if and only if $\bb'$ contains $\bb$.
    Hence $P$ vanishes on boundaries and we denote the induced quotient map by $[P]$.
    Reciprocally, given $u \in A_0(X)$ we define $\ph \in A_1(X)$ by
    $\ph_{\aa\bb} = P^{\bb\aa}(u_\aa)$ and consider its boundary: 
    \begin{equation} \div_\bb \ph =
    \sum_{\aa' \aw \bb} \ph_{\aa'\bb} - \sum_{\bb \aw \cc'} \ph_{\bb\cc'} 
    = P(u)_\bb - u_\bb 
    \end{equation}
    When $P(u) = 0$, the above gives $u = - \div \ph$ 
    so that $\Ker(P) = \div A_1(X)$. Hence $[P]$ is injective 
    and induces an isomorphism between $H_0(X) = A_0(X) / \div A_1(X)$ 
    and its image $Z_0(X)$.
\end{proof}

It is now a simple consequence of the previous subsection that 
total energy, viewed as a global observable of $A_\Om$, 
is a maximal homological invariant of $A_0(X)$:

\begin{cor} \label{H0}
    Two observable fields $h, h' \in A_0(X)$ are homologous
    if and only if: 
    \begin{equation} \sum_{\aa \in X} h_\aa = \sum_{\aa \in X} h'_\aa \end{equation}
    and $\zeta_\Om : A_0(X) \aw B_\Om$ induces an isomorphism in homology.
\end{cor}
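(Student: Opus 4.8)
The plan is to deduce the statement directly from Theorem \ref{H0-P} together with the identification $Z_0(X) \simeq B_\Om$ obtained at the end of the previous subsection. The key is that the total-energy map factorises as $\zeta_\Om = \iota \circ P$, where $P : A_0(X) \aw Z_0(X)$ is the interaction projection of Corollary \ref{conservation} and $\iota : Z_0(X) \aw B_\Om$ is the isomorphism coming from the global interaction decomposition $A_\Om = Z_\Om \oplus Z_0(X)$. The single fact about $P$ that does the work is already contained in the proof of Theorem \ref{H0-P}: namely $\Ker(P) = \div A_1(X)$.

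For the biconditional I would just chase these identifications. Two potentials $h, h' \in A_0(X)$ are homologous precisely when $h - h' \in \div A_1(X) = \Ker(P)$, i.e. when $P(h) = P(h')$. Since $\iota$ is injective, this is equivalent to $\iota(P(h)) = \iota(P(h'))$, that is to $\zeta_\Om(h) = \zeta_\Om(h')$; unwinding $\zeta_\Om(h) = \sum_\aa h_\aa$ yields exactly the asserted criterion $\sum_\aa h_\aa = \sum_\aa h'_\aa$.

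For the second clause I would observe that $\zeta_\Om = \iota \circ P$ annihilates $\div A_1(X)$, because $P$ does, and therefore descends to the first homology group $H_0(X) = A_0(X)/\div A_1(X)$. The induced map is $\iota \circ [P]$, where $[P]$ is the isomorphism of Theorem \ref{H0-P}; as a composite of two isomorphisms it is itself an isomorphism $A_0(X)/\div A_1(X) \simeq B_\Om$, which is precisely the claim that $\zeta_\Om$ induces an isomorphism in homology.

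There is essentially no obstacle left at this level: all the substance was spent in proving Theorem \ref{H0-P}, whose injectivity part invokes the Gauss formula \ref{gauss} and the interaction decomposition theorem \ref{int-decomposition}, and in showing $\iota$ to be an isomorphism, which rests on applying \ref{int-decomposition} to $\Om$ so that the family $(Z_\bb)_{\bb \in X}$ sums directly inside $A_\Om$. The only point I would state with care is that $\iota$ is genuinely injective rather than merely the factoring arrow; this is exactly where the directness of $\bigoplus_{\bb \in X} Z_\bb \incl A_\Om$ enters, and hence where the closure of $X$ under intersection is ultimately used.
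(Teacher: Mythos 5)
Your proof is correct and follows essentially the same route the paper intends: the corollary is drawn from Theorem \ref{H0-P} via the factorisation $\zeta_\Om = \iota \circ P$ and the identification $Z_0(X) \simeq B_\Om$ established at the end of the preceding subsection, exactly as you do. Your closing remark correctly locates where the $\cap$-closure of $X$ enters, namely in the directness of $\bigoplus_{\bb \in X} Z_\bb$ inside $A_\Om$ furnished by the interaction decomposition theorem.
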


Theorem \ref{H0-P} implies that 
the first cohomology of $A^*_0(X)$ is also isomorphic to $Z_0(X)$ by duality. 
It is however very instructive to construct the isomorphism explicity, 
as this representation of consistent measures already involves 
a fundamental automorphism $\zeta$ of $A_0(X)$, the {\it zeta transform},
main object of the next chapter. We shall later rewrite theorem \ref{Ker-d} as:  
\begin{equation} \label{zeta-Z0}
    \Ker(d) \:\simeq\: {\color{rouge} \zeta'} \cdot Z_0(X) 
\end{equation} 
{\color{rouge} Erratum:} The action of {\color{rouge} $\zeta'$}
is defined by (\ref{eq:Ker-d}) below. Although it requires to rescale 
the injection of $A_\bb$ into $A_\aa$ by volumic factors, 
it still fits under the definition of zeta transforms given by (\ref{zeta-system}). 
Without the volumic terms, one has 
$\Ker(\nabla) = \zeta \cdot Z_0(X)$ as per (\ref{T_0 Z}), 
where $\nabla$ is also an adjoint of $\div$, but where 
partial integrations have been replaced by conditional expectations 
for the uniform measure.

\begin{thm} \label{Ker-d} 
A collection of measures $(q_\aa) \in A^*_0(X)$ is consistent 
if and only if there exists a collection of interaction potentials 
$(u_\aa) \in Z_0(X)$ such that for all $\aa \in X$: 
    \begin{equation} \label{eq:Ker-d}
        q_\aa = \sum_{\aa \cont \bb} \:
        {\color{rouge} \frac{|E_\bb|}{|E_\aa|}}
        \croc{u_\bb}{-}_\aa 
    \end{equation}
where $\croc{-}{-}_\aa$ denotes the canonical scalar product 
    of $\R^{E_\aa}$.
\end{thm}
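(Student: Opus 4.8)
The plan is to recognise the right-hand side as an explicit linear map $\Phi : Z_0(X) \law A^*_0(X)$, $\Phi(u) = q$ with $q_\aa = \sum_{\aa \cont \bb} \frac{|E_\bb|}{|E_\aa|} \croc{u_\bb}{-}_\aa$, where each $u_\bb \in Z_\bb$ is regarded inside $A_\aa$ through $j_{\aa\bb}$, and to prove that $\Phi$ is an isomorphism of $Z_0(X)$ onto $\Ker(d)$. This splits into three claims: (i) $\Img(\Phi) \incl \Ker(d)$; (ii) $\Phi$ is injective; (iii) $\dim Z_0(X) = \dim \Ker(d)$. Granting these, $\Phi$ is a bijection onto $\Ker(d)$, which yields both directions at once: every $q$ of the stated form is consistent, and every consistent $q$ is of this form.

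The computation for (i) is the heart of the argument. Fixing $\aa \cont \bb$ and testing $\Sigma^{\bb\aa}(q_\aa)$ against $f_\bb \in A_\bb$ through the adjunction $\croc{\Sigma^{\bb\aa}(q_\aa)}{f_\bb} = \croc{q_\aa}{j_{\aa\bb}(f_\bb)}$, each summand contributes $\croc{u_\cc}{j_{\aa\bb}(f_\bb)}_\aa = \sum_{x_\aa} u_\cc(x_\cc)\, f_\bb(x_\bb)$ for $\aa \cont \cc$. Summing first over the atoms of $\aa \setminus \bb$ marginalises $u_\cc$ onto $\cc \cap \bb$. If $\cc \not\incl \bb$, then $\cc \cap \bb \inclst \cc$ and, since $X$ is closed under intersection, $u_\cc \in Z_\cc \incl \Ker(\Sigma^{(\cc \cap \bb)\cc})$ kills this contribution. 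If $\cc \incl \bb$, the sum over $x_{\aa \setminus \bb}$ produces the factor $|E_{\aa \setminus \bb}| = |E_\aa|/|E_\bb|$, yielding $\frac{|E_\aa|}{|E_\bb|} \croc{u_\cc}{f_\bb}_\bb$. The volume prefactors telescope, $\frac{|E_\cc|}{|E_\aa|} \cdot \frac{|E_\aa|}{|E_\bb|} = \frac{|E_\cc|}{|E_\bb|}$, so that $\Sigma^{\bb\aa}(q_\aa) = \sum_{\bb \cont \cc} \frac{|E_\cc|}{|E_\bb|} \croc{u_\cc}{-}_\bb = q_\bb$, i.e. $dq = 0$.

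For (ii), non-degeneracy of the canonical scalar product turns $\Phi(u) = 0$ into $\sum_{\aa \cont \bb} \frac{|E_\bb|}{|E_\aa|} u_\bb = 0$ in $A_\aa$ for every $\aa$; the direct sum decomposition $A_\aa = \bigoplus_{\aa \cont \bb} Z_\bb$ of the interaction decomposition theorem \ref{int-decomposition} then forces each $u_\bb = 0$ (this is precisely injectivity of the zeta transform restricted to $Z_0(X)$). For (iii), the adjunction $\croc{dq}{\ph} = \croc{q}{\div \ph}$ identifies $d$ with $\div^*$, so $\Ker(d) = (\div A_1(X))^\perp$ and $\dim \Ker(d) = \dim A_0(X) - \dim \div A_1(X) = \dim \big( A_0(X)/\div A_1(X) \big)$, which equals $\dim Z_0(X)$ by theorem \ref{H0-P}.

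The main obstacle is the vanishing of the cross terms in (i): one must correctly identify the partial sum over $x_{\aa \setminus \bb}$ with the marginal $\Sigma^{(\cc \cap \bb)\cc}$, for which the closure of $X$ under intersection (so that $\cc \cap \bb \in X$) and the defining kernel property of the canonical $Z_\cc$ are both essential. The volume factors $|E_\bb|/|E_\aa|$ are exactly what make the \emph{marginalisation} identity — rather than a conditional expectation — come out consistent, matching the erratum's distinction between $\zeta'$ and $\zeta$.
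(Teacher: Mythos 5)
Your proposal is correct and follows essentially the same route as the paper: the forward inclusion is the same marginalisation computation (the paper phrases it directly as $\Sigma^{\bb\aa}(Z_\cc)=Z_\cc$ for $\cc\incl\bb$ and $0$ otherwise, you test dually against observables, but the mechanism — $\cap$-closure plus the kernel characterisation of the canonical $Z_\cc$, with telescoping volume factors — is identical), and the converse is concluded by the same dimension count via $\Ker(d)=\Img(\div)^\perp$ and theorem \ref{H0-P}. Your explicit injectivity step via the interaction decomposition is left implicit in the paper but is indeed needed for that dimension argument, so spelling it out is a small improvement rather than a departure.
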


\begin{proof} 
    \newcommand{\Err}[2]{{\color{rouge} \frac{|E_{#1}|}{|E_{#2}|}}\:}
    As $\Sigma^{\bb\aa}$ is the orthogonal projection 
    onto $\R^{E_\bb} \incl \R^{E_\aa}$ for the canonical scalar product 
    of $\R^{E_\aa}$, 
    we have $\Sigma^{\bb\aa}(Z_\cc) = Z_\cc$ 
    for every $\cc \incl \bb$ 
    and $\Sigma^{\bb\aa}(Z_\cc) = 0$ otherwise. 
Hence for every $q \in A^*_0(X)$ of the form (\ref{eq:Ker-d}) one has: 
\begin{equation} \Sigma^{\bb\aa}(q_\aa) 
    = \sum_{\aa \cont \cc}  \Sigma^{\bb\aa}\bigg( \Err{\cc}{\aa} u_\cc \bigg) 
= \sum_{\bb \cont \cc} \Err{\cc}{\bb} u_\cc = q_\bb \end{equation}
    Reciprocally, given a consistent $q \in A^*_0(X)$ one may recover 
    $u \in Z_0(X)$ by Möbius inversion, see \ref{moebius-0}. 
    It however suffices here to conclude by dimension using theorem \ref{H0-P}.
\end{proof} 

\newcommand{\pz}{{\bf z}}
\newcommand{\pb}{{\bf b}} 

Let us now prove the acyclicity of $A_\bullet(X)$ by constructing 
an explicit retraction of $A_\bullet(X)$ to $Z_0(X)$. 
For every $\aa \in X$, we denote by $\pz_\aa$ and $\pb_\aa$ 
the coherent projectors onto $Z_\aa$ and $B_\aa$, so that:
\begin{equation} \id_{A_\aa} = \pz_\aa \oplus \pb_\aa \end{equation}
They induce projectors $\pz$ and $\pb$ 
onto the subspaces $Z_\bullet(X)$ and $B_\bullet(X)$ of $A_\bullet(X)$. 

Denoting by $d$ the adjoint of $\div$ for the canonical 
metric of $A_\bullet(X)$, consider the homotopy 
$\eta = \pz \circ d \circ \pb$. 
Explicitly, 
$\eta : A_p(X) \aw A_{p+1}(X)$ acts on $\ph \in A_p(X)$ by: 
\begin{equation} \eta(\ph)_{\bar \aa \bb} = (-1)^{p+1} \, \pz_\bb(\ph_{\bar \aa}) \end{equation}

\begin{prop}
    The map $\eta = \pz \, d \, \pb$ defines a homotopy
    between the identity of $A_\bullet(X)$ and the homogeneous extension
    of the interaction projection $P : A_\bullet(X) \aw Z_0(X)$: 
    \begin{equation} \eta \div + \div \eta = 1 - P \end{equation}
\end{prop}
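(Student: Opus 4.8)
The plan is to verify the identity $\eta\div + \div\eta = 1 - P$ degreewise, taking the explicit formula $\eta(\ph)_{\bar\aa\bb} = (-1)^{p+1}\pz_\bb(\ph_{\bar\aa})$ as the working definition of $\eta$ and expanding everything through the boundary formula (\ref{boundary}). Since the target $Z_0(X)$ lives in degree $0$, the homogeneous extension of $P$ is understood to vanish on $A_p(X)$ for $p \geq 1$. The identity then reduces to two statements: $\div\eta = \id - P$ on $A_0(X)$ (where $\eta\div = 0$ since $\div$ lands in $A_{-1}(X) = 0$), and $\eta\div + \div\eta = \id$ on each $A_p(X)$ with $p\geq1$. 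Geometrically, $\eta$ appends a bottom vertex to a chain and projects the observable onto the interaction subspace of that vertex, so the whole argument is the usual contracting‑homotopy computation adapted to the projectors $\pz,\pb$.

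First I would isolate the one algebraic input coming from the interaction decomposition. Writing $A_{\aa'} = \bigoplus_{\cc \incl \aa'} Z_\cc$, the projector $P^{\bb\aa'}$ onto $Z_\bb$ is nonzero exactly when $\bb \incl \aa'$ — this is precisely the fact already used in the proof of theorem \ref{H0-P} — and the orthogonal projectors $\{\pz_\cc\}_{\cc \incl \bb}$ sum to $\id_{A_\bb}$, whence $\sum_{\bb \contst \cc} \pz_\cc = \pb_\bb$. These two observations are what convert bare summations over subregions into the complementary projectors $\pz_{\bb_p}$ and $\pb_{\bb_p}$.

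Then I would fix a chain $\bb_0 \contst \dots \contst \bb_p$ and expand both $\div(\eta\ph)_{\bb_0\dots\bb_p}$ and $(\eta\div\ph)_{\bb_0\dots\bb_p}$ with (\ref{boundary}). Every face of $\div$ acting on the front or the interior of the chain produces, in $\div\eta$ and in $\eta\div$, the same observable $\pz_{\bb_p}(\ph_{\dots})$ with opposite signs — this is the same pairing that gives $\div^2 = 0$ — so these cancel term by term. The unpaired contributions come only from the last face: in $\div\eta$ the deletion of the appended vertex yields $\sum_{\bb_p\contst\aa}\pz_\aa(\ph_{\bb_0\dots\bb_p}) = \pb_{\bb_p}(\ph_{\bb_0\dots\bb_p})$, while the surviving $k=p$ middle term of $\div\eta$ equals $-\sum_{\bb_p\contst\aa\contst\bb_{p-1}} P^{\bb_p\aa}(\ph_{\bb_0\dots\bb_{p-1}\aa})$. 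On the other side, the last face of $\eta\div$ gives $\sum_{\bb_{p-1}\contst\aa} P^{\bb_p\aa}(\ph_{\bb_0\dots\bb_{p-1}\aa})$, which I would split into the diagonal term $\aa = \bb_p$ (contributing $\pz_{\bb_p}(\ph_{\bb_0\dots\bb_p})$), the intermediate terms $\bb_p\contst\aa\contst\bb_{p-1}$ (cancelling the residual $\div\eta$ term above), and off‑support terms with $\bb_p\not\incl\aa$ (which vanish because $P^{\bb_p\aa}=0$). What remains is $\pz_{\bb_p}(\ph_{\bb_0\dots\bb_p}) + \pb_{\bb_p}(\ph_{\bb_0\dots\bb_p}) = \ph_{\bb_0\dots\bb_p}$, i.e. $\id$ in degrees $\geq 1$. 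In degree $0$ the same bookkeeping with $\div$ in the form (\ref{div}) gives $\div\eta\ph_\bb = \pb_\bb(\ph_\bb) - \sum_{\aa\contst\bb} P^{\bb\aa}(\ph_\aa)$, which is exactly $\ph_\bb - P(\ph)_\bb$ because $P(\ph)_\bb = \pz_\bb(\ph_\bb) + \sum_{\aa\contst\bb} P^{\bb\aa}(\ph_\aa)$.

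The main obstacle is the last‑face bookkeeping: the intermediate summation range of $\div\eta$ (strict $\bb_p \contst \aa \contst \bb_{p-1}$) does not literally match that of $\eta\div$ (all $\aa \contst \bb_{p-1}$), and reconciling them requires simultaneously the vanishing $P^{\bb_p\aa} = 0$ for $\bb_p \not\incl \aa$ and the completeness relation $\sum_\cc \pz_\cc = \id$, all while tracking the signs through (\ref{boundary}) so that only the front and interior terms cancel. I would also treat the derivation of the explicit formula $\eta = \pz\,d\,\pb$ from the canonical‑metric adjoint $d = \div^*$ as a separate preliminary, since that is where the volumic rescalings flagged in the erratum appear; for the homotopy identity itself only the stated explicit formula is needed.
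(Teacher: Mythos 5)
Your proof is correct and is in substance the paper's own: both are degreewise verifications whose only nontrivial inputs are the two interaction\-/decomposition facts you isolate, namely $\sum_{\cc \incl \bb} P^{\cc\bb} = \id_{A_\bb}$ (hence $\sum_{\cc \inclst \bb} P^{\cc\bb} = \pb_\bb$) and $P^{\bb\aa} = 0$ unless $\bb \incl \aa$, with the last faces of $\div$ producing $\pb_{\bb_p} + \pz_{\bb_p} = \id$ in positive degrees and $1 - P$ in degree zero. The paper merely organizes the same cancellations differently, pre-filtering through the triangular splitting $\div = \pb \div \pb + \pb \div \pz + \pz \div \pz$ to identify $(\pb \div \pz)\eta = \pb$ and $\eta(\pb \div \pz) = \pz - \pz_0$ abstractly before carrying out exactly the residual face-by-face bookkeeping you describe.
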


\begin{proof} 
    First, notice that $B_\bullet(X)$ is stable under $\div$ so that
    $\pz \div \pb = 0$ and $\div$ splits in the triangular form:
    \begin{equation} \div = \pb \div \pb + \pb \div \pz + \pz \div \pz \end{equation}
    As $\eta =\pz d \pb$, we then have: 
    \begin{equation} \div \eta + \eta \div =
    (\pb \div  \pz) \eta  + \eta (\pb \div \pz) 
    + \pz (\div \eta  + \eta \div) \pb 
    \end{equation}

    As maps between $Z_\bullet(X)$ and $B_\bullet(X)$, 
    the interaction theorem implies that $\pb \div \pz$
    inverts $\eta$ on non-zero degrees. More precisely, 
    for every $p \geq 1$ and $\ph \in A_p(X)$ we have:
    \begin{equation} ( \pb \div \pz )(\ph)_{\bar \bb} = (-1)^{p}
    \sum_{\bar \bb \contst \cc'} \pz_\cc(\ph_{\bar \bb \cc'}) \end{equation}
    Denoting by $\pz_0$ the homogeneous projection onto $Z_0(X)$
    induced by $\pz$, we may write: 
    \begin{equation} (\pb \div \pz) \eta = \pb \txt{and} 
    \eta (\pb \div \pz) = \pz - \pz_0 \end{equation}
    so that $(\pb \div \pz) \eta + \eta (\pb \div \pz) = 1 - \pz_0$ 
    and it only remains to show that 
    $\pz(\div \eta + \eta \div) \pb = \pz_0 - P$. 

    On the zero degree, 
    we have for all $u \in A_0(X)$ and $\bb \in X$:
    \begin{equation} (\pz \div \eta)(u)_\bb = - \sum_{\aa' \contst \bb} \pz_\bb(u_{\aa'}) \end{equation}
    which is precisely $\pz_0(u)_\bb - P(u)_\bb$. 

    On higher degrees, we have for every $\ph \in A_p(X)$ and 
    $\bar \bb \in N_p(X)$: 
    \begin{equation} (\pz \div \eta)(\ph)_{\bar \bb} 
    = (-1)^{p+1} \sum_{k = 0}^p (-1)^k \sum_{\dr_k \bar \aa' = \bar \bb}
    \pz_{\aa'_{p+1}}(\ph_{\aa'_0 \dots \aa'_p})
    \end{equation}
    The last summand $k = p+1$ of  $\div \eta (\ph)_{\bar \bb}$ is
    valued in $B_{\bar \bb}$ and truncated by $\pz_{\bb_p}$
    which enforces $\aa'_{p+1} = \bb_p$.
    On the other hand, we have:
    \begin{equation} (\eta \div \pb)(\ph)_{\bar \bb} = 
    \pz_{\bb_p}\bigg( (-1)^p 
    \sum_{k = 0}^p (-1)^k \sum_{\div_k \bar\bb' = \bb_0 \dots \bb_{p-1}}
    \pb_{\bb'_p}(\ph_{\bar \bb'}) \bigg) 
    \end{equation}
Whenever $\bb_p \not \inclst \bb'_p$, 
we have $\pz_{\bb_p} \circ \pb_{\bb'_p} = 0$.
We may hence assume that $\bb_p \inclst \bb'_p$ and 
let $\bar \aa' = \bar \bb'\bb_p$ so that for every $k \leq p$ we have:
    \begin{equation} \div_k \bar \aa' = (\div_k \bar \bb') \bb_p =  \bar \bb \end{equation}
Comparing with the above, we see that 
$\pz \div \eta + \eta \div \pb$ vanishes on non-zero degrees, 
while it is equal to $\pz \div \eta = \pz_0 - P$ otherwise,
which finishes to show that:
\begin{equation} \div \eta + \eta \div = 1 - \pz_0 + \pz_0 - P = 1 - P \end{equation}
\end{proof}

\begin{thm} 
    The homology groups of $A_\bullet(X)$ are given by: 
    \begin{equation} H_0(X) \simeq Z_0(X) \txt{and}  H_n(X) = 0 \txt{for} n \geq 1 \end{equation}
\end{thm}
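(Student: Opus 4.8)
The plan is to read the result off directly from the contracting homotopy built in the preceding proposition, which supplies the operator $\eta = \pz\, d\, \pb$ satisfying $\div \eta + \eta \div = 1 - P$ as an endomorphism of $A_\bullet(X)$. The feature I would exploit, established in that proof, is that the homogeneous extension $P$ is concentrated in degree zero: it acts as the interaction projection $A_0(X) \aw Z_0(X)$ and vanishes identically on $A_n(X)$ for every $n \geq 1$.

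First I would dispose of the positive degrees. Fix $n \geq 1$ and let $z \in A_n(X)$ be any cycle, so that $\div z = 0$. Evaluating the homotopy identity on $z$ gives
\[ z - P(z) = (\div \eta + \eta \div)(z) = \div(\eta z) + \eta(\div z) = \div(\eta z). \]
Since $P$ vanishes in degree $n \geq 1$, we have $P(z) = 0$, whence $z = \div(\eta z) \in \Img(\div)$. Thus every cycle of degree $n \geq 1$ is a boundary, and $H_n(X) = \Ker(\div)/\Img(\div) = 0$.

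For the remaining degree zero, I would simply invoke Theorem \ref{H0-P}, which already identifies the interaction projection $P : A_0(X) \aw Z_0(X)$ as inducing an isomorphism $A_0(X)/\div A_1(X) \simeq Z_0(X)$; this is exactly the statement $H_0(X) \simeq Z_0(X)$.

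The entire difficulty has in fact been absorbed into the construction of $\eta$ and the verification of $\div \eta + \eta \div = 1 - P$, which rests on the interaction decomposition theorem \ref{int-decomposition} through the coherent projectors $\pz, \pb$. Once that relation is in hand, the acyclicity is a one-line consequence of the standard fact that a complex admitting a contracting homotopy onto a subcomplex carries the homology of that subcomplex. I expect no further obstacle; the only point meriting a word of care is confirming that $P$ genuinely annihilates positive degrees, which is precisely the content of the higher-degree computation in the proposition's proof.
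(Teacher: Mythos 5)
Your proof is correct and follows exactly the route the paper intends: the theorem is stated immediately after the homotopy proposition precisely because $\div \eta + \eta \div = 1 - P$ with $P$ supported in degree zero kills all higher cycles, while degree zero is already theorem \ref{H0-P}. The paper leaves this deduction implicit; you have simply written it out.
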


\chapter{Combinatorics}

In this chapter, we review the classical theory of Dirichlet
convolution and Möbius inversion.  They give algebraic foundations  
to many combinatorial problems, starting with the so-called 
inclusion-exclusion principles, generalising the usual formula
$|E_i \cup E_j| = |E_i| + |E_j| - |E_i \cap E_j|$.
The following structures however
really emerged from considerations of number theory.

\begin{center}
\includegraphics[width=0.3\textwidth]{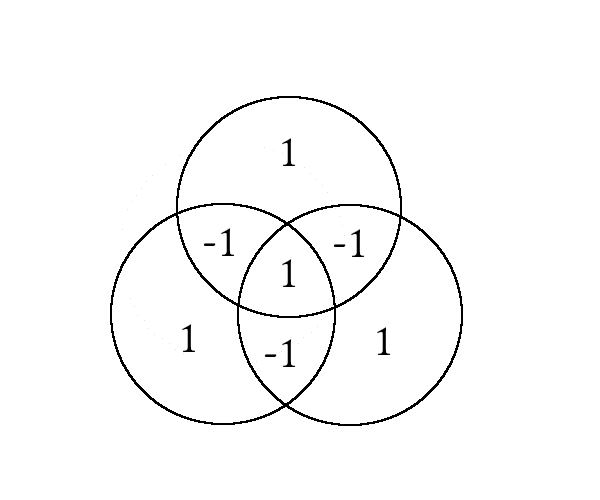}
\end{center}

These methods give 
insight on the conceptual distinction between 
intensive and extensive quantities, stemming from the definition 
internal hamiltonian of a region $\aa \incl \Om$ as a sum of local interactions: 
\begin{equation} H_\aa = \sum_{\aa \cont \bb} h_\bb \end{equation}
The extensivity of $H$ may be thought of as a consequence of the 
quick decay of $h$ on large regions\footnote{
    This remark shall be made more precise in the next chapter, where
    the extensivity of entropy {\it a priori} justifies
    the Bethe approximation scheme. 
}. 
Möbius inversion 
will show the relation between $H$ and $h$ to be bijective, 
a fundamental correspondence which will be written as:
\begin{equation} H = \zeta \cdot h \quad\eqvl\quad h = \mu \cdot H \end{equation}

In section 1, we start by reviewing 
the classical theory of incidence algebra. 
In section 2 we investigate some interesting properties and 
exhibit a locally cohomological character of the zeta transform 
$h \mapsto \zeta \cdot h$. 
In the last and exploratory section, we propose an extension of $\zeta$ and $\mu$ 
as two reciprocal endomorphisms on the whole complex of 
observable fields $A_\bullet(X)$. This extension of $\zeta$ to higher degrees 
is done consistently with the local cocyle properties satisfied on the zero degree,
while the degree one Möbius transform will allow for various 
enhancements of the belief propagation algorithm.

\section{Dirichlet Convolution}

The convolution product on the set 
of arithmetic functions $f : \N^* \aw \C$ was first introduced
by Dirichlet in 1837 to prove his theorem on
arithmetic progressions. Relying on the 
partial order structure of $\N^*$ induced by divisibility,
it is given by:
\begin{equation} (f * g)(n) = \sum_{d | n} f(d) \, g(n/d) \end{equation}
A meromorphic function $\hat f$ may be associated to every arithmetic function 
$f$ of subpolynomial growth 
by analytic extension of its Dirichlet generating series, defined for
all $s \in \C$ such that ${\rm Re}(s)$ is large enough by:
\begin{equation} \hat f (s) = \sum_{n \in \N^*} \frac{f(n)}{n^s} \end{equation}
The assignment $f \mapsto \hat f$ 
interlaces Dirichlet convolution with the usual product on $\C$
and we have for every arithmetic functions $f$ and $g$:
\begin{equation} \widehat{f * g} = \hat f \cdot \hat g \end{equation}
The Riemann zeta function $\hat \zeta$ extends the 
generating series of the arithmetic function $\zeta$,
defined by $\zeta(n) = 1$ for all $n \in \N^*$. 
The classical Möbius inversion formula states 
that $\zeta$ has an inverse for $*$ given by $\mu(n) = (-1)^k$ 
if $n$ is the product of $k$ distinct primes, and zero otherwise.
This in turn yields the coefficients of the generating series of
$\hat \mu = 1/ \hat \zeta$.

Dirichlet convolution and Möbius inversion formulas were then studied in 
more general contexts, notably by Fréchet, and a fundamental and seminal reference 
on the subject is the general treatment of Möbius functions 
given by Rota in $\cite{Rota-64}$.
Dirichlet convolution may not only be defined on a locally finite partial
order but on any locally finite category. We however restrict to the former
and refer the interested reader to $\cite{Leinster-08}$ for greater
generality.

\subsection{Incidence Ring}

Given a locally finite\footnote{
    X is locally finite if the set of non-degenerate 
    chains from any element to an other is finite. 
}
partially ordered set $X$ 
and a unital ring $R$, 
let us denote by $R_1(X)$ the $R$-module 
of $R$-valued functions on the set $ N_1 X$ of $1$-chains in $X$: 
\begin{equation}  R_1(X) = 
\big\{\, \ph : X \times X \aw R \st \aa \not\aw \bb \impq \ph_{\aa\bb} = 0 \,\big\}
= R^{ N_1 X} \end{equation}
It is equipped with the Dirichlet convolution product $*$ defined 
for every $\ph, \psi \in  R_1(X)$ by:
\begin{equation} (\ph * \psi)_{\aa\cc} = \sum_{\aa \aw \bb' \aw \cc}
\ph_{\aa\bb'} \cdot \psi_{\bb'\cc} \end{equation}
This product is associative. It has a unit ${\bf 1}$, which is
the Kronecker symbol\footnote{
    We prevent from using the usual Kronecker symbol
    notation $\delta$ to avoid confusion
    with codifferentials.
} 
defined by ${\bf 1}_{\aa\aa} = 1$ 
and ${\bf 1}_{\aa\bb} = 0$ if $\aa \neq \bb$. 

The unital ring  $\big(  R_1(X) , + , * \big)$ 
is called the incidence ring of $X$, or incidence algebra 
when $R$ is a field.
Combinatorics are however mostly interested 
with relative integers and it is remarkable that Möbius inversion 
can be carried out on $\Z$.

\subsection{Incidence Modules}

Let $M$ denote a module on $R$, and denote by $ M_1(X)$ 
the space of $M$-valued functions on $ N_1 X $. 
Then $ M_1(X)$ is a module on the incidence ring $ R_1(X)$ 
for the left action defined for $m \in  M_1(X)$ by: 
\begin{equation} (\ph \cdot  m)_{\aa\cc} = \sum_{\aa \aw \bb' \aw \cc}
\ph_{\aa\bb'} \cdot m_{\bb'\cc} \end{equation}
when $M$ is a left $R$-module, and symetrically, for the right action:
\begin{equation} (m \cdot \psi)_{\aa\cc} = \sum_{\aa \aw \bb' \aw \cc}
m_{\aa\bb'} \cdot \psi_{\bb'\cc} \end{equation}
when $M$ is a right $R$-module.

Let us now denote by $M_0(X)$ the space of $M$-valued functions on $X$. 
It is also equipped with a left action of $ R_1(X)$ defined for 
every $n \in M_0(X)$ by:
\begin{equation} (\ph \cdot n)_\aa = \sum_{\aa \aw \bb'} \ph_{\aa\bb'} \cdot n_{\bb'} \end{equation} 
when $M$ is a left module, and otherwise with the right action:
\begin{equation} (n \cdot \psi)_\bb = \sum_{\aa' \aw \bb} n_{\aa'} \cdot \psi_{\aa'\bb} \end{equation}

Following Rota, one may relate these two actions and 
append to $X$ initial and terminal elements $0$ and $1$ 
when necessary, defining a possibly larger poset $\bar X$.
To $m \in  M_1(X)$ associate $n \in M_0(X)$ by $n_\aa = m_{\aa 1}$. 
This interlaces the two left actions of $ R_1(\bar X)$ on 
when $M$ is a left module. 
Similarly, letting $n_\bb = m_{0 \bb}$ interlaces the two right actions. 

Notice that when $M = R = \R$, the vector spaces $ R_1(X)$ and $\R_0(X)$ 
have a canonical scalar product for which the left and right actions define 
adjoint endomorphisms.

\subsection{Systems} 

Given a cosystem $R: X^{op} \aw \Ring$, 
we may similarly equip the space of upper $1$-fields 
$\R^1(X)$ with a convolution product. 
There is a ring morphism $R_\aa \wa R_\bb$
for every $\aa \aw \bb$ in $X$, 
and for $r_\bb \in R_\bb$ we denote by $[r_\bb]_\aa$ its image in $R_\aa$.
Then for $\ph, \psi \in  R_1 (X)$, let:
\begin{equation} (\ph * \psi)_{\aa\cc} = \sum_{\aa \aw \bb' \aw \cc}  
\ph_{\aa \bb'} \cdot [\psi_{\bb' \cc}]_\aa 
\end{equation}
By duality, we may similarly define the convolution product $*$ on 
the space of lower fields $R_1(X)$  
when $R: X \aw \Ring$ is a system of rings.

When $M : X^{op} \aw \Ring$ is a module over the cosystem $R$, 
we may extend the left action of $ R^1(X)$
on the space of upper $1$-fields $ M^1(X)$. 
We have a morphism $M_\aa \wa M_\bb$ for every $\aa \aw \bb$ in $X$, 
and for every $m_\bb \in M_\bb$, we still denote by $[m_\bb]_\aa$ its
image in $M_\aa$. The action is then given by:
\begin{equation} (\ph \cdot \psi)_{\aa\cc} = \sum_{\aa \aw \bb' \aw \cc}  
\ph_{\aa \bb'} \cdot [m_{\bb' \cc}]_\aa 
\end{equation}
And a left action of $ R^1(X)$ can similarly be defined on 
$M_0(X)$ by:
\begin{equation} (\ph \cdot m)_\aa = \sum_{\aa \aw \bb'} 
\ph_{\aa\bb'} \cdot  [m_{\bb'}]_\aa \end{equation}
The action of $ \Z_1(X)$ on the 
module $A_0(X)$ of observable fields,
falls into this case. It is the main example we shall be interested in,
before considering functors  
$M$ assigning to each $M_\aa$ the space of functionals on $A_0(\LL^\aa)$
or on the set $\Sigma_\aa$ of states of $A_\aa$.

If $R$ is a fixed ring, 
then $ R^1(X) =  R_1(X)$ has a natural action on 
the space of lower fields $M_1(X)$:
\begin{equation} (\ph \cdot m)_{\aa\cc} = \sum_{\aa \aw \bb'} 
\ph_{\aa\bb'} \cdot m_{\bb' \cc} \end{equation} 
Although functoriality of $M$ does not appear, this case is worth mentioning
as it covers for instance
the action of $ \Z_1(X)$ on $A_1(X)$. 

Dual constructions take place when $M$ is a covariant functor. 
In particular, this is the case when $M = A^*$ is the module of linear 
forms on observables,
and the right action of $ \Z_1(X)$ on $A^*_0(X)$ is the adjoint 
of its left action on $A_0(X)$.

\section{Properties of the Zeta Transform}

The zeta function $\zeta \in  \Z_1(X)$ 
is defined by $\zeta_{\aa\bb} = 1$ for all $\aa \aw \bb$ in $X$. 
Given a cosystem of abelian groups $M : X^{op} \aw \Ab$, 
the left action of $\zeta$ on $M_0(X)$ is given by:
\begin{equation} \label{zeta-system}
(\zeta \cdot f)_\aa = \sum_{\aa \aw \bb} [f_{\bb}]_\aa 
\end{equation}
This section exposes elementary properties
of the endomorphism thus defined on $M_0(X)$. 
The first one, bijectivity, is given by the classical Möbius inversion 
formulas. The next one, locality, is also significant however obvious it is. 
As an easy consequence of the previous chapter, 
we then introduce a local cocycle property satisfied by the zeta transform.

\subsection{Möbius Inversion} 

The fundamental theorem of Möbius inversion
states that $\zeta$ has an inverse $\mu \in  \Z_1(X)$ for Dirichlet convolution,
it naturally implies the bijectivity of the zeta transform on any module.

\begin{thm}[Möbius - Rota] \label{moebius-0}
    The inverse of $\zeta \in \Z_1(X)$ is 
    the Möbius function $\mu \in \Z_1(X)$: 
    \begin{equation} \mu = \sum_{k \geq 0} (-1)^k (\zeta - {\bf 1})^k \end{equation}
\end{thm}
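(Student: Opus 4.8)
The plan is to write $\zeta = {\bf 1} + N$ where $N = \zeta - {\bf 1}$ is the ``strictly upper triangular'' part, and to realise $\mu$ as the Neumann series $\sum_{k \geq 0}(-1)^k N^k$, whose well-definedness rests on a local nilpotency of $N$. First I would record that $N_{\aa\bb} = 1$ precisely when $\aa \contst \bb$ and $N_{\aa\bb} = 0$ otherwise, so that convolution powers count chains: an immediate induction on $k$, using the definition of $*$, gives
\[
N^k_{\aa\cc} = \# \big\{ \, \aa = \bb_0 \contst \bb_1 \contst \dots \contst \bb_k = \cc \, \big\},
\]
the number of non-degenerate $k$-step chains from $\aa$ to $\cc$ in $X$, with the conventions $N^0 = {\bf 1}$ and $N^1 = N$.

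The key step is to exploit local finiteness. For fixed $\aa \aw \cc$, the hypothesis that the non-degenerate chains between them form a finite set means there is a maximal length $\ell(\aa,\cc)$, and hence $N^k_{\aa\cc} = 0$ for every $k > \ell(\aa,\cc)$. Consequently the series $\mu = \sum_{k \geq 0}(-1)^k N^k$ has, at each coordinate $(\aa,\cc)$, only finitely many nonzero summands; it therefore defines a genuine element of $\Z_1(X)$, with integer entries since each $N^k_{\aa\cc} \in \N$. This same coordinatewise finiteness is what legitimises the algebraic manipulation to follow.

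It then remains to verify that $\mu$ is a two-sided inverse, which is a telescoping computation. Writing $S = \sum_{k \geq 0}(-1)^k N^k$ and using $\zeta = {\bf 1} + N$,
\[
\zeta * S = \sum_{k \geq 0}(-1)^k N^k + \sum_{k \geq 0}(-1)^k N^{k+1}
= \sum_{k \geq 0}(-1)^k N^k - \sum_{k \geq 1}(-1)^k N^k = N^0 = {\bf 1},
\]
and the identical reindexing gives $S * \zeta = {\bf 1}$; at each coordinate these are finite sums by the previous step, so the cancellation is valid. By associativity of $*$ a two-sided inverse is unique, identifying $\mu = S = \zeta^{-1}$. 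The main obstacle is not the algebra but justifying the infinite series rigorously: everything hinges on converting the local finiteness of $X$ into the coordinatewise vanishing $N^k_{\aa\cc} = 0$ for large $k$, after which the telescoping is routine.
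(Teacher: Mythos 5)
Your proof is correct and follows essentially the same route as the paper: decompose $\zeta = {\bf 1} + (\zeta - {\bf 1})$, observe that powers of $\zeta - {\bf 1}$ count non-degenerate chains, and invoke local finiteness to make the alternating (Neumann) series coordinatewise finite. You spell out the telescoping verification that the paper leaves implicit, but the underlying argument is identical.
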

    
\begin{proof}
The $n$-th power of $\zeta$ for $*$ counts the number of $n$-chains 
between any two elements $\aa, \bb \in X$:
\begin{equation} \zeta^n_{\aa\bb} = \sum_{\aa = \aa_0 \aw \dots \aw \aa_n = \bb} 1 \end{equation} 
and $(\zeta - {\bf 1})^n_{\aa\bb}$ similarly 
counts the number of non-degenerate $n$-chains 
from $\aa$ to $\bb$. Because $X$ is locally finite, there exists
$N$ such that $(\zeta - {\bf 1})^N_{\aa\bb} = 0$
and $\zeta = {\bf 1} + (\zeta - {\bf 1})$ is invertible.
\end{proof}

Note that more practical identities 
arise from the relations $\mu * \zeta = \zeta * \mu = {\bf 1}$,
allowing to compute the values of $\mu$ inductively. 
Starting from $\mu_{\aa\aa} = 1$, 
one may for instance iterate over $\cc$ under $\aa$:
\begin{equation} \mu_{\aa\cc} = 1 - \sum_{\aa \geq \bb' > \cc} \mu_{\aa\bb'} \end{equation} 
The Möbius function $\mu$ is closely related to diverse inclusion-exclusion
principles. 
Consider for example the problem of finding a collection of integers 
$(c_\aa) \in \Z_0(X)$ such that for all $\bb \in X$: 
\begin{equation} \sum_{\aa' \aw \bb} c_{\aa'} = 1 \end{equation}
One may compute these coefficients inductively starting from maximal cells.
Note that this expression involves the right action of $\zeta$,
and as by Möbius inversion 
$c \cdot \zeta = 1$ is equivalent to $c = 1 \cdot \mu$ 
we have: 
\begin{equation} c_\bb = \sum_{\aa' \aw \bb} \mu_{\aa' \bb} \end{equation}
We call $c = 1 \cdot \mu$ 
and $\bar c = \mu \cdot 1$  
the right and left Möbius numbers. 

Given $X \incl \tilde X$, let us say that $X$ 
is {\it full} if for all $\aa \in X$, every $\bb \in \tilde X$ 
such that $\bb \incl \aa$ is also in $X$. 

\begin{prop} 
    If $X$ is full in $\tilde X$, 
    then the restriction of the Möbius function of $\tilde X$ 
    to $X$ is the Möbius function of $X$. 
\end{prop}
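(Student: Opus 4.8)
The plan is to exploit the \emph{locality} of the Möbius function: for $\aa \cont \bb$, the value $\mu_{\aa\bb}$ is entirely determined by the order induced on the interval $[\bb,\aa] = \{\cc \st \aa \cont \cc \cont \bb\}$, and the fullness hypothesis is exactly what guarantees that this interval is the same whether computed in $X$ or in $\tilde X$.

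First I would make the locality explicit from the identity $\mu * \zeta = {\bf 1}$. Written in coordinates, using $\zeta_{\cc\bb} = 1$, it reads, for every $\aa \cont \bb$,
\begin{equation}
\sum_{\aa \cont \cc \cont \bb} \mu_{\aa\cc} = {\bf 1}_{\aa\bb} .
\end{equation}
The case $\aa = \bb$ gives $\mu_{\aa\aa} = 1$, and for $\aa \contst \bb$ isolating the term $\cc = \bb$ yields the recursion
\begin{equation}
\mu_{\aa\bb} = - \sum_{\aa \cont \cc \contst \bb} \mu_{\aa\cc} .
\end{equation}
Here $\cc$ runs over the elements of $[\bb,\aa]$ strictly containing $\bb$, each of which has a strictly smaller interval $[\cc,\aa] \inclst [\bb,\aa]$. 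Fixing $\aa$ and inducting on the cardinality of $[\bb,\aa]$ (finite by local finiteness), this shows that $\mu_{\aa\bb}$ is computed using only the restriction of the order to $[\bb,\aa]$.

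Next I would compare intervals. Fix $\aa, \bb \in X$ with $\aa \cont \bb$. The inclusion $X \incl \tilde X$ gives $[\bb,\aa]_X \incl [\bb,\aa]_{\tilde X}$ at once. Conversely, any $\cc \in \tilde X$ with $\aa \cont \cc \cont \bb$ satisfies $\cc \incl \aa$ with $\aa \in X$, so fullness forces $\cc \in X$; hence $[\bb,\aa]_{\tilde X} \incl [\bb,\aa]_X$. The two intervals therefore coincide, carrying the same induced order.

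Finally I would conclude by induction on $|[\bb,\aa]|$ that $\mu^{\tilde X}_{\aa\bb} = \mu^X_{\aa\bb}$ for all $\aa, \bb \in X$ with $\aa \cont \bb$. The base case $\bb = \aa$ is $\mu_{\aa\aa} = 1$ in both posets. For $\aa \contst \bb$, the recursion expresses $\mu^{\tilde X}_{\aa\bb}$ as a signed sum of $\mu^{\tilde X}_{\aa\cc}$ over $\cc \in [\bb,\aa]_{\tilde X}$ with $\cc \contst \bb$; by the interval comparison every such $\cc$ lies in $X$, so the inductive hypothesis gives $\mu^{\tilde X}_{\aa\cc} = \mu^X_{\aa\cc}$, and the $\tilde X$- and $X$-recursions then range over the same index set. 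Hence $\mu^{\tilde X}_{\aa\bb} = \mu^X_{\aa\bb}$. There is no deep obstacle here; the one point requiring care is precisely that every element entering the $\tilde X$-recursion stay inside $X$ — which is what fullness provides — since otherwise $\mu^{\tilde X}$ could pick up contributions indexed by some $\cc \notin X$ and the restriction would break down.
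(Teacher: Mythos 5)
Your proof is correct and complete: the interval-locality of the recursion $\mu_{\aa\bb} = -\sum_{\aa \cont \cc \contst \bb}\mu_{\aa\cc}$ together with the observation that fullness forces $[\bb,\aa]_{\tilde X} = [\bb,\aa]_X$ is exactly the right mechanism, and your induction on $|[\bb,\aa]|$ is sound. The paper states this proposition without proof, and your argument is the canonical one it implicitly relies on; nothing is missing.
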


\begin{prop} \label{c-mu} 
If $\tilde X = \{ \Om \} \sqcup X$ completes $X$ with an initial element $\Om$
and $\tilde \mu \in \Z_1(\tilde X)$ 
denotes the Möbius function on $\tilde X$, then 
for every $\bb \in X$ we have $c_\bb = - \tilde \mu_{\Om \bb}$. 
\end{prop}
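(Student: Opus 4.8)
The plan is to read off the statement directly from Möbius inversion in the completed poset $\tilde X$, using the preceding fullness proposition to identify the Möbius values of $X$ with those of $\tilde X$ on pairs lying in $X$. First I would recall the coordinate expression of the right Möbius numbers: from $c = 1 \cdot \mu$ we have
\[ c_\bb = \sum_{\aa' \aw \bb} \mu_{\aa'\bb}, \]
the sum running over $\aa' \in X$ with $\aa' \cont \bb$, where $\mu$ is the Möbius function of $X$.

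Next I would check that $X$ is full in $\tilde X$. The only element of $\tilde X \setminus X$ is the maximal element $\Om$, which lies above every element of $X$ and is contained in none of them; hence for any $\aa \in X$ there is no $\bb \in \tilde X$ with $\bb \incl \aa$ other than those already in $X$. The previous proposition then applies and yields $\mu_{\aa'\bb} = \tilde\mu_{\aa'\bb}$ for all $\aa', \bb \in X$, so that $c_\bb = \sum_{\aa' \in X,\, \aa' \cont \bb} \tilde\mu_{\aa'\bb}$.

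The key step is to apply the Möbius inversion relation $\tilde\zeta * \tilde\mu = {\bf 1}$ of $\tilde X$ to the pair $(\Om, \bb)$ with $\bb \in X$. Since $\Om \neq \bb$ the right-hand side vanishes, and since $\tilde\zeta_{\Om\cc'} = 1$ for every $\cc'$ below $\Om$, this reads
\[ 0 = \sum_{\Om \aw \cc' \aw \bb} \tilde\mu_{\cc'\bb}. \]
The index $\cc'$ ranges over $\{\Om\}$ together with all $\cc' \in X$ satisfying $\cc' \cont \bb$. Splitting off the term $\cc' = \Om$, which contributes $\tilde\mu_{\Om\bb}$, leaves
\[ \tilde\mu_{\Om\bb} + \sum_{\cc' \in X,\, \cc' \cont \bb} \tilde\mu_{\cc'\bb} = 0, \]
and the remaining sum is exactly $c_\bb$ by the identification above. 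Therefore $\tilde\mu_{\Om\bb} = - c_\bb$, which is the claim.

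I do not expect a genuine obstacle: once the bookkeeping is arranged, the argument is a single application of $\tilde\zeta * \tilde\mu = {\bf 1}$ followed by isolating the top term. The only point requiring care is verifying the fullness hypothesis, since it is precisely what guarantees that $c_\bb$, defined through the Möbius function of $X$, coincides with a partial sum of $\tilde\mu_{\,\cdot\,\bb}$; without this identification the two Möbius functions could in principle differ and the cancellation would not close.
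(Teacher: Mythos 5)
Your proof is correct and follows essentially the same route as the paper's: apply the relation $\tilde\zeta * \tilde\mu = {\bf 1}$ in $\tilde X$ to the pair $(\Om, \bb)$, split off the $\Om$ term, and recognise the remaining sum as $c_\bb$ via the identification of $\tilde\mu$ with $\mu$ on pairs in $X$. The only difference is that you make explicit the fullness check that the paper leaves implicit, which is a reasonable bit of added care.
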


\begin{proof} 
The Möbius inversion formula in $\tilde X$ between $\Om$ and $\bb \in X$ 
    gives:
\begin{equation} \sum_{\Om \aw \aa \aw \bb} \tilde \mu_{\aa\bb} =
    \tilde \mu_{\Om\bb} + \sum_{\aa \aw \bb} \mu_{\aa\bb} = 
    \tilde \mu_{\Om \bb} + c_\bb = 0 \end{equation}
\end{proof}

{\bf Examples.} 
\begin{enumerate}[itemsep=3px]
\item Let $X = \N$ denote the standard total order of integers.
     The Möbius function is given by
    $\mu_{\aa\aa} = 1$ and $\mu_{\aa\bb} = -1$ if $\bb = \aa + 1$, zero otherwise.
We recover the classical finite differences formula:
\begin{equation} U_\aa = \sum_{\bb = 0}^\aa u_\bb 
\quad \eqvl \quad 
u_\aa = U_{\aa} - U_{\aa - 1} \end{equation}
which is the discrete version of the fundamental theorem of calculus.

\item Let $X = \Part(\Om)$ for a finite set $\Om$ and denote by $|\aa|$ the 
    cardinal of $\aa \incl \Om$. The Möbius function is then given by: 
    \begin{equation} \mu_{\aa\bb} = (-1)^{|\aa| - |\bb|} \end{equation}
    Suppose given now for each $i \in \Om$ 
    a measurable subset ${\cal A}_i$ of some probability space, relate to:
    \begin{equation} \P(\cup_{i\in \aa} {\cal A}_i) = 
    \sum_{\bb \incl \aa} (-1)^{|\bb|} \P(\cap_{j \in \bb} {\cal A}_j) \end{equation}

\item Total interaction\footnote{
        See the interaction decomposition theorem \ref{int-decomposition} 
        for the definition of $P^{\bb\aa}$. 
    }$P^{\bb\aa}$ and marginal projections $\Sigma^{\bb\aa}$. 
    \begin{equation} \Sigma^{\bb\aa} = \sum_{\bb \aw \cc'} P^{\cc'\aa} \quad\eqvl\quad 
        P^{\bb\aa} = \sum_{\bb \aw \cc'} \mu_{\bb\cc'} \cdot \Sigma^{\cc'\aa} \end{equation}
    This is a property of the canonical metric and the marginal projections,
        beware that in general:
        \begin{equation} \E^{\bb\aa} \neq \sum_{\bb \aw \cc'} P^{\cc'\aa} \end{equation}
        as $\bigoplus_{\cc \not\in \LL^\bb} Z_\cc$ might not be the orthogonal 
        to $A_\bb$ in $A_\aa$. 
\end{enumerate}

\subsection{Locality}

For every $\aa \in X$, we denote by 
$\LL^\aa$ the cone under $\aa$ consisting of 
those $\bb \in X$ such that $\aa \aw \bb$. 
As subset of $X$, there is a natural restriction map: 
\begin{equation} {\rm r}_\aa : M_0(X) \law M_0(\LL^\aa) \end{equation}
For every $f \in M_0(X)$
the value of $(\zeta \cdot f)_\aa$ only depends on the values
$f_\bb$ for $\aa \aw \bb$.  
In other words, $\zeta$ commutes with the restriction to $\LL^\aa$
and we have ${\rm r}_\aa \circ \zeta = \zeta \circ {\rm r}_\aa$.
\begin{equation} \bcd 
M_0(X) \rar{{\rm r}_\aa} \dar[swap]{\zeta} & M_0(\LL^\aa) \dar{\zeta} \\
M_0(X) \rar[swap]{{\rm r}_\aa} & M_0(\LL^\aa) 
\ecd \end{equation}
This locality property comes from the form of the action
of $\Z_1(X)$ on $M_0(X)$ and is absolutely not specific to $\zeta$. 
Such local endomorphisms preserve the sheaf structure\footnote{
    For the restriction maps $M_0(Y) \aw M_0(Z)$ induced by every 
    $Z \incl Y \incl X$. 
}
of $M_0(X)$.

Denoting by $i_\aa : M_0(X) \aw M_\aa$ the evaluation on $\aa$, 
a consequence of the locality of $\zeta$ is that 
we may factorise $i_\aa \circ \zeta$ through ${\rm r}_\aa$:
\begin{equation} \bcd
M_0(X) \rar{{\rm r}_\aa} \drar[swap]{i_\aa \zeta} 
& M_0(\LL^\aa) \dar[dashed]{\zeta_\aa} \\
    &   M_\aa
\ecd \end{equation}
We denote by $\zeta_\aa$ the factorised map,  
and will now show that $\zeta_\aa$ induces a map in homology.

\subsection{Local Cocycle Property}

In this paragraph, we relate the Gauss formula \ref{gauss} 
to properties of the zeta transform, 
fortifying its analogy with integral calculus underlined by Rota. 
Recall that the coboundary of $\LL^\aa$ 
was defined as the set of $1$-chains 
$d\LL^\aa = 
        \big\{ \aa'\bb' \in N_1(X)  
        \: \big| \: 
        \bb' \in \LL^\aa 
        {\rm\;and\;} \aa' \not\in \LL^\aa \big\} 
$ 
so that for every $\ph \in M_1(X)$:

\begin{center}
    \vspace{-0.2cm}
    \begin{tabular}{ccc}

    $ \disp \sum_{\bb' \in \LL^\aa} \div_{\bb'} \ph = 
    \sum_{\aa'\bb' \in d\LL^\aa} \ph_{\aa'\bb'} 
    $

    & 
    \hspace{0.5cm}
    &

    \raisebox{-.5\height}{
    \includegraphics[width=0.25\textwidth]{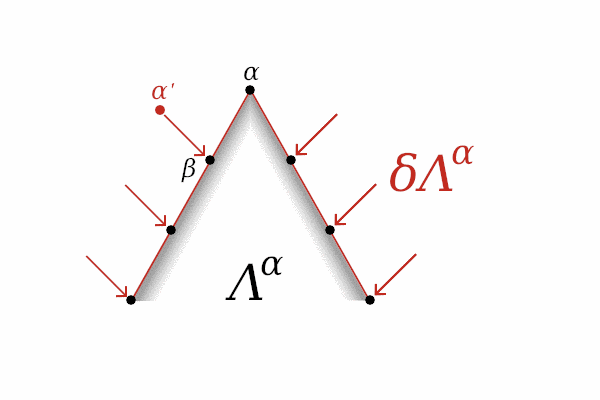}
    }
    \end{tabular}
\end{center}

In chapter 5, we shall rely on this Gauss formula to relate belief propagation
to a transport equation. 
Its right hand side may also be related to the zeta transform, as expressed by 
the following equivalent proposition, 
while extending the zeta transform to $M_1(X)$ will allow 
for the even more natural formula (\ref{zeta-div-zeta-Om})
of the next section. 

\begin{prop} \label{zeta-cocycle}
    For every $\ph \in M_1(X)$, we have:
    \begin{equation}
    \zeta(\div \ph)_\aa = \sum_{\aa' \not\in \LL^\aa} 
    \zeta(i_{\aa'} \ph)_\aa 
    \end{equation}
    In particular, $i_{\aa}\zeta$ vanishes on $\div M_1(\LL^\aa)$.
\end{prop}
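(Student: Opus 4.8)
The plan is to unfold both sides into sums over the $1$-chains of $X$ and match them through the cancellation underlying the Gauss formula \ref{gauss}. First I would expand the left-hand side with the definition of the zeta action, writing $\zeta(\div \ph)_\aa = \sum_{\aa \cont \bb} (\div \ph)_\bb$, and then insert the boundary formula $(\div \ph)_\bb = \sum_{\aa' \contst \bb} \ph_{\aa'\bb} - \sum_{\bb \contst \cc'} \ph_{\bb\cc'}$ (pullbacks kept implicit). A generic nondegenerate $1$-chain $\aa'\bb'$ then appears twice in the resulting double sum: with a $+$ sign as an incoming flux to $\bb'$ whenever $\bb' \incl \aa$, and with a $-$ sign as an outgoing flux from $\aa'$ whenever $\aa' \incl \aa$.

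The key step is the resulting cancellation. When both $\aa'$ and $\bb'$ lie in $\LL^\aa$ the two contributions $+\ph_{\aa'\bb'}$ and $-\ph_{\aa'\bb'}$ cancel, and the case $\aa' \incl \aa$ with $\bb' \not\incl \aa$ cannot occur since $\bb' \incl \aa'$. Only the chains with $\bb' \in \LL^\aa$ and $\aa' \not\in \LL^\aa$ survive, so that
\[
\zeta(\div \ph)_\aa = \sum_{\aa' \contst \bb',\ \bb' \incl \aa,\ \aa' \not\incl \aa} \ph_{\aa'\bb'} ,
\]
which is exactly the right-hand side of the Gauss formula \ref{gauss} read over $d\LL^\aa$. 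Partitioning the surviving chains by their upper vertex $\aa' \not\in \LL^\aa$, the inner sum over the lower vertex becomes $\sum_{\aa \cont \bb'} [i_{\aa'}(\ph)_{\bb'}]_\aa = \zeta(i_{\aa'} \ph)_\aa$, where the strictness $\aa' \contst \bb'$ is automatic because $\bb' \incl \aa$ together with $\aa' \not\incl \aa$ forces $\aa' \neq \bb'$. Summing over $\aa' \not\in \LL^\aa$ yields the claimed identity.

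The final assertion follows at once: for $\ph \in M_1(\LL^\aa)$, extended by zero to $M_1(X)$, one has $\ph_{\aa'\bb'} = 0$ as soon as $\aa' \not\incl \aa$, so every summand $\zeta(i_{\aa'} \ph)_\aa$ on the right-hand side vanishes and $i_\aa \zeta(\div \ph) = \zeta(\div \ph)_\aa = 0$. The computation is elementary; the only points requiring care, and thus the main if modest obstacle, are the strict-versus-nonstrict containment bookkeeping in the two boundary sums and the consistent use of the coefficient maps $[\,\cdot\,]_\aa$, after which the regrouping into $\zeta(i_{\aa'} \ph)_\aa$ is forced.
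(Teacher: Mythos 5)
Your proof is correct and follows essentially the paper's own route: the paper establishes Proposition \ref{gauss} by exactly this double-counting cancellation over $\LL^\aa$ and then presents \ref{zeta-cocycle} as an equivalent rephrasing, the regrouping by upper vertex $\aa'$ into $\zeta(i_{\aa'}\ph)_\aa$ being immediate from the definition of the interior product $i_{\aa'}$. Your bookkeeping (in particular the observation that $\aa'\incl\aa$ with $\bb'\not\incl\aa$ cannot occur) and the deduction of the vanishing on $\div M_1(\LL^\aa)$ are both sound.
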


It follows from proposition \ref{zeta-cocycle}
property that $\zeta_\aa$ factorises through 
the canonical projection of $M_0(\LL^\aa)$ onto its quotient 
by $\div M_1(\LL^\aa)$:
\begin{equation} \bcd
M_0(\LL^\aa) \rar \ar[dr, swap, "\zeta_\aa"] 
& \disp \frac{M_0(\LL^\aa)}{\div M_1(\LL^\aa)} \dar[dashed] \\
      & M_\aa
\ecd \end{equation}
and in particular, $\zeta_\aa$ induces a map in homology 
$[\zeta_\aa] : H_0(\LL^\aa) \law M_\aa$.
In the case of the complex of local observables $A_\bullet(X)$   
defined in \ref{complexes}, 
it follows from
theorem \ref{H0} that $[\zeta_\aa]$ actually induces an isomorphism between the 
homology classes of $A_0(\LL^\aa)$ and its image $B_\aa \incl A_\aa$.
This fact is a consequence of the interaction decomposition theorem, 
and in general, $[\zeta_\aa]$ may fail to be injective.

\section{Higher Degree Combinatorics}

In this section, we extend the zeta transform to a full endomorphism of 
the chain complex $M_\bullet(X)$ satisfying 
higher degree analogs of the previous section properties. 
Our initial motivation was to search for a higher degree diffusion equation
that would generalise belief propagation. This program appeared only feasible for 
the linearised algorithm, although it soon became clear to us that 
the degree 1 zeta transform was already hidden in the degree 0 algorithm 
with Dirichlet boundary conditions, and that modifying the degree 0 algorithm 
to perform a degree 1 Möbius inversion on its messages seemed 
deeply natural to eliminate their redundancies.

In proving the invertibility of our extension of $\zeta$,  
we will need to assume that $X \incl \Part(\Om)$ is closed 
under intersection\footnote{ 
    The same hypothesis was necessary for the interaction decomposition
    to hold, section 2.3.2.
}
and forms a semi-lattice. 
For every $\bb \in X$ and $\aa_0 \dots \aa_n \in N_n(X)$, 
we denote by $\bb \cap (\aa_0 \dots \aa_n)$ the possibly degenerate
chain $(\bb \cap \aa_0) \dots (\bb \cap \aa_n)$, 
while it will be implicit in our notations 
that fields vanish on non-ordered sequences.

\subsection{Extended Zeta Transform} 

For every $\aa \in X$ and $\bb \incl \aa$, we denote 
by $\LL^\aa_\bb$ the complement of $\LL^\bb$ in $\LL^\aa$.
When $\ph \in M_1(X)$ is of degree one, we shall define 
$\zeta(\ph)_{\aa\bb}$ 
as the flux passing from $\LL^\aa$ to $\LL^\bb$:
\begin{equation} \zeta(\ph)_{\aa\bb} = \sum_{\bb' \in \LL^\aa_\bb} \: \sum_{\cc' \in \LL^\bb} 
\ph_{\bb'\cc'} \end{equation}
More generally, a chain $\aa_0 \dots \aa_n$ in $N_n(X)$ yields 
a sequence of cones $\LL^{\aa_0} \cont \dots \cont \LL^{\aa_n}$ 
and we simply define $\zeta$ by summing in the interspaces.

\begin{defn} 
    We extend the zeta transform to
    $\zeta : M_\bullet(X) \aw M_\bullet(X)$ 
    by letting for every $\ph \in M_n(X)$: 
\begin{equation} \zeta(\ph)_{\aa_0 \dots \aa_n} 
= \sum_{\bb_0 \in \LL^{\aa_0}_{\aa_1}} \:
\sum_{\bb_1 \in \LL^{\aa_1}_{\aa_2}}
\dots 
\sum_{\bb_n \in \LL^{\aa_n}}
\ph_{\bb_0 \dots \bb_n}
\end{equation}
\end{defn}

From the above formula, it clearly appears that 
the definition of $\zeta$ involves an inductive 
extension to higher degrees and the following proposition will 
greatly ease computations.

\begin{prop} 
    For every $n \geq 1$, the action of $\zeta$ on $M_n(X)$ is related 
    to that on $M_{n-1}(X)$ by:
\begin{equation} 
\zeta(\ph)_{\aa_0 \dots \aa_n} = \sum_{\bb_0 \in \LL^{\aa_0}_{\aa_1}}
\zeta(i_{\bb_0} \ph)_{\aa_1 \dots \aa_n} 
\end{equation} 
\end{prop}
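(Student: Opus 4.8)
The plan is to prove the identity by directly unfolding the definition of the extended zeta transform in degree $n$ and peeling off the outermost summation. First I would rewrite
\[
\zeta(\ph)_{\aa_0 \dots \aa_n}
= \sum_{\bb_0 \in \LL^{\aa_0}_{\aa_1}}
\Bigg( \sum_{\bb_1 \in \LL^{\aa_1}_{\aa_2}} \dots \sum_{\bb_n \in \LL^{\aa_n}}
\ph_{\bb_0 \bb_1 \dots \bb_n} \Bigg)
\]
isolating the sum over the first index $\bb_0 \in \LL^{\aa_0}_{\aa_1}$. The goal is then to recognise the parenthesised inner expression as $\zeta(i_{\bb_0}\ph)_{\aa_1 \dots \aa_n}$ for each fixed $\bb_0$.

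Next I would compute the right-hand side of the claimed identity explicitly. For fixed $\bb_0 \in \LL^{\aa_0}_{\aa_1}$, the contraction $i_{\bb_0}\ph$ is a field of degree $n-1$, and by its definition $(i_{\bb_0}\ph)_{\bb_1 \dots \bb_n} = \ph_{\bb_0 \bb_1 \dots \bb_n}$ whenever $\bb_0 \cont \bb_1$ and vanishes otherwise. Applying the definition of $\zeta$ in degree $n-1$ to the chain $\aa_1 \dots \aa_n$ gives
\[
\zeta(i_{\bb_0}\ph)_{\aa_1 \dots \aa_n}
= \sum_{\bb_1 \in \LL^{\aa_1}_{\aa_2}} \dots \sum_{\bb_n \in \LL^{\aa_n}}
(i_{\bb_0}\ph)_{\bb_1 \dots \bb_n}
\]
so that the summation ranges match those of the inner expression above index by index.

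The only point requiring care, and the step I expect to be the main obstacle, is reconciling the ordering constraint on the first two indices. In the degree-$n$ definition no explicit relation between $\bb_0$ and $\bb_1$ is imposed on the summation ranges; the sequence $\bb_0 \bb_1 \dots \bb_n$ is kept ordered only through the standing convention that a field vanishes on non-ordered sequences, so $\ph_{\bb_0 \bb_1 \dots \bb_n} = 0$ unless $\bb_0 \cont \bb_1$. On the right-hand side this same constraint is instead enforced by the definition of $i_{\bb_0}$, which returns $0$ unless $\bb_0 \cont \bb_1$. I would verify that these two mechanisms suppress exactly the same terms, so that $(i_{\bb_0}\ph)_{\bb_1 \dots \bb_n}$ and $\ph_{\bb_0 \bb_1 \dots \bb_n}$ coincide for every $\bb_1, \dots, \bb_n$ ranging over the cones appearing in the sums.

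Once this is checked, substituting back identifies $\zeta(i_{\bb_0}\ph)_{\aa_1 \dots \aa_n}$ with the parenthesised inner sum for each fixed $\bb_0$, and summing over $\bb_0 \in \LL^{\aa_0}_{\aa_1}$ recovers $\zeta(\ph)_{\aa_0 \dots \aa_n}$, which is the claimed identity. No genuine induction is needed beyond having the degree-$(n-1)$ definition available; the statement is really a structural consequence of the nested, inductive shape of the definition of $\zeta$, and the proof is a short bookkeeping argument organised around the contraction $i_{\bb_0}$.
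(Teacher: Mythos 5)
Your proof is correct and follows the same route the paper intends: the paper states this proposition without proof, treating it as immediate from the nested form of the definition of $\zeta$, and your argument simply makes that bookkeeping explicit. Your attention to the point that the ordering constraint $\bb_0 \cont \bb_1$ is enforced on one side by the convention that fields vanish on non-ordered sequences and on the other by the definition of $i_{\bb_0}$ is exactly the detail the paper leaves implicit, and it is handled correctly.
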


\subsection{Locality and Colocality} 

For any $\aa \in X$, the zeta transform still
commutes with the restriction to $\LL^\aa$:
\begin{equation} \bcd 
M_\bullet(X) \rar{{\rm r}_\aa} \dar[swap]{\zeta} & M_\bullet(\LL^\aa) \dar{\zeta} \\
M_\bullet(X) \rar[swap]{{\rm r}_\aa} & M_\bullet(\LL^\aa) 
\ecd \end{equation}
The locality of $\zeta$ is better precised by the following proposition,
that makes use of the semi-lattice structure of $X$ and will soon come to use 
in extending Möbius inversion. 

\begin{prop} \label{zeta-cap}
    For every $\bb \in X$ and $\aa_0 \dots \aa_n \in N_n(X)$, we have: 
    \begin{equation} \zeta({\rm r}_\bb \ph)_{\aa_0 \dots \aa_n} 
    = \zeta(\ph)_{\bb \cap(\aa_0 \dots \aa_n)} \end{equation}
\end{prop}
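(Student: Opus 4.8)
The plan is to compute the left-hand side directly from the definition of the extended zeta transform and to reduce it, term by term, to the right-hand side. First I would treat ${\rm r}_\bb \ph$ as the field on $X$ that agrees with $\ph$ on chains lying entirely in $\LL^\bb$ and vanishes otherwise, so that the defining formula
\[
\zeta({\rm r}_\bb \ph)_{\aa_0 \dots \aa_n}
= \sum_{\bb_0 \in \LL^{\aa_0}_{\aa_1}} \dots \sum_{\bb_n \in \LL^{\aa_n}}
({\rm r}_\bb \ph)_{\bb_0 \dots \bb_n}
\]
applies verbatim. The only surviving summands are those for which every index satisfies $\bb_k \incl \bb$, i.e. $\bb_k \in \LL^\bb$.

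The key step, and the place where the semi-lattice hypothesis enters, is the identification of the constrained index sets. For $k < n$ I would prove
\[
\LL^{\aa_k}_{\aa_{k+1}} \cap \LL^\bb = \LL^{\bb \cap \aa_k}_{\bb \cap \aa_{k+1}},
\qquad
\LL^{\aa_n} \cap \LL^\bb = \LL^{\bb \cap \aa_n},
\]
both of which rest on the elementary fact that, for $\cc \in X$, one has $\cc \incl \bb \cap \aa$ precisely when $\cc \incl \bb$ and $\cc \incl \aa$. Closure of $X$ under intersection guarantees $\bb \cap \aa_k \in X$, so the cones on the right are genuine objects and the sequence $(\bb \cap \aa_0) \dots (\bb \cap \aa_n)$ — possibly degenerate — is a well-formed chain, matching the convention $\bb \cap (\aa_0 \dots \aa_n)$ fixed before the statement.

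Substituting these identities replaces each range $\LL^{\aa_k}_{\aa_{k+1}}$ by $\LL^{\bb \cap \aa_k}_{\bb \cap \aa_{k+1}}$ and $\LL^{\aa_n}$ by $\LL^{\bb \cap \aa_n}$, so the sum becomes exactly $\zeta(\ph)_{(\bb \cap \aa_0) \dots (\bb \cap \aa_n)} = \zeta(\ph)_{\bb \cap (\aa_0 \dots \aa_n)}$, which is the claim. As a sanity check I would verify the degree-zero case $\zeta({\rm r}_\bb \ph)_\aa = \sum_{\cc \incl \bb \cap \aa} \ph_\cc = \zeta(\ph)_{\bb \cap \aa}$ by hand. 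An equivalent, fully inductive route is available: induct on $n$ using the recursion $\zeta(\ph)_{\aa_0 \dots \aa_n} = \sum_{\bb_0 \in \LL^{\aa_0}_{\aa_1}} \zeta(i_{\bb_0} \ph)_{\aa_1 \dots \aa_n}$ together with the compatibility $i_{\bb_0}({\rm r}_\bb \ph) = {\rm r}_\bb(i_{\bb_0} \ph)$, which holds once $\bb_0 \incl \bb$ (and the sum automatically discards the terms with $\bb_0 \not\incl \bb$).

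The main obstacle is purely bookkeeping rather than conceptual: I must make sure the ``vanishing outside $\LL^\bb$'' reduction of the index set lines up slot by slot with the relative cones $\LL^{\bb \cap \aa_k}_{\bb \cap \aa_{k+1}}$, which amounts to applying the distributivity of intersection over the cone construction at every position of the chain and checking that the relative complement $\LL^{\aa_k} \setminus \LL^{\aa_{k+1}}$ intersected with $\LL^\bb$ really produces $\LL^{\bb \cap \aa_k} \setminus \LL^{\bb \cap \aa_{k+1}}$ and not some larger or smaller set.
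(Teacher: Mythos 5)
Your proof is correct and follows essentially the same route as the paper's: the paper's one-line argument is precisely that ${\rm r}_\bb \ph$ being supported in $\LL^\bb$ restricts each summation range to $\LL^\bb \cap \LL^{\aa_i}_{\aa_{i+1}} = \LL^{\bb \cap \aa_i}_{\bb \cap \aa_{i+1}}$, which is the identity you single out as the key step. Your version merely spells out the slot-by-slot bookkeeping (and notes an equivalent inductive variant) that the paper leaves implicit.
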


\begin{proof}
    As ${\rm r}_\bb \ph$ is supported inside 
    $\LL^\bb$, the sums over $\LL^{\aa_i}_{\aa_{i+1}}$ 
    may be restricted to 
    $\LL^\bb \cap \LL^{\aa_i}_{\aa_{i+1}}
= \LL^{\bb \cap \aa_i}_{\bb \cap \aa_{i+1}}$.
\end{proof}

Locality also implies that for every $n \geq 1$, composing
the partial evaluation $i_{\aa_0} : M_n(X) \aw M_{n-1}(\LL^{\aa_0})$ 
with $\zeta$ factorises through ${\rm r}_{\aa_0}$:
\begin{equation} \bcd
M_n(X) \rar{{\rm r}_{\aa_0}} \drar[swap]{i_{\aa_0} \zeta} 
    & M_n(\LL^{\aa_0}) \dar[dashed] \\
    & M_{n-1}(\LL^{\aa_0})  
\ecd \end{equation}
A more subtle observation on the support of $\zeta$ is given by 
the following dual property, 
expressing the independence of $i_{\aa_1} i_{\aa_0} \zeta$ with 
respect to fields supported inside $\LL^{\aa_1}$. 

\begin{prop} \label{zeta-colocality}
    For every $\aa_0 \dots \aa_n \in N_n(X)$ and $\ph \in M_n(\LL^{\aa_1})$,
    we have $\zeta(\ph)_{\aa_0 \dots \aa_n} = 0$.
\end{prop}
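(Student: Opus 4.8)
The plan is to unwind the definition of the extended zeta transform and observe that the support hypothesis on $\ph$ collides with the range of the outermost summation index. Concretely, I would invoke the recursion of the preceding proposition, writing
\[
\zeta(\ph)_{\aa_0 \dots \aa_n} = \sum_{\bb_0 \in \LL^{\aa_0}_{\aa_1}} \zeta(i_{\bb_0}\ph)_{\aa_1 \dots \aa_n},
\]
so that everything reduces to showing that $i_{\bb_0}\ph = 0$ for each index $\bb_0$ occurring in this sum. Since the statement requires $n \geq 1$, this recursion always applies; for $n = 1$ it degenerates to the degree-zero zeta transform $\zeta(i_{\bb_0}\ph)_{\aa_1}$, and the same mechanism below still forces vanishing.

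The key step is the incompatibility of two constraints. By definition $\LL^{\aa_0}_{\aa_1} = \LL^{\aa_0} \setminus \LL^{\aa_1}$, so every summation index satisfies $\bb_0 \notin \LL^{\aa_1}$, that is $\bb_0 \not\incl \aa_1$. On the other hand, $\ph \in M_n(\LL^{\aa_1})$ is supported on chains lying entirely inside the cone $\LL^{\aa_1}$; in particular $\ph_{\cc_0 \dots \cc_n}$ can be nonzero only when its first vertex satisfies $\cc_0 \in \LL^{\aa_1}$. Since $i_{\bb_0}(\ph)_{\cc_1 \dots \cc_n} = \ph_{\bb_0 \cc_1 \dots \cc_n}$ whenever it is defined, a nonzero value would force $\bb_0 \in \LL^{\aa_1}$, contradicting $\bb_0 \notin \LL^{\aa_1}$.

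Thus $i_{\bb_0}\ph$ vanishes identically for every $\bb_0 \in \LL^{\aa_0}_{\aa_1}$, and the sum above collapses to zero, giving $\zeta(\ph)_{\aa_0 \dots \aa_n} = 0$. I do not expect a genuine obstacle here: the statement is a formal consequence of the summation ranges built into the definition of $\zeta$, and the only point requiring care is the bookkeeping of the support convention for fields on $\LL^{\aa_1}$, namely that the first vertex of any chain in the support already lies below $\aa_1$. The content of the proposition is really that the defining sums of $\zeta$ peel off the first cone $\LL^{\aa_0}$ strictly \emph{outside} $\LL^{\aa_1}$, so any flux already confined to $\LL^{\aa_1}$ contributes nothing to $i_{\aa_1} i_{\aa_0} \zeta$.
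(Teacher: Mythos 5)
Your argument is correct and is essentially the one the paper gives: the paper's justification (the remark following the proposition) likewise observes that the outermost summation index $\bb_0$ ranges over $\LL^{\aa_0}_{\aa_1} = \LL^{\aa_0}\setminus\LL^{\aa_1}$, where a field supported in $\LL^{\aa_1}$ vanishes, since a descending chain lies in $\LL^{\aa_1}$ exactly when its first vertex does. Your phrasing via the recursion $\zeta(\ph)_{\aa_0\dots\aa_n} = \sum_{\bb_0}\zeta(i_{\bb_0}\ph)_{\aa_1\dots\aa_n}$ is just a slightly more explicit rendering of the same observation.
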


The sum defining $\zeta(\ph)_{\aa_0 \dots \aa_n}$ 
indeed only 
involves evaluations in the first argument of $\ph$ inside 
$\LL^{\aa_0}_{\aa_1} = \LL^{\aa_0} \setminus \LL^{\aa_1}$ and 
does not depend on the restriction of $\ph$ to $\LL^{\aa_1}$.
The kernel of $i_{\aa_1}i_{\aa_0}\zeta$ hence contains 
$M_n(\LL^{\aa_1})$ and we have the following factorisation:
\begin{equation} \bcd
M_n(X) \rar \drar[swap]{i_{\aa_1} i_{\aa_0} \zeta} 
    & \disp \frac {M_n(\LL^{\aa_0})}{M_n(\LL^{\aa_1})}  \dar[dashed] \\
    & M_{n-2}(\LL^{\aa_1})  
\ecd \end{equation}

\subsection{Extended Möbius Transform}

The definition of $\mu$ will make use of the semi-lattice structure of $X$.
For every unordered sequence 
$\bb_0, \dots, \bb_n \in X$, 
let us denote by $[\bb_0 \dots \bb_n]_\cap$
the $n$-chain $\bb_0(\bb_0 \cap \bb_1) \dots (\bb_0 \cap \dots \cap \bb_n)$.

\begin{defn}
    We extend the Möbius transform to $\mu: M_\bullet(X) \aw M_\bullet(X)$ 
    by letting for all $\Phi \in M_n(X)$: 
\begin{equation} \mu(\Phi)_{\aa_0 \dots \aa_n} = 
\sum_{\bb_n \in \LL^{\aa_n}} \mu_{\aa_n \bb_n}
\dots 
\sum_{\bb_1 \in \LL^{\aa_1}_{\bb_2}} \mu_{\aa_1 \bb_1} 
\sum_{\bb_0 \in \LL^{\aa_0}_{\bb_1}} \mu_{\aa_0 \bb_0} 
    \cdot \Phi_{[\bb_0 \dots \bb_n]_\cap} 
\end{equation}
\end{defn}

In contrast with $\zeta$, the supports of the sums defining $\mu$
do depend on the summation variables $\bb_i$.
We may still give an inductive construction of $\mu$, 
allowing to conveniently prove reciprocity with $\zeta$.
For every $\aa_0 \in X$ and $n \geq 1$, let us introduce the map
$\nu_{\aa_0} : M_n(X) \aw M_{n-1}(X)$ defined by:
\begin{equation} \label{nu}
\nu_{\aa_0}(\Phi)_{\aa_1 \dots \aa_n} = 
\sum_{\bb_0 \in \LL^{\aa_0}_{\aa_1}}
    \mu_{\aa_0 \bb_0} \cdot \Phi_{\bb_0 \cap (\aa_0\dots \aa_n)} 
\end{equation}
and extend this map to $\nu_{\aa_0}: M_0(X) \aw M_{\aa_0}$ 
by letting:
\begin{equation} \label{nu0} 
\nu_{\aa_0}(\Phi) = \sum_{\bb_0 \in \LL^{\aa_0}}
\mu_{\aa_0 \bb_0} \cdot \Phi_{\bb_0} 
\end{equation}
The action of $\mu \in M_n(X)$ then deduces from that on $M_{n-1}(X)$
by the relation 
$i_{\aa_0} \circ \mu = \mu \circ \nu_{\aa_0}$ as expressed by the following
proposition.

\begin{prop} \label{moebius-rec}
For every $\aa_0 \dots \aa_n \in N_n(X)$ and $\Phi \in M_n(X)$, we have:
\begin{equation} \mu(\Phi)_{\aa_0 \dots \aa_n} = 
\nu_{\aa_n} \dots \nu_{\aa_0} (\Phi) \end{equation}
\end{prop}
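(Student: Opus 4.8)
The plan is to prove the identity by induction on the degree $n$, the whole content being the recursion $i_{\aa_0} \circ \mu = \mu \circ \nu_{\aa_0}$ announced just before the statement, where on the right $\mu$ now denotes the extended Möbius transform acting on $M_{n-1}(X)$. Granting this relation, the base case $n = 0$ is exactly the defining equation (\ref{nu0}), since $\mu(\Phi)_{\aa_0} = \sum_{\bb_0 \in \LL^{\aa_0}} \mu_{\aa_0\bb_0} \, \Phi_{\bb_0} = \nu_{\aa_0}(\Phi)$, while the inductive step reads $\mu(\Phi)_{\aa_0 \dots \aa_n} = \mu(\nu_{\aa_0}\Phi)_{\aa_1 \dots \aa_n} = \nu_{\aa_n} \dots \nu_{\aa_1}(\nu_{\aa_0}\Phi)$ by applying the induction hypothesis to $\nu_{\aa_0}\Phi \in M_{n-1}(X)$.

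To establish the recursion, I would start from the definition of $\mu(\Phi)_{\aa_0 \dots \aa_n}$ and peel off the innermost summation, the one over $\bb_0 \in \LL^{\aa_0}_{\bb_1}$, keeping $\bb_1, \dots, \bb_n$ fixed. This bracketed quantity is
\[
\sum_{\bb_0 \in \LL^{\aa_0}_{\bb_1}} \mu_{\aa_0 \bb_0} \cdot \Phi_{\bb_0 (\bb_0 \cap \bb_1)(\bb_0 \cap \bb_1 \cap \bb_2) \dots (\bb_0 \cap \dots \cap \bb_n)},
\]
and the goal is to recognise it as $\nu_{\aa_0}(\Phi)$ evaluated at a suitable chain. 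Setting $\cc_k = \bb_1 \cap \dots \cap \bb_k$, so that $\cc_1 \dots \cc_n = [\bb_1 \dots \bb_n]_\cap$, the key is the elementary identity $\bb_0 \cap \cc_k = \bb_0 \cap \bb_1 \cap \dots \cap \bb_k$, which follows from associativity and idempotence of intersection in the semi-lattice $X$. Together with $\bb_0 \cap \aa_0 = \bb_0$ (valid since $\bb_0 \incl \aa_0$), this rewrites the argument of $\Phi$ as $\bb_0 \cap (\aa_0 \cc_1 \dots \cc_n)$, matching exactly the chain appearing in the definition (\ref{nu}) of $\nu_{\aa_0}(\Phi)_{\cc_1 \dots \cc_n}$.

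It then remains to check that the constraint regions of the two summations coincide: the bracket sums over $\LL^{\aa_0}_{\bb_1}$, whereas (\ref{nu}) sums over $\LL^{\aa_0}_{\cc_1}$, and these agree because $\cc_1 = \bb_1$. Hence the innermost sum equals $\nu_{\aa_0}(\Phi)_{[\bb_1 \dots \bb_n]_\cap}$, and substituting back gives
\[
\mu(\Phi)_{\aa_0 \dots \aa_n} = \sum_{\bb_n \in \LL^{\aa_n}} \mu_{\aa_n \bb_n} \dots \sum_{\bb_1 \in \LL^{\aa_1}_{\bb_2}} \mu_{\aa_1 \bb_1} \cdot \nu_{\aa_0}(\Phi)_{[\bb_1 \dots \bb_n]_\cap},
\]
which is precisely $\mu(\nu_{\aa_0}\Phi)_{\aa_1 \dots \aa_n}$ by the definition of the extended Möbius transform in degree $n-1$. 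This yields the recursion and closes the induction.

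The main obstacle I anticipate is purely combinatorial bookkeeping: tracking how the nested intersections defining $[\bb_0 \dots \bb_n]_\cap$ interact with the partial-intersection operator $\bb_0 \cap (-)$ once the dummy index $\bb_0$ is integrated out. The conceptual point is that the intersected chain is built so that extracting $\bb_0$ and re-intersecting with it commutes — this is exactly what makes the single relation $i_{\aa_0}\mu = \mu\nu_{\aa_0}$ hold — while the semi-lattice hypothesis on $X$ guarantees that every intermediate intersection lies in $X$, so that all fields are genuinely evaluated on ordered chains of $N_\bullet(X)$.
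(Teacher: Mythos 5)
Your proof is correct and follows essentially the same route as the paper: the paper's own (one-line) argument is precisely to recognise the innermost sum over $\bb_0 \in \LL^{\aa_0}_{\bb_1}$ in the definition of $\mu(\Phi)_{\aa_0\dots\aa_n}$ as $\nu_{\aa_0}(\Phi)_{[\bb_1\dots\bb_n]_\cap}$ and then induct. You merely spell out the bookkeeping — $\bb_0 \cap (\bb_1 \cap \dots \cap \bb_k) = \bb_0 \cap \dots \cap \bb_k$ and $\cc_1 = \bb_1$ — which the paper leaves implicit.
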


\begin{proof} 
In the last sum over $\bb_0 \in \LL^{\aa_0}_{\bb_1}$
defining $\mu(\Phi)_{\aa_0 \dots \aa_n}$,
we may recognise $\nu_{\aa_0}(\Phi)_{[\bb_1 \dots \bb_n]_\cap}$ 
\end{proof}

With this characterisation of $\mu$, we may now 
generalise the Möbius inversion formula to $M_\bullet(X)$.
Having two reciprocal endomorphisms $\zeta$ and $\mu$ acting on all degrees 
will allow to conjugate non-homogeneous operators and consider\footnote{
    See chapter 5.
}
for instance 
$\div^\zeta = \zeta \circ \div \circ \mu$ or 
$\nabla^\mu = \mu \circ \nabla \circ \zeta$.

\begin{thm} \label{moebius}
The Möbius transform is the inverse of the zeta transform. 
\end{thm}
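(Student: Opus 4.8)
The plan is to prove $\zeta \circ \mu = \id$ on every $M_n(X)$ by induction on the degree $n$, and then to recover $\mu \circ \zeta = \id$ from a triangularity argument. The base case $n = 0$ is exactly the classical Möbius inversion of Theorem \ref{moebius-0}, which gives $\zeta * \mu = \mu * \zeta = {\bf 1}$ in the incidence ring and hence $\zeta\mu = \id$ on $M_0(X)$. The whole point of the inductive formulas established for $\zeta$ (the proposition relating its action on $M_n(X)$ to that on $M_{n-1}(X)$) and for $\mu$ (Proposition \ref{moebius-rec}) is precisely to feed such an induction.

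For the inductive step I would fix a chain $\aa_0 \dots \aa_n \in N_n(X)$ and expand $\zeta(\mu\Phi)_{\aa_0 \dots \aa_n}$ by composing the two recursions. First the $\zeta$-recursion gives $\zeta(\mu\Phi)_{\aa_0\dots\aa_n} = \sum_{\bb_0 \in \LL^{\aa_0}_{\aa_1}} \zeta(i_{\bb_0}\mu\Phi)_{\aa_1\dots\aa_n}$; then the relation $i_{\bb_0}\mu = \mu\,\nu_{\bb_0}$ of Proposition \ref{moebius-rec} rewrites each summand as $\zeta(\mu\,\nu_{\bb_0}\Phi)_{\aa_1\dots\aa_n}$; and the induction hypothesis $\zeta\mu = \id$ in degree $n-1$ collapses this to $(\nu_{\bb_0}\Phi)_{\aa_1\dots\aa_n}$. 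Unfolding the definition (\ref{nu}) of $\nu_{\bb_0}$ and using $\bb_1 \incl \bb_0$ to simplify the intersection chain, I obtain the double sum $\sum_{\bb_0 \in \LL^{\aa_0}_{\aa_1}} \sum_{\bb_1 \in \LL^{\bb_0}_{\aa_1}} \mu_{\bb_0\bb_1}\, \Phi_{\bb_1(\bb_1\cap\aa_1)\dots(\bb_1\cap\aa_n)}$.

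The crux is then a purely combinatorial collapse of this double sum. Summing over $\bb_0$ first for a fixed $\bb_1$, the constraint $\bb_0 \not\incl \aa_1$ becomes automatic once $\bb_1 \incl \bb_0$ and $\bb_1 \not\incl \aa_1$, so the inner sum is $\sum_{\bb_1 \incl \bb_0 \incl \aa_0}\mu_{\bb_0\bb_1} = (\zeta * \mu)_{\aa_0\bb_1} = {\bf 1}_{\aa_0\bb_1}$ by the degree-zero inversion. Hence only $\bb_1 = \aa_0$ survives (and $\aa_0 \not\incl \aa_1$ does hold, the chain being non-degenerate); since $\aa_0 \cont \aa_i$ forces $\aa_0 \cap \aa_i = \aa_i$, the surviving term is exactly $\Phi_{\aa_0\aa_1\dots\aa_n}$, which establishes $\zeta\mu = \id$ in degree $n$.

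Finally, to obtain the reverse composition I would note that the diagonal contribution $\bb_i = \aa_i$ to $\zeta(\ph)_{\aa_0\dots\aa_n}$ occurs with coefficient $1$, while every other summand is indexed by a chain strictly below $\aa_0\dots\aa_n$ in the product order on $X^{n+1}$; by local finiteness this makes $\zeta = {\bf 1} + N$ with $N$ locally nilpotent, so $\zeta$ is invertible with a unique two-sided inverse, and $\zeta\mu = \id$ forces $\mu\zeta = \id$. I expect the main obstacle to be the bookkeeping of the nested cone constraints $\LL^{\aa_i}_{\aa_{i+1}}$ and checking that the intersection chains $\bb \cap (\aa_0\dots\aa_n)$ degenerate exactly as required for the incidence identity $\zeta * \mu = {\bf 1}$ to apply term by term; the rest is a routine unwinding of the two recursions.
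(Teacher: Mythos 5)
Your strategy --- prove $\zeta \circ \mu = \id$ by induction on the degree and then recover $\mu \circ \zeta = \id$ from triangularity of $\zeta$ --- is the mirror image of the paper's, which proves $\mu \circ \zeta = \id$ directly from the lemma $\nu_{\aa_0} \circ \zeta = \zeta \circ i_{\aa_0}$ combined with Proposition \ref{moebius-rec}. Your terminal combinatorial collapse is correct (the double sum you reach does reduce to $\Phi_{\aa_0 \dots \aa_n}$ by two applications of $\zeta * \mu = {\bf 1}$, and your handling of the degenerate intersections $\aa_0 \cap \aa_i = \aa_i$ is right), and your closing triangularity argument for the converse composition is fine. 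The gap is in the middle step, where you pass from $\zeta(i_{\bb_0}\mu\Phi)_{\aa_1 \dots \aa_n}$ to $\zeta(\mu\,\nu_{\bb_0}\Phi)_{\aa_1 \dots \aa_n}$ by invoking ``$i_{\bb_0}\circ\mu = \mu\circ\nu_{\bb_0}$''. Proposition \ref{moebius-rec} only gives $(\mu\Phi)_{\bb_0 \cc_1 \dots \cc_n} = \big(\mu(\nu_{\bb_0}\Phi)\big)_{\cc_1 \dots \cc_n}$ on chains with $\bb_0 \cont \cc_1$; but the sum defining $\zeta(\,\cdot\,)_{\aa_1\dots\aa_n}$ ranges over all $\cc_1 \in \LL^{\aa_1}_{\aa_2}$, including those with $\cc_1 \not\incl \bb_0$, where $i_{\bb_0}\mu\Phi$ vanishes by definition while $\mu(\nu_{\bb_0}\Phi)$ has no a priori reason to. What you actually need is that $\mu(\nu_{\bb_0}\Phi)$ is supported in $M_{n-1}(\LL^{\bb_0})$, i.e.\ that $i_{\bb_0}\circ\mu = \mu\circ\nu_{\bb_0}$ holds as an identity of maps into $M_{n-1}(X)$. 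Once $\zeta$ is known to be invertible, that identity is equivalent to $\nu_{\bb_0}\circ\zeta = \zeta\circ i_{\bb_0}$ --- precisely the lemma the paper proves and builds its argument on --- so as written you have quietly assumed the hard part of the theorem.

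The gap is repairable without abandoning your route: either prove the support statement for $\mu\circ\nu_{\bb_0}$ directly (a computation of the same order of difficulty as the paper's lemma), or bypass the recursions altogether and derive your double sum by fully expanding $\zeta(\mu\Phi)_{\aa_0\dots\aa_n}$ from the definitions and exchanging the order of summation; the sums over the $\zeta$-variables then collapse by classical Möbius inversion on the relevant intervals (this is where the $\cap$-closure of $X$ is used), after which your last two steps go through verbatim. The comparison with the paper is instructive: the paper's lemma $\nu_{\aa_0}\zeta = \zeta i_{\aa_0}$ is clean precisely because $\nu_{\aa_0}$ is applied \emph{after} $\zeta$, where the relevant support conditions are automatic (via Propositions \ref{zeta-cap} and \ref{zeta-colocality}), whereas composing in your order forces you to control the support of $\mu\circ\nu_{\bb_0}$, which is exactly the subtle point.
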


The proof\footnotemark{} of the inversion theorem will come as an easy consequence
of the following lemma:
\footnotetext{
    [[appendix: add explicit computation on $A_1(X)$.]]
}

\begin{lemma}
    For every $\aa_0$ in $X$, we have:
    \begin{equation} 
    \nu_{\aa_0} \circ \zeta = \zeta \circ i_{\aa_0}
    \end{equation}
    The above extends to $M_0(X) \aw M_{\aa_0}$ 
    by agreeing that $\zeta \circ i_{\aa_0} = i_{\aa_0}$, 
\end{lemma}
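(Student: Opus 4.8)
The plan is to verify the identity coordinate-wise on a chain $\aa_1 \dots \aa_n$, expanding both sides from the definitions and reducing the left-hand side to the right-hand side by a single application of the degree-zero Möbius inversion of theorem \ref{moebius-0}. First I would dispatch the base case $M_0(X) \aw M_{\aa_0}$. For $h \in M_0(X)$,
\[
\nu_{\aa_0}(\zeta h) = \sum_{\bb \in \LL^{\aa_0}} \mu_{\aa_0\bb}\,(\zeta h)_\bb
= \sum_{\bb \in \LL^{\aa_0}} \mu_{\aa_0\bb} \sum_{\bb \cont \cc'} h_{\cc'}
= \sum_{\cc'} \Big( \sum_{\aa_0 \cont \bb \cont \cc'} \mu_{\aa_0\bb} \Big) h_{\cc'},
\]
and the inner coefficient is $(\mu * \zeta)_{\aa_0\cc'} = {\bf 1}_{\aa_0\cc'}$, so only $\cc' = \aa_0$ survives and $\nu_{\aa_0}(\zeta h) = h_{\aa_0} = i_{\aa_0}(h)$, matching the stated convention $\zeta \circ i_{\aa_0} = i_{\aa_0}$.

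For $n \geq 1$ and $\ph \in M_n(X)$, I would expand $\nu_{\aa_0}(\zeta\ph)_{\aa_1\dots\aa_n}$ by writing out $\nu_{\aa_0}$ and then the definition of $\zeta$ on the intersected chain $\bb \cap (\aa_0 \dots \aa_n) = \bb\,(\bb\cap\aa_1)\dots(\bb\cap\aa_n)$, using $\bb \incl \aa_0$ so that the first component is $\bb$. This yields a sum over $\bb \in \LL^{\aa_0}_{\aa_1}$ and over inner variables $\cc_0, \dots, \cc_n$ ranging in the successive interspaces $\LL^{\bb}_{\bb\cap\aa_1}, \LL^{\bb\cap\aa_1}_{\bb\cap\aa_2}, \dots, \LL^{\bb\cap\aa_n}$, weighted by $\mu_{\aa_0\bb}\,\ph_{\cc_0\dots\cc_n}$. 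The crucial simplification is that since $\ph$ vanishes off ordered chains, we may assume $\cc_0 \cont \cc_1 \cont \dots \cont \cc_n$; the many conditions $\cc_j \incl \bb$ arising from the interspaces then collapse to the single condition $\cc_0 \incl \bb$ on the largest index, while the conditions $\cc_j \incl \aa_j$ and $\cc_j \not\incl \aa_{j+1}$ (together with $\cc_0 \not\incl \aa_1$ and $\cc_n \incl \aa_n$) no longer mention $\bb$.

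Then I would exchange the order of summation, fixing such an ordered chain and summing $\bb$ over the interval $\cc_0 \incl \bb \incl \aa_0$. One checks that the remaining range condition $\bb \not\incl \aa_1$ is automatic: from $\cc_0 \incl \bb$ and $\cc_0 \not\incl \aa_1$ one cannot have $\bb \incl \aa_1$. Hence the inner sum is $\sum_{\aa_0 \cont \bb \cont \cc_0} \mu_{\aa_0\bb} = {\bf 1}_{\aa_0\cc_0}$, again by theorem \ref{moebius-0}, forcing $\cc_0 = \aa_0$. Substituting $\cc_0 = \aa_0$ leaves exactly
\[
\sum_{\cc_1 \in \LL^{\aa_1}_{\aa_2}} \dots \sum_{\cc_n \in \LL^{\aa_n}} \ph_{\aa_0\cc_1\dots\cc_n} = \zeta(i_{\aa_0}\ph)_{\aa_1\dots\aa_n},
\]
which is the right-hand side.

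The main obstacle is bookkeeping the inclusion constraints correctly: one must see that the ordering built into $\ph$ reduces all the conditions $\cc_j \incl \bb$ to $\cc_0 \incl \bb$, and that the nondegeneracy condition $\bb \not\incl \aa_1$ becomes redundant once $\cc_0 \incl \bb$ and $\cc_0 \not\incl \aa_1$ are in force. With those two observations the whole computation collapses to a single degree-zero Möbius inversion over the interval $[\cc_0,\aa_0]$, so no genuine higher-degree combinatorics is needed. Alternatively, the same reduction can be packaged inductively by invoking proposition \ref{zeta-cap} to rewrite $\zeta(\ph)_{\bb\cap(\aa_0\dots\aa_n)} = \zeta({\rm r}_\bb\ph)_{\aa_0\dots\aa_n}$ together with the inductive formula for $\zeta$; but the direct coordinate computation seems the most transparent, and the inversion theorem \ref{moebius} then follows by applying the lemma degree by degree through proposition \ref{moebius-rec}.
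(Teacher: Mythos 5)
Your proof is correct and follows essentially the same route as the paper's: the heart in both cases is exchanging the sum over $\bb$ with the sum over the leading chain index $\cc_0$, observing that the constraint $\bb \not\incl \aa_1$ is automatic once $\cc_0 \incl \bb$ and $\cc_0 \not\incl \aa_1$, and collapsing the interval sum $\sum_{\aa_0 \cont \bb \cont \cc_0} \mu_{\aa_0\bb} = {\bf 1}_{\aa_0\cc_0}$. The only difference is presentational: you expand $\zeta$ coordinate-wise and do the bookkeeping by hand, where the paper packages the same reduction through its inductive formula for $\zeta$ together with the locality proposition \ref{zeta-cap}.
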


\begin{proof}
Injecting the recurrence relation defining $\zeta$, 
    we may write $\nu_{\aa_0}\zeta(\ph)_{\aa_0 \dots \aa_n}$ as: 
\begin{equation} 
\sum_{\bb_0 \in \LL^{\aa_0}_{\aa_1}} \mu_{\aa_0 \bb_0}
    \cdot \zeta(\ph)_{\bb_0 \cap (\aa_0 \dots \aa_n)} 
        = \sum_{\bb_0 \in \LL^{\aa_0}_{\aa_1}} \mu_{\aa_0 \bb_0}
        \sum_{\cc_0 \in \LL^{\bb_0} \cap \LL^{\aa_0}_{\aa_1} } 
        \zeta(i_{\cc_0} \ph)_{\bb_0 \cap(\aa_1 \dots \aa_n)} \\
\end{equation}
    As $i_{\cc_0} \ph$ is supported in $\LL^{\cc_0}$, we have
    $\zeta(i_{\cc_0} \ph)_{\bb_0 \cap (\aa_1 \dots \aa_n)} 
    = \zeta(i_{\cc_0} \ph)_{\cc_0 \cap (\aa_1 \dots \aa_n)}$
    in virtue of prop.\ref{zeta-cap}.
    Both $\bb_0$ and $\cc_0$ running over $\LL^{\aa_0}_{\aa_1}$
    with the only additional condition that $\bb_0 \aw \cc_0$,
    we have:
    \begin{equation} \nu_{\aa_0}\zeta(\ph)_{\aa_0 \dots \aa_n} 
    = \sum_{\cc_0 \in \LL^{\aa_0}_{\aa_1} }
    \bigg(\: \sum_{\aa_0 \aw \bb_0 \aw \cc_0} \mu_{\aa_0 \bb_0} \:\bigg) \:
    \zeta(i_{\cc_0} \ph)_{\cc_0 \cap (\aa_1 \dots \aa_n)} 
    \end{equation}
    Recognising the classical Möbius inversion formula 
    $(\mu * \zeta)_{\aa_0 \cc_0} = {\bf 1}_{\aa_0 \cc_0}$, we get:
    \begin{equation} \nu_{\aa_0}\zeta(\ph)_{\aa_0 \dots \aa_n} = 
    \zeta(i_{\aa_0} \ph)_{\aa_1 \dots \aa_n} \end{equation}
    When $\ph$ is of degree zero the identity reduces to 
    the $\nu_{\aa_0} \zeta(\ph) = \mu \zeta(\ph)_{\aa_0} = \ph_{\aa_0}$.
\end{proof}

\begin{proof}[Proof of theorem \ref{moebius}]
    For every $\aa_0 \dots \aa_n$ in $N_n(X)$, we have: 
    \begin{equation} \begin{split} 
        i_{\aa_n} \dots i_{\aa_0} \circ \mu \circ \zeta &= 
        \nu_{\aa_n} \dots \nu_{\aa_0} \circ \zeta \\
        &= \zeta \circ i_{\aa_n} \dots i_{\aa_0} \\
        &= i_{\aa_n} \dots i_{\aa_0} 
    \end{split} 
    \end{equation}
\end{proof}

\subsection{Local Cocycle Property}

The following formula is the higher degree analog of 
proposition \ref{zeta-cocycle}.

\begin{prop} \label{higher-zeta-cocycle} 
    For every $\aa_0 \dots \aa_n \in N_n(X)$ and $\psi \in M_{n+1}(X)$ we have: 
\begin{equation} \zeta( \div \psi )_{\aa_0 \dots \aa_n}
    = \sum_{\aa'_0 \not \in \LL^{\aa_0}} 
\zeta(i_{\aa'_0} \psi)_{\aa_0 \dots \aa_n}
\end{equation}
    In particular, if $\psi \in M_n(\LL^{\aa_0})$, 
    then $i_{\aa_0} \zeta(\div \psi) = 0$.
\end{prop}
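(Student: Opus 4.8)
The plan is to argue by induction on the degree $n$. The base case $n = 0$ is precisely proposition \ref{zeta-cocycle}, which already reads $\zeta(\div \psi)_{\aa_0} = \sum_{\aa'_0 \not\in \LL^{\aa_0}} \zeta(i_{\aa'_0} \psi)_{\aa_0}$ for $\psi \in M_1(X)$. For the inductive step I would combine two ingredients: the recurrence relation of the previous subsection, $\zeta(\ph)_{\aa_0 \dots \aa_n} = \sum_{\bb_0 \in \LL^{\aa_0}_{\aa_1}} \zeta(i_{\bb_0} \ph)_{\aa_1 \dots \aa_n}$, and the commutation formula recorded among the examples of §2.3, namely $\div(i_{\bb_0} \psi) = i_{\bb_0}\big(\sum_{\aa' \aw \bb_0} i_{\aa'} \psi\big) - i_{\bb_0}(\div \psi)$, which rearranges to express $i_{\bb_0}(\div \psi)$ in terms of $\div(i_{\bb_0}\psi)$ and the double contractions $i_{\bb_0} i_{\aa'}\psi$.

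Applying the recurrence to $\zeta(\div \psi)_{\aa_0 \dots \aa_n}$ and substituting for $i_{\bb_0}(\div \psi)$, the summand $\zeta(\div(i_{\bb_0}\psi))_{\aa_1 \dots \aa_n}$ falls under the inductive hypothesis applied to $i_{\bb_0}\psi \in M_n(X)$, producing $\sum_{\aa'_1 \not\in \LL^{\aa_1}} \zeta(i_{\aa'_1} i_{\bb_0}\psi)_{\aa_1 \dots \aa_n}$. One is then left, after summing over $\bb_0 \in \LL^{\aa_0}_{\aa_1}$, with $\sum_{\aa' \aw \bb_0} \zeta(i_{\bb_0} i_{\aa'}\psi)_{\aa_1\dots\aa_n}$ minus these inductive terms. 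I would split the inner sum over $\aa' \aw \bb_0$ according to whether $\aa' \in \LL^{\aa_0}$ or not. The contributions with $\aa' \not\in \LL^{\aa_0}$ recombine, by the recurrence read backwards, into exactly the target $\sum_{\aa'_0 \not\in \LL^{\aa_0}} \zeta(i_{\aa'_0}\psi)_{\aa_0 \dots \aa_n}$, since the support of $i_{\bb_0}i_{\aa'_0}\psi$ automatically enforces the missing constraint $\aa'_0 \aw \bb_0$.

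The crux is therefore to show that the remaining interior terms cancel: the contributions with $\bb_0 \incl \aa' \incl \aa_0$ must match the inductively generated terms $\sum_{\aa'_1 \not\in \LL^{\aa_1}} \zeta(i_{\aa'_1} i_{\bb_0}\psi)_{\aa_1\dots\aa_n}$. This is where I expect the real difficulty, and where the semi-lattice hypothesis should enter through proposition \ref{zeta-cap}: the locality identity $\zeta({\rm r}_\cc \ph)_{\bar\aa} = \zeta(\ph)_{\cc \cap \bar\aa}$ would let me rewrite the double-contraction terms after restricting to the relevant cones and then reindex the two free summation variables against each other. The main obstacle is the bookkeeping of the domains $\LL^{\aa_0}_{\aa_1}$ and of the intersection conditions, since the orderings $\aa'\bb_0$ versus $\bb_0\aa'_1$ of the first two arguments of $\psi$ must be reconciled by this reindexing rather than term by term.

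Finally, the ``in particular'' claim is immediate from the main formula once established: if $\psi \in M_n(\LL^{\aa_0})$ is supported on chains lying inside the cone $\LL^{\aa_0}$, then $i_{\aa'_0}\psi = 0$ for every $\aa'_0 \not\in \LL^{\aa_0}$, because $(i_{\aa'_0}\psi)_{\bar\dd} = \psi_{\aa'_0 \bar\dd}$ vanishes as soon as its leading index leaves $\LL^{\aa_0}$. Every summand on the right-hand side is then zero, so $i_{\aa_0}\zeta(\div\psi) = 0$.
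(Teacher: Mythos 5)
Your skeleton coincides with the paper's proof: induction on $n$, the recurrence $\zeta(\ph)_{\aa_0\dots\aa_n}=\sum_{\bb_0\in\LL^{\aa_0}_{\aa_1}}\zeta(i_{\bb_0}\ph)_{\aa_1\dots\aa_n}$, the Leibniz rule for $\div\circ i_{\bb_0}$, the inductive hypothesis applied to $\div(i_{\bb_0}\psi)$, and the recurrence read backwards to reassemble degree-$n$ zeta transforms. The only divergence is the step you flag as the ``real difficulty'' and do not carry out, so let me address that point. The paper never splits the double sum $\sum_{\bb_0}\sum_{\aa'}$ according to whether $\aa'\in\LL^{\aa_0}$: it recombines the \emph{whole} double sum into $\sum_{\aa'\in X}\zeta(i_{\aa'}\psi)_{\aa_0\dots\aa_n}$, and recombines the inductively produced terms into $\sum_{\bb_0\in\LL^{\aa_0}_{\aa_1}}\zeta(i_{\bb_0}\psi)_{\aa_0\dots\aa_n}$ (the support of $i_{\bb_0}\psi$ inside $\LL^{\bb_0}$ is what shrinks the inductive range $X\setminus\LL^{\aa_1}$ down to $\LL^{\aa_0}_{\aa_1}$ before the backwards recurrence applies). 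The difference of these two sums is then closed by \emph{colocality}, proposition \ref{zeta-colocality} --- not by \ref{zeta-cap} as you propose: since $i_{\bb_0}\psi\in M_n(\LL^{\aa_1})$ whenever $\bb_0\in\LL^{\aa_1}$, the terms $\zeta(i_{\bb_0}\psi)_{\aa_0\dots\aa_n}$ vanish there, the subtracted sum extends for free from $\LL^{\aa_0}_{\aa_1}$ to $\LL^{\aa_0}$, and the complement $\sum_{\aa'\notin\LL^{\aa_0}}$ drops out with no cancellation of double contractions at all.

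That said, the cancellation you set up does go through, and with less machinery than you fear. Your interior terms are indexed by pairs $(\aa',\bb_0)$ with $\aa_0\cont\aa'\cont\bb_0$ and $\bb_0\not\incl\aa_1$ (the condition $\aa'\cont\bb_0$ coming from the support of $i_{\bb_0}i_{\aa'}\psi$), and the summand is determined by the chain $\aa'\bb_0$ heading $\psi$. The inductive terms are indexed by pairs $(\bb_0,\aa'_1)$ with $\aa_0\cont\bb_0\cont\aa'_1$, $\bb_0\not\incl\aa_1$ and $\aa'_1\not\incl\aa_1$, the summand determined by the chain $\bb_0\aa'_1$. Relabel both by (first argument, second argument) of $\psi$: the two index sets are $\{\aa_0\cont\cc\cont\cc'\,,\ \cc'\not\incl\aa_1\}$ and $\{\aa_0\cont\cc\cont\cc'\,,\ \cc\not\incl\aa_1,\ \cc'\not\incl\aa_1\}$, and these coincide because $\cc'\incl\cc$ together with $\cc'\not\incl\aa_1$ forces $\cc\not\incl\aa_1$. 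The summands are then literally identical, so the cancellation \emph{is} term by term after the relabelling; no appeal to \ref{zeta-cap} or to the semi-lattice structure is needed at this point. Your treatment of the exterior terms and of the ``in particular'' clause is correct as written, so with this cancellation made explicit your argument is a complete proof, marginally longer than the paper's but resting on the same identities.
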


\begin{proof}
We will prove the relation by induction on the degree $n$. Let us first
recall the module Leibniz rule for the interior product $\idot$, 
in section 2.2 example 2:
\begin{equation} \div \circ i_\bb = i_\bb \circ \Big( \sum_{\aa \in X} i_\aa \Big) 
    - i_\bb \circ \div \end{equation}
For every $\psi \in M_{n+1}(X)$ and by construction of $\zeta$,
we may thus rewrite $\zeta(\div \psi)_{\aa_0 \dots \aa_n}$ as:
    \begin{equation} \begin{split}
\sum_{\bb_0 \in \LL^{\aa_0}_{\aa_1} }
\zeta(i_{\bb_0} \div \psi)_{\aa_1 \dots \aa_n}
&= \sum_{\bb_0 \in \LL^{\aa_0}_{\aa_1} }
   \sum_{\aa'_0 \in X}  
    \zeta(i_{\bb_0} i_{\aa'_0} \psi)_{\aa_1 \dots \aa_n} \\
&- \sum_{\bb_0 \in \LL^{\aa_0}_{\aa_1} }
    \zeta(\div i_{\bb_0} \psi)_{\aa_1 \dots \aa_n}
\end{split} \end{equation}

We may now use the proposition on degree $n-1$ to rewrite the summand 
    of the second term. Because $i_{\bb_0} \psi$ is supported inside 
    $\LL^{\bb_0} \incl \LL^{\aa_0}$, 
    observing that 
    $(X \setminus \LL^{\aa_1}) \cap \LL^{\bb_0}
    = \LL^{\aa_0}_{\aa_1} \cap \LL^{\bb_0}$ 
    we get:
    \begin{equation} \zeta(\div i_{\bb_0} \psi)_{\aa_1 \dots \aa_n} 
    = \sum_{{\aa'_1} \in \LL^{\aa_0}_{\aa_1}} 
    \zeta(i_{\aa'_1} i_{\bb_0} \psi)_{\aa_1 \dots \aa_n} 
    = \zeta(i_{\bb_0} \psi)_{\aa_0 \dots \aa_n} \end{equation}
Swapping the first two sums in the previous expression
    of $\zeta(\div \psi)_{\aa_0 \dots \aa_n}$ we are left with the difference:
    \begin{equation} 
\zeta(\div \psi)_{\aa_0 \dots \aa_n}  
= \sum_{\aa'_0 \in X} 
    \zeta(i_{\aa'_0} \psi)_{\aa_0 \dots \aa_n} \\
- \sum_{\bb_0 \in \LL^{\aa_0}_{\aa_1} }
    \zeta(i_{\bb_0} \psi)_{\aa_0 \dots \aa_n} 
\end{equation}
In virtue of prop.\ref{zeta-colocality}, 
    we have $\zeta(i_{\bb_0} \psi)_{\aa_0 \dots \aa_n} = 0$ 
    for every $\bb_0 \in \LL^{\aa_1}$. 
    We get the desired formula on degree $n$ by 
    rewriting the second sum over $\bb_0 \in \LL^{\aa_0}$.
\end{proof}

For every $\aa_0 \in X$, the local cocyle property similarly 
implies that we have the factorisation: 
\begin{equation} \bcd 
M_n(X) \rar \drar[swap]{i_{\aa_0}\zeta} 
& \disp \frac {M_n(\LL^{\aa_0})} {\div M_{n+1}(\LL^{\aa_0}) } \dar[dashed] 
\\
    & M_{n-1}(\LL^{\aa_0}) 
\ecd \end{equation}
This will come as an essential feature of $\zeta$ when defining 
higher degree transport equations generalising belief propagation.
Letting a field $\ph \in M_n(X)$ evolve up to a boundary term 
$\div \psi$, the partial evaluations $i_{\aa_0} \Phi$ of 
its zeta transform $\Phi = \zeta(\ph)$ 
shall only depend on the 
generalised messages $\psi$ coming from the outside of $\LL^{\aa_0}$.

Note that when $X$ contains a maximal element $\Om$, proposition 
\ref{higher-zeta-cocycle} may be rewritten as: 
\begin{equation}  \label{zeta-div-zeta-Om} 
    \zeta(\div \psi)_{\bar \aa} = \zeta(\psi)_{\Om \bar \aa}  
\end{equation} 
In general, formula (\ref{zeta-div-zeta-Om}) 
could serve as a natural definition for the notation $\zeta(\psi)_{\Om\bar\aa}$. 
One could also define $\tilde X = \{ \Om \} \sqcup X$ 
by prepending $X$ with an initial element 
and extending the module system $M$ by $M_\Om = \colim_{\aa} M_\aa$. 
This point of view will be very useful in understanding the canonical diffusion 
flux in chapter 5 and proving proposition \ref{mu-gauss}. 

\begin{thm}
    We have the commutation relation: 
    \begin{equation} \tilde \zeta \circ \div = i_\Om \circ \tilde \zeta \end{equation}
    so that $\tilde \zeta : M_\bullet(X) \aw M_\bullet(\tilde X)$ defines 
    a morphism of chain complexes between $\big( M_\bullet(X) , \div \big)$ 
    and $\big( M_\bullet(\tilde X),  i_\Om \big)$ on positive degrees,
    while $M_\bullet(\tilde X)$ is naturally extended by $M_{-1}(\tilde X) = M_\Om$.
\end{thm}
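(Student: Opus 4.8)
The plan is to recognise the claimed identity as a mere repackaging of Proposition \ref{higher-zeta-cocycle}, once a maximal element has been adjoined to $X$. Recall that $\tilde X = \{\Om\}\sqcup X$ makes $\Om$ an initial element of the poset, so that $\LL^\Om = \tilde X$ and $\LL^\Om_{\aa_0} = \tilde X\setminus\LL^{\aa_0}$ for every $\aa_0 \in X$. I extend any $\ph \in M_{n+1}(X)$ by zero to chains of $\tilde X$ meeting $\Om$; in particular $i_\Om\ph = 0$, since $\ph$ has no component whose first index is $\Om$. The whole content will be to evaluate $\tilde\zeta$ on chains beginning with $\Om$ using the recurrence relation for the extended zeta transform, now read in $\tilde X$.

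First I would apply that recurrence at the leading element $\Om$ of a chain $\Om\aa_0\dots\aa_n$, obtaining
\[
\tilde\zeta(\ph)_{\Om\aa_0\dots\aa_n}
= \sum_{\bb_{-1}\in\LL^\Om_{\aa_0}}\tilde\zeta(i_{\bb_{-1}}\ph)_{\aa_0\dots\aa_n}
= \sum_{\aa'_0\not\in\LL^{\aa_0}}\zeta(i_{\aa'_0}\ph)_{\aa_0\dots\aa_n},
\]
where the $\bb_{-1}=\Om$ summand vanishes because $i_\Om\ph = 0$, and where $\tilde\zeta$ reduces to $\zeta$ on the surviving chains, all lying in $N(X)$. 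By Proposition \ref{higher-zeta-cocycle} the right-hand side equals $\zeta(\div\ph)_{\aa_0\dots\aa_n} = \tilde\zeta(\div\ph)_{\aa_0\dots\aa_n}$, while $(i_\Om\tilde\zeta\ph)_{\aa_0\dots\aa_n} = (\tilde\zeta\ph)_{\Om\aa_0\dots\aa_n}$ by definition of $i_\Om$. This proves $\tilde\zeta\circ\div = i_\Om\circ\tilde\zeta$ on every component indexed by a chain of $N(X)$, which is exactly the coordinate statement (\ref{zeta-div-zeta-Om}).

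It then remains to check the components indexed by chains of $\tilde X$ that already begin with $\Om$, and to confirm that the target is a genuine complex. On such a chain $\Om\bar\aa$ the same recurrence gives $\tilde\zeta(\div\ph)_{\Om\bar\aa} = \sum_{\aa'_0\not\in\LL^{\aa_0}}\zeta(i_{\aa'_0}\div\ph)_{\bar\aa} = \zeta(\div\div\ph)_{\bar\aa} = 0$ by $\div^2 = 0$, whereas $(i_\Om\tilde\zeta\ph)_{\Om\bar\aa} = (\tilde\zeta\ph)_{\Om\Om\bar\aa} = 0$ because the sequence $\Om\Om\bar\aa$ is degenerate; both sides agree. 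The same degeneracy shows $(i_\Om\circ i_\Om\Phi)_{\bar\aa} = \Phi_{\Om\Om\bar\aa} = 0$, so $i_\Om^2 = 0$ and $\big(M_\bullet(\tilde X), i_\Om\big)$ — with the convention $M_{-1}(\tilde X) = M_\Om = \colim_\aa M_\aa$, into which $i_\Om$ sends $M_0(\tilde X)$ by evaluation at $\Om$ — is indeed a chain complex. Combined with the commutation relation, this exhibits $\tilde\zeta$ as a morphism of chain complexes in positive degrees.

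I do not expect a real obstacle: the mathematical weight is carried entirely by Proposition \ref{higher-zeta-cocycle}, and the only care required is bookkeeping, namely interpreting $\tilde\zeta$ and the evaluation $i_\Om$ correctly at the adjoined element $\Om$, exploiting that $\ph$ is supported off $\Om$, and disposing of the degenerate $\Om\Om$ components.
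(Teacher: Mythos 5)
Your proof is correct and takes essentially the same route the paper intends: the theorem is presented as an immediate repackaging of Proposition \ref{higher-zeta-cocycle} via formula (\ref{zeta-div-zeta-Om}), and your argument supplies exactly the bookkeeping the paper leaves implicit (the recurrence of $\tilde\zeta$ at the adjoined element $\Om$, the vanishing of $i_\Om \ph$ for fields supported on $X$, and the disposal of the degenerate $\Om\Om$ components giving $i_\Om^2 = 0$).
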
 

Note that $M_0(X)$ is then naturally isomorphic to the 0-cycles of $M_0(\tilde X)$.
The extended complex $M_\bullet(\tilde X)$ is acyclic as 
$i_\Om \ph = 0$ implies $\ph = i_\Om (e_\Om \ph)$, although 
this require $M_\bullet(X)$ to be acyclic.

\chapter{Energy and Information}

    Many laws of nature remarkably take the form 
of variational principles. Since the lagrangian formulation 
of mechanics they have become a fundamental constituent 
of most physical theories, from thermodynamics to modern quantum field theory. 
As likelihood optimisation problems, they are also now a central 
occupation in data science and artificial intelligence, 
leading one to wonder if such variational principles should not help in 
understanding the self-organisation of biological systems. 
Set aside the theoretical beauty of variational principles 
remains the challenge of designing efficient computations of their solutions,
while local and parallel optimisation algorithms in return 
provide with good abstract models\footnotemark{} for neuronal interactions. 
\footnotetext{
    See for instance \cite{Friston-Parr-2017} 
    for models of active inference 
    and neuronal message-passing 
    relying on belief propagation. 
}

\begin{center}
    \includegraphics[width=0.25\textwidth]{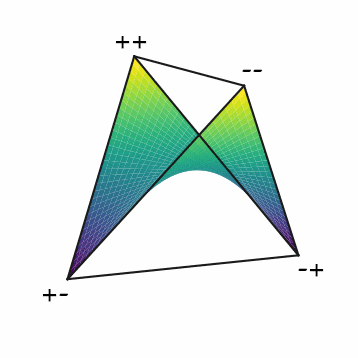}
\end{center}

Entropy generates such variational principles and could be seen as the main 
object of this chapter. 
Legendre duality has long been a classical technique in thermodynamics, 
transforming variational principles into differently constrained ones. 
Considering Legendre duality as a smooth correspondence between 
observables and statistical states 
will motivate the formal study of free energy, 
viewed as a functional of the hamiltonian, given in section 1.
The functorial properties satisfied by effective energy, 
a conditional form of free energy,
will be essential to understand the structure of 
belief propagation in chapter 5.
Giving a fine description of the structure of interactions 
and correlations, we relate mutual informations 
to a combinatorial localisation of entropy in section 2, 
which will be fundamental in understanding the content of Bethe's
local approximation of entropy. 
The main result of this chapter concludes section 3:
we give a homological description of the solutions of 
Kikuchi's cluster variation method, 
which estimates the marginals 
of a global probabilistic model by attempting to solve
a variational principle on a local approximation of entropy.

\section{Free and Effective Energies}

In this section, we first review some fundamental properties of free energy, 
seen as a functional on observables, although it is more customarily defined
as a function of the inverse temperature $\theta =\frac{1} {kT}$ once given 
the hamiltonian $H$ governing the system. 
Free energy is then defined from the partition function 
$Z(\theta) = \sum \e^{- \theta H}$ by letting 
$F(\theta) = -\frac 1 {\theta}\ln Z(\theta)$.
It is well known that both functionals encode many physical properties of the system:
the partition function may be seen as the Laplace transform of the 
spectral density of $H$ and its derivatives 
yield the moments of $H$ in the Gibbs state $[\e^{-\theta H}]$.

Temperature simply acts as an energy scaling and we
here argue that given a region $\aa \in X$, 
free energy is best seen as a smooth function 
$\Fh^\aa \in C^\infty(A_\aa)$ of the hamiltonian itself.
More generally, for every subregion
 $\bb \incl \aa$, partial integration along the fibers of 
$E_\aa \aw E_\bb$ in the sum defining $\Fh^{\aa}$ will yield 
a smooth map $ \Fh^{\bb\aa} \in C^\infty(A_\aa, A_\bb)$ 
which we call effective energy.
We finally show that our definition of free energy simply generates 
Gibbs propability densities
through its differential $\Fh^\aa_* : A_\aa \aw A^*_\aa$,
while the differential of effective energies 
will yield conditional expectations in the Gibbs state.

\subsection{Free Energy}

\begin{defn} For every $\aa \in X$, we call free energy the functional 
$\Fh^\aa \in C^\infty(A_\aa)$ defined by:
\begin{equation} \Fh^\aa(H_\aa) = - \ln \Big( \sum_{E_\aa} \e^{- H_\aa} \Big) \end{equation}
\end{defn}

This definition agrees with the free energy of an isolated system 
governed by the hamiltonian $H_\aa$ 
at inverse temperature $\theta = 1$. 
One may reintroduce the temperature dependency by setting:
\begin{equation} \Fh_\theta^\aa(H_\aa) = \theta^{-1} \Fh^\aa(\theta H_\aa) \end{equation}
A fundamental property of free energy is additivity 
along non-interacting subsystems, which may be thought of
as a weaker form of extensivity. 

\begin{prop}[Additivity] \label{Fh-add}
If $\aa$ is the disjoint union of $\bb_1, \dots, \bb_n$ 
and $H_\aa$ splits as $\sum_i H_{\bb_i}$, then:
\begin{equation} \Fh^\aa \Big( \sum_i H_{\bb_i} \Big) = \sum_i \Fh^{\bb_i}(H_{\bb_i}) \end{equation}
\end{prop}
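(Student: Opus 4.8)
The plan is to reduce the identity to the elementary fact that the partition function factorises over independent subsystems. First I would unwind the hypothesis that $\aa$ is the disjoint union of $\bb_1, \dots, \bb_n$: this means exactly that the configuration space factorises as $E_\aa = \prod_i E_{\bb_i}$, so that a microstate $x_\aa$ is the same datum as a tuple $(x_{\bb_1}, \dots, x_{\bb_n})$. The second hypothesis, that $H_\aa$ splits as $\sum_i H_{\bb_i}$, should be read with the implicit cylindrical-extension convention $H_{\bb_i} = j_{\aa\bb_i}(H_{\bb_i})$, so that the $i$-th summand depends only on the coordinate $x_{\bb_i}$.

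Next I would compute the partition function directly. Since the exponential of a sum is the product of exponentials, one has $\e^{-H_\aa(x_\aa)} = \prod_i \e^{-H_{\bb_i}(x_{\bb_i})}$, and summing over the product space $E_\aa = \prod_i E_{\bb_i}$ distributes the total sum into a product of partial sums:
\[ \sum_{x_\aa \in E_\aa} \e^{-H_\aa(x_\aa)} = \prod_i \Big( \sum_{x_{\bb_i} \in E_{\bb_i}} \e^{-H_{\bb_i}(x_{\bb_i})} \Big). \]
This is the only genuine computation, and it is simply the distributivity of multiplication over addition applied block by block.

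Finally I would apply $-\ln$ to both sides. The logarithm turns the product into a sum, giving
\[ \Fh^\aa\Big( \sum_i H_{\bb_i} \Big) = - \ln \prod_i \Big( \sum_{E_{\bb_i}} \e^{-H_{\bb_i}} \Big) = \sum_i \Big( - \ln \sum_{E_{\bb_i}} \e^{-H_{\bb_i}} \Big) = \sum_i \Fh^{\bb_i}(H_{\bb_i}), \]
which is the claimed additivity. There is no real obstacle here: the result is a direct consequence of the definitions. The only point deserving care is bookkeeping --- making sure that ``disjoint union'' is correctly translated into the product structure on $E_\aa$, and that the splitting $H_\aa = \sum_i H_{\bb_i}$ is understood via the natural extension maps $j_{\aa\bb_i}$, so that each factor in the product over $E_\aa$ genuinely depends on a single block of coordinates. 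One could also note in passing that the same argument proves the general statement with temperature, $\Fh_\theta^\aa$, since $\theta$ enters only as a global rescaling of the hamiltonian and commutes with the factorisation.
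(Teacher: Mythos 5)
Your argument is correct and is exactly the paper's proof: both reduce the claim to the factorisation $\sum_{E_\aa}\prod_i \e^{-H_{\bb_i}} = \prod_i \sum_{E_{\bb_i}} \e^{-H_{\bb_i}}$ over the product configuration space and then apply $-\ln$. You have merely spelled out the bookkeeping (the cylindrical-extension convention and the temperature remark) that the paper leaves implicit.
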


\begin{proof}
    When $\aa = \sqcup_i \bb_i$ we have:
    $ -\ln \sum_{E_\aa} \prod_{i=1}^n \e^{-H_{\bb_i}} 
    = -\ln \prod_{i=1}^n \sum_{E_{\bb_i}} \e^{-H_{\bb_i}}$
\end{proof}

The additivity of $\Fh^\aa$ along constants may be seen as a particular case
of the previous proposition, 
identifying $\R$ with $A_\vide$. 
For every $H_\aa \in A_\aa$ and $\lambda \in \R$,
we have: 
\begin{equation} \Fh^{\aa}(H_\aa + \lambda) = \Fh^\aa(H_\aa) + \lambda \end{equation}
One should however note that $\Fh^\aa(0) \neq 0$ and free energy 
contains an entropic contribution $ -\ln |E_\aa|$. 
It is hence important for the previous proposition that the $\bb_i$'s 
actually cover $\aa$, 
and one should beware that $\Fh^\aa(H_{\bb_i})$ and $\Fh^{\bb_i}(H_\aa)$ 
differ by the entropic term $\Fh^\aa(0) - \Fh^{\bb_i}(0)$.
One should thus define a reduced free energy functional $\tilde \Fh^\aa$ 
by subtracting the entropic term 
to get $\tilde \Fh^\aa \circ j_{\aa\bb} = \tilde\Fh^\bb$.

The free energy of $H_\aa$ is by definition
the additive constant we subtract from $H_\aa$ to renormalise
the Gibbs density $\e^{-H_\aa}$:
\begin{equation} \big[ \e^{-H_\aa} \big] = \e^{-H_\aa + \Fh^\aa(H_\aa)}  \end{equation}
The smooth hypersurface $\{ \Fh^\aa = 0 \}$ given by the
image of $H_\aa \mapsto H_\aa - \Fh^\aa(H_\aa)$ 
is hence diffeomorphic to the space
$\Del_\aa$ of non-vanishing probability densities.
The Gibbs state is however even more naturally recovered through
the differential of $\Fh^\aa$.

\begin{prop}[Gibbs Expectations]
    The differential $d\Fh^\aa_\theta : A_\aa \aw A^*_\aa$ 
    of free energy is given by: 
    \begin{equation} d\Fh^\aa_\theta (H_\aa) = [\e^{-\theta H_\aa}] \end{equation}
We shall denote by $\E^\aa_\theta \in \Omega^1(A_\aa)$
the differential of $\Fh^\aa_\theta$ viewed as a $1$-form over $A_\aa$.
\end{prop}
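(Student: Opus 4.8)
The plan is to compute the differential of $\Fh^\aa_\theta$ directly from its definition, exploiting that $A_\aa = \R^{E_\aa}$ is finite-dimensional so that $\Fh^\aa$ is a genuinely smooth function of the finitely many coefficients $H_\aa(x_\aa)$. It suffices to identify, for every test observable $f_\aa \in A_\aa$, the directional derivative of $\Fh^\aa_\theta$ at $H_\aa$ along $f_\aa$ with the pairing $\croc{[\e^{-\theta H_\aa}]}{f_\aa}$; by definition of the duality bracket this pins down $d\Fh^\aa_\theta(H_\aa)$ as the announced element of $A^*_\aa$.

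First I would treat the case $\theta = 1$. Writing $Z = \sum_{x_\aa \in E_\aa} \e^{-H_\aa(x_\aa)}$ for the partition function, differentiating $\Fh^\aa(H_\aa) = -\ln Z$ under the finite sum gives
\[ \frac{d}{dt}\Big|_{t=0} \Fh^\aa(H_\aa + t f_\aa) = -\frac{1}{Z}\sum_{x_\aa \in E_\aa} \big(-f_\aa(x_\aa)\big)\,\e^{-H_\aa(x_\aa)} = \sum_{x_\aa \in E_\aa} \frac{\e^{-H_\aa(x_\aa)}}{Z}\,f_\aa(x_\aa). \]
The prefactor $\e^{-H_\aa}/Z$ is exactly the normalised Gibbs density $[\e^{-H_\aa}] = p_\aa$, so the right-hand side equals $\croc{p_\aa}{f_\aa} = \E_{p_\aa}[f_\aa]$. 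As this identity holds for every $f_\aa \in A_\aa$, we conclude $d\Fh^\aa(H_\aa) = [\e^{-H_\aa}]$.

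The temperature-dependent case then follows by the chain rule applied to $\Fh^\aa_\theta(H_\aa) = \theta^{-1}\Fh^\aa(\theta H_\aa)$: the outer scalar $\theta^{-1}$ and the inner rescaling $H_\aa \mapsto \theta H_\aa$ contribute cancelling factors $\theta^{-1}$ and $\theta$, leaving $d\Fh^\aa_\theta(H_\aa) = d\Fh^\aa(\theta H_\aa) = [\e^{-\theta H_\aa}]$. Finally I would remark that the differential genuinely lands in $\Delta_\aa$ rather than merely in $A^*_\aa$: the density $\e^{-\theta H_\aa}/Z$ is strictly positive and sums to $1$ by construction, so $d\Fh^\aa_\theta$ is precisely the Gibbs state map $A_\aa \aw \Delta_\aa$. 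There is no substantial obstacle here; the only point requiring care is the identification of the abstract differential with a measure through the duality bracket, which is exactly what gives the statement ``$d\Fh^\aa_\theta$ is a probability density'' its meaning.
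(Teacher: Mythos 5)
Your proof is correct and follows essentially the same route as the paper: a first-order expansion of $-\ln Z$ under the finite sum identifying the linear term with the Gibbs expectation, followed by the chain rule for general $\theta$. The extra remarks on the cancelling $\theta$ factors and on the image lying in $\Delta_\aa$ are fine but not needed.
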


\begin{proof}
    Using $de^x = \e^x dx$ and $d \ln(y) = \frac{dy}{y}$ 
    we have for every perturbation $f_\aa$ of $H_\aa \in A_\aa$:
    \begin{equation} \Fh^\aa(H_\aa + f_\aa) = \Fh^\aa(H_\aa) + 
    \frac {\sum f_\aa \e^{-H_\aa}}{\sum \e^{-H_\aa}} + o\,(f_\aa) \end{equation}
    The linear term is 
    precisely the expectation $\E^\aa[f_\aa]$ relative to the Gibbs state
    $[\e^{-H_\aa}]$.
    The formula at generic temperatures easily follows from the chain rule 
    applied to $H_\aa \mapsto \theta H_\aa$.
\end{proof}

The previous proposition generalises the usual 
properties expected from thermodynamic potentials, which
yield thermodynamic variables through their derivatives. 
Fixing the hamiltonian $H_\aa$ and 
considering free energy as a sole function of inverse temperature, 
we would recover internal energy as: 
\begin{equation} {\mathscr U}_\aa(\theta) = \E^\aa_\theta[H_\aa] = 
\frac{d (\theta \Fh^\aa_\theta)}{d \theta} \end{equation}

\begin{prop}[Integral Form]
The free energy of $H_\aa \in A_\aa$ is given by the integral formula: 
    \begin{equation} \Fh^\aa(H_\aa) = \Fh^\aa(0) +  
    \int_{\theta = 0}^1 \E^\aa_\theta[H_\aa] \:d\theta \end{equation}
where $\E^\aa_\theta$ denotes expectation in 
    the Gibbs state $[\e^{-\theta H_\aa}]$. 
\end{prop}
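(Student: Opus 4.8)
The plan is to recognise the claimed identity as the fundamental theorem of calculus applied along the ray $\theta \mapsto \theta H_\aa$. First I would set $g(\theta) = \Fh^\aa(\theta H_\aa)$ for $\theta \in [0,1]$; since $\Fh^\aa \in C^\infty(A_\aa)$ and $\theta \mapsto \theta H_\aa$ is a smooth affine curve in $A_\aa$, the composite $g$ is a smooth function of the single real variable $\theta$. The definition $\Fh^\aa_\theta(H_\aa) = \theta^{-1}\Fh^\aa(\theta H_\aa)$ gives the relation $g(\theta) = \theta\,\Fh^\aa_\theta(H_\aa)$, so that at the endpoints $g(1) = \Fh^\aa(H_\aa)$ and $g(0) = \Fh^\aa(0)$.

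The key step is to identify the derivative $g'(\theta)$ with the internal energy $\E^\aa_\theta[H_\aa]$. This is precisely the content of the internal-energy formula recorded just above, ${\mathscr U}_\aa(\theta) = \E^\aa_\theta[H_\aa] = \frac{d(\theta\Fh^\aa_\theta)}{d\theta}$, once one substitutes $g(\theta) = \theta\,\Fh^\aa_\theta(H_\aa)$. Alternatively, I would compute the derivative directly from $\Fh^\aa(\theta H_\aa) = -\ln\sum_{E_\aa}\e^{-\theta H_\aa}$, differentiating under the finite sum via $\frac{d}{d\theta}\e^{-\theta H_\aa} = -H_\aa\,\e^{-\theta H_\aa}$ together with $d\ln y = dy/y$, which yields
\[
g'(\theta) = \frac{\sum_{E_\aa} H_\aa\,\e^{-\theta H_\aa}}{\sum_{E_\aa}\e^{-\theta H_\aa}} = \E_{[\e^{-\theta H_\aa}]}[H_\aa] = \E^\aa_\theta[H_\aa].
\]
This is also the one-variable instance of the Gibbs Expectations proposition $d\Fh^\aa_\theta(H_\aa) = [\e^{-\theta H_\aa}]$, read off by evaluating the differential on the tangent vector $H_\aa$ of the ray.

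Finally I would integrate: by the fundamental theorem of calculus $g(1) - g(0) = \int_0^1 g'(\theta)\,d\theta$, which rearranges into the asserted
\[
\Fh^\aa(H_\aa) = \Fh^\aa(0) + \int_{\theta=0}^1 \E^\aa_\theta[H_\aa]\,d\theta.
\]
There is essentially no serious obstacle here beyond bookkeeping. In the finite setting the partition function $\sum_{E_\aa}\e^{-\theta H_\aa}$ is a strictly positive, nowhere-vanishing smooth function of $\theta$, so $g$ is genuinely $C^\infty$ on a neighbourhood of $[0,1]$ and differentiation under the sum is unproblematic; the only points requiring care are the correct matching of endpoints $g(0) = \Fh^\aa(0)$, $g(1) = \Fh^\aa(H_\aa)$, and the clean passage between the scaling identity $g(\theta) = \theta\,\Fh^\aa_\theta(H_\aa)$ and the internal-energy formula.
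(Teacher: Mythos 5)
Your proposal is correct and coincides with the paper's own argument, which likewise consists of applying the fundamental theorem of calculus along the path $\theta \mapsto \theta H_\aa$ and identifying the derivative with the Gibbs expectation $\E^\aa_\theta[H_\aa]$. Your version simply spells out the differentiation under the finite sum and the endpoint bookkeeping that the paper leaves implicit.
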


\begin{proof}
    This is the fundamental theorem of calculus applied along 
    the path $\theta \mapsto \theta H_\aa$.
\end{proof}

Another important property of free energy is concavity,  
as it will allow us to view Shannon entropy as the Legendre transform
of free energy in section 3.

\begin{prop}[Concavity]
For every $U_\aa, V_\aa \in A_\aa$ and $t \in [0,1]$, we have: 
    \begin{equation} \Fh^\aa \big( t\, U_\aa + (1-t)\, V_\aa \big) 
    \; \geq \; 
    t\, \Fh^\aa(U_\aa) + (1-t)\, \Fh^\aa(V_\aa) \end{equation}
\end{prop}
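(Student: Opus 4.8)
The plan is to reduce the statement to the \emph{convexity} of the log-partition function $H_\aa \mapsto \ln \sum_{E_\aa} \e^{-H_\aa}$, since $\Fh^\aa$ is exactly its negative. Because $A_\aa = \R^{E_\aa}$ is a vector space, hence a convex domain, and $\Fh^\aa$ is smooth, I would establish concavity either by a direct inequality on partition functions or by checking that the second differential is negative semidefinite.

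The most self-contained route is through Hölder's inequality. Writing $Z_\aa(H_\aa) = \sum_{E_\aa} \e^{-H_\aa}$, for $t \in (0,1)$ and $H_\aa = t\, U_\aa + (1-t)\, V_\aa$ I would factor each summand as $\e^{-H_\aa(x_\aa)} = \big(\e^{-U_\aa(x_\aa)}\big)^t \big(\e^{-V_\aa(x_\aa)}\big)^{1-t}$ and apply Hölder with the conjugate exponents $p = 1/t$ and $q = 1/(1-t)$, which satisfy $1/p + 1/q = 1$. This yields $Z_\aa\big(t\, U_\aa + (1-t)\, V_\aa\big) \leq Z_\aa(U_\aa)^t \, Z_\aa(V_\aa)^{1-t}$. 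Taking $-\ln$ of both sides reverses the inequality and produces exactly the claimed bound; the endpoint cases $t \in \{0,1\}$ are trivial.

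Alternatively, I would exploit the machinery already in place. Since the differential is $d\Fh^\aa(H_\aa) = \E_{p_\aa}[{-}]$ with $p_\aa = [\e^{-H_\aa}]$, differentiating once more in a direction $f_\aa$ and tracking the dependence of $p_\aa$ on $H_\aa$ gives the second differential $D^2 \Fh^\aa(H_\aa)(f_\aa, f_\aa) = -\big(\E_{p_\aa}[f_\aa^2] - \E_{p_\aa}[f_\aa]^2\big)$, that is, minus the variance of $f_\aa$ under the Gibbs state. This is nonpositive, so the Hessian of $\Fh^\aa$ is everywhere negative semidefinite, which on the convex domain $A_\aa$ is equivalent to concavity.

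The computations are routine, so there is no serious obstacle; the only point deserving care is the second-order expansion in the Hessian approach, where one must keep the variation of $p_\aa$ to recognise the variance term and confirm its sign. I would therefore favour the Hölder argument, since it sidesteps this bookkeeping, needs no appeal to smoothness, and connects the concavity of $\Fh^\aa$ transparently to the multiplicativity of the partition function along the geometric interpolation $\e^{-U_\aa}, \e^{-V_\aa}$.
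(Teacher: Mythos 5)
Your proof is correct, and it takes a genuinely different route from the paper's. The paper's proof reads in full: ``Both $x \mapsto \e^{-x}$ and $y \mapsto -\ln(y)$ are convex while the latter is order reversing.'' That composition argument only delivers convexity of the partition function $Z_\aa(H_\aa) = \sum_{E_\aa}\e^{-H_\aa}$ together with the reversed inequality $-\ln Z_\aa(t U_\aa + (1-t)V_\aa) \geq -\ln\big(t Z_\aa(U_\aa) + (1-t) Z_\aa(V_\aa)\big)$; to conclude one would then need $-\ln$ of the convex combination to dominate the convex combination of $-\ln$, which is exactly the \emph{wrong} direction (concavity of $\ln$ gives the opposite). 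Convexity of $Z_\aa$ does not imply convexity of $\ln Z_\aa$, so the paper's one-liner does not actually close. Your Hölder argument is the standard and correct fix: it proves the strictly stronger statement that $Z_\aa$ is \emph{log-convex}, i.e. $Z_\aa(tU_\aa+(1-t)V_\aa) \leq Z_\aa(U_\aa)^t\, Z_\aa(V_\aa)^{1-t}$, from which the claim follows by taking $-\ln$. Your second route, via $D^2\Fh^\aa(f_\aa,f_\aa) = -\big(\E_{p_\aa}[f_\aa^2]-\E_{p_\aa}[f_\aa]^2\big) \leq 0$, is equally valid and is in fact already recorded in the paper as the ``Fisher Metric'' proposition immediately following this one, so the Hölder proof has the additional merit of keeping the two statements logically independent rather than circular. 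Either of your arguments could be substituted for the paper's proof as written.
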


\begin{proof} Both $x \mapsto \e^{-x}$ and $y \mapsto - \ln(y)$ are convex
    while the latter is order reversing. 
\end{proof}

A consequence of the concavity of free energy is the negative signature of 
its second differential, explicitly given by the following proposition:

\begin{prop}[Fisher Metric] 
The second differential of free energy $D^2 \Fh^\aa : A_\aa \aw S^2 A^*_\aa$ 
yields the opposite covariance of observables with respect to the Gibbs state:
    \begin{equation} - D^2\Fh^\aa(f_\aa, g_\aa) = \E^\aa[f_\aa \cdot g_\aa] 
    - \E^\aa[f_\aa] \cdot \E^\aa[g_\aa] 
    \end{equation}
This is also the Fisher information 
metric for the exponential parametrisation of $\Del_\aa$ 
by $A_\aa$.
\end{prop}

\begin{proof}
    $D^2 \Fh^\aa(f_\aa, g_\aa)$ is
    the variation of $\E^\aa[f_\aa]$ along a perturbation $g_\aa$ 
    of the hamiltonian and:
    \begin{equation} - \frac{\dr}{\dr g_\aa} \bigg( 
    \frac {\sum f_\aa \e^{-H_\aa - g_\aa}} {\sum \e^{- H_\aa - g_\aa}} 
    \bigg)
    = \frac 
    {\sum f_\aa\, g_\aa \e^{-H_\aa}} 
    {\sum \e^{-H_\aa}} 
    - \frac{
        \big( \sum f_\aa \e^{-H_\aa} \big) \big( \sum g_\aa \e^{-H_\aa} \big) 
    }{\big( \sum \e^{-H_\aa} \big)^2 }
    \end{equation} 
\end{proof}

Note that Gibbs states induce a non-degenerate metric 
$\E^\aa[f_\aa \cdot g_\aa]$, 
its image under the projection
$f_\aa \mapsto f_\aa - \E^\aa[f_\aa]$ 
being the covariance bilinear form. 
The exponential map $\expm_\aa$ is a diffeomorphism from 
the level hypersurface $\{ \Fh^\aa = 0 \} \incl A_\aa$ to the manifold 
$\Del_\aa$ of non-degenerate probability densities, 
while the previous projection is  precisely the orthogonal projection 
onto the tangent space of $\{ \Fh^\aa = 0 \}$.

\subsection{Effective Energy}

Partial integration defines a linear map $\Sigma^{\bb\aa} : A_\aa \aw A_\bb$,
associating to $f_\aa$ the observable on $\bb$:
\begin{equation} \Sigma^{\bb\aa} f_\aa \;:\; x_\bb \longmapsto 
\sum_{y \in E_{\aa \setminus \bb}} f_\aa(x_\bb, y) \end{equation}
Partial integration is however not an algebra morphism and should 
rather be seen as an interlacing of the natural map
$A^*_\aa \aw A^*_\bb$ with identifications of each algebra
with its dual vector space.
Effective energy, as defined below, may then be thought of analogously:
\begin{equation} \bcd
A_\aa \rar{\Sigma^{\bb\aa}} & A_\bb \dar{\mlog_\bb} \\
A_\aa \uar{\expm_\aa} \rar{\Fh^{\bb\aa}} & A_\bb 
\ecd \end{equation}

\begin{defn}
For all $\aa \aw \bb$ in $X$, we call effective energy 
the map $\Fh^{\bb\aa} \in C^{\infty}(A_\aa, A_\bb)$ defined by:
\begin{equation} \Fh^{\bb\aa}(H_\aa) = 
-\ln \Big( \Sigma^{\bb\aa}\big( \e^{-H_\aa} \big) \Big) \end{equation}
\end{defn}

Note that the free energy $\Fh^\aa$ is recovered 
as the effective energy of vacuum $\Fh^{\varnothing \aa}$.
More generally, $\Fh^{\bb\aa}$ may be thought of as
the conditional free energy of $\aa$ given the microstate of $\bb$.  
This conditioning is functorial, as expressed by this first 
remarkable property.

\begin{prop}[Functoriality] 
    For every $\aa \aw \bb \aw \cc$ in $X$, we have in $C^\infty(A_\aa, A_\cc)$:
\begin{equation} \Fh^{\cc \bb} \circ \Fh^{\bb \aa} = \Fh^{\cc \aa} \end{equation}
Effective energies hence define a system $\Fh : X \aw \Mfd$ 
of differentiable manifolds over $X$.
\end{prop}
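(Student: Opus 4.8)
The plan is to reduce this composition law entirely to the transitivity of partial integration, the one combinatorial fact that the exponential and logarithm in the definition of $\Fh^{\bb\aa}$ conspire to expose. First I would record the key lemma: for every chain $\aa \aw \bb \aw \cc$ in $X$, partial integration is transitive,
\begin{equation}
\Sigma^{\cc\bb} \circ \Sigma^{\bb\aa} = \Sigma^{\cc\aa}.
\end{equation}
This is nothing but Fubini's theorem for finite sums, since summing a function on $E_\aa$ first over the fibers $E_{\aa \setminus \bb}$ and then over $E_{\bb \setminus \cc}$ amounts to summing once over $E_{\aa \setminus \cc}$; categorically it is just the functoriality of the pushforward $\Sigma$ applied to the composable projections $E_\aa \aw E_\bb \aw E_\cc$ of section 1.

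The second ingredient is the trivial observation that the $\e^{(-)}$ appearing in $\Fh^{\cc\bb}$ cancels the $-\ln$ appearing in $\Fh^{\bb\aa}$. Explicitly, for any $H_\aa \in A_\aa$,
\begin{equation}
\e^{-\Fh^{\bb\aa}(H_\aa)} = \Sigma^{\bb\aa}\big( \e^{-H_\aa} \big),
\end{equation}
directly from the definition and the strict positivity of the integrand, which is what keeps every logarithm well defined. With these two facts in hand the computation is a single line: substituting the second identity into the definition of $\Fh^{\cc\bb}$ evaluated at $\Fh^{\bb\aa}(H_\aa)$ gives
\begin{equation}
\Fh^{\cc\bb}\big( \Fh^{\bb\aa}(H_\aa) \big)
= -\ln\Big( \Sigma^{\cc\bb}\big( \Sigma^{\bb\aa}(\e^{-H_\aa}) \big) \Big),
\end{equation}
and the transitivity lemma collapses the two partial integrations into $\Sigma^{\cc\aa}$, leaving exactly $-\ln\big( \Sigma^{\cc\aa}(\e^{-H_\aa}) \big) = \Fh^{\cc\aa}(H_\aa)$.

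There is no genuine obstacle here; the only point demanding any care is that all intermediate quantities remain strictly positive so that the nested logarithms make sense, which holds because exponentials are positive and $\Sigma^{\bb\aa}$ preserves positivity. To finish, I would note that the composition law together with $\Fh^{\aa\aa} = \id$ — immediate since $\Sigma^{\aa\aa}$ integrates along the one-point fiber $E_\vide$ and is therefore the identity, whence $\Fh^{\aa\aa}(H_\aa) = -\ln(\e^{-H_\aa}) = H_\aa$ — is precisely the data of a covariant functor. Combined with the already-established smoothness of each $\Fh^{\bb\aa} \in C^\infty(A_\aa, A_\bb)$, this exhibits $\aa \mapsto A_\aa$, $(\aa \aw \bb) \mapsto \Fh^{\bb\aa}$ as a system $\Fh : X \aw \Mfd$ of differentiable manifolds over $X$.
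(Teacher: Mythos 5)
Your proof is correct and follows exactly the same route as the paper: the paper's one-line argument also cancels the exponential against the logarithm to reduce the claim to the transitivity $\Sigma^{\cc\bb} \circ \Sigma^{\bb\aa} = \Sigma^{\cc\aa}$ of partial integration. Your additional remarks on positivity of intermediate quantities and on $\Fh^{\aa\aa} = \id$ are harmless elaborations of what the paper leaves implicit.
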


\begin{proof}
    We have $\Fh^{\cc\bb} \circ \Fh^{\bb\aa}(H_\aa) 
    = - \ln \big( \Sigma^{\cc\bb} \circ \Sigma^{\bb\aa} (\e^{-H_\aa}) \big)$ 
    while $\Sigma^{\cc\bb} \circ \Sigma^{\bb\aa} = \Sigma^{\cc\aa}$.
\end{proof}

The following weaker commutative diagram 
expresses that effective energy describes Gibbs states 
marginalisation at the level of hamiltonians. 
Consistency of the Gibbs states will hence transpose to a 
notion of effective consistency on hamiltonians, requiring 
that $H_\bb = \Fh^{\bb\aa}(H_\aa)$ for every $\aa \aw \bb$ in $X$. 

\begin{equation} \bcd 
A_\aa \rar{\Fh^{\bb\aa}} \dar[swap]{\gibbs_\aa} & A_\bb \dar{\gibbs_\bb} \\
\Del_\aa \rar{\Sigma^{\bb\aa}} & \Del_\bb 
\ecd \end{equation}

\begin{prop}[Marginalisation] \label{eff-marginal}
    For every $\aa \aw \bb$ in $X$ we have:
    \begin{equation} \Sigma^{\bb\aa} \big[ \e^{-H_\aa} \big] 
    = \big[ \e^{-\Fh^{\bb\aa}(H_\aa)} \big] \end{equation}
\end{prop}

The following proposition expresses that information on the microstates 
outside the support of a hamiltonian does not affect
effective energy. 
More precisely, given a hamiltonian $H_\aa \in A_\aa$ and $\bb \incl \aa$, 
extending $H_\aa$ to $\aa' = \aa \sqcup \cc$ and letting 
$\bb' = \bb \sqcup \cc$, we have the commutative diagram:
\begin{equation} \bcd A_{\aa'} \rar{\Fh^{\bb'\aa'}} & A_{\bb'} \\
A_{\aa} \uar{j_{\aa'\aa}} \rar{\Fh^{\bb\aa}} & A_{\bb} \uar[swap]{j_{\bb'\bb}} 
\ecd \end{equation}

\begin{prop}\label{eff-indep}
    If $\bb \incl \aa$ and $H_\aa \in A_\aa$, then for every $\bb' \incl \aa'$ 
    such that $\aa \setminus \bb = \aa' \setminus \bb'$:
    \begin{equation} \Fh^{\bb'\aa'}(H_\aa) = \Fh^{\bb\aa}(H_\aa) \end{equation}
\end{prop} 

\begin{proof} $\Sigma^{\bb'\aa'}$ and $\Sigma^{\bb\aa}$ involve a sum over
    the same variables and $\e^{-H_\aa}$ only depends on those in $\aa$.
\end{proof}

Just like conditional independence generalises the notion of independence, 
additivity of free energy along uninteracting subsystems generalises to
the following conditional form:

\begin{prop}[Pairwise Conditional Additivity]\label{eff-add}
    If $\aa = \bb \cup \bb'$ and $\bb \cap \bb' = \cc$, we have:
    \begin{equation} \Fh^{\cc\aa}(H_\bb + H_{\bb'}) = 
    \Fh^{\cc\bb}(H_\bb) + \Fh^{\cc\bb'}(H_{\bb'}) 
    \end{equation}
\end{prop}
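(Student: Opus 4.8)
The plan is to unfold the definition of effective energy and reduce the claim to a Fubini-type factorisation of the partial integration, exactly paralleling the proof of the unconditional additivity in proposition \ref{Fh-add}. By definition $\Fh^{\cc\aa}(H_\bb + H_{\bb'}) = -\ln\big(\Sigma^{\cc\aa}(\e^{-H_\bb - H_{\bb'}})\big)$, so everything rests on computing the marginal projection $\Sigma^{\cc\aa}$, which sums the Gibbs density over the fibres of $E_\aa \aw E_\cc$, i.e. over $E_{\aa \setminus \cc}$.

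The key structural observation is the variable decomposition induced by the hypothesis $\bb \cap \bb' = \cc$ and $\aa = \bb \cup \bb'$: the complementary set splits as a disjoint union $\aa \setminus \cc = (\bb \setminus \cc) \sqcup (\bb' \setminus \cc)$, so that $E_\aa \simeq E_{\bb \setminus \cc} \times E_\cc \times E_{\bb' \setminus \cc}$. Because $H_\bb$ lives in $A_\bb$ it depends only on the coordinates in $\bb = (\bb \setminus \cc) \cup \cc$ and is constant along $E_{\bb' \setminus \cc}$; symmetrically $H_{\bb'}$ is constant along $E_{\bb \setminus \cc}$ (this is precisely the independence recorded in proposition \ref{eff-indep}). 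First I would fix $x_\cc \in E_\cc$ and write out
\[
\Sigma^{\cc\aa}\big(\e^{-H_\bb - H_{\bb'}}\big)(x_\cc)
= \sum_{y \in E_{\bb \setminus \cc}} \; \sum_{z \in E_{\bb' \setminus \cc}}
\e^{-H_\bb(x_\cc, y)}\, \e^{-H_{\bb'}(x_\cc, z)},
\]
and then observe that since the $y$-dependent and $z$-dependent factors separate, the double sum factorises as a product of two independent sums.

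Each factor is then recognised as a partial integration at the level of $\bb$ and $\bb'$ respectively: the first sum is $\Sigma^{\cc\bb}(\e^{-H_\bb})(x_\cc) = \e^{-\Fh^{\cc\bb}(H_\bb)(x_\cc)}$ and the second is $\Sigma^{\cc\bb'}(\e^{-H_{\bb'}})(x_\cc) = \e^{-\Fh^{\cc\bb'}(H_{\bb'})(x_\cc)}$, using once more that $\aa \setminus \bb = \bb' \setminus \cc$ and $\aa \setminus \bb' = \bb \setminus \cc$. Taking $-\ln$ of the resulting product turns it into the sum $\Fh^{\cc\bb}(H_\bb) + \Fh^{\cc\bb'}(H_{\bb'})$, which is the desired identity in $A_\cc$.

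I do not expect any serious obstacle here: the content is entirely the conditional analogue of proposition \ref{Fh-add}, and the only point requiring care is the bookkeeping of variables — namely verifying that conditioning on the shared block $\cc$ renders the two remaining blocks $\bb \setminus \cc$ and $\bb' \setminus \cc$ genuinely disjoint, which is exactly what licenses the factorisation of the exponential and the splitting of the sum into a product. This disjointness is guaranteed by the hypothesis $\bb \cap \bb' = \cc$, so the argument goes through cleanly.
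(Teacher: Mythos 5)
Your proof is correct and is essentially the paper's own argument: both rest on the decomposition $\aa \setminus \cc = (\bb \setminus \cc) \sqcup (\bb' \setminus \cc)$ and the resulting factorisation of $\Sigma^{\cc\aa}(\e^{-H_\bb}\e^{-H_{\bb'}})$ as $\Sigma^{\cc\bb}(\e^{-H_\bb})\,\Sigma^{\cc\bb'}(\e^{-H_{\bb'}})$, after which one takes $-\ln$. You have merely written out the Fubini step that the paper states in one line.
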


\begin{proof}
    We have $- \ln \big( \Sigma^{\cc\aa}(\e^{-H_\bb}\e^{-H_{\bb'}}) \big)
    = -\ln \big( 
    \Sigma^{\cc\bb}(\e^{-H_\bb} )\: \Sigma^{\cc\bb'}(\e^{-H_\bb'})
    \big)$ as $\Sigma^{\cc\aa}$ 
    involves a sum over the variables in the disjoint union
    $(\bb \setminus \cc) \sqcup (\bb'\setminus \cc)$, 
    while $\e^{-H_\bb}$ is independent of the variables in $\bb' \setminus \cc$ 
    and reciprocally.
\end{proof}

In particular, $\Fh^{\bb\aa}$ is additive along $A_\bb$ so that
for every $H_\aa \in A_\aa$ and $H_\bb \in A_\bb$ we have:
\begin{equation} \Fh^{\bb\aa}(H_\aa + H_\bb) = \Fh^{\bb\aa}(H_\aa) + H_\bb \end{equation}
As before, one should beware that $\Fh^{\bb\aa}(0) \neq 0$ as effective energy
contains the entropic term $-\ln |E_{\aa \setminus \bb} |$. 
The reduced effective energy $\tilde\Fh^{\bb\aa}$ defined by 
subtracting this term to $\Fh^{\bb\aa}$ is 
a smooth section of $j_{\aa\bb}$.
Considering larger coverings, we may now rewrite conditional additivity 
as follows:

\begin{prop}[Conditional Additivity] If $\aa$ is the union of 
    $\bb_1, \dots, \bb_n$ and $H_\aa$ splits as $\sum_i H_{\bb_i}$, 
    then denoting by $\cc_i$ the intersection of $\bb_i$ with 
    $\bigcup_{j \neq i} \bb_j$ and by $\cc$ the reunion 
    $\bigcup_i \cc_i$ we have: 
\begin{equation} \Fh^{\cc\aa} \Big( \sum_i H_{\bb_i} \Big)
= \sum_i \Fh^{\cc_i\bb_i}(H_{\bb_i}) \end{equation}
\end{prop}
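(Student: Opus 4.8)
The plan is to reduce the claim, exactly as in the pairwise case of proposition \ref{eff-add}, to a factorisation of the partial sum defining $\Sigma^{\cc\aa}$. Unfolding the definition $\Fh^{\bb\aa}(H_\aa) = -\ln \Sigma^{\bb\aa}(\e^{-H_\aa})$ and using $\e^{-\sum_i H_{\bb_i}} = \prod_i \e^{-H_{\bb_i}}$, the statement is equivalent (after applying $-\ln$, and viewing each $\Fh^{\cc_i\bb_i}(H_{\bb_i})$ inside $A_\cc$ through the inclusion $j$ along $\cc_i \incl \cc$) to the multiplicative identity
\begin{equation}
\Sigma^{\cc\aa}\Big( \prod_i \e^{-H_{\bb_i}} \Big) = \prod_i \Sigma^{\cc_i\bb_i}\big( \e^{-H_{\bb_i}} \big).
\end{equation}

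First I would establish the combinatorial backbone, which is where the union and intersection hypotheses enter. I claim $\cc = \bigcup_i \cc_i$ is precisely the set of atoms lying in at least two of the $\bb_i$: an atom $k \in \bb_i \cap \bb_j$ with $i \neq j$ lies in $\cc_i$, and conversely every $k \in \cc_i$ lies in $\bb_i$ and in some $\bb_j$ with $j \neq i$. Hence $\bb_i \setminus \cc_i$ is the set of atoms lying in $\bb_i$ and in no other $\bb_j$, and these are pairwise disjoint and exhaust $\aa \setminus \cc$, giving the partition
\begin{equation}
\aa \setminus \cc = \bigsqcup_{i} (\bb_i \setminus \cc_i),
\qquad
E_{\aa \setminus \cc} \simeq \prod_i E_{\bb_i \setminus \cc_i}.
\end{equation}
The crucial consequence is that each factor $\e^{-H_{\bb_i}}$, being a function of the variables in $\bb_i = \cc_i \sqcup (\bb_i \setminus \cc_i)$, involves its ``private'' block $\bb_i \setminus \cc_i$ and no other; summing over that block therefore touches only the $i$-th factor.

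Then the factorisation is a Fubini/distributivity step. Fixing $x_\cc \in E_\cc$ and writing a summation index $y \in E_{\aa\setminus\cc}$ as $y = (y_i)$ with $y_i \in E_{\bb_i\setminus\cc_i}$, the summand $\prod_i \e^{-H_{\bb_i}(x_{\cc_i}, y_i)}$ has each $y_i$ in a single factor, so
\begin{equation}
\sum_{y} \prod_i \e^{-H_{\bb_i}(x_{\cc_i}, y_i)}
= \prod_i \sum_{y_i} \e^{-H_{\bb_i}(x_{\cc_i}, y_i)}
= \prod_i \Sigma^{\cc_i\bb_i}\big(\e^{-H_{\bb_i}}\big)(x_{\cc_i}),
\end{equation}
and applying $-\ln$ concludes. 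I do not expect a genuine obstacle: the result is the $n$-fold generalisation of proposition \ref{eff-add}, and the only point demanding care is the bookkeeping above, namely checking that the sum over $\bb_i \setminus \cc_i$ never reaches a variable appearing in some $H_{\bb_j}$, $j\neq i$, which is exactly the disjointness of the displayed partition. An alternative would be induction on $n$ via pairwise conditional additivity and the functoriality $\Fh^{\cc\bb}\circ\Fh^{\bb\aa} = \Fh^{\cc\aa}$, but the direct factorisation is cleaner since the boundary sets $\cc_i$ shift as regions are added. Note finally that no normalisation of the entropic constants $-\ln|E_{\bullet}|$ is needed, as they match on both sides.
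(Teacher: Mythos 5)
Your proof is correct, but it takes a different route from the paper's. The paper argues by induction on $n$: it groups $\tilde\bb_1 = \bb_1 \cup \cc$ against $\tilde\bb_2 = \bigcup_{j\geq 2}\bb_j \cup \cc$, applies the pairwise conditional additivity \ref{eff-add} (noting $\tilde\bb_1 \cap \tilde\bb_2 = \cc$), and then uses the independence property \ref{eff-indep} to identify $\Fh^{\cc\tilde\bb_1}(H_{\bb_1})$ with $\Fh^{\cc_1\bb_1}(H_{\bb_1})$ before invoking the induction hypothesis on the remaining $n-1$ regions. You instead prove the $n$-fold statement in one shot by the same Fubini factorisation that underlies the pairwise case, with the combinatorial content isolated in the partition $\aa\setminus\cc = \bigsqcup_i(\bb_i\setminus\cc_i)$ and the identity $\bb_i\cap\cc = \cc_i$ — both of which you verify correctly, and which is exactly where the union and intersection hypotheses are consumed. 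What each buys: the paper's induction reuses already-established machinery (\ref{eff-add} and \ref{eff-indep}) and needs no new bookkeeping, at the cost of the slightly delicate regrouping step you rightly flag ("the boundary sets $\cc_i$ shift as regions are added"); your direct argument is self-contained and makes the mechanism more transparent, at the cost of re-proving the factorisation rather than quoting it. Your closing remark about the entropic constants is also accurate: $\Fh^{\cc\aa}(0) = -\ln|E_{\aa\setminus\cc}| = \sum_i -\ln|E_{\bb_i\setminus\cc_i}| = \sum_i \Fh^{\cc_i\bb_i}(0)$ precisely because of the partition, so no renormalisation is needed.
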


\begin{proof}
Reasoning by induction on $n$, let $\tilde \bb_1 = \bb_1 \cup \cc$ and  
$\tilde \bb_2 = \bigcup_{j =2}^n \bb_j \cup \cc$ with hamiltonians 
    $\tilde H_{\tilde \bb_1} = H_{\bb_1}$ and 
    $\tilde H_{\tilde \bb_2} = \sum_{i > 1}^n H_{\bb_i}$.
    Pairwise 
    conditional additivity gives $\Fh^{\cc\aa}(\sum_i H_{\bb_i}) = 
    \Fh^{\cc\tilde \bb_1}(\tilde H_{\tilde \bb_1}) + 
    \Fh^{\cc \tilde\bb_2}(\tilde H_{\tilde \bb_2})$ as 
    $\tilde \bb_1 \cap \tilde \bb_2 = \cc$, 
    while $\Fh^{\cc\tilde \bb_1}(H_{\bb_1})$ and 
    $\Fh^{\cc_1 \bb_1}(H_{\bb_1})$ coincide as conditional free energies 
    of $\tilde \bb_1 \setminus \cc = \bb_1 \setminus \cc_1$. 
    The induction hypothesis applied to 
    $\Fh^{\cc\tilde \bb_2}(\tilde H_{\tilde \bb_2})$ then terminates the proof.
\end{proof}

Effective energy is still a concave functional in the sense below, and
its point-wise Legendre transform above $x_\bb \in E_\bb$ will generate 
the conditional Shannon entropy $S_\aa(p_\aa | x_\bb)$ defined in section 2.

\begin{prop}[Concavity]
For every $U_\aa, V_\aa \in A_\aa$ and $t \in [0,1]$, we have: 
    \begin{equation} \Fh^{\bb\aa} \big( t\, U_\aa + (1-t)\, V_\aa \big) 
    \; \geq \; 
    t\, \Fh^{\bb\aa}(U_\aa) + (1-t)\, \Fh^{\bb\aa}(V_\aa) \end{equation}
the inequality being understood in the partial order of functions on $E_\bb$.
\end{prop}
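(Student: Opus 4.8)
The plan is to reduce the statement to the already-established concavity of free energy, exploiting that the asserted inequality is required to hold pointwise on $E_\bb$. First I would fix a configuration $x_\bb \in E_\bb$ and observe that, by the very definition of $\Sigma^{\bb\aa}$, the value of $\Fh^{\bb\aa}(H_\aa)$ at $x_\bb$ is
\[
\Fh^{\bb\aa}(H_\aa)(x_\bb) = - \ln \sum_{y \in E_{\aa \setminus \bb}} \e^{-H_\aa(x_\bb, y)}.
\]
The right-hand side is exactly the free energy $\Fh^{\aa \setminus \bb}$ evaluated on the restriction $r_{x_\bb}(H_\aa) \in A_{\aa\setminus\bb}$ obtained by freezing the $\bb$-coordinate at $x_\bb$, i.e. on the function $y \mapsto H_\aa(x_\bb, y)$. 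Since $r_{x_\bb} : A_\aa \aw A_{\aa\setminus\bb}$ is linear, the map $H_\aa \mapsto \Fh^{\bb\aa}(H_\aa)(x_\bb) = \Fh^{\aa\setminus\bb}\big(r_{x_\bb}(H_\aa)\big)$ is the composition of the concave functional $\Fh^{\aa\setminus\bb}$ with an affine map, hence concave. Applying this concavity to $t\,U_\aa + (1-t)\,V_\aa$ yields the desired inequality at $x_\bb$, and since $x_\bb$ was arbitrary it holds at every point of $E_\bb$, which is precisely the claimed inequality in the partial order of functions on $E_\bb$.

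Equivalently, I could expose the underlying mechanism directly, reproducing the concavity proof of $\Fh^\aa$ fiber by fiber: for fixed $x_\bb$, Hölder's inequality with conjugate exponents $1/t$ and $1/(1-t)$ gives
\[
\sum_{y} \e^{-tU_\aa(x_\bb,y) - (1-t)V_\aa(x_\bb,y)}
\;\leq\;
\Big( \sum_y \e^{-U_\aa(x_\bb,y)} \Big)^{t}
\Big( \sum_y \e^{-V_\aa(x_\bb,y)} \Big)^{1-t},
\]
and applying the order-reversing map $-\ln$ to both sides turns the product on the right into the convex combination $t\,\Fh^{\bb\aa}(U_\aa)(x_\bb) + (1-t)\,\Fh^{\bb\aa}(V_\aa)(x_\bb)$, giving the inequality again.

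There is no serious obstacle here: the result is a routine consequence of the convexity of the exponential and the concavity of log-sum-exp, and the only point requiring a little care is the reading of the inequality in the partial order on $\R^{E_\bb}$, which the fiberwise reduction handles transparently. The one book-keeping remark worth making is that the region $\aa\setminus\bb$ need not belong to $X$; this is harmless, since the free energy functional and its concavity rest on the same formula over any finite configuration space, so invoking $\Fh^{\aa\setminus\bb}$ on the fiber $E_{\aa\setminus\bb}$ is legitimate.
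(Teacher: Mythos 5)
Your proof is correct. The paper actually states this proposition without proof, treating it as the evident pointwise analogue of the concavity of the free energy $\Fh^\aa$ proved just before it; so the relevant comparison is with that earlier argument, which reads in its entirety ``both $x \mapsto \e^{-x}$ and $y \mapsto -\ln(y)$ are convex while the latter is order reversing.'' Your fiberwise reduction --- freezing $x_\bb$, recognising $\Fh^{\bb\aa}(H_\aa)(x_\bb)$ as the free energy of the linear restriction $y \mapsto H_\aa(x_\bb,y)$ on $E_{\aa\setminus\bb}$, and composing a concave functional with an affine map --- is exactly the intended mechanism, and your remark that $\aa\setminus\bb$ need not lie in $X$ is the right piece of book-keeping to make that reduction legitimate. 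Where you genuinely add something is the H\"older step: the paper's one-line justification is not by itself a proof (composing a decreasing convex function with a convex one yields no convexity or concavity in general --- consider $\e^{-x^2}$), whereas the concavity of $H \mapsto -\ln\sum \e^{-H}$ really does require H\"older's inequality with exponents $1/t$ and $1/(1-t)$, or equivalently the negative-semidefiniteness of the second differential computed in the Fisher-metric proposition. So your write-up is both the same approach in spirit and a more complete one; nothing is missing.
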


While the Gibbs state was recovered through the differential of 
free energy, the differential of effective energy carries
the effects of Gibbs state conditioning.

\begin{prop}[Conditional Expectations] \label{Eba}
    The differential $d\Fh^{\bb\aa} : A_\aa \aw A^*_\aa \otimes A_\bb$ of 
    effective energy is the conditional expectation given the microstate 
    of $\bb$ with respect to the Gibbs state on $\aa$:
    \begin{equation} d\Fh^{\bb\aa}(H_\aa) \; : \; 
    \left\{ \ba{ccc} 
    A_\aa & \law & A_\bb \\
    f_\aa & \longmapsto & \E^\aa[\, f_\aa \st \bb \,]
    \ea \right. 
    \end{equation}
    We denote by $\E^{\bb\aa} \in \Omega^1(A_\aa, A_\bb)$ 
    the differential of $\Fh^{\bb\aa}$ viewed as an $A_\bb$-valued $1$-form
    over $A_\aa$.
\end{prop}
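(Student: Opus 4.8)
The plan is to compute the first-order variation of $\Fh^{\bb\aa}$ directly, mirroring the computation already carried out for the Gibbs Expectations proposition, the only difference being that the sum over $E_\aa$ is now restricted to the fibre above each $x_\bb \in E_\bb$. Since $d\Fh^{\bb\aa}(H_\aa)$ is to be read pointwise in $x_\bb$ as a linear form on $A_\aa$ valued in $A_\bb$, I would fix $x_\bb$ and perturb the hamiltonian by $t\,f_\aa$ for $f_\aa \in A_\aa$.

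First I would write
\begin{equation}
\Fh^{\bb\aa}(H_\aa + t f_\aa)(x_\bb) = -\ln \sum_{y \in E_{\aa \setminus \bb}}
\e^{-H_\aa(x_\bb, y) - t f_\aa(x_\bb, y)},
\end{equation}
and differentiate in $t$ at $t = 0$ using $d\e^x = \e^x\,dx$ and $d\ln(y) = dy / y$, exactly as in the free-energy case. This yields
\begin{equation}
\frac{d}{dt}\Big|_{t=0} \Fh^{\bb\aa}(H_\aa + t f_\aa)(x_\bb)
= \frac{\sum_{y} f_\aa(x_\bb, y)\,\e^{-H_\aa(x_\bb, y)}}{\sum_{y} \e^{-H_\aa(x_\bb, y)}}.
\end{equation}

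Next I would identify this ratio with the conditional expectation. Setting $p_\aa = \big[ \e^{-H_\aa} \big] = \frac{1}{Z_\aa}\e^{-H_\aa}$ and $p_\bb = \Sigma^{\bb\aa}(p_\aa)$, the normalisation factor $Z_\aa$ cancels between numerator and denominator, so the ratio above equals
\begin{equation}
\sum_{y \in E_{\aa \setminus \bb}}
\frac{p_\aa(x_\bb, y)\, f_\aa(x_\bb, y)}{p_\bb(x_\bb)}
= \E^{\bb\aa}_{p_\aa}(f_\aa)(x_\bb) = \E^\aa[\, f_\aa \st x_\bb \,],
\end{equation}
which is precisely the defining formula of the conditional expectation recalled among the local operators.

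I expect no serious obstacle: the argument is a one-line differentiation followed by the remark that $Z_\aa$ cancels in the conditional ratio. The only point demanding a little care is the bookkeeping of the summation variables — the sum runs over the fibre $E_{\aa \setminus \bb}$ at fixed $x_\bb$, so the output is genuinely a function of $x_\bb$, i.e. an element of $A_\bb$, and the assignment $f_\aa \mapsto \E^\aa[\, f_\aa \st x_\bb \,]$ is manifestly linear in $f_\aa$, confirming that $d\Fh^{\bb\aa}(H_\aa) \in A^*_\aa \otimes A_\bb$. Finally I would note that taking $\bb = \varnothing$ recovers the Gibbs Expectations proposition, the effective energy $\Fh^{\varnothing\aa}$ being the free energy $\Fh^\aa$ and $\E^{\varnothing\aa}$ the unconditioned Gibbs expectation $\E^\aa$.
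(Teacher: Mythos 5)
Your proof is correct and follows essentially the same route as the paper's: differentiate the logarithm of the partial sum using $d\e^x = \e^x dx$ and $d\ln(y) = dy/y$, then rescale numerator and denominator by the normalisation factor to recognise $p_\aa / p_\bb$ as the conditional density. The only difference is cosmetic — you write the sums pointwise over the fibre $E_{\aa\setminus\bb}$ where the paper keeps the $\Sigma^{\bb\aa}$ notation.
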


\begin{proof}
    Using again $de^x = \e^x dx$ and $d \ln(y) = \frac{dy}{y}$ 
    we now have for every perturbation $f_\aa$ of $H_\aa$:
    \begin{equation} \Fh^{\bb\aa}(H_\aa + f_\aa) = \Fh^{\bb\aa}(H_\aa) + \frac
    { \Sigma^{\bb\aa}(f_\aa \e^{-H_\aa}) } 
    { \Sigma^{\bb\aa}(\e^{-H_\aa}) }
    + o\,(f_\aa) \end{equation}
    Rescaling the fraction by the normalisation factor 
    $\Sigma^{\vide\aa}(\e^{-H_\aa})$, denoting by $p_\aa$ the Gibbs state
    on $\aa$ and by $p_\bb$ its marginal on $\bb$, the linear term rewrites as:
    \begin{equation} \Sigma^{\bb\aa} \Big( f_\aa \cdot \frac {p_\aa} {p_\bb} \Big) = 
    \E_{p_\aa}[f_\aa \st \bb] \end{equation}
    As according to the Bayes rule, $p_\aa / p_\bb$ is 
    the conditional probability on $\aa$ given the microstate of $\bb$.
\end{proof}

Conditional expectation has a simple geometrical characterisation
which is worth recalling. 
For the riemannian metric on $A_\aa$ induced
by the Gibbs state:
\begin{equation} \scal{f_\aa}{g_\aa} = \E^\aa[f_\aa \cdot g_\aa] \end{equation}
$\E^{\bb\aa}$ is just the orthogonal projection of $A_\aa$
onto the subspace $A_\bb$ of observables depending only on the 
state of $\bb$. 
One might hope for a consistent orthogonal splitting of $A_\aa$ as\footnote{
    See the interaction decomposition theorem, section 2.3.2.
} $\bigoplus Z_\bb$, defining each interaction subspace $Z_\bb$ by orthogonality
with $B_\bb$. However, as soon as $\bb$ and $\bb'$ 
interact with each other, correlations will imply that $Z_\bb$ is no longer
orthogonal to $Z_{\bb'}$ for the metric on $A_\aa$.

\begin{prop}[Integral Form] 
The effecive energy of $H_\aa \in A_\aa$ on $\bb$ is given by the integral:
    \begin{equation} \Fh^{\bb\aa}(H_\aa) = \Fh^{\bb\aa}(0) +  
    \int_{\theta = 0}^1 \E^{\bb\aa}_\theta[H_\aa] \:d\theta \end{equation}
    where $\E^{\bb\aa}_\theta$ denotes conditional expectation 
    with respect to the Gibbs state $[\e^{-\theta H_\aa}]$. 
\end{prop}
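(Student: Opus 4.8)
The plan is to integrate the differential of effective energy along the linear path $\theta \mapsto \theta H_\aa$ joining the origin to $H_\aa$, exactly mirroring the integral form already established for free energy. First I would fix $H_\aa \in A_\aa$ and consider the smooth $A_\bb$-valued curve $g : [0,1] \aw A_\bb$ defined by $g(\theta) = \Fh^{\bb\aa}(\theta H_\aa)$, which is $C^\infty$ since $\Fh^{\bb\aa} \in C^\infty(A_\aa, A_\bb)$.

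The key step is to identify the velocity $g'(\theta)$. By the chain rule, $g'(\theta) = d\Fh^{\bb\aa}(\theta H_\aa) \cdot H_\aa$, where $d\Fh^{\bb\aa}(\theta H_\aa) : A_\aa \aw A_\bb$ is the differential of effective energy evaluated at the point $\theta H_\aa$. Proposition \ref{Eba} identifies this differential as the conditional expectation given $\bb$ with respect to the Gibbs state $[\e^{-\theta H_\aa}]$, so that evaluating it on the tangent vector $H_\aa$ yields precisely $\E^{\bb\aa}_\theta[H_\aa]$. This identity is understood pointwise above each $x_\bb \in E_\bb$, in $A_\bb$.

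Then I would simply apply the fundamental theorem of calculus to the $A_\bb$-valued curve $g$, componentwise over the finite set $E_\bb$, obtaining
\begin{equation}
\Fh^{\bb\aa}(H_\aa) - \Fh^{\bb\aa}(0) = g(1) - g(0) = \int_0^1 g'(\theta)\,d\theta = \int_0^1 \E^{\bb\aa}_\theta[H_\aa]\,d\theta
\end{equation}
which is the claimed formula.

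I do not expect any genuine obstacle here: the only content is the chain-rule computation of $g'(\theta)$ together with the identification supplied by Proposition \ref{Eba}, and the integral is taken coordinatewise in the finite-dimensional space $A_\bb = \R^{E_\bb}$, so no integrability subtleties arise. The statement specialises to the free energy integral form when $\bb = \vide$, recovering $\Fh^\aa(H_\aa) = \Fh^\aa(0) + \int_0^1 \E^\aa_\theta[H_\aa]\,d\theta$.
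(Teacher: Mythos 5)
Your proof is correct and is exactly the paper's argument: the paper's own proof reads, in full, ``this is again the fundamental theorem of calculus applied along the path $\theta \mapsto \theta H_\aa$'', and your chain-rule computation of $g'(\theta)$ via Proposition \ref{Eba} is precisely the step it leaves implicit. Nothing to add.
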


\begin{proof}
    This is again the fundamental theorem of calculus applied along the 
    path $\theta \mapsto \theta H_\aa$.
\end{proof}

\section{Information Quantities}

This section introduces the Shannon entropy functional,
also called Shannon information. 
We then relate mutual information quantities with a combinatorial 
decomposition of entropy into summands, 
on which will rely Bethe-Kikuchi local approximations 
of entropy and the cluster variation method. 

\subsection{Shannon Entropy}

Given a finite set $E_\aa$, we denote by $\State_\aa$ the space
of probability measures on $E_\aa$. 
Shannon entropy is the concave functional $S_\aa : \State_\aa \aw \R$ defined by:
\begin{equation} S_\aa(p_\aa) = -\sum_{E_\aa} p_\aa \ln(p_\aa) \end{equation}
It reaches its global maximum 
$S_\aa([1_\aa]) = \ln |E_\aa| $
on the uniform measure, which may be called the Boltzmann entropy of $E_\aa$.
Let us briefly recall some fundamental properties of entropy. 
Plenty of good resources already exist on the subject, 
see for instance \cite{Mezard-Montanari} or any reference on information 
geometry\footnote{
    Other references to specific properties of entropy 
    will be given inline, 
    such as \cite{Bennequin-Baudot-2, Vigneaux-phd, Leinster-08}.
}. 

Given a surjection $\pi^{\bb\aa} : E_\aa \aw E_\bb$,
and $p_\aa \in \Delta_\aa$, 
let us denote by $p_\bb = \Sigma^{\bb\aa}(p_\aa)$ its marginal distribution
on $E_\bb$ and by $p_{\aa|x_\bb}$ its conditional probability 
distribution given any $x_\bb \in E_\bb$, supported by the fiber 
of $x_\bb$.
The conditional entropy of $p_\aa$ given $E_\bb$ is defined as: 
\begin{equation} 
S_\aa(p_\aa \st p_\bb) = \sum_{x_\bb \in E_\bb} 
p_\bb(x_\bb)  \cdot S_\aa(p_{\aa|x_\bb}) 
    = - \sum_{x_\bb \in E_\bb} 
    \sum_{x' \in E_{\aa \setminus \bb}} 
    p_\aa(x_\bb, x')
    \ln \frac {p_\aa(x_\bb, x')} {p_\bb(x_\bb)}
\end{equation} 
When $E_\aa = E_\bb \times E_\cc$ and 
$p_\aa = p_\bb \otimes p_\cc$
then an easy computation shows that 
$S_\aa(p_\aa \st p_\bb) = S_\cc(p_\cc)$.

The fundamental property of entropy is the chain rule:
\begin{equation} S_\aa(p_\aa) = S_\bb(p_\bb) + 
S_\aa(p_\aa \st p_\bb) \end{equation}
Shannon showed that entropy is essentially
characterised by this functional equation\footnote{
    This functional equation was given a natural interpretation 
    in the context of monads and operads by Leinster \cite{Leinster-08}, 
    and also retrieved as a cocycle equation by 
    Bennequin and Vigneaux, see \cite{Vigneaux-phd}.
},
up to a multiplicative constant and under an additional monotonicity condition
with respect to the cardinal of $E_\aa$. 
When $p_\aa =
 \otimes_{\bb'}  p_{\bb'}$
is a tensor product of independent probabilities on 
$E_\aa = \prod_{\bb'} E_{\bb'}$, we have in particular:
\begin{equation} S_\aa(p_\aa) = \sum_{\bb'} S_{\bb'}(p_{\bb'}) \end{equation}
expressing that entropy is additive along independent systems\footnote{
    This should be related to the additivity property of free energy 
    \ref{Fh-add}.
}. 

\subsection{Möbius Inversion} 

For every $\aa \in X$, let us denote by 
$\stF_\aa = C^{\infty}(\State_\aa)$ the
vector space of smooth functionals on $\State_\aa$. 
For every $\aa \aw \bb$, the pullback of 
the marginal projection $\State_\aa \aw \State_\bb$ defines 
an injection $\stF_\aa \wa \stF_\bb$, 
so that $\stF$ is a contravariant functor of vector spaces on $X$.
In particular, $\tilde \Z_1(X)$ acts on $\stF_0(X)$ and for every collection
of functionals $({\mathscr L}_\aa)$ we have the equivalence:
\begin{equation} {\mathscr L}_\aa = \sum_{\aa \aw \bb'} [ {\mathpzc l}_{\bb'} ]_\aa
\quad\eqvl\quad {\mathpzc l}_\aa = 
\sum_{\aa \aw \bb'} \mu_{\aa \bb'} \cdot [ {\mathscr L}_{\bb'} ]_\aa
 \end{equation}
We call ${\mathpzc l} \in \stF_0(X)$ the combinatorial localisation 
of $\mathscr{L}$.

A particular case of functionals 
is given by the expectation values of observables. 
Given $H \in A_0(X)$, let ${\mathpzc U} \in \stF_0(X)$ 
be defined for every $\aa \in X$ by:
\begin{equation} {\mathpzc U}_\aa (p_\aa) = \croc{p_\aa}{H_\aa} = \E_{p_\aa}[H_\aa] \end{equation}
The natural map from $A_0(X)$ to $\stF_0(X)$
is a morphism of $\tilde \Z_1(X)$-modules.
When $H = \zeta \cdot h$, the localisation of 
${\mathpzc U}$ as $\zeta \cdot {\mathpzc u}$ 
defines another field of functionals ${\mathpzc u}  \in \stF_0(X)$ where:
\begin{equation} {\mathpzc u}_\aa(p_\aa) = \croc{p_\aa}{h_\aa} = \E_{p_\aa}[h_\aa] \end{equation}
The internal energy of a system ${\mathpzc U}_\Om$ is the expectation value 
of the global hamiltonian $H_\Om$, typically given as a sum
$\sum_\aa h_\aa$ of local interactions with $h_\Om = 0$. 
This remark will prove the Bethe approximation scheme to be exact on internal energy,
as $\mathpzc{U}_\Om = \sum_\aa \mathpzc{u}_\aa$ with $\mathpzc{u}_\Om = 0$.

\subsection{Mutual Informations}

\newcommand{\X}{{\cal X}} 

Suppose given $n$ random variables $\X_1, \dots \X_n$. 
In this paragraph, we introduce the various amounts of mutual information,
following Hu Kuo Ting in \cite{Hu-62}. 
First, denote by $S(\X_i \cup \X_j)$ the entropy of their
joint law $(\X_i, \X_j)$.
If the variables were pairwise independent, 
the chain rule of entropy would give for all distinct $i,j$:
\begin{equation} S(\X_i \cup \X_j) = S(\X_i) + S(\X_j) \end{equation}
But more generally, denoting by $S(\X_i | \X_j)$ the
conditional entropy of $(\X_i, \X_j)$ given 
$\X_j$, we have: 
\begin{equation} S(\X_i \cup \X_j) = S(\X_i | \X_j) + S(\X_j) \end{equation}
We may define a quantity $S(\X_i \cap \X_j)$ as
the difference $S(\X_i) - S(\X_i | \X_j)$.
These relations naturally remind of those satisfied by 
an additive function on sets,
where independent variables correspond
to disjoint subsets, 
and conditioning describes set difference. 
Note however that in the following theorem, 
the order of logical operations matters 
(see Bennequin-Baudot \cite{Bennequin-Baudot-2} for examples). 

\begin{thm}[Hu Kuo Ting]
Given random variables $\X_1, \dots, \X_n$ and their 
probability distributions, there exists 
sets $A_1, \dots, A_n$ and an additive real function $\ph$ on 
the algebra generated by those sets, such that for all 
    operation $Q$ generated by $\cup, \cap$ and $-$ 
    which 1) forms collections of unions 
    2) takes successive intersections of these unions and 3) 
    subtracts one of them, one has:
\begin{equation} S(Q(\X_1, \dots, \X_n)) = \ph(Q(A_1, \dots, A_n)) \end{equation}
\end{thm}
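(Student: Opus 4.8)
The plan is to construct the set-theoretic model explicitly by identifying each random variable $\X_i$ with an atom and building the underlying measure space from the entropic structure of the joint distribution. First I would recall the key observation, already visible in the preceding paragraphs, that conditional entropy and mutual information behave formally like the difference and intersection of sets: writing $S(\X_i \cup \X_j) = S(\X_i) + S(\X_j | \X_i)$ and $S(\X_i \cap \X_j) = S(\X_i) - S(\X_i | \X_j)$ mimics inclusion-exclusion. The natural model is to take $A_1, \dots, A_n$ to be an abstract collection of sets in one-to-one correspondence with the variables, and to work in the boolean algebra $\mathcal{B}$ they generate. The atoms of $\mathcal{B}$ are the $2^n - 1$ nonempty cells of the form $\bigcap_{i \in I} A_i \cap \bigcap_{j \notin I} A_j^c$ for $\varnothing \neq I \subseteq \{1, \dots, n\}$, and an additive function $\ph$ on $\mathcal{B}$ is determined freely by its values on these atoms.

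The central step is to define $\ph$ on atoms by Möbius inversion so that the resulting additive extension reproduces all the entropy values. I would set, for each nonempty $I \subseteq \{1, \dots, n\}$, the value on the corresponding atom $A_I = \bigcap_{i \in I} A_i \setminus \bigcup_{j \notin I} A_j$ by an inclusion-exclusion formula in terms of the joint entropies $S(\X_K) = S(\bigcup_{k \in K} \X_k)$, namely $\ph(A_I) = \sum_{K \supseteq I} (-1)^{|K| - |I|} S(\X_K)$, which is exactly the higher mutual information of Hu Kuo Ting. This is precisely the Möbius transform on the boolean lattice $\Part(\{1, \dots, n\})$ discussed in the Combinatorics chapter, where $\mu_{\aa\bb} = (-1)^{|\aa| - |\bb|}$. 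By construction the reciprocity $S(\X_K) = \sum_{I \subseteq K} \ph(A_I) = \ph\big(\bigcup_{k \in K} A_k\big)$ holds, since summing the atom-values over all atoms contained in $\bigcup_{k \in K} A_k$ inverts the Möbius relation.

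Once $\ph$ is defined on atoms and extended additively to all of $\mathcal{B}$, I would verify the correspondence $S(Q(\X_1, \dots, \X_n)) = \ph(Q(A_1, \dots, A_n))$ for the admissible operations $Q$. The verification proceeds by reducing any such $Q$ to a signed combination of the basic joint entropies using the chain rule $S(\X \cup \Y) = S(\Y) + S(\X | \Y)$ and the definitions of conditional entropy and mutual information as differences; on the set side the same reduction is a tautology of boolean algebra since $\ph$ is additive and hence respects $\cup$, $\cap$ (via inclusion-exclusion), and $-$. Because the admissible $Q$ are built by the prescribed three-stage syntax — forming unions, intersecting them, then subtracting — each step corresponds to an identity for $\ph$ that matches an entropy identity, so an induction on the structure of $Q$ closes the argument.

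The main obstacle I anticipate is the caveat flagged in the statement that \emph{the order of logical operations matters}: mutual informations of three or more variables can be negative, so $\ph$ is a signed additive function and not a measure, and not every boolean expression corresponds to a valid entropy quantity. The delicate point is therefore to pin down exactly which class of operations $Q$ is admissible and to check that the reduction via the chain rule never requires an entropy of a genuinely ill-formed expression (for instance a conditional entropy whose reduction would demand positivity that fails). I would handle this by carefully tracking the three-stage normal form of $Q$ and showing that at each stage the entropy side is governed by a genuine chain-rule identity, so that the signed nature of $\ph$ is never an obstruction to the equality itself — only to interpreting $\ph$ as a bona fide measure. Establishing this admissibility bookkeeping rigorously, rather than the algebraic identity, is where the real work lies.
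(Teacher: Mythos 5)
The paper does not actually prove this statement -- it is quoted from Hu Kuo Ting's 1962 paper \cite{Hu-62} -- so your proposal has to stand on its own. Your overall strategy (abstract sets $A_1,\dots,A_n$, a signed additive function determined by its values on the $2^n-1$ atoms of the generated algebra, then induction on the three-stage normal form of $Q$ via the chain rule) is the right one and is essentially Hu's original construction. But the central step, the definition of $\ph$ on atoms, is wrong, and the identity you use to verify it is false. First, the set of atoms $A_I$ contained in $\bigcup_{k\in K}A_k$ is $\{I : I\cap K\neq\varnothing\}$, not $\{I : I\subseteq K\}$; a point of the atom $A_I$ lies in $A_k$ exactly when $k\in I$, so your claimed equality $\sum_{I\subseteq K}\ph(A_I)=\ph\big(\bigcup_{k\in K}A_k\big)$ does not hold (the left side is $\ph$ of $\bigcup_{k\in K}A_k\setminus\bigcup_{j\notin K}A_j$). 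Second, your formula $\ph(A_I)=\sum_{K\supseteq I}(-1)^{|K|-|I|}S(\X_K)$ is the Möbius inversion adapted to the relation $\sum_{I\supseteq J}\ph(A_I)=S(\X_J)$, and the atoms indexed by $I\supseteq J$ are exactly those contained in $\bigcap_{j\in J}A_j$. So your $\ph$ satisfies $\ph\big(\bigcap_{j\in J}A_j\big)=S(\X_J)$: it assigns joint entropies to \emph{intersections} of sets, which is the opposite of the correspondence the theorem requires ($\cup$ of sets $\leftrightarrow$ joint law, $\cap$ of sets $\leftrightarrow$ mutual information). Already for $n=2$ your $\ph$ gives $\ph(A_1\cap A_2)=S(\X_1\cup\X_2)$ instead of $I(\X_1,\X_2)$, and $\ph(A_1\setminus A_2)=S(\X_1)-S(\X_{12})=-S(\X_2\,|\,\X_1)$ instead of $S(\X_1\,|\,\X_2)$.

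The fix is to take the defining relations to be $\ph\big(\bigcup_{k\in K}A_k\big)=S(\X_K)$ for all nonempty $K$; since both the unions and the atoms form bases of the $(2^n-1)$-dimensional space of additive functions on the algebra restricted to $\bigcup_i A_i$, this determines $\ph$ uniquely, and solving gives the atom values $\ph(A_I)=\sum_{\varnothing\neq J\subseteq I}(-1)^{|I|-|J|}\,S(\X_J\,|\,\X_{J^c})$, i.e.\ the conditional mutual informations $I(\X_{i_1},\dots,\X_{i_k}\,|\,\X_{I^c})$ of McGill and Hu (check: $n=2$, $I=\{1\}$ gives $S(\X_1\,|\,\X_2)$ as it should). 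With that correction your third paragraph goes through: additivity of $\ph$ handles the set side tautologically, and the restricted syntax of $Q$ (unions, then intersections of unions, then a single subtraction) guarantees that each reduction step on the entropy side is an instance of the chain rule or of the definition of mutual information as an alternating sum, which is exactly why the order of operations matters and why $\ph$ need only be a signed additive function rather than a measure.
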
 

All quantities obtained this way are generically called
amounts of information by Hu Kuo Ting. 
Of particular importance are the mutual informations, 
appearing in the right hand side of:
\begin{equation} S( \X_{i_1} \cup \dots \cup \X_{i_k} ) 
= \sum_{1 \leq p \leq k} S(\X_{i_p}) - \sum_{1 \leq p < q \leq k} 
S(\X_{i_p} \cap \X_{i_q}) + \dots 
+ (-1)^{k + 1} \; S(\X_{i_1} \cap \dots \cap \X_{i_k})  
\end{equation}
We will denote the mutual information of $\X_{i_1}, \dots \X_{i_k}$ by:
\begin{equation} I(\X_{i_1}, \dots, \X_{i_k}) = S(\X_{i_1} \cap \dots \cap \X_{i_k}) \end{equation} 
The following theorems will best express the significance
of mutual informations.
Although very similar in appearance,
the second theorem is a 
much more recent result than the first, 
proved in \cite{Bennequin-Baudot-2}.

\begin{thm}[Hu]
$\X_1, \dots \X_n$ form a Markov chain if and only if
for all $1 \leq i_1 < \dots < i_k \leq n$:
\begin{equation} I(\X_{i_1}, \dots \X_{i_k}) = I(\X_{i_1}, \X_{i_k}) \end{equation}
\end{thm}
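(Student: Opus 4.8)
The plan is to run everything through the Hu Kuo Ting isomorphism stated just above, which realises each multivariate mutual information as $I(\X_{i_1},\dots,\X_{i_k}) = \ph(A_{i_1}\cap\dots\cap A_{i_k})$ for the additive set function $\ph$. The identity to be characterised then becomes the purely combinatorial statement that, for all $i_1<\dots<i_k$,
\[
\ph\big(A_{i_1}\cap\dots\cap A_{i_k}\big) = \ph\big(A_{i_1}\cap A_{i_k}\big),
\]
i.e. that intersecting with the \emph{intermediate} sets never changes the measure of the endpoint intersection. Both implications will be proved by induction on $n$, using two elementary facts: every subsequence of a Markov chain is again a Markov chain, and the collapse hypothesis restricts verbatim to every subsequence.

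For the forward direction (Markov $\Rightarrow$ collapse), the Markov property gives $I(\X_{i_1};\X_{i_k}\mid \X_{i_j})=0$ for each intermediate $i_j$, which under Hu reads $\ph\big((A_{i_1}\cap A_{i_k})\setminus A_{i_j}\big)=0$. Since $(A_{i_1}\cap A_{i_k})\setminus\bigcap_{j}A_{i_j}=\bigcup_{j}\big((A_{i_1}\cap A_{i_k})\setminus A_{i_j}\big)$, collapse follows once this finite union is shown to be $\ph$-null. I would obtain this from the nonnegativity of the $\ph$-measure of a Markov chain (a theorem of Yeung), which forces each constituent atom to vanish individually; alternatively, one avoids citing nonnegativity by an induction on $k$ on the co-information, using $I(\X_{i_1},\X_{i_2},\X_{i_3}) = I(\X_{i_1},\X_{i_3}) - I(\X_{i_1};\X_{i_3}\mid \X_{i_2})$ together with the fact that conditioning a Markov chain on its last variable preserves the chain structure.

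For the converse (collapse $\Rightarrow$ Markov), the inductive step assumes the result for $n-1$, so that both $\X_1,\dots,\X_{n-1}$ and $\X_2,\dots,\X_n$ are already Markov. The only condition left to glue them into a full chain is the conditional independence of the two endpoints given everything between, $I(\X_1;\X_n\mid \X_2,\dots,\X_{n-1})=0$; under Hu this is the measure of the single atom $A_1\cap A_n\cap\bigcap_{j=2}^{n-1}A_j^c$. This atom I would compute by inclusion--exclusion over the middle coordinates,
\[
\ph\Big(A_1\cap A_n\cap\!\!\bigcap_{j=2}^{n-1}\!A_j^{\,c}\Big)
=\!\!\sum_{S\subseteq\{2,\dots,n-1\}}\!\!(-1)^{|S|}\,\ph\Big(A_1\cap A_n\cap\!\bigcap_{j\in S}\!A_j\Big)
=\ph(A_1\cap A_n)\!\!\sum_{S\subseteq\{2,\dots,n-1\}}\!\!(-1)^{|S|}=0,
\]
where each summand equals $\ph(A_1\cap A_n)$ because $1$ and $n$ are the extreme indices (collapse hypothesis), and the alternating sum vanishes since the middle index set is nonempty for $n\geq 3$. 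The chain rule then closes the factorisation: writing $I(\X_n;\X_1,\dots,\X_{n-2}\mid\X_{n-1}) = I(\X_n;\X_2,\dots,\X_{n-2}\mid\X_{n-1}) + I(\X_1;\X_n\mid\X_2,\dots,\X_{n-1})$, the first term vanishes because $\X_2,\dots,\X_n$ is Markov and the second by the atom computation, yielding $p(x_n\mid x_1,\dots,x_{n-1})=p(x_n\mid x_{n-1})$ and hence the full Markov factorisation.

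The main obstacle is precisely that $\ph$ is a \emph{signed} measure, so one cannot naively conclude that a finite union of $\ph$-null pieces is $\ph$-null; this is exactly where a direct set-theoretic argument breaks down. The converse sidesteps the difficulty by isolating the one atom that matters and evaluating it exactly through inclusion--exclusion, while the forward direction either leans on the nonnegativity structure special to Markov chains or reroutes through the conditional co-information recursion. Verifying that conditioning on the terminal variable preserves Markovianity, and bookkeeping the chain-rule decompositions so that the endpoint atom is the unique residual term, are the delicate points to get right.
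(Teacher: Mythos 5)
The paper does not actually prove this statement: it is quoted from Hu Kuo Ting's 1962 work and used as a black box, so there is no in-text argument to compare yours against. Judged on its own, your reconstruction is essentially correct, and the converse direction in particular is clean: isolating the single atom $A_1\cap A_n\cap\bigcap_{j=2}^{n-1}A_j^c$ whose $\ph$-measure is $I(\X_1;\X_n\mid\X_2,\dots,\X_{n-1})$, evaluating it by inclusion--exclusion using the collapse hypothesis on every subsequence with extremes $1$ and $n$, and gluing with the chain rule and the two $(n-1)$-variable Markov chains is exactly the right skeleton.

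The one place you sell yourself short is the forward direction. You correctly flag that a finite union of $\ph$-null sets need not be $\ph$-null for a signed measure, but the fix is not Yeung's nonnegativity theorem (which is a result of comparable depth to the statement being proved, so importing it is disproportionate); it is the very same inclusion--exclusion you deploy in the converse. Writing
\[
\ph\Big((A_{i_1}\cap A_{i_k})\setminus\bigcap_{j=2}^{k-1}A_{i_j}\Big)
=\sum_{\emptyset\neq S\subseteq\{2,\dots,k-1\}}(-1)^{|S|+1}\,
\ph\Big(A_{i_1}\cap A_{i_k}\cap\bigcap_{j\in S}A_{i_j}^{\,c}\Big)
=\sum_{\emptyset\neq S}(-1)^{|S|+1}\,I\big(\X_{i_1};\X_{i_k}\mid\X_{i_S}\big)
\]
kills every term at once, since any nonempty set $S$ of intermediate indices separates $i_1$ from $i_k$ in the chain and hence $I(\X_{i_1};\X_{i_k}\mid\X_{i_S})=0$. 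This makes the two directions exactly dual and removes the need to choose between your two hedged routes. Your alternative via the co-information recursion does work, but only if you strengthen the induction hypothesis to the conditional statement (the collapse for the conditional distribution given the terminal variable, for each of its values, then averaged); as written, that step is asserted rather than set up. With either repair the proof is complete.
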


\begin{thm}[Bennequin]\label{info-indep}
$\X_1, \dots, \X_n$ are independent if and only if
for all $1 \leq i_1, \dots, i_k \leq n$:
\begin{equation} I(\X_{i_1}, \dots, \X_{i_k}) = 0\end{equation}
\end{thm}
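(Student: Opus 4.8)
The plan is to recognise the family of co-informations $U \mapsto I(\X_U)$, where $\X_U = (\X_i)_{i \in U}$, as the Möbius inverse over the subset lattice $\big(\Part(\{1,\dots,n\}), \supseteq\big)$ of the function $U \mapsto S\big( \bigcup_{i \in U} \X_i \big)$, exactly as recorded in the inclusion--exclusion formula preceding the statement. Both directions then reduce to the single observation that joint independence is equivalent to the additivity of entropy over unions, namely $S\big(\bigcup_{i \in U} \X_i\big) = \sum_{i \in U} S(\X_i)$ for every $U$. Throughout I read the hypothesis as ranging over distinct indices with $k \geq 2$, since $I(\X_i) = S(\X_i)$ does not vanish in general.

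For the forward implication I would first recall that every subfamily of independent variables is again independent, so that $S\big(\bigcup_{j} \X_{i_j}\big) = \sum_j S(\X_{i_j})$ holds for each collection. I would then induct on $k = |U|$. The case $k = 2$ is the identity $I(\X_i,\X_j) = S(\X_i) + S(\X_j) - S(\X_i \cup \X_j)$, which vanishes by additivity. For the inductive step I substitute $S\big(\X_{i_1} \cup \dots \cup \X_{i_k}\big) = \sum_j S(\X_{i_j})$ into the inclusion--exclusion expansion; the order-one terms cancel the left-hand side, leaving $\sum_{2 \leq |U| \leq k}(-1)^{|U|+1} I(\X_U) = 0$, and the induction hypothesis kills every term except $|U| = k$, forcing $I(\X_{i_1},\dots,\X_{i_k}) = 0$.

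For the converse, the key identity I would establish is
\begin{equation}
C(\X_1,\dots,\X_n) := \sum_{i} S(\X_i) - S(\X_1 \cup \dots \cup \X_n) = \sum_{|U| \geq 2} (-1)^{|U|} \, I(\X_U),
\end{equation}
obtained by isolating the singleton terms of the inclusion--exclusion formula applied to the full set. If all co-informations of order $\geq 2$ vanish, the right-hand side is zero, hence $C = 0$. Since $C$ is the relative entropy $D\big(p_\Om \,\|\, \bigotimes_i p_i\big)$, the Gibbs inequality (nonnegativity of the Kullback--Leibler divergence, a consequence of concavity) gives $C \geq 0$ with equality exactly when $p_\Om = \bigotimes_i p_i$, which is joint independence.

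The step I expect to require the most care is the backward direction's reliance on the strict case of the Gibbs inequality: one must confirm that the vanishing of $C$ forces the joint law to factor, rather than merely matching marginal entropies. I would also take care to phrase the Möbius correspondence at the level of the whole subset lattice, so that the forward induction and the identity for $C$ read off the same combinatorial fact. The Hu Kuo Ting set-theoretic dictionary preceding the statement can be invoked to make the Möbius inversion transparent, identifying $I(\X_U)$ with $\varphi\big(\bigcap_{i \in U} A_i\big)$ and independence with the $\varphi$-disjointness of the sets $A_i$.
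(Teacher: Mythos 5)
Your argument is correct and complete, but note that the paper itself offers no proof of this statement to compare against: Theorem \ref{info-indep} is stated with a citation to \cite{Bennequin-Baudot-2} and the proof is deferred there. On its own terms your proposal works. The forward direction is the standard Möbius-inversion observation: independence of the family (hence of every subfamily) makes $U \mapsto S(\X_U)$ additive, and the Möbius inverse of an additive function on the subset lattice is supported on singletons, so every $I(\X_U)$ with $|U| \geq 2$ vanishes; your induction on $|U|$ is just an unrolled version of this. The converse is the genuinely nontrivial direction, and your route through the total correlation
\[
C = \sum_i S(\X_i) - S(\X_1 \cup \dots \cup \X_n) = D\Big(p_\Om \,\Big\|\, \bigotimes_i p_i\Big)
\]
is economical: it uses the hypothesis only at the top level $U = \{1,\dots,n\}$ (together with the lower-order vanishings to kill the intermediate terms of the inclusion--exclusion expansion), and then invokes the equality case of the Gibbs inequality, which is unproblematic in the finite setting since $p_\Om \ll \bigotimes_i p_i$ automatically. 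Your two points of care are the right ones: the statement must be read with distinct indices and $k \geq 2$ (otherwise $I(\X_i) = S(\X_i)$ gives a counterexample), and the converse genuinely needs the \emph{strict} case of $D \geq 0$, not merely the matching of entropies. Both are handled.
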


Let us give an explicit definition of mutual informations
as functionals of the probability distributions,
by relating Hu Kuo Ting's construction to a
Möbius inversion on entropy functionals. 
Denoting by $(\X_i)_{i \in \Om}$ a finite set of random variables,
and by $p_\aa$ the joint law
of $(\X_i)_{i \in \aa}$ for every $\aa \incl \Om$,
each joint entropy may be expressed as:
\begin{equation} S_\aa(p_\aa) = \sum_{\aa \cont \bb'} {\mathpzc s}_{\bb'}(p_{\bb'})  \end{equation}
Comparing with Hu Kuo Ting's formula, 
the mutual information
$I_\bb(p_\bb)$ is then given by $(-1)^{|\bb| + 1}\, \mathpzc{s}_\bb$.
The entropy summands given by Möbius inversion on $\Part(\Om)$ satisfy:
\begin{equation} \mathpzc{s}_\aa(p_\aa) = \sum_{\aa \cont \bb'} (-1)^{|\aa| - |\bb'|} \;
S_{\bb'}(p_{\bb'}) \end{equation}
while mutual informations are given by:
\begin{equation} I_\aa(p_\aa) = \sum_{\aa \cont \bb'} (-1)^{|\bb'| + 1} \;
S_{\bb'}(p_{\bb'}) \end{equation}
Fixing $\aa \in X$, this expression may also be seen as a Möbius inversion on the
opposite
partial order spanned by each $\X_i$ for $i \in \aa$, 
viewed as a maximal element.

\subsection{Bethe Entropy}

As first recognised by Morita in \cite{Morita-57}, 
Bethe's approximation of entropy is essentially a truncation of the previous
Möbius inversion procedure.  
Consider a general covering 
$X \incl \Part(\Om)$ not containing $\Om$,
and let $\bar X = \{ \Om \} \cup X$. 
Through  Möbius inversion on $\bar X$, 
entropy can still be exactly localised by
$\mathpzc{s} \in \stF_0(\bar X)$ with: 
\begin{equation} \mathpzc{s}_\aa = \sum_{\aa \aw \bb'} \mu_{\aa\bb'} \cdot [S_{\bb'}]_\aa \end{equation}
In particular, the global entropy 
$S_\Om$ is recovered as: 
\begin{equation} S_\Om = \mathpzc{s}_{\Om} + \sum_{\aa \in X} [\mathpzc{s}_{\aa}]_\Om  \end{equation}
The Bethe approximation of entropy $\Sb_\Om = S_\Om - \mathpzc{s}_\Om$,
according to the previous paragraph, 
may be seen to only neglect all mutual informations 
of the form $I_\om(p_\om)$ for $\om \in \Part(\Om)$ not contained in any $\aa \in X$.
Intuitively, $\mathpzc{s}_\Om$
is expected\footnote{
 Schlijper proved this procedure to converge to the true entropy 
 per lattice point for the Ising 2D model in \cite{Schlijper-83}
}
to be small when large
enough regions are taken in $X$ to cover $\Om$, by extensivity of entropy.

The approximate entropy $\Sb_\Om(p_\Om)$ only depends on the marginal 
distributions $(p_\aa)_{\aa \in X}$ of $p_\Om$, 
so that $\Sb_\Om$ factors through the canonical map $\State_\Om \aw \Gamma(X)$.
We call Bethe entropy the functional $\Sb$ defined on $\State_0(X)$ by:
\begin{equation} \Sb = \sum_{\aa \in X} \mathpzc{s}_\aa = \sum_{\bb \in X} c_\bb \cdot S_\bb \end{equation}
Each term $\mathpzc{s}_\aa$ acting on $\Delta_\aa$, 
so that the restriction of $\Sb$ to the image of $\State_\Om$ in 
$\State_0(X)$ coincides with $\Sb_\Om$. 
Only the restriction of Bethe entropy to $\Gamma(X)$ shall be relevant, 
it is however important to remark that $\Gamma(X)$ is in general larger than 
the image of $\Delta_\Om$.

\section{Variational Principles}

Thermodynamics characterise the equilibrium state of a 
system through variational principles expressed in terms of 
macroscopic variables such as pressure, volume, temperature, {\it etc.} 
The statistical counterparts of such variational principles lead to
a variety of functionals, expressed in terms of the hamiltonian 
$H_\Om$ and the statistical state $p_\Om$ of the system, 
while macroscopic variables such as pressure and temperature
arise as Lagrange multipliers 
associated to volume and energy constraints.  

In this section, we suppose given the hamiltonian 
$H_\Om = \sum_\aa h_\aa$ as a sum of interactions over $X$.
We briefly recall two classical variational principles 
characterising the Gibbs equilibrium state $p_\Om$. 
The first one maximises entropy under a mean energy constraint, 
while the second one second performs a Legendre transform to free energy. 
The latter we shall locally approximate by Kikuchi's cluster variation method,
before giving the announced homological characterisation \ref{cvm} 
of its solutions. 
This result will justify the use of message-passing algorithms in the next chapter, 
which iterate over heat fluxes 
corresponding to Lagrange multipliers for the consistency constraints.

\subsection{Maximal Entropy}

\newcommand{\Uint}{{\mathpzc U}}

The simplest variational principle characterising 
the Gibbs distribution 
states that once the internal or mean energy $\Uint_\Om$ of the 
system is fixed, 
equilibrium is reached at the maximally entropic probability density.
Internal energy is the smooth functional 
$\Uint_\Om : \Delta_\Om \aw \R$ giving the expectation value 
of the hamiltonian:
\begin{equation} \Uint_\Om(q_\Om) = \croc{q_\Om}{H_\Om} = \E_{p_\Om}[H_\Om] \end{equation}
Let us denote by $\lambda_\infty = \inf(H_\Om)$ the minimal energy and 
by $\lambda_0 = \Uint_\Om([1])$ the mean energy for the uniform 
distribution on $E_\Om$.

\begin{thm}[Maximal Entropy Principle]
    Given $\lambda \in \:] \lambda_\infty, \lambda_0 [$ 
    the constrained variational problem:
    \begin{equation} S_\Om(p_\Om) = \max_{ \big\{ \Uint_\Om = \lambda \big\} }  S_\Om \end{equation}
    has a unique solution given by $p_\Om = \big[ \e^{- \theta H_\Om} \big]$ 
    for some inverse temperature $\theta \in \R_+^*$.
\end{thm}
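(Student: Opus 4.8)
The plan is to use Lagrange multipliers for the constrained optimization, exploiting the strict concavity of entropy to guarantee uniqueness. First I would set up the variational problem: we maximize $S_\Om(p_\Om)$ over the affine slice $\{ p_\Om \in \Delta_\Om : \Uint_\Om(p_\Om) = \lambda \}$ of the probability simplex. Since $S_\Om$ is strictly concave on $\Delta_\Om$ (its Hessian being $-\operatorname{diag}(1/p_\Om(x))$, negative definite on the tangent space $\{ \sum_x \dot p(x) = 0 \}$) and the constraint set is convex and compact, a maximizer exists and is unique. The hypothesis $\lambda \in \,]\lambda_\infty, \lambda_0[$ ensures this slice meets the interior of $\Delta_\Om$: the uniform measure realizes $\Uint_\Om = \lambda_0$ and measures concentrated near $\arg\min H_\Om$ approach $\lambda_\infty$, so by convexity every intermediate value is attained by a strictly positive density, keeping the maximizer away from the boundary where $S_\Om$ would have infinite-slope behavior.

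Next I would write the first-order conditions. Introducing multipliers $\theta \in \R$ for the energy constraint $\croc{p_\Om}{H_\Om} = \lambda$ and $\nu \in \R$ for normalization $\croc{p_\Om}{1} = 1$, the Lagrangian is $S_\Om(p_\Om) - \theta\,(\croc{p_\Om}{H_\Om} - \lambda) - \nu\,(\croc{p_\Om}{1} - 1)$. Using the earlier computation $dS_\aa(p_\aa) = \croc{-}{H_\aa}$ with $H_\aa \simeq -\ln p_\aa \bmod \R$ (the differential of entropy formula stated in the Legendre-duality section), stationarity at an interior point gives, for each $x \in E_\Om$,
\begin{equation}
-\ln p_\Om(x) - 1 = \theta\, H_\Om(x) + \nu,
\end{equation}
so that $p_\Om(x) = \e^{-\theta H_\Om(x) - \nu - 1}$. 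The normalization constraint absorbs $\nu$, yielding exactly $p_\Om = \big[ \e^{-\theta H_\Om} \big]$, the normalized Gibbs density at inverse temperature $\theta$.

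It then remains to fix the sign and existence of $\theta$. I would consider the map $\theta \mapsto \Uint_\Om\big([\e^{-\theta H_\Om}]\big) = \E_{[\e^{-\theta H_\Om}]}[H_\Om]$; by the Fisher-metric proposition this map has derivative $-\operatorname{Var}_{[\e^{-\theta H_\Om}]}(H_\Om) < 0$ (strictly, since $H_\Om$ is non-constant), so it is a strictly decreasing diffeomorphism from $\R$ onto $\,]\lambda_\infty, \lambda_0[$: as $\theta \to 0^+$ the Gibbs state tends to uniform giving energy $\lambda_0$, and as $\theta \to +\infty$ it concentrates on the minima giving energy $\to \lambda_\infty$. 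Hence for each $\lambda$ in the prescribed open interval there is a unique $\theta \in \R_+^*$ with $\Uint_\Om([\e^{-\theta H_\Om}]) = \lambda$, which identifies the inverse temperature.

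The main obstacle, and the step warranting the most care, is confirming that the maximizer lies in the relative interior of $\Delta_\Om$ rather than on a face; if it touched the boundary, the stationarity equation $p_\Om = [\e^{-\theta H_\Om}]$ (which is everywhere positive) could not hold, and the Lagrange argument would be invalid. I would dispatch this by the convexity/range argument above: because $\lambda$ is strictly interior to $]\lambda_\infty,\lambda_0[$ and the constraint slice is a compact convex polytope meeting $\operatorname{int}(\Delta_\Om)$, while $S_\Om$ is continuous and tends to its boundary minima with a gradient that blows up (the $-\ln p$ term forces the maximizer inward), the unique interior critical point is the global constrained maximum. Uniqueness of the whole statement then follows from strict concavity, closing the argument.
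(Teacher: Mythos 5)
Your proof is correct and follows essentially the same route as the paper: Lagrange multipliers for the energy constraint (with normalisation absorbed modulo constants) yielding $-\ln p_\Om = \theta H_\Om \mod \R$, followed by the observation that the Gibbs path $\theta \mapsto [\e^{-\theta H_\Om}]$ sweeps the energy interval $]\lambda_\infty, \lambda_0[$ as $\theta$ ranges over $\R_+^*$. You supply details the paper leaves implicit — strict concavity for uniqueness, the interiority of the maximiser, and the strict monotonicity of $\theta \mapsto \Uint_\Om([\e^{-\theta H_\Om}])$ via the variance — all of which are sound (only note that this map is a decreasing bijection from $]0,+\infty[$, not from all of $\R$, onto the stated interval).
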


\begin{proof} 
The cotangent space of $\Del_\Om \incl A^*_\Om$
is isomorphic to the quotient of $A_\Om$ by additive constants, 
    generating the normalisation constraint $\croc{p_\Om}{1} = 1$. 
The inverse temperature then arises as a Lagrange multiplier 
for the energy constraint 
$\Uint_\Om = \lambda$, as for every positive $p_\Om \in \Del_\Om$: 
\begin{equation} \der{S_\Om}{p_\Om} = \theta \cdot \der{\Uint_\Om}{p_\Om} \mod \R 
\quad\eqvl\quad
- \ln(p_\Om) = \theta H_\Om \mod \R \end{equation}
Reciprocally, consider the smooth path $\theta \mapsto [\e^{- \theta H_\Om}]$ 
    described by the Gibbs states in $\Del_\Om$ for $\theta \in \:]0, + \infty[$. 
    When the temperature goes to zero and $\theta \to \infty$ 
    the Gibbs distribution goes to the uniform distribution on the minimisers 
    of $H_\Om$ and $\Uint_\Om \to \lambda_{\infty}$. 
    When the temperature goes to infinity and $\theta \to 0$, the Gibbs distribution 
    goes to the uniform measure and $\Uint_\Om \to \lambda_0$. 
\end{proof}

Gibbs states as a function of a generalised inverse temperature 
$\theta \in \R$ are one-parameter subgroups of $\Del_\Om$ for
its multiplicative structure, while $\theta \in \R_+$ restricts to semi-groups,
as pictured by the figure below.

At the macroscopic level, the form of the maximal entropy principle suggests that
the equilibrium entropy $S(\Uint)$ be naturally defined 
as a function of internal energy $\Uint$, while the temperature $T$ measures
a kind of inverse entropic susceptibility:
\begin{equation} \frac 1 {kT} = \theta = \der{S}{\Uint} \end{equation} 
The Legendre transform essentially consists in a change of variables, 
defining an equivalent functional
parametrised by the derivative of the original one. We would here recover
the so-called free entropy\footnote{
    Free entropy is also called the Massieu potential, as introduced in his 1869 
    note \cite{Massieu-69}.
}:  
\begin{equation} \Psi(\theta) = S(\Uint) - \theta \Uint \end{equation}
The equilibrium free energy recovered as 
$F(\theta) = - \Psi(\theta)/\theta$ being more commonly used. 
It is remarkable that the statistical free energy $\Fh^\Om(H_\Om)$ 
may be defined 
from the Shannon entropy $S_\Om(p_\Om)$ in a perfectly similar manner.

\subsection{Thermal Equilibrium}

Free energy variational principles describe a system interacting 
with a thermostat, exchanging arbitrary amounts of energy without modifying
the temperature of the latter. Such variational principles are very 
natural as they for instance describe interaction with the atmosphere. 
Let us call {\it variational free energy} the smooth 
bifunctional $\Fg_\Om : \Delta_\Om \times H_\Om \aw \R$ given by: 
\begin{equation} \Fg_\Om(p_\Om, H_\Om) = \croc{p_\Om}{H_\Om} - S_\Om(p_\Om) \end{equation}
This functional generates the Legendre transform 
of $S_\Om$ by minimisation of $\Fg_\Om(\,-\,,H_\Om)$ 
and does yield the equilibrium free energy $\Fh^\Om$ 
at inverse temperature\footnote{
    One may reintroduce temperature dependency by letting
    $\Fg_\Om^\theta(p_\Om, H_\Om) = \theta^{-1} \Fg_\Om(p_\Om, \theta H_\Om)$, 
    recovering the classical formula 
    $\Fg_\Om^\theta = \Uint_\Om - \theta^{-1} S_\Om$.
    At fixed temperatures one may however always assume $\theta = 1$ up to a choice 
    of units. 
}
$\theta = 1$.

\begin{thm}[Minimal Free Energy Principle] 
    For every $H_\Om \in A_\Om$, the variational problem:
    \begin{equation} \Fg_\Om(p_\Om, H_\Om) = \min_{\Delta_\Om} 
    \Fg_\Om(\,-\,, H_\Om) \end{equation}
    has a unique solution given by the Gibbs state $[\e^{- H_\Om}]$ 
    reaching the equilibrium free energy $\Fh^\Om(H_\Om)$.
\end{thm}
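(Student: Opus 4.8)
The plan is to reduce the statement to the positivity of relative entropy, by exhibiting the difference $\Fg_\Om(p_\Om, H_\Om) - \Fh^\Om(H_\Om)$ as a Kullback--Leibler divergence. I would fix $H_\Om \in A_\Om$ and set $q_\Om = [\e^{-H_\Om}]$ for its Gibbs state. By the definition of free energy and of the normalisation bracket, $q_\Om = \e^{-H_\Om + \Fh^\Om(H_\Om)}$, so that taking logarithms yields the pointwise identity $H_\Om = - \ln q_\Om + \Fh^\Om(H_\Om)$ on $E_\Om$. Since $\e^{-H_\Om} > 0$, the state $q_\Om$ lies in $\Delta_\Om$, so it is an admissible competitor.

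Substituting this identity into $\Fg_\Om(p_\Om, H_\Om) = \croc{p_\Om}{H_\Om} - S_\Om(p_\Om)$ and using $\croc{p_\Om}{1} = 1$ together with $S_\Om(p_\Om) = - \croc{p_\Om}{\ln p_\Om}$, the constant $\Fh^\Om(H_\Om)$ factors out of the energy term and recombines with the entropy to give
\begin{equation}
\Fg_\Om(p_\Om, H_\Om) = \Fh^\Om(H_\Om) + \sum_{x_\Om \in E_\Om}
p_\Om(x_\Om) \, \ln \frac{p_\Om(x_\Om)}{q_\Om(x_\Om)} .
\end{equation}
The remaining sum is the relative entropy of $p_\Om$ with respect to $q_\Om$. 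I would bound it below by the Gibbs inequality: applying Jensen's inequality to the concave logarithm gives $\sum_{x_\Om} p_\Om \ln(q_\Om / p_\Om) \leq \ln \sum_{x_\Om} q_\Om = 0$, with equality exactly when $q_\Om / p_\Om$ is constant, i.e.\ when $p_\Om = q_\Om$. This single computation simultaneously establishes the lower bound $\Fg_\Om(\,-\,, H_\Om) \geq \Fh^\Om(H_\Om)$, identifies the minimiser as $q_\Om = [\e^{-H_\Om}]$, and shows that the minimal value is the equilibrium free energy.

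For uniqueness I would record the structural reason behind the computation: $\Fg_\Om(\,-\,, H_\Om)$ is the sum of the linear form $\croc{\,-\,}{H_\Om}$ and the strictly convex functional $-S_\Om$ (strict convexity inherited from $t \mapsto t \ln t$), hence strictly convex on the convex set $\Delta_\Om$, so it has at most one minimiser; the equality case of the Gibbs inequality pins it down to $[\e^{-H_\Om}]$. Equivalently, since the minimiser is positive one may argue by Lagrange multipliers: imposing stationarity of $\Fg_\Om$ along the constraint $\croc{p_\Om}{1} = 1$ forces $- \ln p_\Om = H_\Om \mod \R$, recovering the Gibbs state exactly as in the maximal entropy principle.

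The argument is essentially routine; the only points demanding care are the equality case of the Gibbs inequality and the passage between the open set $\Delta_\Om$ of positive densities and the compact simplex $\State_\Om$. The relative-entropy formulation is preferable precisely because the displayed identity, and hence the bound $\Fg_\Om \geq \Fh^\Om$, holds verbatim for every $p_\Om \in \State_\Om$ under the convention $0 \ln 0 = 0$, so that attainment of the minimum needs no separate compactness discussion. I would finally note that this theorem is the global ($\aa = \Om$) instance of the Legendre duality between $S_\Om$ and $\Fh^\Om$ recorded earlier, which is the viewpoint the subsequent Bethe--Kikuchi scheme will localise.
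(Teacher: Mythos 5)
Your proof is correct, but it takes a genuinely different route from the paper's. The paper argues by first-order criticality: it describes the cotangent space of $\Del_\Om$ as $A_\Om/\R$ (the quotient absorbing the normalisation constraint), computes $\dr\Fg_\Om/\dr p_\Om = H_\Om + \ln(p_\Om) \bmod \R$, reads off $p_\Om = [\e^{-H_\Om}]$ as the unique critical point, and invokes concavity of entropy to upgrade criticality to a global minimum -- essentially the Lagrange-multiplier argument you relegate to your final remark, and the same template the paper reuses verbatim for the maximal entropy principle and later for the Bethe free energy in theorem \ref{cvm}. Your main argument instead exhibits the exact identity $\Fg_\Om(p_\Om,H_\Om) = \Fh^\Om(H_\Om) + \sum_{x_\Om} p_\Om \ln(p_\Om/q_\Om)$ and concludes by the Gibbs inequality. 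What your route buys: it is self-contained (no appeal to a separately established convexity statement or to the identification of the cotangent space), it proves the asserted minimal value $\Fh^\Om(H_\Om)$ as an immediate byproduct rather than by back-substitution, and, as you note, it extends to the closed simplex with the convention $0\ln 0 = 0$, so attainment requires no boundary discussion. What the paper's route buys: it is two lines long, and it makes the Lagrange-multiplier structure explicit, which is precisely the feature that generalises to the constrained Bethe functional where the multipliers become the flux terms $\zeta\cdot\div\ph$ -- a generalisation the KL decomposition does not survive, since the Bethe free energy is no longer convex. Both arguments are sound; your closing observation that the theorem is the $\aa=\Om$ instance of the Legendre duality between $S_\Om$ and $\Fh^\Om$ matches the paper's framing exactly.
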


\begin{proof} 
    Describing the cotangent space of $\Del_\Om$ by the quotient $A_\Om/\R$, 
    a critical $p_\Om \in \Del_\Om$ satisfies:
    \begin{equation} \der{\Fg_\Om}{p_\Om} = H_\Om + \ln(p_\Om) = 0 \mod \R \end{equation} 
    Reciprocally, concavity of entropy implies that $p_\Om = [\e^{-H_\Om}]$ 
    indeed minimises $\Fg_\Om(\,-\,, H_\Om)$.
\end{proof}

The theorem really states that free energy 
is the Legendre transform of Shannon entropy. 
Although not bijective, the Legendre duality between hamiltonians 
and statistical states is fundamental:
\begin{equation} p_\Om = [\e^{-H_\Om}] \quad\eqvl\quad  H_\Om = -\ln(p_\Om) \mod \R \end{equation}
and defines a surjective abelian group morphism $A_\Om \aw \Del_\Om$. 
The computability of $H_\Om$ however does not at all imply that of 
$p_\Om$, as normalising the Gibbs density would require to compute 
the equilibrium free energy $\Fh^\Om(H_\Om)$, involving an integral 
over $E_\Om$ of exponential complexity in the cardinal of $\Om$. 

\subsection{Cluster Variation Method}


Introduced by Kikuchi in \cite{Kikuchi-51}, the cluster variation method
seeks to approximate the marginals $(p_\aa) \in \Gamma(X)$ 
of the global Gibbs state $p_\Om$
by a consistent collection of
local probabilities\footnote{
    See \ref{Gamma} for the definition of $\Gamma(X) \incl A^*_0(X)$, 
    space of consistent local probabilities. 
} $(q_\aa) \in \Gamma(X)$, 
obtained through a variational principle on 
a local approximation of free energy. 

\begin{defn} {\rm Bethe free energy} is the smooth bifunctional 
    $\Fb : \Delta_0(X) \times A_0(X) \aw \R$ defined by: 
    \begin{equation} \Fb(p, H) = \sum_{\bb  \in X} c_\bb
    \Big( \croc{p_\bb}{H_\bb} - S_\bb(p_\bb) \Big) 
    \end{equation}
\end{defn}

Because of the Möbius numbers $c_\bb$ appearing in its definition, 
the Bethe free energy $\Fb$ is no longer convex in general, 
and $\Fb(\,-\,, H)$ may have a great multiplicity\footnotemark{}
of critical points inside the space $\Gammint(X)$ of consistent 
positive densities. 
\footnotetext{
    For numerical studies see \cite{Weiss-97,Murphy-Weiss-99,Knoll-2017},
    A first mathematical proof of multiplicity is given by Bennequin
    in \cite{Bennequin-IEM}.
}
We provide with a rigorous characterisation of 
the critical points of $\Fb(\,-\,,H)$ constrained to 
$\Gammint(X)$,  
by showing that they bear a homological relationship with the reference
hamiltonian field $H$. 

\begin{thm} \label{cvm}
    A positive and consistent statistical field 
    $p \in \Gammint(X)$ is critical for the constrained 
    Bethe free energy $\Fb(\,-\,,H)_{|\Gamma(X)}$ 
    if and only if there 
    exists $\ph \in A_1(X)$ such that:
    \begin{equation} - \ln(p) \simeq H + \zeta \cdot \div \ph \mod \R_0(X) \end{equation}
\end{thm}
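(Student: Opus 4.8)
The plan is to read this as a constrained critical-point problem, apply Lagrange multipliers, and then collapse the resulting combinatorics using the homology of $A_\bullet(X)$ computed in Corollary \ref{H0}. The feasible set is $\Gammint(X) = \Delta_0(X) \cap \Ker(d)$; at an interior point its tangent space is $T = \Ker(d) \cap \R_0(X)^\perp$, the consistent variations of zero mass in each component. First I would compute the free differential of the reference functional: from $\partial_{p_\bb}\croc{p_\bb}{H_\bb} = H_\bb$ and $\partial_{p_\bb}S_\bb(p_\bb) = -\ln p_\bb - 1$ one gets $d_p\Fb = \big(c_\bb(H_\bb + \ln p_\bb + 1)\big)_{\bb}$ in $A_0(X)$. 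Using the adjunction $\croc{dq}{\ph} = \croc{q}{\div\ph}$, the annihilator of $T$ is $T^\perp = (\Ker d)^\perp + (\R_0(X)^\perp)^\perp = \div A_1(X) + \R_0(X)$, so $p$ is critical exactly when $d_p\Fb \in \div A_1(X) + \R_0(X)$. Absorbing the additive $+1$ together with the normalisation multipliers, and writing $g = H + \ln p \in A_0(X)$, this reads
\[ \big( c_\bb\, g_\bb \big)_{\bb \in X} \in \div A_1(X) + \R_0(X). \]

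The crux is then to match this with the stated condition $-\ln p \simeq H + \zeta\cdot\div\ph \bmod \R_0(X)$, i.e. $g \in \zeta\div A_1(X) + \R_0(X)$. The naive worry is the weighting by the Möbius numbers $c_\bb$, which may vanish or be negative; but this dissolves once I invoke Corollary \ref{H0}, which identifies $\div A_1(X)$ with $\Ker(\zeta_\Om)$, so that $h \in \div A_1(X)$ iff $\sum_\aa h_\aa = 0$ in $A_\Om$, while $\zeta_\Om(\R_0(X)) = \R$. Consequently membership in $\div A_1(X) + \R_0(X)$ is equivalent to the single scalar condition that the total sum be a constant function on $E_\Om$, and the critical equation becomes $\sum_\bb c_\bb\, g_\bb \in \R$. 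On the other side, since the scalar zeta transform is a bijection preserving per-component constants we have $\zeta\R_0(X) = \R_0(X)$, hence $\zeta\div A_1(X) + \R_0(X) = \zeta\big(\div A_1(X) + \R_0(X)\big)$; membership of $g$ is thus equivalent, after applying $\mu = \zeta^{-1}$, to $\mu g \in \div A_1(X) + \R_0(X)$, i.e. to $\sum_\bb (\mu g)_\bb \in \R$. Finally $\sum_\bb(\mu g)_\bb = \sum_\cc \big(\sum_{\bb\cont\cc}\mu_{\bb\cc}\big) g_\cc = \sum_\cc c_\cc\, g_\cc$ by the very definition of the Möbius numbers $c_\cc = \sum_{\bb\cont\cc}\mu_{\bb\cc}$. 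Both conditions therefore reduce to the same requirement $\sum_\bb c_\bb\, g_\bb \in \R$, and the asserted equivalence follows.

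I expect the genuinely delicate step to be the justification of the multiplier computation rather than the algebra. One must check that interior positivity ($p \in \Gammint(X)$) makes $\ln p$ and the entropy differential well defined and smooth, that the tangent space of $\Gamma(X)$ at $p$ really is $\Ker(d)\cap\R_0(X)^\perp$, and that the finite-dimensional annihilator identity $T^\perp = \div A_1(X)+\R_0(X)$ holds with the $\bmod\,\R$ bookkeeping handled consistently across components. As an independent cross-check of the easy implication I would bypass multipliers altogether via the homological invariance of $\Fb(p,-)$: setting $H' = H + \zeta\div\ph$, the identity $\Fb(p,H') = \Fb(p,H)$ on $\Gamma(X)$ shows the two restricted functionals share critical points, while $p_\bb = [\e^{-H'_\bb}]$ is a free critical point of $\Fb(-,H')$ over $\Delta_0(X)$ and hence, being consistent, a critical point of the restriction to $\Gamma(X)$.
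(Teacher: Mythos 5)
Your proposal is correct, and it follows the paper's overall architecture — Lagrange multipliers on $\Gamma(X)=\Delta_0(X)\cap\Ker(d)$, the differential $c_\bb(H_\bb+\ln p_\bb)$ modulo $\R_0(X)$, and the annihilator identification $T^\perp=\Img(\div)+\R_0(X)$ — but you handle the pivotal reduction by a different key lemma. The paper passes from $c\,(H+\ln p)\in\Img(\div)+\R_0(X)$ to $H+\ln p\in\zeta\cdot\Img(\div)+\R_0(X)$ via proposition \ref{zeta-bord}: the pairing identity $\croc{q}{\mu\cdot g}=\croc{q}{c\,g}$ for every consistent $q$ shows $\mu\cdot g-c\,g\perp\Ker(d)=\Img(\div)^\perp$, so $c\,g$ and $\mu\cdot g$ lie in $\Img(\div)$ simultaneously; this needs only the adjunction of $d$ and $\div$, which the criticality computation already uses. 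You instead invoke corollary \ref{H0} to identify $\div A_1(X)$ with $\Ker(\zeta_\Om)$, so that both membership conditions collapse to the single scalar condition $\sum_\bb c_\bb\,g_\bb\in\R$, via $\zeta_\Om(\mu\cdot g)=\sum_\cc\big(\sum_{\bb\cont\cc}\mu_{\bb\cc}\big)g_\cc=\zeta_\Om(c\,g)$. The two computations are dual to one another — pairing against a consistent measure versus pushing forward into $A_\Om$ — and both hinge on the identity $c=1\cdot\mu$; the difference is in cost. Your route uses the hard inclusion $\Ker(\zeta_\Om)\incl\div A_1(X)$, i.e.\ the acyclicity of $A_\bullet(X)$ coming from the interaction decomposition theorem (theorem \ref{H0-P}), whereas \ref{zeta-bord} is self-contained; in exchange you get a cleaner physical reading of criticality as the statement that the Bethe total energy $\sum_\bb c_\bb(H_\bb+\ln p_\bb)$ is a constant function on $E_\Om$. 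Your closing cross-check of the easy implication through the invariance $\Fb(p,H+\zeta\cdot\div\ph)=\Fb(p,H)$ is sound and is precisely proposition \ref{hom-invariance} of the paper.
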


The proof of theorem \ref{cvm} shall conclude this chapter, 
the flux term $\div \ph$ essentially appearing as Lagrange multipliers 
associated to the consistency constraint $dp = 0$. 
The correspondence theorem \ref{correspondence}
between stationary states of belief propagation 
and critical points of $\Fb$ shall come as an easy consequence of \ref{cvm}, 
once message-passing algorithms have been related with transport equations.
A crucial combinatorial argument in proving \ref{cvm} is 
contained in the following proposition:

\begin{prop} \label{zeta-bord} 
    For every $H \in A_0(X)$ and $q \in A^*_0(X)$ such that $dq = 0$ we have:
    \begin{equation} \croc{q}{\mu \cdot H} = \croc{q}{c\, H} \end{equation}
    In particular $c\, H \in \Img(\div)$ if and only if 
    $\mu \cdot H \in \Img(\div)$, 
    or equivalently if $H \in \zeta \cdot \Img(\div)$. 
\end{prop}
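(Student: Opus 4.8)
The plan is to prove the pairing identity by unwinding the duality bracket and transferring the Möbius transform from $H$ onto the consistent measure $q$. First I would expand
\[
\croc{q}{\mu \cdot H} = \sum_{\aa} \sum_{\aa \cont \bb'} \mu_{\aa\bb'}\, \croc{q_\aa}{j_{\aa\bb'}(H_{\bb'})},
\]
recalling that the summand $H_{\bb'}$ occurring in $(\mu\cdot H)_\aa$ is implicitly extended to $A_\aa$ through $j_{\aa\bb'}$. The adjunction of $\Sigma^{\bb'\aa}$ and $j_{\aa\bb'}$ rewrites the inner bracket as $\croc{\Sigma^{\bb'\aa}(q_\aa)}{H_{\bb'}}$, and this is precisely where the hypothesis $dq=0$ enters: consistency gives $\Sigma^{\bb'\aa}(q_\aa) = q_{\bb'}$ for every $\aa \cont \bb'$, so each inner bracket collapses to $\croc{q_{\bb'}}{H_{\bb'}}$, now independent of $\aa$.

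Having removed the dependence on $\aa$, I would swap the order of summation and collect the Möbius coefficients:
\[
\croc{q}{\mu \cdot H} = \sum_{\bb'} \Big( \sum_{\aa \cont \bb'} \mu_{\aa\bb'} \Big)\, \croc{q_{\bb'}}{H_{\bb'}}.
\]
The parenthesised sum is exactly the right Möbius number $c_{\bb'} = \sum_{\aa \cont \bb'} \mu_{\aa\bb'}$ from section 3.2.1, so the right-hand side equals $\sum_{\bb'} c_{\bb'}\croc{q_{\bb'}}{H_{\bb'}} = \croc{q}{c\,H}$, which is the first assertion. I would emphasise that the collapse to a single $c_{\bb'}$-weighted term is genuinely due to consistency: without $dq=0$ the terms would not factor through $q_{\bb'}$ and the coefficient sum would not appear.

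For the \emph{in particular} clause I would invoke the $d$--$\div$ adjunction $\croc{dq}{\ph} = \croc{q}{\div\ph}$. In the finite-dimensional setting this identifies $\Img(\div)$ with the annihilator of $\Ker(d)$, i.e. $f \in \Img(\div)$ if and only if $\croc{q}{f} = 0$ for every consistent $q \in \Ker(d)$. Since the identity just proved shows that $\croc{q}{\mu\cdot H}$ and $\croc{q}{c\,H}$ agree on all of $\Ker(d)$, one vanishes identically on $\Ker(d)$ exactly when the other does, whence $\mu\cdot H \in \Img(\div) \Leftrightarrow c\,H \in \Img(\div)$. The final reformulation follows by applying the isomorphism $\zeta = \mu^{-1}$ of theorem \ref{moebius}: $\mu\cdot H = \div\ph$ is equivalent to $H = \zeta\cdot\div\ph$, so $\mu\cdot H \in \Img(\div)$ iff $H \in \zeta\cdot\Img(\div)$.

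The computation is otherwise routine; the only steps requiring care are recognising $\Img(\div)$ as the annihilator of $\Ker(d)$ (which rests on the adjunction together with finite-dimensionality) and the bookkeeping that makes the summation swap legitimate while correctly identifying the coefficient sum as $c_{\bb'}$ rather than a strict-inclusion variant.
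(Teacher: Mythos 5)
Your proof is correct and follows essentially the same route as the paper: transfer the implicit extension through the $\Sigma$--$j$ adjunction, collapse via consistency $\Sigma^{\bb\aa}(q_\aa)=q_\bb$, collect the Möbius coefficients into $c_\bb$, and then use $\Img(\div)=\Ker(d)^\perp$ for the orthogonality statement. The paper's version is just terser; your explicit remarks on finite-dimensionality and the $\zeta=\mu^{-1}$ step for the final reformulation are accurate and harmless additions.
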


\begin{proof}
    Using $q_\bb = \Sigma^{\bb\aa}(q_\aa)$ 
    and $c_\bb = \sum_{\aa \aw \bb} \mu_{\aa\bb}$ we have:
\begin{equation} \croc{q}{\mu \cdot H} 
    = \sum_{\aa \aw \bb} \mu_{\aa\bb} \, \croc{\Sigma^{\bb\aa}(q_\aa)}{H_\bb} 
    = \sum_{\bb}c_\bb \, \croc{q_\bb}{H_\bb}
    = \croc{q}{c\, H}
\end{equation}
    In particular $\mu \cdot H \perp \Ker(d) \eqvl c\, H \perp \Ker(d)$, 
    while $\Img(\div) = \Ker(d)^\perp$ as $d$ is the adjoint of $\div$.
\end{proof}

Homological invariance of $\Fb$ with respect to interaction potentials also
follows from \ref{zeta-bord}. 
This interesting property sheds light on the form of the 
Lagrange multipliers appearing in theorem \ref{cvm} 

\begin{prop} \label{hom-invariance}
    For every consistent $p \in \Gamma(X)$ 
    and every $U = H + \zeta \cdot \div \ph$ in $A_0(X)$ we have: 
    \begin{equation} \Fb(p, U) = \Fb(p, H) \end{equation}
\end{prop}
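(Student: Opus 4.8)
The plan is to prove Proposition \ref{hom-invariance} by reducing it to the combinatorial pairing identity of Proposition \ref{zeta-bord}. The statement to establish is that $\Fb(p, U) = \Fb(p, H)$ whenever $U = H + \zeta \cdot \div \ph$ and $p \in \Gamma(X)$ is consistent. Since Bethe free energy is affine in its energy argument (the entropy term $S_\bb(p_\bb)$ does not depend on $H$ at all, and the internal-energy term is linear in $H$), the difference $\Fb(p, U) - \Fb(p, H)$ collapses to a single linear term in the perturbation $\zeta \cdot \div \ph$.

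First I would write out the difference explicitly. From the definition $\Fb(p, H) = \sum_{\bb \in X} c_\bb\big( \croc{p_\bb}{H_\bb} - S_\bb(p_\bb)\big)$, linearity gives
\begin{equation}
\Fb(p, U) - \Fb(p, H) = \sum_{\bb \in X} c_\bb \, \croc{p_\bb}{(\zeta \cdot \div \ph)_\bb} = \croc{p}{c \,(\zeta \cdot \div \ph)}.
\end{equation}
So the whole claim reduces to showing that this pairing vanishes. The natural move is to invoke Proposition \ref{zeta-bord} with the observable field $H$ there replaced by $\zeta \cdot \div \ph$. That proposition asserts, for any consistent $q \in \Ker(d)$ and any $H \in A_0(X)$, the equality $\croc{q}{\mu \cdot H} = \croc{q}{c\, H}$, together with the characterisation that $c\, H \perp \Ker(d)$ precisely when $H \in \zeta \cdot \Img(\div)$.

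The key step is therefore to recognise that $\zeta \cdot \div \ph$ lies in $\zeta \cdot \Img(\div)$ — which is immediate by construction, taking the element $\div \ph \in \Img(\div)$. By the final clause of Proposition \ref{zeta-bord}, applied to $H' = \zeta \cdot \div \ph$, the condition $H' \in \zeta \cdot \Img(\div)$ is equivalent to $c\, H' \perp \Ker(d)$. Since $p \in \Gamma(X) \incl \Ker(d)$ by Definition \ref{Gamma}, the orthogonality $c\,(\zeta \cdot \div \ph) \perp \Ker(d)$ yields $\croc{p}{c\,(\zeta \cdot \div \ph)} = 0$, which is exactly the vanishing required. Concretely one may also argue directly: $\croc{p}{c\,(\zeta \cdot \div \ph)} = \croc{p}{\mu \cdot (\zeta \cdot \div \ph)} = \croc{p}{\div \ph} = \croc{d p}{\ph} = 0$, using first Proposition \ref{zeta-bord}, then $\mu \cdot \zeta = \id$ from Theorem \ref{moebius}, then the adjunction $\croc{dq}{\ph} = \croc{q}{\div \ph}$, and finally $dp = 0$.

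I expect the only real subtlety to be bookkeeping around which combinatorial coefficients ($c_\bb$ versus the Möbius action $\mu$) act where, and making sure the pairing $\croc{p}{c\,(\zeta \cdot \div \ph)}$ is set up so that Proposition \ref{zeta-bord} applies cleanly. The direct chain of equalities through $\mu \cdot \zeta = \id$ and the adjunction $\croc{dp}{\ph} = \croc{p}{\div \ph}$ is the cleanest route and sidesteps any worry about the orthogonality reformulation, so I would present that as the main line of argument and mention the orthogonality version only as a remark.
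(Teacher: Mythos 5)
Your proof is correct and follows essentially the same route as the paper: both reduce the claim to the internal-energy terms, apply Proposition \ref{zeta-bord} to trade the Möbius numbers $c$ for the Möbius action $\mu$, cancel $\mu\cdot\zeta$ by Theorem \ref{moebius}, and kill the remaining term $\croc{p}{\div\ph}$ via the adjunction with $d$ and the consistency $dp=0$. Your version is slightly more explicit about the entropy terms cancelling and about applying the argument to the difference $U-H$ rather than to $U$ and $H$ separately, but the content is identical.
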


\begin{proof}
    By orthogonality of 
    $\Ker(d)$ with $\Img(\div)$ and using the lemma, 
    the internal energy terms satisfy:
    \begin{equation} \croc{p}{c\, U} = \croc{p}{\mu \cdot U} = 
    \croc{p}{\mu \cdot H} + \croc{p}{\div \ph} 
    = \croc{p}{c \, H} \end{equation}
\end{proof}

Recall that homologous interaction potentials $u = h + \div \ph$ 
define the same global hamiltonian as $h$: 
\begin{equation} H_\Om = \sum_{\aa \in X} h_\aa = \sum_{\bb \in X} c_\bb H_\bb \end{equation}
and by \ref{H0}, one furthermore has  $\sum_\aa u_\aa = H_\Om$ 
if and only if $u$ is homologous to $h$.
Hence a subtle difference between the homological invariance of \ref{hom-invariance} 
and the following proposition lies 
in the assumption that $p \in \Gamma(X)$ is given by the marginals 
of a global probability distribution\footnotemark{} $p_\Om \in \Delta_\Om$. 
In that case, it is well-known that the Bethe approximation 
yields an exact measure of internal energy. 
\footnotetext{
    The image of $\Delta_\Om(X)$ in general forms a strict convex polytope 
    of $\Gamma(X)$. Its boundaries are the image of the positivity constraints 
    on the global density, see \cite{Vorobev-62} and \cite{Abramsky-2011}. 
}

\begin{prop} \label{Fb-energy}
    Given $p_\Om \in \Delta_\Om$ of image $p$ in $\Gamma(X)$, 
    for any local hamiltonians $H_\aa \in A_0(X)$ of
    global hamiltonian $H_\Om \in A_\Om$ we have:
    \begin{equation} \Fb(p, H) = \croc{p_\Om}{H_\Om} - \check S(p) \end{equation}
\end{prop}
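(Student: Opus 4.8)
The plan is to separate $\Fb(p,H)$ into its internal-energy and entropy contributions and to handle each independently. Writing out the definition,
\begin{equation}
\Fb(p,H) = \sum_{\bb \in X} c_\bb \croc{p_\bb}{H_\bb}
- \sum_{\bb \in X} c_\bb \, S_\bb(p_\bb),
\end{equation}
the entropy part is settled immediately: by the very definition of Bethe entropy $\check S = \sum_{\bb \in X} c_\bb \, S_\bb$, the second sum is exactly $\check S(p)$. So everything reduces to proving the internal-energy identity
\begin{equation}
\sum_{\bb \in X} c_\bb \croc{p_\bb}{H_\bb} = \croc{p_\Om}{H_\Om}.
\end{equation}

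For this I would use that $p$ is the image of a genuine global state, so that each $p_\bb$ is the marginal $p_\bb = \Sigma^{\bb\Om}(p_\Om)$. The adjunction of $\Sigma^{\bb\Om}$ with the extension $j_{\Om\bb}$ for the natural duality bracket then gives, term by term,
\begin{equation}
\croc{p_\bb}{H_\bb} = \croc{\Sigma^{\bb\Om}(p_\Om)}{H_\bb} = \croc{p_\Om}{j_{\Om\bb}(H_\bb)} = \croc{p_\Om}{H_\bb},
\end{equation}
the extension $j_{\Om\bb}$ being kept implicit as per the paper's convention. Summing against the Möbius numbers and invoking the total-energy formula $H_\Om = \sum_{\bb \in X} c_\bb H_\bb$ recalled just above the statement, linearity of the bracket yields $\sum_\bb c_\bb \croc{p_\bb}{H_\bb} = \croc{p_\Om}{\sum_\bb c_\bb H_\bb} = \croc{p_\Om}{H_\Om}$, which is the required identity, and combined with the entropy reduction above it completes the proof.

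The computation itself is routine; the only point demanding care — and the genuine content of the proposition — is the role of the hypothesis that $p$ is the image of some $p_\Om \in \Delta_\Om$. This is precisely what lets me replace the local consistency relations $p_\bb = \Sigma^{\bb\aa}(p_\aa)$ between neighbouring cells by the stronger fact that every $p_\bb$ is a single global marginal $\Sigma^{\bb\Om}(p_\Om)$, so that all the brackets $\croc{p_\bb}{H_\bb}$ are simultaneously pulled back to brackets against the one distribution $p_\Om$. I would therefore stress that this is the subtle difference with the homological invariance of \ref{hom-invariance}, which holds for arbitrary consistent $p \in \Gamma(X)$ but only pins down the internal energy up to its homology class; here it is pinned to the exact value $\croc{p_\Om}{H_\Om}$, and the argument does not go through for a general $p \in \Gamma(X)$ lying strictly outside the image of $\Delta_\Om$.
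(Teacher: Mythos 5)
Your proof is correct and follows essentially the same route as the paper's: the entropy sum is $\check S(p)$ by definition, and the energy term is handled by pulling each $\croc{p_\bb}{H_\bb}$ back to $\croc{p_\Om}{H_\bb}$ via the adjunction $p_\bb = \Sigma^{\bb\Om}(p_\Om)$ and then summing against the Möbius numbers using $H_\Om = \sum_\bb c_\bb H_\bb$. The paper's proof is a one-line version of exactly this computation, and your closing remark on the role of the global-extension hypothesis matches the discussion the paper gives just before the proposition.
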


\begin{proof}When $p_\bb = \Sigma^{\bb\Om}(p_\Om)$ for every $\bb \in X$, 
    by definition of $\Fb$ and $H_\Om$ we have:
    \begin{equation} \Fb(p, H)
    = \sum_{\bb \in X} \croc{p_\Om}{c_\bb H_\bb} - \check{S}(p) 
    = \croc{p_\Om}{H_\Om} - \check{S}(p) 
    \end{equation}
\end{proof}

The Bethe approximation also counts  
degrees of freedom properly, as measured by the 
maximum entropy reaches in the high temperature limit. 
The following proposition further justifies its soundness, 
it is proved in a slightly different form in \cite{Yedidia-2005}.

\begin{prop} 
    When $X$ is a $\cap$-closed covering\footnotemark{} of $\Om$, 
    the high temperature limit $\Fb([1], H)$ of the 
    Bethe free energy coincides with that of the true 
    free energy $\Fg([1_\Om], H_\Om) = \langle H_\Om \rangle - \ln |E_\Om|$. 
    \footnotetext{
        The $\cap$-closure of $X$ 
        is in fact equivalent to the <<region-graph>> condition
        of \cite{Yedidia-2005}.
        Note that $X$ was assumed closed under $\cap$ since section 2.3, 
        as this property was necessary for the interaction decomposition to hold. 
    }
\end{prop}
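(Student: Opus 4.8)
The plan is to evaluate both functionals explicitly on the uniform beliefs and reduce the equality to a single combinatorial identity for the Möbius numbers $c_\bb$. Write $[1]$ for the family of uniform probabilities $p_\bb = [1_\bb]$; these are consistent and are the marginals of $[1_\Om]$, since the uniform measure marginalises to the uniform measure, $\Sigma^{\bb\Om}([1_\Om]) = [1_\bb]$. I would therefore apply proposition \ref{Fb-energy} with $p_\Om = [1_\Om]$, which splits the Bethe free energy into an exact energy term and a Bethe entropy term:
\begin{equation}
\Fb([1], H) = \croc{[1_\Om]}{H_\Om} - \sum_{\bb \in X} c_\bb\, S_\bb([1_\bb]) = \langle H_\Om \rangle - \sum_{\bb \in X} c_\bb\, S_\bb([1_\bb]).
\end{equation}
The energy contribution already coincides with that of $\Fg([1_\Om], H_\Om)$, so everything reduces to comparing the two entropy terms.

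Using that the uniform measure maximises entropy, $S_\bb([1_\bb]) = \ln|E_\bb|$ and $S_\Om([1_\Om]) = \ln|E_\Om|$, the remaining goal is the cardinality identity $\sum_{\bb \in X} c_\bb \ln|E_\bb| = \ln|E_\Om|$. Since $E_\bb = \prod_{i \in \bb} E_i$, I would expand $\ln|E_\bb| = \sum_{i\in\bb} \ln|E_i|$ and exchange the order of summation, rewriting the left-hand side as
\begin{equation}
\sum_{\bb \in X} c_\bb \ln|E_\bb| = \sum_{i\in\Om} \Big(\sum_{\bb \in X,\, \bb \ni i} c_\bb\Big)\ln|E_i|,
\end{equation}
where the outer sum runs exactly over $\Om$: every $i$ occurring lies in some $\bb \incl \Om$, and every $i\in\Om$ occurs because $X$ covers $\Om$. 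The identity then follows from the single normalisation $\sum_{\bb \in X,\, \bb \ni i} c_\bb = 1$ for each atom $i\in\Om$.

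This last step is where the $\cap$-closure hypothesis enters, and I expect it to be the crux. Consider $X_i = \{\bb \in X \st i \in \bb\}$: it is nonempty by the covering property and closed under intersection by the closure hypothesis, hence admits a least element $\bb_i = \bigcap_{\bb\in X_i}\bb$, which lies in $X$ and contains $i$. The point is that $X_i$ is then the principal cone above $\bb_i$, namely $X_i = \{\aa' \in X \st \aa' \cont \bb_i\}$: any $\aa'$ containing $i$ contains $\bb_i$ by minimality, and any $\aa' \cont \bb_i$ contains $i$. Feeding $\bb = \bb_i$ into the defining relation of the Möbius numbers $\sum_{\aa' \aw \bb_i} c_{\aa'} = 1$ from section 3.2.1 gives $\sum_{\bb \ni i} c_\bb = 1$, which closes the argument.

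The only genuine obstacle is recognising that $\cap$-closure promotes the upper set $X_i$ to a principal cone: the defining property of $c$ is only stated for cones above elements of $X$, so without closure the minimiser $\bb_i$ need not lie in $X$ and the per-atom normalisation can fail. If one wished to avoid invoking proposition \ref{Fb-energy}, the energy term could be obtained directly from $H_\Om = \sum_\bb c_\bb H_\bb$ together with the marginal identity $\croc{[1_\Om]}{j_{\Om\bb}H_\bb} = \croc{[1_\bb]}{H_\bb}$, but this merely reproduces what \ref{Fb-energy} already supplies, leaving the cardinality identity as the essential content.
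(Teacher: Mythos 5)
Your proof is correct and follows essentially the same route as the paper: invoke proposition \ref{Fb-energy} to reduce the claim to the entropy identity $\sum_\bb c_\bb \ln|E_\bb| = \ln|E_\Om|$, expand $\ln|E_\bb|$ over atoms, swap the sums, and use $\cap$-closure to produce a minimal $\bb_i \in X$ containing each $i$ so that $\sum_{\aa \cont \bb_i} c_\aa = 1$ applies. Your explicit justification that $X_i$ is a principal cone is exactly the point the paper leaves implicit, so nothing is missing.
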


\begin{proof}
    According to \ref{Fb-energy}, we only need to show that 
    $\check{S}([1]) = \ln |E_\Om| = \sum_i \ln |E_i|$. 
    The assumption on $X$ implies that
    for every $i \in \Om$, there exists a minimal $\bb_i \in X$ containing $i$ 
    so that:
    \begin{equation} \sum_{\aa \in X} c_\aa \ln |E_\aa|  
    = \sum_{\aa \in X} c_\aa \sum_{i \in \aa} \ln |E_i| 
    = \sum_{i \in \Om} \ln |E_i| \sum_{\aa \cont \bb_i} c_\aa 
    = \sum_{i \in \Om} \ln |E_i| 
    \end{equation}
\end{proof}

Before proceeding to the proof of theorem \ref{cvm}, 
we finally introduce the slightly finer characterisation 
of critical points of $\Fb$ given by \ref{cvm2}. 
It will be especially useful in proving the correspondence 
theorem \ref{correspondence} in the next chapter. 

\begin{defn} \label{div'}
    Denote by $\div'$ the truncation of the boundary $\div$ 
    to $X \setminus \{ \vide \}$: 
    \begin{equation} \div'_\aa \ph =  \div_\aa \ph \txt{if} \aa \neq \vide 
    \txt{and} \div'_\vide \ph = 0 \end{equation}
\end{defn} 

\begin{thm} \label{cvm2}
    Assuming $H_\vide = 0$, a consistent statistical state $p \in \Gammint(X)$ 
    is critical for the constrained Bethe free energy 
    $\Fb(\,-\,,H)_{|\Gamma(X)}$ if and only if there exists 
    $\ph \in A_1(X)$ such that: 
    \begin{equation} - \ln(p) = H + \zeta \cdot \div' \ph \end{equation} 
\end{thm}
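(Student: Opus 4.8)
The plan is to derive Theorem \ref{cvm2} directly from the coarser characterisation \ref{cvm}, by showing that under the hypothesis $H_\vide = 0$ (and with $\vide \in X$, as presupposed by Definition \ref{div'}) the congruence $-\ln(p) \simeq H + \zeta \cdot \div \ph \mod \R_0(X)$ of \ref{cvm} can be promoted to the exact equality $-\ln(p) = H + \zeta \cdot \div' \ph$. Since criticality of $p$ is already equivalent to the former, it suffices to prove that the two families of right-hand sides describe the same $p$. The reverse implication is immediate: the difference $\div' \ph - \div \ph$ is supported on $\vide$ with value $-\div_\vide \ph$, and since $\vide \incl \aa$ for every $\aa$ one gets $(\zeta \cdot \div' \ph)_\aa = (\zeta \cdot \div \ph)_\aa - \div_\vide \ph$, a single global constant in $\R_0(X)$. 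Hence any exact equality of \ref{cvm2} yields at once a congruence of \ref{cvm}, so $p$ is critical.

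For the forward implication, I would start from a flux $\ph$ together with the constants $\lambda_\aa = -\ln(p_\aa) - H_\aa - (\zeta \cdot \div \ph)_\aa \in \R$ forced by \ref{cvm}, and correct $\ph$ only along the normalisation edges $\aa\vide$. Setting $\ph' = \ph + \psi$ with $\psi_{\aa\bb} = 0$ unless $\bb = \vide$, and writing $t_\bb = \psi_{\bb\vide}$ for $\bb \neq \vide$ (with $t_\vide = 0$), a short boundary computation in the spirit of the Gauss formula \ref{gauss} gives $(\div'\psi)_\vide = 0$ and $(\div'\psi)_\bb = -\psi_{\bb\vide}$ for $\bb \neq \vide$, so that $(\zeta \cdot \div' \psi)_\aa = -\sum_{\vide \neq \bb \incl \aa} \psi_{\bb\vide} = -(\zeta \cdot t)_\aa$. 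Combining this with the identity of the previous paragraph, the desired equality $-\ln(p_\aa) = H_\aa + (\zeta \cdot \div' \ph')_\aa$ reduces to the linear system $(\zeta \cdot t)_\aa = -\kappa_\aa$ for all $\aa$, where $\kappa_\aa = \lambda_\aa + \div_\vide \ph$.

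This system is solved by Möbius inversion (Theorem \ref{moebius-0}) as $t = -\mu \cdot \kappa$, the sole obstruction being that the prescribed value $t_\vide$ vanish. As $\vide$ is minimal one has $(\mu \cdot \kappa)_\vide = \kappa_\vide$, so the solvability condition is exactly $\kappa_\vide = 0$. This is precisely where $H_\vide = 0$ intervenes: evaluating the congruence of \ref{cvm} at $\aa = \vide$, and using that $E_\vide$ is a one-point space so that $p_\vide = 1$ and $-\ln(p_\vide) = 0$, together with $(\zeta \cdot \div \ph)_\vide = \div_\vide \ph$, forces $\lambda_\vide = -\div_\vide \ph$, i.e. $\kappa_\vide = 0$. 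Thus the normalisation fluxes $\psi$ exist, $\ph' = \ph + \psi$ realises the exact equality, and the equivalence is complete.

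The main obstacle I expect is this compatibility check at $\vide$: one must verify that the normalisation constants produced by \ref{cvm} are not arbitrary but satisfy exactly the single relation $\kappa_\vide = 0$ that makes $\zeta \cdot t = -\kappa$ solvable among admissible fluxes (those with $t_\vide = 0$). Everything hinges on the interplay between the truncation $\div'$, the triviality $p_\vide = 1$, and the normalisation $H_\vide = 0$; once these are aligned, the rest is the routine $\zeta$/$\div$ bookkeeping sketched above.
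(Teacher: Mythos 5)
Your proof is correct and follows essentially the same route as the paper's: both reduce \ref{cvm} to \ref{cvm2} by absorbing the additive constants $\lambda \in \R_0(X)$ into the fluxes toward $\vide$ via a Möbius-inverted correction (your $t = -\mu\cdot\kappa$ agrees with the paper's $\ph_{\aa\vide} = \psi_{\aa\vide} - \sum_{\aa\cont\bb}\mu_{\aa\bb}\lambda_\bb$ on all $\aa \neq \vide$), with the compatibility at $\vide$ secured by $H_\vide = 0$ and $p_\vide = 1$. You merely make explicit the solvability check $\kappa_\vide = 0$ that the paper dispatches with the terse remark ``while $-\ln(p_\vide) = 0$''.
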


\begin{proof}[Proof of theorem \ref{cvm}]
    To account for the normalisation constraints $\croc{p_\aa}{1} = 1$,
    we may describe the cotangent space at $p$ of $\Del_0(X) \incl A^*_0(X)$ 
    as the quotient $A_0(X) / \R_0(X)$ 
    and write:
    \begin{equation} \frac {\dr \Fb} {\dr p} \simeq 
    \sum_{\bb \in X} c_\bb \big( H_\bb + \ln (p_\bb) \big) \mod \R_0(X) \end{equation} 
    The flux term comes as a collection of Lagrange multipliers
    for the consistency constraints,
    a consistent $p \in \Gammint(X)$ being critical if and only if 
    the differential of $\Fb(\,-\,,H)$ 
    vanishes on $\Ker(d) = \Img(\div)^\perp$ or:
    \begin{equation} c\big( H + \ln(p)\big) \in \Img(\div) + \R_0(X) \end{equation}
    Proposition \ref{zeta-bord} is crucial\footnote{
        As noticed by D. Bennequin, 
        the original proof given in \cite{Yedidia-2005} is problematic
        when there exists $\bb$ such that $c_\bb = 0$.
    }
    to state that the above is equivalent to:
    \begin{equation} H + \ln(p) \in \zeta \cdot \Img(\div)  + \R_0(X) \end{equation}
\end{proof}

\begin{proof}[Proof of theorem \ref{cvm2}]
    First note that $\zeta \cdot \div' \ph \simeq 
    \zeta \cdot \div \ph \mod \R_0(X)$,
    so that we only need to reduce the Lagrange multipliers of \ref{cvm} 
    to the form given by \ref{cvm2}. 
    Assume there exists $\psi \in A_1(X)$ and $\lambda \in \R_0(X)$ 
    such that:
    \begin{equation} - \ln(p) = H + \zeta \cdot \div \psi + \lambda \end{equation}
    Define $\ph \in A_1(X)$ by letting $\ph_{\aa\bb} = \psi_{\aa\bb}$ 
    for non-empty $\bb$, and otherwise letting: 
    \begin{equation} \ph_{\aa \vide} = \psi_{\aa\vide} - 
    \sum_{\aa \cont \bb} \mu_{\aa\bb}\: \lambda_\bb  
    \end{equation}
    Then $\zeta(\div \ph)_\aa = \zeta(\div \psi)_\aa + \lambda_\aa$ 
    for all non-empty $\aa$, while $- \ln(p_\vide) = 0$.
\end{proof}

\chapter{Message-passing and Diffusion}

    In this chapter, we consider dynamics over the space of statistical 
fields that enlarge a class of bayesian inference algorithms 
known as {\it message-passing algorithms}. 
The generalised belief propagation algorithm introduced 
by Yedidia, Freeman and Weiss in \cite{Yedidia-2001} is the most interesting 
of these processes and reviewed in section 1.
For a general study of belief propagation on networks and their relations 
to statistical physics, see \cite{Mezard-Montanari}. 

\begin{center}
\includegraphics[width=0.6\textwidth]{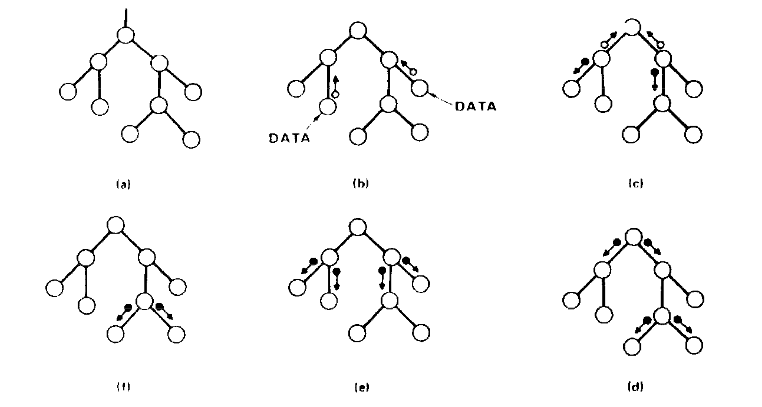}
\end{center}

The main contribution of this thesis is to reveal the homological 
character of message-passing algorithms. 
Among the deep consequences of this new view 
stands a collection of conservation laws.  
Section 2 shows how, at the level of energies, 
their discrete dynamics approximate 
a continuous-time transport equation of the form $\dot u = \div \Phi(u)$.
Recovering belief propagation through a naïve Euler scheme
of time step 1, the present approach yields new algorithms at smaller time
scales. Shorter time steps seem highly advisable to avoid unstable behaviour. 

Finally, building on the higher degree combinatorics of chapter 3, 
we propose a combinatorial enhancement of the algorithm
and its continuous version,  
through a degree-one Möbius inversion on the energy flux. 
We show in section 3 how this canonical diffusion
eliminates redundancies and natively allows 
to enforce Dirichlet boundary conditions, a fundamental component of learning.

\section{Belief Propagation} 

Belief propagation was first introduced as
a decoding algorithm by Gallager in his 1960 PhD thesis 
on low-density parity-check codes \cite{Gallager-63}.
In his setting, a collection of code bits and parity-check bits,
transmitted through a noisy communication channel, 
exchange messages to update the belief on their true value
until parity-check consistency is achieved. 
The algorithm was later rediscovered by Pearl in 1982 \cite{Pearl-82}
to perform exact inference on bayesian trees, then generalised
by Yedidia {\it et al.} in 2001 \cite{Yedidia-2001} 
on more general coverings $X \incl \Part(\Om)$. 

This introductory section first defines
the algorithm in its general form, 
before reviewing some of its remarkable properties,
a starting point of this work
having been the theorem of \cite{Yedidia-2005}
establishing a correspondence between fixed points 
of the algorithm and critical points of a Bethe free energy. 

\subsection{Algorithm}

Given $X \incl \Part(\Om)$, generalised belief propagation 
assumes priors are described by a collection $(f_\aa)$
of strictly positive observables indexed by $\aa \in X$.
The dynamic takes place on a set of messages $(m_{\aa\bb})$
according to the update rule:
\begin{equation} m_{\aa\bb} \wa m_{\aa\bb} \cdot 
\Sigma^{\bb\aa} \left( 
\frac{
    \prod\limits_{\bb' \in \Lambda^\aa \setminus \Lambda^\bb} f_{\bb'} 
    \; \times
    \prod\limits_{\aa'\bb' \in d\LL^{\aa} \setminus d\LL^{\bb}} m_{\aa'\bb'}
}{
    \prod\limits_{\bb'\cc' \in d\LL^{\bb} 
    \setminus d\LL^{\aa}} m_{\bb'\cc'} 
}
\right)
\end{equation}
where $\Lambda^\aa \incl X$ still stands for
the cone of subregions of $\aa$ and $d\LL^{\aa}$ for its 
coboundary\footnote{
    $d\LL^\aa$ is the set of ordered pairs $\aa' \cont \bb'$ such that 
    $\aa' \not\in \Lambda^\aa$ and $\bb' \in \Lambda^\aa$
    (see 3.2.3). Our formula stays as close as possible 
    as that of \cite{Yedidia-2005} 
    but we already write the products over set differences of cone coboundaries. 
}. 
Denoting by $G_\aa \incl A_\aa$ the multiplicative group 
of strictly positive observables for every $\aa \in X$, 
the algorithm thus iterates over $m \in G_1(X)$ given $f \in G_0(X)$.
This crude yet classical formula 
justifies the alternative name of {\it sum-product algorithm}.

Priors and messages serve to define a collection 
of beliefs $(q_\aa)$ according to the formula:
\begin{equation}\label{beliefs}
q_\aa = \Bigg[\:
\prod_{\bb' \in \Lambda^\aa} f_{\bb'} 
\; \times 
\prod_{\aa'\bb' \in d\LL^{\aa}} m_{\aa'\bb'} 
\: \Bigg] 
\end{equation}
the normalisation bracket projecting 
$G_\aa$ onto the subspace $\Del_\aa$ of non-vanishing probability 
densities. 
The dynamic of the beliefs is bound to that of the 
messages, now following the much nicer update rule:
\begin{equation}\label{messages}
m_{\aa\bb} \wa m_{\aa\bb} \cdot \frac {\Sigma^{\bb\aa}(q_\aa)} {q_\bb}
\end{equation}
Stationary states of the algorithm hence correspond 
to consistent\footnote{ 
The beliefs $(q_\aa)$ are consistent when $q_\bb$ is the marginal of $q_\aa$ 
whenever $\bb$ is contained in $\aa$.
} 
beliefs $(q_\aa)$ 
searched within a particular subspace defined 
by the priors $(f_\aa)$. We shall relate this subspace 
to a homology class of $A_0(X)$ in section 2. 
Another result not yet stated to our knowledge, 
is that stationarity of beliefs implies stationarity of messages. 
This will justify to focus on the dynamic over $q \in \Del_0(X)$ following
the update rule: 
\begin{equation} \label{belief-update}
q_\aa \wa \Bigg[ \:
q_\aa \; \times \prod_{\aa'\bb' \in d\LL^\aa} 
\frac{\Sigma^{\bb'\aa'}(q_{\aa'})}{q_{\bb'}}
\: \Bigg]
\end{equation}
We will denote by $\BP : \Del_0(X) \aw \Del_0(X)$ the associated
smooth map, the algorithm reading $q \wa \BP(q)$. 
It will only be a matter of preference whether to keep track of the messages
$m \in G_1(X)$ over time, 
the relevance of the final messages being questionable as the
assignment $(f, m) \mapsto q$ is not injective. 

\subsection{Properties}

Belief propagation was initially considered on graphs, 
which we describe by coverings $X \incl \Part(\Om)$
containing the vertex $\{ i \}$ for all $i \in \Om$,
and otherwise consisting only of edges of the form $\aa = \{ i, j \}$. 
In general, the underlying graph of Gallager's algorithm contains loops, but 
Pearl's algorithm, describing hierarchical data structures, 
was however restricted to acyclic graphs.  
{\it Retractable hypergraphs} $X \incl \Part(\Om)$ shall be defined in 
\ref{retractable}
to generalise the acyclic property of trees, 
the following theorem will then extend Pearl's result. 
On such retractable systems, the fundamental property justifying 
interest in belief propagation is 
the exact, finite-time and parallelised algorithms it provides 
to compute the probabilistic model's marginals. 

\begin{thm}
    Assume $X \incl \Part(\Om)$ is retractable\footnote{
        See definition \ref{retractable}.
    }.
    Given positive priors $(f_\aa)$ and initial messages $(m_{\aa\bb})$, 
    the beliefs $(q_\aa)$ iterated through $\BP$ converge 
to the exact marginals of the Markov field: 
\begin{equation} p_\Om = \Bigg[\: \prod_{\aa \in X} f_\aa \:\Bigg] \end{equation}
Moreover, convergence is reached in finite time less
or equal to the diameter of $X$.
\end{thm}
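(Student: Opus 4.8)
The plan is to argue entirely at the level of energies, using the exponential dictionary $f_\aa = \e^{-h_\aa}$ and $m_{\aa\bb} = \e^{-\ph_{\aa\bb}}$, and to track the effective hamiltonians $U = \zeta \cdot u$ attached to the beliefs $q_\aa = [\e^{-U_\aa}]$. First I would rewrite the belief formula (\ref{beliefs}): the product over $\LL^\aa$ contributes $\zeta(h)_\aa = H_\aa$, while the product over the coboundary $d\LL^\aa$, via the Gauss formula \ref{gauss} and proposition \ref{zeta-cocycle}, contributes $\zeta(\div \ph)_\aa$. Hence $U = H + \zeta \cdot \div \ph$ with $H = \zeta \cdot h$, so that the underlying potential $u = h + \div \ph$ stays homologous to $h$ along the whole orbit of $\BP$. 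By theorem \ref{H0} the total energy $\sum_\aa u_\aa = H_\Om = \sum_\aa h_\aa$ is therefore conserved, and with it the Markov field $p_\Om = [\e^{-H_\Om}] = \big[\prod_\aa f_\aa\big]$. Recording this conservation law is the cheap but essential first step: whatever limit the beliefs reach still encodes the correct global distribution.

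Next I would characterize the stationary states. Inspecting (\ref{belief-update}), a belief $q$ is $\BP$-stationary exactly when every increment $\Sigma^{\bb'\aa'}(q_{\aa'})/q_{\bb'}$ is trivial, i.e. when $dq = 0$ and $q \in \Gamma(X)$ (theorem \ref{faithful} lets me phrase stationarity purely in terms of beliefs); by the correspondence theorem \ref{correspondence}, equivalently \ref{cvm}, these are precisely the consistent beliefs in the homology class of $h$ that are critical for $\Fb(\,-\,,H)$. The true marginals $p_\aa = \Sigma^{\aa\Om}(p_\Om)$ are consistent by construction, so what remains is to see that they lie in this class, i.e. are realized by some messages $\ph$. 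Here I would invoke retractability (definition \ref{retractable}): a retractable $X$ admits an elimination/retraction sequence along which $p_\Om$ factorizes into local conditionals supported on the regions of $X$ — the analogue of a junction-tree factorization — and reading off those conditionals as messages exhibits $p$ as a $\BP$-stationary belief. That this critical point is moreover the unique one in the homology class I would obtain as a by-product of the convergence argument below, rather than from convexity, since $\Fb(\,-\,,H)$ is not convex in general.

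Finally I would prove finite-time convergence by induction along the retraction sequence $X = X_0 \supset X_1 \supset \cdots \supset X_N = \vide$. Peeling off a leaf-like retractable region $\bb$, its outgoing message stabilizes to its exact value after a single sweep once the messages it depends on are themselves final; the residual dynamic is then that of $\BP$ on $X' = X \setminus \{\bb\}$, to which the induction hypothesis applies, and proposition \ref{Fb-energy} together with the factorization of the previous paragraph identifies the stabilized belief with the exact marginal. Composing these stabilizations shows every belief is exact after a number of sweeps equal to the longest retraction chain. The main obstacle is precisely this last combinatorial step: making the notions of ``final message'' and its propagation depth rigorous from the retraction structure, and checking that the compounded delays are governed by the diameter of $X$ rather than by $|X|$. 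By contrast, the homological conservation of the first step and the fixed-point characterization of the second are comparatively routine given the machinery already assembled.
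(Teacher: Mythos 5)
Your skeleton — pass to energies, observe that the beliefs always have the form $q = [\e^{-U}]$ with $U = H + \zeta\cdot\div\ph$ so that $u = h + \div\ph$ stays in the homology class of $h$, characterize stationary states as consistent beliefs in that class, then induct on the retraction — is exactly the paper's framework (chapters 5 and 6). The first genuine gap is the step where you assert that the exact marginals are "realized by some messages." That assertion is equivalent to saying the true effective potentials $h^* = T(h)$ are homologous to $h$, and this is precisely what \emph{fails} in general: section 6.2.3 computes the triangle example where $H^*_{ij} - H_{ij}$ has a nonzero component in the interaction subspace $Z_{ij}$, so $h^* \notin h + \div A_1(X)$. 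A junction-tree-style factorization of $p_\Om$ does not by itself produce an element of $A_1(X)$; retractability has to enter through a concrete computation. The paper's mechanism is proposition \ref{retract}: writing $\Om = \aa \sqcup_{\bb} \aa'$ with $X = \LL^\aa \sqcup_{\LL^\bb} \LL^{\aa'}$, the consistency equations on the completion $\tilde X = X \cup \{\Om\}$ that involve $\Om$ collapse, via the independence property \ref{eff-indep} of effective energies, onto equations already contained in $\DF(\tilde U)_{|X} = 0$; hence the unique consistent $u \in h + \div A_1(X)$ extends \emph{by zero} to ${\cal Z}(\tilde X)$ and must coincide with $h^*$ (theorem \ref{global}, using \ref{Th=h}). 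Without an argument of this type your proof establishes convergence to \emph{some} consistent belief in the class of $h$, not to the exact marginals.

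The second gap is in the convergence induction itself. Peeling off a region $\LL^{\aa_j}$ glued along $\LL^{\bb_j}$ and re-updating it after the inner messages have stabilized only works if the new consistent representative on $\LL^{\aa_j}$ \emph{extends} the already-correct restriction on $\LL^{\bb_j}$ — otherwise the backward pass destroys what the forward pass achieved. This is the extensibility property of ${\cal Z}(X)$, proved in the paper by the identity $\Fh^{\cc\aa}\big(H_\aa + B_\bb\big) = \Fh^{\cc\bb}\big(H_\bb + B_\bb\big)$ for $\cc \incl \bb$ (using consistency of $H$ on $\LL^\aa$ and conditional additivity); it is a real computation, not a formality, and it is what makes the forward--backward double pass of propositions \ref{double-pass} and \ref{backward-pass} terminate in finitely many cone updates. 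Note also that the paper's finite-time statement is proved for asynchronous, cone-restricted updates (the maps $T_{\aa_j}$, realized by one-step convergence of the M\"obius-corrected scheme on $\LL^{\aa_j}$); translating the length of the resulting normal sequence into a bound by the diameter of $X$ for synchronous $\BP$ sweeps is exactly the bookkeeping you flag as your "main obstacle," and it is the sheaf formalism $\Shf(X,X')$, $\Shf(X|X')$ that makes your informal notion of a "final message" precise. Uniqueness then does follow as you intend, but only once global cocyclicity (theorem \ref{cocyclic}) is in place.
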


When $X$ is a graph with loops, in general when $X$ is an unretractable
hypergraph, belief propagation still performs 
approximate bayesian inference surprisingly well. 
The following theorem of \cite{Yedidia-2005} beautifully bridges 
bayesian learning with statistical physics and greatly motivated the
present work. It expresses that stationary states of belief propagation 
actually compute the critical points of a Bethe free energy approximation, 
which according to Kikuchi's cluster variation method, should give 
a good estimate of the marginals of the Markov field $p_\Om = [\e^{-H_\Om}]$.

\begin{thm} 
    For any $X \incl \Part(\Om)$, the fixed points of $\BP$ 
    with priors $(f_\aa)$ are in one-to-one correspondence 
    with the critical points of $\Fb_H$ with respect to the local
    hamiltonians: 
    \begin{equation} H_\aa = \sum_{\bb' \in \LL^\aa} - \ln f_{\bb'} \end{equation}
\end{thm}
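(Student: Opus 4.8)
The theorem establishes a one-to-one correspondence between fixed points of the generalised belief propagation map $\BP$ (with priors $(f_\aa)$) and critical points of the constrained Bethe free energy $\Fb(\,-\,,H)_{|\Gamma(X)}$, where the reference hamiltonian is $H_\aa = \sum_{\bb' \in \LL^\aa} -\ln f_{\bb'} = \zeta(h)_\aa$ with $h_\bb = -\ln f_\bb$. The strategy is to translate both sides into the homological language of the preceding sections and then apply the characterisation theorem \ref{cvm} (equivalently \ref{cvm2}) directly. The key observation is that the relation $H_\aa = \sum_{\bb' \in \LL^\aa} -\ln f_{\bb'}$ is precisely $H = \zeta \cdot h$, so the interaction potentials $h$ are recovered by Möbius inversion $h = \mu \cdot H$, and the global hamiltonian is $H_\Om = \sum_\aa h_\aa$, as in chapter 3.

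First I would rewrite the belief map in terms of energies. Taking logarithms in the belief formula (\ref{beliefs}), a belief $q_\aa = [\e^{-U_\aa}]$ has effective energy $U_\aa = \zeta(h)_\aa + \zeta(\div n)_\aa$ up to an additive constant, where $n_{\aa\bb} = -\ln m_{\aa\bb}$ are the log-messages and the coboundary sums over $d\LL^\aa$ are reorganised into $\zeta \circ \div$ via the Gauss formula \ref{gauss} of chapter 2. Concretely, I would show that the products $\prod_{\aa'\bb' \in d\LL^\aa} m_{\aa'\bb'}$ appearing in (\ref{beliefs}) correspond at the level of log-energies exactly to $\zeta(\div n)_\aa$, using that $d\LL^\aa$ is the coboundary of the cone and that $\sum_{\bb' \in \LL^\aa}(\div \ph)_{\bb'} = \sum_{\aa'\bb' \in d\LL^\aa} \ph_{\aa'\bb'}$. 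This identifies the subspace of beliefs reachable from priors $(f_\aa)$ with the set $\{U = H + \zeta\cdot\div\ph \mid \ph \in A_1(X)\}$, i.e.\ the effective energies whose interaction potentials lie in the homology class $[h]$ of theorem \ref{H0}.

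Next I would characterise the fixed points. A message update (\ref{messages}) leaves $m$ unchanged iff $\Sigma^{\bb\aa}(q_\aa) = q_\bb$ for every $\aa \cont \bb$, which is exactly the consistency condition $dq = 0$, i.e.\ $q \in \Gamma(X)$. Thus fixed points of $\BP$ with priors $(f_\aa)$ are precisely the consistent beliefs $q = [\e^{-U}]$ with $U = H + \zeta\cdot\div\ph$ for some flux $\ph$. By theorem \ref{cvm} (or its refinement \ref{cvm2}), these are exactly the positive consistent states critical for $\Fb(\,-\,,H)_{|\Gamma(X)}$, since that theorem states $p \in \Gammint(X)$ is critical iff $-\ln p \simeq H + \zeta\cdot\div\ph \mod \R_0(X)$. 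This gives a surjection from fixed points onto critical points.

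The main obstacle — and the part requiring genuine care — is establishing that the correspondence is \emph{bijective} rather than merely surjective, since the assignment $(f,m) \mapsto q$ is explicitly noted to be non-injective. The resolution is to pass from messages to beliefs: I would invoke theorem \ref{faithful} (stationarity of beliefs implies stationarity of messages) so that the genuine dynamical variable is $q \in \Del_0(X)$ under the reduced map (\ref{belief-update}), and a fixed point is an equivalence class of message configurations yielding a single consistent belief $q$. One-to-one-ness then means: each critical belief $q$ arises from exactly one fixed point \emph{of the belief dynamic}. Given a critical $q$, theorem \ref{cvm2} produces a flux $\ph$ with $-\ln q = H + \zeta\cdot\div'\ph$; setting $n = \ph$ recovers log-messages whose induced belief is $q$ and which are stationary, and conversely distinct critical $q$ give distinct fixed-point classes since $q$ itself is the invariant. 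Thus the correspondence is a bijection onto the critical set of $\Fb(\,-\,,H)_{|\Gamma(X)}$, and I would conclude by remarking that the normalisation ambiguity (working $\mod \R_0(X)$) is absorbed by the bracket $[\,\cdot\,]$, matching the weak form of \ref{cvm} to the strong form of \ref{cvm2}.
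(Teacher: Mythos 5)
Your proposal follows the paper's own route: rewrite the beliefs via the Gauss formula \ref{gauss} as $U = H + \zeta \cdot \div \ph$, use faithfulness of the flux $\Phi = -\DF \circ \zeta$ to equate stationarity of beliefs with consistency, and conclude by theorem \ref{cvm2} --- this is exactly the content of theorem \ref{correspondence} together with proposition \ref{faithful}. One caution on ordering: since $\BP$ is the \emph{belief} dynamic, your claim that its fixed points ``are precisely the consistent beliefs'' cannot be read off the message update rule alone --- that equivalence \emph{is} the faithfulness statement (plus proposition \ref{bpn-faithful-lemma} to handle the normalisation bracket, i.e.\ $\div\Phi(u) \in \R_0(X)$ rather than $\div\Phi(u)=0$), so it must be invoked when characterising the fixed-point set, not only afterwards for injectivity.
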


It was recognised by D. Bennequin that the proof
given by Yedidia {\it et al.} is problematic when there exists $\aa \in X$ 
such that the Möbius number $c_\aa$ vanishes, 
and we shall correct their proof in section 2.

In the general setting, the uniqueness of equilibria is not maintained.
Although convergence of the algorithm remains an open question, 
the existence of at least one fixed point is insured.

\begin{thm} \label{existence}
    For any $X \incl \Part(\Om)$ and set of priors, 
    there exists a fixed point of $\BP$.
\end{thm}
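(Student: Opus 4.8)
The plan is to derive the fixed point from Brouwer's theorem; the only real difficulty is that the natural domain of $\BP$ is the \emph{open} product of simplices $\Del_0(X) = \prod_\aa \Delta_\aa$, which is convex but not compact. On this open set $\BP$ is genuinely well defined and smooth, as the paper asserts: every belief $q_\bb$ appearing in the denominators of the update (\ref{belief-update}) is strictly positive, so no singularity occurs in the interior. The whole task is therefore to control the behaviour near the boundary, i.e. to prevent the beliefs from degenerating onto the faces of the simplices, and to recover compactness.

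First I would pass to logarithmic message coordinates. Writing $\mu_{\aa\bb} = \ln m_{\aa\bb} \in A_\bb$ modulo additive constants (the scaling ambiguity of messages), the beliefs $\ln q_\aa$ become affine in $\mu$, and the normalisation brackets turn into log-sum-exp terms $\ln \Sigma^{\bb\aa}(\e^{-\,\cdot})$, which are globally smooth. Hence the message update (\ref{messages}) becomes a smooth self-map
\[ T : V \aw V, \qquad V = \bigoplus_{\aa \cont \bb} A_\bb / \R , \]
with no division and so no singularity anywhere on $V$. A convenient simplification is the cancellation $\ln m_{\aa\bb} - \ln q_\bb = -\!\!\sum_{\bb'\cc' \in d\LL^\bb \setminus \{\aa\bb\}} \mu_{\bb'\cc'} - \sum_{\cc' \in \LL^\bb} \ln f_{\cc'} + \mathrm{const}$, which holds because $\aa\bb \in d\LL^\bb$ while $\aa\bb \notin d\LL^\aa$, so that the self-term $\mu_{\aa\bb}$ drops out of $T(\mu)_{\aa\bb}$. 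Since the beliefs (\ref{beliefs}) are a continuous function of the messages, any fixed point of $T$ yields stationary beliefs, i.e. a fixed point of $\BP$ (the converse being theorem \ref{faithful}); it therefore suffices to produce one for $T$.

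The heart of the matter, and the step I expect to be the main obstacle, is to exhibit a compact convex $T$-invariant set. Because $\BP$ is not contractive, I would not attempt to show $T$ is a contraction; instead I would seek an \emph{a priori} bound, proving that $T$ maps some large closed ball $B_R \incl V$ into itself, using the strict positivity of the priors $(f_\aa)$ to bound the log-sum-exp terms and the cancellation above to absorb the linear growth in $\mu$. Brouwer's theorem applied to $T_{|B_R}$ then gives a fixed point. Where such a growth estimate fails — precisely when messages escape to infinity, which corresponds to beliefs approaching the boundary of $\Del_0(X)$ — I would fall back on a regularisation argument: apply Brouwer to $\BP$ composed with the retraction onto an exhausting sequence of compact convex subsets of $\Del_0(X)$, extract a convergent subsequence of approximate fixed points using compactness of the closed simplex $\prod_\aa \bar\Delta_\aa$, and show the limit cannot degenerate, once more because the priors are bounded away from $0$.

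Finally, I would record that any fixed point obtained this way is consistent and positive: by the correspondence with the cluster variation method (theorems \ref{cvm}--\ref{cvm2}), stationary beliefs satisfy $-\ln q \simeq H + \zeta \cdot \div \ph$ for some $\ph \in A_1(X)$ and lie automatically in $\Gammint(X)$, which is exactly what the statement asserts. The non-degeneracy of the limit in the regularised argument is really the crux: it is the one place where the hypothesis that the priors are \emph{strictly} positive, rather than merely non-negative, must be invoked, and the only point at which the combinatorial geometry of $X$ could in principle interfere with the bound.
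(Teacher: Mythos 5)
Your proposal takes essentially the same route as the paper: the proof given there (citing \cite{Bennequin-IEM}) applies Brouwer's fixed point theorem to $\BP$ on the product of simplices $\Delta_0(X)$ after showing that the map keeps away from the boundary, which is exactly the non-degeneracy estimate you identify as the crux. The only real difference is cosmetic — you work in logarithmic message coordinates rather than directly on beliefs — and, like the paper, you defer that key boundary estimate to be carried out rather than proving it.
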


\begin{proof}
The topology of $\Delta_0(X)$ is that of a product of spheres and the 
proof given in \cite{Bennequin-IEM} relies on Brouwer's fixed point theorem,
after having shown that $\BP : \Del_0(X) \aw \Del_0(X)$ keeps away
from the boundary of $\Delta_0(X)$.
\end{proof}

\section{Statistical Diffusions}

We now introduce an ordinary differential equation 
of the form $\dot u = \div \Phi(u)$ on interaction potentials. 
This transport equation is very reminiscent of the heat equation 
and similar diffusions\footnotemark{}. 
Trying to understand the geometric nature of message-passing 
formulas and unveiling their connections with algebraic topology was the first
motivation of this thesis: 
we relate belief propagation to a 
coarse integrator of this diffusion flow. 
\footnotetext{
Apart from the non-linearity of the algorithm, 
the main difference lies in the central role of
the zeta transform defining local hamiltonians 
from interaction potentials. See section 6.3.
}

This homological picture completes 
the correspondence between the stationary states of belief propagation 
and the critical points of Bethe free energy $\Fb(\,-\,,H)$, 
as described by theorem \ref{cvm}. 
Both lie at the intersection of a homology class 
of interaction potentials\footnote{
    Technically $U \in H + \zeta \cdot \Img(\div)$ should be understood up to 
    additive constants. We shall discuss how to best deal with 
    normalisation constraints later on. 
} with the space of consistent beliefs,
through the non-linear correspondence summarised in table \ref{table-beliefs}:
\begin{equation} [\e^{-U}] \in \Gammint(X) \txt{and} U \in  H + \zeta \cdot \Img(\div) \end{equation}
Behind this correspondence is an interesting form of duality, exchanging constraints
with degrees of freedom. The free energy variational problem looks for constrained
$q \in \Gammint(X)$ such that variations are generated by Lagrange multipliers, 
while the algorithm iterates over $\ph \in A_1(X)$ until beliefs 
eventually reach a consistent equilibrium state. 
\begin{table}[H]
\renewcommand{\arraystretch}{1.6} 
    \vspace{0.3cm}
\begin{center}
    \begin{tabular}{| C{4cm} | C{4cm} |}
        \hline
        interaction potentials & $u = h + \div \ph$ \\
        \hline
        local hamiltonians & $U = \zeta \cdot u$ \\
        \hline
        beliefs & $q = \big[\e^{-U} \big]$ \\
        \hline
    \end{tabular}
    \caption{\label{table-beliefs}
        Interaction potentials, local hamiltonians, 
        and local Gibbs states -- or beliefs. 
    }
    \vspace{-0.3cm}
\end{center}
\end{table}
Decomposing the dynamic into elementary operations,
the vector field we introduce is of the form 
$\Tspt = \div \circ (-\DF) \circ \zeta$ on the vector space 
$A_0(X)$ of interaction potentials:
\begin{equation} \label{bp-diagram}
\bcd 
A_0(X) \rar{\zeta} & A_0(X) \dlar{-\DF} \\
A_1(X) \uar{\div} 
\ecd
\end{equation}
The definition of the non-linear map 
$\DF$ is the object of the first subsection. 
Preparing for the study of other flux functionals, 
we then characterise a family of transport equations 
of the form $\dot u = \div \Phi(u)$ sharing the common property 
of yielding critical points of Bethe free energy at equilibrium.
Specialising to the case where $\Phi = -\DF \circ \zeta$, we show 
how explicit Euler schemes on $A_0(X)$ generalise the discrete dynamic 
of $\BP$ on $\Del_0(X)$ to arbitrary time scales. 

\subsection{Effective Energy Gradient and Consistency} 

\bgroup
\renewcommand{\Z}{{\cal Z}}

The effective energy gradient $\DF \in C^\infty\big( A_0(X), A_1(X)\big)$ 
will be essential in defining the currents carrying energy from 
one region to another. 
Cancellation of these currents will then define equilibrium as 
a collection of effective hamiltonians 
$U \in A_0(X)$ whose effective energies are consistent with one another.

\begin{defn} 
We call effective energy gradient 
the smooth map $\DF : A_0(X) \aw A_1(X)$ defined by: 
\begin{equation} \DF(H)_{\aa\bb} = H_\bb - \Fh^{\bb\aa}(H_\aa) \end{equation}
    where $\Fh^{\bb\aa}(H_\aa) = - \ln \Sigma^{\bb\aa}(\e^{-H_\aa})$ 
    is the effective energy as defined in 4.1.2. 
\end{defn}

The zero locus of $\DF$ 
is naturally diffeomorphic to the space of consistent positive measures 
by \ref{eff-marginal}. 
There will be two different ways to account for the normalisation constraints.

\begin{defn} \label{CU}
    We say that a collection of local hamiltonians $U \in A_0(X)$ is:
    \bi
    \iii {\rm consistent} if $\DF(U) = 0$, 
    \iii {\rm projectively consistent} if $\DF(U) \in \R_1(X)$.
    \ei
    We denote by $\CU(X) \incl \CU'(X)$ the spaces of consistent 
    and projectively consistent local hamiltonians.
\end{defn}

Note for instance that $0 \in A_0(X)$ is only projectively consistent as:
\begin{equation} \DF(0)_{\aa\bb} = \ln |E_\bb| - \ln |E_\aa| \end{equation}
There is however a unique consistent $U = \ln |E| \in \R_0(X)$ 
such that $U_\vide = 0$, accounting for the entropic 
contributions in the high temperature limit.

\begin{prop} \label{consistent-U}
Consistent probability densities may be parametrised by local hamiltonians 
via:
\bei
\item 
$\CU'(X)$ is the inverse image of $\Gammint(X)$ 
under $U \mapsto [\e^{-U}]$ and:
\begin{equation} \Gammint(X) \simeq \CU'(X) / \R_0(X) \end{equation} 
\item 
$\{ U \in \CU(X) \st U_\vide = \lambda \}$ 
is diffeomorphic to $\Gammint(X)$ under $U \mapsto \e^{-(U -\lambda)}$ 
for all $\lambda \in \R$ 
and: 
    \begin{equation} \Gammint(X) \simeq \CU(X) / \R \end{equation}
\ee
\end{prop}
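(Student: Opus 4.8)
The plan is to reduce everything to the marginalisation identity \ref{eff-marginal}, namely $\Sigma^{\bb\aa}[\e^{-U_\aa}] = [\e^{-\Fh^{\bb\aa}(U_\aa)}]$, together with the elementary fact that two Gibbs densities $[\e^{-f}]$ and $[\e^{-g}]$ on the same space coincide precisely when $f - g$ is an additive constant. First I would isolate the single equivalence that drives both items: for any $U \in A_0(X)$ and $q = [\e^{-U}]$, the consistency relation $q_\bb = \Sigma^{\bb\aa}(q_\aa)$ reads $[\e^{-U_\bb}] = [\e^{-\Fh^{\bb\aa}(U_\aa)}]$, hence holds iff $\DF(U)_{\aa\bb} = U_\bb - \Fh^{\bb\aa}(U_\aa)$ is a constant. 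Quantifying over all edges $\aa \cont \bb$, this shows $q \in \Gammint(X)$ iff $\DF(U) \in \R_1(X)$, i.e. iff $U \in \CU'(X)$; this already yields the claim that $\CU'(X)$ is the inverse image of $\Gammint(X)$.

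For item (i) I would promote this to the diffeomorphism $\CU'(X)/\R_0(X) \simeq \Gammint(X)$. The map $U \mapsto [\e^{-U}]$ is smooth and unchanged under shifting each $U_\aa$ by a constant, so it factors through $\R_0(X)$; the induced map is injective because equality of Gibbs densities forces the hamiltonians to differ region-wise by a constant, and surjective because $q \mapsto -\ln q$ produces a representative in $\CU'(X)$ by the core equivalence. Smoothness of $q \mapsto -\ln q$ on positive beliefs gives a smooth inverse. I would also record that $\CU'(X)$ is stable under the translation action of $\R_0(X)$, since $\DF(U+\lambda)_{\aa\bb} = \DF(U)_{\aa\bb} + (\lambda_\bb - \lambda_\aa)$ remains in $\R_1(X)$.

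For item (ii) the new ingredient is that consistency at the empty region fixes the normalisation: assuming $\vide \in X$, so that $U_\vide \in \R$ is defined, $\DF(U)=0$ forces $U_\vide = \Fh^{\vide\aa}(U_\aa) = \Fh^\aa(U_\aa)$ for every $\aa$. Imposing $U_\vide = \lambda$ therefore gives $\sum_{x_\aa}\e^{-U_\aa} = \e^{-\lambda}$, so $\e^{-(U_\aa - \lambda)}$ is already a probability density equal to $[\e^{-U_\aa}]$; on this slice $U \mapsto \e^{-(U-\lambda)}$ coincides with $U \mapsto [\e^{-U}]$ and lands in $\Gammint(X)$, now with honest equalities rather than equalities modulo $\R$. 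The inverse is $q \mapsto U := -\ln q + \lambda$: one checks $U_\vide = \lambda$ since $q_\vide = 1$ on the one-point space $E_\vide$, and a direct computation using $\Fh^{\bb\aa}(-\ln q_\aa + \lambda) = \lambda - \ln\Sigma^{\bb\aa}(q_\aa) = \lambda - \ln q_\bb = U_\bb$ gives $\DF(U)=0$. Both maps are smooth, yielding the diffeomorphism. Finally, to obtain $\Gammint(X) \simeq \CU(X)/\R$ I would observe that adding a global constant $c$ commutes with every $\Fh^{\bb\aa}$, hence preserves $\CU(X)$ while sending $U_\vide \mapsto U_\vide + c$; each $\R$-orbit thus meets the slice $\{U_\vide = \lambda\}$ exactly once, identifying $\CU(X)/\R$ with that slice and therefore with $\Gammint(X)$.

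The computations are routine once the core equivalence is in place; the main obstacle, and the point deserving genuine care, is the bookkeeping of the two normalisation conventions — showing that \emph{projective} consistency exactly matches the region-wise freedom $\R_0(X)$, whereas \emph{genuine} consistency uses the relation $U_\vide = \Fh^\aa(U_\aa)$ to pin down every region-wise constant relative to $U_\vide$, leaving only the single global degree of freedom $\R$.
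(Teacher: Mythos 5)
Your proposal is correct and follows essentially the same route as the paper's own proof: both items rest on the marginalisation identity $\Sigma^{\bb\aa}[\e^{-U_\aa}] = [\e^{-\Fh^{\bb\aa}(U_\aa)}]$, giving that consistency of $[\e^{-U}]$ is equivalent to $\DF(U) \in \R_1(X)$, with the fibre of $U \mapsto [\e^{-U}]$ being $\R_0(X)$ for item (i) and the constraint $U_\vide = \Fh^\aa(U_\aa) = \lambda$ pinning the normalisation for item (ii). Your treatment is somewhat more explicit than the paper's (notably the verification that each $\R$-orbit in $\CU(X)$ meets the slice $\{U_\vide = \lambda\}$ exactly once), but the ideas are identical.
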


\begin{proof} 
    (i) Letting $p = [\e^{-U}]$, for all $\aa \cont \bb$ 
    one has $p_\bb = \Sigma^{\bb\aa}(p_\aa)$ 
    if and only if $U_\bb = \Fh^{\bb\aa}(U_\aa) \mod \R$. 
    As $[\e^{-U}] = [\e^{-U'}]$ if and only if 
    $U' \in U + \R_0(X)$, the quotient $\CU'(X) / \R_0(X)$ 
    is diffeomorphic to $\Gammint(X)$. \\[0.4em]
    (ii)
    $\DF(U) = 0$ implies consistency and 
    $\Sigma^{\vide \aa}(\e^{-U_\aa}) = \e^{-U_\vide} = \e^{- \lambda}$ 
    holds a single normalisation factor. Reciprocally 
    when $p \in \Gammint(X)$, letting $U = - \ln(p) + \lambda$
    we have $\DF(U) = 0$ with $U_\vide = -\ln(1) + \lambda = \lambda$. 
\end{proof}

The manifold of consistent local hamiltonians $\CU(X)$ 
has a natural riemannian structure. 
Given any consistent positive 
$p \in \Gammint(X)$, consider the inner product on $A_0(X)$ defined by:
\begin{equation} \label{metric-p} 
\croc{f}{g}_p = \sum_{\aa \in X} \E_{p_\aa}[ f_\aa \cdot g_\aa ] 
\end{equation}
As $\CU'(X)$ is mapped onto $\Gammint(X)$,
a metric of the form (\ref{metric-p}) is 
naturally associated to any $U \in \CU'(X)$. 
The tangent space $T_U \CU(X)$ 
is moreover described by the cohomology 
of a differential $\nabla$, 
adjoint of $\div$ for the metric induced by $[\e^{-U}]$, 
a characterisation which will be particularly useful in section 6.3.


\begin{defn} \label{nabla}
    We denote\footnotemark{} 
    by $\nabla = \DF_*$ the linearised effective energy gradient: 
    \begin{equation} \nabla(H)_{\aa\bb} = H_\bb - \E^{\bb\aa}[H_\aa] \end{equation}
    viewed as the smooth map $\nabla : \CU'(X) \aw \Hom\big(A_0(X), A_1(X)\big)$, 
    letting $\E^{\bb\aa} = d \Fh^{\bb\aa}$ as per \ref{Eba}. 
    \footnotetext{
        To avoid burdening notations, we will often leave the dependency 
        of $\nabla$ in $p \in \Gammint(X)$ or $U \in \CU'(X)$ implicit. 
    }
\end{defn}

With this notation, we have $T \CU(X) = \Ker(\nabla)$ 
as a subbundle of $TA_0(X)$ restricted above $\CU(X)$.
As the following suggests, 
$\nabla$ naturally extends to a differential 
acting on all degrees of $A_\bullet(X)$. 

\begin{prop} 
    Given $p = [\e^{-H}] \in \Gamma(X)$, 
    we have $\nabla = \div^*$ for the metric induced by $p$. 
\end{prop}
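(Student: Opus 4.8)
The plan is to verify directly that $\nabla$ fulfils the defining property of the adjoint $\div^* : A_0(X) \aw A_1(X)$: since an adjoint is unique, it suffices to check that
\[ \Croc{\nabla f}{\ph}_p = \Croc{f}{\div \ph}_p \]
for every $f \in A_0(X)$ and $\ph \in A_1(X)$, where the left bracket is the $p$-metric on currents and the right one the $p$-metric on potentials. This is exactly the adjunction recorded among the properties of the \emph{Fields} section, so the argument amounts to unfolding both sides and matching terms, the one nontrivial input being the \emph{local} adjunction between conditional expectation and extension.

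First I would expand the left-hand side. By the definition of the metric on $A_1(X)$ and of $\nabla$,
\[ \Croc{\nabla f}{\ph}_p = \sum_{\aa \cont \bb} \E_{p_\bb}\big[\, (f_\bb - \E^{\bb\aa}[f_\aa]) \: \ph_{\aa\bb} \,\big]. \]
The $f_\bb$ part contributes $\sum_{\aa \cont \bb} \E_{p_\bb}[f_\bb \: \ph_{\aa\bb}]$. For the remaining part the hypothesis $p \in \Gamma(X)$ is essential: consistency gives $p_\bb = \Sigma^{\bb\aa}(p_\aa)$, so the local adjunction $\Croc{\E^{\bb\aa}(f_\aa)}{g_\bb}_{\Sigma^{\bb\aa}(p_\aa)} = \Croc{f_\aa}{j_{\aa\bb}(g_\bb)}_{p_\aa}$ applies with $g_\bb = \ph_{\aa\bb}$ and rewrites $\E_{p_\bb}[\E^{\bb\aa}(f_\aa)\:\ph_{\aa\bb}]$ as $\E_{p_\aa}[f_\aa \: \ph_{\aa\bb}]$, the inclusion $j_{\aa\bb}$ being left implicit.

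Next I would expand the right-hand side using the formula for $\div$, splitting $\div(\ph)_\bb$ into its incoming part $\sum_{\aa' \cont \bb} \ph_{\aa'\bb}$ and outgoing part $\sum_{\bb \cont \cc'} \ph_{\bb\cc'}$:
\[ \Croc{f}{\div \ph}_p = \sum_{\bb}\sum_{\aa' \cont \bb} \E_{p_\bb}[f_\bb \: \ph_{\aa'\bb}] \; - \; \sum_{\bb}\sum_{\bb \cont \cc'} \E_{p_\bb}[f_\bb \: \ph_{\bb\cc'}]. \]
Reindexing the ordered pair of the first double sum as $\aa \cont \bb$ recovers precisely the $f_\bb$-term found above. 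Relabelling the ordered pair $(\bb,\cc')$ of the second double sum as $(\aa,\bb)$ turns it into $\sum_{\aa \cont \bb} \E_{p_\aa}[f_\aa \: \ph_{\aa\bb}]$, which is exactly the conditional-expectation term produced by the local adjunction. The two expansions therefore coincide, establishing $\nabla = \div^*$.

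The computation is short; the only real work is bookkeeping. The main conceptual point — the reason $p \in \Gamma(X)$ is a genuine hypothesis rather than a decoration — is that the local adjunction between $\E^{\bb\aa}$ and $j_{\aa\bb}$ holds only against the pushed-forward measure $\Sigma^{\bb\aa}(p_\aa)$, so one must know this equals $p_\bb$, i.e. that $p$ is consistent. The secondary subtlety I would be careful about is the systematic suppression of the extension maps $j_{\aa\bb}$ identifying $\R^{E_\bb} \incl \R^{E_\aa}$: these must be tracked so that each expectation is taken against the correct local measure, namely $p_\bb$ on the incoming side and $p_\aa$ on the outgoing side.
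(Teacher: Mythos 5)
Your proof is correct and follows the same route as the paper: both rest on the local adjunction of $\E^{\bb\aa}$ with $j_{\aa\bb}$ for the metrics induced by $p_\aa$ and $p_\bb = \Sigma^{\bb\aa}(p_\aa)$, the paper simply summarising as ``standard computations'' the term-matching you carry out explicitly. Your remark that consistency of $p$ is exactly what makes the local adjunction applicable is the right observation and matches the paper's intent.
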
 

\begin{proof} 
    This is the consequence of the fact 
    that the conditional expectation $\E^{\bb\aa} : A_\aa \aw A_\bb$  
    is adjoint to the canonical extension $j_{\aa\bb} : A_\bb \aw A_\aa$  
    for the metrics induced by $p_\aa$ and $p_\bb$: 
    \begin{equation} \sum_{E_\bb} g_\bb \cdot p_\bb \cdot \E^{\bb\aa}[f_\aa] 
    = \sum_{E_\aa} j_{\aa\bb}(g_\bb) \cdot p_\aa \cdot f_\aa \end{equation}
    Standard computations\footnote{
        The proof of $\delta = d^*$ is then perfectly analogous 
        to the case of scalar coefficients, 
        see Kodaira \cite{Kodaira-49} for instance. 
    } then show 
    that $\croc{\nabla f}{\ph}_p = \croc{f}{\div \ph}_p$ for all 
    $f \in A_0(X)$ and $\ph \in A_1(X)$.
\end{proof}

Acting on local hamiltonians, the effective energy gradient $\DF$ 
will be generally precomposed by $\zeta$. 
Writing the hamiltonian $H = \zeta \cdot h$ as a sum of local interactions, 
one may think of $\DF(H)_{\aa\bb}$ as the effective contribution 
of $\LL^\aa \setminus \LL^\bb$ to the energy of $\LL^\bb$:
\begin{equation} \DF(\zeta \cdot h)_{\aa\bb} = 
\Fh^{\bb\aa}\bigg( 
\sum_{\bb' \in \LL^\aa \setminus \LL^\bb} h_{\bb'} 
\bigg) 
\end{equation}
To complete the picture on consistency, 
let us finally parametrise 
$\Gammint(X)$ by interaction potentials. 

\begin{defn} \label{Z} 
We denote by: 
\bi 
\iii 
    $\Z(X) = \mu \cdot \CU(X)$ 
the manifold of {\rm consistent interaction potentials}, 
\iii 
    $\Z'(X) = \mu \cdot \CU'(X)$ 
the manifold of 
{\rm projectively consistent interaction potentials}. 
\ei
\end{defn}

\begin{prop} \label{consistent-u}
Consistent probability densities may be parametrised by interaction potentials
via:
\bei
\item 
$\Z'(X)$ is the inverse image of $\Gammint(X)$ 
under $u \mapsto [\e^{-\zeta \cdot u}]$ and:
\begin{equation} \Gammint(X) \simeq \Z'(X) / \R_0(X) \end{equation} 
\item 
$\{ u \in \Z(X) \st u_\vide = \lambda \}$ 
    is diffeomorphic to $\Gammint(X)$ under $u \mapsto \e^{\lambda -\zeta \cdot u}$ 
for all $\lambda \in \R$ 
and: 
    \begin{equation} \Gammint(X) \simeq \Z(X) / \R \end{equation}
\ee
\end{prop}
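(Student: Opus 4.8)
The plan is to deduce both items directly from the already-established proposition \ref{consistent-U} by transporting it through the linear automorphism $\zeta$. First I would record that $\zeta : A_0(X) \aw A_0(X)$ is a linear isomorphism with inverse $\mu$ (theorem \ref{moebius}, or the degree-zero inversion \ref{moebius-0}); being linear and bijective on a finite-dimensional space, it is a diffeomorphism carrying submanifolds to submanifolds. By the very definitions $\Z(X) = \mu \cdot \CU(X)$ and $\Z'(X) = \mu \cdot \CU'(X)$ of \ref{Z}, the map $\zeta$ then restricts to diffeomorphisms $\Z(X) \aw \CU(X)$ and $\Z'(X) \aw \CU'(X)$; that is, $u \in \Z(X)$ (resp. $\Z'(X)$) if and only if $\zeta \cdot u \in \CU(X)$ (resp. $\CU'(X)$). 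Moreover the two parametrising maps of the statement factor through $\zeta$: writing $U = \zeta \cdot u$, one has $[\e^{-\zeta \cdot u}] = [\e^{-U}]$ and $\e^{\lambda - \zeta \cdot u} = \e^{\lambda - U}$, so they are exactly the maps of \ref{consistent-U} precomposed with $\zeta$.

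For item (i), since $\zeta$ is a bijection the inverse image of $\Gammint(X)$ under $u \mapsto [\e^{-\zeta \cdot u}]$ equals $\zeta^{-1}$ applied to the inverse image of $\Gammint(X)$ under $U \mapsto [\e^{-U}]$, which is $\mu \cdot \CU'(X) = \Z'(X)$ by \ref{consistent-U}(i). To pass to the quotient, the key verification is that $\zeta$ preserves $\R_0(X)$: if $h \in \R_0(X)$ has constant components then $\zeta(h)_\aa = \sum_{\aa \cont \bb} h_\bb$ is again constant, and the same holds for $\mu$, so $\zeta \cdot \R_0(X) = \R_0(X)$. As $\CU'(X)$ is itself $\R_0(X)$-invariant (since $\DF(U + c)_{\aa\bb} = \DF(U)_{\aa\bb} + (c_\bb - c_\aa) \in \R_1(X)$), so is $\Z'(X)$, and $\zeta$ descends to an isomorphism $\Z'(X)/\R_0(X) \aw \CU'(X)/\R_0(X)$. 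Composing with \ref{consistent-U}(i) yields $\Gammint(X) \simeq \Z'(X)/\R_0(X)$.

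For item (ii), the point to check is that the normalisation component is untouched by $\zeta$: since $\vide \cont \bb$ forces $\bb = \vide$, one has $(\zeta \cdot u)_\vide = u_\vide$, hence $U_\vide = u_\vide$. Thus $\zeta$ restricts, for each $\lambda$, to a diffeomorphism $\{ u \in \Z(X) \st u_\vide = \lambda \} \aw \{ U \in \CU(X) \st U_\vide = \lambda \}$, and precomposing the diffeomorphism $U \mapsto \e^{\lambda - U}$ of \ref{consistent-U}(ii) gives the claimed diffeomorphism $u \mapsto \e^{\lambda - \zeta \cdot u}$ onto $\Gammint(X)$. Finally these level sets foliate $\Z(X)$, the transverse $\R$-action being the image under $\mu$ of the uniform-shift action $U \mapsto U + t\,\mathbf{1}$ on $\CU(X)$ — which preserves consistency by additivity of $\Fh^{\bb\aa}$ along constants and shifts $U_\vide$ by $t$ — so that transporting it gives $u \mapsto u + t\,\mu \cdot \mathbf{1}$ with $\mu \cdot \mathbf{1} \in \R_0(X)$, whence $\Gammint(X) \simeq \Z(X)/\R$. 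The only non-formal content, and thus the step to handle with care, is exactly these compatibility checks for $\zeta$ on the distinguished subspaces ($\zeta \cdot \R_0(X) = \R_0(X)$, $(\zeta \cdot u)_\vide = u_\vide$, and the correct transport of the constant-shift line); everything else is a mechanical transport of structure through $\zeta = \mu^{-1}$.
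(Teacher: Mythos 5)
Your proposal is correct and takes exactly the route of the paper's own (one-line) proof: transport proposition \ref{consistent-U} through the isomorphism $\zeta = \mu^{-1}$, the only substantive checks being $\zeta \cdot \R_0(X) = \R_0(X)$ and $(\zeta \cdot u)_\vide = u_\vide$. You have simply spelled out the routine verifications the paper leaves implicit.
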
 

\begin{proof}
    This follows from \ref{consistent-U} as 
    $\zeta \cdot \R_0(X) = \R_0(X)$, 
    while for $u = \mu \cdot U$ one has $u_\vide = U_\vide$.
\end{proof}

The origin $\bar 0 \in \Z(X)$, 
associated to the uniform state $p = [1]$,
describes the high temperature limit where all variables are independent. 
The following characterisation 
of $T_{\bar 0}\Z(X)$ should be seen as a consequence of this 
independency\footnote{
    The conditional expectation $\E^{\bb\aa}$ 
    is always adjoint to the inclusion $j_{\aa\bb}$
    for the metric induced by the probability density, 
    as orthogonal projection of $A_\aa$ onto $A_\bb$. 
    The effect of interactions between $\bb$ and $\bb'$ is however
    that $A_\bb$ is no longer orthogonal to $A_{\bb'}$ in $A_\aa$. 
    Although one could define interaction subspaces as 
    $Z_\aa = \cap_{\bb \inclst \aa} \Ker(\E^{\bb\aa})$ 
    they would hence not satisfy $\E^{\bb\aa}(Z_{\bb'}) = \{ 0 \}$. 
}.

\begin{prop} \label{T_0 Z}
Letting $\bar 0 = \mu \cdot \ln |E| \in \Z(X)$, one has: 
    \begin{equation} T_{\bar 0} \Z(X) = Z_0(X) \end{equation} 
where $Z_0(X) \incl A_0(X)$ is the image 
of the canonical interaction decomposition defined in 2.3.2. 
\end{prop}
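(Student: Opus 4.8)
The plan is to transport the problem through the linear isomorphism $\mu = \zeta^{-1}$. Since $\Z(X) = \mu \cdot \CU(X)$ and $\bar 0 = \mu \cdot \ln|E|$, with $\mu$ a global linear automorphism of $A_0(X)$, we immediately get $T_{\bar 0}\Z(X) = \mu \cdot T_{\ln|E|}\CU(X)$. First I would record that $\ln|E|$ is genuinely a point of $\CU(X)$, since a direct computation gives $\Fh^{\bb\aa}(\ln|E_\aa|) = \ln|E_\bb|$, hence $\DF(\ln|E|) = 0$, and that it corresponds to the uniform state $p = [\e^{-\ln|E|}] = [1]$. Using the cited identification $T\CU(X) = \Ker(\nabla)$ fiberwise, we have $T_{\ln|E|}\CU(X) = \Ker(\nabla)$ where $\nabla$ is linearized at $p = [1]$, i.e. $\E^{\bb\aa}$ is the conditional expectation for the uniform measure. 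Thus the whole statement reduces to proving $\Ker(\nabla) = \zeta \cdot Z_0(X)$, from which $T_{\bar 0}\Z(X) = \mu\,\zeta \cdot Z_0(X) = Z_0(X)$ follows.

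\textbf{The inclusion $\zeta \cdot Z_0(X) \incl \Ker(\nabla)$.} At the uniform measure, $\E^{\bb\aa}$ is the orthogonal projection of $A_\aa$ onto $A_\bb$ for the canonical scalar product. I would take $u \in Z_0(X)$, set $H = \zeta \cdot u$ so that $H_\aa = \sum_{\aa \cont \cc} u_\cc$ with each $u_\cc \in Z_\cc$, and compute
\[ \nabla(H)_{\aa\bb} = H_\bb - \E^{\bb\aa}(H_\aa) = \sum_{\bb \cont \cc} u_\cc - \sum_{\aa \cont \cc} \E^{\bb\aa}(u_\cc). \]
The key sub-fact is that, restricted to the subalgebra $A_\cc \incl A_\aa$, the projection $\E^{\bb\aa}$ coincides with the uniform conditional expectation $A_\cc \aw A_{\bb \cap \cc}$, since $A_\bb \cap A_\cc = A_{\bb \cap \cc}$ and averaging a function of $x_\cc$ over the fibers of $x_\bb$ leaves only dependence on $x_{\bb \cap \cc}$. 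This map is a nonzero scalar multiple of $\Sigma^{(\bb\cap\cc)\,\cc}$. When $\bb \cont \cc$ it fixes $u_\cc$; when $\bb \not\cont \cc$ we have $\bb \cap \cc \inclst \cc$, so by the canonical characterization $Z_\cc = \bigcap_{\bb' \inclst \cc} \Ker(\Sigma^{\bb'\cc})$ we get $\E^{\bb\aa}(u_\cc) = 0$. Hence $\sum_{\aa \cont \cc}\E^{\bb\aa}(u_\cc) = \sum_{\bb \cont \cc} u_\cc = H_\bb$, giving $\nabla(H)_{\aa\bb} = 0$.

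\textbf{Equality by dimension.} For the reverse inclusion I would argue by dimension rather than chase elements. At $p = [1]$ the metric induced by the uniform state on $A_0(X)$ is nondegenerate, and by the cited proposition $\nabla = \div^*$ for this metric; therefore $\Ker(\nabla) = (\Img \div)^\perp$ and $\dim \Ker(\nabla) = \dim A_0(X) - \dim \Img(\div) = \dim\big(A_0(X)/\div A_1(X)\big)$. By theorem \ref{H0-P} the latter quotient is isomorphic to $Z_0(X)$, so $\dim \Ker(\nabla) = \dim Z_0(X)$. Since $\zeta$ is injective, $\dim(\zeta \cdot Z_0(X)) = \dim Z_0(X)$, and combined with the inclusion of the previous paragraph this forces $\Ker(\nabla) = \zeta \cdot Z_0(X)$, completing the proof.

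\textbf{Expected obstacle.} The delicate point, and the one flagged by the erratum preceding theorem \ref{Ker-d}, is the correct bookkeeping of volumic factors: one must use $\nabla$ defined with conditional expectations for the uniform measure (not the bare partial integrations $\Sigma^{\bb\aa}$ of $d$), so that $\E^{\bb\aa}$ is genuinely the orthogonal projection onto $A_\bb$ and hence annihilates the canonical interaction subspaces $Z_\cc$ with $\bb \cap \cc \inclst \cc$. Verifying that $\E^{\bb\aa}|_{A_\cc}$ really is the projection onto $A_{\bb\cap\cc}$, and that this is the precise normalization under which $Z_\cc \incl \Ker$, is where the argument must be handled carefully.
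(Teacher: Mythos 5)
Your proposal is correct and follows essentially the same route as the paper: reduce via $\mu$ to showing $\Ker(\nabla)=\zeta\cdot Z_0(X)$ at the uniform state, prove the inclusion by the telescoping computation using the fact that uniform conditional expectations annihilate the canonical interaction subspaces $Z_\cc$ with $\bb\cap\cc\inclst\cc$ (this is exactly the "volumic factors" point the paper makes when invoking the combinatorial argument of theorem \ref{Ker-d}), and conclude by the dimension count from theorem \ref{H0-P}. Your "expected obstacle" paragraph correctly identifies the one subtlety the paper's own proof flags.
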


\begin{proof}
This is equivalent to $\Ker(d) = \zeta' \cdot Z_0(X)$ as proved in \ref{Ker-d}. 
Conditional expectations $\E^{\bb\aa}$ w.r.t. uniform probabilities are proportional 
to the partial integrations $\Sigma^{\bb\aa}$ w.r.t. counting measures 
as $\E^{\bb\aa} = \frac 1 {|E_{\aa \setminus \bb}|} \Sigma^{\bb\aa}$. 
The volumic factors are what we need 
for $j_{\aa\bb}$ to be a {\it section} of $\E^{\bb\aa}$, 
meaning $\E^{\bb\aa} \circ j_{\aa\bb} = {\rm id}_{A_\bb}$, 
and the combinatorial argument of \ref{Ker-d} to hold.
\end{proof}

\subsection{Diffusions and Correspondence Theorems}

Given a smooth {\it flux functional} $\Phi : A_0(X) \aw A_1(X)$, 
we consider the transport equation:  
\begin{equation} \label{tspt}
\dot u = \div \Phi(u) 
\end{equation}
Let us mention two main differences of the present
approach with usual message-passing algorithms:
\be 
\item (\ref{tspt}) is an ordinary differential equation, and 
    $\dot u = \frac{du}{dt}$ represents the derivative of $u: \R \aw A_0(X)$
    with respect to a continuous time variable\footnotemark{}. 
\item (\ref{tspt}) is a dynamic on the vector space of interaction potentials.  
A dynamic is induced on the multiplicative group  of positive beliefs 
by letting $q = [\e^{- \zeta \cdot u}]$ but we do not take the usual point 
of view of a dynamic over messages.
\ee 
\footnotetext{ 
    The possibility to differentiate w.r.t. continuous time 
    was probably occluded by the common preference for the multiplicative point of 
    view of <<beliefs>>. 
    Ironically, there is absolutely no originality in 
    the additive point of view of <<energies>> as Gallagher's electronic apparatus 
    was already logarithmic -- additions are simpler than multiplications for 
    humans and machines all the same. 
}
Difference 1 consists in a significant improvement for the stability of 
such algorithms, belief propagation being recovered through the 
unreasonably coarse finite difference approximation $\dot u(t) \simeq u(t+1) - u(t)$. 
Difference 2 makes the homological character of message-passing more apparent. 
Note that a dynamic on $\ph \in A_1(X)$ may be recovered as 
$\dot \ph = \Phi(h + \div \ph)$ for some 
initial interaction potentials $h \in A_0(X)$. 

In this paragraph, we show the announced correspondence 
between stationary points of belief propagation 
and critical points of Bethe free energy. 
Although this correspondence is made more coherent by 
viewing belief propagation as a dynamic over beliefs rather than on messages,
this approach required us to prove that stationarity 
of beliefs implied stationarity of messages. 
This will come as a
consequence of the {\it faithfulness} of the flux functional 
$\Phi = - \DF \circ \zeta$, given by proposition \ref{faithful}.

More generally, the following definitions help characterise the
class of flux functionals admissible for (\ref{tspt}) 
to seek critical points of a Bethe free energy. 
Morally, consistency of $\Phi$ will imply that critical points 
are stationary under $\dot u = \div \Phi(u)$, while the 
stronger faithfulness property is necessary for the converse to hold.
The existence and design of 
an {\it optimal} flux functional satisfying these 
properties is a natural question then to be raised.

\begin{defn} 
    A smooth flux functional $\Phi : A_0(X) \aw A_1(X)$ will be said:
    \bi
    \iii {\rm consistent} if $u \in \Z(X) \impq \Phi(u) = 0$
    \iii {\rm faithful} if $u \in \Z(X) \eqvl \div \Phi(u) = 0$
    \iii {\rm projectively faithful} if $u \in \Z'(X) \eqvl \div \Phi(u) \in \R_0(X)$
    \iii {\rm locally faithful} if the restriction of $\Phi$ to a neighbourhood 
    of $\Z(X)$ is faithful. 
    \ei
\end{defn}

Suppose now given 
reference interaction potentials $h \in A_0(X)$ such that $h_\vide = 0$, 
and let $H = \zeta \cdot h$. 
The correspondence is essentially a rephrasing of 
theorem \ref{cvm2}, characterising critical points of $\Fb(\,-\,, H)$, 
all difficulties being kept hidden behind the faithfulness assumption. 
To account for normalisation\footnotemark{}, 
we still denote by $\div'$ the truncation of $\div$ to $X \setminus \{ \vide \}$
as defined in \ref{div'}. 
\footnotetext{
    Although belief propagation has to normalise each step, 
    scaling factors being otherwise unstable on graphs with loops, 
    first experiments suggest that this normalisation procedure might be
    superfluous for finer integrators of (\ref{tspt}). 
}

\begin{thm} \label{correspondence} 
    Assume $\Phi$ is faithful. 
    Then for all $u \in A_0(X)$, the following are equivalent:
    \bei
    \item There exists $\ph$ such that 
    $u = h + \div' \ph$ is stationary for $\dot u = \div' \Phi(u)$. 
    \item The beliefs $\e^{- \zeta \cdot u}$ are critical for $\Fb(\,-\,, H)$
    constrained to $\Gammint(X)$. 
    \ee
\end{thm}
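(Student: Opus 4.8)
The plan is to reduce Theorem \ref{correspondence} to the already-proven characterisation \ref{cvm2}, with the faithfulness hypothesis bridging the gap between the stationarity of the diffusion $\dot u = \div' \Phi(u)$ and the homological form of the Lagrange multipliers. The key observation is that \ref{cvm2} characterises critical points of $\Fb(\,-\,,H)_{|\Gamma(X)}$ as exactly those $p = [\e^{-\zeta \cdot u}]$ for which $u = h + \zeta \cdot \div' \ph$ for some $\ph \in A_1(X)$; but the stationarity condition in (i) asserts $u = h + \div' \ph$ with $\div' \Phi(u) = 0$, which at first glance differs by the presence or absence of a $\zeta$. The resolution is that the dynamic $\dot u = \div' \Phi(u)$ lives on interaction potentials, while its image under $\zeta$ governs effective hamiltonians $U = \zeta \cdot u$; one should track carefully which representation each condition lives in, using $\zeta \cdot \Img(\div') = \Img(\div'^\zeta)$ notation from chapter 3.

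First I would unwind the equivalence (ii) $\Leftrightarrow$ critical point via \ref{cvm2}: assuming $h_\vide = 0$ and $H = \zeta \cdot h$, the beliefs $q = [\e^{-\zeta \cdot u}]$ are critical for $\Fb(\,-\,,H)$ constrained to $\Gammint(X)$ iff $q \in \Gammint(X)$ and $-\ln(q) = H + \zeta \cdot \div' \psi$ for some $\psi \in A_1(X)$. Since $-\ln(q) = \zeta \cdot u$ up to the normalisation already absorbed by $\div'$, this reads $\zeta \cdot u = \zeta \cdot h + \zeta \cdot \div' \psi$, i.e. $u = h + \div' \psi$ as $\zeta$ is invertible by Theorem \ref{moebius}. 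So the homological form of critical points is literally $u \in h + \Img(\div')$ together with $q \in \Gammint(X)$, the latter being equivalent to $u \in \Z'(X)$ by \ref{consistent-u}(i) modulo $\R_0(X)$ — and the $\div'$ truncation precisely handles the normalisation freedom.

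Next I would bring in faithfulness to establish (i) $\Leftrightarrow$ (ii). For (i) $\Rightarrow$ (ii): if $u = h + \div' \ph$ is stationary, then $\div' \Phi(u) = 0$; faithfulness ($u \in \Z(X) \Leftrightarrow \div \Phi(u) = 0$), in its projective form \ref{consistent-u}, gives $u \in \Z'(X)$, hence $q = [\e^{-\zeta \cdot u}] \in \Gammint(X)$, and since $u \in h + \Img(\div')$ we recover exactly the critical-point characterisation above. For (ii) $\Rightarrow$ (i): a critical $q$ gives $u = h + \div' \psi \in \Z'(X)$; projective faithfulness then forces $\div \Phi(u) \in \R_0(X)$, and truncating to $\div'$ yields $\div' \Phi(u) = 0$, so $u$ is stationary. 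The care needed here is that faithfulness is stated for $\Z(X)$ and $\div$, whereas the theorem uses the projective/truncated versions; I would either invoke projective faithfulness directly or check that the $\R_0(X)$ discrepancy is annihilated by the truncation $\div'$, since $\div' \lambda = 0$ on constants is not automatic and needs the normalisation bookkeeping.

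The main obstacle I anticipate is precisely this normalisation bookkeeping: reconciling the additive constants $\R_0(X)$ that float freely in the belief representation $q = [\e^{-U}]$ with the truncation $\div'$ that is designed to fix a single normalisation factor per region. The faithfulness definition quantifies over $\Z(X)$ versus $\Z'(X)$, and one must verify that the flux $\Phi = -\DF \circ \zeta$ actually satisfies the projective faithfulness needed — which is the content of proposition \ref{faithful} referenced in the text. I would therefore isolate the clean logical skeleton (reduce to \ref{cvm2} plus invertibility of $\zeta$) and defer the delicate constant-tracking to an explicit lemma showing that $\div' \Phi(u) = 0 \Leftrightarrow u \in \Z'(X)/\R_0(X)$ for the specific faithful $\Phi$ in play, so that the theorem reads as a short corollary of \ref{cvm2} rather than a standalone computation.
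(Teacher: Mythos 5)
Your skeleton is the paper's: reduce to theorem \ref{cvm2} and let faithfulness translate between stationarity and the Lagrange-multiplier form of the critical points. The reduction itself (cancel the $\zeta$ by invertibility, identify $u \in h + \Img(\div')$ with the multiplier condition) is correct. But there is one genuine gap, and your proposed patch for it changes the theorem. The missing step is the implication $\div' \Phi(u) = 0 \impq \div \Phi(u) = 0$, without which faithfulness --- which is stated for the full boundary $\div$ --- cannot be invoked from the truncated stationarity condition of (i). The paper closes this with a one-line lemma: the global Gauss formula gives $\sum_{\bb} \div_\bb \ph = 0$ for any $\ph \in A_1(X)$, so if $\div_\aa \ph = 0$ for all $\aa \neq \vide$ then the remaining component $\div_\vide \ph$ vanishes as well. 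Nothing about additive constants or $\R_0(X)$ is needed; it is a conservation law.

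Your workaround --- invoking projective faithfulness and arguing modulo $\R_0(X)$, with the deferred lemma $\div'\Phi(u) = 0 \eqvl u \in {\cal Z}'(X)/\R_0(X)$ --- is not available here: the hypothesis of \ref{correspondence} is plain faithfulness, not projective faithfulness, and the latter is not a consequence of the former (the paper proves it separately, and only for the modified flux $\Phi'$, via a Hodge-decomposition argument). That stronger route is exactly theorem \ref{proj-correspondence}, stated as a distinct result. Moreover it would only deliver the normalised beliefs $[\e^{-\zeta\cdot u}]$, whereas \ref{correspondence} asserts criticality of $\e^{-\zeta\cdot u}$ with no normalisation bracket: the point is that $h_\vide = 0$ and $\div'_\vide \ph = 0$ force $U_\vide = 0$, so that once $\DF(U) = 0$ is obtained from plain faithfulness, proposition \ref{consistent-U} makes $\e^{-U}$ exactly normalised and consistent. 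Keep your reduction to \ref{cvm2}, insert the conservation lemma to pass from $\div'$ to $\div$, and track $U_\vide$ instead of quotienting by $\R_0(X)$.
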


\begin{lemma} 
    For all $\ph \in A_1(X)$ if $\div' \ph = 0$ then $\div \ph = 0$. 
\end{lemma}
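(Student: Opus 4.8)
The plan is to reduce the statement to the single fact that a boundary has vanishing total energy, namely $\sum_{\aa \in X}(\div\ph)_\aa = 0$. First I would dispose of the trivial case: if $\vide \notin X$ there is no empty component to truncate, so $\div'$ and $\div$ coincide and there is nothing to prove. Hence I assume $\vide \in X$, in which case the only discrepancy between $\div'\ph$ and $\div\ph$ is the component indexed by $\vide$. Thus the hypothesis $\div'\ph = 0$ says precisely that $(\div\ph)_\aa = 0$ for every nonempty $\aa \in X$, and the whole content of the lemma is to upgrade this to $(\div\ph)_\vide = 0$ as well.

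The key identity $\sum_{\aa\in X}(\div\ph)_\aa = 0$ in $A_\Om$ can be obtained in one line from the definition $\div(\ph)_\bb = \sum_{\aa'\cont\bb}\ph_{\aa'\bb} - \sum_{\bb\cont\cc'}\ph_{\bb\cc'}$: summing over all $\bb\in X$ and regarding each coefficient $\ph_{\aa'\bb'}\in A_{\bb'}$ through its inclusion $j_{\Om\bb'}$ into $A_\Om$, every strict pair $\aa'\contst\bb'$ of $N_1(X)$ contributes $+j_{\Om\bb'}(\ph_{\aa'\bb'})$ once, as the incoming flux in $(\div\ph)_{\bb'}$, and $-j_{\Om\bb'}(\ph_{\aa'\bb'})$ once, as the outgoing flux in $(\div\ph)_{\aa'}$, so the grand total telescopes to $0$. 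Alternatively one may simply invoke Corollary \ref{H0}: since $\div\ph$ is homologous to $0$, its total energy agrees with that of $0$, giving $\sum_\aa(\div\ph)_\aa = 0$. Either route avoids any appeal to the finer acyclicity results.

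With this in hand the conclusion is immediate. Under $\div'\ph=0$ every summand with $\aa\neq\vide$ drops out, leaving $j_{\Om\vide}\big((\div\ph)_\vide\big)=0$ in $A_\Om$; since $A_\vide\simeq\R$ includes into $A_\Om$ injectively as the constant functions, this forces $(\div\ph)_\vide=0$, and therefore $\div\ph=0$. I do not anticipate a genuine obstacle, as the lemma is essentially bookkeeping: the only points requiring care are keeping the case $\vide\notin X$ separate, and being explicit that the vanishing of the total sum in $A_\Om$ forces the scalar $(\div\ph)_\vide\in A_\vide$ to vanish. The one mild trap is conflating $(\div\ph)_\vide$ with its image in $A_\Om$, which writing $j_{\Om\vide}$ explicitly avoids.
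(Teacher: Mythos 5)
Your proposal is correct and follows essentially the same route as the paper, which also deduces the vanishing of $\div_\vide \ph$ from the global Gauss identity $\sum_{\bb} \div_\bb \ph = 0$ (a consequence of proposition \ref{gauss}). Your added care with the case $\vide \notin X$ and with the injection $j_{\Om\vide}$ is sound bookkeeping but does not change the argument.
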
 

\begin{proof}
    From the global Gauss formula 
    $\sum_\bb \div_\bb \ph = 0$, if $\div_\aa \ph = 0$ 
    for all $\aa \neq \vide$ then $\div_\vide \ph = 0$, see \ref{gauss}.
\end{proof}

\begin{proof}[Proof of theorem \ref{correspondence}] 
    Recall that $q \in \Gammint(X)$ is critical for
    $\Fb(\,-\,, H)$ constrained to $\Gamma(X)$  
    if and only if there exists Lagrange multipliers $\ph \in A_1(X)$ such that 
    $- \ln(q) = H + \zeta \cdot \div' \ph$ by theorem \ref{cvm2}.
    \bi
    \iii
    Assume $q$ is critical and let $\zeta \cdot u = - \ln(q)$. 
    Then there exists $\ph$ such that $u = h + \div' \ph$ 
    while $\DF(\zeta \cdot u) = 0$ by consistency. 
    We have $\div' \Phi(u) = \div\Phi(u) = 0$ by faithfulness of $\Phi$ and
    $u$ is stationary. 
    \iii
    Reciprocally, assume $u = h + \div' \ph$ is stationary.
    According to the lemma, $\div' \Phi(u) = 0$ implies 
    $\div \Phi(u) = 0$ so that 
    $\DF(\zeta \cdot u) = 0$ by faithfulness of $\Phi$. 
    Letting $U = \zeta \cdot u$, it follows that 
    $\DF(U) = 0$ with $U_\vide = 0$ hence $q = \e^{-U} \in \Gammint(X)$ 
    is consistent and critical for the constrained free energy. 
    \ei
\end{proof} 

\begin{thm} \label{proj-correspondence} 
    Assume $\Phi$ is projectively faithful. 
    For all $u \in A_0(X)$, the following are equivalent:
    \bei
    \item There exists $\ph$ such that 
    $u = h + \div' \ph$ is projectively stationary for $\dot u = \div' \Phi(u)$. 
    \item The beliefs $[\e^{- \zeta \cdot u}]$ are critical for $\Fb(\,-\,, H)$
    constrained to $\Gammint(X)$. 
    \ee 
\end{thm}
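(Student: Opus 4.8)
The plan is to mirror the proof of Theorem~\ref{correspondence} almost verbatim, replacing the faithfulness hypothesis by projective faithfulness and tracking the additive constants that projective statements tolerate. The target characterisation of critical points is exactly that provided by Theorem~\ref{cvm} (rather than its tighter refinement \ref{cvm2}): a consistent positive belief $q \in \Gammint(X)$ is critical for $\Fb(\,-\,,H)_{|\Gamma(X)}$ if and only if there is $\ph \in A_1(X)$ with $-\ln(q) \simeq H + \zeta \cdot \div \ph \mod \R_0(X)$. The slack modulo $\R_0(X)$ is precisely what matches the normalisation bracket $[\e^{-\zeta \cdot u}]$ appearing in statement (ii) and the projective notion of consistency $\DF(U) \in \R_1(X)$.

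First I would set $U = \zeta \cdot u$ and unwind the definitions. For the implication (ii)$\Rightarrow$(i), assume $[\e^{-U}]$ is critical for the constrained Bethe free energy. By Theorem~\ref{cvm} there exists $\ph$ with $U \simeq H + \zeta \cdot \div \ph \mod \R_0(X)$; since $H = \zeta \cdot h$ and $\zeta \cdot \R_0(X) = \R_0(X)$, Möbius inversion (Theorem~\ref{moebius}) gives $u \simeq h + \div \ph \mod \R_0(X)$, and after adjusting $\ph$ on the empty cell exactly as in the proof of \ref{cvm2} one may write $u = h + \div' \ph$. Criticality means $[\e^{-U}] \in \Gammint(X)$, i.e. $U \in \CU'(X)$ is projectively consistent, so $u \in \Z'(X)$ by Definition~\ref{Z}. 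Projective faithfulness of $\Phi$ then yields $\div \Phi(u) \in \R_0(X)$, and since $\div' \Phi(u) \simeq \div \Phi(u) \mod \R_0(X)$ this says $u$ is projectively stationary for $\dot u = \div' \Phi(u)$.

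For the converse (i)$\Rightarrow$(ii), suppose $u = h + \div' \ph$ is projectively stationary, so $\div' \Phi(u) \in \R_0(X)$. The only genuinely new ingredient compared to Theorem~\ref{correspondence} is a projective analog of its Lemma: I would show that $\div' \Phi(u) \in \R_0(X)$ forces $\div \Phi(u) \in \R_0(X)$. This follows from the global Gauss formula $\sum_\bb \div_\bb \ph = 0$ of Proposition~\ref{gauss}: if $\div_\aa \Phi(u)$ is constant for every $\aa \neq \vide$, then $\div_\vide \Phi(u) = - \sum_{\aa \neq \vide} \div_\aa \Phi(u)$ is also constant, so the full divergence lies in $\R_0(X)$. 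Then projective faithfulness of $\Phi$ gives $u \in \Z'(X)$, equivalently $U = \zeta \cdot u \in \CU'(X)$, so $q = [\e^{-U}] \in \Gammint(X)$ by Proposition~\ref{consistent-U}(i). Since $u = h + \div' \ph \simeq h + \div \ph \mod \R_0(X)$, we have $U \simeq H + \zeta \cdot \div \ph \mod \R_0(X)$, and Theorem~\ref{cvm} concludes that $q$ is critical for the constrained Bethe free energy.

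The main obstacle I anticipate is bookkeeping the $\R_0(X)$ ambiguities consistently across the three objects $u$, $U = \zeta \cdot u$, and $q = [\e^{-U}]$, since the normalisation bracket collapses an entire $\R_0(X)$ of hamiltonians to one belief and one must be careful that the faithfulness hypothesis is phrased at the level where the equivalence actually bites. Verifying the projective Gauss lemma is the one step requiring a small argument rather than a direct quotation, but it reduces cleanly to Proposition~\ref{gauss}. Everything else is a routine transcription of the proof of Theorem~\ref{correspondence} with every equality relaxed to congruence modulo $\R_0(X)$ and every appeal to \ref{cvm2} replaced by one to \ref{cvm}.
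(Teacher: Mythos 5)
Your proof is correct and follows essentially the same route as the paper's: the paper cites Theorem~\ref{cvm2} directly where you re-derive the $\div'$ normal form from Theorem~\ref{cvm}, and for the converse it passes silently from $\div'\Phi(u)\in\R_0(X)$ to $\div\Phi(u)\in\R_0(X)$ where you spell out the global Gauss-formula argument. Both of your added explicitations are faithful to the intended argument, so this is the same proof with slightly more careful bookkeeping of the $\R_0(X)$ ambiguities.
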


\begin{proof}
{\color{white} blank}\\[-0.5cm]

    \bi
    \iii If $q = [\e^{- \zeta \cdot u}] \in \Gammint(X)$ is critical 
    for $\Fb(-, H)_{|\Gamma(X)}$, 
    there exists $\ph$ such that $u = h + \div' \ph$ by \ref{cvm2}.
    Consistency of $q$ implies $u \in \Z'(X)$ by \ref{consistent-u} 
    and $\div \Phi(u) \in \R_0(X)$ by projective faithfulness of $\Phi$. 

    \iii Reciprocally, assume $u = h + \div'\ph$ is projectively stationary. 
    As $\div \Phi(u) \in \R_0(X)$ implies $u \in \Z'(X)$ by projective faithfulness, 
    $q = [\e^{- \zeta \cdot u}]$ is consistent and critical for the 
    constrained free energy. 
    \ei
\end{proof} 

Let us now show that the flux functional 
inducing belief propagation 
is faithful. 
The algorithm shall be recovered as a coarse integrator 
of the differential equation:
\begin{equation}
    \ba{ll|l} \label{transport}
    \dot u = \div \ph &\quad\txt{where}\;\;\quad 
    & \ph = - \DF(U) \\[0.4em]
    & & U = \zeta \cdot u 
\ea
\end{equation}
Transport takes place at the level of effective interaction potentials $u$,
an energy flux $\ph$ balancing the effective hamiltonians $U = \zeta \cdot u$ 
until they reach effective consistency. 
The evolution of 
$u$ being restricted to a single homology class, 
let us emphasise that 
total energy is conserved along any integral curve of (\ref{transport}). 
Hence for any reference interaction potentials $h$ homologous to $u(0)$, 
one has: 
\begin{equation} \label{tspt-conservation}
\sum_{\aa} u_\aa(t) = \sum_{\aa} h_\aa 
\end{equation}
This is a direct consequence of \ref{conservation} and 
obviously holds along integral curves of (\ref{tspt}) as well.

\begin{defn} 
We call: 
\bi
    \iii {\rm standard diffusion flux} the functional 
    $\Phi: A_0(X) \aw A_1(X)$ defined by $\Phi = - \DF \circ \zeta$, 
    \iii {\rm standard diffusion} the vector field
    $\Tspt$ on $A_0(X)$ defined by $\Tspt = \div \Phi$. 
\ei
\end{defn}

\begin{prop} \label{faithful} 
    The standard diffusion flux is faithful.
\end{prop}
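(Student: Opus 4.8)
By definition $\Z(X) = \mu \cdot \CU(X)$ with $\mu = \zeta^{-1}$, so $u \in \Z(X)$ is equivalent to $\DF(\zeta \cdot u) = 0$, i.e. to $\Phi(u) = - \DF(\zeta \cdot u) = 0$; in that case $\div \Phi(u) = 0$ trivially. This is just the consistency of $\Phi$, and the whole content of the proposition is the converse implication $\div \Phi(u) = 0 \impq u \in \Z(X)$.

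\textbf{Pairing against the belief.} Write $U = \zeta \cdot u$, $q_\aa = [\e^{-U_\aa}]$ and assume $\div \Phi(u) = 0$, i.e. $\div \DF(U) = 0$. The plan is to contract this identity with the positive measure $q \in \Del_0(X)$ using the adjunction $\croc{dq}{\ph} = \croc{q}{\div \ph}$, which gives $\croc{dq}{\DF(U)} = 0$. For each pair $\aa \cont \bb$ set $r = q_\bb$ and $s = \Sigma^{\bb\aa}(q_\aa)$, both probability measures on $E_\bb$. Then $dq_{\aa\bb} = r - s$, while proposition \ref{eff-marginal} gives $s = [\e^{-\Fh^{\bb\aa}(U_\aa)}]$, so that $\DF(U)_{\aa\bb} = U_\bb - \Fh^{\bb\aa}(U_\aa) = \ln(s/r) + c_{\aa\bb}$ for some additive constant $c_{\aa\bb} \in \R$. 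Since $r$ and $s$ have equal total mass, the constant drops out of the summand and a direct computation yields
\[ \croc{dq_{\aa\bb}}{\DF(U)_{\aa\bb}} = \sum_{x_\bb}(r - s)\,\ln(s/r) = - \mathrm{D}(r \,\|\, s) - \mathrm{D}(s \,\|\, r) \;\leq\; 0, \]
with equality if and only if $r = s$. As the total sum vanishes and every summand is nonpositive, each must vanish, so $q_\bb = \Sigma^{\bb\aa}(q_\aa)$ for all $\aa \cont \bb$: the belief $q$ is consistent, $q \in \Gammint(X)$.

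\textbf{Removing the residual constants.} Consistency gives $\ln(s/r) = 0$, hence $\DF(U)_{\aa\bb} = c_{\aa\bb} = \Fh^\bb(U_\bb) - \Fh^\aa(U_\aa)$; writing $\lambda_\aa = \Fh^\aa(U_\aa) \in \R$ for the local free energies, we get $\DF(U)_{\aa\bb} = \lambda_\bb - \lambda_\aa$. The key observation is that, since the conditional expectation of a constant is that same constant, one has $\E^{\bb\aa}_{q_\aa}(\lambda_\aa) = \lambda_\aa$, so that $\DF(U) = \nabla_q(\lambda)$ for the linearised gradient $\nabla_q$ of definition \ref{nabla} evaluated on $\lambda \in \R_0(X) \incl A_0(X)$. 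Now $q$ is consistent, so the adjunction $\Croc{\nabla_q f}{\ph}_q = \Croc{f}{\div \ph}_q$ applies, and
\[ \Croc{\DF(U)}{\DF(U)}_q = \Croc{\nabla_q \lambda}{\DF(U)}_q = \Croc{\lambda}{\div \DF(U)}_q = 0. \]
Because $q > 0$ the covariance metric $\Croc{-}{-}_q$ is positive definite, whence $\DF(U) = 0$, i.e. $U \in \CU(X)$ and $u = \mu \cdot U \in \Z(X)$, as required.

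\textbf{Main obstacle.} The heart of the argument is the nonpositivity computation of the second paragraph, where the cancellation of the normalisation constant $c_{\aa\bb}$ against $\sum(r-s) = 0$ is what turns the pairing into a genuine sum of Kullback--Leibler divergences. The subtle point then comes in the third paragraph: the divergence-free hypothesis only forces \emph{projective} consistency a priori, the flux $\DF(U)$ surviving as the constant field $\lambda_\bb - \lambda_\aa$; recognising this residue as a $\nabla_q$-gradient and killing it by a second application of the adjunction (legitimate precisely because consistency of $q$ has already been established) is the step that upgrades $\Z'(X)$ to $\Z(X)$ and delivers strict rather than merely projective faithfulness.
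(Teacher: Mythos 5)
Your proof is correct, and its skeleton --- contract $\div \DF(U) = 0$ against an exponential of the hamiltonian via an adjunction of $\div$ with a differential, then invoke monotonicity of the logarithm to force every summand of the resulting pairing to vanish --- is the same as the paper's. The difference is the choice of test density. The paper pairs against the \emph{unnormalised} $v = \e^{-U}$ for the canonical metric, so that each summand reads $\sum_{x_\bb}\big(\e^{-U_\bb} - \Sigma^{\bb\aa}(\e^{-U_\aa})\big)\big(U_\bb - \Fh^{\bb\aa}(U_\aa)\big)$, a sum of \emph{pointwise} nonpositive products; their vanishing gives $\e^{-U_\bb} = \Sigma^{\bb\aa}(\e^{-U_\aa})$ on the nose, i.e.\ $\DF(U)=0$ in one stroke, the partition functions never having been stripped off. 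By normalising first you only obtain $q_\bb = \Sigma^{\bb\aa}(q_\aa)$, leaving the residual constant field $\lambda_\bb - \lambda_\aa$, which you then correctly kill by recognising it as $\nabla_q(\lambda)$ and applying the adjunction a second time --- legitimately, since consistency of $q$ is available at that point. Both arguments are sound; yours is longer, but it makes visible the symmetrised Kullback--Leibler structure of the pairing and isolates exactly where strict, rather than merely projective, faithfulness comes from, which is a useful complement to the paper's more economical computation.
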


\begin{proof}
The proposition 
claims that $\div \DF(H) = 0$ implies $\DF(H) = 0$ for any local hamiltonians $H$.
Denoting by $d$ the adjoint of $\div$ 
for the canonical metric of $A_0(X)$, we have  
    for every $v, H \in A_0(X)$:
    \begin{equation} \croc{v}{\div \DF(H)} = \croc{d(v)}{\DF(H)} \end{equation}
    Assuming 
    that $\div \DF(H) = 0$ and letting $v = \e^{-H}$ 
    the above integration by parts formula gives:
\begin{equation} \croc{d(\e^{-H})}{\DF(H)} = \sum_{\aa\bb \in N_1(X)} 
    \Big\langle
        \e^{-H_\bb} - \Sigma^{\bb\aa}(\e^{-H_\aa}) 
    \:\Big| \:
        H_\bb + \ln \Sigma^{\bb\aa}(\e^{-H_\aa}) 
    \Big\rangle_{E_\bb}
    = 0
\end{equation}
    It then follows by 
    monotonicity of $y \mapsto - \ln(y)$ that the differences
    $v_\bb - \Sigma^{\bb\aa}(v_\aa)$ and 
    $H_\bb - \Fh^{\bb\aa}(H_\aa)$ have opposite signs. As
    $v_\bb = \Sigma^{\bb\aa}(v_\aa)$ is equivalent to 
    $H_\bb = \Fh^{\bb\aa}(H_\aa)$,
    we do have $\DF(H) = 0$.
\end{proof}

In the next paragraph, we 
shall enforce normalisation constraints through a Möbius inversion
on the flux of $\Phi$ outbound to $\vide$. 
We introduce the flux functional $\Phi'$ 
defined by $\Phi'(u)_{\aa\bb} = \Phi(u)_{\aa\bb}$ when $\bb \neq \vide$ 
and otherwise by:
\begin{equation} \label{flux-bpn}
    \Phi'(u)_{\aa\vide} = 
    \sum_{\aa \aw \bb'} \mu_{\aa\bb} \cdot \Phi(u)_{\bb\vide} 
\end{equation}
Let us show that $\Phi'$ is projectively faithful. 

\begin{prop} \label{bpn-faithful-lemma}
    For every $H \in A_0(X)$ we have the equivalence: 
    \begin{equation} \div \DF(H) \in \R_0(X) \quad\eqvl\quad 
    \exists \lambda \in \R_0(X) \txt{s.t.} \DF(H - \lambda) = 0 
    \end{equation}
\end{prop}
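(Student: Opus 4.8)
The plan is to reduce the statement to the faithfulness of the standard diffusion flux established in Proposition~\ref{faithful}, the key observation being that subtracting a constant field from $H$ translates $\DF(H)$ by an exact coboundary. Indeed, for any $\lambda \in \R_0(X)$ the additivity of effective energy along constants, $\Fh^{\bb\aa}(H_\aa - \lambda_\aa) = \Fh^{\bb\aa}(H_\aa) - \lambda_\aa$ (obtained by factoring $\e^{\lambda_\aa}$ out of $\Sigma^{\bb\aa}$), yields the \emph{exact} identity
\[ \DF(H - \lambda)_{\aa\bb} = \DF(H)_{\aa\bb} - (\lambda_\bb - \lambda_\aa). \]
Since $\E^{\bb\aa}$ fixes constants, the correction term is precisely $\nabla\lambda$, so $\DF(H-\lambda) = \DF(H) - \nabla\lambda$ holds on the nose, with $\nabla\lambda \in \R_1(X)$ a constant $1$-field.

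The implication from right to left is then immediate: if $\DF(H - \lambda) = 0$ then $\DF(H) = \nabla\lambda$, and as $\div$ maps constant fields to constant fields, $\div\DF(H) = \div\nabla\lambda \in \R_0(X)$. For the converse I would seek $\lambda \in \R_0(X)$ solving the discrete Laplace equation
\[ \div\nabla\lambda = \div\DF(H). \]
Granting a solution, one obtains $\div\DF(H - \lambda) = \div\DF(H) - \div\nabla\lambda = 0$, whence $\DF(H - \lambda) = 0$ by faithfulness (Proposition~\ref{faithful}), which is exactly the desired statement.

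It remains to solve this Laplace equation, and this is the main obstacle. Restricted to $\R_0(X)$, the operator $\lambda \mapsto \div\nabla\lambda$ computes to the graph Laplacian $L$ of the comparability graph on $X$ (an edge joining each strictly comparable pair), since
\[ (\div\nabla\lambda)_\bb = \sum_{\aa' \cont \bb}(\lambda_\bb - \lambda_{\aa'}) - \sum_{\bb \cont \cc'}(\lambda_{\cc'} - \lambda_\bb) = \deg(\bb)\,\lambda_\bb - \sum_{\aa \sim \bb} \lambda_\aa. \]
This $L$ is symmetric positive semidefinite with $\croc{\lambda}{L\lambda} = \sum_{\aa \sim \bb}(\lambda_\aa - \lambda_\bb)^2$, so $\Img(L)$ is the orthogonal complement of $\Ker(L)$, the latter being the functions constant on connected components. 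Because $X$ is closed under intersection, any $\aa, \bb \in X$ are joined through $\aa \cap \bb \in X$, which is contained in both; hence the comparability graph is connected and $\Ker(L) = \R \cdot \mathbf{1}$, giving $\Img(L) = \{ r \in \R_0(X) \st \sum_\aa r_\aa = 0 \}$. Finally, $\div\DF(H)$ lies in this hyperplane: being a boundary it has vanishing total energy by Corollary~\ref{H0}, so $\sum_\aa (\div\DF(H))_\aa = 0$, and these summands are exactly the constants $r_\aa$. Therefore the Laplace equation is solvable, and the argument closes.
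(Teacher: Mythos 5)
Your proof is correct and follows essentially the same route as the paper: use additivity of effective energy along constants to write $\DF(H-\lambda)=\DF(H)-\nabla\lambda$, reduce the problem to solving $\div\nabla\lambda=\div\DF(H)$ in $\R_0(X)$, and conclude by the faithfulness of $\DF\circ\zeta$ (Proposition \ref{faithful}). The only difference is that where the paper disposes of the Laplace equation with a terse appeal to the Hodge decomposition, you verify solvability explicitly via the graph Laplacian of the comparability graph (connected by $\cap$-closure, with $\div\DF(H)$ lying in the zero-sum hyperplane), which is a welcome sharpening of that step rather than a different argument.
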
 

\begin{proof} 
    Letting $\Delta = \div d$ denote the laplacian on $\R_0(X)$, 
    we have $\Img(\Delta) = \Img(\div) + \Img(d)$ 
    by Hodge decomposition.
    By additivity of effective energy along constants, 
    one has  
    $\DF(H - \lambda) = \DF(H) - d \lambda$ 
    for all $\lambda \in \R_0(X)$ so that: 
    \begin{equation} \div \DF(H - \lambda) = \div \DF(H) - \Delta(\lambda) \end{equation}
    If $\div \DF(H) = \lambda' \in \R_0(X)$, 
    then $\lambda' \in \Img(\div) \incl \Img(\Delta)$ 
    and there exists  $\lambda \in \R_0(X)$ such that 
    $\Delta(\lambda) = \lambda'$. 
    It follows that $\div \DF(H - \lambda) = 0$.
    The faithfulness property \ref{faithful} 
    of $\DF \circ \zeta$ then implies $\DF(H - \lambda) = 0$.  
\end{proof}

\begin{prop} 
    The flux $\Phi'$ defined by (\ref{flux-bpn}) is 
    projectively faithful.
\end{prop}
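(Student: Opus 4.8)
The plan is to reduce projective faithfulness of $\Phi'$ to that of the standard flux $\Phi = -\DF\circ\zeta$, exploiting the fact that $\Phi'$ and $\Phi$ differ only in those components valued in the one-dimensional algebra $A_\vide = \R$, so that their difference is automatically a constant field and does not affect membership in $\R_0(X)$ after taking $\div$.

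First I would record that $\Phi$ itself is projectively faithful, which is essentially the content of proposition \ref{bpn-faithful-lemma} combined with \ref{faithful}. Writing $U = \zeta\cdot u$, I claim the chain of equivalences
\[ \div\Phi(u)\in\R_0(X) \;\eqvl\; \div\DF(U)\in\R_0(X) \;\eqvl\; \DF(U)\in\R_1(X) \;\eqvl\; u\in\Z'(X) \]
holds. The middle equivalence is \ref{bpn-faithful-lemma}: if $\div\DF(U)\in\R_0(X)$ then $\DF(U-\lambda)=0$ for some $\lambda\in\R_0(X)$, whence $\DF(U)=\DF(U-\lambda)+d\lambda=d\lambda\in\R_1(X)$ by additivity of effective energy along constants, while the converse is immediate since $\div$ carries $\R_1(X)$ into $\R_0(X)$. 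The last equivalence is then just the definitions $\CU'(X) = \{U : \DF(U)\in\R_1(X)\}$ and $\Z'(X) = \mu\cdot\CU'(X)$ together with $U=\zeta\cdot u$.

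The key observation is then that $\eta = \Phi'(u) - \Phi(u)$ lies in $\R_1(X)$. Indeed $\eta_{\aa\bb} = 0$ whenever $\bb\neq\vide$, while for $\bb=\vide$ both $\Phi(u)_{\aa\vide}$ and its Möbius-transformed value $\sum_{\aa\cont\bb'}\mu_{\aa\bb'}\,\Phi(u)_{\bb'\vide}$ are scalars, since $\Phi(u)_{\bb'\vide} = -\DF(U)_{\bb'\vide} = \Fh^{\vide\bb'}(U_{\bb'}) - U_\vide\in A_\vide = \R$. Because $\div$ maps $\R_1(X)$ into $\R_0(X)$, we obtain $\div\Phi'(u) - \div\Phi(u) = \div\eta\in\R_0(X)$, so that $\div\Phi'(u)\in\R_0(X)$ if and only if $\div\Phi(u)\in\R_0(X)$. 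Combining this with the previous paragraph yields $\div\Phi'(u)\in\R_0(X)\eqvl u\in\Z'(X)$, which is precisely projective faithfulness.

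The argument is short because the modification is confined to the $\vide$-block, where every value is automatically scalar; the only substantial work sits upstream in \ref{bpn-faithful-lemma}, which itself rests on the Hodge decomposition $\Img(\Delta)=\Img(\div)+\Img(d)$ and the faithfulness of $\DF\circ\zeta$. The main point I would verify carefully is therefore not the present proposition but the translation between the three descriptions of projective consistency, namely $\div\DF(U)\in\R_0(X)$, $\DF(U)\in\R_1(X)$, and $U\in\CU'(X)=\CU(X)+\R_0(X)$, and in particular the assertion that no non-constant flux can hide in the $\vide$-components. I would also note that when $\vide\notin X$ the modification is vacuous, $\Phi'=\Phi$, and the statement collapses directly to the projective faithfulness of $\Phi$.
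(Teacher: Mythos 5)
Your proof is correct and follows essentially the same route as the paper: the key step in both is that $\Phi'$ and $\Phi$ differ only in the $\vide$-components, which are automatically scalar, so $\Phi'(u) - \Phi(u) \in \R_1(X)$ and hence $\div \Phi'(u) \equiv \div \Phi(u) \mod \R_0(X)$, after which proposition \ref{bpn-faithful-lemma} does all the work. You are somewhat more explicit than the paper, which asserts the congruence $\div\Phi(u) \simeq \div\Phi'(u)$ without comment and only spells out one direction of the equivalence; your verification of the converse direction and of the translation between $\div\DF(U) \in \R_0(X)$, $\DF(U) \in \R_1(X)$ and $u \in \Z'(X)$ is a welcome completion of the same argument.
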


\begin{proof} 
    From proposition \ref{bpn-faithful-lemma}, 
    projective stationarity 
    $\div \Phi (u) \simeq 
    \div \Phi'(u) \simeq 0 \mod \R_0(X)$ implies the
    projective consistency of $u$ as 
    $\DF(\zeta \cdot u) = \DF(\lambda) \in \R_1(X)$ 
    for some $\lambda \in \R_0(X)$. 
\end{proof}

\subsection{Euler Schemes and Belief Propagation}

The ordinary differential equation (\ref{transport}) defines 
a continuous-time transport $\dot u = \div \Phi(u)$ of energy at the level of 
interaction potentials. 
Its flow may be approximated by common methods of 
numerical integration, 
but belief propagation is related to the 
time-step-1 
explicit Euler scheme $u \wa u + \div \Phi(u)$, 
which approximates the flow of the vector field $\Tspt = \div \Phi$ by:
\begin{equation} \e^{n \Tspt} \simeq (1 + \Tspt)^n \end{equation} 
    We warn against the use of such a coarse scheme\footnotemark{} as
$1$ may actually be met by the integrator's Lipschitz bound. 
A straightforward refinement is to reduce the time step to 
$0 < \lambda < 1$, yielding new and better-behaved belief propagation 
algorithms, while higher order integrators could also prove useful.
\footnotetext{ 
    Consider for instance the real ODE $\dot y = - a y$ and 
    the behaviour of $\e^{-a n \tau} \simeq (1 - a \tau)^n$ 
    for different values of $a \tau$. 
}

Defining effective hamiltonians as 
$U(t) = H + \zeta \cdot \div \ph(t)$, it is
the differential equation $\dot \ph = - \DF(U)$ 
that is more precisely related to the multiplicative algorithm\footnotemark{} 
of equations (\ref{beliefs}) and (\ref{messages}). 
The crucial ingredient in recovering (\ref{beliefs}) 
is the Gauss formula of proposition \ref{gauss}, which gives on every cone $\LL^\aa$: 
\begin{equation} \label{bp-gauss}
U_\aa(t) = H_\aa + \sum_{\aa'\bb' \in d \LL^\aa} 
\ph_{\aa' \bb'}(t) 
\end{equation}
Equation (\ref{bp-gauss}) is, up to additive constants, the 
logarithm of (\ref{beliefs}) defining beliefs from messages. 
Approximating the evolution $\dot \ph = - \DF(U)$ 
by the finite difference iteration $\ph \wa \ph - \DF(U)$ then yields:
the logarithm of the message update rule (\ref{messages}): 
\begin{equation}
    \ph_{\aa\bb} \wa 
    \ph_{\aa\bb}  
    - \ln \bigg( \frac { \Sigma^{\bb\aa} \e^{-U_\aa} } 
    {\e^{-U_\bb}} \bigg) 
\end{equation}
\footnotetext{
    Letting $U = - \ln(q)$ and $\ph = - \ln(m)$.
    Reference hamiltonians are related to priors 
    by $H = \zeta \cdot h$ with $h = -\ln(f)$.
}

Faithfulness of $\Phi = - \DF \circ \zeta$ justifies our choice 
to drop messages out of memory
and focus on the transport equation $\dot u = \div \Phi(u)$ instead.
Accounting for normalisation could be done by projecting the 
evolution of $u$ onto $A_0(X) / \R_0(X)$. 
As a different point of view, 
we show that splitting the flux $\Phi'$ of $(\ref{flux-bpn})$ 
as $\Phi' = \Phi'_{int} + \Phi'_{out}$, where $\Phi'_{out}$ gathers all the 
flux terms $(\Phi'_{\aa\vide})$ directed to $\vide$,  
allows to naturally enforce normalisation at each step. 
This only requires to replace $\div$ by its truncation $\div'$ 
to $X \setminus \{ \vide \}$, as defined in \ref{div'}, 
and prepares for the more general 
boundary conditions considered in the next section. 

\begin{prop} 
    Under the correspondence $q = \e^{- \zeta \cdot u}$ 
    belief propagation is a splitting scheme 
    for the transport equation $\dot u = \div' \Phi'(u)$
    associated to the decomposition $\Phi' = \Phi'_{int} + \Phi'_{out}$, 
    each term being integrated through an explicit Euler scheme of time step 1.
\end{prop}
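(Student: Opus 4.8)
The plan is to read the proposition as the assertion that the generalised belief propagation map $\BP$ of (\ref{belief-update}) factors, under the correspondence $q = \e^{-\zeta\cdot u}$, as a Lie--Trotter splitting
\[ \BP = \Psi_{out}\circ\Psi_{int}, \quad \Psi_{int}(u) = u + \div'\Phi'_{int}(u), \quad \Psi_{out}(u) = u + \div'\Phi'_{out}(u) \]
each factor being the time-step-one explicit Euler map attached to one piece of the decomposed field $\div'\Phi' = \div'\Phi'_{int} + \div'\Phi'_{out}$. So I would fix reference potentials $h$ with $h_\vide = 0$, set $H = \zeta\cdot h$, and compute separately the effect of $\Psi_{int}$ and $\Psi_{out}$ on the effective hamiltonians $U = \zeta\cdot u$, using that $\Phi(u)_{\aa\bb} = \Fh^{\bb\aa}(U_\aa) - U_\bb = \ln\frac{\e^{-U_\bb}}{\Sigma^{\bb\aa}(\e^{-U_\aa})}$ and the Gauss formula (\ref{bp-gauss}).

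First I would treat the internal step. Since $\Phi'_{int}$ carries only the components with $\bb\neq\vide$, it has no flux into $\vide$, so $(\div\Phi'_{int})_\vide = \sum_{\aa'\neq\vide}(\Phi'_{int})_{\aa'\vide} = 0$; hence truncating to $\div'$ costs nothing on the cone and the Gauss formula \ref{gauss} gives, for every $\aa\neq\vide$,
\[ (\zeta\cdot\div'\Phi'_{int})_\aa = \sum_{\aa'\bb'\in d\LL^\aa,\ \bb'\neq\vide}\ln\frac{\e^{-U_{\bb'}}}{\Sigma^{\bb'\aa'}(\e^{-U_{\aa'}})}. \]
Exponentiating this shift of $U_\aa$ sends $q_\aa = \e^{-U_\aa}$ to $q_\aa\prod\frac{\Sigma^{\bb'\aa'}(q_{\aa'})}{q_{\bb'}}$ over the internal coboundary edges, which is the unnormalised version of (\ref{belief-update}). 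The $\vide$-edges absent from this product contribute only the $x_\aa$-independent scalars $\Sigma^{\vide\aa'}(q_{\aa'})/q_\vide$, which factor out and are annihilated by the normalisation bracket; this reconciles the two forms of the update and identifies $\Psi_{int}$ with the sum-product part.

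Second I would show $\Psi_{out}$ realises the bracket. Writing $L_\bb = \ln Z_\bb = -\Fh^{\vide\bb}(U_\bb)$ one has $\Phi(u)_{\bb\vide} = -L_\bb$ (as $U_\vide = 0$), so by (\ref{flux-bpn}) the only nonzero family of $\Phi'_{out}$ is $(\Phi'(u)_{\aa\vide}) = -(\mu\cdot L)_\aa$. A direct computation of the divergence gives $(\div'\Phi'_{out})_\bb = (\mu\cdot L)_\bb$ for $\bb\neq\vide$, whence, using $\zeta\mu = \mathrm{id}$ (theorem \ref{moebius}) together with $(\mu\cdot L)_\vide = L_\vide = 0$,
\[ (\zeta\cdot\div'\Phi'_{out})_\aa = \sum_{\aa\cont\bb,\ \bb\neq\vide}(\mu\cdot L)_\bb = (\zeta\cdot\mu\cdot L)_\aa - L_\vide = L_\aa = \ln Z_\aa. \]
Thus $U_\aa\mapsto U_\aa + \ln Z_\aa$, i.e. $q_\aa\mapsto \e^{-U_\aa}/Z_\aa = [q_\aa]$, exactly the normalisation bracket evaluated at the output of $\Psi_{int}$; composing $\Psi_{out}\circ\Psi_{int}$ therefore reproduces $\BP$.

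The main obstacle is the combinatorial collapse of the outbound step: one must verify that the Möbius inversion built into (\ref{flux-bpn}) is precisely what cancels the over- and under-counting of partition-function contributions across $\LL^\aa$, so that $\zeta\cdot\div'\Phi'_{out}$ yields on each region the single scalar $\ln Z_\aa$ rather than a spurious combination of effective free energies. This is where $\zeta\mu = \mathrm{id}$ and the vanishing $L_\vide = 0$ (from $E_\vide$ a point and $U_\vide = 0$) do the work, and where the bookkeeping around the truncation $\div'$ of definition \ref{div'} must be handled carefully; the faithfulness of $\Phi = -\DF\circ\zeta$ established in \ref{faithful} is what justifies discarding the messages and running the whole splitting on $u$ alone.
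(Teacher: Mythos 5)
Your proposal is correct and follows essentially the same route as the paper: split $\div'\Phi'$ into the internal part (which, via the Gauss formula on the cone, exponentiates to the unnormalised sum-product update) and the outbound part (whose Möbius-inverted flux to $\vide$ collapses under $\zeta\mu=\mathrm{id}$ to $\ln Z_\aa = -\Fh^\aa(U_\aa)$, i.e.\ the normalisation bracket), then compose the two Euler steps. Your bookkeeping of the $\vide$-terms and of $U_\vide=0$ is just a more explicit version of the paper's appeal to the Gauss formula on $\LL^\aa\setminus\{\vide\}$ and to Möbius inversion, so there is nothing substantive to add.
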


\begin{proof} 
    Consider the evolution $\dot U = \X_{int}(U) + \X_{out}(U)$ 
    induced on the effective hamiltonians $U = \zeta \cdot u$,
    where $\X_{int}(U) = \zeta \cdot \div \Phi'_{int}(u)$
    and $\X_{out}(U) = \zeta \cdot \div' \Phi'_{out}(u)$. 
    Applying Gauss formulas on $\LL^\aa \setminus \{ \vide \}$, 
    one may view $\X_{int}$ and $\X_{out}$
    bound into and out of $\LL^\aa \setminus \{ \vide \}$ respectively.  
    \begin{equation} 
    \ba{lll}

    \disp \X_{int}(U)_\aa
    = \sum_{\aa'\bb' \in d\LL^\aa} \Phi'_{int}(u)_{\aa'\bb'}
    & \txt{with} & 
    \Phi'_{int}(u)_{\aa'\bb'} = \Fh^{\bb'\aa'}(U_{\aa'} - U_{\bb'}) \\
    
    \disp \X_{out}(U)_\aa  
    = - \sum_{\bb' \in \LL^\aa_\vide}
    \Phi'_{out}(u)_{\bb' \vide} 
    & \txt{with} & 
    \disp \Phi'_{out}(u)_{\bb'\vide} = 
    \sum_{\bb' \aw \cc'} \mu_{\bb'\cc'} \, \Fh^{\cc'}(U_{\cc'}) 

    \ea 
    \end{equation}
    Möbius inversion formulas 
    yield $\X_{out}(U)_\aa = - \Fh^\aa(U_\aa)$. Consider then the splitting scheme:
    \begin{equation} U(n + 1) \simeq 
    \big( 1 + \X_{out} \big) \circ \big( 1 + \X_{int} \big)
    \big( U(n) \big) \end{equation}
    The first step $U \wa U + \X_{int}(U)$ is, up to constants 
    the logarithm of the belief update rule (\ref{belief-update}).
    The second step $U \wa U + \X_{out}(U)$ corresponds 
    to normalising\footnote{
        The total outbound flux 
        $\div_\vide \Phi'_{out} = \sum_\aa c_\aa \, \Fh^\aa(U_\aa)$ 
        is a Bethe approximation of the total free energy $\Fh^\Om(U_\Om)$.
    }
    the belief 
    $q_\aa = [\e^{-U_\aa}] = \e^{-U_{\aa} + \Fh^\aa(U_\aa)}$. 
\end{proof}

Approximating the flow of $\Tspt$ by $\e^{n\Tspt} \simeq (1 + \Tspt)^n$ 
may lead to serious convergence and stability issues in regimes 
where the norm of $\Tspt$ and its Lipschitz bound are not strictly smaller than $1$. 
In simple examples with cycles,  
$\Tspt_*$ may
for instance presents periodic eigenvalues of the form $\e^{i 2\pi / n}$. 

A straightforward and recommendable improvement of $\BP$ is to 
reduce the time scale of the explicit Euler scheme 
and approximate the flow by  
$\e^{n \cdot \lambda  \Tspt} \simeq (1 + \lambda \Tspt )^n$.
Iterating the non-linear operator $(1 + \lambda \Tspt)$ corresponds 
to updating messages according to:
\begin{equation} \label{messages-l}
m_{\aa\bb} \wa m_{\aa\bb} \cdot 
\bigg( \frac{\Sigma^{\bb\aa}(q_\aa)}{q_\bb} \bigg)^\lambda 
\end{equation}
It is to be expected that 
reasonably small values of $\lambda$ around 0.5 may already change
the algorithm's behaviour dramatically. 
In some regimes, belief propagation has been reported to converge poorly after 
undergoing a kind of <<phase transition>>. Changing the time scale 
may probably overcome this limitation.

\begin{defn} 
    For every $\lambda > 0$, we call 
    {\rm belief propagation of time scale} $\lambda$ 
    the algorithm iterating over a collection $(q_\aa) \in \Del_0(X)$
    of beliefs according to the update rule: 
\begin{equation} 
q_\aa \wa \Bigg[ \:
    q_\aa \; \times \prod_{\aa'\bb' \in d\LL^\aa} 
    \bigg(
\frac{\Sigma^{\bb'\aa'}(q_{\aa'})}{q_{\bb'}}
    \bigg)^{\lambda}
\: \Bigg]
\end{equation}
We denote by $\BP_\lambda : \Del_0(X) \aw \Del_0(X)$ the smooth 
map inducing the above dynamic.
\end{defn}

The following identity is a straightforward consequence of the  
homological character of BP and conservation of the total energy. 
It was already known in particular cases, but not stated 
as a general fact to our knowledge. 

\begin{prop}[Conservation] \label{bp-conservation}
Let $q \in \Del_0(X)^\N$ denote a sequence of beliefs iterated 
from $\BP_\lambda$ for some $\lambda > 0$. Then the quantity: 
    \begin{equation} q_\Om(t) = \prod_{\aa \in X} q_\aa(t)^{c_\aa} \end{equation} 
        remains constant in $G_\Om = (\R_+^*)^{E_\Om}$ up to a scaling factor. 
\end{prop}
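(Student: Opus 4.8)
The plan is to show that the quantity $q_\Om(t) = \prod_{\aa} q_\aa(t)^{c_\aa}$ is invariant under a single step of $\BP_\lambda$, from which invariance over the whole sequence follows immediately. Taking logarithms converts the multiplicative statement into an additive one at the level of effective hamiltonians: writing $U_\aa = -\ln q_\aa$ (up to an additive constant from normalisation), one has $-\ln q_\Om = \sum_\aa c_\aa U_\aa$, which is precisely the total energy $\tilde\zeta(u)_\Om = H_\Om$ associated to the interaction potentials $u = \mu \cdot U$, by the Möbius-numbers property stated in the ``Combinatorial Operators'' section. So the claim reduces to showing that $\sum_\aa c_\aa U_\aa$ is conserved along the dynamic.

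First I would express one step of $\BP_\lambda$ at the level of $U$. Up to the normalisation bracket, the update rule multiplies $q_\aa$ by $\big(\Sigma^{\bb'\aa'}(q_{\aa'})/q_{\bb'}\big)^{\lambda}$ over the coboundary $d\LL^\aa$, so in logarithmic coordinates $U_\aa \mapsto U_\aa + \lambda\sum_{\aa'\bb' \in d\LL^\aa} \ph_{\aa'\bb'}$, where $\ph = -\DF(U)$ is exactly the heat flux. By the Gauss formula (proposition \ref{gauss}), this coboundary sum equals $\sum_{\bb' \in \LL^\aa}\div_{\bb'}\ph$, and recognising the zeta transform one sees that the increment of $U$ is $\lambda\,\zeta(\div\ph)$. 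Equivalently, at the level of interaction potentials the step is $u \mapsto u + \lambda\,\div\ph$, a pure boundary increment.

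The key step is then the homological invariance of total energy: by corollary \ref{conservation} (or directly by the Main Theorem relating homology and total energy), adding any boundary term $\div\ph$ to $u$ leaves $\sum_\aa u_\aa = \tilde\zeta(u)_\Om$ unchanged, since $\sum_\aa (\div\ph)_\aa = 0$ is the global Gauss formula $\sum_\bb \div_\bb \ph = 0$. Hence $\sum_\aa c_\aa U_\aa = \tilde\zeta(\mu\cdot U)_\Om$ is conserved, and $q_\Om$ is constant in $G_\Om$. The one subtlety to handle carefully is the normalisation bracket $[\,\cdot\,]$ applied at each step: it rescales each $q_\aa$ by a factor $e^{\Fh^\aa(U_\aa)} \in \R_+^*$, contributing an element of $\R_0(X)$ to $U$. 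I would argue that these normalisation constants only alter $q_\Om$ by an overall scaling factor in $\R_+^*$ --- which is exactly why the statement asserts constancy \emph{up to a scaling factor} --- so the conserved geometric object is really the class of $q_\Om$ modulo positive scalars. The main obstacle, then, is bookkeeping the normalisation correctly rather than any deep difficulty: once the increment of $u$ is identified as a boundary $\lambda\,\div\ph$ via the Gauss formula, conservation is an immediate consequence of the homological picture developed in chapter 2.
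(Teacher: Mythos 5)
Your proposal is correct and follows exactly the route the paper intends: the paper gives no explicit proof, remarking only that the identity is ``a straightforward consequence of the homological character of BP and conservation of the total energy,'' and your argument --- identifying the $\BP_\lambda$ step as $u \mapsto u + \lambda\,\div\ph$ via the Gauss formula, invoking $\sum_\aa c_\aa U_\aa = \sum_\aa u_\aa$ and the global identity $\sum_\bb \div_\bb \ph = 0$, and absorbing the normalisation constants into the scaling factor --- is precisely the intended expansion of that remark.
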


\egroup

\section{Canonical Diffusion} 

As expressed by theorem \ref{correspondence}, stationary states of 
any transport equation derived from a faithful flux functional 
solve the problem of finding consistent pseudo-marginals 
critical for a Bethe free energy. 
Therefore it remains a practical and theoretical open question 
whether $\Phi = - \DF \circ \zeta$ is optimal among faithful fluxes, 
and if not whether a better-behaved flux may be designed. 
Completing diagram \ref{bp-diagram} with a Möbius inversion 
on degree one, 
we introduce a homological vector field $\tau = \div \phi$ 
associated to the flux $\phi = \mu \circ (- \DF) \circ \zeta$ :
\begin{equation} \label{bpc-diagram}
\bcd 
A_0(X) \rar{\zeta} & A_0(X) \dar{-\DF} \\
A_1(X) \uar{\div} & A_1(X) \lar{\mu} 
\ecd 
\end{equation}
The symmetry of diagram \ref{bpc-diagram} is of great appeal. 
It involves conjugation of operators on $A_\bullet(X)$ by the extended transforms
$\zeta$ and $\mu$, which shall have interesting 
cohomological consequences on the linearised dynamic.
We claim that $\phi$ behaves better than the standard flux $\Phi$ 
for three main reasons:
\begin{enumerate}[label=(\roman*),itemsep=0pt]
    \item
the flux bound into $\LL^\aa$ is a Bethe approximation of the total effective 
energy of $X \setminus \LL^\aa$. 
\item
the flux from $\LL^\aa$ to a subcone $\LL^\bb$ is the effective energy 
of $\LL^\aa \setminus \LL^\bb$. 
\item the algorithm $u \wa u + \div \ph(u)$ restricted
    to a cone $\LL^\aa \incl X$ converges 
    in one step. 
\end{enumerate}
The standard flux $\Phi$ brings redundancies and 
(i) expresses that effective contributions of neighbouring regions
are properly counted. In addition (ii) will allow for a natural enforcement
of Dirichlet boundary conditions,  
the outbound flux taking care of reaching consistency with the boundary. 
Such boundary conditions, fixing the state of an exterior subset of variables, 
are a fundamental constituent of learning. 

We prepare this section by introducing the differential calculus 
we shall use in presence of boundary, which relies 
on a boundary operator $\divint$ truncating $\div$ to interior variables. 
Investigating some fundamental properties of the canonical flux, 
we then show that $\phi = \mu \circ (- \DF) \circ \zeta$
satisfies a local faithfulness condition 
and prove proposition \ref{mu-gauss} supporting claims (i) and (ii) above,
before going through some of the algorithms that generalise belief propagation
by approximate integration of $\dot u = \div \phi(u)$.

\subsection{Calculus with Boundary}

In contrast with differential geometry, there is no intrinsic notion of boundary 
on $X \incl \Part(\Om)$ and 
deciding which variables belong to the boundary and which do not is either
arbitrary or dictated by experience\footnote{
    In practice, the set of fixed boundary variables of a neural network 
    depends on the training or testing context.
}. 
Assuming $\bord \Om \incl \Om$ describes a set of variables 
whose state is given by the exterior, the following
compatibility condition on $X$ 
will define the boundary $\bord X$ at the level of regions.  

\begin{defn}[Boundary] 
    Given a subset of variables $\bord \Om \incl \Om$, let us denote by: 
    \bi
    \iii $\bord \aa = \aa \cap \bord \Om$ the boundary of a region $\aa \incl \Om$,
    \iii $\bord X = \{ \bord \aa \st \aa \in X \}$ the 
    boundary of $X \incl \Part(\Om)$. 
    \ei
    We say that a covering $X \incl \Part(\Om)$ 
    is {\rm adapted to the boundary} $\bord \Om$
    whenever $\bord X \incl X$.
\end{defn}

As a consequence, 
the boundary $\bord \LL^\aa$ of the cone 
$\LL^\aa \incl X$ is the cone $\LL^{\bord \aa} \incl \bord X$ 
for every $\aa \in X$. 
In other words, every $\aa$ has only one maximal subregion\footnote{ 
    This contrasts with the topological setting, as one may for instance 
    take the boundary of the 2-simplex to be the triangle 
    formed by its three edges.
    As there exists locally consistent 
    pseudo-marginals on the triangle
    that do not have a consistent global extension, 
    such a notion of boundary would be problematic when trying to enforce globally 
    inconsistent Dirichlet boundary conditions.
}
$\bord \aa$ belonging to $\bord X$, with $\bord \aa = \vide$ when 
$\aa$ does not contain any exterior variable. 
We finally define $\aa$ to be {\it interior} 
whenever it is not contained in $\bord X$,
equivalently, when $\aa$ contains at least one variable of $\Om \setminus \bord \Om$:

\begin{defn}
If $X \incl \Part(\Om)$ is adapted to $\bord \Om$, 
we denote by $\Xint = X \setminus \bord X$ the {\rm interior} of $X$.
\end{defn}

To the splitting 
$X = \Xint \sqcup \bord X$, we associate the direct sum decomposition 
$A_0(X) = A_0(\Xint) \oplus A_0(\bord X)$ 
and write $u = u_{|\Xint} + u_{|\bord X}$ for every $u \in A_0(X)$. 
When enforcing Dirichlet boundary conditions, we shall restrict the evolution 
of the interaction potentials $u$ to the interior of $X$,  
their trace $u_{|\bord X}$ on the boundary describing input data
or exterior stimuli. The evolution of $u_{|\Xint}$ shall bear 
a homological character through the following definition.
Although a general harmonic theory with boundary 
may be developed on the whole complex $A_\bullet(X)$, 
we solely focus on degrees zero and one for now.

\begin{defn} We call {\rm interior divergence} 
    the map $\divint: A_1(X) \aw A_0(\Xint)$  
    truncating $\div$ to $\Xint$, defined by
    $\divint(\ph) = \div(\ph)_{|\Xint}$ 
    for every $\ph \in A_1(X)$.
\end{defn}

The following proposition is the analog of the integration by parts formula 
in differential geometry.  
Given a submanifold $V \incl \R^3$ with boundary $\bord V$, 
one has for every scalar field $u$ and vector field $\vec \ph$:
\begin{equation} \label{ipp-geom}
    \int_{V} \vec{\rm grad}(u) \cdot \vec{\ph}\; dv
= - \int_{V} u \, {\rm div}(\vec{\ph}) \; dv
+ \int_{\bord V} u \,({\vec{\ph}} \cdot \vec{n}) \; ds
\end{equation}
denoting by $\vec n$ the outbound unit normal vector on $\bord V$. 
The formal adjunction of $\vec{\rm grad}$ with $-{\rm div}$ is 
tweaked by a boundary term representing the integral of the 
outbound flux 
of $\ph$ against $u$.

\begin{prop}[Integration by parts] \label{ipp}
    Let $\nabla$ denote the adjoint of 
    $\div$ for a given metric. 
    Then for every $u \in A_0(X)$ and every $\ph \in A_1(X)$ we have:
\begin{equation} \croc{\nabla u}{\ph} = \croc{u}{\divint \ph} + b(u, \ph) \end{equation}
    where $b(u, \ph) = \croc{u \st {\bf 1}_{\bord X}}{\div \ph}$ 
    denotes the scalar product of the restrictions of $u$ and $\div \ph$ 
    to $\bord X$. 
\end{prop}

\begin{proof}
    In absence of boundary we have the classical integration by parts 
    formula $\croc{\nabla u}{\ph} = \croc{u}{\div \ph}$ expressing
    the adjunction of $\nabla$ with $\div$. 
    The above simply consists of introducing a 
    splitting of $\croc{u}{\div \ph}$ 
    as $\croc{u\st {\bf 1}_{\Xint} + {\bf 1}_{\bord X}}{\div \ph}$
    in presence of a boundary. Note that the boundary term: 
    \begin{equation} b(u, \ph) \;=\; \sum_{\bb \in \bord X} 
    \Big\langle u_\bb \;\Big|\; 
    \sum_{\aa \aw \bb} \ph_{\aa\bb} - \sum_{\bb \aw \cc} \ph_{\bb\cc} 
    \Big\rangle
    \end{equation}
    does present a formal analogy with \ref{ipp-geom} despite 
    the unusual <<thickness>> of the boundary $\bord X$. 
    The analogy becomes clearer if one assumes 
    $\nabla u = 0$ on $\bord X$ as the above reduces to:
    \begin{equation} b(u, \ph) \;=\; \sum_{\bb \in \bord X} 
    \Big\langle u_\bb \;\Big|\; \sum_{\aa \in \Xint} \ph_{\aa\bb}
    \Big\rangle 
    \end{equation} 
    by duality of $\nabla$ with $\div$ on $A_\bullet(\bord X)$, 
    and represents the integral of the outbound flux of $\ph$ against $u$.
\end{proof}

As proposition \ref{ipp} suggests, the differential calculus 
of $\divint$ shall only differ from that of $\div$ by the appearance
of boundary terms representing energy fluxes leaving $\Xint$ through $\bord X$. 
It will be useful to treate those seperately
and we decompose $A_1(X)$ as the direct sum 
$A^{int}_1(X) \oplus A^{out}_1(X)$ according to: 

\begin{defn} 
    For every $\ph \in A_1(X)$, we introduce the splitting 
    $\ph = \ph^{int} + \ph^{out}$ defined by: 
\bi
    \iii
    $\ph^{int}_{\aa\bb} = \ph_{\aa\bb}$ for every $\bb \in \Xint$,
    \iii 
    $\ph^{out}_{\aa\bb} = \ph_{\aa\bb}$ for every $\bb \in \bord X$.
\ei
    We respectively call $\ph^{int}$ and $\ph^{out}$ 
    the {\rm interior} and {\rm outbound}
components of $\ph$.
\end{defn}

Note that 
$\div : A_1(X) \aw A_0(X)$ then 
takes a block-triangular form as illustrated by the diagram:
\begin{equation}\label{div-diagram}
\bcd
    A^{int}_1(X) \dar & \oplus &
    A^{out}_1(X) \ar[dll] \dar[dashed] \\
    A_0(\Xint) & \oplus & A_0(\bord X) 
\ecd
\end{equation}
Full-line arrows represent the components of $\divint$
while the dotted arrow represents the truncation $\div - \divint$.
In particular, we have $\div(\ph^{int}) = \divint(\ph^{int})$ while 
and the boundary term $b(u, \ph) = b(u, \ph^{out})$ of prop. \ref{ipp} 
depends only on $\ph^{out}$.

\begin{prop}[Gauss formula on $\LL^\aa \setminus \bord \LL^\aa$]
    \label{gauss-intercone}
For every $\phi \in A_1(X)$ we have: 
\begin{equation}
    \zeta(\divint \phi)_\aa = 
    \tilde \zeta(\phi^{int})_{\Om \aw \aa} 
    - \zeta(\phi^{out})_{\aa \aw \bord\aa}
\end{equation}
    where $\tilde \zeta(\phi^{int})_{\Om \aw \aa} = 
    \disp \sum_{\aa'\bb' \in d\LL^\aa}
    \phi^{int}_{\aa'\bb'}$
denotes the total flux bound from $\Xint$ to the interior of $\LL^\aa$. 
\end{prop}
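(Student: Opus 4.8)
The plan is to prove the Gauss formula on the "truncated cone" $\LL^\aa \setminus \bord\LL^\aa$ by applying the ordinary Gauss formula (proposition \ref{gauss}) to the full cone $\LL^\aa$ and then carefully separating the interior and outbound contributions of $\phi$. Recall that $\divint \phi = (\div \phi)_{|\Xint}$, so summing $\zeta(\divint\phi)_\aa$ amounts to summing $\div\phi$ only over the interior cells $\bb' \in \LL^\aa \cap \Xint = \LL^\aa \setminus \LL^{\bord\aa}$, since $\bord\LL^\aa = \LL^{\bord\aa}$. The key combinatorial identity I would exploit is that $\zeta(\div\phi)_\aa = \sum_{\bb' \in \LL^\aa} (\div\phi)_{\bb'}$, which by the Gauss formula of proposition \ref{gauss} equals $\sum_{\aa'\bb' \in d\LL^\aa} \phi_{\aa'\bb'} = \tilde\zeta(\phi)_{\Om\aa}$.

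First I would write $\zeta(\divint\phi)_\aa = \sum_{\bb' \in \LL^\aa \setminus \LL^{\bord\aa}} (\div\phi)_{\bb'}$ and decompose this sum as the full cone sum over $\LL^\aa$ minus the boundary cone sum over $\LL^{\bord\aa}$:
\begin{equation}
\zeta(\divint\phi)_\aa = \sum_{\bb' \in \LL^\aa}(\div\phi)_{\bb'}
- \sum_{\bb' \in \LL^{\bord\aa}}(\div\phi)_{\bb'}
= \zeta(\div\phi)_\aa - \zeta(\div\phi)_{\bord\aa}.
\end{equation}
Next I would apply proposition \ref{gauss} to each cone separately: the first term gives $\sum_{\aa'\bb' \in d\LL^\aa}\phi_{\aa'\bb'} = \tilde\zeta(\phi)_{\Om\aa}$ and the second gives $\sum_{\aa'\bb' \in d\LL^{\bord\aa}}\phi_{\aa'\bb'} = \tilde\zeta(\phi)_{\Om\,\bord\aa}$. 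So far this is bookkeeping with the two Gauss formulas plus the additivity of the cone sums.

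The delicate step — the one I expect to be the main obstacle — is matching these two flux integrals with the interior/outbound splitting $\phi = \phi^{int} + \phi^{out}$ claimed in the statement. The point is that a coboundary pair $\aa'\bb' \in d\LL^\aa$ has $\bb' \in \LL^\aa$ and $\aa' \notin \LL^\aa$; such a pair contributes to $\tilde\zeta(\phi^{int})_{\Om\to\aa}$ precisely when $\bb'$ is interior, i.e. $\bb' \in \Xint$, and otherwise feeds the boundary cone. I would argue that the flux terms landing on $\bord\aa$ are exactly those captured by $\zeta(\phi^{out})_{\aa\to\bord\aa}$, since $\phi^{out}$ collects the components $\phi_{\aa'\bb'}$ with target $\bb' \in \bord X$, and that the remaining interior-target contributions reassemble into $\tilde\zeta(\phi^{int})_{\Om\to\aa}$. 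Here I must invoke proposition \ref{zeta-colocality} and the definition of the extended zeta transform $\zeta(\phi)_{\aa\bb} = \sum_{\bb'\in\LL^\aa_\bb}\sum_{\cc'\in\LL^\bb}\phi_{\bb'\cc'}$ to identify the boundary-cone flux with the intercone flux $\zeta(\phi^{out})_{\aa\to\bord\aa}$, using that $\LL^{\bord\aa}$ sits below $\LL^\aa$ and that $d\LL^{\bord\aa} \setminus d\LL^\aa$ records exactly the flux passing from $\LL^\aa \setminus \LL^{\bord\aa}$ into $\LL^{\bord\aa}$.

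Assembling these pieces, the interior-directed part of $\tilde\zeta(\phi)_{\Om\aa}$ survives as $\tilde\zeta(\phi^{int})_{\Om\to\aa}$ while the flux reaching the boundary cancels against a boundary contribution except for the net intercone term $-\zeta(\phi^{out})_{\aa\to\bord\aa}$, yielding the desired identity. The only genuine care needed is in signs and in checking that every coboundary pair is assigned to exactly one of the two transported fluxes; once the cone-restriction identity $\bord\LL^\aa = \LL^{\bord\aa}$ and the colocality of $\zeta$ are in hand, the rest is a reorganisation of finite sums.
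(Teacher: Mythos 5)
Your argument is correct but organised differently from the paper's. The paper sums $\div\phi$ directly over the intercone $\LL^\aa_{\bord\aa} = \LL^\aa\setminus\LL^{\bord\aa}$ and lets the terms $\phi_{\aa'\bb'}$ with both endpoints inside the intercone cancel pairwise, leaving the inbound flux $\tilde\zeta(\phi^{int})_{\Om\aw\aa}$ and the outbound flux $\zeta(\phi^{out})_{\aa\aw\bord\aa}$ in a single telescoping pass. You instead subtract two full-cone sums, $\zeta(\div\phi)_\aa - \zeta(\div\phi)_{\bord\aa}$, apply proposition \ref{gauss} to each cone as a black box, and then reconcile the two coboundary integrals: the cross terms $\phi_{\aa'\bb'}$ with $\aa'\notin\LL^\aa$ and $\bb'\in\LL^{\bord\aa}$ appear once in each with opposite signs and cancel, and what survives of the subtrahend is exactly $\sum_{\bb'\in\LL^\aa_{\bord\aa}}\sum_{\cc'\in\LL^{\bord\aa}}\phi^{out}_{\bb'\cc'} = \zeta(\phi^{out})_{\aa\aw\bord\aa}$ by the definition of the degree-one zeta transform. (You do not actually need proposition \ref{zeta-colocality} here; the only combinatorial fact required is the identity $d\LL^{\bord\aa}\setminus d\LL^\aa = \{\,\aa'\bb' : \aa'\in\LL^\aa_{\bord\aa},\ \bb'\in\LL^{\bord\aa}\,\}$, which you state correctly.) Your route buys the reuse of \ref{gauss} twice at the cost of introducing and then cancelling double-counted terms; the paper's direct computation exhibits the interior/outbound split immediately. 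Both hinge on the same two facts coming from adaptedness of $X$ to $\bord\Om$: that $\LL^\aa\cap\bord X = \LL^{\bord\aa}$, and that $\bord\aa\in X$ so the boundary cone is a legitimate cone to which \ref{gauss} applies.
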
 

\begin{proof}
    By definition of $\divint$ and of the action of $\zeta$ on $A_0(X)$ we have: 
    \begin{equation} \zeta(\divint \phi)_\aa 
    \; = \sum_{\bb' \in \LL^\aa} \divint_{\bb'}\phi 
    \; = \sum_{\bb' \in \LL^\aa \setminus \bord \LL^\aa} \div_{\bb'} \phi 
    \; = \sum_{\bb' \in \LL^\aa_{\bord \aa}} \Big(
    \sum_{\aa' \aw \bb'} \phi_{\aa'\bb'} - \sum_{\bb' \aw \cc'} \phi_{\bb'\cc'}
    \Big) \end{equation}
    Terms of the form  $\phi_{\aa'\bb'}$ with $\aa' \in \LL^\aa_{\bord \aa}$ 
    cancel out the $\phi_{\bb'\cc'}$ with $\cc' \in \LL^\aa_{\bord \aa}$
    so that we may write: 
    \begin{equation} \zeta(\divint \phi)_\aa 
    \; = \sum_{\aa' \in \LL^\Om_\aa} \: \sum_{\bb' \in \LL^\aa} \phi^{int}_{\aa'\bb'}
    - \sum_{\bb' \in \LL^\aa_{\bord \aa}} \: \sum_{\cc' \in \LL^{\bord \aa}} 
    \phi^{out}_{\bb'\cc'} \end{equation}
    Recognising zeta transforms on degree one, 
    the outbound flux reads $\zeta(\phi^{out})_{\aa \aw \bord \aa}$ and 
    the inbound flux reads $\tilde \zeta(\phi^{int})_{\Om \aw \aa}$ 
    where $\phi^{int}$ is extended by zero to $\tilde X = \{ \Om \} \cup X$
    and $\tilde \zeta$ acts on $A_1(\tilde X)$.
\end{proof}

The zeta transforms in degree one appearing in proposition \ref{gauss-intercone} 
retrospectively provide with a strong motivation for the higher degree
combinatorics of chapter 3. They will also justify performing 
a Möbius inversion on the standard flux functional $\Phi = - \DF \circ \zeta$.

\subsection{Canonical Flux} 

We now introduce the homological vector field $\tau = \div \phi$ which 
we claim to define a canonical diffusion on interaction potentials. 
Because the evolution of effective hamiltonians $U = \zeta \cdot u$ 
integrates the energy flux on cones, 
Gauss formulas applied $\phi = \mu \cdot \Phi$ will shed light
on the necessity of performing a Möbius inversion in degree one on the 
standard diffusion flux\footnote{
    We believe this combinatorial correction to be a significant improvement
    of the GBP algorithm of \cite{Yedidia-2005}, however for the standard BP 
    algorithm on graphs, one has $\phi = \mu \cdot \Phi = \Phi \mod \R_1(X)$ 
    and the correction is useless.
}.

\begin{defn}\label{diffusion}
    We call: 
    \bi 
    \iii {\rm canonical diffusion flux} 
    the smooth functional $\phi = \mu \circ (-\DF) \circ \zeta$ defined from $A_0(X)$
    to $A_1(X)$, 
    \iii {\rm canonical diffusion} the 
    smooth vector field $\tau = \div \phi$ defined on $A_0(X)$.
    \ei 
\end{defn} 

We shall write $\phi = -\DF^\mu$ to emphasise on the conjugation 
of $-\DF$ by the Möbius transform. 
More generally, the following notations  
will be useful in switching from one point of view to the other. 

\begin{defn} 
    For every smooth map $T : A_\bullet(X) \aw A_\bullet(X)$ we denote by: 
    \bi 
    \iii $T^\zeta = \zeta \circ T \circ \mu$ the $\zeta$-conjugate of $T$,
    \iii $T^\mu = \mu \circ T \circ \zeta$ the $\mu$-conjugate of $T$. 
    \ei
\end{defn} 

The extensions of $\zeta$ and $\mu$ to all degrees 
really allow for two equivalent point 
of views on $A_\bullet(X)$, of which the associated conjugations 
seem to be an essential feature. 
Hence the two equivalent differential equations for 
the canonical diffusion $\tau$ on interaction potentials 
and its conjugate vector field
$\tau^\zeta$ inducing the evolution of effective hamiltonians: 
\begin{equation} \label{mu-Tspt}
{\renewcommand{\arraystretch}{1.6} 
    \begin{tabular}{C{3.5cm} C{1cm}  C{3.5cm}}
$
\disp \left\{ 
\ba{l} 
    \dot u = \div \phi \\
    \phi = - \DF^\mu(u) 
\ea
\right. 
$
        & $\eqvl$ &
$
\disp \left\{ 
\ba{l} 
    \dot U = \div^\zeta (\Phi) \\
    \Phi = - \DF(U) 
\ea
\right.
$
    \end{tabular}
}
\end{equation}
The following theorem best illustrates a first effect of Möbius 
inversion on the effective energy flux. 
This correction is necessary for 
the total flux bound into $\LL^\aa$ to correctly approximate 
the global effective energy of $\Om \setminus \LL^\aa$.
Theorem \ref{Tau} also implies one-step convergence to 
local hamiltonians of $U \wa U - \div^\zeta \DF(U)$ whenever 
the underlying hypergraph contains a maximal cell\footnotemark{}. 
\footnotetext{
    $X$ may for instance contain $\Om$, but the more practical consequence 
    holds for restrictions of the algorithm to cones $\LL^\aa \incl X$,
    for instance when 
    updating units asynchronously and independently of one another, see 
    section 6.1.
}
algorithm $u \wa u - \div \DF^\mu(u)$ restricted to $A_0(\LL^\aa)$ 
below any cell $\aa \in X$. 

\begin{thm} \label{Tau}
The evolution of effective hamiltonians under 
$\dot U = - \div^\zeta \DF(U)$ reads:
\begin{equation} \label{tau-zeta} 
\dot U_\aa = 
    \check \Fh^\Om(U \st \aa) - U_\aa 
\end{equation}
where 
$\check \Fh^\Om(U \st \aa) = \sum_{\om} c_\om \, \Fh^\om(U_\om \st \om \cap \aa)$
denotes the Bethe approximation of 
$\Fh^\Om(U_\Om \st \aa)$ 
when $X$ does not contain $\Om$, and is equal to the latter otherwise. 
\end{thm}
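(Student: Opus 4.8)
The plan is to compute $\dot U_\aa = -\,\div^\zeta\DF(U)_\aa$ explicitly by means of the Gauss formula for $\mu$-transformed currents, and then to reorganise the resulting sum into the announced Bethe sum. Writing $\Phi = -\DF(U)$, so that $\Phi_{\aa'\bb'} = \Fh^{\bb'\aa'}(U_{\aa'}) - U_{\bb'}$, and recalling $\div^\zeta = \zeta\circ\div\circ\mu$, proposition \ref{mu-gauss} (itself a consequence of the local cocycle property \ref{higher-zeta-cocycle}) gives
\begin{equation}
\dot U_\aa = \zeta\big(\div(\mu\cdot\Phi)\big)_\aa
= \sum_{\aa'\not\incl\aa} c_{\aa'}\;\Phi_{\aa'(\aa\cap\aa')}
= \sum_{\aa'\not\incl\aa} c_{\aa'}\Big(\Fh^{(\aa\cap\aa')\aa'}(U_{\aa'}) - U_{\aa\cap\aa'}\Big).
\end{equation}
I would split this into an effective-energy part $A = \sum_{\aa'\not\incl\aa} c_{\aa'}\,\Fh^{(\aa\cap\aa')\aa'}(U_{\aa'})$ and a potential part $B = -\sum_{\aa'\not\incl\aa} c_{\aa'}\,U_{\aa\cap\aa'}$, and treat them separately.

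For $A$, the key remark is that $\Fh^{\cc\cc}$ is the identity, since $\Sigma^{\cc\cc}$ is trivial. Hence in the target Bethe sum the terms with $\om\incl\aa$ collapse, $\Fh^\om(U_\om\st\om\cap\aa) = \Fh^{\om\om}(U_\om) = U_\om$, so that
\begin{equation}
\check\Fh^\Om(U\st\aa) = \sum_{\om\in X} c_\om\,\Fh^{(\om\cap\aa)\om}(U_\om)
= \sum_{\om\incl\aa} c_\om\,U_\om \;+\; A.
\end{equation}
Thus $A$ already accounts for all the $\om\not\incl\aa$ terms of $\check\Fh^\Om(U\st\aa)$, and it only remains to identify $B$ with $\sum_{\om\incl\aa} c_\om\,U_\om - U_\aa$.

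The heart of the argument — and the step I expect to be the main obstacle — is the purely combinatorial identity
\begin{equation}
\sum_{\aa'\in X} c_{\aa'}\,U_{\aa\cap\aa'} = U_\aa.
\end{equation}
I would prove it by grouping the sum according to $\bb = \aa\cap\aa'\in\LL^\aa$ (which lies in $X$ since $X$ is $\cap$-closed) and showing that $g(\bb) = \sum_{\{\aa'\,:\,\aa\cap\aa'=\bb\}} c_{\aa'}$ equals the Iverson symbol $[\bb=\aa]$. Summing $g$ over the interval gives $\sum_{\bb\incl\cc\incl\aa} g(\cc) = \sum_{\aa'\cont\bb} c_{\aa'} = 1$ for every $\bb\incl\aa$, using the equivalence $\bb\incl\aa\cap\aa' \eqvl \aa'\cont\bb$ together with the defining property $\sum_{\aa'\cont\bb} c_{\aa'} = 1$ of the Möbius numbers. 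Möbius inversion on the cone $\LL^\aa$ then forces $g(\bb)=[\bb=\aa]$, whence the identity. Splitting off the $\aa'\incl\aa$ terms (where $\aa\cap\aa'=\aa'$) from this identity yields $B = \sum_{\om\incl\aa} c_\om\,U_\om - U_\aa$, and adding $A$ gives $\dot U_\aa = \check\Fh^\Om(U\st\aa) - U_\aa$ as claimed.

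Finally, for the case where $X$ contains a maximal element $\Om$, I would observe that the Möbius numbers degenerate to $c_\om = [\om=\Om]$: when $\Om$ dominates $X$ the constant field $\mathbf 1$ equals the right action $\mathbf 1_\Om\cdot\zeta$ of the unit mass at $\Om$, so $c = \mathbf 1\cdot\mu = \mathbf 1_\Om\cdot(\zeta\cdot\mu) = \mathbf 1_\Om$. The Bethe sum then collapses to its single surviving term $\check\Fh^\Om(U\st\aa) = \Fh^{\aa\Om}(U_\Om) = \Fh^\Om(U_\Om\st\aa)$, confirming that the approximation is exact in that case.
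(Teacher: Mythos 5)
Your proposal is correct and follows the same route as the paper: the paper derives the theorem by substituting $\Phi = -\DF(U)$ into proposition \ref{mu-gauss} and declares the rest a direct consequence, which is exactly the computation you carry out. The only content you add is to make explicit the combinatorial identity $\sum_{\om} c_\om\, U_{\aa\cap\om} = U_\aa$ (proved correctly via $\sum_{\om\cont\bb}c_\om=1$ and Möbius inversion on $\LL^\aa$) together with the collapse $\Fh^{\om\om}=\mathrm{id}$, which the paper leaves implicit in the substitution step.
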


Substituting the expression of the effective energy gradient 
$\DF(U)$ for $\Phi$, the theorem will come as a direct consequence of:

\begin{prop} \label{mu-gauss}
If $X \incl \Part(\Om)$ does not contain $\Om$,
then for every $\Phi \in A_1(X)$ we have:
\begin{equation} 
    \div^\zeta(\Phi)_\aa =
    \check \Phi_{\Om \aw \aa} 
\end{equation}
where 
$\check \Phi_{\Om \aw \aa} = 
\disp \sum_{\om \not\in \LL^\aa} c_\om \, \Phi_{\om \aw \aa \cap \om}$
    denotes the Bethe approximation of an expected flux $\Phi_{\Om \aw \aa}$.
\end{prop}

\begin{proof} 
Letting 
$\phi = \mu \cdot \Phi$ we have 
$\div^\zeta(\Phi) = \zeta(\delta \phi)$. 
Using notations of prop. \ref{gauss-intercone}, in absence of boundary
the Gauss formula on $\LL^\aa$ reads, following (\ref{zeta-div-zeta-Om}): 
\begin{equation} \div^\zeta(\Phi)_\aa 
= \tilde \zeta( \mu \cdot \Phi)_{\Om \aw \aa} 
\end{equation}
Consider the Möbius transform $\tilde \phi = \tilde \mu \cdot \Phi$
of the natural extension of $\Phi$ to $\tilde X = \{ \Om \} \sqcup X$ by zero.
Locality of $\tilde \mu$ implies 
that $\tilde \phi$ and $\phi$ coincide on $X$ 
and only differ on terms of the form $\Om \aw \bb$ 
where $\phi$ vanishes. 
By Möbius inversion on $\tilde X$ we have
    $\tilde \zeta(\tilde \phi)_{\Om \aw \aa} = 
     \Phi_{\Om \aw \aa} = 0$
so that:
\begin{equation} \tilde \zeta(\phi)_{\Om \aw \aa}  = 
 - \tilde \zeta \big( \tilde \phi - \phi\big)_{\Om \aw \aa} 
= - \zeta \big( i_\Om(\tilde \phi) \big)_\aa \end{equation}
From the inductive construction
of $\mu$ given by
proposition \ref{moebius-rec} we have 
$i_\Om( \tilde \phi) = \mu \big( \tilde \nu_\Om(\Phi) \big)$
which is equivalent to 
$\zeta\big( i_\Om (\tilde \phi) \big) = \tilde \nu_\Om (\Phi)$
    by Möbius inversion on $A_0(X)$. 
    Substituting the identity 
    $c_\bb = - \tilde \mu_{\Om \bb}$ given by proposition \ref{c-mu} 
    into equation \ref{nu} defining $\tilde \nu_{\Om}$, we finally get:
    \begin{equation} - \tilde \nu_\Om(\Phi)_\aa =
    - \sum_{\bb \in \LL^\Om_\aa} \tilde\mu_{\Om \aa} \, \Phi_{\bb \cap (\Om \aw \aa)} 
    = \sum_{\bb \not\in \LL^\aa} c_\bb \, \Phi_{\bb \aw \bb \cap \aa}
    = \check \Phi_{\Om \aw \aa} \end{equation}
    which gives the desired expression for 
    $\tilde \zeta(\phi)_{\Om \aw \aa} = \tilde \zeta( \mu \cdot \Phi)_{\Om \aw \aa}$.

    Note that if $\Om \in X$, Möbius inversion 
    in $A_1(X)$ would have simply given $\div^\zeta(\Phi)_\aa = \Phi_{\Om \aw \aa}$.
\end{proof} 

Another reason for performing Möbius inversion on the energy flux 
comes with the enforcement of Dirichlet boundary conditions on $A_0(\bord X)$. 
The outbound flux from $\LL^\aa$ to its boundary $\bord \LL^\aa = \LL^{\bord\aa}$
now coincides with the effective energy of $\LL^\aa \setminus \bord \LL^\aa$, 
as expressed by the following theorem. 
Its effect will be to ensure consistency of $U \in A_0(X)$ with the prescribed 
values on $\bord X$.

\begin{thm}
The evolution of effective hamiltonians under 
$\dot U = -\divint^\zeta \DF(U)$ reads: 
\begin{equation} 
    \dot U_\aa =  \sum_{\om \cap \aa \not\in \bord X} 
    c_\om \: \Fh^\Om \big( U_\om - U_{\om \cap \aa} \,\big|\, \om \cap \aa \big) 
    \: - \: 
    \Fh^\aa \big( U_\aa - U_{\bord \aa} \big| \,\bord\, \aa \big) 
    \end{equation}
\end{thm}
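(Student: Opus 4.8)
The plan is to read the right-hand side $-\divint^\zeta\DF(U)_\aa = -\zeta\big(\divint(\mu\cdot\DF(U))\big)_\aa$ through the boundary Gauss formula of proposition~\ref{gauss-intercone}, which is exactly the tool that splits a divergence truncated to $\Xint$ into an inbound interior flux and an outbound flux. Writing $\Phi = \DF(U)$ and $\phi = \mu\cdot\Phi$, proposition~\ref{gauss-intercone} gives
\begin{equation}
\dot U_\aa = -\zeta(\divint\phi)_\aa = -\tilde\zeta(\phi^{int})_{\Om\aw\aa} + \zeta(\phi^{out})_{\aa\aw\bord\aa}.
\end{equation}
I expect the outbound term $\zeta(\phi^{out})_{\aa\aw\bord\aa}$ to produce the single boundary summand $-\Fh^{\bord\aa\,\aa}(U_\aa - U_{\bord\aa})$, and the inbound term $-\tilde\zeta(\phi^{int})_{\Om\aw\aa}$ to produce the Bethe sum over $\om$ with $\om\cap\aa\not\in\bord X$.

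For the outbound term, first I would observe that the degree-one zeta transform $\zeta(\,\cdot\,)_{\aa\aw\bord\aa}$ only sees values of its argument on chains whose second index lies in $\LL^{\bord\aa}\incl\bord X$, where $\phi^{out}$ and $\phi$ coincide; hence $\zeta(\phi^{out})_{\aa\aw\bord\aa} = \zeta(\phi)_{\aa\aw\bord\aa}$. Since $\mu$ is inverse to $\zeta$ on all degrees by theorem~\ref{moebius}, this collapses to $\Phi_{\aa,\bord\aa} = \DF(U)_{\aa,\bord\aa} = U_{\bord\aa} - \Fh^{\bord\aa\,\aa}(U_\aa)$. Conditional additivity of effective energy along $A_{\bord\aa}$ (proposition~\ref{eff-add}) then gives $\Fh^{\bord\aa\,\aa}(U_\aa) = U_{\bord\aa} + \Fh^{\bord\aa\,\aa}(U_\aa - U_{\bord\aa})$, so the outbound contribution equals $-\Fh^{\bord\aa\,\aa}(U_\aa - U_{\bord\aa})$, which is exactly the boundary piece of the claim.

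For the inbound term I would replay the combinatorial collapse used in the proof of proposition~\ref{mu-gauss}: extend $\Phi$ to $\tilde X = \{\Om\}\sqcup X$ by zero, then use locality of the extended Möbius transform together with the identity $c_\bb = -\tilde\mu_{\Om\bb}$ of proposition~\ref{c-mu} to rewrite $\tilde\zeta(\phi^{int})_{\Om\aw\aa}$ as a weighted sum of fluxes $c_\om\,\Phi_{\om,\om\cap\aa}$. The only new point is that $\phi^{int}$ truncates the inner summation index to $\Xint$; because the collapse sends this inner index to the effective target $\om\cap\aa$, the truncation reappears precisely as the restriction $\om\cap\aa\not\in\bord X$ on the surviving terms. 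Applying additivity once more, $\Phi_{\om,\om\cap\aa} = \DF(U)_{\om,\om\cap\aa} = -\Fh^{(\om\cap\aa)\,\om}(U_\om - U_{\om\cap\aa})$, which turns $-\tilde\zeta(\phi^{int})_{\Om\aw\aa}$ into $\sum_{\om\cap\aa\not\in\bord X} c_\om\,\Fh^{(\om\cap\aa)\,\om}(U_\om - U_{\om\cap\aa})$, the first piece of the claim (the statement writes this $\om$-local effective energy as $\Fh^\Om(\,\cdot\mid\om\cap\aa)$).

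The main obstacle is the last step of the inbound computation: verifying that the interior truncation on the inner index of $\phi^{int}$ passes cleanly through the Möbius-and-zeta collapse to become the target condition $\om\cap\aa\not\in\bord X$, rather than leaking boundary contributions into the interior sum. This is the boundary-aware refinement of proposition~\ref{mu-gauss}, and it is where the bookkeeping of which interior-to-interior flux terms cancel — exactly as in the proof of proposition~\ref{gauss-intercone} — must be tracked carefully. Once this restriction is established, the two additivity reductions are routine and the two terms assemble into the stated formula.
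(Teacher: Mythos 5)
Your proposal follows the paper's proof of this theorem essentially step for step: the paper likewise splits $\divint^\zeta\DF(U)_\aa$ through the boundary Gauss formula of proposition \ref{gauss-intercone}, identifies the outbound term as $\Phi_{\aa\aw\bord\aa}$ by Möbius inversion, reduces the inbound term to the Bethe sum by the mechanism of proposition \ref{mu-gauss}, and finishes with the additivity of effective energy along $A_{\bord\aa}$ and $A_{\om\cap\aa}$ exactly as you do.

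The one step you flag as the main obstacle is isolated in the paper as lemma \ref{phi-int}: the interior/outbound splitting commutes with the degree-one Möbius transform, i.e.\ $\phi^{out}=\mu\cdot\Phi^{out}$ and hence $\phi^{int}=\mu\cdot\Phi^{int}$ by linearity. The proof is two lines: for $\bb\in\bord X$ the sum defining $\mu(\Phi)_{\aa\bb}$ only involves terms $\Phi_{\aa'\aw\aa'\cap\bb'}$ with $\bb'\incl\bb\incl\bord\Om$, so every target $\aa'\cap\bb'$ lies in $\bord X$ and $\phi^{out}$ depends on $\Phi^{out}$ alone. With this in hand you do not need to replay the collapse of \ref{mu-gauss} with a truncated inner summation index, which is where your bookkeeping worry lives: you simply apply \ref{mu-gauss} verbatim to the field $\Phi^{int}$, and the restriction $\om\cap\aa\notin\bord X$ appears for free because $\Phi^{int}_{\om\aw\om\cap\aa}$ vanishes whenever $\om\cap\aa\in\bord X$. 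Your treatment of the outbound term (the inner index of $\zeta(\,\cdot\,)_{\aa\aw\bord\aa}$ stays inside $\LL^{\bord\aa}\incl\bord X$, where $\phi^{out}$ and $\phi$ agree) is a correct special case of the same observation, so once the lemma is stated your argument closes.
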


The following counterpart of 
the Gauss formula on $\LL^\aa \setminus \bord \LL^\aa$ 
will again prove the theorem. 
It comes as an easy consequence of prop. \ref{mu-gauss} 
and the lemma below. 

\begin{prop} \label{mu-gauss-bord}
If $X$ does not have a maximal element, then for every $\Phi \in A_1(X)$ we have:
\begin{equation} 
    \divint^\zeta(\Phi)_\aa =
    \check \Phi^{int}_{\Om \aw \aa} 
    - \Phi^{out}_{\aa \aw \bord \aa} 
\end{equation}
where 
    $\check \Phi^{int}_{\Om \aw \aa} = 
    \disp \sum_{\om \not\in \LL^\aa} c_\om \, \Phi^{int}_{\om \aw \aa \cap \om}$
    denotes the Bethe approximation of an expected flux $\Phi^{int}_{\Om \aw \aa}$.
\end{prop}

\begin{lemma}\label{phi-int} 
    For every $\phi = \mu \cdot \Phi \in A_1(X)$, 
    we have $\phi^{out} = \mu \cdot \Phi^{out}$ 
    and $\phi^{int} = \mu \cdot \Phi^{int}$. 
\end{lemma}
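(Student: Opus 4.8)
The plan is to recast the statement as a single commutation identity and then read off as much as possible from the order structure of the boundary. Since $\Phi = \Phi^{int} + \Phi^{out}$ and $\mu$ is linear, with $\phi = \mu\cdot\Phi = \mu\cdot\Phi^{int} + \mu\cdot\Phi^{out}$, the two claimed equalities are equivalent: once $\phi^{out} = \mu\cdot\Phi^{out}$ is known, subtracting gives $\phi^{int} = \phi - \phi^{out} = \mu\cdot\Phi^{int}$. So I would only work with the outbound component, i.e. prove that the projector $P^{out}$ sending $\psi_{\aa\bb}$ to $\psi_{\aa\bb}$ when $\bb\in\bord X$ and to $0$ otherwise commutes with $\mu$ on $A_1(X)$.

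The structural fact I would build on is that $\bord X$ is a down-set in $(X,\incl)$: because $\bord\aa = \aa\cap\bord\Om$, one has $\aa\in\bord X \iff \aa\incl\bord\Om$, so $\bb\incl\aa\in\bord X$ forces $\bb\in\bord X$, and dually $\Xint$ is an up-set. I would then use the inductive description $\mu(\Phi)_{\aa\bb} = \nu_\bb\,\nu_\aa(\Phi)$ of proposition \ref{moebius-rec}, with $\nu_{\aa_0}$ given by \eqref{nu}, keeping track of the \emph{target region} — the innermost index of $\Phi$ feeding each summand. In degree one this target is $\bb'\cap\cc'$ with $\cc'\incl\bb$ and $\bb'\incl\aa$, $\bb'\not\incl\cc'$; crucially it is built by intersection, hence can only shrink and always lies inside $\LL^\bb$.

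For the components indexed by $\bb\in\bord X$ this already closes the argument: every target $\bb'\cap\cc'\incl\cc'\incl\bb\incl\bord\Om$ lies in $\bord X$, so $\Phi^{out}$ and $\Phi$ agree on all contributing terms and $(\mu\cdot\Phi^{out})_{\aa\bb} = (\mu\cdot\Phi)_{\aa\bb} = \phi^{out}_{\aa\bb}$, using only down-closedness and the inductive form of $\mu$. The main obstacle is the complementary case $\bb\in\Xint$, where $\phi^{out}_{\aa\bb}=0$ by definition and one must show $(\mu\cdot\Phi^{out})_{\aa\bb}=0$ for every boundary-supported $\Phi^{out}$. Here an interior source $\bb'$ can intersect down to a boundary target, so individual summands survive and the vanishing must come from a cancellation among Möbius coefficients rather than termwise. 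I would attempt it by grouping the double sum $\sum_{\cc'\incl\bb}\mu_{\bb\cc'}\sum_{\bb'}\mu_{\aa\bb'}\,\Phi^{out}_{\bb'(\bb'\cap\cc')}$ by the value of the target $\bb'\cap\cc'$ and trying to expose, for each fixed boundary target, an inner factor $\sum\mu = {\bf 1}$ over an interval of the semilattice, exactly as in the lemma preceding theorem \ref{moebius}.

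Controlling how $\cc'$ ranges once the target is pinned to the boundary while $\bb$ stays interior is the delicate point, and it is the step most sensitive to the $\cap$-closure hypothesis on $X$; this is where I expect the genuine difficulty to concentrate. A cleaner and arguably more faithful route, should this cancellation resist, is to read $\mu\cdot$ here as the degree-one convolution (the module action $(\mu\cdot\Phi)_{\aa\cc} = \sum_{\aa\aw\bb'}\mu_{\aa\bb'}\,\Phi_{\bb'\cc}$) acting on the cone index alone: that operation leaves the target region $\cc$ fixed, hence commutes termwise with the interior/outbound splitting and makes the lemma immediate, which is all that is needed when the identity is subsequently fed into $\div^\zeta$ through proposition \ref{mu-gauss}.
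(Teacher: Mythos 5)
Your reduction to the outbound identity and your treatment of the components indexed by $\bb \in \bord X$ are correct and coincide exactly with the paper's own proof: since $\bord X = X \cap \Part(\bord \Om)$ is a down-set, every target $\bb_0 \cap \bb_1$ arising in $(\mu\cdot\Phi)_{\aa\bb}$ with $\bb \incl \bord\Om$ again lies in $\bord X$, so that component only sees $\Phi^{out}$ and equals $(\mu\cdot\Phi^{out})_{\aa\bb}$. You are also right that the whole difficulty sits in the complementary case $\bb \in \Xint$, where one must show $(\mu\cdot\Phi^{out})_{\aa\bb} = 0$: the paper's proof never addresses this case (it checks the boundary components and then invokes linearity), so you have correctly isolated the gap rather than missed a trick.

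The cancellation you hope to extract by grouping targets does not occur, so that step cannot be completed: the identity fails in the interior components. Take $\Om = \{1,2\}$, $\bord\Om = \{2\}$, $X = \Part(\Om)$, $\om = \{1,2\}$; then $\bord X = \{\{2\},\vide\}$ and $\{1\} \in \Xint$. Inverting $\zeta(\ph)_{\om\{1\}} = \ph_{\om\{1\}} + \ph_{\om\vide} + \ph_{\{2\}\vide}$ gives
\begin{equation}
(\mu\cdot\Psi)_{\om\{1\}} \;=\; \Psi_{\om\{1\}} - \Psi_{\om\vide} + \Psi_{\{1\}\vide}
\end{equation}
so $(\mu\cdot\Phi^{out})_{\om\{1\}} = \Phi_{\{1\}\vide} - \Phi_{\om\vide}$, which is generically nonzero, while $\phi^{out}_{\om\{1\}} = 0$ by definition. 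Equivalently, $\zeta$ does not preserve $A^{out}_1(X)$ (currents aimed at $\vide$ contribute to $\zeta(\ph)_{\om\{1\}}$, whose target is interior), hence neither does $\mu = \zeta^{-1}$. What survives — and is all that your down-set argument, or the paper's, actually establishes — is the componentwise statement over boundary targets: $(\mu\cdot\Phi^{out})_{\aa\bb} = \phi_{\aa\bb}$ and $(\mu\cdot\Phi^{int})_{\aa\bb} = 0$ for every $\bb \in \bord X$. This weaker form is what the outbound term $\zeta(\phi^{out})_{\aa \aw \bord\aa}$ of proposition \ref{mu-gauss-bord} consumes, since it only evaluates $\phi^{out}$ at targets in $\LL^{\bord\aa} \incl \bord X$; the inbound term genuinely needs the interior components and inherits the discrepancy. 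Finally, your fallback of reading $\mu\cdot$ as the incidence convolution acting on the source index alone does make the splitting commute termwise, but that is a different operator from the extended Möbius transform $\mu = \zeta^{-1}$ out of which the canonical flux $\phi = \mu\cdot\Phi$ is built, so it proves a different statement rather than the one in the lemma.
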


\begin{proof}[Proof of proposition \ref{mu-gauss-bord}]
    Letting $\phi = \mu \cdot \Phi$, 
    in accordance with lemma \ref{phi-int} the Gauss formula 
    with boundary \ref{gauss-intercone}
    gives the following expression
    for $\zeta(\divint \phi)_\aa = \divint^\zeta(\Phi)_\aa$:
    \begin{equation}
    \tilde \zeta(\phi^{int})_{\Om \aw \aa} 
    - \zeta(\phi^{out})_{\aa \aw \bord \aa} 
    = \tilde \zeta(\mu \cdot \Phi^{int})_{\Om \aw \aa} 
    - \zeta( \mu \cdot \Phi^{out})_{\aa \aw \bord \aa}
\end{equation}
The outbound flux reads 
$\Phi^{out}_{\aa \aw \bord \aa}$ by Möbius inversion on $A_1(X)$, 
while it follows from proposition \ref{mu-gauss}
    applied to $\div^\zeta(\Phi^{int})_\aa$ 
that the inbound flux is 
the Bethe approximation $\check \Phi^{int}_{\Om \aw \aa}$.
\end{proof}

\begin{proof}[Proof of lemma \ref{phi-int}] 
For all $\aa \in X$ and $\bb \in \bord X$, we have by definition of the action
    of $\mu$ on $A_1(X)$: 
    \begin{equation} \phi^{out}_{\aa\bb} = \sum_{\bb' \in \LL^\bb} \mu_{\bb \bb'} 
    \sum_{\aa' \in \LL^\aa_{\bb'}} \mu_{\aa\aa'} \cdot \Phi_{\aa' \aw \aa' \cap \bb'}
    = \big(\mu \cdot \Phi^{out}\big)_{\aa\bb} \end{equation}
as $\bb \in \bord X$ implies $\aa' \cap \bb' \in \bord X$ for all 
$\bb' \incl \bb$ and $\phi^{out}$ only depends on $\Phi^{out}$. 
We may then conclude from 
$\phi^{out} = \mu \cdot \Phi^{out}$ that
$\phi^{int} = \phi - \phi^{out}$ 
coincides with $\mu \cdot \Phi^{int} = \mu \cdot (\Phi - \Phi^{out})$
by linearity of $\mu$.
\end{proof} 

Note that the properness of $\phi$ follows from that of $\Phi$ 
by invertibility of $\mu$, however
we were only able to prove a local faithfulness
property and the global faithfulness of $\phi$ 
will remain an open question. 

\begin{defn} 
    A flux functional $\phi: A_0(X) \aw A_1(X)$ will 
    be said {\rm locally faithful} if 
    there exists an open neighbourhood ${\cal V}$ of 
    $\{ \DF \circ \zeta = 0 \} \incl A_0(X)$ 
    such that for all $u \in \cal V$: 
    \begin{equation} \div \phi(u) = 0  \quad \eqvl \quad  \DF(\zeta \cdot u) = 0 \end{equation}
\end{defn}

\begin{prop} \label{loc-faithful}
    The flux functional $\phi = \mu \circ (- \DF) \circ \zeta$ 
    is locally faithful. 
\end{prop}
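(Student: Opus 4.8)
The plan is to split the claimed equivalence $\div\phi(u)=0 \Leftrightarrow \DF(\zeta\cdot u)=0$ into a trivial global implication and a delicate local one, the latter handled by linearisation at consistent potentials. First I would record that $\phi=\mu\cdot\Phi$ for the standard flux $\Phi=-\DF\circ\zeta$ of \ref{faithful}, and that $\mu$ is a linear automorphism of $A_\bullet(X)$ by the inversion theorem \ref{moebius}. Consequently $\phi(u)=0 \Leftrightarrow \Phi(u)=0 \Leftrightarrow \DF(\zeta\cdot u)=0 \Leftrightarrow u\in{\cal Z}(X)$, and the consistency direction is immediate and global: if $\DF(\zeta\cdot u)=0$ then $\phi(u)=\mu\cdot\Phi(u)=0$, whence $\div\phi(u)=0$. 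It therefore only remains to produce an open neighbourhood ${\cal V}$ of $\{\DF\circ\zeta=0\}={\cal Z}(X)$ on which $\div\phi(u)=0$ forces $\phi(u)=0$, equivalently on which the zero locus of $\div\phi$ meets ${\cal V}$ exactly in the manifold ${\cal Z}(X)$.

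For this I would linearise at a fixed $u_0\in{\cal Z}(X)$, setting $p=[\e^{-\zeta\cdot u_0}]\in\Gammint(X)$. Since the differential of $\DF$ is the linearised gradient $\nabla$ of \ref{nabla}, adjoint of $\div$ for the metric induced by $p$, the differentials at $u_0$ are $D\phi_{u_0}=-\mu\,\nabla\,\zeta$ and $D(\div\phi)_{u_0}=-\div\,\mu\,\nabla\,\zeta$. As $\mu$ and $\zeta$ are isomorphisms, $\Ker D\phi_{u_0}=\mu\,\Ker\nabla=T_{u_0}{\cal Z}(X)$, using $T\CU(X)=\Ker\nabla$ together with \ref{Z}. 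The key step is then to show the two linearised kernels agree,
\begin{equation}
\Ker D(\div\phi)_{u_0}=\Ker D\phi_{u_0},
\end{equation}
after which the inverse function (or constant rank) theorem promotes the equality to an open neighbourhood ${\cal V}$ of ${\cal Z}(X)$ on which ${\cal Z}(X)$ is the entire zero set of $\div\phi$, which is precisely local faithfulness.

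Cancelling the isomorphisms $\mu$ and $\zeta$, the displayed equality reduces to the transversality condition $\Img(\nabla)\cap\zeta\,\Ker(\div)=\{0\}$ in $A_1(X)$; writing $\nabla=\div^*$ for the $p$-metric, this is $(\Ker\div)^\perp\cap\zeta\,\Ker(\div)=\{0\}$. I expect this to be the main obstacle, and it is exactly what separates the local statement from the open global one. For the standard flux the analogous step is automatic: $\div\nabla$ is self-adjoint and positive semidefinite for the $p$-metric, so $\Ker(\div\nabla)=\Ker\nabla$ and \ref{faithful} holds globally. Inserting $\mu$ between $\div$ and $\nabla$ destroys this self-adjointness, just as the Möbius weights $c_\om$ destroy the sign-definiteness exploited in the integration-by-parts proof of \ref{faithful}, so positivity can no longer be invoked directly.

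To close the argument I would observe that the two subspaces $(\Ker\div)^\perp$ and $\zeta\,\Ker(\div)$ have complementary dimensions in $A_1(X)$, so the transversality is equivalent to their being in direct sum; since this condition depends continuously on $p$ and holds at a reference consistent state (for instance near $\bar 0$, where \ref{T_0 Z} pins down the relevant tangent space), it persists on an open set of consistent states and hence on a neighbourhood of ${\cal Z}(X)$. This yields local faithfulness, while establishing the direct-sum decomposition uniformly over all of $A_0(X)$ is exactly the global faithfulness question that I would leave open.
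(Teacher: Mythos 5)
Your reduction of the statement to the linearised kernel equality $\Ker\big(\div\circ\mu\circ\nabla\circ\zeta\big)=\Ker\big(\nabla\circ\zeta\big)$ at each consistent $u_0$ is the right skeleton, and it is also how the paper organises its proof. The gap is in how you propose to establish that equality. Verifying the transversality $\Img(\nabla)\cap\zeta\cdot\Ker(\div)=\{0\}$ at the single point $\bar 0$ and invoking continuity in $p$ only yields the condition on some open neighbourhood of $\bar 0$ inside ${\cal Z}(X)$; it does not yield it at every point of ${\cal Z}(X)$, which is what local faithfulness requires (an open neighbourhood of the \emph{whole} zero locus $\{\DF\circ\zeta=0\}$). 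Openness of a nondegeneracy condition never propagates it from one point to an entire submanifold, and you cannot appeal to genericity either: chapter 6 of the paper is devoted precisely to the failure, at special consistent points, of the closely related transversality $T_u{\cal Z}(X)\cap\div A_1(X)=\{0\}$, so the sign-indefinite Möbius weights really can destroy such intersection properties away from $\bar 0$. As written, your argument proves local faithfulness only near the uniform state.

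The paper closes this gap with a pointwise computation valid at every consistent $u$. Using the Gauss/Möbius formula \ref{mu-gauss} and \ref{Eba}, one gets $\zeta(\div\phi(u+v))_\aa=\sum_{\om\notin\LL^\aa}c_\om\,\E^\om[\,V_\om-V_{\om\cap\aa}\mid\om\cap\aa\,]+o(v)$ with $V=\zeta\cdot v$. The key trick is then to extend the consistent family $p=[\e^{-\zeta\cdot u}]$ to a global, possibly signed, density $q_\Om$ --- this exists by acyclicity of $A^*_\bullet(X)$ even when $p$ admits no genuine global probability extension --- and to define compatible global conditional expectations $\E^\Om[\,\cdot\mid\bb\,]$. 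The Bethe sum then telescopes, $\sum_\om c_\om(V_\om - V_{\om\cap\aa}) = V_\Om - V_\aa$ with $V_\Om=\tilde\zeta(v)_\Om$, so the linearised flux equals $\E^\Om[\,V_\Om-V_\aa\mid\aa\,]$; its vanishing forces $V_\aa=\E^\Om[\,V_\Om\mid\aa\,]$ for all $\aa$, hence $V_\bb=\E^\aa[\,V_\aa\mid\bb\,]$ by the tower property, i.e.\ $\nabla V=0$. This gives the kernel equality at every $u\in{\cal Z}(X)$ with no continuity argument. If you want to salvage your route, this telescoping identity through a global signed extension is the lemma you are missing.
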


\begin{proof}
    Let $u \in A_0(X)$ 
    such that $\DF(\zeta \cdot u) = 0$ 
    denote a field of consistent interaction potentials.
    Writing $\Phi = - \DF \circ \zeta$ as before, 
    for every $V = \zeta \cdot v \in A_0(X)$ we have 
    according to propositions \ref{mu-gauss} and \ref{Eba}:
    \begin{equation} \label{phi-lin}
    \zeta \big( \div \phi(u + v) \big)_\aa 
    = \check \Phi(u + v)_{\Om \aw \aa} 
    = \sum_{\om \notin \LL^\aa} c_\om \, 
    \E^\om \big[\, V_\om - V_{\om \cap \aa} \,\big|\, \om \cap \aa \,\big]  
    + o\,(v)
    \end{equation}
    where conditional expectations are taken for the consistent statistical field
    $p = [\e^{-U}]$ with $U = \zeta \cdot u$. 
    Now note that although $p \in \Gammint(X)$ may not derive from a global 
    probability density $p_\Om \in \Delta_\Om$, there does exist a global 
    density $q_\Om \in A^*_\Om$ such that $p_\aa = \Sigma^{\aa\Om}(q_\Om)$ 
    for all $\aa \in X$ by acyclicity of $A^*_\bullet(X)$. 
    We may thus define global <<conditional expectation>> maps by 
    letting for all $\om \aw \bb$ in $X$: 
    \begin{equation} \E^{\Om}[\,V_\om \st \bb \,] 
    = \frac {\Sigma^{\bb\Om}(q_\Om \cdot V_\om)} {p_\bb} \end{equation}
    such that $\E^{\Om}[\,V_\om \st \bb \,]$ coincides 
    with $\E^{\om}[\,V_\om \st \bb \,]$ as a consequence 
    of $\Sigma^{\om \Om}(q_\Om) = p_\om$.
    Observing that $V_{\om \cap \aa} = \zeta(v_{|\LL^\aa})_\om$ 
    and letting $V_\Om = \tilde \zeta(v)_\Om$, 
    the linearised right hand side of \ref{phi-lin} now reads:
    \begin{equation} 
    \E^\Om \Big[\: \sum_{\om \in X} c_\om \: \big( V_\om - V_{\om \cap \aa} \big)
    \: \Big| \: \aa \: \Big] 
    \; = \; 
    \E^\Om \big[\, V_\Om - V_\aa \, \big| \, \aa \, \big] \end{equation}
    Up to second order terms in $v$, we hence have
    $\div \phi(u + v) \simeq 0$ if and only if 
    $V_\aa \simeq \E^\Om[V_\Om \,|\, \aa ]$ for all $\aa$, 
    which implies $V_\bb \simeq \E^\aa[V_\aa \,|\, \bb]$ for all $\aa \aw \bb$. 
    Hence $\div \phi$ satisfies the linearised faithfulness condition:
    \begin{equation} \div \phi(u + v) \simeq 0 \quad \eqvl \quad
    \DF( U + V ) \simeq 0 \end{equation} 
    and the tangent spaces of 
    $\{ \div \phi = 0 \}$ and $\{ \DF \circ \zeta = 0 \}$ at $u$ coincide. 
    If one could show $\{ \div \phi = 0 \}$ to be connected, 
    the global faithfulness of $\phi$ would follow.
\end{proof}

\chapter{Geometry of Equilibria}

\bgroup
\renewcommand{\Z}{{\cal Z}}
\newcommand{\Shf}{{\mathscr F}}

In this chapter, we study the influence of the geometry
of $X \incl \Part(\Om)$ on the stationary points of message-passing algorithms, 
as described by the intersection of the manifold $\Z(X)$ of consistent
interaction potentials with homology classes of the form $[h] = h + \div A_1(X)$. 

    \vspace{-0.1cm}
\begin{figure}[H]
    \sbox0{\includegraphics[width=0.45\textwidth]{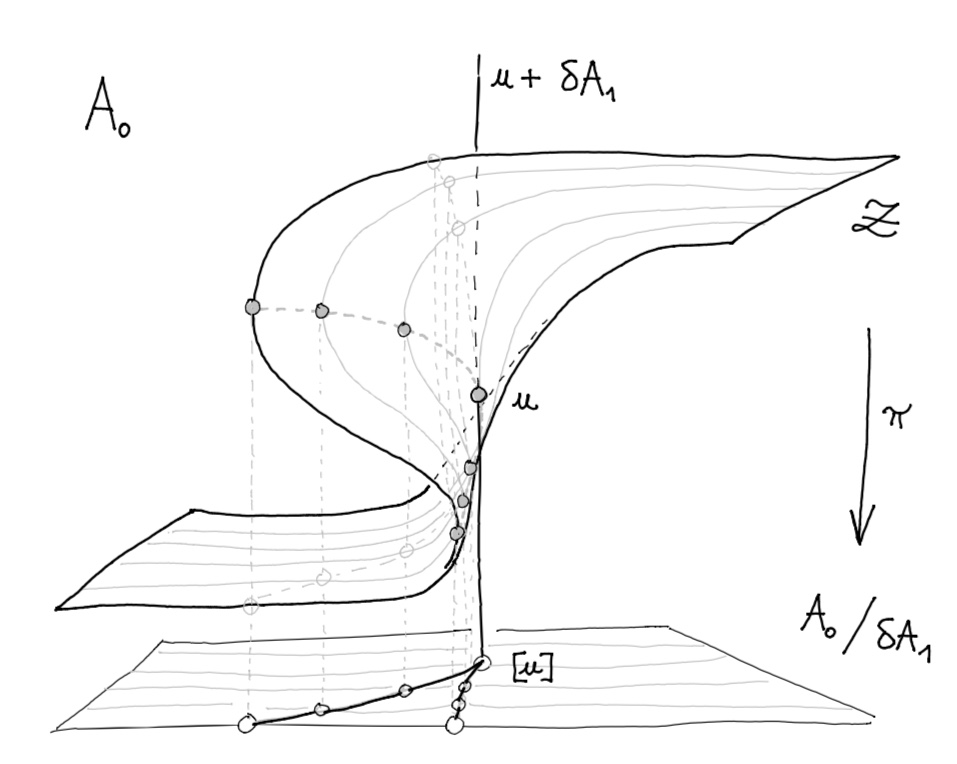}
    }
\begin{center}
    \begin{minipage}{0.8\textwidth}
\centering
\usebox0
    \vspace{-0.2cm}
        \caption{
            A cuspidal singularity of the projection 
            $\Z(X) \aw A_0(X) / \div A_1(X)$. 
            Vertical lines represent classes of homologous potentials, 
            some of which intersect the stationary surface more than once.
        }
\end{minipage}
\end{center}
    \vspace{-0.5cm}
\end{figure}

We first extend a well-known uniqueness theorem on trees to a wider class 
of hypergraphs which we call {\it retractable}. 
This constructive procedure moreover demonstrates the finite-time 
convergence of a message-passing scheme on retractable hypergraphs. 

We introduce a canonical map $T : A_0(X)/\delta A_1(X) \aw \Z(X)$ 
sending any potential $h \in A_0(X)$ to the true 
effective potential $h^* \in \Z(X)$ one seeks to approximates, 
as induced by the global hamiltonian. 
We show that any $h \in \Z(X)$ is fixed by $T$ on retractable hypergraphs, 
so that the unique message-passing equilibrium coincides 
yields the true marginals of the global probability distribution. 
We then provide with the simplest example of a graph 
with loops such that $T$ induces a non-trivial dynamical system on $\Z(X)$,
as the true potential $h^*$ to approximate 
lie outside of the homology class $[h] = h + \delta A_1(X)$ 
message-passing explores.

As numerical studies on graphs have already shown, 
the number of stationary points grows quickly with the number of loops. 
We study the appearance of multiple equilibria in 
$\Z(X) \cap [h]$ through singularities of the 
quotient map $\Z(X) \aw A_0(X)/\delta A_1(X)$, 
which we relate to the spectral properties 
of a pseudo-laplacian operator $L$ describing the linearised diffusion flow,
and finally provide with what we believe to be the first explicit examples 
of bifurcations on graphs. 

\newpage

\section{Uniqueness and Retractability}

In this section, we prove the uniqueness of a stationary state 
on a class of {\it retractable} hypergraphs which generalise trees. 
These hypergraphs also appear in the pseudo-marginal extension problem, 
to which Vorob'ev gave a criterion of solvability in \cite{Vorobev-62}, 
and are thus related to what he called {\it regular} complexes. 

Precising first the sheaf structure of the manifold $\Z(X)$ 
of consistent potentials, we then exhibit an {\it extensibility} 
property of $\Z(X)$, on which the construction of the unique equilibrium 
by successive extension crucially relies. 

Note that this section carries out proofs on a generic sheaf denoted by $\Shf(X)$, 
mainly purposed to represent $\Z(X)$ or one of its tangent fibers. 
Generic notations were also intended so as to leave some space for
an extension of the results to higher-degree analogs of $\Z(X)$, 
e.g. cocycles of $\Ker(\nabla^\mu)$.

\subsection{Hypergraph Geometry and Sheaves}

\begin{defn} 
    A hypergraph $X$ will be called a {\rm simple extension}\footnote{
        In other words, there is a maximal cell $\aa$ and a strict subcell
        $\bb \inclst \aa$ such that every $\cc \not \incl \aa$ intersects 
        $\aa$ inside $\bb$. 
        When $X$ is a graph, this implies that $\aa$ is a {\it terminal edge} 
        i.e. an edge with at most one vertex linked to another edge. 
    } of $X' \incl X$ when: 
    \begin{equation} X = \LL^\aa \sqcup_{\LL^\bb} X' \end{equation}
or equivalenlty, when
$X = \LL^\aa \cup X'$ and $X \cap X' = \LL^\bb$ 
for some $\aa \in X$ and $\bb \in X'$. 
\end{defn}

\begin{defn} 
    A hypergraph $X$ is called a {\rm normal extension}\footnote{
        We try to remain close to the vocabulary of Vorob'ev,
        who called
        a sequence of simple extensions $X = X_n \cont \dots \cont X_0$ 
        a {\it normal series} of $X$. 
        See \cite{Vorobev-62} for his interesting characterisations 
        of retractable (called {\it regular}) simplicial complexes. 
} of $X'$ if there exists a 
sequence: 
\begin{equation} X = X_n \cont \dots \cont X_0 = X' \end{equation} 
of simple extensions from each $X_i$ to $X_{i+1}$.
\end{defn}

\begin{defn} \label{retractable}
    A hypergraph $X$ is called {\rm retractable} if it is a normal 
    extension of $\{ \vide \}$.
\end{defn}

    \vspace{-0.1cm}
\begin{figure}[H]
    \sbox0{\includegraphics[width=0.5\textwidth]{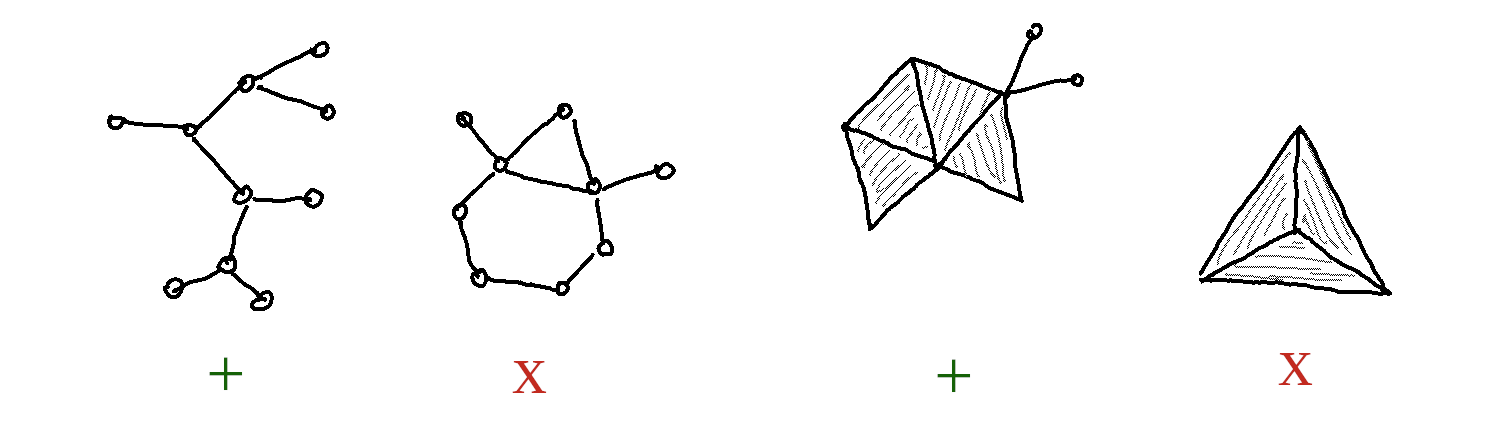}}
\begin{center}
    \begin{minipage}{0.6\textwidth}
\centering
\usebox0
    \vspace{-0.2cm}
    \caption{
        retractable hypergraphs 
        (+) and other ones (x).
    }
\end{minipage}
\end{center}
    \vspace{-0.5cm}
\end{figure}

Note that a graph  $X$ is a normal extension of $X'$ if and only if 
it retracts onto $X'$ in the usual sense.  
In particular, a graph $X$ is retractable if and only if it is a tree, 
{\it i.e.} an acyclic graph. 

\begin{defn} 
    The {\rm Alexandrov topology} of a hypergraph $X \incl \Part(\Om)$ is 
    the topology generated by the basis of open neighbourhoods 
    $(\LL^\aa)_{\aa \in X}$. 
\end{defn}

\begin{prop}
$Y \incl X$ is open for the Alexandrov 
topology if and only if 
$\aa \in Y$ implies $\LL^\aa \incl Y$.
\end{prop}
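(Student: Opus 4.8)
The statement to prove characterizes the open sets of the Alexandrov topology on a hypergraph $X \incl \Part(\Om)$, where the topology is generated by the basis $(\LL^\aa)_{\aa \in X}$ of principal cones below each element. Recall that $\LL^\aa = \{ \bb \in X \st \aa \aw \bb \} = \{ \bb \in X \st \bb \incl \aa \}$. The claim is that $Y \incl X$ is open if and only if $\aa \in Y$ implies $\LL^\aa \incl Y$. Let me think about how to prove this.

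The plan is to prove both implications directly from the definition of the topology generated by a basis. First I would establish that the collection $(\LL^\aa)_{\aa \in X}$ is genuinely a basis: since $\aa \in \LL^\aa$ for every $\aa$, these sets cover $X$, and for the intersection condition I would use that $\LL^\aa \cap \LL^\bb = \{ \cc \st \cc \incl \aa \text{ and } \cc \incl \bb \} = \{ \cc \st \cc \incl \aa \cap \bb \} = \LL^{\aa \cap \bb}$, which lies in the basis precisely because $X$ is closed under intersection (the standing hypothesis from section 2.3). This is the one spot where the $\cap$-closure assumption is quietly used, so I would flag it. With the basis confirmed, the open sets are exactly the arbitrary unions of the $\LL^\aa$.

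\begin{proof}
Recall that $\LL^\aa = \{ \bb \in X \st \bb \incl \aa \}$ and that $X$ is closed under intersection, so that $\LL^\aa \cap \LL^\bb = \LL^{\aa \cap \bb}$ is again a basic open set. The sets $(\LL^\aa)_{\aa \in X}$ thus form a basis, and $Y \incl X$ is open if and only if it is a union of basic open sets.

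Suppose first that $Y$ is open and let $\aa \in Y$. Then there exists $\bb \in X$ with $\aa \in \LL^\bb \incl Y$, so $\aa \incl \bb$. For any $\cc \in \LL^\aa$ we have $\cc \incl \aa \incl \bb$, hence $\cc \in \LL^\bb \incl Y$. Therefore $\LL^\aa \incl Y$.

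Conversely, suppose $\aa \in Y$ implies $\LL^\aa \incl Y$ for all $\aa$. Then $Y = \bigcup_{\aa \in Y} \LL^\aa$: the inclusion $\supseteq$ holds by assumption, and $\subseteq$ holds since $\aa \in \LL^\aa$ for every $\aa \in Y$. As a union of basic open sets, $Y$ is open.
\end{proof}

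The forward direction is routine once the basis is understood: membership of $\aa$ in some basic open $\LL^\bb \incl Y$ forces $\aa \incl \bb$, and downward-closure of inclusion then drags all of $\LL^\aa$ into $Y$. The reverse direction is even more immediate, writing $Y$ as the union of the cones it contains. I do not expect any genuine obstacle here — the statement is essentially the standard fact that Alexandrov-open sets of a poset are exactly the downward-closed sets (in the ordering where $\aa \aw \bb$ means $\bb \incl \aa$). The only subtlety worth care is the verification that the generating family is a basis rather than a mere subbasis, which is where the intersection-closure of $X$ earns its keep; if one did not want to rely on it, one could instead take the topology generated as a subbasis and argue directly that finite intersections of cones are still downward-closed, but invoking $\cap$-closure keeps the argument cleanest.
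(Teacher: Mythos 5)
Your proof is correct. The paper actually states this proposition without any proof, so there is nothing to compare against; your argument is the standard characterization of Alexandrov-open sets as down-sets and fills the gap cleanly. One small remark: the $\cap$-closure of $X$ is not actually needed even for the basis verification, since for any $\cc \in \LL^\aa \cap \LL^\bb$ one has $\cc \in \LL^\cc \incl \LL^\aa \cap \LL^\bb$, which is already the basis criterion — so the argument goes through on an arbitrary hypergraph, as your closing paragraph anticipates.
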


\begin{prop} \label{zeta-alexandrov}
    Given a sub-hypergraph $Y \incl X$, the
    restriction map ${\rm r}_{YX} : A(X) \aw A(Y)$ 
    commutes with the zeta transforms of $A(X)$ and $A(Y)$ 
    if and only if $Y$ is open in $X$. 
    \begin{equation} \bcd 
    A(X) \dar[swap]{{\rm r}_{YX}} \rar{\zeta_X} & A(X) \dar{{\rm r}_{YX}} \\
    A(Y) \rar{\zeta_Y} & A(Y) 
    \ecd \end{equation}
\end{prop}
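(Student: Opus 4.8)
The plan is to reduce the statement to the locality of the zeta transform established earlier, which guarantees that $(\zeta_X f)_\aa$ depends only on the values of $f$ on the cone $\LL^\aa$. First I would unwind the two composites of the square pointwise. Since ${\rm r}_{YX}$ merely forgets the components indexed by $X \setminus Y$, and since the cone of $\aa \in Y$ computed inside $Y$ equals $\LL^\aa \cap Y$, one has for every $\aa \in Y$ and every $f \in A(X)$:
\begin{equation}
    {\rm r}_{YX}(\zeta_X f)_\aa = \sum_{\bb' \in \LL^\aa} [f_{\bb'}]_\aa
    \txt{and}
    \zeta_Y({\rm r}_{YX} f)_\aa = \sum_{\bb' \in \LL^\aa \cap Y} [f_{\bb'}]_\aa
\end{equation}
Commutativity of the square is therefore equivalent to the vanishing, for every $\aa \in Y$ and every $f$, of the discrepancy $\sum_{\bb' \in \LL^\aa \setminus Y} [f_{\bb'}]_\aa$.

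Assuming $Y$ is open, I would invoke the characterisation of Alexandrov-open subsets recalled just above: $\aa \in Y$ implies $\LL^\aa \incl Y$, so that $\LL^\aa \setminus Y = \vide$ and the discrepancy vanishes identically. Conceptually this direction is nothing but locality, once one notes that an open $Y$ is the union of the cones $\LL^\aa$ it contains.

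For the converse I would argue by contraposition. If $Y$ fails to be open, there are $\aa \in Y$ and $\bb' \in \LL^\aa$ with $\bb' \notin Y$, and I would test the square on the field $f$ whose only nonzero component is $f_{\bb'} \neq 0$. The discrepancy then collapses to the single term $[f_{\bb'}]_\aa = j_{\aa\bb'}(f_{\bb'})$, which is nonzero because the extension $j_{\aa\bb'} : A_{\bb'} \aw A_\aa$ is the injective cylindrical inclusion; hence the square does not commute. The only delicate point — and the one place where the specific structure of the complex of local observables enters — is precisely this injectivity: for a general module system whose transition maps may have a kernel the forward direction could fail, so I would rely explicitly on $A_\aa = \R^{E_\aa}$ with cylindrical extensions as transition maps. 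The same reasoning carries over to all degrees, replacing locality in degree zero by the extended locality of $\zeta$ on $A_\bullet(X)$.
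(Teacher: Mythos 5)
Your proof is correct. The paper states this proposition without giving a proof, so there is nothing to compare against line by line; the "if" direction is exactly the locality property of the zeta transform already recorded in section 3.2.2 (commutation with restriction to cones $\LL^\aa$, here assembled over the cones contained in an open $Y$), and your pointwise computation reducing commutativity to the vanishing of $\sum_{\bb' \in \LL^\aa \setminus Y}[f_{\bb'}]_\aa$ is the argument the paper implicitly relies on. The "only if" direction is the part the paper never spells out, and you supply it correctly: testing against a field supported on a single $\bb' \in \LL^\aa \setminus Y$ and invoking injectivity of the cylindrical extension $j_{\aa\bb'} : \R^{E_\bb'} \aw \R^{E_\aa}$ is exactly what is needed, and you are right to flag that this is where the hypothesis $A_\aa = \R^{E_\aa}$ enters (for a general cosystem with non-injective transition maps only the implication from openness to commutativity survives). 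One terminological nit: the direction that can fail without injectivity is "commutes $\Rightarrow$ open", not the forward implication, but your argument places the injectivity where it is actually used, so this does not affect correctness.
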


\begin{prop} 
    The set ${\cal Z}(X)$ of consistent interaction potentials 
    forms a subsheaf of $A_0(X)$ for the Alexandrov topology.
\end{prop}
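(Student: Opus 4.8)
The plan is to exhibit $\Z$ as a subpresheaf of the sheaf $A_0$ and then verify the gluing axiom, the whole argument resting on the locality of the operator $\DF \circ \zeta$ with respect to the Alexandrov topology. Recall that $A_0$ is a sheaf for this topology whose restriction maps ${\rm r}_{YX} : A_0(X) \aw A_0(Y)$ are the coordinate projections; being a product, it is in fact flasque. For every open $Y \incl X$ I would set
\[ \Z(Y) = \big\{\, u \in A_0(Y) \;\big|\; \DF(\zeta_Y \cdot u)_{\aa\bb} = 0 \txt{for all} \aa \cont \bb \in Y \,\big\}, \]
so that $\Z(X)$ is the set of definition \ref{Z} and each $\Z(Y)$ is cut out inside $A_0(Y)$ by the consistency equations indexed by the $1$-chains of $Y$.

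First I would check that $\Z$ is a subpresheaf, i.e. that ${\rm r}_{YX}$ sends $\Z(X)$ into $\Z(Y)$ for $Y$ open. By proposition \ref{zeta-alexandrov} the zeta transform commutes with ${\rm r}_{YX}$ precisely because $Y$ is open, so $\zeta_Y \cdot ({\rm r}_{YX} u) = {\rm r}_{YX}(\zeta_X \cdot u)$; since $\DF(U)_{\aa\bb} = U_\bb - \Fh^{\bb\aa}(U_\aa)$ is indexed over $N_1(Y) \incl N_1(X)$ and each equation involves only the two coordinates $U_\aa, U_\bb$, the consistency equations of $Y$ are exactly the restriction to $N_1(Y)$ of those of $X$. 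Hence $u \in \Z(X)$ forces ${\rm r}_{YX} u \in \Z(Y)$. The separation axiom for $\Z$ is inherited for free, two sections of $\Z(Y)$ that agree on a cover being already equal as sections of $A_0(Y)$.

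The substance lies in the gluing axiom. Given an open cover $Y = \bigcup_i Y_i$ and a compatible family $u^{(i)} \in \Z(Y_i)$, the sheaf $A_0$ furnishes a unique glued $u \in A_0(Y)$ with ${\rm r}_{Y_i Y}\, u = u^{(i)}$, and it remains to see $u \in \Z(Y)$. Fix a $1$-chain $\aa \cont \bb$ in $Y$ and choose $i$ with $\aa \in Y_i$. Openness of $Y_i$ for the Alexandrov topology means $\LL^\aa \incl Y_i$, so in particular $\bb \in Y_i$ and the whole cone below $\aa$ lies in $Y_i$. By the locality of $\zeta$ recorded in section 3.2.2 — the factorisation of $i_\aa \zeta$ through ${\rm r}_\aa$ — the quantity $\DF(\zeta_Y \cdot u)_{\aa\bb}$ depends only on $u|_{\LL^\aa} = u^{(i)}|_{\LL^\aa}$; since $u^{(i)} \in \Z(Y_i)$ this equation already vanishes. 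As $\aa \cont \bb$ was arbitrary, $u$ is consistent and $u \in \Z(Y)$.

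The main obstacle, and really the only nontrivial point, is to articulate cleanly why consistency is a local condition adapted to this topology: the equation attached to $\aa \cont \bb$ reads off $U_\aa = \zeta_\aa({\rm r}_\aa u)$ and $U_\bb = \zeta_\bb({\rm r}_\bb u)$ with $\LL^\bb \incl \LL^\aa$, so it is determined by the germ of $u$ on the cone $\LL^\aa$. The downward-closedness of open sets is exactly the property needed for each such cone to sit inside a single member of any open cover, which is what transfers the locality of $\DF \circ \zeta$ into the sheaf-theoretic gluing above; the closure of $X$ under intersection further guarantees $\LL^\aa \cap \LL^{\aa'} = \LL^{\aa \cap \aa'}$, so that these cones behave coherently under the topology.
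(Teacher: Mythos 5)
Your proof is correct and follows the same route as the paper: restriction maps preserve consistency via the commutation of $\zeta$ with restriction to Alexandrov-open subsets (proposition \ref{zeta-alexandrov}), and gluing holds because each consistency equation indexed by $\aa \cont \bb$ depends only on the restriction of $u$ to the cone $\LL^\aa$, which lies entirely in one member of any open cover. The only difference is one of detail — the paper dispatches the gluing axiom in one line as holding ``by construction'' since $\Z(X) = \colim_\aa \Z(\LL^\aa)$, whereas you spell out exactly the locality argument that makes that remark true.
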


\begin{proof} 
For every $u \in \Z(X)$ and $Y \incl X$, 
one has $u_{|Y} \in \Z(Y)$ as 
from 
$\DF_Y( \zeta_Y \cdot u_{|Y}) = \DF_X(\zeta_X \cdot u)_{|Y} = 0$
by proposition \ref{zeta-alexandrov}.  
The sheaf colimit property is satisfied 
by construction as $\Z(X) = \colim_\aa \Z(\LL^\aa)$. 
\end{proof}

\subsection{Cocyclic and Extensible Sheaves}

\begin{defn}
A subsheaf $\Shf(X)$ of $A(X)$ will be said {\rm cocyclic} if: 
\begin{equation} \forall u \in A(X) \quad \exists! v = u + \div \ph \in \Shf(X) \end{equation}
Equivalently, $\Shf(X)$ is cocyclic if 
it is a section of the fibration $A(X) \aw A(X) \big/ \div A(X)$. 
\end{defn}

Letting $\nabla = \div^*$ for a given metric, 
the space $\Ker(\nabla) \incl A(X)$ of cocycles defines,
by orthogonality of $\Ker(\nabla)$ and $\Img(\div)$, 
the fundamental example of a globally cocyclic subsheaf of vector spaces.
This property is in turn equivalent to the uniqueness of equilibrium
of the heat equation $\dot u = \div(\nabla u)$ 
inside the homology class of any $h \in A_0(X)$.

\begin{defn}
The subsheaf $\Shf(X)$ is said {\rm locally cocyclic}
if $\Shf(\LL^\aa)$ is cocyclic for all $\aa \in X$.
\end{defn}

\begin{defn} 
Given a locally cocyclic subsheaf 
$\Shf(X) \incl A(X)$, we denote by $T_\aa : A(X) \aw A(X)$ the map extending 
$T_\aa : A(\LL^\aa) \aw \Shf(\LL^\aa)$ identities i.e. 
    \begin{equation} \tag{6.4.bis}
        T_\aa(u)_\bb = \left\{ \ba{ll}
        u_\bb + \delta_\bb \ph & \txt{if} \bb \incl \aa  \\
        u_\bb & \txt{otherwise} 
        \ea \right. 
    \end{equation}
\end{defn}

When $\Shf(X) = \Z(X)$ is the manifold of consistent potentials, 
theorem \ref{Tau} will imply that the map 
$u \mapsto T_\aa(u)$ 
is realised by one-step iteration 
of the $\mu\BP_1$ diffusion algorithm 
with messages restricted\footnote{
    It is common and often advised to update regions asynchronously:
    for instance draw a random $\aa$, compute messages below $\aa$, 
    update beliefs accordingly \cite{Yedidia-2005}. 
    Although argued to perform better, asynchronous 
    variations keep the same stationarity conditions and homological 
    structure of the synchronous algorithm. 
} to $\LL^\aa$
i.e. $\ph \in A_1(\LL^\aa)$ is given by $- \DF^\mu(u)_{|\LL^\aa}$,
as precised by the following proposition. 

\begin{prop} \label{Z-loc-cocyclic}
The subsheaf ${\cal Z}(X)$ of consistent interaction potentials is locally cocyclic.
\end{prop}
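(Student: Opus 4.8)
The plan is to establish cocyclicity of $\mathcal{Z}(\LL^\aa)$ for each fixed $\aa \in X$, local cocyclicity being exactly this statement for every cone. Since $\LL^\aa$ is open for the Alexandrov topology and $\mathcal{Z}$ is a subsheaf of $A_0$, the restriction $\mathcal{Z}(\LL^\aa)$ is the space of consistent potentials of the sub-hypergraph $\LL^\aa$, which by definition \ref{Z} equals $\mu \cdot \CU(\LL^\aa)$. The whole argument rests on the fact that $\LL^\aa$ carries a \emph{maximal} element, namely $\aa$ itself, and that $\LL^\aa$ remains closed under intersection as a down-set of $X$.

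First I would parametrise $\CU(\LL^\aa)$ by a single top value. As $\aa$ dominates every $\bb \in \LL^\aa$, the equations $\DF(U)=0$ force $U_\bb = \Fh^{\bb\aa}(U_\aa)$ for all $\bb \incl \aa$; conversely, functoriality of effective energy, $\Fh^{\cc\bb}\circ\Fh^{\bb\aa}=\Fh^{\cc\aa}$, shows every collection of this form is consistent. Hence top evaluation $U \mapsto U_\aa$ is a diffeomorphism from $\CU(\LL^\aa)$ onto $A_\aa$, with inverse $H_\aa \mapsto (\Fh^{\bb\aa}(H_\aa))_{\bb \in \LL^\aa}$; applying $\mu$ transports this to a parametrisation of $\mathcal{Z}(\LL^\aa)$ by the same $A_\aa$.

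Next I would identify the homology classes. By Corollary \ref{H0} applied to the $\cap$-closed cone $\LL^\aa$, two potentials of $A_0(\LL^\aa)$ are homologous iff they share the total energy $\sum_{\bb \incl \aa} u_\bb$, and this sum is precisely $(\zeta \cdot u)_\aa$ because $\aa$ is the top of $\LL^\aa$; thus the total energy identifies $A_0(\LL^\aa)/\div A_1(\LL^\aa)$ with $A_\aa$. The crucial matching is that for $v = \mu \cdot V \in \mathcal{Z}(\LL^\aa)$ one has $\zeta \cdot v = V$, so the total energy of $v$ equals $V_\aa$, i.e. the very parameter indexing it (using $\Fh^{\aa\aa}=\mathrm{id}$). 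Consequently, given any $u$ with total energy $H_\aa=(\zeta \cdot u)_\aa$, the potential $v = \mu\cdot(\Fh^{\bb\aa}(H_\aa))_\bb$ lies in $\mathcal{Z}(\LL^\aa)$, has total energy $H_\aa$, and is therefore homologous to $u$ by \ref{H0}; any other consistent representative of the class has the same total energy, hence the same top parameter, hence equals $v$. This produces the unique $v = u + \div\ph \in \mathcal{Z}(\LL^\aa)$ demanded by cocyclicity.

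The one point I would take care to keep honest is the normalisation component: consistency also pins down $V_\vide = \Fh^{\vide\aa}(V_\aa)$, so one might fear an extra constraint incompatible with the homology bookkeeping. The resolution is that summing over the \emph{entire} cone telescopes through the zeta transform to the single value $V_\aa$, so the normalisation is not an independent coordinate and never clashes with the total-energy matching. I expect this interplay --- rather than any single computation --- to be the only subtle step; everything else follows formally from functoriality of $\Fh$ and Corollary \ref{H0}.
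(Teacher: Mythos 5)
Your proof is correct, but the existence half takes a genuinely different route from the paper's. The paper proceeds constructively: it sets $V_\bb = \Fh^{\bb\aa}(U_\aa)$, checks consistency by functoriality (as you do), and then \emph{exhibits} an explicit flux $\ph \in A_1(\LL^\aa)$ supported on the pairs $\aa \cont \bb$, namely $\ph_{\aa\bb} = \sum_{\bb \aw \cc} \mu_{\bb\cc}\,\Fh^{\cc\aa}(U_\aa - U_\cc)$, so that $U + \zeta \cdot \div \ph = V$; uniqueness then follows from the Gauss formula \ref{gauss}, which forces $U_\aa = V_\aa$ for any two homologous consistent potentials on the cone. You instead parametrise ${\cal Z}(\LL^\aa)$ by the top value $V_\aa$ and invoke Corollary \ref{H0} on the $\cap$-closed hypergraph $\LL^\aa$ to conclude that equal total energy implies homologous. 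The two arguments are not of equal weight: the direction of \ref{H0} you need for existence (equal total energy $\Rightarrow$ homologous) is the hard one, resting on the interaction decomposition theorem \ref{int-decomposition} and the acyclicity machinery of chapter 2, whereas the paper's explicit $\ph$ keeps the proof elementary and, more importantly, identifies the homologous representative with the one-step update of the $\mu\BP$ diffusion restricted to $\LL^\aa$ (the ``correspondence with theorem \ref{Tau}'' the paper points out), which is what makes the proposition operationally useful in section 6.2. Your uniqueness argument, by contrast, only uses the easy direction of \ref{H0} and is essentially equivalent to the paper's Gauss-formula step. Your closing remark on normalisation is correct and not actually a subtlety: when $\vide \in \LL^\aa$, the value $V_\vide = \Fh^{\vide\aa}(V_\aa)$ is determined by the top parameter like every other component, so nothing extra needs to be reconciled.
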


\begin{proof}
Given $\aa \in X$ and $u \in A(\LL^\aa)$, 
let $U = \zeta(u)$ and let $V_\bb = \Fh^{\bb\aa}(U_\aa)$ 
for every $\bb \incl \aa$. 
By consistency of $V$, the 
associated interaction potentials $v$ are in ${\cal Z}(\LL^\aa)$. 
One also has $v - u \in \div A_1(\LL^\aa)$ as defining for instance 
$\ph \in A_1(\LL^\aa)$ by: 
\begin{equation} \ph_{\aa\bb} = \sum_{\bb \aw \cc} 
\mu_{\bb\cc} \; \Fh^{\cc\aa}(U_\aa - U_\cc) \end{equation} 
and $\ph_{\bb\cc} = 0$ for all $\bb \neq \aa$ 
yields $U + \zeta \cdot \div \ph = V$. 
Note the correspondence with theorem \ref{Tau}. 

Furthermore, assume $v = u + \div A_1(\LL^\aa)$ 
with $u, v \in {\cal Z}(\LL^\aa)$.
Denoting by 
$U, V \in A(\LL^\aa)$ the associated 
local hamiltonians, 
    the Gauss formula \ref{gauss}
implies $U_\aa = V_\aa$ while consistency 
gives $V_\bb = U_\bb = \Fh^{\bb\aa}(U_\aa)$ for every $\bb \in \LL^\aa$. 
Hence $u$ and $v$ coincide on $\LL^\aa$ by Möbius inversion. 
\end{proof}

Belief propagation derives from the flux functional 
$\Phi(u) = \DF( \zeta \cdot u)$ and we shall see 
that the sheaf $\mathcal{Z}(X) = \{ \DF \circ \zeta  = 0 \}$ 
of manifolds in $A_0(X)$ is in general only locally cocyclic. 
To prove the uniqueness of its equilibria 
when the hypergraph $X$ is retractable, 
we have to show that the locally cocyclic sheaf ${\cal Z}(X)$ 
is then {\it globally} cocyclic. 
Our proof is constructive and relies on another property of ${\cal Z}(X)$, 
related to the extension of sections $u \in {\cal Z}(X')$ 
along normal extensions $X \cont X'$, under appropriate homological constraints. 

\begin{defn} Given a subsheaf $\Shf(X)$ of $A(X)$ and $X' \incl X$, 
    we introduce the subsets of $A(X)$: 
\bi
    \iii 
    $\Shf(X, X') = 
    \big\{ u \in A(X) \:\big|\: 
    u_{|\LL^\aa} \in \Shf(\LL^\aa) \;{\rm for \; all}\; 
    \aa \in X \setminus X' \big\}$
    \iii 
    $\Shf(X|X') = 
    \big\{ u \in \Shf(X,X') + A(X') \;\big|\; u_{|X'} \in \Shf(X') \big\}$
\ei
\end{defn} 

Note that according to this definition, 
$\Shf(X,X) = A(X)$ while $\Shf(X|X) = \Shf(X)$.

\begin{defn} 
    A locally cocylic sheaf $\Shf(X)$ of $A(X)$ is said {\rm extensible} 
    if for every $\aa \aw \bb$ in $X$ and
    every $u \in \Shf(\LL^\aa|\LL^\bb)$, its unique homologous representative 
    $v \in \Shf(\LL^\aa)$ extends $u_{|\LL^\bb}$.
\end{defn}

$\Shf(X)$ is extensible if and only if the following commutative diagram 
is commutative for all $\aa \cont \bb$:
\begin{equation} \bcd 
\Shf(\LL^\aa | \LL^\bb) \drar \rar{T_\aa} 
& \Shf(\LL^\aa) \dar \\
    & \Shf(\LL^\bb) 
\ecd \end{equation}

The subsheaf $\Shf(X|X')$ is by definition the inverse image 
of $\Shf(X') \incl A(X')$ under the restriction map 
$\Shf(X,X') + A(X') \aw A(X')$. 
In the following commutative diagram, 
where all horizontal arrows are inclusions and all 
descending arrows are restrictions to $X'$, 
the central square is hence a pull-back square. 
\begin{equation} \bcd 
\Shf(X) \rar \drar 
    & \Shf(X|X') \rar \dar 
    & \Shf(X,X') + A(X') \dar \rar 
    & A(X) \dlar \\
& \Shf(X') \rar & A(X') & 
\ecd \end{equation}
As we shall see, whenever $X$ is a normal extension of $X'$ and 
$\Shf(X)$ is extensible, 
there exists reciprocal projections $f: A(X) \aw \Shf(X, X')$
and $g: \Shf(X|X') \aw \Shf(X)$ preserving the homological constraints.

\begin{prop}
The subsheaf ${\cal Z}(X)$ of consistent interaction potentials
is extensible. 
\end{prop}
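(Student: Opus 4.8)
The plan is to verify the commutativity of the extensibility square directly: for every $\aa \cont \bb$ in $X$ and every $u \in \Z(\LL^\aa | \LL^\bb)$, I would show that the unique consistent representative $T_\aa(u) \in \Z(\LL^\aa)$ satisfies $T_\aa(u)_{|\LL^\bb} = u_{|\LL^\bb}$. Since $\zeta$ is a local isomorphism on each cone (invertible through $\mu$ and commuting with restriction to the open set $\LL^\bb$), it is enough to match the local hamiltonians on $\LL^\bb$, i.e. to prove $\zeta(T_\aa u)_\cc = \zeta(u)_\cc$ for all $\cc \incl \bb$. Writing $U = \zeta(u)$ and recalling from the proof of proposition \ref{Z-loc-cocyclic} that $T_\aa(u)$ is the consistent potential with local hamiltonians $V_\cc = \Fh^{\cc\aa}(U_\aa)$ for $\cc \incl \aa$, the whole statement collapses to the single identity $\Fh^{\cc\aa}(U_\aa) = U_\cc$ for every $\cc \incl \bb$.

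The structural observation I would exploit first is that the $\Z(\LL^\aa, \LL^\bb)$-component of $u$ is in fact globally consistent on the whole cone. Indeed, by definition $u = w + s$ with $w \in \Z(\LL^\aa, \LL^\bb)$ and $s \in A(\LL^\bb)$; because $\aa$ itself lies in $\LL^\aa \setminus \LL^\bb$, the defining requirement of $\Z(\LL^\aa, \LL^\bb)$ forces $w = w_{|\LL^\aa} \in \Z(\LL^\aa)$. As $s$ is supported inside $\LL^\bb$, the zeta transform splits on the top cell as $U_\aa = W_\aa + S_\bb$, where $W = \zeta(w)$, $S = \zeta(s)$ and $S_\bb = \sum_{\cc \incl \bb} s_\cc \in A_\bb$, while the same computation on $\bb$ gives $U_\bb = W_\bb + S_\bb$.

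The two remaining ingredients are then assembled from chapter 4. Using additivity of effective energy along $A_\bb$ (the corollary of pairwise conditional additivity) together with the consistency of $w$, I obtain
\[ \Fh^{\bb\aa}(U_\aa) = \Fh^{\bb\aa}(W_\aa + S_\bb) = \Fh^{\bb\aa}(W_\aa) + S_\bb = W_\bb + S_\bb = U_\bb. \]
On the other hand, the hypothesis $u \in \Z(\LL^\aa | \LL^\bb)$ also demands $u_{|\LL^\bb} \in \Z(\LL^\bb)$, that is $\Fh^{\cc\bb}(U_\bb) = U_\cc$ for every $\cc \incl \bb$. Invoking functoriality $\Fh^{\cc\aa} = \Fh^{\cc\bb} \circ \Fh^{\bb\aa}$ for $\cc \incl \bb \incl \aa$, I conclude $\Fh^{\cc\aa}(U_\aa) = \Fh^{\cc\bb}(U_\bb) = U_\cc$, which is exactly the target identity; Möbius inversion on $\LL^\bb$ then returns $T_\aa(u)_{|\LL^\bb} = u_{|\LL^\bb}$.

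The only delicate point, and the step I would flag as the crux, is the bookkeeping around the splitting $u = w + s$: one must recognise that $w$ is consistent on the \emph{top} cell $\aa$ and not merely on proper subcones, which is precisely what licenses applying additivity of $\Fh^{\bb\aa}$ to the summand $S_\bb$. Once this is in place, the argument is a mechanical chaining of functoriality, additivity, and the locality and invertibility of $\zeta$.
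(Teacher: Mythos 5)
Your proof is correct and follows essentially the same route as the paper's: the same splitting $u = w + s$ with $w$ consistent on the full cone and $s$ supported in $\LL^\bb$, the same use of additivity of $\Fh^{\bb\aa}$ along $A_\bb$ combined with functoriality $\Fh^{\cc\aa} = \Fh^{\cc\bb}\circ\Fh^{\bb\aa}$, and the same conclusion by Möbius inversion and locality of $\zeta$. The "crux" you flag — that $\aa \in \LL^\aa\setminus\LL^\bb$ forces the $\Z(\LL^\aa,\LL^\bb)$-component to be consistent on the top cell — is exactly the point the paper's decomposition $u = h + b$ with $h \in \Z(\LL^\aa)$ relies on.
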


\begin{proof}
Given $\aa \cont \bb$ in $X$ and $u \in \Z(\LL^\aa|\LL^\bb)$, 
let us write $u = h + b$ 
with $h \in \Z(\LL^\aa)$ and $b \in A(\LL^\bb)$ such 
that $u_{|\LL^\bb} \in \Z(\LL^\bb)$.
Letting $H = \zeta(h)$ and $B = \zeta(b)$ and 
following the proof of \ref{Z-loc-cocyclic}, 
the unique homologous $v \in \Z(\LL^\aa)$ is defined by Möbius inversion of 
the local hamiltonians $(V_\cc)$ defined  for every $\cc \in \LL^\aa$ by: 
\begin{equation} V_\cc = \Fh^{\cc\aa}(H_\aa + B_\bb) \end{equation} 
For all $\bb \incl \aa$, we have by consistency of $H$ on $\LL^\aa$ and by
consistency of $U = H + B$ on $\LL^\bb$: 
\begin{equation} V_\cc = \Fh^{\cc\bb} \Big( \Fh^{\bb\aa}(H_\aa) + B_\bb \Big)  
= \Fh^{\cc\bb}(H_\bb + B_\bb) = H_\cc + B_\cc  = U_\cc\end{equation}
It follows that $v_{|\LL^\bb} = u_{|\LL^\bb}$ 
by \ref{zeta-alexandrov}, 
implying commutativity of restrictions with Möbius inversions.
\end{proof}

\subsection{Forward and Backward Passes}

\begin{thm} 
    Assume $X$ is a normal extension of $X'$ and $\Shf(X) \incl A(X)$ 
    is an extensible subsheaf. 
    Then every $u \in \Shf(X|X')$
    admits a unique homologous representative $v \in \Shf(X)$, 
    which extends $u_{|X'}$.
\end{thm}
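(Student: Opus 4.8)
The plan is to induct on the length $n$ of a normal series $X = X_n \cont \dots \cont X_0 = X'$, the inductive step being governed by a single simple extension together with a forward and a backward pass. When $n=0$ we have $X=X'$, so $\Shf(X|X')=\Shf(X)$ and $u$ is its own unique homologous representative, extending $u_{|X'}=u$ trivially. So assume the claim for the series $X_{n-1}\cont\dots\cont X_0=X'$ of length $n-1$, write $Y=X_{n-1}$, and use the simple-extension data $\aa\in X_n$, $\bb\in Y$, $\bb\inclst\aa$, $X_n\cap Y=\LL^\bb$, $X_n\setminus Y=\LL^\aa\setminus\LL^\bb$.

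A preliminary observation I would record first is that both $Y$ and $\LL^\aa$ are open for the Alexandrov topology of $X_n$: for $Y$ one checks, using the defining property of a simple extension (every $\cc\not\incl\aa$ meets $\aa$ inside $\bb$), that no subcell of a cell of $Y$ lies in the branch $\LL^\aa\setminus\LL^\bb$, and $\LL^\aa$ is a basic open. Openness is what I rely on repeatedly: restriction to an open set preserves membership in the subsheaf $\Shf$, cones computed in $Y$ or $\LL^\aa$ agree with those computed in $X_n$ (Proposition \ref{zeta-alexandrov}), and a $1$-field supported on $A_1(Y)$ (resp.\ $A_1(\LL^\aa)$) has divergence vanishing off $Y$ (resp.\ $\LL^\aa$). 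Splitting $u=w+w'$ with $w\in\Shf(X_n,X')$, $w'\in A(X')$, and using $u_{|X'}\in\Shf(X')$ together with restriction to the open $\LL^\bb\incl X'$, one then gets $u_{|Y}\in\Shf(Y|X')$ and $u_{|\LL^\aa}\in\Shf(\LL^\aa|\LL^\bb)$.

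Existence proceeds in two passes. In the backward pass, the induction hypothesis applied to $u_{|Y}\in\Shf(Y|X')$ yields a unique $\tilde v\in\Shf(Y)$ with $\tilde v=u_{|Y}+\div\psi$, $\psi\in A_1(Y)$, extending $u_{|X'}$. In the forward pass, I form $\hat u\in A(\LL^\aa)$ equal to $\tilde v$ on $\LL^\bb$ and to $u$ on the branch; since $\hat u$ differs from $u_{|\LL^\aa}$ only on $\LL^\bb$ and $\hat u_{|\LL^\bb}=\tilde v_{|\LL^\bb}\in\Shf(\LL^\bb)$, one has $\hat u\in\Shf(\LL^\aa|\LL^\bb)$, and the extensibility hypothesis furnishes the unique consistent $v^\aa\in\Shf(\LL^\aa)$ with $v^\aa=\hat u+\div\chi$, $\chi\in A_1(\LL^\aa)$, extending $\tilde v_{|\LL^\bb}$. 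Setting $v_{|Y}=\tilde v$ and $v_{|\LL^\aa}=v^\aa$ glues a well-defined field, since the two sections agree on $\LL^\bb$; as $\{Y,\LL^\aa\}$ is an open cover with intersection $\LL^\bb$, the sheaf axiom gives $v\in\Shf(X_n)$, and $v_{|X'}=\tilde v_{|X'}=u_{|X'}$. Finally $v=u+\div(\psi+\chi)$: viewing $\psi,\chi$ in $A_1(X_n)$ by zero-extension, openness forces $\div\psi$ to vanish on the branch and $\div\chi$ to vanish on $Y$ (on $\LL^\bb$ precisely because $v^\aa$ extends $\tilde v_{|\LL^\bb}$), so $u+\div\psi$ restricts to $\hat u$ on $\LL^\aa$ and to $\tilde v$ on $Y$, and adding $\div\chi$ produces $v$.

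For uniqueness I would show that any $v'\in\Shf(X_n)$ homologous to $u$ and extending $u_{|X'}$ coincides with the field just built, again by restricting to the open sets $\LL^\aa$ and $Y$ and invoking the induction hypothesis on $Y$ and the cocyclicity of $\Shf(\LL^\aa)$ on the cone. The crux, and the main obstacle, is to propagate the \emph{global} relation $v\sim v'$ down to these subcomplexes, i.e.\ to certify $v_{|Y}\sim v'_{|Y}$ in $A_1(Y)$ and $v_{|\LL^\aa}\sim v'_{|\LL^\aa}$ in $A_1(\LL^\aa)$. By Corollary \ref{H0} this reduces to matching total energies; here the decisive structural fact is that in a simple extension the coboundary $d\LL^\aa$ meets $X_n$ only through the attaching face $\LL^\bb$ (any $\aa'\not\incl\aa$ incident to some $\bb'\in\LL^\aa$ forces $\bb'\incl\aa'\cap\aa\incl\bb$), so that by the Gauss formula \ref{gauss} the net flux into $\LL^\aa$ is carried entirely across $\LL^\bb$. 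Combining this with the equality of the two fields on $\LL^\bb\incl X'$ and with the identity $\sum_{\cc\in\LL^\aa}w_\cc=(\zeta\cdot w)_\aa$, which expresses the total cone energy as the top effective hamiltonian, I would match the cone-homology classes, deduce $v_{|\LL^\aa}=v'_{|\LL^\aa}$ from local cocyclicity, and then $v_{|Y}=v'_{|Y}$ by induction, giving $v=v'$. Making this energy bookkeeping watertight, rather than the essentially formal gluing in the existence step, is where the real care will be required.
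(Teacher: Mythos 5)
Your existence argument is correct and is essentially the paper's own construction: it is the backward pass of proposition \ref{backward-pass}, peeling off the outermost cone $\LL^\aa$ of the normal series, settling $Y = X_{n-1}$ by the induction hypothesis, and then using extensibility on $\LL^\aa$ so that the new representative extends $\tilde v_{|\LL^\bb}$ and does not disturb what was done on $Y$. Your explicit check that $Y$ and $\LL^\aa$ are Alexandrov-open, the gluing of $\tilde v$ and $v^\aa$ along $\LL^\bb$, and the bookkeeping $v = u + \div(\psi+\chi)$ are a slightly more careful packaging of the same steps and are sound. (One slip: the claim $u_{|\LL^\aa}\in\Shf(\LL^\aa|\LL^\bb)$ is not justified, since it would need $u_{|\LL^\bb}\in\Shf(\LL^\bb)$; but your construction only ever uses $\hat u$, whose restriction to $\LL^\bb$ is $\tilde v_{|\LL^\bb}$, so existence is unaffected.)

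The uniqueness half has a genuine gap. First, the inclusion $\LL^\bb\incl X'$ that you invoke is false beyond the base case: the attaching cell $\bb_j$ of the $j$-th simple extension lies in $X_{j-1}$, in general strictly larger than $X'$, so the two candidate representatives are not known to agree on $\LL^{\bb_j}$. Second, and more seriously, the localization is circular as sketched. To invoke the induction hypothesis on $Y$ you must first know $v_{|Y}\sim v'_{|Y}$ in $A_0(Y)/\div A_1(Y)$, which by corollary \ref{H0} amounts to the vanishing of the net flux $\sum_{\aa'\bb'\in d\LL^\aa}\ph_{\aa'\bb'}$ across the attaching face; but the only route you offer to that vanishing is agreement of the two fields on $\LL^\bb$ (or on $Y$), which is exactly what the induction is supposed to deliver. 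Global homology plus agreement on $X'$ alone do not break this circle. For $\Shf={\cal Z}(X)$ it can be broken: your own observation that $d\LL^\aa$ lands in $\LL^\bb$ shows the flux, hence $V'_\aa-V_\aa$, lies in $A_\bb$, and then additivity of $\Fh^{\bb\aa}$ along $A_\bb$ together with consistency forces $V'_\aa-V_\aa=V'_\bb-V_\bb$, i.e.\ zero net flux across the face — but this uses properties of ${\cal Z}$ that a general extensible subsheaf need not have. The paper avoids the issue by proving uniqueness only within the restricted class $u+\div A(X,X')$ of currents supported away from $X'$ (which is all the existence construction produces), via a separate induction that splits $\ph$ into its $\LL^\aa$-, $\LL^\bb$- and $(X_1,X')$-supported pieces and applies local cocyclicity cone by cone. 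You should either adopt that restriction on the homology, or supply the missing flux-vanishing lemma for the sheaf you actually care about.
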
 

\begin{thm} \label{cocyclic}
    When $X$ is retractable, 
    any extensible subsheaf $\Shf(X) \incl A(X)$ is globally cocyclic.
\end{thm}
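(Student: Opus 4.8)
The plan is to reduce the statement to the preceding theorem on extensible sheaves by taking $X' = \{\vide\}$. Since $X$ is retractable it is, by definition \ref{retractable}, a normal extension of $\{\vide\}$, so the hypotheses of that theorem are met. First I would dispose of the base fibre: the cone $\LL^\vide$ reduces to $\{\vide\}$, which carries no $1$-chains, so $\div A_1(\{\vide\}) = 0$ and cocyclicity of $\Shf(\{\vide\})$ forces $\Shf(\{\vide\}) = A(\{\vide\})$. Local cocyclicity of $\Shf$ guarantees this, and it shows that the constraint $u_{|\{\vide\}} \in \Shf(\{\vide\})$ entering the definition of $\Shf(X\mid\{\vide\})$ is vacuous.

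Existence, i.e. surjectivity onto homology classes, I would obtain through a forward pass. Given an arbitrary $u \in A(X)$, I run along the normal series $X = X_n \cont \dots \cont X_0 = \{\vide\}$ witnessing retractability; at the $i$-th simple extension $X_i = \LL^{\aa_i} \cup X_{i-1}$ I apply the map $T_{\aa_i}$, which replaces $u$ on $\LL^{\aa_i}$ by its unique homologous representative in $\Shf(\LL^{\aa_i})$ while leaving it untouched elsewhere. Each step modifies $u$ only by a boundary $\div \ph$ with $\ph \in A_1(\LL^{\aa_i})$, and the simple-extension structure (every later cell meets $\LL^{\aa_i}$ only inside the interface $\LL^{\bb_i}$, which is governed by extensibility) ensures that the local consistency already achieved is preserved. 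The outcome $\tilde u = u + \div \ph$ then lies in $\Shf(X\mid\{\vide\})$, and the preceding theorem produces a homologous $v \in \Shf(X)$, so $[u] = [v]$ with $v \in \Shf(X)$.

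Uniqueness, i.e. injectivity, I would derive from the uniqueness clause of the same theorem. Any $v \in \Shf(X)$ is already an element of $\Shf(X\mid\{\vide\})$: its restrictions to the cones $\LL^\aa$ lie in $\Shf(\LL^\aa)$ by the sheaf property, and its restriction to $\{\vide\}$ lies in $\Shf(\{\vide\}) = A(\{\vide\})$. Hence if $v, v' \in \Shf(X)$ are homologous, both are the unique homologous representative furnished by the theorem and must coincide. The essential use of retractability is that descending the whole normal series down to $\{\vide\}$ exhausts every degree of homological freedom, leaving no residual cycle along which one could slide a consistent potential to a second, distinct equilibrium. Together, existence and uniqueness show that the composite $\Shf(X) \hookrightarrow A(X) \twoheadrightarrow A(X)/\div A(X)$ is a bijection, i.e. $\Shf(X)$ is globally cocyclic.

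The main obstacle I anticipate is the bookkeeping of the forward pass: one must verify that applying the local projections $T_{\aa_i}$ in the order dictated by the normal series never destroys the consistency established at earlier cones. This is precisely the role of extensibility — the commuting square defining extensible sheaves guarantees that re-solving over a cone $\LL^{\aa_i}$ already made consistent on its interface $\LL^{\bb_i}$ returns the same values there. Thus the delicate part is matching the combinatorics of simple extensions with the extensibility diagram, rather than any genuinely hard analysis.
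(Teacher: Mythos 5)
Your reduction to the extension theorem with $X' = \{\vide\}$, the observation that local cocyclicity forces $\Shf(\{\vide\}) = A(\{\vide\})$ so that the interface condition in $\Shf(X\mid\{\vide\})$ is vacuous, and the uniqueness half of the argument are all sound and match the intended route. The gap is in the existence half: a \emph{single} sweep of the local projections does not land in $\Shf(X\mid\{\vide\})$. After you apply $T_{\aa_j}$, every subsequent step $T_{\aa_k}$ with $k<j$ modifies $u$ by $\div \ph$ with $\ph \in A_1(\LL^{\aa_k})$, and since $\LL^{\aa_k} \incl X_{j-1}$ meets $\LL^{\aa_j}$ inside the interface $\LL^{\bb_j}$, this perturbs $u_{|\LL^{\aa_j}}$ by an element of $A(\LL^{\bb_j})$: at the end of the sweep you only know $u_{|\LL^{\aa_j}} \in \Shf(\LL^{\aa_j}) + A(\LL^{\bb_j})$, not $u_{|\LL^{\aa_j}} \in \Shf(\LL^{\aa_j})$. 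Extensibility cannot rescue a single sweep, because it only says that \emph{re-solving} a cone whose restriction already lies in $\Shf(\LL^{\aa_j}\mid\LL^{\bb_j})$ leaves the interface values untouched; it says nothing about $T_{\aa_k}$ preserving the consistency of the larger cone $\LL^{\aa_j}$ that it touches from below. Concretely, on the path graph $X = \{\vide,1,2,3,12,23\}$ with normal series $\{\vide\} \incl \LL^{12} \incl X$, applying $T_{23}$ and then $T_{12}$ changes $u_2$, which breaks the marginal condition between $23$ and $2$ established at the first step; a second application of $T_{23}$ is unavoidable.

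This is exactly what Proposition \ref{double-pass} supplies: the \emph{double pass} $f = T_{\aa_n}\circ\dots\circ T_{\aa_1}\circ\dots\circ T_{\aa_n}$, a descent along the normal series followed by a return ascent. The return sweep re-solves each $\LL^{\aa_j}$ only after its interface $\LL^{\bb_j}$ has been settled by the recursive treatment of $X_{j-1}$, and extensibility is invoked precisely there, to guarantee that this re-solution does not disturb the values already fixed on $X_{j-1}$. Your proof becomes correct once the single forward pass is replaced by the double pass, which lands in $\Shf(X,\{\vide\}) \incl \Shf(X\mid\{\vide\})$; your appeal to the extension theorem and your uniqueness argument then go through unchanged. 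The need for two sweeps is not bookkeeping: it is the same reason exact inference on a tree requires both a leaf-to-root and a root-to-leaf pass of belief propagation.
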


\begin{prop} \label{double-pass}
Given a normal sequence $X = X_n \cont \dots \cont X_0 = X'$
with $X_j = \LL^{\aa_j} \sqcup_{\LL^{\bb_j}} X_{j-1}$ for all 
$n \geq j \geq 1$ and an extensible subsheaf $\Shf(X) \incl A(X)$,
    the {\rm double pass}:
    \begin{equation} f = T_{\aa_n} \circ \dots \circ T_{\aa_1} \circ \dots \circ T_{\aa_n} \end{equation}
    maps $A(X)$ to the subspace $\Shf(X,X')$ formed by those $u$ such that 
    $u_{|\LL^\aa} \in \Shf(\LL^\aa)$ for all $\aa \in X \setminus X'$. 
\end{prop}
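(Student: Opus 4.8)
The plan is to induct on the length $n$ of the normal sequence, exploiting two elementary features of the operators $T_{\aa_j}$ together with the extensibility hypothesis. First, recall from the local cocyclicity of $\Shf$ (proposition \ref{Z-loc-cocyclic} in the case $\Shf = \Z$) that $T_{\aa_j}(u)$ agrees with $u$ outside the cone $\LL^{\aa_j}$ and that its restriction to $\LL^{\aa_j}$ is the unique homologous element of $\Shf(\LL^{\aa_j})$; in particular $T_{\aa_j}(u)_{|\LL^{\aa_j}} \in \Shf(\LL^{\aa_j})$, and $T_{\aa_j}(u)$ is homologous to $u$ since it differs from it by a term $\div \ph$ with $\ph \in A_1(\LL^{\aa_j})$. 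Second, extensibility says precisely that when the attaching face $\LL^{\bb_j}$ is already consistent, i.e. $u \in \Shf(\LL^{\aa_j} | \LL^{\bb_j})$, the operator $T_{\aa_j}$ leaves $u$ unchanged on $\LL^{\bb_j}$.

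Writing the palindromic double pass as $f = T_{\aa_n} \circ f_{n-1} \circ T_{\aa_n}$, where $f_{n-1}$ is the double pass associated to the shorter normal sequence $X_{n-1} \cont \dots \cont X_0$, I would peel off the outermost extension $X = \LL^{\aa_n} \sqcup_{\LL^{\bb_n}} X_{n-1}$. The new cells $N_n = \LL^{\aa_n} \setminus \LL^{\bb_n}$ lie in $X \setminus X_{n-1}$ and meet no earlier cone $\LL^{\aa_i}$ with $i < n$, so every such $T_{\aa_i}$, hence all of $f_{n-1}$, acts entirely within $X_{n-1}$ and fixes $u$ on $N_n$. The argument then runs: the rightmost $T_{\aa_n}$ makes $\LL^{\aa_n}$, and a fortiori its subface $\LL^{\bb_n}$, consistent; $f_{n-1}$ establishes consistency on every $\LL^{\aa_i}$ with $i<n$ and $\aa_i \notin X'$ by the induction hypothesis; and the final $T_{\aa_n}$ restores consistency on $\LL^{\aa_n}$. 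Since this last operator moves $u$ only on $\LL^{\aa_n}$, and $\LL^{\aa_n} \cap \LL^{\aa_i} \incl \LL^{\bb_n}$ for every $i<n$, extensibility guarantees it does not touch $\LL^{\bb_n}$ and therefore preserves all the consistencies just built by $f_{n-1}$. Combined with the sheaf locality of $\Shf$ — consistency on $\LL^{\aa_j}$ forcing consistency on $\LL^\aa$ for every $\aa \incl \aa_j$ — this yields $f(u) \in \Shf(X,X')$.

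The hard part will be the one hypothesis this outline silently uses: that the intermediate pass $f_{n-1}$ preserves consistency of the attaching face $\LL^{\bb_n}$, which holds before $f_{n-1}$ runs (thanks to the first $T_{\aa_n}$) but which $f_{n-1}$ could in principle disturb, since $\bb_n \in X_{n-1}$. To close this gap I would strengthen the induction hypothesis to record that the double pass not only lands in $\Shf(X_{n-1}, X')$ but also keeps consistent any face already consistent and lying below the final attachment, the natural invariant being that each backward sweep re-consistentizes cones in an order in which their own attaching faces are upstream and already locked, so that extensibility forbids any unraveling. Making this invariant precise, and checking that it is exactly preserved by the decomposition $f = T_{\aa_n} \circ f_{n-1}\circ T_{\aa_n}$, is the crux; everything else is the bookkeeping of which cells each $T_{\aa_j}$ is permitted to move.
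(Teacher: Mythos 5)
Your outline is exactly the paper's proof: the same palindromic decomposition $f = T_{\aa_j}\circ f_{j-1}\circ T_{\aa_j}$, the same forward--inner--backward structure, and the same use of extensibility to show the final $T_{\aa_j}$ does not move anything on $\LL^{\bb_j}$ and hence does not undo what $f_{j-1}$ built. The one thing worth correcting is your diagnosis of the "crux": there is no need to strengthen the induction hypothesis, because the consistency of $\LL^{\bb_j}$ at the moment of the backward pass is not a \emph{preservation} statement at all. You do not need the consistency installed on $\LL^{\bb_j}$ by the forward pass to survive $f_{j-1}$; you only need the \emph{output} of $f_{j-1}$ to be consistent there, and that is already part of the conclusion of the induction hypothesis, since $f_{j-1}$ maps $A(X_{j-1})$ into $\Shf(X_{j-1},X')$. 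The forward pass plays a different role: it locks the part of $\LL^{\aa_j}$ \emph{outside} $\LL^{\bb_j}$ (which $f_{j-1}$ never touches, as you correctly observe) into $\Shf(\LL^{\aa_j})$, so that after $f_{j-1}$ the restriction to $\LL^{\aa_j}$ lies in $\Shf(\LL^{\aa_j})+A(\LL^{\bb_j})$ with trace on $\LL^{\bb_j}$ in $\Shf(\LL^{\bb_j})$ --- that is, in $\Shf(\LL^{\aa_j}\,|\,\LL^{\bb_j})$, which is precisely the domain on which extensibility applies. With that observation your argument closes as written; the "invariant about the order of re-consistentization" you propose to formalise is unnecessary.

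One genuinely delicate point, which both you and the paper pass over quickly, is the inference from $\tilde v_{|X_{j-1}}\in\Shf(X_{j-1},X')$ to $\tilde v_{|\LL^{\bb_j}}\in\Shf(\LL^{\bb_j})$: this is immediate when $\bb_j\in X_{j-1}\setminus X'$ (or sits under some cell of $X_{j-1}\setminus X'$, by the sheaf restriction property), but requires a word when $\bb_j\in X'$. In the applications ($X'=\{\vide\}$ for retractability, or $u\in\Shf(X|X')$ so that $u_{|X'}\in\Shf(X')$ is given and untouched) this is harmless, but if you want a fully self-contained statement it is the hypothesis to make explicit --- not the preservation property you were worried about.
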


\begin{proof} 
Reasoning by induction on the length of the normal sequence, 
first consider $X_1 = \LL^{\aa_1} \sqcup_{\LL^{\bb_1}} X'$. 
Given $u \in A(X_1)$, let $v = T_{\aa_1}(u)$ 
so that $v_{|\LL^{\aa_1}} \in \Shf(\LL^{\aa_1})$ 
by definition of $T_{\aa_1}$, and $v \in \Shf(X_1,X')$.  
Assume now that 
$f_{j-1} = T_{\aa_{j-1}}  \dots  T_{\aa_1} 
 \dots  T_{\aa_{j-1}}$ 
does induce a map from $A(X_{j-1})$ to $\Shf(X_{j-1}, X')$ 
for some $1 < j \leq n$, and let $u \in A(X_j)$. 
\bi
\iii {\it forward pass:}
letting $\tilde u = T_{\aa_j}(u)$, 
we have $\tilde u_{|\LL^{\aa_j}} \in \Shf(\LL^{\aa_j})$. 

\iii{\it induction:}
letting $\tilde v = f_{j-1}(\tilde u)$, 
we have $\tilde v_{|X_{j-1}} \in \Shf(X_{j-1}, X')$. 
In particular $\tilde v_{|\LL^{\bb_j}} \in \Shf(\LL^{\bb_j})$ 
while $\tilde u_{|\LL^{\aa_j}} \in \Shf(\LL^{\aa_j})$
implies $\tilde v_{|\LL^{\aa_j}} \in \Shf(\LL^{\aa_j}| \LL^{\bb_j})$. 
\iii{\it backward pass:}
letting $v = T_{\aa_j}(\tilde v)$
we have $v_{|\LL^{\aa_j}} \in \Shf(\LL^{\aa_j})$ 
extending $\tilde v_{|\LL^{\bb_j}}$ by extensibility. 
\ei
Hence $T_{\aa_j} \dots T_{\aa_1} \dots T_{\aa_j}(u)$ 
lies in $\Shf(X_j , X')$ for all $1 \leq j \leq n$ 
and $f$ maps $A(X)$ to $\Shf(X,X')$. 
\end{proof}

\begin{prop} \label{backward-pass}
Given a normal sequence $X = X_n \cont \dots \cont X_0 = X'$
with $X_j = \LL^{\aa_j} \sqcup_{\LL^{\bb_j}} X_{j-1}$ for all 
$n \geq j \geq 1$ and an extensible subsheaf $\Shf(X) \incl A(X)$,
    the {\rm backward pass}:
    \begin{equation} g = T_{\aa_n} \circ \dots \circ T_{\aa_1} \end{equation}
    maps $\Shf(X|X')$ to $\Shf(X)$.
\end{prop}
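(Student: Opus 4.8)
The plan is to argue by induction on the length $n$ of the normal sequence $X = X_n \cont \dots \cont X_0 = X'$, peeling off the last simple extension $X_n = \LL^{\aa_n} \cup X_{n-1}$ with $\LL^{\aa_n} \cap X_{n-1} = \LL^{\bb_n}$. When $n = 0$ the map $g$ is the identity and $\Shf(X|X') = \Shf(X)$, so there is nothing to prove. For the inductive step I would write $g = T_{\aa_n} \circ g'$ with $g' = T_{\aa_{n-1}} \circ \dots \circ T_{\aa_1}$, which is exactly the backward pass attached to the shorter normal sequence $X_{n-1} \cont \dots \cont X_0 = X'$. The structural observation I rely on is that each $T_{\aa_j}$ with $j < n$ only alters the components indexed by $\LL^{\aa_j} \incl X_{n-1}$; since $X_{n-1}$ is open for the Alexandrov topology, $g'$ commutes with restriction to $X_{n-1}$ and fixes every new cell $X_n \setminus X_{n-1} = \LL^{\aa_n} \setminus \LL^{\bb_n}$.

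First I would check that restriction sends $\Shf(X_n|X')$ into $\Shf(X_{n-1}|X')$: writing $u = w + a$ with $w \in \Shf(X_n,X')$ and $a \in A(X')$, one has $w_{|X_{n-1}} \in \Shf(X_{n-1},X')$ because every $\aa \in X_{n-1}\setminus X'$ lies in $X_n \setminus X'$ with $\LL^\aa \incl X_{n-1}$, while $u_{|X'} \in \Shf(X')$ is untouched. The induction hypothesis then yields $\tilde v := g'(u)$ with $\tilde v_{|X_{n-1}} = g'(u_{|X_{n-1}}) \in \Shf(X_{n-1})$; in particular $\tilde v_{|\LL^{\bb_n}} \in \Shf(\LL^{\bb_n})$ since $\LL^{\bb_n} \incl X_{n-1}$.

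The crux is then to establish $\tilde v_{|\LL^{\aa_n}} \in \Shf(\LL^{\aa_n}\,|\,\LL^{\bb_n})$, which is precisely the hypothesis of extensibility for the last map $T_{\aa_n}$. On every new cell $\cc \in \LL^{\aa_n}\setminus\LL^{\bb_n}$ one has $\tilde v_\cc = u_\cc = w_\cc$, because $g'$ fixes new cells and $a$ is supported in $X' \incl X_{n-1}$. For such a $\cc$, the $\cap$-closure of $X$ gives $\LL^\cc \cap X_{n-1} = \LL^{\cc \cap \bb_n} \incl \LL^{\bb_n}$, so both the modification introduced by $g'$ and the support of $a$ meet $\LL^\cc$ only inside $\LL^{\bb_n}$. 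Setting $b = (\tilde v - w)_{|\LL^{\bb_n}} \in A(\LL^{\bb_n})$, I would verify that $(\tilde v - b)_{|\LL^\cc} = w_{|\LL^\cc} \in \Shf(\LL^\cc)$ for every new $\cc$ (using $w \in \Shf(X_n,X')$ and $\cc \in X_n \setminus X'$); this single correction $b$ shows $\tilde v_{|\LL^{\aa_n}} \in \Shf(\LL^{\aa_n},\LL^{\bb_n}) + A(\LL^{\bb_n})$, and together with $\tilde v_{|\LL^{\bb_n}} \in \Shf(\LL^{\bb_n})$ it gives the desired membership.

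Extensibility of $\Shf$ then guarantees that $v := T_{\aa_n}(\tilde v)$ has $v_{|\LL^{\aa_n}} \in \Shf(\LL^{\aa_n})$ while $v_{|\LL^{\bb_n}} = \tilde v_{|\LL^{\bb_n}}$. Because $T_{\aa_n}$ modifies only $\LL^{\aa_n}$ and meets $X_{n-1}$ inside $\LL^{\bb_n}$, where it is now the identity, I obtain $g(u)_{|X_{n-1}} = \tilde v_{|X_{n-1}} \in \Shf(X_{n-1})$ and $g(u)_{|\LL^{\aa_n}} \in \Shf(\LL^{\aa_n})$. Since the cones $\LL^{\aa_n}$ and $(\LL^\cc)_{\cc \in X_{n-1}}$ cover $X_n$ and $\Shf$ is a subsheaf, it follows that $g(u) \in \Shf(X_n)$, closing the induction. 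I expect the main obstacle to be the third paragraph: correctly confining the effect of both $g'$ and the $A(X')$-component to the single boundary correction $b$, so that the conditional membership $\tilde v_{|\LL^{\aa_n}} \in \Shf(\LL^{\aa_n}\,|\,\LL^{\bb_n})$ holds exactly in the form required by extensibility.
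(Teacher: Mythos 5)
Your proof is correct and follows essentially the same route as the paper's: the same induction on the length of the normal sequence, the same use of the induction hypothesis to get $\tilde v_{|\LL^{\bb_n}} \in \Shf(\LL^{\bb_n})$, the same verification that $\tilde v_{|\LL^{\aa_n}} \in \Shf(\LL^{\aa_n}\,|\,\LL^{\bb_n})$ (the paper phrases this as $\tilde v_{|\LL^{\aa_n}} \in u_{|\LL^{\aa_n}} + A(\LL^{\bb_n})$ with $u_{|\LL^{\aa_n}} \in \Shf(\LL^{\aa_n}) + A(\LL^{\bb_n})$, where you check it cell by cell), and the same appeal to extensibility for the final step $T_{\aa_n}$. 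The extra bookkeeping in your third paragraph is sound but not a different argument.
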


\begin{proof} 
Assume first that $u \in \Shf(X_1|X')$, which is equivalent to 
$u_{|\LL^{\aa_1}} \in \Shf(\LL^{\aa_1}|\LL^{\bb_1})$ 
and $u_{|X'} \in \Shf(X')$. 
The extensibility property ensures that 
$v_{|\LL^{\aa_1}}$ extends $u_{|\LL^{\bb_1}}$ so that 
$v$ extends $u_{|X'}$ and $v \in \Shf(X_1)$.
Assuming now that $g_{j-1} = T_{\aa_{j-1}}  \dots  T_{\aa_1}$ 
maps $\Shf(X_{j-1}|X')$ to $\Shf(X_{j-1})$ for some $1 < j \leq n$,
let $u \in \Shf(X_j | X')$.
\bi
    \iii {\it induction:} 
    letting $\tilde v = g_{j-1}(u)$ we have 
    $\tilde v_{|X_{j-1}} \in \Shf(X_{j-1})$. 
    In particular $\tilde v_{|\LL^{\bb_j}} \in \Shf(\LL^{\bb_j})$,
    so that $\tilde v_{|\LL^{\aa_j}} \in u_{|\LL^{\aa_j}} + A(\LL^{\bb_j})$ 
    and $u_{|\LL^{\aa_j}} \in \Shf(\LL^{\aa_j}) + A(\LL^{\bb_j})$
    implies $\tilde v_{|\LL^{\aa_j}} \in \Shf(\LL^{\aa_j}| \LL^{\bb_j})$. 
    \iii {\it backward pass:} 
    letting $v = T_{\aa_j}(\tilde v)$ we have 
    $v_{|\LL^{\aa_j}} \in \Shf(\LL^{\aa_j})$ extending $\tilde v_{|\LL^{\bb_j}}$
    by extensibility. 
\ei
    Hence $T_{\aa_j} \dots T_{\aa_1}(u)$ lies in $\Shf(X_j)$ 
    for all $1 \leq j \leq n$ and $g$ maps $\Shf(X|X')$ to $\Shf(X)$.
\end{proof}

\begin{prop}
Given a normal extension $X \cont X'$ and an extensible subsheaf 
$\Shf(X) \incl A(X)$, the double pass and backward pass maps: 
    \begin{equation}
        f : A(X) \aw \Shf(X,X') \quad\txt{and}\quad 
    g : \Shf(X|X') \aw \Shf(X) \end{equation} 
are uniquely defined, under the constraint that the image of $u$ 
lies in $u + \div A(X, X')$. 
\end{prop}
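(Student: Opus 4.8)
The plan is to treat existence and uniqueness separately, the former being essentially contained in propositions \ref{double-pass} and \ref{backward-pass} and the latter being the genuinely new content. For existence, I would first observe that the double pass $f$ and the backward pass $g$ already land in $\Shf(X,X')$ and $\Shf(X)$ respectively by those two propositions. It then remains to check the homological constraint: each elementary map $T_{\aa_j}$ alters its argument only on the cone $\LL^{\aa_j}$ by adding a boundary $\div \ph_j$ with $\ph_j$ supported in $\LL^{\aa_j}$, and since every cell $\aa_j$ appearing in the normal sequence $X = X_n \cont \dots \cont X_0 = X'$ lies in $X \setminus X'$, the accumulated increment $f(u) - u$ (resp.\ $g(u) - u$) is a sum of such boundaries and hence lies in $\div A(X,X')$.

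For uniqueness I would argue by induction on the length of the normal sequence, the single simple extension $X = \LL^\aa \sqcup_{\LL^\bb} X'$ being the key case. Suppose $v, v'$ both satisfy the two constraints, so that $v - v' = \div \ph$ with $\ph \in A(X,X')$ and $v_{|\LL^\cc}, v'_{|\LL^\cc} \in \Shf(\LL^\cc)$ for every $\cc \in X \setminus X'$. Restricting to the Alexandrov-open cone $\LL^\aa$ and using that $\zeta$ commutes with this restriction (proposition \ref{zeta-alexandrov}), the difference $v - v'$ stays a boundary on $\LL^\aa$; local cocyclicity of $\Shf$ (proposition \ref{Z-loc-cocyclic} in the case $\Shf = \Z(X)$) then forces $v_{|\LL^\aa} = v'_{|\LL^\aa}$, since a cocyclic sheaf admits a unique representative in each homology class of $A(\LL^\aa)$. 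For the backward pass the values on $X'$ are pinned by $v_{|X'} = v'_{|X'} = u_{|X'} \in \Shf(X')$, and the inductive hypothesis applied to $X_{n-1}$ disposes of the remaining cells; for the double pass one instead invokes local cocyclicity on every cone over $X \setminus X'$ together with the inductive step. Peeling off the cells $\aa_n, \aa_{n-1}, \dots$ in turn yields $v = v'$.

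The main obstacle I expect is the compatibility at the gluing cone $\LL^\bb$ of the pushout $X = \LL^\aa \sqcup_{\LL^\bb} X'$: one must ensure that the relative current $\ph \in A(X,X')$ splits along the pushout so that the local cocyclicity constraint on $\LL^\aa$ and the inductive uniqueness on $X_{n-1}$ agree on the shared values over $\LL^\bb$, with no conflicting demand. Here the Gauss formula of proposition \ref{gauss}, controlling the total flux of $\ph$ across the coboundary $d\LL^\bb$, together with the total-energy characterisation of homology in \ref{H0}, guarantees that the value already fixed on $\LL^\bb$ by the $\LL^\aa$-side coincides with the one propagated from the $X_{n-1}$-side. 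The extensibility of $\Shf$ is precisely the hypothesis ensuring this coherence, so that fixing representatives cone-by-cone determines a single global element, and the double and backward passes are the unique maps satisfying the stated constraints.
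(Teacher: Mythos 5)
Your base case and the existence half are sound and essentially match the paper: each elementary map $T_{\aa_j}$ adds a boundary of a current supported in $\LL^{\aa_j}$ with $\aa_j \in X \setminus X'$, and when $X = \LL^\aa \sqcup_{\LL^\bb} X'$ is a single simple extension every such current lives in $A(\LL^\aa)$, so cocyclicity of $\Shf(\LL^\aa)$ settles uniqueness there. The inductive step, however, has a genuine gap. You propose to invoke local cocyclicity ``on every cone over $X \setminus X'$'', but for a cell $\cc \in X \setminus X'$ the restriction $(v - v')_{|\LL^\cc}$ of a global boundary $\div \ph$ with $\ph \in A(X,X')$ is \emph{not} in general an element of $\div A(\LL^\cc)$: the current $\ph$ may have components $\ph_{\aa'\bb'}$ with $\aa' \notin \LL^\cc$ and $\bb' \in \LL^\cc$, contributing an inbound flux across the coboundary $d\LL^\cc$. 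Cocyclicity of $\Shf(\LL^\cc)$ only identifies two of its sections differing by a boundary \emph{internal} to $\LL^\cc$, so it cannot be applied cone by cone. Likewise, for the backward pass you assert that the values on $X'$ are pinned by $v_{|X'} = v'_{|X'} = u_{|X'}$; this is not implied by the constraints, since $\div A(X,X')$ does not vanish on $X'$ (a current in $A(\LL^{\aa_1})$ has divergence on $\LL^{\bb_1} \incl X'$). That the image extends $u_{|X'}$ is a conclusion of the uniqueness argument, not an available hypothesis.

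Your last paragraph correctly locates the obstacle at the junction $\LL^\bb$, but the proposed resolution by the Gauss formula \ref{gauss} together with \ref{H0} does not close it. The Gauss formula precisely quantifies the non-vanishing inbound flux $\sum_{\aa'\bb' \in d\LL^\cc} \ph_{\aa'\bb'}$, i.e.\ the obstruction itself, and \ref{H0} is a statement about the specific complex $A_\bullet(X)$ of observables via the interaction decomposition, not about an arbitrary extensible subsheaf $\Shf(X)$. The paper's argument instead splits $\div\ph = \div\ph^+ + \div(\ph_{|\LL^\bb}) + \div\ph^-$ with $\ph^+ \in A(\LL^\aa)$ and $\ph^- \in A(X_1,X')$, inserts the intermediate section $\tilde v = T_\bb(u + \div\ph^-)$, and uses extensibility to show that $v = T_\aa(\tilde v)$ extends $\tilde v_{|\LL^\bb}$; only then does the induction hypothesis apply on $X_1$, after which the residual difference on $\LL^\aa$ really is an internal boundary $\div\ph^+$ and local cocyclicity finishes. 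Some argument of this kind, ordering the cancellations so that cocyclicity is only ever invoked on genuine local boundaries, is indispensable; as written your induction does not go through.
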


\begin{proof}
Reasoning by induction on the length of a normal sequence,
first assume that $X = \LL^\aa \sqcup_{\LL^\bb} X'$.
If $u, v \in \Shf(X , X')$ with $v = u + \div \ph$ 
and $\ph \in A(\LL^\aa)$, then $v_{|\LL^\aa} = u_{|\LL^\aa}$
by cocyclicity of $\Shf(\LL^\aa)$ and $\div \ph = 0$. 
Assume now that $X = \LL^\aa \sqcup_{\LL^\bb} X_1$ 
with $X_1 \cont X'$, and 
that for all $u, v \in \Shf(X_1,X')$, if 
$v$ lies in $u + \div A(X_1, X')$ then $v = u$. 
Given $u, v \in \Shf(X,X')$ with $v = u + \div \ph$ 
and $\ph \in A(X, X')$, let us write 
$\div \ph = \div \ph^+ + \div(\ph_{|\LL^\bb}) + \div \ph^-$ 
with $\ph^+ \in A(\LL^\aa)$ and $\ph^- \in A(X_1, X')$.  
\bi
    \iii {\it no forward pass:} 
    letting $\tilde v = T_\bb(u + \div \ph^-)$, we have 
    $\tilde v_{|\LL^\aa} \in \Shf(\LL^\aa) + \div A(\LL^\bb)$ and 
    $u_{|\LL^\bb} \in \Shf(\LL^\bb)$  so that 
    $\tilde v_{|\LL^\aa} \in \Shf(\LL^\aa | \LL^\bb)$. 
    Extensibility implies that 
    $v = T_\aa( u + \div \ph^-) = T_\aa(\tilde v)$ 
    extends $\tilde v_{|\LL^\bb}$.
    \iii {\it induction:} 
    as $v_{|X_1} = \tilde v_{|X_1}$ lies in $u_{|X_1} + \div A(X_1, X')$ 
    we have $v_{|X_1} = u_{|X_1}$ by induction hypothesis. 
    \iii {\it no backward pass:} 
    in particular $\div(\ph)_{|X_1} = 0$ 
    so that $v_{|\LL^\aa} = u_{|\LL^\aa} + \div \ph^+$
    lies in $u_{|\LL^\aa} + \div A(\LL^\aa)$, 
    and $v_{|\LL^\aa} = u_{|\LL^\aa}$ by local cocyclicity. 
\ei
    Hence $v = u$ for every $u, v \in \Shf(X,X')$ 
    such that $v \in u + \div A(X,X')$.
    It follows that the double pass map $f : A(X) \aw \Shf(X, X')$ associated 
    to any given normal sequence is independent of the sequence. 
    Uniqueness of the backward pass map $g : \Shf(X|X') \aw \Shf(X)$ 
    is also obtained by applying the same uniqueness argument to 
    any $u,v \in \Shf(X) \incl \Shf(X,X')$ such that $v \in u + \div A(X,X')$.
\end{proof}

\section{Relations with Global Sections} 

In this section, we show that the unique equilibrium on 
retractable hypergraphs coincides with the true Gibbs state marginals one seeks 
to approximate. 

In general, we introduce a {\it global pass} map $T : A_0(X) \aw \Z(X)$ 
defined by the effective energies of the global hamiltonian and their associated 
interaction potentials and investigate some of its homological properties. 
As $T(h)$ may not be homologous to $h$ in general, 
we signal that the true Gibbs state marginals may not be accessible from 
message-passing schemes. 
After showing that $T$ coincides with the double pass maps of the previous section
on retractable hypergraphs, 
we give an explicit example where $T$ induces a non-trivial iteration over
the manifold of equilibria. 

\subsection{Global Pass}

\begin{defn} 
    We call {\rm global completion} of 
    $X \incl \Part(\Om)$ the hypergraph $\tilde X = X \cup \{ \Om \}$.
\end{defn}

\begin{defn}
Given a global hamiltonian $H_\Om = \disp \sum_{\aa \in X} h_\aa$ in $A_\Om$,
let us define: 
\bi 
\iii the {\rm true effective hamiltonians} by
$H^*_\aa = \Fh^{\aa\Om}(H_\Om)$ for all $\aa \in \tilde X$,
\iii the {\rm true effective potentials} by 
$h^* = \mu \cdot H^*$,
\iii the {\rm true Gibbs state marginals} by
$p = [\e^{-H^*}]$, 
\ei
each of which being consistent, 
as $\DF(H^*) = 0$ implies $h^* \in {\cal Z}(\tilde X)$ and $p \in \Gamma(\tilde X)$.
\end{defn}

\begin{defn} 
We call {\rm global pass} 
$T : A_0(X) \aw A_0(X)$ the smooth map 
associating to potentials $h \in A_0(X)$ the true effective 
potentials $h^* \in {\cal Z}(X)$ deriving from the total energy 
$H_\Om = \sum_\aa h_\aa$.
\end{defn}

The global pass resorts to the global hamiltonian and 
should be thought of as an {\rm extrinsic} action on 
$A_0(X)$, uncomputable in practice.
The following diagram best illustrates the procedure:
\begin{equation} \bcd 
    & A_\Om \drar{\mu \:\circ\: \Fh^{-\Om}} & \\
A_0(X) \urar{\zeta_\Om} \ar[rr,dashed, "T"] & & A_0(X) 
\ecd \end{equation}
It is good to view $T$ as a map from 
$A_0(X)$ to itself as we shall see its iterations may not be trivial.
Its image is however contained in ${\cal Z}(X)$ by definition, 
while it induces a map in homology: 
\begin{equation} \bar T : {\rm H}_0(X; A) \law {\cal Z}(X) \end{equation} 
Whether this map is a diffeomorphism or not is a natural and important question, 
closely related to the uniqueness or multiplicity of message-passing equilibria. 
We shall see its answer to depend on the geometry of the underlying hypergraph.

\begin{prop} \label{Th=h}
    Assume $\tilde h \in {\cal Z}(\tilde X)$ 
    extends $h \in {\cal Z}(X)$. 
    Then $\tilde h_\Om = 0$ implies $T(h) = h$. 
\end{prop}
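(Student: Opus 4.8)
The plan is to work at the level of the effective hamiltonians $\tilde H = \zeta \cdot \tilde h \in A_0(\tilde X)$ attached to the consistent extension $\tilde h$, show that they coincide with the true effective hamiltonians $H^*$ on every cell of $X$, and then pass back to interaction potentials by Möbius inversion. Since $\tilde h$ extends $h$, one has $\tilde h_\aa = h_\aa$ for all $\aa \in X$, so evaluating the zeta transform on the maximal cell and using $\tilde h_\Om = 0$ gives
\[ \tilde H_\Om = \sum_{\bb' \in \tilde X} \tilde h_{\bb'} = \tilde h_\Om + \sum_{\bb' \in X} h_{\bb'} = \sum_{\bb' \in X} h_{\bb'} = H_\Om. \]

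Next I would exploit consistency. By definition $\tilde h \in {\cal Z}(\tilde X)$ means $\DF(\tilde H) = 0$, i.e. $\tilde H_\bb = \Fh^{\bb\aa}(\tilde H_\aa)$ for every $\aa \cont \bb$ in $\tilde X$. Specialising to $\aa = \Om$ and substituting $\tilde H_\Om = H_\Om$ yields
\[ \tilde H_\bb = \Fh^{\bb\Om}(\tilde H_\Om) = \Fh^{\bb\Om}(H_\Om) = H^*_\bb \]
for every $\bb \in X$, and trivially for $\bb = \Om$ since $\Fh^{\Om\Om}$ is the identity. Thus $\tilde H$ agrees with $H^*$ on all of $\tilde X$.

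Finally I would recover the potentials. As $\mu = \zeta^{-1}$, one has $\tilde h = \mu \cdot \tilde H$; by the locality of the Möbius transform, the component $(\mu \cdot \tilde H)_\aa$ for $\aa \in X$ involves only the values $\tilde H_{\bb'}$ with $\bb' \incl \aa$, hence $\bb' \in X$ (since $\Om \not\incl \aa$). Substituting $\tilde H_{\bb'} = H^*_{\bb'}$ for these $\bb'$ identifies this component with $(\mu \cdot H^*)_\aa = T(h)_\aa$. Therefore $T(h)_\aa = \tilde h_\aa = h_\aa$ for every $\aa \in X$, which is exactly the assertion $T(h) = h$.

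The individual steps are routine applications of the functoriality of effective energy and of Möbius inversion; the only delicate point — and the step I expect to require the most care — is the bookkeeping between $X$ and $\tilde X$. Concretely, one must invoke the locality of $\mu$ so that the restriction of $\mu \cdot \tilde H$ to $X$ depends solely on the $X$-components of $\tilde H$, and it is precisely here that the consistency relation $\tilde H_\bb = H^*_\bb$ on $X$ does the real work, the hypothesis $\tilde h_\Om = 0$ entering only through the identity $\tilde H_\Om = H_\Om$.
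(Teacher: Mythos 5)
Your proof is correct and takes essentially the same route as the paper's: use $\tilde h_\Om = 0$ to get $\tilde H_\Om = H_\Om$, then consistency of $\tilde H$ to identify it with the true effective hamiltonians $H^*$, and conclude $T(h) = h$. The paper's version is just a two-line sketch of this argument; your additional care with the locality of $\mu$ when passing back from $\tilde X$ to $X$ fills in a step the paper leaves implicit.
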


\begin{proof} 
    Letting $H_\Om = \sum_\aa h_\aa$ and 
    $\tilde H = \tilde \zeta \cdot \tilde h$,
    then $\tilde h_\Om = 0$ implies $\tilde H_\Om = H_\Om$. 
    If furthermore $\tilde h \in {\cal Z}(\tilde X)$, 
    then consistency implies that $\tilde H$ coincides
    with the true effective hamiltonians $H^*$ and $T(h) = h$. 
\end{proof}

\begin{prop} \label{Th*=h*}
    Given $h \in {\cal Z}(X)$ 
    let $\tilde h^* = \tilde T(h) \in {\cal Z}(\tilde X)$. 
    Then $T(h) = h$ implies $\tilde h^*_\Om = 0$.
\end{prop}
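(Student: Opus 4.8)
The plan is to exploit the fact that $\Om$ is the top element of $\tilde X$ for $\cont$, so that the zeta transform evaluated at $\Om$ collapses to a total sum, together with the triviality of self-conditioning $\Fh^{\Om\Om} = \id$. The statement is then the mirror image of proposition \ref{Th=h}.

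First I would record that the true effective hamiltonians satisfy $H^* = \zeta \cdot \tilde h^*$ on $\tilde X$, since $\tilde h^* = \mu \cdot H^*$ and $\mu = \zeta^{-1}$. Evaluating at the maximal cell $\Om$, which contains every $\bb' \in \tilde X$, gives
\[ H^*_\Om = \zeta(\tilde h^*)_\Om = \sum_{\Om \cont \bb'} \tilde h^*_{\bb'} = \tilde h^*_\Om + \sum_{\bb' \in X} \tilde h^*_{\bb'}. \]
On the other hand $H^*_\Om = \Fh^{\Om\Om}(H_\Om)$, and $\Fh^{\Om\Om}$ is the identity because $\Sigma^{\Om\Om}$ integrates over the one-point set $E_\vide$; hence $H^*_\Om = H_\Om = \sum_{\aa \in X} h_\aa$, the total energy from which the global pass was built.

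Next I would identify the restriction $(\tilde h^*)_{|X}$ with $T(h)$. Since $\tilde X = \{\Om\} \sqcup X$ completes $X$ with an initial element and $X$ is full in $\tilde X$ (no $\aa \in X$ contains $\Om$), the Möbius function of $\tilde X$ restricts to that of $X$; therefore for every $\aa \in X$ the coefficient $\tilde \mu_{\aa\bb'}$ only charges $\bb' \in X$, so that $(\tilde h^*)_\aa = (\mu \cdot H^*)_\aa = T(h)_\aa$. Feeding the hypothesis $T(h) = h$ into the displayed identity yields $\sum_{\aa \in X} h_\aa = \tilde h^*_\Om + \sum_{\bb' \in X} h_{\bb'}$, whence $\tilde h^*_\Om = 0$.

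The computation is short; the only point demanding care is the bookkeeping between the transforms on $X$ and on $\tilde X$, namely justifying $T(h) = (\tilde T(h))_{|X}$ through the fullness of $X$ in $\tilde X$ and the resulting agreement of the two Möbius functions. Once that identification is secured, the vanishing of the top component follows immediately: where proposition \ref{Th=h} assumes $\tilde h_\Om = 0$ to conclude $T(h) = h$, here one reads the same total-energy balance in the opposite direction.
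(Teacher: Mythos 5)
Your argument is correct and is essentially the paper's own: the paper derives the statement as an immediate consequence of proposition \ref{h*-global}, whose proof is exactly your total-energy bookkeeping $H_\Om = \tilde H^*_\Om = \tilde h^*_\Om + \sum_\aa h^*_\aa$, combined with $h^* = h$ under the hypothesis $T(h)=h$. Your extra care in justifying $(\tilde T(h))_{|X} = T(h)$ via the fullness of $X$ in $\tilde X$ and the restriction of the Möbius function is a point the paper leaves implicit, but it is the same route.
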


\begin{proof}
An immediate consequence of the following proposition.
\end{proof} 

{\bf Remarks.} Due to the apparent reciprocity between propositions 
\ref{Th=h} and \ref{Th*=h*}, one may be tempted to close an implication loop
and it seems important to mention that:
\bi
\iii
$(T(h^*) = h^* )\not\impq (\tilde h^*_\Om = 0)$ \\[0.2em]
Although $T(h^*) = T(h)$ gives
$\Fh^{\aa\Om}(H_\Om) = \Fh^{\aa\Om}(H^*_\Om)$ for all $\aa \in X$,
this may not imply $H^*_\Om = H_\Om$. 
We do have a global interaction decomposition
$A_\Om = Z_\Om \oplus B_\Om$,
however $H_\Om \in B_\Om$ does not imply $\e^{-H_\Om} \in B_\Om$ in general.
Hence $\Sigma^{\aa\Om}(\e^{-H_\Om}) = \Sigma^{\aa\Om}(\e^{-H^*_\Om})$
for all $\aa \in X$ does not imply $\tilde h^*_\Om = 0$.
\iii 
$(\tilde h^*_\Om = 0) \not\impq (T(h) = h)$ \\[0.2em]
Even when  $h \in {\cal Z}(X)$ and its extension by zero $\tilde h$
lies in ${\cal Z}(\tilde X)$, 
assuming $h^*$ homologous to $h$ in $A_0(X)$ 
does not imply $h^* = h$,  
as there may exist multiple $u = h + \div \ph \in {\cal Z}(X)$. 
\ei

\begin{prop} \label{h*-global}
Given $h \in {\cal Z}(X)$, let $\tilde h^* = \tilde T(h)$
and let $h^* = \tilde h^*_{|X}$.
\begin{equation} \tilde h^*_\Om = \sum_{\aa \in X} h_\aa - \sum_{\aa \in X} h^*_\aa \end{equation}
\end{prop}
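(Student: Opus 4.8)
The plan is to read off $\tilde h^*_\Om$ directly from the definition $\tilde h^* = \tilde\mu \cdot H^*$, where $\tilde\mu$ denotes the Möbius transform on the completion $\tilde X = X \cup \{ \Om \}$ and the true effective hamiltonians are $H^*_\aa = \Fh^{\aa\Om}(H_\Om)$ for $\aa \in \tilde X$. First I would expand the $\Om$-component of the Möbius transform and isolate the top term $\bb' = \Om$ from those indexed by $\bb' \in X$:
\[
\tilde h^*_\Om = \sum_{\Om \cont \bb'} \tilde\mu_{\Om\bb'}\, H^*_{\bb'}
= H^*_\Om + \sum_{\bb' \in X} \tilde\mu_{\Om\bb'}\, H^*_{\bb'}.
\]
Because $\Sigma^{\Om\Om}$ is the identity, $H^*_\Om = \Fh^{\Om\Om}(H_\Om) = H_\Om = \sum_{\aa \in X} h_\aa$, which already produces the first sum appearing in the claim.

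Next I would rewrite the remaining sum using proposition \ref{c-mu}, which identifies $\tilde\mu_{\Om\bb'} = - c_{\bb'}$ for every $\bb' \in X$, giving
\[
\tilde h^*_\Om = \sum_{\aa \in X} h_\aa \;-\; \sum_{\bb' \in X} c_{\bb'}\, H^*_{\bb'}.
\]
It then remains to recognise $\sum_{\bb' \in X} c_{\bb'} H^*_{\bb'}$ as $\sum_{\aa \in X} h^*_\aa$. For this I would invoke the total energy identity $\sum_\aa h^*_\aa = \sum_\aa c_\aa H^*_\aa$, valid for any potential of the form $h^* = \mu \cdot H^*$ on $X$ (equivalently $H^* = \zeta \cdot h^*$); it follows at once from the definition $c = 1 \cdot \mu$ of the right Möbius numbers by swapping the order of summation. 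Substituting then produces exactly $\tilde h^*_\Om = \sum_{\aa \in X} h_\aa - \sum_{\aa \in X} h^*_\aa$.

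The only delicate point, which I expect to be the main (though modest) obstacle, is to justify that the restriction $h^* = \tilde h^*_{|X}$ is genuinely the Möbius transform on $X$ of $H^*_{|X}$, so that the total energy identity applies with the Möbius numbers $c_\aa$ of $X$ and with the same $c_\aa$ appearing in proposition \ref{c-mu}. This holds because $X$ is full in $\tilde X$: any $\bb \incl \aa$ with $\aa \in X$ lies in $X$, since $\Om \incl \aa$ would force $\aa = \Om$. By the fullness property the restriction of $\tilde\mu$ to $X$ coincides with the Möbius function of $X$; equivalently, the locality of the zeta transform gives $H^*_{|X} = \zeta \cdot h^*$ on $X$. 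I would finally note that consistency of $h$ is nowhere used — the identity is purely combinatorial, the hypothesis $h \in {\cal Z}(X)$ serving only to guarantee that $h^* = T(h)$ lands in ${\cal Z}(X)$.
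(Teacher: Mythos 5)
Your proof is correct, but it runs the computation in the opposite direction from the paper's. The paper applies the zeta transform on $\tilde X$ at $\Om$: by definition of the true effective hamiltonians, $\tilde H^*_\Om = \Fh^{\Om\Om}(H_\Om) = H_\Om = \sum_{\aa \in X} h_\aa$, and writing $\tilde H^*_\Om = \tilde \zeta(\tilde h^*)_\Om = \tilde h^*_\Om + \sum_{\aa \in X} h^*_\aa$ then gives the identity in one line by isolating the top term, with no appeal to the Möbius numbers. You instead expand $\tilde h^*_\Om = \tilde \mu(H^*)_\Om$, which forces you to convert the coefficients $\tilde\mu_{\Om\bb}$ into $-c_\bb$ via proposition \ref{c-mu} and then to invoke the weighted total-energy identity $\sum_\aa h^*_\aa = \sum_\aa c_\aa H^*_\aa$. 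Both routes are sound — they are the two faces of the same Möbius/zeta duality — but the zeta-side argument is shorter because the sum over the cone $\LL^\Om = \tilde X$ requires no inversion coefficients at all. Your two supporting observations are well taken: the fullness of $X$ in $\tilde X$ does guarantee that $h^* = \tilde h^*_{|X}$ is the Möbius transform on $X$ of $H^*_{|X}$, and the hypothesis $h \in {\cal Z}(X)$ is indeed not used in the identity itself, which is purely combinatorial.
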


\begin{proof} 
Letting $H_\Om = \sum_\aa h_\aa$ and $\tilde H^* = 
\tilde \zeta \cdot \tilde h^*$, we have 
$\tilde H^*_\Om = H_\Om$ by definition 
of the true effective hamiltonians. 
Isolating the contribution of $\Om$ to the total energy
then gives $H_\Om = \tilde h^*_\Om + \sum_\aa h^*_\aa$. 
\end{proof}

\subsection{Retractable Hypergraphs}

\begin{thm}\label{global} 
Assume $X \incl \Part(\Om)$ is retractable. 
For every collection of interaction potentials $h \in A_0(X)$ 
the unique $u = h + \div \ph \in {\cal Z}(X)$ 
coincides with the true effective potentials $h^* = T(h)$. 
\end{thm}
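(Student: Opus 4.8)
The plan is to reduce the statement, using the homological uniqueness already available on retractable hypergraphs, to a single extension fact about consistent beliefs, and then to prove that fact by induction along the retraction of $X$.

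First I would record existence and uniqueness. Since $X$ is retractable and the subsheaf ${\cal Z}(X)$ was shown to be extensible, Theorem \ref{cocyclic} makes ${\cal Z}(X)$ globally cocyclic, so there is a unique $u = h + \div \ph \in {\cal Z}(X)$; this is the field whose identity with $h^* = T(h)$ must be established. Next I would observe that $T$ factors through total energy: by definition $T(h)$ depends on $h$ only through $H_\Om = \sum_\aa h_\aa = \zeta_\Om(h)$, and since $u = h + \div \ph$, Corollary \ref{H0} gives $\sum_\aa u_\aa = \sum_\aa h_\aa$, whence $T(h) = T(u)$. It therefore suffices to prove $T(u) = u$ for the consistent field $u$.

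By Proposition \ref{Th=h} this will follow as soon as I exhibit an extension $\tilde u \in {\cal Z}(\tilde X)$ of $u$ to the global completion $\tilde X = X \cup \{ \Om \}$ with vanishing top component $\tilde u_\Om = 0$. Setting $\tilde u_\Om = 0$ forces $\tilde U_\Om = \sum_\aa u_\aa =: U_\Om$, and consistency on $\tilde X$ is exactly the assertion that the global Gibbs state $[\e^{-U_\Om}]$ admits the beliefs $p_\aa = [\e^{-U_\aa}]$ as its marginals, i.e. $\Fh^{\aa\Om}(U_\Om) = U_\aa$ for every $\aa \in X$. This marginal-matching is a form of the Vorob'ev pseudo-marginal extension, and I would prove it by induction along a normal series $X = X_n \cont \dots \cont X_0 = \{ \vide \}$. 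At a simple extension $X_j = \LL^\aa \sqcup_{\LL^\bb} X_{j-1}$ the separation property — the new variables $\aa \setminus \bb$ meet every other cell only inside $\bb$ — lets me split the global hamiltonian as $U_\Om = H_{\Om'} + W_\aa$, where $H_{\Om'}$ is supported on the smaller vertex set and $W_\aa = U_\aa - U_\bb$ is supported on $\aa$. The decisive computation is $\Fh^{\bb\aa}(U_\aa - U_\bb) = \Fh^{\bb\aa}(U_\aa) - U_\bb = 0$, combining additivity of effective energy along $A_\bb$ with the consistency relation $\Fh^{\bb\aa}(U_\aa) = U_\bb$; equivalently $\Sigma^{\bb\aa}(\e^{-W_\aa}) = 1$. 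Partial integration over $\aa \setminus \bb$ then leaves $[\e^{-H_{\Om'}}]$ unchanged, so the induction hypothesis applies to $X_{j-1}$, while partial integration over $\Om \setminus \aa$, using the inductively known marginal $p_\bb$, returns precisely $[\e^{-U_\aa}]$. Hence all marginals match, $\tilde u_\Om = 0$ yields a consistent extension, and Proposition \ref{Th=h} gives $T(u) = u = T(h) = h^*$.

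The step I expect to be the main obstacle is this inductive marginal-matching, and within it the vanishing $\Fh^{\bb\aa}(U_\aa - U_\bb) = 0$, since this is where both the consistency of $u$ and the separation structure of a simple extension genuinely enter (both the conditional additivity of Proposition \ref{eff-add} and the definition of ${\cal Z}(X)$ via $\DF(U)=0$ are used here). Everything else is bookkeeping along the normal series. I would also verify that normalisation constants align at each gluing, which is automatic from $\Sigma^{\bb\aa}(\e^{-W_\aa}) \equiv 1$, so no stray partition-function factors accumulate.
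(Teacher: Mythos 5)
Your proof is correct and follows essentially the same route as the paper: reduce via Corollary \ref{H0} and Proposition \ref{Th=h} to showing that the extension of $u$ by zero to $\tilde X$ remains consistent, then establish this by induction along the normal series using the separation structure of a simple extension. The only cosmetic difference is that you carry out the key step at the level of Gibbs densities (via $\Sigma^{\bb\aa}(\e^{-(U_\aa - U_\bb)}) = 1$), whereas the paper packages the same computation as Proposition \ref{retract} on the effective energy gradient, using Proposition \ref{eff-indep} to identify $\DF(\tilde U)_{\Om \aw \aa}$ with $\DF(\tilde U)_{\aa' \aw \bb}$.
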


Equivalently, the theorem asserts that for any $h \in A_0(X)$,
extending the unique homologous potentials $u = h + \div \ph \in {\cal Z}(X)$ 
given by theorem \ref{cocyclic} 
to $\tilde X = X \sqcup \{ \Om \}$ does yield  
a section $\tilde u \in {\cal Z}(\tilde X)$. 
It must then coincide with the unique 
$\tilde h^*  = \tilde h + \div \tilde \ph \in {\cal Z}(\tilde X)$ 
defined from $\tilde h \in A_0(\tilde X)$ extending $h$, 
and of which the global pass $h^* \in {\cal Z}(X)$ is the restriction: 
\begin{equation} \bcd 
A_0(\tilde X) \rar{\tilde T} & {\cal Z}(\tilde X) \\
A_0(X) \uar \rar{f} & {\cal Z}(X) \uar 
\ecd \end{equation}
The existence of the right ascending arrow hence implies 
commutativity in the above diagram, and the proof of the theorem 
will rely on the following proposition.

\begin{prop} \label{retract}
Given $\Om = \aa \sqcup_{\bb} \aa'$,
assume $X = \LL^\aa \sqcup_{\LL^\bb} \LL^{\aa'}$ and let 
$\tilde X = X \sqcup \{ \Om \}$. 
Then: 
\begin{equation} {\cal Z}(\tilde X) \cap A_0(X) \simeq {\cal Z}(X) \end{equation}
    Equivalently, if $\tilde U = \tilde \zeta \cdot u \in A_0(\tilde X)$ 
    with $u \in A_0(X)$, then $\DF(\tilde U) = 0$ if and only if 
    $\DF(\tilde U)_{|X} = 0$. 
\end{prop}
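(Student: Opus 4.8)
The plan is to establish the stated equivalence $\DF(\tilde U) = 0 \Leftrightarrow \DF(\tilde U)_{|X} = 0$, from which the isomorphism ${\cal Z}(\tilde X) \cap A_0(X) \simeq {\cal Z}(X)$ follows via the identity map on $A_0(X)$: indeed ${\cal Z}(\tilde X) \cap A_0(X)$ consists exactly of those $u \in A_0(X)$, extended by $u_\Om = 0$, with $\DF(\tilde\zeta\cdot u)=0$, while ${\cal Z}(X) = \{ u \st \DF(\zeta \cdot u) = 0\}$. One direction is trivial, restriction of the zero current being zero; the whole content lies in the converse, that consistency on $X$ forces consistency with the adjoined top cell $\Om$.

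First I would unwind $\tilde U = \tilde\zeta \cdot u$. Because $\Om$ is maximal and $u_\Om = 0$, one has $\tilde U_\cc = U_\cc$ for every $\cc \in X$, with $U = \zeta \cdot u$ the zeta transform inside $X$, and $\tilde U_\Om = \sum_{\dd \in X} u_\dd = H_\Om$. Consequently $\DF(\tilde U)_{|X} = \DF(U)$, so the hypothesis $\DF(\tilde U)_{|X}=0$ is precisely consistency of $U$ on $X$, whereas the additional components are $\DF(\tilde U)_{\Om\dd} = U_\dd - \Fh^{\dd\Om}(H_\Om)$ for $\dd \in X$. The goal thus reduces to showing that, if $U$ is consistent on $X$, then $U_\dd = \Fh^{\dd\Om}(H_\Om)$ for all $\dd \in X$.

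The geometric hypothesis enters through inclusion–exclusion on cones: since $X = \LL^\aa \cup \LL^{\aa'}$ with $\LL^\aa \cap \LL^{\aa'} = \LL^\bb$, summing $u$ over $X$ gives $H_\Om = U_\aa + U_{\aa'} - U_\bb$. The key computation, which I expect to be the main obstacle, is to evaluate $\Fh^{\aa\Om}(H_\Om)$ and verify that the $-U_\bb$ term is annihilated by consistency. Here I would use that $\Om \setminus \aa = \aa' \setminus \bb$, so the partial integration $\Sigma^{\aa\Om}$ runs over $E_{\aa'\setminus\bb}$ only, over which both $U_\aa$ and $U_\bb$ are constant and factor out:
\[ \Sigma^{\aa\Om}\big(\e^{-H_\Om}\big)(x_\aa) = \e^{-U_\aa(x_\aa)}\cdot \e^{U_\bb(x_\bb)}\,\Sigma^{\bb\aa'}\big(\e^{-U_{\aa'}}\big)(x_\bb). \]
By definition of effective energy and consistency along the chain $\aa' \cont \bb$ in $X$, one has $\Sigma^{\bb\aa'}(\e^{-U_{\aa'}}) = \e^{-\Fh^{\bb\aa'}(U_{\aa'})} = \e^{-U_\bb}$, so the last two factors cancel and $\Fh^{\aa\Om}(H_\Om) = U_\aa$. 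The symmetric argument, swapping $\aa$ and $\aa'$, yields $\Fh^{\aa'\Om}(H_\Om) = U_{\aa'}$.

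Finally I would propagate these two identities down every cone by functoriality of effective energy: for $\dd \incl \aa$, $\Fh^{\dd\Om}(H_\Om) = \Fh^{\dd\aa}\big(\Fh^{\aa\Om}(H_\Om)\big) = \Fh^{\dd\aa}(U_\aa) = U_\dd$ by consistency of $U$ on $X$, and likewise for $\dd \incl \aa'$. Since $X = \LL^\aa \cup \LL^{\aa'}$ exhausts all cells, this gives $U_\dd = \Fh^{\dd\Om}(H_\Om)$ for every $\dd \in X$, hence $\DF(\tilde U)_{\Om\dd} = 0$; together with $\DF(U) = 0$ this is $\DF(\tilde U) = 0$, completing the nontrivial direction. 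The only points demanding care are the bookkeeping of which variables each $\Sigma$ integrates and the repeated use of $\bb = \aa \cap \aa'$ to guarantee that $U_\bb$ and $U_\aa$ stay constant along the relevant fibers.
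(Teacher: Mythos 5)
Your proof is correct and follows essentially the same route as the paper's: both reduce to the two extra components $\DF(\tilde U)_{\Om\aw\aa}$ and $\DF(\tilde U)_{\Om\aw\aa'}$, exploit $\Om\setminus\aa=\aa'\setminus\bb$ together with the inclusion–exclusion $H_\Om=U_\aa+U_{\aa'}-U_\bb$, and cancel the $U_\bb$ term via consistency along $\aa'\cont\bb$ before propagating down the cones by functoriality of $\Fh$. The only difference is cosmetic — you carry out the partial integration by hand where the paper invokes its propositions \ref{eff-indep} and the additivity of effective energy to obtain the identity $\DF(\tilde U)_{\Om\aw\aa}=\DF(\tilde U)_{\aa'\aw\bb}$.
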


\begin{proof}
Given any $\tilde U \in A_0(\tilde X)$, first observe that
$\DF(\tilde U) = 0$ is equivalent to $U_\cc = \Fh^{\cc\Om}(U_\Om)$
for all $\cc \in X$. 
The assumption 
$X = \LL^\aa \cup \LL^{\aa'}$ implies that $\Fh^{\cc\Om}$ may be factorised 
either as $\Fh^{\cc\aa} \circ \Fh^{\aa\Om}$ or as
$\Fh^{\cc\aa'} \circ \Fh^{\aa'\Om}$ 
depending on whether $\cc \in \LL^\aa$ or $\cc \in \LL^{\aa'}$,
so that $\DF(\tilde U)$ is actually equivalent to:
\begin{equation} \DF(\tilde U)_{\Om \aw \aa}  = 0 \txt{and}
\DF(\tilde U)_{\Om \aw \aa'} = 0 \txt{and}
\DF(\tilde U)_{|X} = 0 \end{equation}
Assume now that $\tilde U = \tilde \zeta \cdot u$ with $u \in A_0(X)$.
Observing that  $\Om \setminus \aa = \aa' \setminus \bb$,
applying proposition \ref{eff-indep} to 
the effective energy of 
$\tilde U_\Om - \tilde U_\aa = \tilde U_{\aa'} - \tilde U_{\bb}$
which lies in $A_{\aa'}$ gives:
\begin{equation} \DF(\tilde U)_{\Om \aw \aa}  
= \: \Fh^{\aa\Om}\bigg( \sum_{\cc \in X \setminus \LL^\aa} u_{\cc} \bigg)  
= \:\Fh^{\bb\aa'}\bigg( 
    \sum_{\cc \in \LL^{\aa'} \setminus \LL^{\bb}} u_{\cc} \bigg) 
= \DF(\tilde U)_{\aa' \aw \bb}
\end{equation}
Interverting $\aa$ and $\aa'$ similarly 
yields $\DF(\tilde U)_{\Om \aw \aa'} = \DF(\tilde U)_{\aa \aw \bb}$.
For all $\tilde U \in {\tilde \zeta} \cdot A_0(X)$ 
we hence have $\DF(\tilde U) = 0$ if and only if $\DF(\tilde U)_{|X} = 0$,
and it follows that the image of ${\cal Z}(X)$ in $A_0(\tilde X)$ by the natural 
inclusion $A_0(X) \incl A_0(\tilde X)$ coincides with the intersection 
of ${\cal Z}(\tilde X)$ with $A_0(X)$.
\end{proof}

\begin{proof}[Proof of theorem \ref{global}]
Reasoning by induction on the length of $X$, the statement is tautological 
when $X$ already contains $\Om$. 
Given a normal extension 
$X = \LL^\aa \sqcup_{\LL^\bb} X'$ with $X' \incl \Part(\Om')$, 
assume now that for every $u' \in {\cal Z}(X')$
its extension $\tilde u'$ to $\tilde X' = X' \cup \{ \Om' \}$ lies 
in ${\cal Z}(\tilde X')$. Consider the hypergraphs: 
\begin{equation} Y = X \cup \{ \Om' \} \quad\txt{and}\quad \tilde Y = Y \cup \{ \Om \} \end{equation}
Given any $u \in {\cal Z}(X)$, let us show that its extension $\tilde u$ 
to $\tilde X$ by zero lies in ${\cal Z}(\tilde X)$, 
by applying the previous proposition in $\tilde Y$, 
global completion of $Y = \LL^\aa \sqcup_{\LL^\bb} \LL^{\Om'}$
with $\LL^{\Om'} = \tilde X'$. 
\bi
    \iii 
    $u' = u_{|X'} \in {\cal Z}(X')$ extends to 
    $\tilde u' \in {\cal Z}(\tilde X')$ by induction hypothesis,
    \iii 
    $v \in {\cal Z}(Y)$ defined from 
    $u_{|\LL^\aa} \in {\cal Z}(\LL^\aa)$ 
    and $\tilde u' \in {\cal Z}(\LL^{\Om'})$ extends 
    to $\tilde v \in {\cal Z}(\tilde Y)$ 
    by proposition \ref{retract},
    \iii 
    $\tilde v \in {\cal Z}(\tilde Y)$ restricts to 
    $\tilde u \in {\cal Z}(\tilde X)$ 
    as $\DF(\tilde V) = 0$ implies $\DF(\tilde V)_{|\tilde X} = \DF(\tilde U) = 0$,
   where $\tilde V$ and $\tilde U$ denote 
    their zeta transforms in $A_0(\tilde Y)$ and 
    $A_0(\tilde X)$ respectively.
\ei
The induction shows that for every $h \in A_0(X)$, 
its unique homologous representative $u = f(h) \in {\cal Z}(X)$
defined by double pass extends to $\tilde u \in {\cal Z}(\tilde X)$.  
The global pass depending only on the total energy,
we have $T(u) = T(h) = h^*$, 
while $\tilde u \in {\cal Z}(\tilde X)$
with $\tilde u_\Om = 0$ implies $T(u) = u$ 
according to \ref{Th=h}.
\end{proof}

\subsection{Loop Effective Potential} 

The true effective potentials $h^*$ are not homologous to $h$ in general. 

\begin{figure}[H] 
    \sbox0{\includegraphics[width=0.2\textwidth]{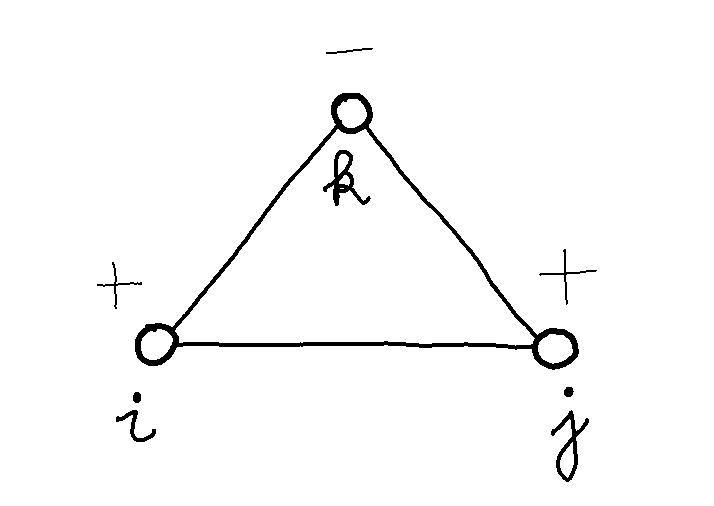}}
\begin{center}
\begin{minipage}
    {0.8\textwidth}
\centering
\usebox0
\caption{   \label{fig-triangle}
    A triangle of binary variables.
}
\end{minipage}
\end{center}
    \vspace{-0.2cm} 
\end{figure}

The simplest example of this phenomenon is given by the the Ising model
on the triangular graph with binary variables $x_i = \pm 1$ 
on each vertex, as depicted in figure \ref{fig-triangle}. 
Consider a uniform coupling constant $w$ and no external field for simplicity,
so that interaction potentials are given by: 
\begin{equation} h_{ij}(x_i, x_j) = - w\, x_i\, x_j \quad\txt{and}\quad h_i(x_i) = 0 \end{equation} 
Letting $H = \zeta \cdot h$ denote the local hamiltonians 
and $q = [\e^{-H}]$, the local Gibbs states $q \in \Gamma(X)$ 
are consistent for any value of $w$.
Considering that 
$A_i = \R \,|+\rangle \oplus \R \,|-\rangle$ and $A_{ij} = A_i \otimes A_j$, 
the following matrix representations\footnote{
    Note that the ring structure of $A_{ij}$ is induced by 
    the element-wise or {\it Hadamard} product, while  
    the matrix-vector product of $q_{ij}$ with $f_j \in A_j$ 
    yields the conditional expectation $\E^{ij}[\,f_j\,|\,i\,] \in A_i$. 
} 
will be convenient:
\begin{equation} 
q_{ij} = \frac 1 {4 \cosh(w)}
\left[ \begin{array}{cc} 
\e^{w} & \e^{-w} \\
\e^{-w} & \e^{w} 
\end{array} \right] 
\quad\txt{and}\quad 
q_i = \frac 1 2
\left[ \begin{array}{c} 
1 \\ 1
\end{array} \right] 
\end{equation}
Writing $q = {\cal Q}(w)$, this defines 
a one-parameter family ${\cal Q} : \R \aw \Gamma(X)$ 
of consistent statistical fields.

Consider now the true effective hamiltonians $H^* \in A_0(X)$ defined 
from the total energy $H_\Om = \sum h_{ij}$. 
In addition to the direct coupling $h_{ij}$, 
the true effective hamiltonian $H^*_{ij}$ 
accounts for the interaction of $i$ and $j$ through $k$,
expressed by the effective energy of $h_{ik} + h_{kj}$.
\begin{equation} 
\big( H^*_{ij} - H_{ij} \big) (x_i, x_j)  
= -\ln \sum_{x_k \in \{ \pm 1\} } \e^{- w \, x_k( x_i + x_j)}
\end{equation}
Introducing the real function $f: y \mapsto \ln \big( \e^y + \e^{-y} \big)$,
which satisfies $f(-y) = f(y)$, 
we have: 
\begin{equation} 
H^*_{ij} - H_{ij} = -
\left[ \begin{array}{cc} 
f(2w) & f(0) \\
f(0) & f(2w) 
\end{array} \right] 
\end{equation}
Letting then $g(w) = \frac 1 2 \big( f(2w) - f(0) \big)$
we may write up to the constant term $\frac 1 2 \big( f(2w) + f(0) \big)$:
\begin{equation} 
H^*_{ij} - H_{ij} \;\simeq \;
g(w) \;
\left[ \begin{array}{rr} 
-1 & 1 \\
1 & -1
\end{array} \right] 
\mod \R
\end{equation}
It follows that $H^*_{ij} - H_{ij}$ is a non-zero element 
of $Z_{ij} \oplus \R$ for every $w \neq 0$ and does not lie 
in $A_i + A_j$.  
This in particular 
implies that $H^* - H$ cannot belong to $\zeta \cdot \delta A_1(X)$,
equivalently that the true effective potentials $h^*$ are not homologous to $h$ 
in $A_0(X)$.
A loop effective potential $\tilde h^*_{\Om} = \tilde h^*_{ijk}$
 measures this obstruction, 
it may be computed using proposition \ref{h*-global}. 
Up to a constant free energy term, $\tilde h^*_{ijk}$ 
is the sum of the contributions $H_{ij} - H^*_{ij}$ of each edge.

The true local Gibbs states 
$p \in \Gamma(X)$ remain in the-one parameter family as $p = {\cal Q}(w')$ 
is given by:
\begin{equation} 
p_{ij} = \frac 1 {4 \cosh(w')}
\left[ \begin{array}{cc} 
\e^{w'} & \e^{-w'} \\
\e^{-w'} & \e^{w'} 
\end{array} \right] 
\quad\txt{with}\quad w' = w + g(w)
\end{equation}
Iterating the global pass $T : A_0(X) \aw A_0(X)$ hence defines 
sequences of couplings $w \in \R^\N$ by the recurrence relation:
\begin{equation} w_{n + 1} = w_n + g(w_n) \end{equation} 
and as $g(w) = \frac 1 2 \ln \cosh(2 w)$ behaves like
a soft absolute value, with $|g'| < 1$, one quickly sees that:
\begin{equation} 
\lim_{ n } w_n = 
\left\{ \begin{array}{cl} 
    0       &   {\rm if}\; w \leq 0 \\
    +\infty &   {\rm otherwise}
\end{array} \right. 
\end{equation}
This should be interpreted by observing that the effective contribution 
of $h_{ik} + h_{kj}$ is always ferromagnetic, whatever the sign of $w$,
 as the path $(i,k,j)$ from $i$ to $j$ is of even length.

A symmetric behaviour should be expected for the Ising model on the square graph, 
while in this case a finer description should also keep track of order 3 interaction
potentials $h^*_{ijk}$, by embedding the graph inside the 3-simplex. 
Finally, although allowing 
weights $w_{ij}$ to differ from edge to edge would induce a more 
interesting dynamic on a higher dimensional space of parameters, 
manual computations may quickly become cumbersome and call for numerical simulations.

\section{Bifurcations and Singularities}

In an effort to witness the emergence of multiple equilibria, 
this section characterises bifurcation states 
as consistent interaction potentials $u$ whose tangent fiber $T_u \Z(X)$ 
is not transverse to $\div A_1(X)$. 
These bifurcations correspond to singularities of the homological projection 
$\Z(X) \aw {\rm H}(X)$, which we relate to spectral properties 
of a {\it twisted laplacian}\footnotemark{}
operator $L = \div \nabla^\mu$ describing the 
linearised diffusion flow, 
we finally exhibit explicit examples of such singularities 
on simple graphs with two loops. 

\footnotetext{
    Although 
    the present problem does share many similiarities with classical harmonic theory,
    let us already emphasise that, in addition to the 
    non-linearity of $\div \DF^\mu$,
    the conjugation by combinatorial transforms
    in its linearised flow
    breaks the adjunction of the 
    boundary $\div$ with the differential 
    $\nabla^\mu = \mu \circ \nabla \circ \zeta$.
}

\subsection{The Transversality Problem}

At the level of local hamiltonians, 
the tangent fibers of the consistent manifold $\CU = \zeta \cdot \Z$
bear a convenient probabilistic and cohomological description, 
involving the linearised effective energy gradient $\nabla = \DF_*$ 
defined in \ref{nabla}: 

\begin{prop} Let $U \in  \CU$ and $p = [\e^{ - U}]$. 
    Then $T_U \CU = \Ker(\nabla_p)$ is defined by the equations: 
    \begin{equation} \nabla_p(V)_{\aa\bb} = V_\bb - 
    \E^\aa_{p_{\aa}}\big[\, V_\aa \,\big|\, \bb \,\big] = 0 \end{equation}
\end{prop}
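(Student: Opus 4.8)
The plan is to realise $\CU(X)$ as the zero locus $\DF^{-1}(0)$ of the effective energy gradient and to compute the differential of $\DF$ at a consistent point. First I would recall from Definition \ref{CU} that $U \in \CU(X)$ means $\DF(U) = 0$, where $\DF(U)_{\aa\bb} = U_\bb - \Fh^{\bb\aa}(U_\aa)$. Since each $\Fh^{\bb\aa}$ is smooth with differential $d\Fh^{\bb\aa}(U_\aa) = \E^{\bb\aa}_{p_\aa}[\,\cdot\,]$ given by conditional expectation for the local Gibbs state $p_\aa = [\e^{-U_\aa}]$ (Proposition \ref{Eba}), differentiating the defining equation in a direction $V \in A_0(X)$ yields
\[ d\DF_U(V)_{\aa\bb} = V_\bb - \E^{\bb\aa}_{p_\aa}[V_\aa] = \nabla_p(V)_{\aa\bb}, \]
so that $d\DF_U = \nabla_p$ with $p = [\e^{-U}]$. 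This is exactly Definition \ref{nabla} and identifies the linearisation of $\DF$ along $\CU(X)$ with the operator $\nabla_p$ of the statement.

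The easy inclusion $T_U\CU(X) \incl \Ker(\nabla_p)$ then follows by differentiating the constraint: for any smooth curve $t \mapsto U(t) \in \CU(X)$ with $U(0) = U$ and $\dot U(0) = V$, the identity $\DF(U(t)) = 0$ gives $\nabla_p(V) = d\DF_U(V) = 0$.

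For the reverse inclusion I would argue by a dimension count, which is where the real work lies. On one hand, $\CU(X)$ is a smooth manifold: by Proposition \ref{consistent-U} the map $U \mapsto [\e^{-U}]$ identifies it, up to the one-dimensional normalisation factor, with the manifold $\Gammint(X)$ of positive consistent beliefs, and via $\Z(X) = \mu \cdot \CU(X)$ (Theorem \ref{moebius}) together with Proposition \ref{T_0 Z} one reads off $\dim \CU(X) = \dim Z_0(X)$. On the other hand, for consistent $p$ the operator $\nabla_p$ is the adjoint $\div^*$ of the boundary for the metric induced by $p$ (the adjunction proved just above), whence $\Ker(\nabla_p) = (\Img\div)^{\perp_p}$ and
\[ \dim \Ker(\nabla_p) = \dim A_0(X) - \dim \Img(\div) = \dim Z_0(X), \]
the last equality being Theorem \ref{H0-P}, which identifies $A_0(X)/\div A_1(X)$ with $Z_0(X)$. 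The two dimensions agree, so the inclusion $T_U\CU(X) \incl \Ker(\nabla_p)$ is an equality.

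The main obstacle is precisely this regularity needed for the reverse inclusion, since a priori the zero locus of a nonlinear map need be neither smooth nor have tangent space equal to $\Ker(d\DF_U)$. I would resolve it purely through the dimension count above, which relies only on results already available: the smooth parametrisation of $\CU(X)$ by $\Gammint(X)$ (Proposition \ref{consistent-U}), the adjunction $\nabla_p = \div^*$, and the homological identification of Theorem \ref{H0-P}. Conceptually, the phenomenon underlying the equality is that $d\DF_U$ has the constant rank $\mathrm{rank}(\div)$ at every consistent point, the rank of an adjoint being metric-independent; but rather than invoke the constant-rank theorem (which would demand checking local constancy of the rank off $\CU(X)$, where $\nabla_p$ no longer coincides with $\div^*$), I would keep the dimension argument as the clean and self-contained route.
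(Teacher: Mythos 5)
Your proof is correct, and it is in fact more complete than what the paper offers: the proposition is stated in section 6.3 without any proof, and the corresponding assertion $T\CU(X) = \Ker(\nabla)$ in section 5.2.1 is presented as an immediate consequence of $\nabla = \DF_*$ and $\CU(X) = \{\DF = 0\}$ --- that is, the paper only records the easy inclusion you get by differentiating the constraint along a curve, and leaves the regularity of the zero locus implicit. Your dimension count supplies exactly the missing half, and it does so using only ingredients the paper has already established: the parametrisation $\CU(X) \simeq \Gammint(X) \times \R$ of Proposition \ref{consistent-U}, which shows $\CU(X)$ is a manifold of constant dimension (evaluated at $\bar 0$ via ${\cal Z}(X) = \mu \cdot \CU(X)$ and Proposition \ref{T_0 Z}); the adjunction $\nabla_p = \div^*$ for the positive consistent $p = [\e^{-U}]$, giving $\Ker(\nabla_p) = \Img(\div)^{\perp_p}$; and Theorem \ref{H0-P}, which identifies the codimension of $\Img(\div)$ in $A_0(X)$ with $\dim Z_0(X)$. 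Your closing remark is also well judged: a constant-rank argument would require controlling $d\DF$ off the zero locus, where $\nabla$ is no longer an adjoint of $\div$ for any consistent metric, so the dimension count is the right tool. The only step worth making explicit is that $\Gammint(X)$ is convex, hence connected, which is what licenses transporting the dimension computed at $\bar 0$ to the arbitrary point $U \in \CU(X)$.
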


Let us denote by $\bar 0 = \mu \cdot \ln [1]$ the origin of $\Z$, 
projection of $0 \in \Z'$ onto $\Z$ along $\R_0$. 

\begin{thm} \label{transversality-0}
    The tangent space $Z_0 = T_{\bar 0} {\cal Z}$ is globally cocyclic:
    \begin{equation} A_0 = Z_0 \oplus \div A_1 \end{equation}
    The homological projection $P : A_0 \aw Z_0$ is moreover explicitly 
    given by interaction decomposition.  
\end{thm}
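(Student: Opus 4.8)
The plan is to deduce this from the first homology computation of Chapter 2, once the tangent space has been correctly identified. First I would invoke Proposition \ref{T_0 Z} to reduce the statement to a purely linear-algebraic fact: it asserts that $Z_0 = T_{\bar 0}{\cal Z}(X)$ coincides with the image $Z_0(X) \incl A_0(X)$ of the canonical interaction decomposition of section 2.3.2. Granting this, the theorem amounts to showing that $A_0(X) = Z_0(X) \oplus \div A_1(X)$ and that the interaction projection $P$ of Corollary \ref{conservation}, given by $P(u)_\bb = \sum_{\aa \cont \bb} P^{\bb\aa}(u_\aa)$, realises the splitting.

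Next I would check that $P : A_0(X) \aw Z_0(X)$ is an idempotent with image exactly $Z_0(X)$. Since each $P^{\bb\aa}$ vanishes on $Z_\cc$ for $\cc \neq \bb$ and restricts to the identity on $Z_\bb$, one has $P(v)_\bb = v_\bb$ for every $v \in Z_0(X)$; hence $P^2 = P$ and $\Img(P) = Z_0(X)$. The kernel is then supplied by the proof of Theorem \ref{H0-P}, which establishes two complementary facts about $P$: the Gauss formula on each cone $\LL^\bb$ yields $P(\div \ph) = 0$, so $\div A_1(X) \incl \Ker(P)$; and conversely, setting $\ph_{\aa\bb} = P^{\bb\aa}(u_\aa)$ gives $\div \ph = P(u) - u$, so that $P(u) = 0$ forces $u = -\div \ph \in \div A_1(X)$. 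Together these give $\Ker(P) = \div A_1(X)$.

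Finally I would conclude by the standard remark that any idempotent splits its domain as $A_0(X) = \Img(P) \oplus \Ker(P)$, which here reads $A_0(X) = Z_0(X) \oplus \div A_1(X)$, with $P$ being by construction the projection onto the first summand along the second. This is precisely the asserted global cocyclicity, together with the explicit interaction-decomposition formula for the homological projection.

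The theorem is therefore essentially a repackaging of Theorems \ref{H0-P} and \ref{T_0 Z}, so there is no computational obstacle; the only step demanding real care is the identification $T_{\bar 0}{\cal Z}(X) = Z_0(X)$, which rests on the erratum-corrected Theorem \ref{Ker-d} and on the fact that at the uniform state the conditional expectations $\E^{\bb\aa}$ are proportional to the partial integrations $\Sigma^{\bb\aa}$, so that the linearised gradient $\nabla = \DF_*$ there has the same kernel as the combinatorial projection. The conceptual point worth stressing is that this global transversality is special to the base point $\bar 0$: at a generic consistent potential the metric induced by $[\e^{-U}]$ no longer makes the $Z_\bb$ mutually orthogonal inside $A_\aa$, so transversality of $\div A_1(X)$ with $T_U{\cal Z}(X)$ may fail — which is exactly the bifurcation phenomenon analysed in the following subsection.
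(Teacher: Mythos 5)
Your argument is correct and is essentially the paper's own: the paper likewise reduces the statement to the interaction decomposition via proposition \ref{T_0 Z}, and exhibits the same explicit current $\ph_{\bb\cc} = P^{\cc\bb}(u_\bb)$ so that $u + \div \ph$ lands in $Z_0(X)$ and coincides with the projection of corollary \ref{conservation}. Your phrasing through the idempotent splitting $A_0(X) = \Img(P) \oplus \Ker(P)$ and the two facts extracted from the proof of theorem \ref{H0-P} is just a slightly more explicit repackaging of the same computation.
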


\begin{proof} 
Letting $\ph_{\bb\cc} = {\bf z}_\cc(u_\bb)$ we have by interaction decomposition:
\begin{equation} \label{AtoZ0}
(u + \div \ph)_\bb = \sum_{\aa' \cont \bb} {\bf z}_\bb(u_{\aa'}) 
\end{equation}
Hence $v = u + \div \ph$ 
defines the unique collection of interaction potentials 
in $T_{\bar 0}{\cal Z}$ homologous to $u \in A_0$. 
The map $P : u \mapsto v$ coincides with the interaction projection of
\ref{conservation}.
\end{proof}

When the underlying hypergraph is retractable, theorem \ref{cocyclic}
implies that $T_u\Z$ is cocyclic for all $u \in \Z$.  
Note that even in this case, we do not have an explicit 
and convenient formula such as (\ref{AtoZ0}) 
to compute the projection of $A_0$ onto $T_u\Z$ 
along the fibers of $\div A_1$. 
However a linearised message passing scheme will produce the 
desired result in finite time, 
for instance by double pass as in proposition \ref{double-pass}. 

In general, the fibers of $T\Z$ are still related to the interaction 
decomposition through the following proposition. 
The obtained formula shows how moving $u$ along $\Z$ rotates $Z_0$:
but while doing so, it may very well happen that $T_u \Z$ meets $\div A_1$. 
One should hence not expect to derive from its apparent simplicity 
a homologous projection of $A_0$ onto $T\Z$.

\begin{prop} 
Let $p = [\e^{- \zeta \cdot u}]$ denote local Gibbs states. 
Then: 
\begin{equation} T_u {\cal Z} = (\mu \circ p^{-1} \circ \zeta) \cdot Z_0 \end{equation}
where $p^{-1}$ denotes the region-wise multiplication operator
$v_\aa \mapsto v_\aa / p_\aa$.
\end{prop}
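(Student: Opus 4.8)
The plan is to reduce the statement to a computation on local hamiltonians, where the tangent fibers of $\CU(X)$ are already described cohomologically, and then transport the kernel by a gauge conjugation. Since $\Z(X) = \mu \cdot \CU(X)$ by definition \ref{Z} and $\mu = \zeta^{-1}$ is a linear isomorphism by theorem \ref{moebius}, differentiating gives $T_u \Z(X) = \mu \cdot T_U \CU(X)$ with $U = \zeta \cdot u$. By the preceding proposition, $T_U \CU(X) = \Ker(\nabla_p)$ for $p = [\e^{-U}] = [\e^{-\zeta \cdot u}]$. It therefore suffices to prove $\Ker(\nabla_p) = (p^{-1} \circ \zeta) \cdot Z_0(X)$, after which applying $\mu$ yields the claim.

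First I would establish the gauge relation $\nabla_p = p^{-1} \circ d \circ p$, where $p$ denotes region-wise multiplication sending an observable field $V \in A_0(X)$ to the measure field $(p_\aa V_\aa)$, and $d$ is the differential of $A^*_\bullet(X)$. The only computation needed is the identity $\Sigma^{\bb\aa}(p_\aa V_\aa) = p_\bb \cdot \E^{\bb\aa}_{p_\aa}[V_\aa]$, which is immediate from the definitions of partial integration and conditional expectation. It gives $d(p V)_{\aa\bb} = p_\bb V_\bb - \Sigma^{\bb\aa}(p_\aa V_\aa) = p_\bb \cdot \nabla_p(V)_{\aa\bb}$, that is $d \circ p = p \circ \nabla_p$ as maps $A_0(X) \to A^*_1(X)$. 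Since $p$ is invertible region-wise, this yields $\Ker(\nabla_p) = p^{-1} \cdot \Ker(d)$. Equivalently, conjugating in the same way at the uniform measure $[1]$ expresses $\nabla_p$ as a region-wise rescaling of the uniform adjoint $\nabla_{[1]}$, whose kernel is already known.

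It then remains to substitute the description of $\Ker(d)$. By the erratum to theorem \ref{Ker-d} one has $\Ker(d) = \zeta' \cdot Z_0(X)$, while $\Ker(\nabla_{[1]}) = \zeta \cdot Z_0(X)$ by proposition \ref{T_0 Z} (here $\nabla_{[1]}$ is the $\div$-adjoint for the uniform measure, obtained from $d$ by replacing partial integrations with conditional expectations). Transporting either kernel through $p^{-1}$ and using that $Z_0(X) = \bigoplus_\bb Z_\bb$ is stable under region-wise scaling, one recovers $(p^{-1} \circ \zeta) \cdot Z_0(X)$. I expect the main obstacle to be precisely the bookkeeping of the volumic normalisation factors $|E_\bb|/|E_\aa|$ distinguishing $\zeta$ from $\zeta'$ — the very point addressed by the erratum — since $p^{-1} \Ker(d)$ naturally carries the uniform-measure factors, and one must verify that they combine with $p^{-1}$ so as to leave exactly the plain zeta transform acting on $Z_0(X)$ (up to the region-wise rescaling that $Z_0(X)$ absorbs). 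Once these factors are tracked, the geometric content is simply that moving $u$ along $\Z(X)$ rotates the fiber $Z_0(X)$ by the $p$-dependent isomorphism $\mu \circ p^{-1} \circ \zeta$.
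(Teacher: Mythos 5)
Your proof is correct and follows essentially the same route as the paper's: reduce to $\Ker(\nabla_p)$ via $T_u\Z = \mu\cdot T_U\CU$, establish the gauge conjugation $\nabla_p = p^{-1}\circ d\circ p$, and substitute $\Ker(d)=\zeta'\cdot Z_0(X)$. You are in fact slightly more careful than the paper's own two-line proof, which writes $\Ker(d)=\zeta\cdot Z_0(X)$ without tracking the volumic factors $|E_\bb|/|E_\aa|$ of the erratum — the bookkeeping issue you correctly flag as the one point needing verification.
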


\begin{proof}
By theorem \ref{Ker-d}, in $A_0$ we have $\Ker(d) = \zeta \cdot Z_0$. 
The change of  metric is simply reflected by conjugating 
with the diagonal multiplication operator $p = [\e^{- \zeta \cdot u}]$:
    \begin{equation} \nabla = p^{-1} \cdot d \cdot p \end{equation}
Hence $\Ker(\nabla) = p^{-1} \cdot \Ker(d)$, while 
    by definition \ref{nabla} of $\nabla$ we have 
$T_u {\cal Z} = \Ker(\nabla \circ \zeta)
= \mu \cdot \Ker(\nabla)$.
\end{proof}

\subsection{Twisted Laplacian and Singularities} 

\renewcommand{\Z}{{\cal Z}}
\newcommand{\Lap}{{\cal L}} 
\newcommand{\lap}{L} 
\renewcommand{\S}{{\cal S}}

Let us denote by $\Lap: A_0 \aw T A_0$ the non-linear diffusion operator 
defined in \ref{diffusion}:
\begin{equation} \Lap = \div \DF^{\mu} 
\quad\txt{where}\quad \DF^{\mu} = \mu \circ \DF \circ \zeta \end{equation}
As a vector field, $\Lap$ fixes $\Z$.
The restriction of its tangent map $\Lap_* : TA_0 \aw T(TA_0)$ hence defines 
a linear endomorphism of the tangent bundle above $\Z$, 
describing the linearised action of $\Lap$ in an infinitesimal neighbourhood of $\Z$.

\begin{defn}
    We call {\rm twisted laplacian} the linearised flow $L = \div \nabla^\mu$ 
    induced by $\Lap_*$ above $\Z$:
    \begin{equation} L : T_\Z A_0 \law T_\Z A_0 \end{equation}
\end{defn}

\begin{prop} \label{ker-L}
    For every $v \in T_\Z A_0$, we have $L(v) = 0$ if and only if $v \in T\Z$, 
    i.e.:
    \begin{equation} 
    \Ker(L) = T \Z 
    \end{equation} 
\end{prop}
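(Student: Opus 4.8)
The proposition asserts that the kernel of the twisted Laplacian $L = \div \nabla^\mu$, acting on the restricted tangent bundle $T_\Z A_0$, is exactly the tangent bundle $T\Z$ of the manifold of consistent interaction potentials. Here $\nabla^\mu = \mu \circ \nabla \circ \zeta$ is the $\mu$-conjugate of the linearised effective energy gradient $\nabla = \DF_*$, and $L$ is the linearisation above $\Z$ of the canonical diffusion $\tau = \div \phi$ with $\phi = -\DF^\mu$. Let me think about what this requires.

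Let me reason through the structure. Since $\mu$ and $\zeta$ are mutually inverse automorphisms of $A_\bullet(X)$ (theorem \ref{moebius}), and $T\Z = \mu \cdot \Ker(\nabla)$ by definition \ref{Z} combined with the characterisation $T_U\CU = \Ker(\nabla)$ stated just before proposition \ref{ker-L}, the claim $\Ker(L) = T\Z$ translates into a statement about $\nabla$. Let me plan the two inclusions.

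**Plan for the two inclusions.** First I would establish $T\Z \subset \Ker(L)$, which is the easy direction. Because $\tau = \div\phi$ vanishes on $\Z$ (as $\phi = -\DF^\mu$ is consistent, i.e. $u \in \Z \Rightarrow \phi(u) = 0$ by local faithfulness \ref{loc-faithful}, or directly $\DF \circ \zeta = 0$ on $\Z$), the vector field $\tau$ fixes $\Z$ pointwise; differentiating the identity $\tau|_\Z = 0$ along any tangent direction $v \in T\Z$ gives $L(v) = \div\nabla^\mu(v) = 0$. More concretely, since $T\Z = \mu \cdot \Ker(\nabla)$, take $v = \mu \cdot w$ with $\nabla(w) = 0$; then $\nabla^\mu(v) = \mu\nabla\zeta\mu w = \mu\nabla(w) = 0$, hence $L(v) = \div\,0 = 0$ immediately. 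This inclusion is essentially formal.

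**The hard direction.** The reverse inclusion $\Ker(L) \subset T\Z$ is where the real content lies. Suppose $L(v) = \div\nabla^\mu(v) = 0$; I must deduce $\nabla^\mu(v) = 0$, i.e. that the current $\nabla^\mu(v) \in A_1(X)$ is not merely a cycle but actually zero. This is precisely a \emph{faithfulness} statement at the linearised level: it says $\div$ is injective on the image of $\nabla^\mu$. The key obstacle is that, as the footnote to the section warns, the conjugation by $\zeta,\mu$ \emph{breaks the adjunction} of $\div$ with $\nabla^\mu$, so one cannot simply invoke the orthogonality $\Img(\div) \perp \Ker(\nabla)$ that makes the untwisted case trivial. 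I would therefore route the argument through the linearised local faithfulness already proved in \ref{loc-faithful}: setting $w = \zeta \cdot v$, the proof of that proposition shows (equation \ref{phi-lin} and its consequences) that $\div\phi(u+v) \simeq 0$ is equivalent, to first order, to $\DF(U + \zeta v) \simeq 0$, i.e. to $\nabla(w) = 0$. Since $\nabla^\mu(v) = \mu\nabla\zeta(v) = \mu\nabla(w)$ and $\mu$ is invertible, $\nabla(w) = 0$ forces $\nabla^\mu(v) = 0$, and then $w \in \Ker(\nabla)$ gives $v = \mu w \in \mu\cdot\Ker(\nabla) = T\Z$. The main difficulty to discharge carefully is justifying that the linearisation $L = \div\nabla^\mu$ of $\Lap_* = \div\DF^\mu_*$ above $\Z$ really reproduces the equivalence of \ref{loc-faithful} fiberwise over all of $\Z$ (not just near the uniform state), using that $\nabla = \DF_*$ is the adjoint $\div^*$ for the $p$-metric at each $p = [\e^{-U}]$, so that $\div\,\nabla(w) = 0 \Rightarrow \nabla(w) = 0$ by the $p$-orthogonality of $\Img(\div)$ and $\Ker(\nabla)$ — the twist by $\mu$ being absorbed by its invertibility before $\div$ is applied.
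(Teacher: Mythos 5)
Your main route is the paper's own: the proof given there is a one-line appeal to the local faithfulness of $\phi = -\DF^\mu$ (proposition \ref{loc-faithful}), whose proof already contains the linearised equivalence $\div\phi(u+v) \simeq 0 \iff \nabla(\zeta\cdot v) \simeq 0$ that you invoke; since $\{\div\phi = 0\}$ and ${\cal Z}(X)$ coincide in a neighbourhood of ${\cal Z}(X)$, their tangent spaces agree and $\Ker(L) = T{\cal Z}$. Your easy inclusion $T{\cal Z} \subset \Ker(L)$ via $\nabla^\mu(\mu\cdot w) = \mu\cdot\nabla(w)$ is also fine.

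One genuine flaw in your closing sentence, however: the proposed discharge of the hard inclusion \emph{via adjunction}, ``the twist by $\mu$ being absorbed by its invertibility before $\div$ is applied,'' does not work. The operator is $\div\circ\mu\circ\nabla\circ\zeta$, with $\mu$ sitting \emph{between} $\div$ and $\nabla$; setting $x = \nabla(\zeta\cdot v)$, the hypothesis $L(v)=0$ says $\mu\cdot x \in \Ker(\div)$ while adjunction only places $x$ itself in $\Ker(\div)^{\perp}$ for the $p$-metric. Concluding $x = 0$ would require $\mu\big(\Ker(\div)^{\perp}\big)\cap\Ker(\div) = 0$, which is precisely the kind of transversality statement that is \emph{not} formal here --- indeed, if the naive adjunction argument survived the conjugation by $\mu$, the twisted laplacian would behave like the untwisted one and the entire bifurcation analysis of this section would be vacuous. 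The correct discharge is the explicit computation inside the proof of \ref{loc-faithful}: the Möbius telescoping $\sum_\om c_\om\, \E^\om[\,V_\om - V_{\om\cap\aa}\,|\,\om\cap\aa\,] = \E^\Om[\,V_\Om - V_\aa\,|\,\aa\,]$, carried out at an arbitrary $u \in {\cal Z}(X)$ by extending the consistent field to a global density $q_\Om$ (acyclicity of $A^*_\bullet(X)$). In particular the worry you raise about validity ``over all of ${\cal Z}$, not just near the uniform state'' is already covered by \ref{loc-faithful} as proved; no separate adjunction argument is needed, and none is available.
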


\begin{proof} 
The local faithfulness property \ref{loc-faithful} of $\phi = \DF^\mu$ 
implies that $\Lap = 0$ is a local equation of $\Z$. 
\end{proof}

Consider now the sub-bundle $\B \incl T_\Z A_0$
spanned by $\div A_1$. 
By definition of $L = \div \nabla^\mu$, we have: 
\begin{equation} 
\Img(L) \incl \B 
\end{equation} 
Hence $L$ could be factorised 
by a map from $T^\perp \Z$ 
to $\B$ having same dimension, as $\nabla = \div^*$ 
implies $\Ker(\nabla) = \Img(\div)^\perp$ 
and $T\Z = \mu \cdot \B^\perp$. 
In constrast with the harmonic case, 
$T\Z$ and $\B$ may however intersect and 
fail to span $T A_0$.

\begin{defn} We denote by $L' : \B \aw \B$ the restriction 
of $L$ to boundaries:
\begin{equation} L' \in \Hom(\B, \B) 
\quad\eqvl\quad
L' \in C^\infty \big(\Z, \,\End(\div A_1)\big) 
\end{equation}
\end{defn}

\begin{prop} 
    The kernel of $L'$ is 
    the intersection $\Ker(\nabla^\mu) \cap \Img(\div)$, equivalently: 
    \begin{equation} \Ker(L') = T \Z \, \cap \, \B \end{equation} 
\end{prop}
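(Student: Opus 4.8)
The plan is to derive this proposition as a direct formal consequence of proposition \ref{ker-L}, which already contains the substantive input. Since $L' : \B \aw \B$ is by construction the restriction of $L$ to the subbundle $\B$ spanned by $\Img(\div)$ — a restriction which makes sense precisely because $\Img(L) \incl \B$ — computing its kernel reduces to intersecting the kernel of the ambient operator $L$ with $\B$. So the whole argument is an exercise in bookkeeping, once the hard analytic fact $\Ker(L) = T\Z$ is granted.

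First I would record the elementary fact that for any linear endomorphism $L$ of $T_\Z A_0$ preserving the subbundle $\B$, the restriction $L' = L_{|\B}$ has kernel $\Ker(L') = \B \cap \Ker(L)$: indeed $v \in \Ker(L')$ means $v \in \B$ and $L(v) = L'(v) = 0$, which is exactly membership in $\B \cap \Ker(L)$. This requires no computation beyond unwinding the definition of $L'$ given just above the statement. Next I would invoke proposition \ref{ker-L}, which asserts $\Ker(L) = T\Z$; substituting into the previous identity gives immediately $\Ker(L') = \B \cap T\Z$, which is the second of the two advertised forms.

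To obtain the first form, I would simply rewrite $\B = \Img(\div)$ by definition of the boundary subbundle, and $T\Z = \Ker(\nabla^\mu)$. The latter is the linearised consistency condition: $u \in \Z$ is cut out by $\DF(\zeta \cdot u) = 0$, whose differential is $\nabla \circ \zeta$, and since $\mu$ is invertible the conjugate $\nabla^\mu = \mu \circ \nabla \circ \zeta$ has the same kernel, so $\Ker(\nabla^\mu) = \mu \cdot \Ker(\nabla) = T\Z$. These two identifications turn $\B \cap T\Z$ into $\Ker(\nabla^\mu) \cap \Img(\div)$, establishing the equality of both forms with $\Ker(L')$.

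There is essentially no obstacle here: all the difficulty was discharged in \ref{ker-L}, which rests on the local faithfulness property \ref{loc-faithful} of $\phi = \DF^\mu$ (equivalently, that $\Lap = 0$ is a local equation for $\Z$, so that $\div$ is injective on $\Img(\nabla^\mu)$ near $\Z$ and $\Ker(L)$ collapses to $\Ker(\nabla^\mu) = T\Z$). The only point deserving a word of care is that this proposition, unlike the harmonic case, does \emph{not} claim the intersection $T\Z \cap \B$ is trivial; on the contrary, its possible nonvanishing is exactly the degeneracy this section is after. Thus after proving $\Ker(L') = T\Z \cap \B$ I would not try to simplify further, but rather flag that a nonzero kernel of $L'$ signals a singularity of the homological projection $\Z \aw {\rm H}(X)$ — the bifurcation phenomenon motivating the proposition.
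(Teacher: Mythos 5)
Your argument is exactly the paper's: the paper's proof reads verbatim "This is a direct consequence of proposition \ref{ker-L} and the definition of $L'$," and your bookkeeping — $\Ker(L') = \B \cap \Ker(L)$ by definition of restriction, then $\Ker(L) = T\Z$ from \ref{ker-L}, then the identifications $\B = \Img(\div)$ and $T\Z = \Ker(\nabla^\mu)$ — is precisely the unwinding the author leaves implicit. Correct, and no meaningful divergence from the paper's route.
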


\begin{proof} 
    This is a direct consequence of proposition \ref{ker-L} 
    and the definition of $L'$.
\end{proof}

When the underlying hypergraph is retractable, the cocyclicity of $\Z$ 
implies $T_\Z A_0 = T\Z \oplus \B$ so that $L'$ is invertible and 
defines an isomorphism of $\B$ onto itself. 
The double pass \ref{double-pass} suggests\footnotemark{} that $L$
has only positive eigenvalues, the flow of $-L$ hence retracting
$T_\Z A_0$ onto $T\Z$. 
\footnotetext{
    The double pass is nothing but a sequential version of $L$ 
    where maximal cells are updated in a specific order.
}

In general, following Thom's notations \cite{Thom-56} 
we denote by $\S_0 \incl \Z$ the subset where $L'$ is invertible, 
and by $\S_k \incl \Z$ the subset where $L'$ is of corank $k$. 
Then $\Z$ has the structure of a stratified space, 
with respect to the disjoint union: 
\begin{equation} \Z = \bigsqcup_{k \geq 0} \S_k \end{equation}
where $\bar \S_k$ contains $\S_{k+1}$ for all $k \geq 0$, 
and $\bar \S_1$ consists of the {\it singularities} of $L'$, points 
where $L'$ is not invertible, and $T\Z \cap \B$ contains 
directions of {\it bifurcation}. 

\begin{thm}
    The singular set $\bar \S_1$ of $L'$ is nowhere dense in $\Z$. 
\end{thm}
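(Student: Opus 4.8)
The plan is to realise $\bar \S_1$ as the zero locus of a single real-analytic function on $\Z(X)$ and then to invoke the classical fact that, on a connected real-analytic manifold, the zero set of a real-analytic function is either the whole manifold or nowhere dense. The key structural simplification is that $\B = \Z(X) \times \div A_1$ is a \emph{trivial} subbundle of $T_\Z A_0$: since $T_u A_0 = A_0$ canonically and $\div A_1 \incl A_0$ is a fixed linear subspace, the fibre $\B_u = \div A_1$ is the same vector space for every $u$. Hence $L' \in C^\infty\big(\Z, \End(\div A_1)\big)$ has a well-defined scalar determinant, and by definition a point of $\Z(X)$ lies in $\bar \S_1$ exactly when $L'$ is not invertible there, i.e. $\bar \S_1 = \{\, u \in \Z(X) \st \det L'(u) = 0 \,\}$. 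So everything reduces to showing that $u \mapsto \det L'(u)$ is real-analytic and does not vanish identically.

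First I would pin down the real-analytic structure. Proposition \ref{consistent-u} already supplies a global real-analytic chart: $p \mapsto u = \mu \cdot \big( -\ln(p) + \lambda \big)$ gives a real-analytic diffeomorphism $\Gammint(X) \times \R \simeq \Z(X)$, so $\Z(X)$ is a real-analytic manifold, and it is \emph{connected} because $\Gammint(X)$, the strictly positive part of the convex set $\Gamma(X) = \Delta_0(X) \cap \Ker(d)$, is an open convex subset of an affine space, hence connected, while $\R$ is connected. Along this chart, $u \mapsto p = [\e^{-\zeta \cdot u}]$ is real-analytic into $\Gammint(X)$; the conditional expectations $\E^{\bb\aa}_{p_\aa}$ defining $\nabla = \DF_*$ are rational in $p$ with nonvanishing denominators $p_\bb > 0$, hence real-analytic; and $\zeta,\mu,\div$ are fixed linear maps. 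Therefore $L = \div \nabla^\mu$, its restriction $L'$ to $\div A_1$, and finally $\det L'$, all depend real-analytically on $u$.

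Next I would exhibit one point where $L'$ is invertible, so that $\det L'$ is not identically zero. The natural choice is the origin $\bar 0 = \mu \cdot \ln[1]$. By theorem \ref{transversality-0} one has $T_{\bar 0}\Z(X) = Z_0$ together with the global cocyclicity $A_0 = Z_0 \oplus \div A_1$, so that $T_{\bar 0}\Z(X) \cap \B_{\bar 0} = Z_0 \cap \div A_1 = \{ 0 \}$. Since $\Ker(L') = T\Z \cap \B$, this gives $\Ker L'|_{\bar 0} = \{ 0 \}$, whence $L'$ is an isomorphism at $\bar 0$ and $\det L'(\bar 0) \neq 0$. Combined with the previous paragraph, $\det L'$ is a real-analytic function on the connected manifold $\Z(X)$ that is nonzero at $\bar 0$; its zero set is therefore nowhere dense (indeed of measure zero), and since $\bar \S_1$ is precisely that zero set, the theorem follows.

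The hard part will be the analytic bookkeeping rather than any single deep step: I must make sure that the dependence of the Gibbs state and of $\nabla$ on $u$ is genuinely real-analytic in the chart of \ref{consistent-u}, so that $\det L'$ qualifies for the real-analytic identity theorem, and that this theorem is applied on the correct connected model $\Gammint(X) \times \R$. Once these are secured, the determinant criterion together with theorem \ref{transversality-0} closes the argument at once; connectedness, furnished by proposition \ref{consistent-u}, is exactly the ingredient that upgrades ``$\det L'$ is not identically zero'' to ``$\bar \S_1$ is nowhere dense''.
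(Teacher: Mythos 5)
Your proof is correct and follows essentially the same route as the paper: both realise $\bar{\cal S}_1$ as the zero locus of an analytic determinant, use theorem \ref{transversality-0} to see it is nonzero at $\bar 0$, and conclude by the identity principle for real-analytic functions on a connected domain. The only differences are cosmetic — you work with $\det L'$ directly on ${\cal Z}$ and cite the analytic identity theorem outright, whereas the paper works with the determinant of the tangent map of the homological projection $g = \pi \circ f$ on the space $M$ of consistent positive densities and reaches the same conclusion via Sard's theorem plus complexification, the two determinants having the same vanishing locus $T{\cal Z} \cap {\cal B} \neq \{0\}$.
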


\begin{proof}
Denote by $M \incl \Ker(d)$ the open subset of $\Ker(d) \incl A^*_0$ 
formed by consistent positive densities. 
$M$ is analytically diffeomorphic to $\Z$,
as $q \mapsto - \ln(q)$ is an analytic parametrisation\footnotemark{}
of $\zeta \cdot \Z$, 
and we denote by $f : M \law \Z$ the analytical diffeomorphism then 
obtained by Möbius inversion. 
\footnotetext{
    The coordinate $H(x)$ is given by composing the linear evaluation map
    $q \mapsto q(x)$ with the analytic map $y \mapsto - \ln(y)$. 
}

Letting $N = A_0/ \div A_1$ denote the homology of $A_0$, 
consider a projection $\pi : A_0 \law N$ as given for instance by
(\ref{AtoZ0}). 
The composed projection $g = \pi \circ f$ and its tangent map:
\begin{equation} g_* : TM \law TN \end{equation}
are analytic, while $f_*$ defines a diffeomorphism of 
$\Ker(g_*)$ onto $T\Z \cap \B$ by construction. 
By Sard's theorem, the singular set of $g$ where $\det(g_*) = 0$
has an image of measure zero in $N$. 
In particular, $\det(g_*)$ cannot identically vanish,
as \ref{transversality-0} implies that 
    $g(M)$ contains an open neighbourhood\footnotemark{} in $N$
by transversality of $T_{\bar 0}\Z$ and $\div A_1$.
\footnotetext{
    In fact $g$ is surjective, as implied by 
    the existence theorem \ref{existence} of belief propagation equilibria.
}

    The map $\det(g_*) : M \aw \R$ is analytic and admits 
    an extension $\det_\C(g_*) : M_\C \aw \C$ to $M_\C \incl A^*_0 \otimes \C$.
    If $\det(g_*)$ vanished on an open set in $M$,
    then $\det_\C(g_*)$ would also vanish on an open subset in $M_\C$,
    hence everywhere by analytic continuation.  
    As this would violate Sard's theorem, it follows that 
    the regular set where $g_*$ is invertible is dense in $M$,
    and $T\Z \cap \B = \{ 0 \}$ almost everywhere. 
\end{proof}

\subsection{The Case of Graphs}

\bgroup 
\newcommand{\Edge}{{\rm E}}
\newcommand{\Loop}{{\rm L}}
\newcommand{\Chain}{{\rm C}}
\newcommand{\M}{{\rm M}}

We now specialise to graphs as the equations characterising bifurcations 
in $T\Z \cap \B$ there
show a remarkable resemblance with the Kirchhoff rule 
for the conservation of electric currents in a circuit. 
The remaining major difference is that our currents $\ph \in A_1$ 
are function-valued, while constant scalars cannot contribute to 
the creation of bifurcations, as the second part of the following 
proposition implies. 
These conservation laws also give a very simple explanation
to the absence of bifurcations on trees.

\begin{prop} 
If $v = \div \ph$ is a bifurcation 
in $T_u \Z \cap \B$, then $\ph \in A_1$ is solution of:
\begin{equation} \label{kirchhoff}
\ph_{jk \aw k} = \E^{jk} \Big[ \sum_{i \neq k} \ph_{ij \aw j} 
\:\Big|\: k \:\Big] \mod \R 
\end{equation}
for every edge-vertex pair $jk \aw k$, conditional expectations 
being taken with respect to $[\e^{- \zeta \cdot u}] \in \Gamma$. 
Moreover, $\E^{i}[v_i] = \E^{ij}[v_{ij}] = 0$ for all $i, j$, 
so that $v \neq 0$ implies that $\ph$ cannot belong to $\R_1$. 
\end{prop}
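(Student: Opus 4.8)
The plan is to unwind the two conditions packed into the phrase \emph{bifurcation in $T_u\Z \cap \B$} and turn them into pointwise equations on the current $\ph$. Membership $v \in \B = \div A_1$ is, by definition of the sub-bundle, the existence of some $\ph \in A_1$ with $v = \div\ph$. Membership $v \in T_u\Z$ means, via the tangent-fibre description of the preceding subsection, that the induced local hamiltonian $V := \zeta\cdot v$ is a cocycle of the linearised effective energy gradient $\nabla_p = \DF_*$ attached to $p = [\e^{-\zeta\cdot u}] \in \Gamma$; explicitly $V_\bb = \E^{\bb\aa}_{p_\aa}[V_\aa]$ for every chain $\aa\cont\bb$ of $N_1(X)$ (definition \ref{nabla} and proposition \ref{Eba}). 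So the whole proposition is the translation of $\nabla_p(\zeta\cdot\div\ph)=0$ back onto $\ph$.

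First I would compute $V$ on the vertices and edges through the Gauss formula \ref{gauss}, namely $V_\aa = \zeta(\div\ph)_\aa = \sum_{\aa'\bb' \in d\LL^\aa}\ph_{\aa'\bb'}$. Reading off the cones $\LL^k = \{k,\vide\}$ and $\LL^{jk}=\{jk,j,k,\vide\}$ one gets $V_k \equiv \sum_{k' \sim k}\ph_{kk'\aw k} \mod \R$ and $V_{jk} \equiv \sum_{i\sim j,\,i\neq k}\ph_{ij\aw j} + \sum_{i\sim k,\,i\neq j}\ph_{ik\aw k} \mod \R$, the residual scalars being exactly the flux contributions directed to $\vide$. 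Imposing the cocycle equation along the edge--vertex chain, $V_k = \E^{jk}[\,V_{jk}\,|\,k\,]$, and using that this conditional expectation is the orthogonal projection of $A_{jk}$ onto the functions of $x_k$ — so it fixes every $\ph_{ik\aw k}$ already depending on $x_k$, together with all constants, while it transports the functions of the opposite variable — the edge sums at $k$ cancel on both sides, leaving precisely the Kirchhoff relation \ref{kirchhoff}. The \emph{mod $\R$} there records exactly the $\vide$-directed scalar fluxes one discards in this cancellation.

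For the second assertion I would return to interaction potentials by $v = \mu\cdot V$; on a graph Möbius inversion gives $v_i = V_i - V_\vide$ and $v_{ij} = V_{ij} - V_i - V_j + V_\vide$. The cocycle equations along the chains $i\aw\vide$ and $ij\aw\vide$ read $\E^i[V_i] = V_\vide$ and $\E^{ij}[V_{ij}] = V_\vide$, and the tower identity $\E^{ij}[V_i] = \E^i[V_i]$ (valid because $V_i$ depends only on $x_i$ while $p_i$ is the marginal of $p_{ij}$) then yields
\[ \E^i[v_i] = \E^i[V_i] - V_\vide = 0 \quad\txt{and}\quad \E^{ij}[v_{ij}] = V_\vide - V_\vide - V_\vide + V_\vide = 0. \]
The final claim is then immediate: if $\ph \in \R_1$ then $v = \div\ph \in \R_0$, so each $v_i$ and $v_{ij}$ is a constant equal to its own expectation, whence $v_i = v_{ij} = 0$ and $v = 0$; contrapositively, $v \neq 0$ forbids $\ph \in \R_1$.

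The only genuine obstacle is bookkeeping: tracking the scalar $\vide$-directed fluxes that are responsible for the \emph{mod $\R$}, and checking that the projection property of $\E^{\bb\aa}$ produces the exact cancellations claimed. Everything else is a direct substitution of the Gauss formula into the linearised consistency equations $\nabla_p(V) = 0$.
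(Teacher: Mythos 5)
Your proof is correct and follows essentially the same route as the paper's: read $v = \div\ph \in T_u\Z$ as $\nabla_p(\zeta\cdot\div\ph) = 0$, express $V_k$ and $V_{jk}$ through the Gauss formula over $\LL^k$ and $\LL^{jk}$, and let the projection property of $\E^{jk}[\,\cdot\,|\,k\,]$ cancel the $k$-directed currents on both sides to leave (\ref{kirchhoff}) mod the $\vide$-directed scalars. The only (harmless) difference is in the second half: the paper first sums the equations $\nabla(V)_{i\aw\vide}=\nabla(V)_{ij\aw\vide}=0$ to establish $\div(\ph)_\vide = V_\vide = 0$ before concluding, whereas your Möbius-inversion cancellation $\E^{ij}[v_{ij}] = V_\vide - V_\vide - V_\vide + V_\vide = 0$ (using the tower identity $\E^{ij}[V_i]=\E^i[V_i]$, valid by consistency of $p$) renders that intermediate step unnecessary; both arguments are valid and yield the same conclusion.
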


\begin{proof} 
    If $v = \div \ph \in \Ker(\nabla^\mu)$, 
    the variation of hamiltonians $V = \zeta \cdot v$ satisfies 
    $\nabla(V)_{jk \aw k} = 0$ where:
    \begin{equation} 
    V_k = \sum_{i'} \ph_{i'k \aw k} \mod \R
    \quad \txt{and} \quad
    V_{jk} = \sum_{i' \neq j} \ph_{i'k \aw k} + \sum_{i \neq k} \ph_{ij \aw j} 
    \mod \R
    \end{equation}
    neglecting only currents of the form $\ph_{ij \aw \vide}$ and 
    $\ph_{i \aw \vide}$, 
    so that $V_k = \E^{jk}[V_{jk} \st k]$ does simplify to (\ref{kirchhoff}) 
    in $A_k \mod \R$. 
    To prove the second statement, observe that the equations 
    $\nabla(V)_{k \aw \vide} = 0$ and $\nabla(V)_{jk \aw \vide} = 0$ 
    yield after a few obvious simplifications: 
    \begin{equation} \ph_{k \aw \vide} = \E\Big[ \sum_{i'} \ph_{i'k \aw k} \Big] 
    \quad \txt{and} \quad 
    \ph_{jk \aw \vide} = - \E\big[ \ph_{jk \aw j} + \ph_{jk \aw k} \big] 
    \end{equation}
    The first kind of equations grouping pairs $ij \aw j$ by the vertex $j$, 
    and the second kind grouping pairs $ij \aw j$ by the edge $ij$,
    it follows that:
    \begin{equation} \div(\ph)_\vide = \sum_{i} \ph_{i \aw \vide} 
    + \sum_{ij} \ph_{ij \aw \vide} = 0 \end{equation}
    Hence $V_\vide = v_\vide = \div(\ph)_\vide = 0$, 
    giving $\E[V_{ij}] = \E[V_i] = 0$ for all $i, j$  
    by $\nabla(V)_{ij \aw \vide} = \nabla(V)_{i \aw \vide} = 0$. 
    The zero-mean statement on interaction potentials finally follows by 
    Möbius inversion.
\end{proof}


The conditional expectation $\E^{jk}[\,-\,|k]$, 
which couples $\ph_{jk \aw k}$ with incoming currents $\ph_{i'j \aw j}$, 
consists of the orthogonal projection of $A_j$ onto $A_k$ for 
the metric induced  by a local probability $p_{jk}$ on $A_{jk}$. 
Note that if $p = [1]$ were the uniform distribution, 
we would have $A_j \perp A_k$ for all $jk$ and (\ref{kirchhoff}) 
would obviously not have any non-trivial solution, 
as we already know from the transversality $Z_0$ with $\div A_1$.
However in general, $A_j$ is not orthogonal to $A_k$ as 
interaction brings correlation accross variables. 

\begin{defn}
Given a chain $i_0, \dots, i_n$ and a loop $j_0, \dots, j_n$ in the underlying graph,
we introduce the following definitions for applying
successive conditional expectations along edges:
\bi
    \iii $\Edge^{kj} = \E^{jk}[ \, - \,|k]$ for the 
    {\rm edge operator} projecting $A_j$ to $A_k$,
    \iii $\Chain^{i_n \dots i_0} = 
    \Edge^{i_n i_{n-1}} \circ \dots \circ \Edge^{i_1 i_0}$ 
    for the {\rm chain operator}
    mapping $A_{i_0}$ to $A_{i_n}$,
    \iii $\Loop^{j_n \dots j_0} = \Edge^{j_0 j_n} \circ \dots \circ \Edge^{j_1 j_0}$ 
    for the {\rm loop operator}
    mapping $A_{j_0}$ to itself.
    \ei
As projector of $A_{jk}$, we have 
$\Edge^{jk} = (\Edge^{kj})^*$ so that 
$\Chain^{i_0 \dots i_n} = (\Chain^{i_n \dots i_0})^*$ 
and $\Loop^{j_1 \dots j_n j_0} = (\Loop^{j_n \dots j_1 j_0})^*$. 
\end{defn}

The non-degeneracy of $p_{jk}$ implies that $\Edge^{kj}$ 
is of norm strictly smaller than $1$ as operator of $A_{jk}$,
once restricted to the orthogonal supplement of $A_j \cap A_k = A_\vide = \R$.
It follows that the spectrum of a loop operator $\Loop^{j_n \dots j_0}$ 
restricted to $\R^\perp$ is contained in $]0, 1[$, and 
hence in contrast with the scalar case, one shall not observe 
solutions of (\ref{kirchhoff}) corresponding to a conserved 
current running along a single loop. 
Such solutions only appear in the strong coupling limit, 
$p$ going to the boundary of $\Gamma$ and 
$\Z$ getting asymptotically tangent to $\B$. 

True bifurcations in $T\Z \cap \B$ may however occur when two or 
more loops can collaborate to sustain a conserved current. 
We provide with explicit examples of such bifurcations below. 
Their simplicity is remarkable, as previous examples
had only been witnessed numerically,
apparently starting with Weiss \cite{Weiss-97}. 
Before moving on, it seems important to mention that although 
non-constructive, mathematical proofs on the existence of bifurcations 
have already been given by D. Bennequin in \cite{Bennequin-IEM} and 
unpublished work. The first relied on methods of algebraic geometry 
and shows the existence of at least three homologous fixed points on the 
binary "figure eight". 

The second proof is closer to the methods exposed here, 
as it relies on the present homological description of message-passing algorithms.
The idea is to view (\ref{kirchhoff}) 
as an eigenvalue problem: 
\begin{equation} \label{M-phi}
   \ph = \M(\ph)  \quad\txt{with}\quad \M : A_1 \law A_1 
\end{equation}
Applying the Perron-Frobenius therorem,
one then shows that along certain paths in the space $\Z \simeq \Gamma$ 
of parameters, the real largest eigenvalue of $\M$ has to cross $1$. 
This approach proved the existence of at least six bifurcations 
on the simple "figure eight". 
Avoiding the challenge of producing clever bounds on the spectrum of $\M$, 
we satisfy with exhibiting simple solutions of (\ref{M-phi}).

\begin{figure}[H] 
    \sbox0{\includegraphics[width=0.7\textwidth]{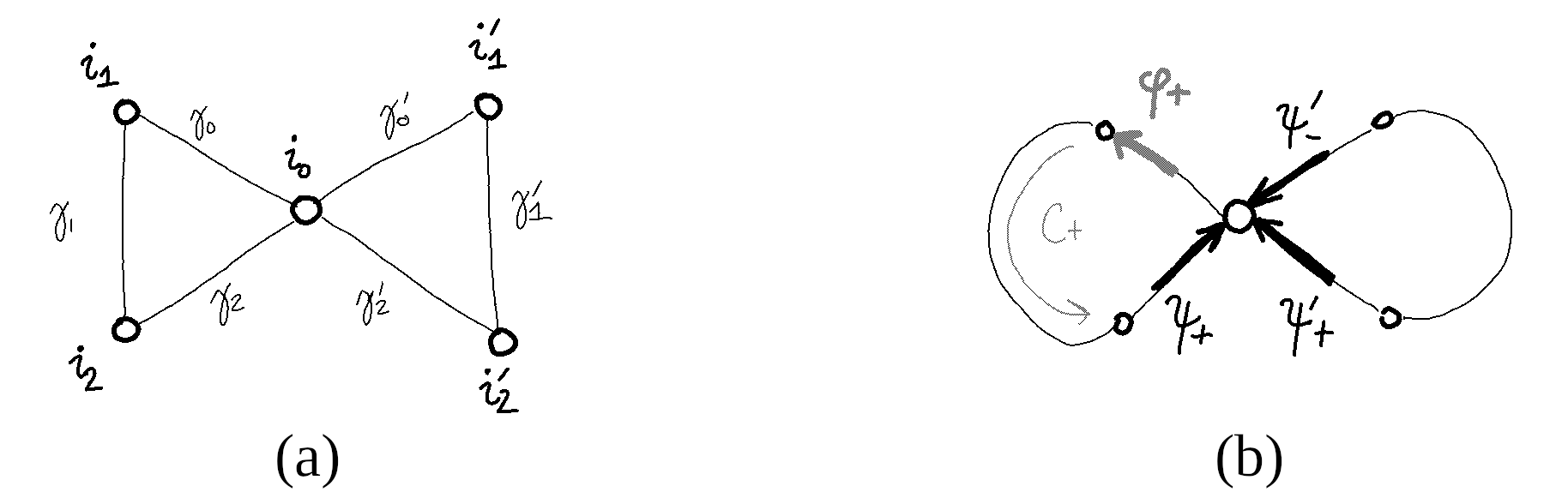}}
\begin{center}
\begin{minipage}
    {\wd0+0.062\textwidth}
\centering
\usebox0
\caption{\label{fig-eight}
    (a) Two triangles joined by a vertex. (b) Currents at the junction.
}
\end{minipage}
\end{center}
\end{figure}

Consider then the "figure eight"
graph obtained by joining two triangles 
$i, i_+, i_-$ and $i', i'_+, i'_-$ by a common vertex $i' = i$,
as depicted in figure \ref{fig-eight}.
Conservation at the junction (i) 
and conservation along each loop (ii) yield
two times four equations of the following form:
\begin{equation} ({\rm i}) \; \ph_+ = \Edge_+ \big( \psi_+ + \psi'_+ + \psi'_- \big) 
\hspace{3cm}
({\rm ii}) \;
\psi_+ = \Chain_+(\ph_+)
\end{equation}
Eliminating the $\psi$'s then reduces the above to four equations of the form:
\begin{equation} \label{junction}
\ph_+ = \Loop_+(\ph_+) + \Edge_+ \Big( \Chain'_+(\ph'_+) + \Chain'_-(\ph'_-) \Big) 
\end{equation}
Identifying each $A_j$ with $\R^2$ and each $A_{jk}$ with $\R^4 \simeq M_2(\R)$,
assume 
that local probabilities on edges and vertices are all of the same following 
form (no magnetic fields) for simplicity: 
\begin{equation} 
p_{jk} = \frac 1 4
\left[ \begin{array}{cc} 
1 + a & 1 - a \\
1 - a & 1 + a
\end{array} \right] 
\quad\txt{and}\quad 
p_j = \frac 1 2
\left[ \begin{array}{c} 
1 \\ 1
\end{array} \right] 
\end{equation}
Now each edge operator  $\E^{jk}[\,-\,|j] = 2 p_{jk}$ is
represented by the self-adjoint matrix $\Edge$ given by:
\begin{equation} 
\Edge = R^{-1}
\left[ \begin{array}{cc} 
a & 0 \\
0 & 1 
\end{array} \right] 
R 
\quad\txt{where}\quad 
R = \frac 1 {\sqrt 2}
\left[ \begin{array}{rc} 
1 & 1  \\
-1 & 1 
\end{array} \right] 
\end{equation}
expressing that we have the non-constant
eigenvector $y = [\,1 \; {-}1\,]^T$ with $\Edge(y) = a \, y$.
In the quotient of $A_0$ by $\R_0$, each edge operator
may hence be represented as multiplication by $a \in \, ] {-}1, 1 [\,$
so that (\ref{junction}) reads: 
\begin{equation} \label{junction-a}
\ph_+ = a^3 \big( \ph_+ + \ph'_+ + \ph'_- \big) 
\end{equation}
Assuming that $a^3 = \frac 1 3$, a bifurcation occurs in the direction of 
$\ph_\pm = \ph'_\pm = y$. 

Let us also prove that this is the only bifurcation in the considered 
one-parameter family of ${\cal Z}$. 
Considering that each of the $\ph$'s is spanned by $[\,1 \; {-}1\,]^T$ 
and using $(\ph_+, \ph_-, \ph'_+, \ph'_-)$ as coordinates, 
the four equations obtained from \ref{junction-a} may be written as: 
\begin{equation} \ph = \M \ph 
\quad\txt{where}\quad 
\M = a^3
\left[ \begin{array}{cccc} 
1 & 0 & 1 & 1 \\
0 & 1 & 1 & 1 \\
1 & 1 & 1 & 0 \\
1 & 1 & 0 & 1 
\end{array} \right] 
\end{equation}
A reduction of $\M$ is easily computed: 
\begin{equation} 
\M = a^3 \: Q^{-1}   
\left[ \begin{array}{rrrr} 
3 & 0 & 0 & 0 \\
0 & 1 & 0 & 0 \\
0 & 0 & 1 & 0 \\
0 & 0 & 0 & -1
\end{array} \right] 
Q
\quad\txt{where}\quad 
Q = \frac 1 {\sqrt 4}
\left[ \begin{array}{rrrr} 
1 & 1  & 1  & 1  \\
1 & -1 & -1 & 1  \\
1 & -1 & 1  & -1 \\
1 & 1  & -1 & -1 
\end{array} \right] 
\end{equation}
Under the constraint $|a| < 1$, 
it follows that 
$1 \in {\rm Spec}(\M)$ if and only if $ 3 a^3 = 1 \eqvl a = 3^{-\frac 1 3}$, 
with corresponding eigenvector $\ph = [\, 1 \; 1 \; 1 \; 1\, ]^T$. 
One may think of the three remaining eigenvectors of $\M$ 
as directions for bifurcations occuring at the boundary of $\Gamma$, 
when $a \to \pm 1$ and the probability of observing aligned/unaligned 
neighbouring spins tends to 0/1.

[[add figure]]

Two triangles joined by an edge lead do the following similar set of equations, 
denoting by $a$ and $b$ the weights on the outter edges and diagonal 
of the obtained square respectively: 
\begin{equation} \ph = \M \ph 
\quad\txt{where}\quad 
\M = a^2
\left[ \begin{array}{cccc} 
b & 0 & b & 1 \\
0 & b & 1 & b \\
b & 1 & b & 0 \\
1 & b & 0 & b 
\end{array} \right] 
\end{equation}
which has the same eigenspaces and
\begin{equation} 
\M = a^2 \: Q^{-1}   
\left[ \begin{array}{cccc} 
2b+1    & 0 & 0     & 0 \\
0       & 1 & 0     & 0 \\
0       & 0 & 2b-1  & 0 \\
0       & 0 & 0     & -1
\end{array} \right] 
Q
\quad\txt{where}\quad 
Q = \frac 1 {\sqrt 4}
\left[ \begin{array}{rrrr} 
1 & 1  & 1  & 1  \\
1 & -1 & -1 & 1  \\
1 & -1 & 1  & -1 \\
1 & 1  & -1 & -1 
\end{array} \right] 
\end{equation}
So that in the considered two-parameter family of ${\cal Z}$, 
diffeomorphic to the square $(a, b) \in \:]{-}1, 1[^2$,
bifurcations occur along the path $a = \pm \frac 1 {\sqrt{2b + 1}}$. 

\egroup

\egroup

\nocite{Mezard-Montanari}
\nocite{Mooij-2007}

\bibliographystyle{siam}
\bibliography{biblio}

\begin{thebibliography}{10}

\bibitem{Abramsky-2011}
{\sc S.~Abramsky and A.~Brandenburger}, {\em {The Sheaf-theoretic structure of
  non-locality and contextuality}}, New Journal of Physics, 13 (2011).

\bibitem{Bennequin-Baudot-2}
{\sc P.~Baudot, M.~Tapia, D.~Bennequin, and J.-M. Goaillard}, {\em {Topological
  Information Data Analysis}}, Entropy, 21 (2019), p.~869.

\bibitem{ExtrafineSheaves}
{\sc D.~Bennequin, O.~Peltre, and G.~Sergeant-Perthuis}, {\em {Extrafine
  Sheaves}}.
\newblock preprint, 2019.

\bibitem{Bennequin-IEM}
{\sc D.~Bennequin, O.~Peltre, G.~Sergeant-Perthuis, and J.~Vigneaux}, {\em
  {Informations, Energies and Messages}}.
\newblock preprint, 2019.

\bibitem{Cartan-64}
{\sc H.~Cartan}, {\em {{\'E}léments d'algèbre homologique: cours aux carrés
  1963-1964}}, {\'E}cole normale supérieure, 1966.

\bibitem{Dwyer-Spalinski}
{\sc W.~G. Dwyer and J.~Spalinski}, {\em Homotopy theories and model
  categories}, 1995.

\bibitem{Friston-Parr-2017}
{\sc K.~J. Friston, T.~Parr, and B.~de~Vries}, {\em {The Graphical Brain:
  Belief Propagation and Active Inference}}, Networks Neuroscience, 1 (2017),
  pp.~381--414.

\bibitem{Gallager-63}
{\sc R.~G. Gallager}, {\em {Low-Density Parity-Check Codes}}, MIT Press, 1963.

\bibitem{Hu-62}
{\sc K.~T. Hu}, {\em {On the Amount of Information}}, Theory of Probability and
  its Applications, 7 (1962), pp.~439--447.

\bibitem{Kellerer-64}
{\sc H.~G. Kellerer}, {\em {Ma{\ss}theoretische Marginalprobleme}},
  Mathematische Annalen, 153 (1964), pp.~168--198.

\bibitem{Kikuchi-51}
{\sc R.~Kikuchi}, {\em {A Theory of Cooperative Phenomena}}, Phys. Rev., 81
  (1951), pp.~988--1003.

\bibitem{Knoll-2017}
{\sc C.~Knoll and F.~Pernkopf}, {\em {On Loopy Belief Propagation -- Local
  Stability Analysis for Non-Vanishing Fields}}, in Uncertainty in Artificial
  Intelligence, 2017.

\bibitem{Kodaira-49}
{\sc K.~{Kodaira}}, {\em {Harmonic fields in Riemannian manifolds (Generalized
  potential theory).}}, {Ann. Math. (2)}, 50 (1949), pp.~587--665.

\bibitem{Leinster-08}
{\sc T.~Leinster}, {\em {The Euler Characteristic of a Category}}, Documenta
  Mathematica, 13 (2008), pp.~21--49.

\bibitem{Eilenberg-MacLane}
{\sc S.~MacLane and S.~Eilenberg}, {\em {General Theory of Natural
  Equivalences}}, Transactions of the American Mathematical Society,  (1945),
  pp.~231--294.

\bibitem{Massieu-69}
{\sc F.~Massieu}, {\em {Sur les Fonctions caractéristiques des divers
  fluides}}, Comptes rendus de l'Acad\'emie des Sciences, 69 (1869),
  pp.~858--862.

\bibitem{Matus-88}
{\sc F.~Mat{\'u}{\v s}}, {\em {Discrete Marginal Problem for Complex
  Measures}}, Kybernetika, 24 (1988), pp.~36--46.

\bibitem{Meyer-86}
{\sc P.-A. Meyer}, {\em {\'El\'ements de probabilit\'es quantiques (expos\'es I
  \`a V)}}, S\'eminaire de probabilit\'es de Strasbourg, 20 (1986),
  pp.~186--312.

\bibitem{Moerdijk}
{\sc I.~Moerdijk}, {\em {Classifying Spaces and Classifying Topoi}}, Springer,
  1995.

\bibitem{Mooij-2007}
{\sc J.~M. Mooij and H.~J. Kappen}, {\em {Sufficient Conditions for Convergence
  of the Sum–Product Algorithm}}, IEEE Transactions on Information Theory, 53
  (2007), p.~4422–4437.

\bibitem{Morita-57}
{\sc T.~Morita}, {\em {Cluster Variation Method of Cooperative Phenomena and
  its Generalization I}}, Journal of the Physical Society of Japan, 12 (1957),
  pp.~753--755.

\bibitem{Murphy-Weiss-99}
{\sc K.~P. Murphy, Y.~Weiss, and M.~I. Jordan}, {\em {Loopy Belief Propagation
  for Approximate Inference: An Empirical Study}}, in UAI, 1999.

\bibitem{Mezard-Montanari}
{\sc M.~Mézard and A.~Montanari}, {\em {Information, Physics and
  Computation}}, Oxford University Press, 2009.

\bibitem{Pearl-82}
{\sc J.~Pearl}, {\em {Reverend Bayes on Inference Engines: A Distributed
  Hierachical Approach}}, in AAAI-82 Proceedings, 1982.

\bibitem{Rota-64}
{\sc G.-C. Rota}, {\em {On the Foundations of Combinatorial Theory - I. Theory
  of Möbius Functions}}, Z. Warscheinlichkeitstheorie, 2 (1964), pp.~340--368.

\bibitem{Schlijper-83}
{\sc A.~G. Schlijper}, {\em {Convergence of the cluster-variation method in the
  thermodynamic limit}}, Phys. Rev. B, 27 (1983), pp.~6841--6848.

\bibitem{Souriau-QG}
{\sc J.-M. Souriau}, {\em {Quantification Géométrique}}, in {Physique
  quantique et géométrie: formulation mathématique cohérente des
  phénomènes quantiques: Colloque géométrie et physique de 1986 en
  l'honneur d'Andr{\'e} Lichnerowicz}, vol.~32, Travaux en cours, Hermann,
  1988.

\bibitem{Speed-79}
{\sc T.~P. Speed}, {\em {A Note on Nearest-Neighbour Gibbs and Markov
  Probabilities}}, Sankhya: The Indian Journal of Statistics, 41 (1979),
  pp.~184--197.

\bibitem{Thom-56}
{\sc R.~Thom}, {\em {Les Singularités des applications différentiables}},
  {Annales de l'institut Fourier}, 6 (1956), pp.~43--87.

\bibitem{SGA-4-V}
{\sc J.-L. Verdier and A.~Grothendieck}, {\em {V: Cohomologie dans les Topos}},
  SGA-4, 2 (1972).

\bibitem{Vigneaux-phd}
{\sc J.-P. Vigneaux}, {\em {Topology of Statistical Systems: A Cohomological
  Approach to Information Theory}}, Université de Paris, 2019.

\bibitem{Vorobev-62}
{\sc N.~Vorob'ev}, {\em {Consistent Families of Measures and their
  Extensions}}, Theory of Probability and its Applications, 7 (1962),
  pp.~147--164.

\bibitem{Weiss-97}
{\sc Y.~Weiss}, {\em {Belief propagation and revision in networks with loops}},
  tech. rep., MIT, 1997.

\bibitem{Yedidia-2005}
{\sc J.~Yedidia, W.~Freeman, and Y.~Weiss}, {\em {Constructing Free Energy
  Approximations and Generalized Belief Propagation Algorithms}}, IEEE
  Transactions on Information Theory, 51 (2005), pp.~2282--2312.

\bibitem{Yedidia-2001}
{\sc J.~S. Yedidia, W.~T. Freeman, and Y.~Weiss}, {\em {Bethe free energy,
  Kikuchi approximations, and belief propagation algorithms}}, Tech. Rep.
  TR2001-16, MERL - Mitsubishi Electric Research Laboratories, Cambridge, MA
  02139, May 2001.

\end{thebibliography}

\end{document}